\documentclass[fontsize=11pt, a4paper,twoside=semi, bibliography=totoc,usegeometry]{scrbook}
\usepackage{geometry}
\usepackage{scrhack} 

\raggedbottom

\usepackage[utf8]{inputenc}
\usepackage{amsmath,amssymb,amsthm,amsfonts,mathtools}
\usepackage{caption,subcaption}
\usepackage[inline,shortlabels]{enumitem}
\usepackage{hyperref}
\usepackage[table,usenames,dvipsnames]{xcolor}
\usepackage{natbib}
\usepackage{tikz}

\usepackage{thmtools}

\hypersetup{
  colorlinks=true,
  linkcolor=RoyalBlue,
  citecolor=gray,
  urlcolor=RawSienna,
  hypertexnames=false
}

\usepackage{rotating}

\usepackage{makecell}

\usepackage{bibentry}

\usepackage{titling} 

\usepackage{placeins}

\usepackage{apptools}

\usepackage{quoting}
\newenvironment{abstract}{\begin{quoting}[leftmargin=0.5cm]\begin{center}\textbf{Abstract}\end{center}}{\end{quoting}}

\newenvironment{authorlist}{\begin{quote}}{\end{quote} \vspace{0.75cm}}

\tikzstyle{round-box} = [draw=RoyalBlue, fill=BurntOrange!35, very thick, rectangle, rounded corners, inner sep=10pt, inner ysep=20pt]
\tikzstyle{box-title} = [fill=RoyalBlue, text=white]
\tikzstyle{scm-node} = [circle,inner sep=2pt,draw=black!50,thin]
\usepgflibrary{arrows}
\usetikzlibrary{calc,fadings,decorations.pathreplacing,intersections,positioning}
\usetikzlibrary{decorations.markings}
\usepgflibrary{arrows}

\usepackage[linesnumbered,ruled,vlined]{algorithm2e}
\newcounter{algorithm}

\usepackage{mathbbol}
\DeclareSymbolFontAlphabet{\mathbb}{AMSb}
\DeclareSymbolFontAlphabet{\mathbbl}{bbold}
\newcommand{\one}{\mathbbl{1}}


\usepackage{bm}

\newcommand{\blambda}{\bm{\lambda}}
\newcommand{\bLambda}{\bm{\Lambda}}
\newcommand{\convUD}{\xrightarrow{\scalebox{0.6}{$\mathcal{D}/\Theta$}}}
\newcommand{\convUDnull}{\xrightarrow{\scalebox{0.6}{$\mathcal{D}/\Theta_0$}}}
\newcommand{\convUP}{\xrightarrow{\scalebox{0.6}{$P/\Theta$}}}
\newcommand{\convUPnull}{\xrightarrow{\scalebox{0.6}{$P/\Theta_0$}}}
\newcommand{\VVV}{W\hspace{-0.7em}W}

\newcommand{\cl}{c\`adl\`ag}
\newcommand{\lc}{c\`agl\`ad}

\newcommand{\vertiii}[1]{{\left\vert\kern-0.25ex\left\vert\kern-0.25ex\left\vert #1 \right\vert\kern-0.25ex\right\vert\kern-0.25ex\right\vert}}

\usepackage{amssymb}
\usepackage{amsmath}
\usepackage{amsthm}
\usepackage{latexsym}
\usepackage{todonotes}
\usepackage{booktabs}
\usepackage{tikz}
\usepackage{tkz-graph}

\usepackage{mathabx}

\newcommand{\bw}{\mathbf{w}}
\newcommand{\bW}{\mathbf{W}}
\newcommand{\bz}{\mathbf{z}}
\newcommand{\bZ}{\mathbf{Z}}
\newcommand{\bV}{\mathbb{V}}
\newcommand{\cZ}{\mathcal{Z}}
\newcommand{\ex}{\mathbb{E}}
\newcommand{\avar}{\mathbb{V}_{\! \Delta}}

\newcommand{\ind}{\,\perp \! \! \! \perp\,}
\newcommand{\indP}{\,{\perp \! \! \! \perp}_P\,}
\newcommand{\given}{\,|\,}
\newcommand{\real}{\mathbb{R}}

\newcommand{\pa}[1]{\left(#1\right)}

\DeclareMathOperator{\var}{Var}
\DeclareMathOperator{\cov}{Cov}

\newcommand{\bh}{\bm{h}}
\newcommand{\bH}{\bm{H}}
\newcommand{\bM}{\bm{M}}

\usepackage{graphicx,scalerel}
\newcommand\wh[1]{\hstretch{0.25}{\hat{\hstretch{4}{#1}}}}
\newcommand{\hatM}{\wh{M}}

\newcommand{\ol}[1]{\mkern 2mu\overline{\mkern-2mu#1\mkern-0mu}\mkern 0mu}
\newcommand{\oZ}{\ol{Z}}
\newcommand{\oX}{\ol{X}}
\newcommand{\tgamma}{\widetilde{\gamma}}

\DeclareMathOperator*{\minimize}{minimize}

\newcommand{\LCM}{[\hyperref[chapter:LCM]{{\color{black}\textbf{LCM}}}]}
\newcommand{\DOPE}{[\hyperref[chapter:adjustment]{{\color{black}\textbf{DOPE}}}]}
\newcommand{\ACM}{[\hyperref[chapter:ACM]{{\color{black}\textbf{ACM}}}]}

\renewcommand{\thesection}{\arabic{chapter}.\arabic{section}} 

\setcounter{secnumdepth}{3} 
\setcounter{tocdepth}{1}

\let\temp\phi
\let\phi\varphi
\let\varphi\temp


\newcommand{\cF}{\mathcal{F}}
\newcommand{\cG}{\mathcal{G}}

\DeclareMathOperator*{\argmin}{arg\,min}


\newcommand{\floor}[1]{\lfloor{#1}\rfloor}
\newcommand{\ceil}[1]{\lceil{#1}\rfloor}

\newcounter{supsection} 
\renewcommand\thesupsection{\arabic{chapter}.\Alph{supsection}} 
\newcommand{\supplementarysection}[1]{%
  \refstepcounter{supsection} 
  \section*{\thesupsection. #1} 
  \addcontentsline{toc}{section}{%
  \texorpdfstring{\protect\makebox[21.5pt][l]{\thesupsection}}{??} #1} 
  \counterwithin*{subsection}{supsection}
  \counterwithin*{equation}{supsection}
  \counterwithin*{figure}{supsection}
  \counterwithin*{table}{supsection}
  \counterwithin*{lemma}{supsection}
  \counterwithin*{proposition}{supsection}
  \renewcommand\thesubsection{\thesupsection.\arabic{subsection}}
  \renewcommand\theequation{\thesupsection.\arabic{equation}}
  \renewcommand\thefigure{\thesupsection.\arabic{figure}} 
  \renewcommand\thetable{\thesupsection.\arabic{table}}
  \renewcommand\thelem{\thesupsection.\arabic{lem}}
  \renewcommand\theprop{\thesupsection.\arabic{prop}}
  \renewcommand\thethm{\thesupsection.\arabic{thm}}
  \renewcommand\thecor{\thesupsection.\arabic{cor}}
  \renewcommand\thedfn{\thesupsection.\arabic{dfn}}
  \setcounter{subsection}{0} 
  \setcounter{equation}{0} 
  \setcounter{figure}{0} 
  \setcounter{table}{0} 
  \setcounter{lem}{0} 
  \setcounter{prop}{0} 
  \setcounter{thm}{0}
  \setcounter{cor}{0}
}

\newcommand{\setHyperefPrefix}[1]{
    \renewcommand*{\theHsection}{#1.\the\value{section}}
    \renewcommand*{\theHtheorem}{#1.\the\value{theorem}}
    \renewcommand*{\theHproposition}{#1.\the\value{proposition}}
    \renewcommand*{\theHcorollary}{#1.\the\value{corollary}}
    \renewcommand*{\theHassumption}{#1.\the\value{assumption}}
    \renewcommand*{\theHsetting}{#1.\the\value{setting}}
    \renewcommand*{\theHexample}{#1.\the\value{example}}
    \renewcommand*{\theHremark}{#1.\the\value{remark}}
    \renewcommand*{\theHdefinition}{#1.\the\value{definition}}
    \renewcommand*{\theHfigure}{#1.\the\value{figure}}
    \renewcommand*{\theHtable}{#1.\the\value{table}}
    \renewcommand*{\theHalgorithm}{#1.\the\value{algorithm}}
    \renewcommand*{\theHequation}{#1.\the\value{equation}}
    \renewcommand*{\theHlemma}{#1.\the\value{lemma}}
}

\theoremstyle{plain}

\newtheorem{thm}{Theorem}[section]
\newtheorem{prop}[thm]{Proposition}
\newtheorem{lem}[thm]{Lemma}
\newtheorem{cor}[thm]{Corollary}
\newtheorem{example}{Example}

\newtheorem{asm}{Assumption}[section]

\theoremstyle{remark}
\newtheorem{remark}{Remark}

\theoremstyle{definition}
\newtheorem{definition}{Definition}

\renewenvironment{example}
    {\pushQED{\qed}\renewcommand{\qedsymbol}{$\spadesuit$}\exm}
    {\popQED\endexm}
\renewenvironment{remark}
    {\pushQED{\qed}\renewcommand{\qedsymbol}{$\diamondsuit$}\rmk}
    {\popQED\endrmk}
\renewenvironment{definition}
    {\pushQED{\qed}\renewcommand{\qedsymbol}{$\clubsuit$}\dfn}
    {\popQED\enddfn}

\renewcommand\thmcontinues[1]{Continued}

\xpretocmd{\chapter}{\setcounter{theorem}{0}}{}{}
\xpretocmd{\chapter}{\setcounter{lemma}{0}}{}{}
\xpretocmd{\chapter}{\setcounter{corollary}{0}}{}{}
\xpretocmd{\chapter}{\setcounter{proposition}{0}}{}{}
\xpretocmd{\chapter}{\setcounter{assumption}{0}}{}{}
\xpretocmd{\chapter}{\setcounter{setting}{0}}{}{}
\xpretocmd{\chapter}{\setcounter{example}{0}}{}{}
\xpretocmd{\chapter}{\setcounter{remark}{0}}{}{}
\xpretocmd{\chapter}{\setcounter{definition}{0}}{}{}
\xpretocmd{\chapter}{\setcounter{figure}{0}}{}{}
\xpretocmd{\chapter}{\setcounter{table}{0}}{}{}
\xpretocmd{\chapter}{\setcounter{algorithm}{0}}{}{}
\xpretocmd{\chapter}{\setcounter{equation}{0}}{}{}

\counterwithout{table}{chapter}
\counterwithout{figure}{chapter}
\counterwithout{equation}{chapter}

\AtAppendix{\counterwithin{section}{chapter}}
\AtAppendix{\counterwithin{theorem}{chapter}}
\AtAppendix{\counterwithin{proposition}{chapter}}
\AtAppendix{\counterwithin{corollary}{chapter}}
\AtAppendix{\counterwithin{assumption}{chapter}}
\AtAppendix{\counterwithin{setting}{chapter}}
\AtAppendix{\counterwithin{example}{chapter}}
\AtAppendix{\counterwithin{remark}{chapter}}
\AtAppendix{\counterwithin{definition}{chapter}}
\AtAppendix{\counterwithin{figure}{chapter}}
\AtAppendix{\counterwithin{table}{chapter}}
\AtAppendix{\counterwithin{algorithm}{chapter}}
\AtAppendix{\counterwithin{equation}{chapter}}
\AtAppendix{\counterwithin{lemma}{chapter}}



\title{Model-free Methods for Event History Analysis and Efficient Adjustment}

\author{Alexander Mangulad Christgau}

\def\includepapers{1}

\begin{document}

\newlength{\drop}
\newcommand*{\titleTMB}{\begingroup
	\drop=0.1\textheight
	\centering
	\vspace*{10\baselineskip}
	{\large\scshape \theauthor}\\
	\vspace{2 cm}
	{\Huge \thetitle }\\[\baselineskip]
    \vspace{1.5 cm}
	{\Large\scshape phd thesis}\\[\baselineskip]\vspace{1cm}
	{\small \scshape this thesis has been submitted to the phd school of \\the faculty of science, University of Copenhagen}
	\vfill
	
	{\large\scshape Department of Mathematical Sciences \\ University of Copenhagen}\\[\baselineskip]
	{\small\scshape August 2024}\\ [\baselineskip]
	
	\vspace*{\drop}
	\endgroup}

\frontmatter
\begin{titlepage}
        \makebox[0pt]{%
        \raisebox{-25cm}[0pt][0pt]{%
        \hspace{396pt}
        \includegraphics{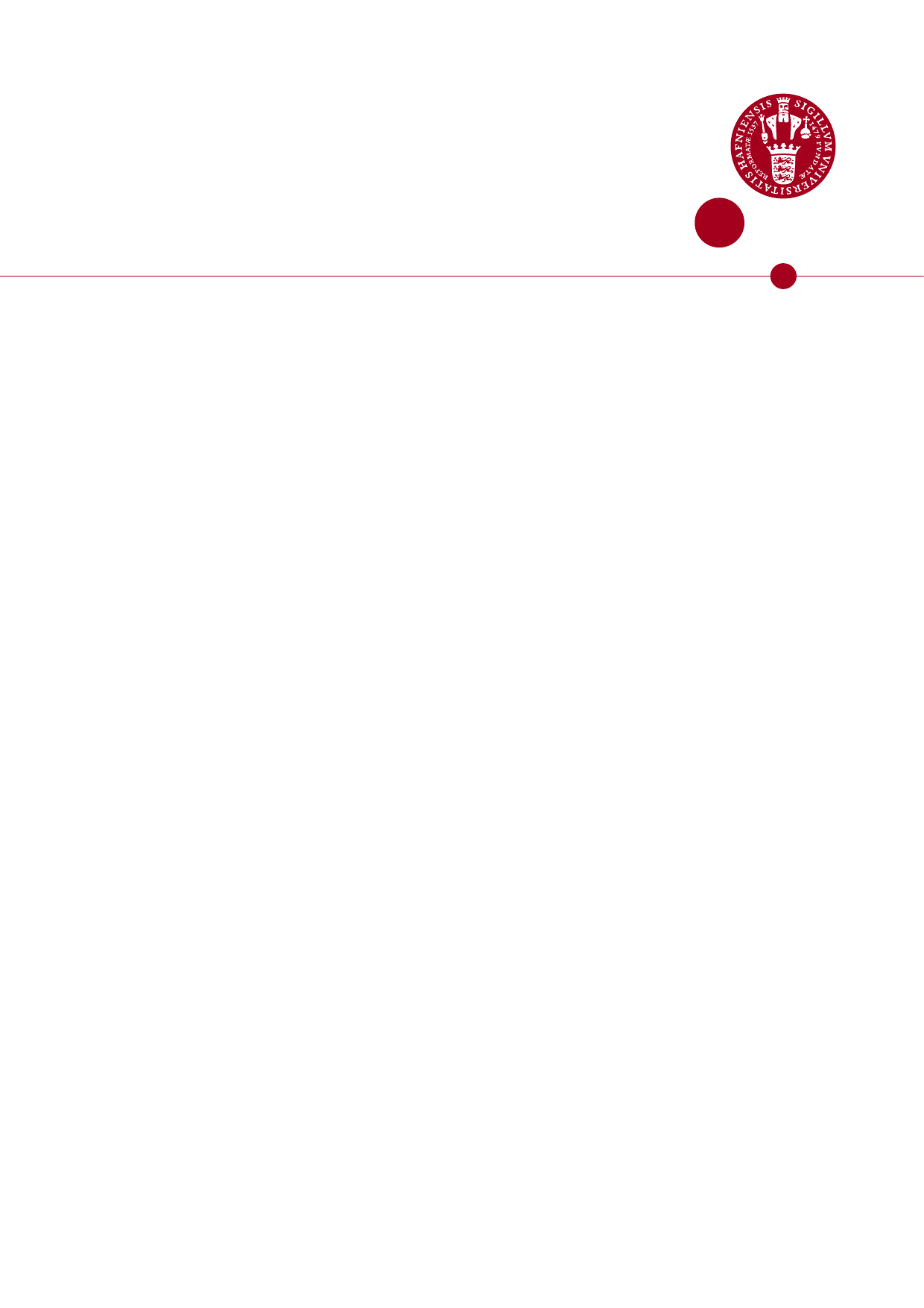}
        }}%
        
	\centering
	\titleTMB
\end{titlepage}

\theauthor \par
\texttt{amc@math.ku.dk}   \par
Department of Mathematical Sciences \par
University of Copenhagen \par
Universitetsparken 5 \par
2100 Copenhagen \par
Denmark

\vspace{2cm}

\begin{minipage}[t]{0.25\linewidth}
    \begin{flushleft}
    	{\textbf{Thesis title:}} \par
    	\, \vspace*{.3cm} \\
            \, \par 
        {\textbf {Supervisor:}} \par 
        \, \vspace*{.3cm} \\     
        {\textbf{Assessment}} \par 
        {\textbf{Committee:}} \vspace*{.3cm}\\ 
        \, \par
        \, \vspace*{.3cm} \\
        \, \par
        \, \vspace*{.3cm} \\
        {\textbf{Date of}} \par
        {\textbf{Submission:}}\vspace*{.3cm}\\
        {\textbf{Date of}} \par
        {\textbf{Defense:}}\vspace*{.3cm}\\
        {\textbf{ISBN:}} \\

    \end{flushleft}
\end{minipage}%
\begin{minipage}[t]{0.75\linewidth}
    \begin{flushleft}
    	\thetitle \par
        \, \vspace*{.3cm} \\

        Professor Niels Richard Hansen \par 
        University of Copenhagen \vspace*{.3cm} \\
        			
        Associate Professor Sebastian Weichwald (chair) \par
        University of Copenhagen \vspace*{.3cm} \\
        
        Professor Ingrid Van Keilegom \par
        KU Leuven \vspace*{.3cm}  \\
        
        Professor Stijn Vansteelandt \par
        Ghent University \vspace*{.3cm}\\
        August 31, \par 
        2024 \vspace*{.3cm}  \\
        November 15, \par
        2024 \vspace*{.3cm}  \\
        978-87-7125-233-0
    \end{flushleft}
\end{minipage}%

\vspace{2cm}
Chapter 1: $\copyright$ Christgau, A. M. \par
Chapter 2: $\copyright$ Institute of Mathematical Statistics. \par
Chapter 3 and 4: $\copyright$ Christgau, A. M. \& Hansen, N. R. 
\vfill

\noindent \textit{This thesis has been submitted to the PhD School of The Faculty of
Science, University of Copenhagen on 31 August 2024. It was supported by the Novo Nordisk Foundation (research grant NNF20OC0062897).}

\newpage
\begin{center}
    \vspace*{\fill}
    \textit{In loving memory of my grandparents, Ole and Sigrid.}
    \vspace*{\fill}    
\end{center}

\newpage
\addcontentsline{toc}{chapter}{Preface}
\chapter*{Preface}
This thesis is the culmination of three wonderful years from September 2021 to August 2024, under the supervision of Professor Niels Richard Hansen at the Department of Mathematical Sciences, University of Copenhagen.
The thesis features three distinct manuscripts on varied, yet related, topics. Each manuscript is a stand-alone scientific contribution and can be read independently. 
Small discrepancies may occur between the contents of a chapter and its associated manuscript. Any typographical or mathematical errors are my own responsibility.

\subsection*{Acknowledgments}
I am, first and foremost, grateful to Niels for giving me the opportunity to pursue a PhD, teaching me how to navigate academia, being an engaged co-author, and providing guidance, input, and feedback when necessary while allowing me to find my own path when appropriate. Most importantly, our collaboration over the past three years has been immensely enjoyable.

I extend my gratitude to my colleagues and now friends at the department. Coming to the office each day has been a delight, and our discussions about statistics and other topics have been truly entertaining, if not inspiring. Thank you for all the memories. In particular, I thank Anton for the numerous conversations, both serious and unserious, about mathematics, statistics, and everything beyond. 

I thank Becca and Jake for their hospitality and enthusiasm during my stay at the University of Chicago. 
I am especially grateful to the Sjursen family for warmly welcoming me into their family and making my time in Chicago even more enjoyable.

A heartfelt thanks goes to all my friends for their encouragement and good times we shared during my studies, with a special mention to my friends from folkeskolen, GHG, 4.~Indre, my study group from KU, and my PhD cohort. To avoid turning this into a lengthy acknowledgment, I will just say: You know who you are; thank you for your friendship. A special thanks goes to Christian, Mikkel, and Rasmus for being amazing roommates and for helping me maintain my sanity throughout my PhD studies.

Finally, but most importantly of all, I want to extend my deepest thanks to my entire family, especially Far and Mor, for their unwavering encouragement and support throughout my life. You have my deepest gratitude and love.
To my grandparents, who never saw the end of this adventure, but whose inquisitive minds and kind hearts shaped me profoundly: this thesis is dedicated to you.

\vspace{0.3cm} \par
\hfill \theauthor \par
\hfill August, 2024 

\newpage
\addcontentsline{toc}{chapter}{Abstract}
\begin{center}
  \normalfont\usekomafont{disposition} \Large Abstract 
\end{center}
This thesis contains a series of independent contributions to mathematical statistics, unified by a model-free perspective. The first chapter elaborates on how a model-free perspective can be used to formulate flexible methods that leverage prediction techniques from machine learning.
Mathematical insights are obtained from concrete examples, and these insights are generalized to principles that permeate the rest of the thesis.

The second chapter studies the concept of local independence, which describes whether the evolution of one stochastic process is directly influenced by another. To test local independence, we define a model-free parameter called the Local Covariance Measure (LCM). We formulate an estimator for the LCM, from which a test of local independence is proposed. We discuss how the size and power of the proposed test can be controlled uniformly and investigate the test in a simulation study.

The third chapter focuses on covariate adjustment, a method used to estimate the effect of a treatment by accounting for observed confounding. We formulate a general framework that facilitates adjustment for any subset of covariate information. We identify the optimal covariate information for adjustment and, based on this, introduce the Debiased Outcome-adapted Propensity Estimator (DOPE) for efficient estimation of treatment effects. An instance of DOPE is implemented using neural networks, and we demonstrate its performance on both simulated and real data.

The fourth and final chapter introduces a model-free measure of the conditional association between an exposure and a time-to-event, which we call the Aalen Covariance Measure (ACM). The ACM
serves as an assumption-lean generalization of the exposure coefficient in the Aalen additive hazards model. We develop a model-free estimation method and show that it is doubly robust, ensuring $\sqrt{n}$-consistency provided that the nuisance functions can be estimated with modest rates. A simulation study demonstrates the use of our estimator in several settings.

\newpage
\vspace{0.3cm}
\begin{center}
  \normalfont\usekomafont{disposition}\Large Sammenfatning
\end{center}
Denne Ph.D.-afhandling indeholder en række selvstændige bidrag inden for matematisk statistik med det til fælles, at de har et model-frit perspektiv. 
Det første afsnit uddyber, hvordan et model-frit perspektiv kan bruges til at formulere fleksible metoder, som benytter prædiktionsteknikker fra maskinlæring. Matematisk indsigt opnås via konkrete eksempler, og denne indsigt generaliseres til principper, der er gennemgående for resten af afhandlingen.

Det andet afsnit omhandler konceptet lokal uafhængighed, hvilket beskriver om udviklingen af en stokastisk proces er direkte påvirket af en anden proces. For at teste hypotesen om lokal uafhængighed definerer vi en model-fri parameter kaldet Local Covariance Measure (LCM). Vi beskriver en estimator af LCM, og baseret på denne foreslår vi nye test af lokal uafhængighed. 
Vi forklarer, hvordan størrelsen og styrken af disse test kan kontrolleres uniformt, og vi udforsker dem i et simulationsstudie.

Det tredje afsnit vedrører justering for kovariater, hvilket er en metode til at estimere effekten af en behandling ved at tage højde for observeret confounding. 
Vi udvikler en generel teori, der kan beskrive justering for enhver delmængde af information i kovariaterne. 
Vi identificerer den optimale information at justere for, og baseret på denne, introducerer vi Debiased Outcome-adapted Propensity Estimatoren (DOPE) for efficient estimering af behandlingseffekter. En version af DOPE implementeres med neurale netværk, og vi demonstrerer dens præstation på både simuleret og virkelig data.

Det fjerde og sidste afsnit introducerer en model-fri størrelse, der måler den betingede sammenhæng mellem en eksponering og en overlevelsestid, som vi kalder Aalen Covariance Measure (ACM). 
Man kan betragte ACM som en fleksibel generalisering af koefficienten for eksponering i en Aalen additive hazard model. 
Vi udvikler en model-fri estimationsmetode og viser at den er dobbelt robust, hvilket garanterer $\sqrt{n}$-konsistens under beskedne betingelser. 
Et simulationsstudie demonstrerer brugen af vores estimator i adskillige scenarier.

\chapter{Contributions and Structure}
Chapter~\ref{chapter:introduction} is an introduction, which briefly motivates model-free statistical inference based on machine learning. 
The introduction is not meant to be a literature review, but rather an effort to build intuition around some key principles that permeate the rest of the thesis.
It is followed by $3$ chapters, each of which corresponds to a paper. For reference within this thesis, we give each paper an acronym, for example \LCM{}. Notation established in each chapter should be considered specific to the chapter where it is introduced. \\ \vspace{0.5 cm}
\nobibliography*
\noindent \textbf{Chapter~\ref{chapter:LCM}} discusses hypothesis testing for time-to-event responses using the \emph{Local Covariance Measure (LCM)} and corresponds to the paper:
\nobibliography*
\begin{enumerate}
    \item[\LCM{}] \citep{christgau2023nonparametric}.
        \bibentry{christgau2023nonparametric}. 
\end{enumerate}
 
 \vspace{0.5 cm}

\noindent \textbf{Chapter}~\ref{chapter:adjustment} formulates a general framework for the method of covariate adjustment, which leads to the \emph{Debiased Outcome-adapted Propensity Estimator (DOPE)}, and corresponds to the paper:
\begin{enumerate}
    \item[\DOPE{}] \citep{christgau2024efficient}. 
        \bibentry{christgau2024efficient}.
\end{enumerate}
    
\vspace{0.5 cm}

\noindent \textbf{Chapter}~\ref{chapter:ACM} introduces a measure of the conditional association between an exposure and a time-to-event, namely the \textit{Aalen Covariance Measure (ACM)}, and corresponds to the following paper in preparation:
\nobibliography*
\begin{enumerate}
    \item[\ACM{}] \citep{christgau2024assumption} 
        \bibentry{christgau2024assumption}. 
\end{enumerate}
\vspace{0.5 cm}

\newpage
During my PhD studies, I also worked on the following two publications, which, however, are not included in this thesis.
\begin{enumerate}
    \item \bibentry{christgau2023moment}. 

    \item \bibentry{bianchi2023generators}
\end{enumerate}

{
\hypersetup{linkcolor=black}
\tableofcontents
}

\mainmatter

\chapter{Introduction}\label{chapter:introduction}
This thesis delves into a range of statistical challenges, including event history analysis, statistical efficiency, hypothesis testing, effect estimation, and assumption-lean inference. Despite the variety of topics, a central theme is that the problems and methods are studied from a \textit{model-free} perspective.

To understand what this means, suppose we are given a sample consisting of $n$ observations $Z_1,\ldots,Z_n$, and that our objective is to infer properties about the mechanisms that generated this data. To simplify matters, we may start by assuming that the observations are independent and identically distributed (i.i.d.) according to a distribution $P$, referred to as the \emph{data-generating process (DGP)}. 
Whether this is reasonable assumption depends on the application, but it will be assumed throughout this thesis. 
Under the i.i.d. assumption, our objective can be reformulated in terms of inferring properties about $P$. 
To this end, it is convenient to let $Z$ denote an auxiliary independent observation with distribution $P$.

Before we can understand and appreciate a model-free approach to statistics, we first follow the thought process of Rhonda Fisher, a (fictional) model-based statistician. 

\begin{example}\label{ex:modelbased}
    Rhonda has obtained a sample in which each observation is an independent copy of the template observation $Z = (X,Y)\in \real^d \times \real$. Here $X=(X^1,\ldots,X^d)$ is a $d$-dimensional covariate and $Y$ is a real-valued response variable. 
    She knows that without further assumptions, she can hardly infer anything about the relationship between the covariate and the response. With this consideration, Rhonda begins by positing a linear model:
    \begin{align}\label{eq:linearmodel}
        Y = \alpha + \beta^\top X + \varepsilon,
        \qquad \text{with}
        \quad \varepsilon \sim \mathrm{N}(0,\sigma^2) \text{ and } \varepsilon\ind X,
    \end{align}
    where $\alpha\in \real$, $\beta=(\beta_1,\ldots,\beta_d)\in \real^d$, and $\sigma^2>0$ are unknown parameters. Conveniently, her favorite statistical software can `fit' the linear model \eqref{eq:linearmodel}.
    
    She now recalls that her primary objective is to describe the effect of $X^1$ on $Y$. In view of the linear model \eqref{eq:linearmodel}, she concludes that $\beta_1$ summarizes such an effect.
    Thus she decides to report the summary of the fitted model, which in particular includes an estimate of $\beta_1$ and a valid confidence interval. 
\end{example}

While Rhonda is a caricature, her line of reasoning is ubiquitous and applies to many applications using generalized linear models (GLMs) and Cox models, among other semiparametric models. 
Whereas Rhonda is `lucky' that her model conveniently summarizes her target of interest, others may choose a model with this objective in mind. 

In some applications, for example in physics, a thorough understanding of the data-generating process may justify the belief that the data follow a (semi)parametric model. However, in more complex systems, e.g., those analyzed in the physiological and social sciences, (semi)parametric models are often chosen for simplicity and convenience, rather than being a consequence of principled reasoning.

Because the data-generating process is often more complicated than desired, this leads to a difficult trade-off between model misspecification due to simplicity, and impracticality due to complexity.
Model misspecification may lead to bias in the results of the analysis, 
but a potentially more serious problem is the lack of interpretability of the target parameter for distributions that do not conform to the model. 
Other issues with model-based reasoning were pointed out in the seminal paper by \citet{breiman2001statistical} and are also discussed by \citet{hines2022demystifying}, among others.

 We will proceed to discuss how to define a model-free target parameter. 

\section{Model-free statistical objectives}\label{sec:modelfreeobj}
The importance of explicitly defining a target of interest at the onset of statistical analyses has been increasingly highlighted, particularly within the field of causal inference \citep{van2011targeted}. A starting point is to consider a general collection $\mathcal{P}$ of possible data-generating processes, that is, a collection of probability distributions over the sample space for a single observation. Although causal analyses require additional structure, this thesis will concentrate on observational quantities and their estimation.

We call a function an \textit{estimand} if its domain is $\mathcal{P}$. For example, if each observation is real-valued, a possible estimand is the population average outcome
\begin{align}\label{eq:mean}
    \mu \colon \mathcal{P} \to \real, \qquad    
    \mu(P) \coloneq \int z P(\mathrm{d}z) = \ex_P[Z].
\end{align}
The statistician is responsible for defining an estimand that effectively summarizes the target of interest.

Note that the estimand $\mu$ in \eqref{eq:mean} requires the mild condition that the possible DGPs are integrable. 
Most estimands require similar assumptions to be well-defined, such as the existence of moments, the positivity of certain probabilities, absolute continuity, or regularity conditions on conditional means.
These are properties that can be justified based on knowledge of the underlying DGP.

We say that an estimand is \textit{model-based} if it explicitly requires a parametrization of a component of the DGPs and reduces the dimension of the component.\footnote{This is, admittedly and intentionally, a vague definition.}
If an estimand is not model-based, we say that it is \textit{model-free}, also commonly referred to as nonparametric in other literature.

Returning to the setting of Example~\ref{ex:modelbased}, Rhonda defined her target parameter by restricting $\mathcal{P}$ to the linear Gaussian model
\begin{align*}
    \mathcal{P}^{\text{lin}} \coloneq \{
    P\in \mathcal{P}\given 
    \exists (\alpha,\beta,\sigma^2) &\in \real \times \real^d\times \real_{>0} \colon 
    \text{Equation } \eqref{eq:linearmodel}
    \text{ holds for }
    (X,Y)\sim P
    \}.
\end{align*}
Rhonda's target can be interpreted as the model-based estimand $\tau_{\text{mb}} \colon \mathcal{P}^{\text{lin}} \to \real$ that inverts the parametrization in $\beta_1$ (technically, this requires that $\mathcal{P}$ is initially restricted enough to ensure that $\beta_1$ is \textit{identifiable}). 
Critically, her statistical analysis becomes meaningless if the model is misspecified.
That is, for a distribution $P\notin \mathcal{P}^{\text{lin}}$, the estimand $\tau_{\text{mb}}$ is undefined and it is \textit{a priori} unclear what any estimator of $\tau_{\text{mb}}$, or equivalently, $\beta$, will target. In view of this, model-free estimands have been increasingly advocated over model-based ones.

It turns out, however, that the model-based estimand $\tau_{\text{mb}}$ can be extended to a model-free estimand.
For convenience, we let $X = (D,W)$, where $D\coloneq X^1$ denotes the covariate of interest, and $W \coloneq (X^2,\ldots,X^d)$ represents the other covariates. Then we define the estimand $\tau_{\text{mf}} \colon \mathcal{P} \to \real$ given by
\begin{align}\label{eq:populationRPO}
    \tau_{\text{mf}}(P) 
        \coloneq \frac{\ex_P[(D-\ex_P[D\given W])Y]}{\ex_P[(D-\ex_P[D\given W])^2]}
        = \frac{\ex_P[\cov_P(D,Y\given W)]}{\ex_P[\var_P(D\given W)]},
\end{align}
where we use a subscript $P$ to indicate that an operator is computed under the distribution $P$. 
This estimand unambiguously defines a target parameter without reference to a (semi)parametric model, and the conditions for being well-defined are clear from the expression, unlike the model-based estimand $\tau_{\text{mb}}$.\footnote{Both estimands are well-defined under the moment conditions $\ex_P[|DY|]<\infty$ and $\ex_P[D^2]<\infty$ and the positivity condition $\ex_P[\var_P(D\given W)]>0$.}

To see how $\tau_{\text{mf}}$ is a generalization of $\tau_{\text{mb}}$, consider a distribution $\mathsf{P}\in \mathcal{P}$ that follows the \textit{partially linear model}:
\begin{equation}\label{eq:plm}
    \ex_{\mathsf{P}}[Y\given D, W] = \beta_{\mathsf{P}} D + g_{\mathsf{P}}(W),
\end{equation}
where $\beta_{\mathsf{P}}\in \real$ and where $g_{\mathsf{P}}$ is an integrable function with respect to the distribution of $W$ under $\mathsf{P}$. This semiparametric model generalizes the linear model by allowing a more flexible dependency on the covariates $W$. By applying iterated expectations conditionally on $W$, we obtain that
\begin{align*}
    \tau_{\text{mf}}(\mathsf{P}) = \frac{\ex_{\mathsf{P}}[(D-\ex_{\mathsf{P}}[D\given W])(\beta_{\mathsf{P}} D + g_{\mathsf{P}}(W))]
    }{\ex_{\mathsf{P}}[(D-\ex_{\mathsf{P}}[D\given W])^2]}
    = \beta_{\mathsf{P}} \frac{\ex_{\mathsf{P}}[D^2-(\ex_{\mathsf{P}}[D\given W])^2]}{\ex_{\mathsf{P}}[(D-\ex_{\mathsf{P}}[D\given W])^2]}
    = \beta_{\mathsf{P}}.
\end{align*}
This shows that $\tau_{\text{mf}}$ reduces to the coefficient $\beta_{\mathsf{P}}$ in the (partially) linear model. Thus, we can more generally make sense of estimators that target $\tau_{\text{mf}}$, while maintaining the same interpretation within the (partially) linear model. 

There can exist multiple extensions of a model-based estimand. For example, another possible extension of $\tau_{\text{mb}}$ is the \textit{average partial effect}, defined by
\begin{align}\label{eq:ape}
    \tau_{\text{ape}}(P)
        = \ex_P\left[
        \frac{\partial}{\partial d}
        \ex_P[Y\given D=d,W] \vert_{d=D}
        \right].
\end{align}
Within the partially linear model \eqref{eq:plm}, we observe that the derivative inside the expectation equals $\beta$, and, consequently, $\tau_{\text{ape}}$ is also a model-free extension of $\tau_{\text{mb}}$. 
Arguably, the estimand $\tau_{\text{ape}}$ offers more direct interpretations for causal analyses than $\tau_{\text{mf}}$. However, $\tau_{\text{mf}}$ can also serve as a reasonable summary of conditional association, and estimators of $\tau_{\text{mf}}$ may be more robust and well-behaved than those of $\tau_{\text{ape}}$ \citep{lundborg2023perturbation}.



\section{Estimation}
In this section, we explore how $\sqrt{n}$-consistent estimation of an estimand can be achieved, even when dealing with complex or high-dimensional underlying DGPs. A substantial body of work exists on nonparametric estimation using \emph{influence functions}, with methods such as \emph{one-step estimation} \citep{pfanzagl1985contributions,bickel1998efficient} and \emph{Targeted Maximum Likelihood Estimation (TMLE)} \citep{van2011targeted}. This literature is well summarized in tutorial papers by, e.g., \citet{hines2022demystifying} and \citet{kennedy2022semiparametric}.

While the theory of influence functions is useful for defining an effective estimator, rigorously establishing the distributional properties of the estimator is a task that is typically done through case-specific analysis. In this introduction, we focus on this task through direct and concrete analyses, without invoking influence functions. 

Drawing from \citet{chernozhukov2018}, we begin with a motivating example, emphasizing an estimand-based perspective rather than a semiparametric (score-based) approach. After discussing the example, we will explore how these insights can be extended to more general contexts.

\subsection{An instructive example}\label{sec:motivatingexample}
Consider again the setting where the observed sample $Z_1,\ldots,Z_n$ consists of i.i.d. copies of $Z = (D,W,Y)\in \real\times \real^{d-1} \times \real$ with $Z\sim P \in \mathcal{P}$. Suppose our target of interest is the numerator of \eqref{eq:populationRPO}, i.e., the estimand $\rho \colon \mathcal{P}\to \real$ given by
\begin{align}\label{eq:psiexpression1}
    \rho(P) \coloneq \ex_P[(D-\ex_P[D\given W])Y].
\end{align}
This estimand has been studied for the purpose of quantifying and testing conditional (mean) independence \citep{shah2020hardness,lundborg2023modern}. We shall assume that $D,Y$, and $DY$ are square-integrable under each $P\in \mathcal{P}$, which in particular ensures that $\rho$ is well-defined.

Expression \eqref{eq:psiexpression1} combines two components of the distribution: the expectation operator, $\ex_P[\cdot]$, and the conditional mean function, $m_P(\cdot) \coloneq \ex_P[D\given W=\cdot]$.
According to the central limit theorem (CLT), the expectation operator is pointwise $\sqrt{n}$-approximated by the empirical mean operator, denoted $\mathbb{P}_n[\cdot]$ and given by $\mathbb{P}_n [f(Z)] \coloneq \frac{1}{n}\sum_{i=1}^n f(Z_i)$, for any function $f\in \mathcal{L}^2(P)$.

Distributional quantities that are used to compute the target estimand, but are not of primary interest, such as $m_P$, are called \textit{nuisance parameters}. 
Estimating conditional means can be done using various regression methods, including model-free prediction algorithms from machine learning (ML), which we discuss further in Section~\ref{sec:nuisance}. Assume, for now, that an estimate $\widehat{m}$ of $m_P$ is given. 
This suggests the estimator
\begin{align*}
    \hat{\rho}^{\text{plug-in}}
        \coloneq \mathbb{P}_n[(D-\widehat{m}(W))Y]
        = \frac{1}{n} \sum_{i=1}^n (D_i-\widehat{m}(W_i))Y_i.
\end{align*}
Estimators like $\hat{\rho}^{\text{plug-in}}$, which are derived by substituting nuisance estimates directly into the expression for the estimand, are often referred to as \textit{plug-in} estimators. However, it is important to note that such estimators are not necessarily of the form $\rho(\widehat{P})$ for an estimate $\widehat{P}\in\mathcal{P}$ of the entire distribution $P$, which is a commonly used representation for plug-in estimators.
As we will see shortly, there can be multiple valid expressions for the same estimand, resulting in different plug-in estimators.

To understand the behavior of this estimator, consider the following decomposition
\begin{align*}
    \sqrt{n}(\hat{\rho}^{\text{plug-in}} - \rho(P))
    &= \sqrt{n} \mathbb{P}_n [(D-\widehat{m}(W))Y] - \sqrt{n}\ex_P[(D-m_P(W))Y] \\
    &= \underbrace{\sqrt{n}(\mathbb{P}_n-\ex_P)[(D-m_P(W))Y]}_{\eqcolon U^{\text{plug-in}}}
        + \underbrace{\sqrt{n}\mathbb{P}_n[(m_P(W)-\widehat{m}(W))Y]}_{\eqcolon R^{\text{plug-in}}}.
\end{align*}
The term $U^{\text{plug-in}}$ is an \textit{oracle term} that describes the error of the plug-in estimator based on the true conditional mean $m_P(\cdot)$. The CLT asserts that $U^{\text{plug-in}}$ converges in distribution to a Gaussian distribution with mean zero. 

The term $R^{\text{plug-in}}$ is a sum of $n$ conditionally i.i.d. terms that are controlled by the estimation error of $m_P(\cdot)$.
As we discuss in Section~\ref{sec:nuisance}, we can generally expect the root mean squared error (RMSE) of $\widehat{m}(\cdot)$ to converge at a rate of $n^{-\alpha}$, where $\alpha\in (0,\frac{1}{2})$. 
Consequently, if $\ex_P[Y\given W] \neq 0$, then we can expect $R^{\text{plug-in}}$ to be of stochastic order $n^{\frac{1}{2}-\alpha}$, in which case
$$
    |\sqrt{n}(\hat{\rho}^{\text{plug-in}} - \rho(P))|
    \xrightarrow{P} \infty.
$$
This is, of course, undesirable and makes inference about $\rho(P)$ based on $\hat{\rho}^{\text{plug-in}}$ intractable.

Fortunately, the bias in $R^{\text{plug-in}}$ can be fixed by adding a correction term in the expression. Motivated by the fact that $R^{\text{plug-in}}$ becomes centered when $\ex_P[Y\given W]= 0$, we rewrite the estimand as 
\begin{align}\label{eq:psiexpression2}
    \rho(P) = \ex_P[(D-\ex_P[D\given W])(Y-\ex_P[Y\given W])].
\end{align}
This introduces a new nuisance parameter, $g_P(\cdot)\coloneq \ex_P[Y\given W=\cdot]$, which can be estimated similarly to $m_P(\cdot)$, e.g., using an ML algorithm, yielding another estimate $\widehat{g}(\cdot)$. 
This suggests the following estimator,
\begin{align}\label{eq:rhodouble}
    \hat{\rho}^{\text{double}}
    = \mathbb{P}_n[(D-\widehat{m}(W))(Y-\widehat{g}(W))]
    = \frac{1}{n}\sum_{i=1}^n(D_i-\widehat{m}(W_i))(Y_i-\widehat{g}(W_i)).
\end{align}
To understand why $\hat{\rho}^{\text{double}}$ might be preferable to $\hat{\rho}^{\text{plug-in}}$, we can decompose its estimation error as follows:
\begin{align}\label{eq:gcmdecomp}
    \sqrt{n}(\hat{\rho}^{\text{double}} - \rho(P))
    = U + R_m + R_g  + R_{mg}
\end{align}
where
\begin{align*}
    U &= \sqrt{n}(\mathbb{P}_n - \ex_P)[(D - m_P(W))(Y - g_P(W))], \\
    R_m &= 
        \sqrt{n}\mathbb{P}_n [(m_P(W)-\widehat{m}(W))(Y-g_P(W))], \\
    R_g &= 
        \sqrt{n}\mathbb{P}_n [(D-m_P(W))(g_P(W)-\widehat{g}(W))], \\
    R_{mg} &= 
        \sqrt{n}\mathbb{P}_n [(m_P(W)-\widehat{m}(W))(g_P(W)-\widehat{g}(W))].
\end{align*}
The first term, $U$, is again an oracle term that is asymptotically Gaussian, that is, $U\xrightarrow{d}\mathrm{N}(0,\sigma_P^2)$ with asymptotic variance $\sigma_P^2 \coloneq \var_P((D - m_P(W))(Y - g_P(W)))$. 
We will argue that, under suitable conditions, the error terms $R_m,R_g$, and $R_{mg}$ converge to zero in probability, in which case Slutsky's theorem asserts that
\begin{equation}\label{eq:doubleasymptotics}
     \sqrt{n}(\hat{\rho}^{\text{double}} - \rho(P)) \xrightarrow{d} \mathrm{N}(0,\sigma_P^2).
\end{equation}
There are several strategies for controlling a term like $R_m$, which we elaborate further in Section~\ref{sec:crossterms}. For simplicity, let us assume that $\widehat{m}(\cdot)$ is estimated on an auxiliary sample that is independent from the sample $(Z_i)_{i\in[n]}$ used in $\mathbb{P}_n$, where $[n]\coloneq \{1,\ldots, n\}$. 
Then, the summands in $R_m$ are independent conditionally on $\widehat{m}$ and $(W_i)_{i\in [n]}$, and their conditional means are zero since $\ex[Y_i\given \widehat{m}, (W_j)_{j\in [n]}] = \ex[Y_i\given W_i] = g_P(W_i)$ for $i\in [n]$. 
Note that the first equality crucially relies on $\widehat{m}$ being estimated on an independent sample. 

It follows that the conditional variance of $R_m$ is given by
\begin{align*}
    \var_P(R_m\given \widehat{m},(W_i)_{i\in [n]})
    = \frac{1}{n}\sum_{i=1}^n (\widehat{m}(W_i)-m_P(W_i))^2 \var_P(Y_i \given W_i).
\end{align*}
Thus, if there exists a bound $C>0$ such that $\var_P(Y \given W) \leq C$ almost surely and if $\widehat{m}$ is (out-of-sample) RMSE consistent for $m_P$, then it holds that $\var_P(R_m\given \widehat{m},(W_i)_{i\in [n]})\xrightarrow{P}0$. In this case, an application of Chebyshev's inequality, conditionally on $\widehat{m}$ and $(W_i)_{i\in [n]}$, lets us to conclude that $R_m \xrightarrow{P} 0$.

Since $R_m$ and $R_g$ are symmetric in $(m_P,D)$ and $(g_P,Y)$, the same argument can be applied to show that $R_g \xrightarrow{P} 0$ under analogous conditions on $\widehat{g}$ and $\var_P(D \given W)$.

For the product error term, $R_{mg}$, the Cauchy-Schwarz inequality yields that
\begin{align}\label{eq:producterrorbound}
    R_{mg}
    \leq \sqrt{n} \cdot \sqrt{\mathbb{P}_n[(\widehat{m}(W)-m_P(W))^2]} \cdot 
    \sqrt{\mathbb{P}_n[(\widehat{g}(W)-g_P(W))^2]}.
\end{align}
As a consequence, we see that $R_{mg}\xrightarrow{P}0$ if the product of the RMSE for $\widehat{m}(\cdot)$ and the RMSE for $\widehat{g}(\cdot)$ decays at an order of $o_P(n^{-1/2})$. 
When the rate requirements for nuisance estimators reduce to this condition, we say that the estimator exhibits \textit{rate double robustness}. 
There are several settings where this requirement can be achieved by nonparametric regression estimators, as we elaborate in Section~\ref{sec:nuisance}.

We proceed to discuss, from a more general perspective, conditions under which an estimator may exhibit (rate) double robustness.

\subsection{Double robustness and orthogonality}
\label{sec:orthogonality}
In this section, we generalize some of the insights obtained from the example in the preceding section, and we derive a condition required for rate double robustness. 

Suppose that $\tau \colon \mathcal{P} \to \real$ is an estimand that can be written of the form
\begin{align}\label{eq:estimandtoscore}
    \tau(P) = \frac{\ex_P[\psi(Z, \eta_0(P))]}{\ex_P[\varphi(Z, \eta_0(P))]},
\end{align}
where $\eta_0\colon \mathcal{P} \to \mathcal{E}$ is a nuisance parameter, and where $\varphi, \psi\colon \cZ \times \mathcal{E} \to \real$ are measurable maps, with $\mathcal{Z}$ denoting the sample space for $Z$. We assume that for all $\eta \in \mathcal{E}$ and $P\in \mathcal{P}$, the random variables $\psi(Z,\eta)$ and $\varphi(Z,\eta)$ are integrable under $P$ with $\ex_P[\varphi(Z,\eta)]\neq 0$.
Note that all the estimands considered so far have been of this form.
An estimand may be written of the form \eqref{eq:estimandtoscore} in more than one way, as was highlighted in Section~\ref{sec:motivatingexample}. 


Given a nuisance estimate $\widehat{\eta}\in \mathcal{E}$ of $\eta_0(P)$, we can define a plug-in estimator more generally as:
\begin{equation}\label{eq:pluginestimator}
    \widehat{\tau}(\widehat{\eta})
        =  \frac{\mathbb{P}_n [\psi(Z,\widehat{\eta})]}{\mathbb{P}_n [\varphi(Z,\widehat{\eta})]}
\end{equation}
Both $\hat{\rho}^{\text{plug-in}}$ and $\hat{\rho}^{\text{double}}$ were conceived in this way, so why does $\hat{\rho}^{\text{double}}$ exhibit rate double robustness while $\hat{\rho}^{\text{plug-in}}$ does not?

To answer this question, consider first an estimand that can be written of the form \eqref{eq:estimandtoscore} with $\varphi(\cdot,\cdot)\equiv 1$, and let $\eta_P \coloneq \eta_0(P)$. We can then decompose the general plug-in estimator \eqref{eq:pluginestimator} similarly to the example in Section~\ref{sec:motivatingexample}:
\begin{align*}
    \sqrt{n}(\widehat{\tau}(\widehat{\eta}) -\tau(P))
    = \sqrt{n}(\mathbb{P}_n - \ex_P)[\psi(Z,\eta_P)]
    + \sqrt{n}\mathbb{P}_n[\psi(Z,\widehat{\eta})-\psi(Z,\eta_P)]
\end{align*}
The first term is again an oracle term, whose behavior is governed by the CLT. 

To analyze the second term, we may appeal to a Taylor expansion. 
Assume that $\mathcal{E}$ is a convex set, and assume that the map $r\mapsto \psi(z,\eta_P + r(\eta-\eta_P))$ is two times continuously differentiable for all $(z,\eta)\in \mathcal{Z}\times \mathcal{E}$. Then, a Taylor expansion yields that
\begin{align}\label{eq:T1andT2}
    \sqrt{n}\mathbb{P}_n[\psi(Z,\widehat{\eta})-&\psi(Z,\eta_P)] \\
    &=
    \underbrace{\sqrt{n}\mathbb{P}_n[
        \partial_r \psi(Z,\eta_P + r(\widehat{\eta}- \eta_P) )\vert_{r=0} ]}_{
            \eqcolon T_1}
        + \underbrace{\sqrt{n}\mathbb{P}_n[R_2(Z,\eta_P, \widehat{\eta})]}_{
        \eqcolon T_2},
\end{align}
where $R_2(Z,\eta_P, \widehat{\eta})$ is a second-order remainder term that is left unspecified for now.

As we will discuss in Section~\ref{sec:nuisance}, we can generally expect a nuisance estimator to converge at a rate of $n^{-\alpha}$, where $\alpha\in(0,\frac{1}{2})$. To simplify the analysis, suppose that $\widehat{\eta}-\eta_P = n^{-\alpha} \Delta\eta$ for a fixed $\Delta\eta\in \mathcal{E}$. Although this might seem like an oversimplification, if the plug-in estimator is $\sqrt{n}$-consistent for general nuisance estimates, we can reasonably expect it to be $\sqrt{n}$-consistent under this simplifying assumption.

Then, by the chain rule,
$$
    T_1 = n^{\frac{1}{2}-\alpha}(\mathbb{P}_n-\ex_P)[\partial_r \psi(Z,\eta_P + r\Delta\eta)\vert_{r=0} ]
    + n^{\frac{1}{2}-\alpha}\partial_r \ex_P[ \psi(Z,\eta_P + r\Delta\eta)]\vert_{r=0}
$$
provided we can interchange derivative and expectation.
The first term vanishes in probability, whereas the absolute value of the second term is either diverging to infinity or equal to zero for all $n$.
In conclusion, we can expect $T_1$ to vanish in probability if for all $\eta\in \mathcal{E}$:
\begin{equation}\label{eq:pseudoNeyman}
    \partial_r \ex_P\left[
        \psi(Z,\eta_P + r(\eta- \eta_P) 
    \right]\vert_{r=0}
    =0
\end{equation}
We can interpret the condition \eqref{eq:pseudoNeyman} as the bias of the plug-in estimator not being sensitive to small estimation errors of the nuisance parameter $\eta_0$.

The condition \eqref{eq:pseudoNeyman} is similar to a condition at the core of \emph{Double Machine Learning (DML)} \citep{chernozhukov2018}.
Indeed, we can explicitly translate \eqref{eq:estimandtoscore} into the DML framework by considering $(\tau_P,\eta_P)\coloneq (\tau(P),\eta_0(P))$ as solutions to the \textit{score equation}
\begin{equation}\label{eq:DMLscore}
    \ex_P[S(Z;\theta,\eta))] = 0,
    \qquad \text{where} \quad 
    S(Z;\theta,\eta) \coloneq \theta \cdot \varphi(Z,\eta) - \psi(Z,\eta).
\end{equation}
Definition 2.1 in \citet{chernozhukov2018} states that the score $S$ satisfies the \emph{Neyman orthogonality condition at $(\tau_P,\eta_P)$} if
$
 \partial_r \ex_P\left[
        S(Z;\tau_P, \eta_P + r(\eta-\eta_P)) \right]   
$
exists for all $\eta\in \mathcal{E}$ and $r\in[0,1)$, and if
\begin{equation}
    \partial_r \ex_P\left[
         S(Z;\tau_P, \eta_P + r(\eta-\eta_P)) 
    \right] \vert_{r=0} = 0.
\end{equation}
This is essentially equivalent to \eqref{eq:pseudoNeyman} and 
$
\partial_r \ex_P\left[
\varphi(Z,\eta_P + r(\eta- \eta_P) 
\right]\vert_{r=0}
=0$.
Orthogonality is paramount in order for an estimator $\widehat{\tau}$ of the form \eqref{eq:pluginestimator} to satisfy that
$\sqrt{n}(\widehat{\tau} - \tau_P) \xrightarrow{d} \mathrm{N}(0,\sigma_P^2)$ for some $\sigma_P^2 >0$. 
It is straightforward to verify that the expression in \eqref{eq:psiexpression2} corresponds to a score that satisfies the orthogonality condition, whereas the expression \eqref{eq:psiexpression1} does not, unless $g_P\equiv 0$. Heuristically, this explains why $\widehat{\psi}^{\text{double}}$, which is based on \eqref{eq:psiexpression2}, exhibits rate double robustness. 

Orthogonality alone is not sufficient to ensure asymptotic normality, and \cite{chernozhukov2018} provide additional conditions that ensure this; see their Assumptions~3.1 and 3.2. However, these conditions essentially entail a reformulation of the condition that $T_2\xrightarrow{P}0$, cf. Assumption 3.2(c). 
Obtaining a general bound for $T_2$ with minimal assumptions is a complex task,
so $T_2$ is typically analyzed case-by-case. When the nuisance parameter $\eta_P = (\eta_{P,1},\ldots,\eta_{P,q})$ consists of several nuisance functions, it is often possible to establish a bound in the form:
\begin{align}\label{eq:generalsecondorderbound}
    T_2 \leq O_P\Big(
        \sqrt{n}\sum_{(i,j)\in \mathcal{S}} \|\widehat{\eta}_i - \eta_{P,i}\| \cdot \|\widehat{\eta}_j - \eta_{P,j}\|
    \Big),
\end{align}
for a set $\mathcal{S}\subseteq [q]\times [q]$. 
This represents a general form of rate double robustness, which is met, for example, when $\|\widehat{\eta}_i - \eta_{P,i}\| = o_P(n^{-1/4})$ for each $i\in[q]$.
The bound \eqref{eq:producterrorbound}, for instance, is of this form.

Keeping the orthogonality condition -- or analogous conditions in other frameworks -- in mind when constructing an estimator is crucial. 
This approach is taken in \LCM{}, \ACM{}, and \DOPE{}, though the details of the latter are more intricate. 



\subsection{Estimation of nuisance parameters using machine learning}\label{sec:nuisance}
The estimation procedures discussed up to this point have relied on the ability to estimate the nuisance parameter, and the analysis suggested that an estimation error of order $n^{-1/4}$ would be sufficient. 
Often, the nuisance parameter consists of functions given by conditional expectations, as in Section~\ref{sec:motivatingexample} with $\eta(P)=(m_P,g_P)$.

To discuss estimation of such nuisance functions, consider a generic regression setting where $Z = (X,Y) \in \real^d \times \real$
and suppose that $f_P(\cdot) \coloneq \ex_P[Y\given X=\cdot]$ is the nuisance parameter. 
Assuming that $\ex_P[Y^2]<\infty$, recall that 
$$
    f_P \in \argmin_{f \in \mathcal{F}} \ex_P[(Y-f(X))^2]
$$
for any collection of measurable functions $\mathcal{F}$ that contains $f_P$.
This connects the estimation of the conditional mean function, $f_P$, with the minimization of the mean squared prediction error over a sufficiently rich function class. 

When $f_P$ is known to be linear, this naturally leads to ordinary least squares, and the resulting estimator for $f_P$ converges at a rate of $O_P(n^{-1/2})$ under a few additional conditions.  
Similarly, other classical regression techniques based on semiparametric models that represent $\mathcal{F}$ based on a finite-dimensional representation, such as GLMs, generally yield $\sqrt{n}$-consistent estimators of $f_P$ when the model is correctly specified.

However, contemporary datasets and regression competitions, such as those hosted by Kaggle \citep{bojer2021kaggle}, indicate that ML methods, e.g. random forests, gradient boosting, and neural nets, routinely and significantly outperform classical regression methods in terms of prediction accuracy. 
This suggests that the function classes corresponding to classical methods are not rich enough to capture good approximations of the truth.

Considering larger nonparametric function classes can come with two challenges: it can make the resulting optimization problem difficult and usually leads to overfitting if implemented naively. Remarkable advancements in machine learning have resulted in powerful and efficient tools for handling optimization tasks with few assumptions on the function class. To prevent overfitting, these methods often depend on advanced (implicit) regularization techniques, making them difficult to analyze theoretically.

In view of the above, it is often left as a general assumption that the nuisance parameter can be estimated with a sufficiently fast rate. 
This assumption can be justified based on (i) the empirical observation that machine learning estimators perform well in practice and (ii) by comparing with achievable (minimax) rates for reasonable function classes. 
To elaborate on (ii), suppose, for example, that $f_P$ is known to be differentiable up to order $\floor{s}$, with partial derivatives of order $\floor{s}$ that are $(s-\floor{s})$-Hölder continuous, for some $s>0$. 
In the absence of additional constraints on $f_P$, the optimal uniform estimation rate is $O_P(n^{-\frac{s}{2s+d}})$, which can be attained using local polynomials \citep{Gyorfi2002,Tsybakov2009}. The key insight is not necessarily to use local polynomials, but to understand that effective estimators of $f_P$ can attain an estimation rate, e.g., of order $n^{-1/4}$, when $f_P$ has a smoothness of order $s>d/2$.

 While the discussion above revolves around conditional mean estimation, other estimation tasks might also be of interest. 
 Estimation of the average partial effect \eqref{eq:ape} with DML requires density and derivative estimation \citep{klyne2023average}, and in both \LCM{} and \ACM{} we require estimation of conditional hazards. Such estimation methods are discussed within each manuscript, see for example Section~\ref{sec:estimationofnuisance} in \LCM{}.

\subsection{Controlling cross-terms by cross-fitting}\label{sec:crossterms}
In this section we elaborate the discussion of how to control \textit{cross-terms} such as $R_m$ and $R_g$ from the decomposition \eqref{eq:gcmdecomp} of $\sqrt{n}(\hat{\rho}-\rho(P))$. More generally, in Section~\ref{sec:orthogonality} we encountered the `first-order' term
$$
    T_1 = \sqrt{n}\mathbb{P}_n[
        \partial_r \psi(Z,\eta_P + r(\widehat{\eta}- \eta_P) )\vert_{r=0}].
$$
By simplifying the analysis and considering the deterministic sequence of nuisance estimates $\widehat{\eta} = \eta_P + n^{-\alpha} \Delta \eta$ for a fixed $\Delta\eta \in \mathcal{E}$, we derived that the orthogonality condition \eqref{eq:pseudoNeyman} was essential for achieving $T_1\xrightarrow{P}0$. 
We now discuss how to proceed when $\widehat{\eta}$ is a stochastic estimate, e.g., obtained by machine learning as described in Section~\ref{sec:nuisance}.

When $\widehat{\eta}$ is estimated independently of the sample in $\mathbb{P}_n$, the orthogonality condition \eqref{eq:pseudoNeyman} implies that $T_1$ has mean zero conditionally on $\widehat{\eta}$. 
In many concrete examples, we can then control the conditional variance of $T_1$ by the error in nuisance estimation, in which case the convergence $T_1\xrightarrow{P} 0$ follows from consistency of the nuisance estimator and Chebyshev's inequality. 
For example, this was the case for the estimator $\hat{\rho}^{\text{double}}$ in \eqref{eq:rhodouble}, where it is straightforward to show that 
$T_1 = R_m + R_g$. 

Obtaining an independent nuisance estimate can be achieved by initially splitting the sample into two \textit{folds}, and then using first part for nuisance estimation and the second part for plug-in estimation. 
This comes at the cost of reducing the effective sample size. 
However, it is possible to recover full-sample efficiency by swapping the roles of the two folds and then aggregating the resulting estimates. 
More generally, we can perform \emph{$K$-fold cross-fitting} for any integer $K\geq 2$ as follows:

\begin{itemize}
    \item[i)] Partition the observation indices into $K$ disjoint folds:
    $[n] = I_1 \cup \cdots \cup I_K$.

    \item[ii)] For each $k\in [K]$, use the 
    holdout data $(Z_i)_{i\in [n]\setminus I_k}$ to compute an ML estimate
    $\widehat{\eta}_k$.

    \item[iii)] Compute the plug-in estimates on each fold and aggregate the estimates by
    \begin{equation}\label{eq:DML}
        \check{\tau}  \coloneq \frac{1}{K}\sum_{k=1}^K \underbrace{\frac{\sum_{i\in I_k} \psi(Z_i, \widehat{\eta}_k)}{
        \sum_{i\in I_k} \varphi(Z_i, \widehat{\eta}_k)}}_{\widehat{\tau}_k},
        \qquad \text{or} \qquad
        \tilde{\tau} \coloneq 
        \frac{
        \frac{1}{K}\sum_{k=1}^K \sum_{i\in I_k} \psi(Z_i, \widehat{\eta}_k)}{
        \frac{1}{K}\sum_{k=1}^K \sum_{i\in I_k} \varphi(Z_i, \widehat{\eta}_k)}.
    \end{equation}
\end{itemize}
The estimators $\check{\tau}$ and $\tilde{\tau}$ are known as the \textsc{dml}1 and \textsc{dml}2 estimators, respectively (see Definitions 3.1 and 3.2 in \citet{chernozhukov2018}). The estimators are asymptotically equivalent (under conditions) and they are equal when $\varphi$ is a constant, which was the case in~\eqref{eq:psiexpression2}. In small samples, however, \textsc{dml}2 may perform better due to the increased numerical stability achieved by pooling the estimates before division.

The \textsc{dml}1 estimator is considered in all the manuscripts in this thesis; therefore, we present a simple result for its asymptotics.
\begin{lem}[Asymptotics of \textsc{dml}1 estimator]\label{lem:DML1asymp}
    Let $\check{\tau}$ and $(\widehat{\tau}_k)_{k\in[K]}$ be the estimators defined in \eqref{eq:DML} and let $\sigma_P^2>0$.
    Suppose that each $k\in [K]$,
    \begin{equation*}
        \lim_{n\to \infty}n/|I_k| = K
        \qquad \text{and} \qquad
        \sqrt{|I_k|} \cdot (\widehat{\tau}_k - \tau_P)= U_k + R_k,
    \end{equation*}
    where $U_k$ and $R_k$ are variables satisfying that
    $U_k \xrightarrow{d} \mathrm{N}(0,\sigma_P^2)$
    and
    $R_k \xrightarrow{P} 0$ as $n\to \infty$.   
    Assume also the joint independence statement 
    $
        U_1 \ind \cdots \ind U_K
    $.
    Then it holds that 
    $$
        \sqrt{n}(\check{\tau} - \tau_P) \xrightarrow{d} \mathrm{N}(0,\sigma_P^2)
    $$ 
    as $n\to \infty$.
\end{lem}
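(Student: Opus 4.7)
The plan is to reduce the limit of $\sqrt{n}(\check{\tau}-\tau_P)$ to a scaled sum of the oracle terms $U_k$ via Slutsky's theorem, and then invoke joint convergence of independent marginally Gaussian sequences.

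First, I would rewrite
\begin{equation*}
    \sqrt{n}(\check{\tau} - \tau_P)
    = \frac{1}{K}\sum_{k=1}^K \sqrt{n}\,(\widehat{\tau}_k - \tau_P)
    = \frac{1}{K}\sum_{k=1}^K \sqrt{n/|I_k|}\,(U_k + R_k),
\end{equation*}
using the given expansion $\sqrt{|I_k|}(\widehat{\tau}_k-\tau_P)=U_k+R_k$. Since $n/|I_k|\to K$, the scalar factor $\sqrt{n/|I_k|}$ converges to $\sqrt{K}$, and the remainder $\sqrt{n/|I_k|}\,R_k$ is $o_P(1)$ because $R_k\xrightarrow{P}0$ and the factor is bounded. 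Applying Slutsky's theorem term by term, it then suffices to show that
\begin{equation*}
    \frac{1}{\sqrt{K}}\sum_{k=1}^K U_k \xrightarrow{d} \mathrm{N}(0,\sigma_P^2).
\end{equation*}

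The key step is upgrading the marginal convergence $U_k\xrightarrow{d}\mathrm{N}(0,\sigma_P^2)$ to \emph{joint} convergence of $(U_1,\ldots,U_K)$ to a vector of i.i.d.\ $\mathrm{N}(0,\sigma_P^2)$ variables. I would argue this directly: for any bounded continuous $f_1,\ldots,f_K\colon\real\to\real$, the independence assumption gives $\ex\big[\prod_{k=1}^K f_k(U_k)\big]=\prod_{k=1}^K \ex[f_k(U_k)]$, and the right-hand side converges, by marginal weak convergence, to $\prod_{k=1}^K \ex[f_k(Z_k)]$, where $Z_1,\ldots,Z_K$ are i.i.d.\ $\mathrm{N}(0,\sigma_P^2)$. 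Since functions of product form are convergence-determining for product measures on $\real^K$, we obtain the claimed joint convergence. The continuous mapping theorem applied to the linear functional $(u_1,\ldots,u_K)\mapsto K^{-1/2}\sum_k u_k$ then yields $K^{-1/2}\sum_k U_k \xrightarrow{d} \mathrm{N}(0,\sigma_P^2)$, since the sum of $K$ independent $\mathrm{N}(0,\sigma_P^2)$ variables divided by $\sqrt{K}$ is again $\mathrm{N}(0,\sigma_P^2)$.

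The main obstacle is the joint-convergence step: marginal weak convergence does not in general imply joint convergence, so the independence assumption must be explicitly invoked. Once that step is executed cleanly via products of bounded continuous test functions (or equivalently via characteristic functions, since $\phi_{(U_1,\ldots,U_K)}(t_1,\ldots,t_K)=\prod_k \phi_{U_k}(t_k)$ converges pointwise to the product of Gaussian characteristic functions), the remainder of the argument is a routine Slutsky-plus-continuous-mapping assembly.
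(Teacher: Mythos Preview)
Your proposal is correct and follows essentially the same approach as the paper: rewrite $\sqrt{n}(\check{\tau}-\tau_P)$ as a scaled sum of the $U_k+R_k$, dispose of the remainders and the deterministic factors $\sqrt{n/|I_k|}\to\sqrt{K}$ via Slutsky, and then use independence to upgrade marginal convergence of the $U_k$ to joint convergence before applying continuous mapping. The paper cites \citet[Theorem 2.8(ii)]{billingsley2013convergence} for the joint-convergence step where you instead spell out the product-of-test-functions (or characteristic function) argument, but the content is the same.
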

\begin{proof}
 The conditions on $I_k, U_k$, and $R_k$ imply that
 \begin{equation*}
    \tilde{U}_k\coloneq 
     \frac{\sqrt{n}}{\sqrt{K|I_k|}} U_k \xrightarrow{d}
     \mathrm{N}(0,\sigma_P^2)
     \qquad \text{and} \qquad
     \tilde{R}_k \coloneq 
     \frac{\sqrt{n}}{\sqrt{K|I_k|}} R_k \xrightarrow{P}
     0.
\end{equation*}
By joint independence, it follows that $(\tilde{U}_1,\ldots, \tilde{U}_k) \xrightarrow{d} \mathrm{N}(0,\sigma_P^2)^{\otimes K}$ \citep[Theorem 2.8(ii)]{billingsley2013convergence}. By continuous mapping and the convolution property of the Gaussian distribution, we conclude that
\begin{align*}
    \sqrt{n}(\check{\tau} - \tau_P) 
    = \frac{1}{\sqrt{K}} \sum_{k=1}^K (\tilde{U}_k + \tilde{R}_k)
    \xrightarrow{d} \mathrm{N}(0,\sigma_P^2)
\end{align*}
as $n\to \infty$.
\end{proof}
In practice, we can apply the lemma with oracle terms $(U_k)$ corresponding to using the oracle nuisance estimate $\eta_P$. In this case, it simplifies the analysis of the \textsc{dml}1 estimator to that of a single sample split estimator $\widehat{\tau}_k$.

Note that the \textsc{dml}1 estimator indeed recovers full-sample efficiency, as it is scaled by $\sqrt{n}$ rather than $\sqrt{|I_k|}$.
In particular, the asymptotic distribution is independent of the number of folds $K$. 
The choice of $K$ will have a practical impact in smaller samples, as it will affect the remainder terms $(R_k)_{k\in [K]}$, which are controlled by the nuisance estimation error. Larger values of $K$ will allocate more data to the estimation of the nuisance parameter and are thus likely to result in smaller error terms. 
However, this also comes at the cost of higher computational demand since the nuisance estimator has to be fitted $K$ times. Choosing $K=4$ or $K=5$ might be reasonable in practice, cf. Remark 3.1. in \citet{chernozhukov2018}, which was also consistent with the findings in the simulation studies conducted for the work in this thesis.

\subsection{Alternatives to cross-fitting}
In the preceding discussion, we have covered how cross-fitting can be used to control the cross-term. This raises the question: 
\begin{equation}\label{eq:JPquestion}
    \textit{Is cross-fitting necessary?}
\end{equation}
There is no simple or general answer to this question, and it is instead typically discussed on a case-specific basis. In \LCM{} and \ACM{}, we found that cross-fitting was necessary, and for \DOPE{} we found that cross-fitting was neither necessary nor significantly detrimental.

To understand how a cross-term might be controlled without cross-fitting, consider again the term $R_m$ from \eqref{eq:gcmdecomp}, but now as a map 
\begin{align*}
    R_m(\bar{m})
    \coloneq \frac{1}{\sqrt{n}} \sum_{i=1}^n (m_P(W_i)-\bar{m}(W_i))(Y_i-g_P(W_i))
\end{align*}
which takes as input any measurable function $\bar{m}\colon \real^{d-1}\to \real$. Assuming for simplicity that $|Y|\leq C$ for some constant $C>0$, the CLT implies that for each $\bar{m}\in \mathcal{L}^2(P_W)$,
\begin{equation}\label{eq:Rmpointwiseasymp}
    R_m(\bar{m})\xrightarrow{d} \mathrm{N}(0,\nu_P(\bar{m}))
\end{equation}
as $n\to \infty$, where $\nu_P(\bar{m}) \coloneq \ex_P[(m_P(W)-\bar{m}(W))^2)(Y-g_P(W))^2]$.

The question \eqref{eq:JPquestion} is now whether $R_m(\widehat{m})\xrightarrow{P} 0$ can be established from a consistency condition of $\widehat{m}$, e.g., $\nu_P(\widehat{m}) \xrightarrow{P}0$.
Unfortunately, this is not generally true, but it can be concluded under additional assumptions on $\widehat{m}$. \\ \par

\noindent \textbf{Donsker class conditions:} Let $\mathcal{F}_m \subset \mathcal{L}^2(P_W)$ denote a function space such that $\widehat{m}\in \mathcal{F}_m$ almost surely. 
A sufficient condition for the implication 
\begin{equation}\label{eq:Donskerimplication}
    \nu_P(\widehat{m}) \xrightarrow{P} 0 \quad \implies \quad 
        R_m(\widehat{m})\xrightarrow{P} 0
\end{equation}
to hold is that the collection of random variables
\begin{align*}
    \{(m_P(W)-\bar{m}(W))(Y-g_P(W)) \colon \bar{m} \in \mathcal{F}_m\}
\end{align*}
is a Donsker class. See, for example, Lemma~19.24 in \citet{van2000asymptotic}. 
Informally, this Donsker class condition means that the weak convergence \eqref{eq:Rmpointwiseasymp} holds uniformly over $\bar{m}\in \mathcal{F}_m$.
There are numerous ways to establish the Donsker condition in itself, which 
typically involve technical bounds on the `size' of the function class $\mathcal{F}_m$. 
However, these conditions are unsuitable for high-dimensional settings \citep{chernozhukov2018} or when the nuisance estimates belong to a complex function class. \\ \par

\noindent \textbf{Algorithmic stability:}
More recently, \citet{chen2022debiased} have shown that cross-fitting can be avoided when the algorithm for producing the nuisance estimates is \textit{stable}. 
Informally, this means that the learned function does change significantly when the training data is perturbed, and there are several different definitions that formalize this idea.

For the cross-term $R_m$, Proposition S15 in \citet{lundborg2022projected} describes algorithmic stability conditions that guarantee $R_m(\widehat{m})\xrightarrow{d} 0$ without the need for cross-fitting. 
It is possible to verify that certain algorithms are stable, see, for example, \citet{hardt2016train,bousquet2002stability}, and recent work has shown that \textit{bagging} can increase the stability of any algorithm \citep{soloff2024bagging,soloff2024stability}.
These works, however, are not all working under the same notion of stability, 
and it is therefore unclear to what extent bagging can be used as a replacement for cross-fitting. \\ \par

\noindent Donsker class conditions and algorithmic stability have not been investigated in depth in this thesis. However, the analysis of cross-terms plays a central role in the asymptotic theory, and these terms are controlled by cross-fitting or the aforementioned alternative conditions. 
Since the answer to question \eqref{eq:JPquestion} appears to be yes in both \LCM{} and \ACM{}, these alternative conditions must somehow be violated. A deeper understanding of why these conditions are violated could provide valuable insights into our methods.

\chapter{Nonparametric conditional local independence testing}\label{chapter:LCM}

\if\includepapers1
    \setcounter{section}{0}
    \setcounter{supsection}{0}
    \setHyperefPrefix{ch2}
    \renewcommand\thesubsection{\thesection.\arabic{subsection}}
    \renewcommand\theequation{\thesection.\arabic{equation}}
    \renewcommand\thefigure{\thesection.\arabic{figure}}
    \renewcommand\thetable{\thesection.\arabic{table}}
    \renewcommand\thelem{\thesection.\arabic{lem}}
    \renewcommand\theprop{\thesection.\arabic{prop}}
    \renewcommand\thethm{\thesection.\arabic{thm}}
    \renewcommand\thecor{\thesection.\arabic{cor}}
    \renewcommand\thedfn{\thesection.\arabic{dfn}}
    \begin{authorlist}
    \textsc{Alexander Mangulad Christgau, Lasse Petersen and Niels Richard Hansen}
    \end{authorlist}    
    
\begin{abstract}
        Conditional local independence is an asymmetric independence relation among continuous time stochastic processes. It describes whether the evolution of one process is directly influenced by another process given the histories of additional processes, and it is important for the description and learning of causal
        relations among processes. We develop a model-free framework for 
        testing the hypothesis that a counting process is conditionally locally independent 
        of another process. To this end, we introduce a new functional parameter called the Local Covariance Measure (LCM), which quantifies deviations from the hypothesis. 
        Following the principles of double machine learning, we propose an estimator 
        of the LCM and a test of the hypothesis using nonparametric estimators and sample splitting or cross-fitting.
        We call this test the (cross-fitted) Local Covariance Test ((X)-LCT), and we show that its level and power can be controlled uniformly, provided that the nonparametric estimators are consistent with modest rates. We illustrate the 
        theory by an example based on a marginalized Cox model with
        time-dependent covariates, and we show in simulations that when double 
        machine learning is used in combination with cross-fitting, then 
        the test works well without restrictive parametric assumptions.
\end{abstract}

\section{Introduction}
Notions of how one variable influences a target variable are central to 
both predictive and causal modeling. Depending on the objective, the relevant 
notion of influence can be variable importance in a predictive model of the target, 
but it can also be the causal effect of the variable on the target. 
In either case, we can investigate influence conditionally on a third 
variable -- to quantify the added predictive value, the direct 
causal effect or the causal effect adjusted for a confounder. 
Our interests are in an asymmetric notion of direct influence among stochastic 
processes, which is not adequately captured by classical (symmetric)
notions of conditional dependence. The objective of this paper is 
therefore to quantify this notion of asymmetric influence 
and specifically to develop a new nonparametric test of the hypothesis
that one stochastic process does not directly influence another.  

The hypothesis we consider is that of \emph{local independence} -- a concept 
introduced by \cite{schweder1970} for Markov processes as a continuous time
formalization of the phenomenon that the past of one stochastic process does not 
directly influence the evolution of another stochastic process. Generalizations 
to other continuous time processes were given by \cite{aalen1987} and studied by 
\cite{commenges2009}, who systematically used the term \emph{conditional} local 
independence for the general concept. We will in this paper follow that convention 
whenever we want to emphasize the conditional nature of the local independence. 
We note that (conditional) local independence is a continuous time version of 
the discrete time concept of Granger non-causality \citep{Granger:1969, aalen1987}. 

To illustrate the concept of conditional local independence we will in 
this introduction consider an example involving three processes: $X$, $Z$ and $N$ -- 
see Figure \ref{fig:lig0}. 
The process $N$ is the indicator of death, $N_t = \one(T \leq t)$, for an 
individual with survival time $T$, and $X_t$ denotes the total pension 
savings of the individual at time $t$. The process $Z$ is a covariate 
process, e.g., health variables or employment status, that may directly affect 
both the pension savings and the survival time. This is indicated in Figure \ref{fig:lig0}
by edges pointing from $Z$ to $X$ and $N$. Edges pointing from 
$N$ to $X$ and $Z$ indicate that a death event directly affects both $X$ and $Z$ (which 
take the values $X_T$ and $Z_T$, respectively, after time $T$, see Section \ref{sec:introex}).

To define conditional local independence let 
$\mathcal{F}^{N,Z}_t = \sigma(N_s, Z_s; s \leq t)$ denote the filtration generated by 
the $N$- and $Z$-processes. The $\sigma$-algebra $\mathcal{F}_t^{N,Z}$ represents 
the information contained in the $N$- and the $Z$- processes before
time $t$. Informally, the process $N_t$ is conditionally locally independent of
the process $X_t$ given $\mathcal{F}_t^{N,Z}$ if $(X_s)_{s \leq t}$ does not add 
predictable information to $\mathcal{F}_{t-}^{N,Z}$ about the infinitesimal evolution of $N_t$.
For this particular example this means that the conditional hazard function 
of $T$ does not depend on $(X_s)_{s \leq t}$ given $\mathcal{F}_t^{N,Z}$.
In Figure \ref{fig:lig0} the hypothesis of interest, that $N_t$ is conditionally 
locally independent of $X_t$ given $\mathcal{F}^{N,Z}_t$, is represented by 
the lack of an edge from $X$ to $N$.

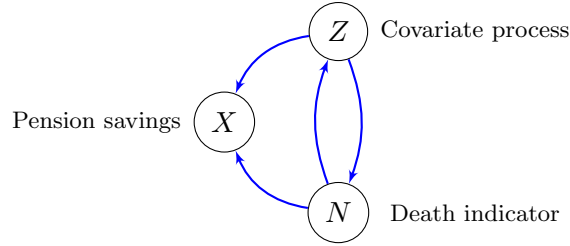
\begin{figure}
\centering
\begin{tikzpicture}[scale=0.6]
\tikzset{vertex/.style = {shape=circle,draw,minimum size=2em}}
\tikzset{vertexHid/.style = {shape=rectangle,draw,minimum size=2em}}
\tikzset{edge/.style = {->,> = latex', thick}}
\node[vertex] (Z) at  (1.5,2) {$Z$};
\node at (4.5,2) {\footnotesize{Covariate process}};
\node[vertex] (X) at  (-1,0) {$X$};
\node at (-3.8,0) {\footnotesize{Pension savings}};
\node[vertex] (T) at  (1.5,-2) {$N$};
\node at (4.5,-2) {\footnotesize{Death indicator}};

\draw[edge, bend left=20, color = blue] (Z) to (T);
\draw[edge, bend left, color = blue] (T) to (X);
\draw[edge, bend left=20, color = blue] (T) to (Z);
\draw[edge, bend right, color = blue] (Z) to (X);
\end{tikzpicture} 

\caption{Local independence graph illustrating a dependence 
structure among the three processes $X$, $Z$ and $N$. Here 
$N$ is the indicator of death for an 
individual, $X$ is their cumulative pension savings and $Z$ is a covariate process.
All nodes in this graph have implicit self-loops.
There is no edge from $X$ to $N$, which indicates that death is not directly influenced by 
pension savings. This can be formalized as $N$ being conditionally locally independent 
of $X$, which is the hypothesis we aim to test.
\label{fig:lig0}}
\end{figure}

A systematic investigation of algebraic properties of 
conditional local independence was initiated by 
\cite{didelez2006, Didelez2008, didelez2015}. She also introduced local independence 
graphs, such as the directed graph in Figure \ref{fig:lig0}, 
to graphically represent all conditional local independencies 
among several processes, and she studied the semantics of these graphs. 
This work was extended further by \cite{Mogensen:2020} to 
graphical representations of partially observed systems.
While we will not formally discuss local independence graphs, the problem 
of learning such graphs from data was an important motivation for us to 
develop a nonparametric test of conditional local
independence. A
constraint based learning algorithm of local independence graphs was given by
\cite{Mogensen:2018} in terms of a conditional local independence oracle,
but a practical algorithm requires that the oracle is replaced by 
conditional local independence tests.  

Another important motivation for considering conditional local independence 
arises from causal models.  With a
structural assumption about the stochastic process specification, a conditional
local independence has a causal interpretation \citep{aalen1987, aalen2012,
commenges2009}, and if the causal stochastic system is completely observed, a
test of conditional local independence is a test of no \emph{direct} causal
effect.
See also \citet{roysland2022graphical}, who use local independence graphs to formulate criteria for identification of causal effects in continuous-time survival models.
If the causal stochastic system is only partially observed, a
conditional local dependency need not correspond to a direct causal effect due
to unobserved confounding, but the projected local independence graph, as
introduced by \cite{Mogensen:2020}, retains a causal interpretation, and its
Markov equivalence class can be learned by conditional local independence
testing. In addition, within the framework of structural nested models, testing
the hypothesis of no \emph{total} causal effect can also be cast as a test of
conditional local independence \citep[Thm. 9.2]{lok2008}. 

To appreciate what conditional local independence means -- and, in particular, 
what it does not mean -- it is useful to compare with ordinary conditional independence. 
In our example, $N_t$ is conditionally locally independent of $X_t$ given 
$\mathcal{F}^{N,Z}_t$, but this implies neither that
\mbox{$N \ind X \mid Z$} (as processes), nor that 
$N_t \ind X_t \mid \mathcal{F}_t^Z$. In fact, 
these conditional independencies cannot hold in this example where 
$X_t = X_T$ for $t \geq T$ -- except in special cases such as $T$ being 
a deterministic function of $Z$. Theorem 2 in \cite{Didelez2008}
gives a sufficient condition for $N_t \ind X_t \mid \mathcal{F}_t^Z$
to hold in terms of the local independence graph, but this condition 
is also not fulfilled by the graph in Figure \ref{fig:lig0} due to the edge from 
$N$ to $X$. Moreover, conditional local independence is in general also different
from the baseline conditional independence $T \ind X_0 \mid Z_0$ unless 
both $X$ and $Z$ are time-independent, see Section \ref{sec:copula}. 
In Section~\ref{sec:survmodels} in the supplementary material \citep{christgau2023supplement}, we elaborate further upon the connection to semiparametric survival models. 
\cite{Didelez2008} argues that $N_t$ being conditionally locally independent of $X_t$ given 
$\mathcal{F}^{N,Z}_t$ heuristically means that $N_t \ind 
\mathcal{F}^X_{t-} \mid \mathcal{F}_{t-}^{N,Z}$, but this is technically problematic 
in continuous time. If $T$ has a continuous distribution, then for any fixed $t$, 
$N_t = N_{t-}$ almost surely, whence $N_t$ is 
almost surely $\mathcal{F}_{t-}^{N,Z}$-measurable and conditionally independent of 
anything given $\mathcal{F}_{t-}^{N,Z}$. This heuristic can thus not be used
to formally define conditional local independence in continuous time. 
See instead the formal Definition 2 by \cite{Didelez2008} or our Definition 
\ref{dfn:cli}.

Several examples from health sciences given by \cite{Didelez2008} demonstrate  
the usefulness of conditional local independence for multivariate event systems, and more recent 
attention to event systems in the machine 
learning community \citep{Zhou:2013, Xu:2016, Achab:2017, Bacry:2018, Cai:2022} 
testifies to the relevance of conditional local independence. This line of research 
relies primarily on the linear Hawkes process model, which is 
effectively used to infer local independence graphs  
-- sometimes even interpreted causally. 
The Hawkes model is attractive because conditional local independencies can be 
inferred from corresponding kernel functions being zero -- and statistical tests can readily be based on parametric or nonparametric estimation of kernels. 

A less attractive property of the Hawkes model is that it is not closed under marginalization.
As with any model based statistical test, the validity of the test is jeopardized by
model misspecification, hence even within a subsystem of a linear Hawkes process, 
a test of conditional local independence based on a Hawkes model may be invalid. 

The challenges resulting from model misspecification and marginalization 
is investigated further in Sections \ref{sec:introex} and \ref{sec:simulations}
based on an extension of our introductory 
example and Cox's survival model. Both the Hawkes model and the Cox model 
illustrate that conditional local independence might be expressed 
and tested within a (semi-)parametric model, but model misspecification
makes us question the validity of any such model based test. Thus there is a need for a nonparametric test of 
the hypothesis of conditional local independence. Moreover, since we cannot 
translate the hypothesis into an equivalent hypothesis about 
classical conditional independence, we cannot directly use existing 
nonparametric tests, such as the GCM \citep{shah2020hardness} or 
the GHCM \citep{lundborg2021conditional}, of conditional independence. 

We propose a new nonparametric test when the target process $N$ is a counting process 
and $X$ is a real valued process, and where the hypothesis is that $N$ is 
conditionally locally independent of $X$ given a filtration $\mathcal{F}_t$. 
In the context of the introductory example, $\mathcal{F}_t = \mathcal{F}_t^{N,Z}$. We 
consider a counting process target primarily because the theory of conditional 
local independence is most complete in this case, but generalizations are possible -- we 
refer to the discussion in Section~\ref{sec:dis}. Within our framework we base our
test on an infinite dimensional parameter, which we call the 
Local Covariance Measure (LCM). It is a function of time, 
which is constantly equal to zero under the hypothesis. 
Our main result is that the LCM can be estimated by using the ideas of double
machine learning \citep{chernozhukov2018} in such a way that the estimator 
converges uniformly at a $\sqrt{n}$-rate to a mean zero Gaussian martingale under the hypothesis of conditional local independence. We use the LCM to develop the (cross-fitted)
Local Covariance Test ((X)-LCT), for which we derive uniform level and power 
results.

\subsection{Organization of the paper}
In Section \ref{sec:setup} we introduce the general framework for formulating the hypothesis of
conditional local independence. This includes the introduction in 
Section \ref{sec:setup-hyp} of an abstract residual
process, which is used to define the LCM as a functional target parameter indexed by time. The LCM equals the zero-function under the hypothesis of conditional 
local independence, and to test this hypothesis we 
introduce an estimator of the LCM in Section \ref{sec:statistic}. The estimator is a stochastic process,
and we describe how sample splitting is to be used for its computation via the
estimation of two unknown components. 

In Section \ref{sec:interpretations} we give interpretations of the LCM and its estimator. 
We show that the LCM estimator is a Neyman
orthogonalized score statistic in Section \ref{sec:neyman}, and in Section \ref{sec:copula} we relate LCM to the partial copula when $X$ is time-independent.

In Section \ref{sec:asymptotics} we state the main results of the paper. We
establish in Section \ref{sec:LCMasymptotics} that the LCM estimator generally approximates the LCM with an error of order $n^{-1/2}$. Under the hypothesis of conditional local independence, we show that the (scaled) LCM estimator converges weakly to a mean zero Gaussian martingale. The estimator requires a model of the
target process $N$ as well as the process $X$ conditionally on
$\mathcal{F}_t$ to achieve the orthogonalization at the core of double machine
learning. The model of $X$ is in this paper expressed indirectly in terms of the 
residual process, and we show that if we can learn the residual process at 
rate $g(n)$ and the model of $N$ at rate $h(n)$ such that $g(n), h(n) \to 0$ and $\sqrt{n} g(n) h(n) \to 0$ for $n \to \infty$ then we achieve a $\sqrt{n}$-rate convergence 
of the LCM estimator. We also show that the variance function of the Gaussian 
martingale can be estimated consistently, and we give a general result
on the asymptotic distribution of univariate test statistics based on the 
LCM estimator. All asymptotic results are presented in the framework of 
uniform stochastic convergence.

Section \ref{sec:lct} gives explicit examples of univariate test statistics,
including the Local Covariance Test based on the normalized supremum of 
the LCM estimator. Its asymptotic distribution is derived and we present 
results on uniform asymptotic level and power. In 
Section \ref{sec:cf} we present the generalization from the sample split estimator to the cross-fit estimator. Though this estimator and the corresponding cross-fit Local Covariance Test (X-LCT) are a bit more involved to compute and analyze, X-LCT 
is more powerful and thus our recommended test for practical usage. 

The survival example from the introduction is used and elaborated upon throughout the paper. We introduce a Cox model in terms of the 
time-varying covariate processes, and we report in Section \ref{sec:simulations} the results from a simulation study based on this model.

The paper is concluded by a discussion in Section \ref{sec:dis}.

The supplementary material \citep{christgau2023supplement}, henceforth referred to as the supplement, 
consists of Sections~\ref{sec:proofs} through \ref{sec:extrafigs} and contains:
proofs of results in this paper \eqref{sec:proofs}; 
definitions and results on uniform asymptotics \eqref{app:UniformAsymptotics};  
a uniform version of Rebolledo's martingale CLT \eqref{sec:fclt}; 
an overview of achievable rate results for estimation of nuisance parameters that enter into the LCM estimator \eqref{sec:estimationofnuisance}; 
a comparison with semiparametric survival models \eqref{sec:survmodels}; 
details on Neyman orthogonality \eqref{sec:orthodetails};
and additional results from the simulation study \eqref{sec:extrafigs}.

\section{The Local Covariance Measure} \label{sec:setup}
In this section we present the general framework of the paper,
we define conditional local independence and we introduce the Local Covariance 
Measure as a means to quantify deviations from conditional local independence. 
In Section~\ref{sec:statistic} we outline how the Local Covariance Measure 
can be estimated using double machine learning and sample splitting. 
We illustrate the central concepts and methods by an 
example based on Cox's survival model with time-varying covariates. 

We consider a counting process $N = (N_t)$ and another 
real value process $X = (X_t)$, 
both defined on the probability space $(\Omega, \mathbb{F}, \mathbb{P})$. All 
processes are assumed to be defined on a common compact time interval. 
We assume, without loss of generality, that the time interval is $[0,1]$. We will 
assume that $N$ is adapted w.r.t. a right continuous and complete filtration 
$\mathcal{F}_t$, and we denote by $\mathcal{G}_t$ the right continuous 
and complete filtration generated by $\mathcal{F}_t$ and $X_t$. We  
assume throughout that $X$ is \lc{} (that is, has sample paths that are continuous from the left
and with limits from the right), which will ensure bounded sample paths and 
that the process is $\mathcal{G}_t$-predictable.

In the survival example of the introduction, $N_t = \one(T \leq t)$ is the 
indicator of whether death has happened by time $t$, and there can only be one 
event per individual observed. Furthermore, $\mathcal{F}_t = \mathcal{F}_t^{N, Z}$
and $\mathcal{G}_t = \mathcal{F}_t^{N, X, Z}$. 
Our general setup works for any counting process, thus it allows for recurrent 
events and censoring, and the filtration $\mathcal{F}_t$ can 
contain the histories of any number of processes in addition to the history of 
$N$ itself.

\subsection{The hypothesis of conditional local independence} \label{sec:setup-hyp}

The counting process $N$ is assumed to have an $\mathcal{F}_t$-intensity 
$\lambda_t$, that is, $\lambda_t$ is $\mathcal{F}_t$-predictable and with 
$$\Lambda_t = \int_0^t \lambda_s \mathrm{d}s$$
being the compensator of $N$,  
\begin{equation} \label{eq:basicmg}
M_t = N_t - \Lambda_t
\end{equation}
is a local $\mathcal{F}_t$-martingale. Within this framework we can define the 
hypothesis of conditional local independence precisely.

\begin{definition}[Conditional local independence] \label{dfn:cli}
We say that $N_t$ is conditionally 
locally independent of $X_t$ given $\mathcal{F}_t$ if 
the local $\mathcal{F}_t$-martingale $M_t$ defined by \eqref{eq:basicmg} is also a 
local $\mathcal{G}_t$-martingale.
\end{definition}

For simplicity, we may also refer to this hypothesis as \emph{local independence}
and write 
\begin{equation} \label{eq:H0}
H_0: M_t = N_t - \Lambda_t \text{ is a local } \mathcal{G}_t \text{-martingale}.
\end{equation}
As argued in the introduction, the hypothesis of local
independence is the hypothesis that observing $X$ on $[0,t]$ does not add any
information to $\mathcal{F}_{t-}$ about whether an $N$-event will happen in an
infinitesimal  time interval $[t, t + \mathrm{d}t)$. Definition \ref{dfn:cli} 
captures this interpretation by requiring that the $\mathcal{F}_t$-compensator, 
$\Lambda$, of $N$ is also the $\mathcal{G}_t$-compensator. Thus, $\lambda$ is also the $\mathcal{G}_t$-intensity under $H_0$. 

If $N$ has $\mathcal{G}_t$-intensity $\blambda$, the innovation theorem, Theorem II.T14 
in \cite{Bremaud:1981}, gives that the predictable projection $\lambda_t = \ex(\blambda_t \mid \mathcal{F}_{t-})$ is the (predictable) $\mathcal{F}_t$-intensity. 
Local independence follows if $\blambda$ is $\mathcal{F}_t$-predictable.  
Intensities are, however, only unique almost surely, and we can have local independence 
even if $\blambda$ is not \emph{a priori} $\mathcal{F}_t$-predictable but have 
an $\mathcal{F}_t$-predictable version. When 
$N$ has $\mathcal{G}_t$-intensity $\blambda$, 
$H_0$ is thus equivalent to $\blambda$ having an $\mathcal{F}_t$-predictable version.
We find Definition \ref{dfn:cli} preferable because it directly 
gives an operational criterion for determining whether $N$ has an 
$\mathcal{F}_t$-predictable version of a $\mathcal{G}_t$-intensity.

    \begin{remark}[Censoring]\label{rem:censoring}
        Suppose that the data is censored such that $(N_t,X_t,\mathcal{F}_t) 
        = (N_{t\wedge C}^*,X_{t\wedge C}^*,\mathcal{F}_{t\wedge C}^*)$,
        where $(N^*,X^*,\mathcal{F}^*)$ are uncensored data and where $C$ is the censoring time. 
        The hypothesis regarding the uncensored data,
        \begin{center}
            $H_0^*$:  $N_t^*$ is locally independent of $X_t^*$ given $\mathcal{F}_t^*$,
        \end{center} 
         might then be the hypothesis of interest.
        If $\one(C \geq t)$ happens to be $\mathcal{F}_t^*$-predictable, it is straightforward to show that $H_0^*$ implies $H_0$, and consequently a test of $H_0$ is also a test of $H_0^*$. However, $\mathcal{F}_t^*$ may not \emph{a priori} contain information about the censoring process. Suppose instead that the common condition of \textit{independent censoring} \citep{AndersenBorganGillKeiding:1993} holds, which is equivalent to $N_t^*$ being locally independent of $\mathcal{C}_t \coloneqq \one(C\leq t)$ given $\mathcal{G}_t^*$ \citep{roysland2022graphical}. Then $H_0^*$ implies that $N_t$ is locally independent of $X_t$ given $\mathcal{F}_t \vee \mathcal{F}_t^{\mathcal{C}}$. Thus, in order to test $H_0^*$, we 
        replace $\mathcal{F}_t$ by the enlarged filtration $\mathcal{F}_t \vee \mathcal{F}_t^{\mathcal{C}}$ and proceed \textit{mutatis mutandis} with testing $H_0$
        using the observed data.
    \end{remark}

Since $X$ is assumed \lc, and thus especially $\mathcal{G}_t$-predictable, the stochastic
integral 
\begin{equation} \label{eq:stochint}
\int_0^t X_{s} \mathrm{d}{M}_s,
\end{equation}
is under $H_0$ a local $\mathcal{G}_t$-martingale. A test could be based on 
detecting whether \eqref{eq:stochint} is, indeed, a local martingale.
We will take a slightly different approach where 
we replace the integrand $X$ by a residual process as defined below.
We do so for two reasons. First, to achieve a $\sqrt{n}$-rate via double 
machine learning we need the integrand to fulfill \eqref{eq:residualorthogonal} below. 
Second, other choices of integrands than $X$ could potentially lead to 
more powerful tests. 
\begin{definition}[Residual Process] \label{dfn:rescond}
A residual process $G = (G_t)_{t\in[0,1]}$ of $X_t$ given $\mathcal{F}_t$ is a \lc{} 
stochastic process that is $\mathcal{G}_t$-adapted and satisfies 
\begin{align}\label{eq:residualorthogonal}
    \ex(G_t \mid \mathcal{F}_{t-}) = 0
\end{align}
for $t\in [0,1]$.
\end{definition}
The geometric interpretation is that the residual process evolves such that $G_t$ is orthogonal to $L_2(\mathcal{F}_{t-})$ within $L_2(\mathcal{G}_{t-})$ at each time $t$. One 
obvious residual process is the \emph{additive residual process} given by
\begin{align*}
    G_t =  X_t - \Pi_t = X_t - \ex(X_{t} \mid \mathcal{F}_{t-}),
\end{align*}
where $\Pi_t = \ex(X_t \mid \mathcal{F}_{t-})$ denotes the 
predictable projection of the \lc{} process $X_{t}$, 
 see Theorem VI.19.6 in \citet{rogers2000}. The additive residual projects $X_t$ onto the orthogonal complement of $L_2(\mathcal{F}_{t-})$, but this may not necessarily remove all $\mathcal{F}_{t}$-predictable information from $X_t$. An alternative 
choice that does so under sufficient regularity conditions is the \emph{quantile residual process} given by
$$
    G_t = F_t(X_t) - \frac{1}{2},
$$
where $F_t$ is the conditional distribution function given by $F_t(x) = \mathbb{P}(X_t \leq x \mid \mathcal{F}_{t-})$.
The quantile residual process satisfies \eqref{eq:residualorthogonal} provided that $(t,x) \mapsto F_t(x)$ is continuous. In Section~\ref{sec:neyman} we discuss additional transformations of  
$X$ that can also be applied before any residualization procedure.

We will formulate the general results in terms of an abstract residual process,
but we focus on the additive residual process in the examples. Any non-degenerate residual process will contain a predictive model of (aspects of) $X_t$ given $\mathcal{F}_{t-}$ in order to satisfy \eqref{eq:residualorthogonal}. We use $\widehat{G}_t$ to
denote the residual obtained by plugging in an estimate of that predictive
model. For the additive residual process, the predictive model is $\Pi_t$ and $\widehat{G}_t =
X_t - \widehat{\Pi}_t$. For the quantile residual process, the predictive model is $F_t$ and $\widehat{G}_t = \widehat{F}_t(X_t)- \frac12$.

We can now define our functional target parameter of interest, which we call 
the \emph{Local Covariance Measure}.
\begin{definition}[Local Covariance Measure] \label{dfn:lcm}
With $G_t$ a residual process, define for $t \in [0,1]$
\begin{equation} \label{eq:gamma}
    \gamma_t = \ex\left( I_t \right),
    \qquad
    \text{where}
    \quad
    I_t = \int_0^t G_s \mathrm{d} M_s,
\end{equation}
whenever the expectation is well defined. We call the function $t \mapsto \gamma_t$ 
the Local Covariance Measure (LCM).
\end{definition}
The following propositions illuminate how $\gamma$ relates to the null hypothesis of $N_t$ being conditionally locally independent of $X_t$ given $\cF_t$.

\begin{prop} \label{prop:cli-mg} 
    Under $H_0$, the process $I = (I_t)$ is a local $\mathcal{G}_t$-martingale with $I_0 = 0$. If $I$ is a martingale, then $\gamma_t = 0$ for $t \in [0,1]$.
\end{prop}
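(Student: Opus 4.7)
The plan is to verify the two assertions separately, with the bulk of the work going into the local martingale property of $I$. For $I_0 = 0$, this is immediate from the definition $I_t = \int_0^t G_s \, \mathrm{d}M_s$. For the local martingale property under $H_0$, the strategy is to invoke the standard result from stochastic calculus that the stochastic integral of a locally bounded $\mathcal{G}_t$-predictable process with respect to a local $\mathcal{G}_t$-martingale is itself a local $\mathcal{G}_t$-martingale. The three inputs I would verify are: (i) by Definition~\ref{dfn:cli}, $H_0$ is precisely the statement that $M$ is a local $\mathcal{G}_t$-martingale; (ii) $G$ is $\mathcal{G}_t$-predictable because it is \lc{} and $\mathcal{G}_t$-adapted, and every left-continuous adapted process is predictable; and (iii) $G$ is locally bounded, which follows because a \lc{} function on the compact interval $[0,1]$ is bounded (the existence of right limits everywhere combined with left-continuity rules out any unbounded behavior along a convergent sequence of times). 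Note that the defining orthogonality property \eqref{eq:residualorthogonal} of the residual process is not needed here; any sufficiently regular $\mathcal{G}_t$-predictable integrand would suffice.

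For the second assertion, the argument is immediate: if $I$ is a genuine martingale on $[0,1]$, then $\gamma_t = \ex(I_t) = \ex(I_0) = 0$ for every $t \in [0,1]$, using only that $I_0 = 0$ and the constant-mean property of martingales. No reference to $H_0$ is required in this step beyond what is already encoded in the local martingale statement.

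The main (minor) obstacle is ensuring that the regularity conditions on $G$ line up correctly with a citable version of the stochastic-integration theorem for counting-process martingales; in particular one should be precise that $G$ being bounded (hence in any case locally bounded and predictable) makes the integral $\int_0^\cdot G_s \, \mathrm{d}M_s$ well defined and a local $\mathcal{G}_t$-martingale without any additional integrability hypothesis. Beyond this bookkeeping, both parts of the proposition are direct consequences of standard results, and I would expect the proof in the paper to be only a few lines.
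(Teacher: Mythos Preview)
Your proposal is correct and matches the paper's proof almost exactly: the paper simply notes that $G$ is \lc{} and $\mathcal{G}_t$-predictable, that $M$ is a local $\mathcal{G}_t$-martingale under $H_0$, hence $I$ is a local $\mathcal{G}_t$-martingale, and then uses $\gamma_t = \ex(I_t) = \ex(I_0) = 0$ when $I$ is a true martingale. Your additional care about local boundedness of $G$ (via the \lc{} property on a compact interval) is a detail the paper leaves implicit but is indeed the right justification.
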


To interpret $\gamma$ in the alternative, we assume that $N$ has $\cG_t$-intensity $\blambda$. 
\begin{prop} \label{prop:gammaalternative}
    If 
    $
        \int_0^1 \ex(|G_s| (\blambda_s + \lambda_s)) \mathrm{d}s
        <\infty
    $, then for every $t\in[0,1]$,
    \begin{align*}
        \gamma_t = \int_0^t \cov(G_s, \blambda_s - \lambda_s) \mathrm{d}s.
    \end{align*}
    In particular, $\gamma$ is the zero-function if and only if $\cov(G_s, \blambda_s - \lambda_s) = 0$ for almost all $s\in [0,1]$.
\end{prop}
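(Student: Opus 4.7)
The plan is to express $I_t$ in terms of the $\mathcal{G}_t$-martingale associated with $\boldsymbol{\lambda}$ so the compensator difference emerges as a drift, take expectations to kill the martingale part, and finally use the residual property to recognize the integrand as a covariance. Since $N$ has $\mathcal{G}_t$-intensity $\boldsymbol{\lambda}$, the process $\bM_t \coloneq N_t - \int_0^t \boldsymbol{\lambda}_s\,\mathrm{d}s$ is a local $\mathcal{G}_t$-martingale, and combining this with $M_t = N_t - \int_0^t \lambda_s\,\mathrm{d}s$ yields the pointwise decomposition
\begin{equation*}
I_t = \int_0^t G_s\,\mathrm{d}M_s = \int_0^t G_s\,\mathrm{d}\bM_s + \int_0^t G_s(\boldsymbol{\lambda}_s - \lambda_s)\,\mathrm{d}s.
\end{equation*}

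Next I would argue that the first term has mean zero. Because $G$ is \lc{} and $\mathcal{G}_t$-adapted, it is $\mathcal{G}_t$-predictable, so $\int_0^t G_s\,\mathrm{d}\bM_s$ is at least a local $\mathcal{G}_t$-martingale. The assumption $\int_0^1 \ex(|G_s|\boldsymbol{\lambda}_s)\,\mathrm{d}s<\infty$ is exactly the standard integrability condition (see, e.g., \citealp{Bremaud:1981}) that upgrades it to a true martingale starting at zero, so its expectation vanishes. For the remaining Lebesgue integral, Fubini's theorem is applicable (again justified by the joint integrability assumption), giving
\begin{equation*}
\gamma_t = \int_0^t \ex\bigl[G_s(\boldsymbol{\lambda}_s - \lambda_s)\bigr]\,\mathrm{d}s.
\end{equation*}

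Now I would use the defining property \eqref{eq:residualorthogonal} of the residual process to rewrite the integrand as a covariance. Since $\lambda$ is $\mathcal{F}_t$-predictable, $\lambda_s$ is $\mathcal{F}_{s-}$-measurable, and conditioning on $\mathcal{F}_{s-}$ yields $\ex(G_s\lambda_s) = \ex[\lambda_s\ex(G_s\mid \mathcal{F}_{s-})] = 0$ (and similarly $\ex(G_s)=0$). Consequently $\ex[G_s(\boldsymbol{\lambda}_s - \lambda_s)] = \ex(G_s \boldsymbol{\lambda}_s) = \cov(G_s, \boldsymbol{\lambda}_s - \lambda_s)$, which establishes the displayed identity. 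The "if and only if" statement then follows from the fact that an absolutely continuous function of $t$ with integrable density $h(s) = \cov(G_s, \boldsymbol{\lambda}_s - \lambda_s)$ vanishes for all $t \in [0,1]$ precisely when $h = 0$ almost everywhere (Lebesgue differentiation).

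The main obstacle I expect is the martingale upgrade for $\int_0^t G_s\,\mathrm{d}\bM_s$: one needs the right integrability (in terms of the \emph{predictable variation} driven by $\boldsymbol{\lambda}$, not $\lambda$) to conclude that the local martingale is a true martingale, and this is precisely why both $\boldsymbol{\lambda}$ and $\lambda$ appear in the hypothesis. The other mild subtlety is the $\mathcal{F}_{s-}$-measurability of $\lambda_s$ needed to pull $\lambda_s$ outside the conditional expectation; this is immediate from $\mathcal{F}_t$-predictability but worth stating explicitly.
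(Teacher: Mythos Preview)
Your proposal is correct and follows essentially the same route as the paper: decompose $I_t$ into a $\mathcal{G}_t$-martingale part against $\bM$ plus the drift $\int_0^t G_s(\blambda_s-\lambda_s)\,\mathrm{d}s$, use the integrability hypothesis to make the martingale a true one so its expectation vanishes, apply Fubini, and then invoke $\ex(G_s\mid\mathcal{F}_{s-})=0$ to identify the integrand as a covariance. The paper's only cosmetic difference is that it handles the martingale step by applying monotone convergence to the positive and negative parts of $G$ rather than citing the integrability criterion directly, and it observes $\ex(G_s)=0$ suffices for the covariance identification (your additional remark that $\ex(G_s\lambda_s)=0$ is true but not needed).
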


We note that under $H_0$, the condition $\int_0^1 \ex(|G_s| \lambda_s) \mathrm{d}s < \infty$ 
is sufficient to ensure that $I$ is a martingale and $\gamma_t = 0$ for 
all $t \in [0,1]$. By Proposition \ref{prop:gammaalternative}, the LCM quantifies 
deviations from $H_0$ in terms of the covariance between the residual process and the difference of the $\cF_{t}$- and $\cG_t$-intensities. To this end, note that if $X$ happens to be $\mathcal{F}_t$-adapted, then 
$\mathcal{G}_t = \mathcal{F}_t$ and $N$ is trivially locally independent of
$X$. 
The hypothesis of local independence is only of interest when
$\mathcal{G}_t$ is a strictly larger filtration than $\mathcal{F}_t$, that is,
when $X$ provides information not already in $\mathcal{F}_t$. 

For the additive residual process, where $G_t = X_t - \Pi_t$, 
\begin{align*}
    \gamma_t & 
    = \ex\left(\int_0^t G_s \mathrm{d} M_s \right) 
    = \ex\left(\int_0^t X_s \mathrm{d} M_s\right) - 
    \ex\left(\int_0^t \Pi_s \mathrm{d} M_s \right)
\end{align*}
provided that the expectations are well defined. 
Since the predictable projection $\Pi_t$ has a \lc{} version and is  
$\mathcal{F}_t$-predictable, and since $M_t$ is a local $\mathcal{F}_t$-martingale,
$\int_0^t \Pi_s \mathrm{d} M_s$
is  a local $\mathcal{F}_t$-martingale. If it is a martingale, it is a 
mean zero martingale, and
\begin{equation} \label{eq:gammarep}
\gamma_t = \ex\left(\int_0^t X_s \mathrm{d} M_s\right) 
=
\ex \left( \sum_{\tau \leq t: \Delta N_\tau = 1} X_\tau 
- 
\int_0^t X_s \lambda_s \mathrm{d} s \right).
\end{equation} 
The computation above shows that the additive residual process defines the same 
functional target parameter $\gamma_t$ as the stochastic integral \eqref{eq:stochint} would. It is, however, the representation of $\gamma_t$ as the
expectation of the residualized stochastic integral that will allow us
to achieve a $\sqrt{n}$-rate of convergence of the 
estimator of $\gamma_t$ in cases where the estimator of $\lambda_t$ converges at a slower rate. 

\subsection{A Cox model with a partially observed covariate process} \label{sec:introex}
To further illustrate the hypothesis of conditional local independence and 
the Local Covariance Measure we consider an example based on Cox's
survival model with time dependent covariates. This is an extension of 
the example from the introduction with $T$ being 
the time to death of an individual, and with $X$ and $Z$ being time-varying processes. 
There is, moreover, one additional time-varying process $Y$ in the full model. 

An interpretation of the processes is as follows:
\begin{align*}
    X & = \text{Pension savings} \\
    Y & = \text{Blood pressure} \\
    Z & = \text{BMI} 
\end{align*}
Periods of overweight or obesity may influence blood pressure in the long term,
and due to, e.g., job market discrimination, high BMI could influence pension
savings negatively. Death risk is influenced directly by BMI and blood pressure
but not the size of your pension savings. Figure \ref{fig:lig} illustrates 
two possible dependence structures among the three processes and the 
death time as local independence graphs, and we will use these two graphs 
to discuss the concept of conditional local independence of pension 
savings on time to death.

\begin{figure}
\centering
\begin{tabular}{cc}
\begin{tikzpicture}[scale=0.6]
\tikzset{vertex/.style = {shape=circle,draw,minimum size=2em}}
\tikzset{vertexHid/.style = {shape=rectangle,draw,minimum size=2em}}
\tikzset{edge/.style = {->,> = latex', thick}}
\node[vertex] (Z) at  (1.5,2) {$Z$};
\node at (1.5,3) {\footnotesize{BMI}};
\node[vertex] (Y) at  (4,0) {$Y$};
\node[text width=1.1cm]  at (5.8,0) {\footnotesize{Blood} \vskip -1mm 
\footnotesize{pressure}};
\node[vertex] (X) at  (-1,0) {$X$};
\node[text width=1cm] at (-1,-1.3) {\footnotesize{Pension} \vskip -1mm 
\footnotesize{savings}};
\node[vertex] (T) at  (1.5,-2) {$N$};
\node at (1.4,-3) {\footnotesize{Death indicator}};

\draw[edge, bend left=20, color = blue] (Z) to (T);
\draw[edge, bend right=20, color = blue] (T) to (X);
\draw[edge, bend left=20, color = blue] (T) to (Z);
\draw[edge, bend left=20, color = blue] (T) to (Y);
\draw[edge, bend left, color = blue] (Z) to (Y);
\draw[edge, bend right, color = blue] (Z) to (X);
\draw[edge, bend left, color = blue] (Y) to (T);
\end{tikzpicture} & 
\begin{tikzpicture}[scale=0.6]
    \tikzset{vertex/.style = {shape=circle,draw,minimum size=2em}}
    \tikzset{vertexHid/.style = {shape=rectangle,draw,minimum size=2em}}
    \tikzset{edge/.style = {->,> = latex', thick}}
    \node[vertex] (Z) at  (1.5,2) {$Z$};
    \node at (1.5,3) {\footnotesize{BMI}};
    \node[vertex] (Y) at  (4,0) {$Y$};
    \node[text width=1.1cm]  at (5.8,0) {\footnotesize{Blood} \vskip -1mm 
    \footnotesize{pressure}};
    \node[vertex] (X) at  (-1,0) {$X$};
    \node[text width=1cm] at (-1,-1.3) {\footnotesize{Pension} \vskip -1mm 
    \footnotesize{savings}};
    \node[vertex] (T) at  (1.5,-2) {$N$};
    \node at (1.4,-3) {\footnotesize{Death indicator}};
    
    \draw[edge, bend left=20, color = blue] (Z) to (T);
\draw[edge, bend right=20, color = blue] (T) to (X);
\draw[edge, bend left=20, color = blue] (T) to (Z);
\draw[edge, bend left=20, color = blue] (T) to (Y);
    \draw[edge, bend left, color = blue] (Z) to (Y);
    \draw[edge, bend right, color = blue] (Z) to (X);
    \draw[edge, bend left, color = blue] (Y) to (T);
    \draw[edge, bend left=15, color = blue] (X) to (Y);
    \end{tikzpicture}
\end{tabular}

\caption{Local independence graphs illustrating how the three processes $X$,
$Y$, and $Z$ could affect each other and time of death in the Cox example.
There is no direct influence of $X$ (pension savings) on time of death in either
of the two graphs, but in the left graph the death indicator is furthermore
conditionally locally independent of $X$ given the history of $Z$ and $N$. 
In the right graph, $Z$ and $N$ does
not block all paths from $X$ to $N$, thus conditioning on the history of 
$Z$ and $N$ only would not render $N$ conditionally locally independent of $X$.
\label{fig:lig}}
\end{figure}
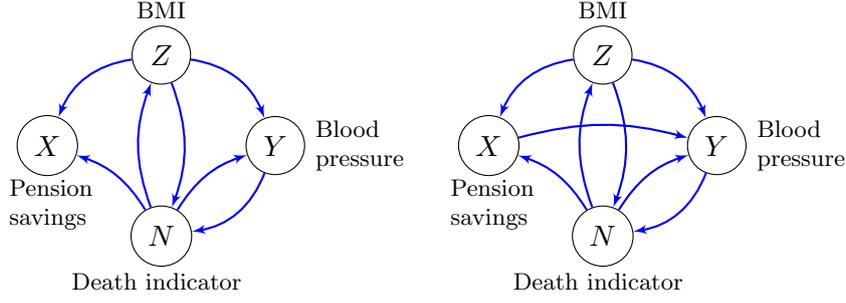

We assume that $T \in [0,1]$ and that $X$, $Y$ and $Z$ have continuous sample 
paths. Recall also that $N_t = \one(T \leq t)$ is the death indicator process. 
To maintain some form of realism, all processes are stopped at time of 
death, that is, $X_t = X_T$, $Y_t = Y_T$ and $Z_t = Z_T$ for $t \geq T$. This 
feedback from the death event to the other processes is reflected in Figure \ref{fig:lig} by the edges pointing out of $N$. Recall also that 
$$\mathcal{F}_t^{N,Z} = \sigma(N_s, Z_s; s \leq t)$$  
is the filtration generated by the $N$- and $Z$-processes. We use a similar notation 
for other processes and combinations of processes. For example, $\mathcal{F}_t^{N, X, Y, Z}$ 
is the filtration generated by $N$ and all three $X$-, $Y$-, and $Z$-processes. 
With $\lambda_t^{\text{full}}$ denoting the $\mathcal{F}_t^{N, X, Y, Z}$-intensity 
of time of death based on the history of all processes, we assume in this 
example a Cox model given by 
\begin{equation} \label{eq:coxex}
\lambda_t^{\text{full}} = \one(T\geq t)\lambda_t^0 e^{Y_t + \beta Z_t}
\end{equation}
with $\lambda_t^0$ a deterministic baseline intensity. It is not important that 
$\lambda_t^{\text{full}}$ is a Cox model for our general theory, 
but it allows for certain theoretical computations in this example.

The fact that $\lambda_t^{\text{full}}$ does not depend upon $X_t$ implies 
that $\lambda_t^{\text{full}}$ is also the $\mathcal{F}_t^{N, Y, Z}$-intensity,
and according to Definition \ref{dfn:cli}, $N_t$ is conditionally locally independent of $X_t$ given 
$\mathcal{F}_t^{N, Y, Z}$. This is in agreement with the local independence 
graphs in Figure~\ref{fig:lig} where there is no edge in either of them 
from $X$ to $N$.

We will take an interest in the case where $Y$ is unobserved and test the hypothesis:
\begin{center}
    $H_0:$ $N_t$ is conditionally locally independent of $X_t$ 
    given $\mathcal{F}_t^{N, Z}$.
\end{center}
That is, with $Y$ unobserved 
we want test if the intensity of time to death given the history of 
$N$, $X$ and $Z$ depends on $X$. To simplify notation let
$\mathcal{F}_t = \mathcal{F}_t^{N, Z}$ and 
$\mathcal{G}_t = \mathcal{F}_t^{N, X, Z}$ -- in accordance with the general notation.
The $\mathcal{G}_t$-intensity is by the innovation theorem given as
\begin{equation} \label{eq:lamb-marg}
\blambda_t = \ex(\lambda_t^{\text{full}} \mid \mathcal{G}_{t-})
= \one(T\geq t)\lambda_t^0 e^{\beta Z_t} \ex(e^{Y_t} \mid \mathcal{G}_{t-}),
\end{equation}
while the $\mathcal{F}_t$-intensity is 
\begin{equation} \label{eq:lamb}
\lambda_t = \ex(\lambda_t^{\text{full}} \mid \mathcal{F}_{t-}) = 
\one(T\geq t)\lambda_t^0 e^{\beta Z_t} \ex(e^{Y_t} \mid \mathcal{F}_{t-}),
\end{equation}
and $H_0$ is equivalent to $\lambda_t = \blambda_t$ almost surely.
Comparing \eqref{eq:lamb-marg}
and \eqref{eq:lamb} we see that $H_0$ holds in this example if 
$\ex(e^{Y_t} \mid \mathcal{G}_{t-}) = \ex(e^{Y_t} \mid \mathcal{F}_{t-})$,
and a sufficient condition for this is 
\begin{equation} \label{eq:exas}
\mathcal{F}_t^X \ind  \mathcal{F}_t^Y 
\mid  \mathcal{F}_t.
\end{equation}
The condition \eqref{eq:exas} is in concordance with the left graph in Figure \ref{fig:lig}, see Theorem 2 in \cite{Didelez2008}, but not the right, and it implies $H_0$. We will in 
Section \ref{subsec:ex} elaborate on  condition \eqref{eq:exas}  and give explicit 
examples. 

We recall that $H_0$ can  be reformulated as $\blambda_t$ not depending on $X$,
and we could investigate the hypothesis via a marginal Cox model
\begin{equation} \label{eq:coxmarg}
\blambda_t^{\text{cox}} = \one(T\geq t) \blambda_{t}^0 e^{\alpha_1 X_t + \alpha_2 Z_t}
\end{equation}
and test if $\alpha_1 = 0$. The Cox model is, however, not closed under 
marginalization and the semi-parametric model \eqref{eq:coxmarg} is quite likely 
misspecified. Consequently, a test of
$\alpha_1 = 0$ is not equivalent to a test of $H_0$. 

Our proposed nonparametric test of $H_0$ does not rely on a specific (semi-)parametric model of $\blambda_t$. To test $H_0$ we consider the LCM using the additive residual process. Then \eqref{eq:gammarep} implies that 
\begin{align*}
   \gamma_t & = \ex\left( X_{T}N_t - \int_0^{t} X_{s} \lambda_s \mathrm{d}s \right),
\end{align*}
By Proposition \ref{prop:cli-mg}, $\gamma_t = 0$ for $t \in [0,1]$ 
under $H_0$, whence conditional local independence implies $\gamma_t = 0$, and 
we test $H_0$ by estimating $\gamma_t$ and testing if it is constantly equal to $0$.

Before introducing a general estimator of the LCM in Section \ref{sec:statistic} 
we outline how to estimate the end point parameter $\gamma_1$ in this example. Due 
to $T \leq 1$ and the appearance of the indicator $\one(T\geq t)$ in \eqref{eq:lamb},
\begin{align*}
   \gamma_1 & = \ex\left( X_{T} - \int_0^{T} X_{s} \lambda_s \mathrm{d}s \right).
\end{align*}
With i.i.d. observations $(T_1, X_1, Z_1), \ldots, (T_n, X_n, Z_n)$ and 
(nonparametric) estimates, $\widehat{\lambda}_{j,t}$, based on 
$(T_1, Z_1), \ldots, (T_n, Z_n)$, we could compute the plug-in estimate 
$$\widehat{\gamma}_{1, \text{plug-in}}^{(n)} = \frac{1}{n} \sum_{j=1}^n \left( X_{j,T_j} -
\int_0^{T_j} X_{j,s} \widehat{\lambda}_{j,s} \mathrm{d}s \right).$$ 
However,  we cannot expect the plug-in estimator to 
have a $\sqrt{n}$-rate unless $\widehat{\lambda}$ has $\sqrt{n}$-rate, which 
effectively requires parametric model assumptions on the intensity. 
\begin{figure}[t]
    \centering
    \includegraphics[width=\linewidth]{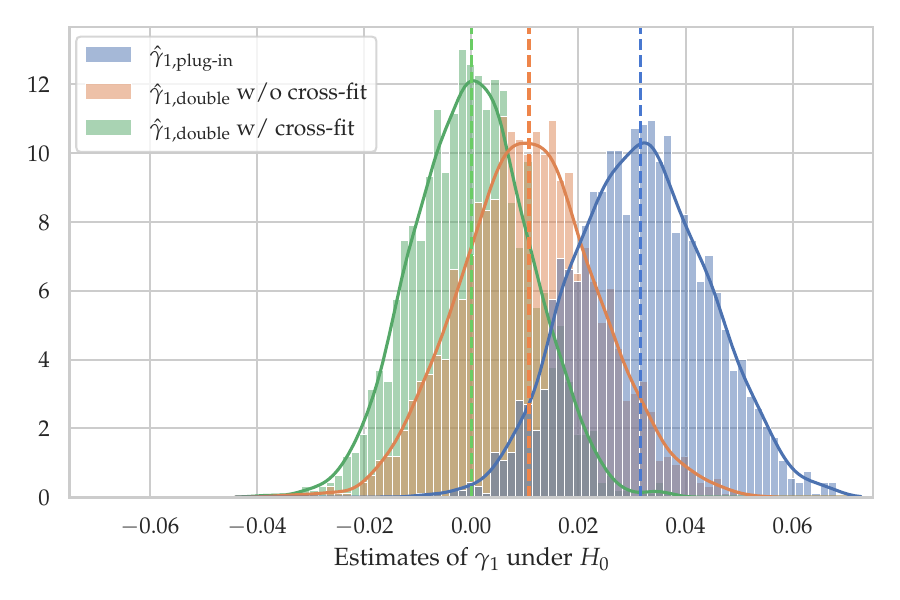}
    \caption{\small Histograms of the distributions of three different 
    estimators of $\gamma_1$.
    Each histogram contains 1000 estimates fitted to samples of size $n=500$. 
    The samples were sampled from a model that satisfies the hypothesis of 
    conditional local independence and hence the ground truth is $\gamma_1=0$. See Section \ref{sec:SamplingScheme} for further details of the data generating process.
    }
    \label{fig:endpoint_example}
\end{figure}
Using the definition of $\gamma_1$ in terms of the additive residual process 
$G_t = X_t - \Pi_t$, we also have that 
\begin{equation} \label{eq:gamrep} 
\gamma_1
= \ex\left( X_{T} - \Pi_{T} - \int_0^T (X_{s} - \Pi_s) \lambda_s \mathrm{d}s
\right).
\end{equation} 
A double machine learning estimator based on the ideas by
\cite{chernozhukov2018} is therefore obtained by plugging in two nonparametric estimators: 
$$\widehat{\gamma}_{1,\text{double}}^{(n)} = \frac{1}{n} \sum_{j=1}^n 
\left( X_{j,T_j} - \widehat{\Pi}_{j,T_j} - \int_0^{T_j} (X_{j,s} - \widehat{\Pi}_{j,s}) \widehat{\lambda}_{j,s} \mathrm{d}s \right).$$
To achieve a small bias and a $\sqrt{n}$-rate of convergence, we use 
sample splitting. The nonparametric estimates 
$\widehat{\Pi}_j$ and $\widehat{\lambda}_{j}$ are based on one part 
of the sample only, and are thus independent of the other part of 
the sample used for testing, see Section \ref{sec:statistic}. To 
obtain a fully efficient estimator, multiple sample splits can be combined,
e.g., via cross-fitting, see Section \ref{sec:cf}.

Figure \ref{fig:endpoint_example} shows the distributions of 
$\widehat{\gamma}_{1,\text{plug-in}}^{(500)}$ and 
$\widehat{\gamma}_{1,\text{double}}^{(500)}$ for the Cox example  
with $\gamma_1 = 0$, see Section \ref{sec:SamplingScheme} for details on the full 
model specification. 
The latter estimator was computed using cross-fitting but also 
without using any form of sample splitting.
The figure illustrates the bias of $\widehat{\gamma}_{1,\text{plug-in}}^{(500)}$, 
which is somewhat diminished by double machine learning without sample splitting and 
mostly eliminated by double machine learning in combination with 
cross-fitting.

\subsection{Estimating the Local Covariance Measure} \label{sec:statistic}

To estimate the LCM we assume that we have observed $n$ i.i.d. replications of the
processes, $(N_1, X_1, \mathcal{F}_1), \ldots, (N_n, X_n, \mathcal{F}_n)$, where
observing $\mathcal{F}_{j} = (\mathcal{F}_{j,t})$ signifies that anything
adapted to the $j$-th filtration is computable from observations. The 
process $N_j$ is adapted to $\mathcal{F}_j$, while $X_j$ is not, and 
$\mathcal{G}_j$ denotes the smallest right continuous and complete filtration 
generated by $X_j$ and $\mathcal{F}_j$.
 
For each $n$, we consider a sample split corresponding to a partition $J_n\cup
J_n^c = \{1,\ldots,n\}$ of the indices into two disjoint sets. We let
$\widehat{\lambda}^{(n)}$ and $\widehat{G}^{(n)}$ be estimates of the intensity and the
residualization map, respectively, fitted on data indexed by $J_n^c$ only. By an
estimate, $\widehat{\lambda}^{(n)}$, of $\lambda$ we mean a (stochastic) function
that can be evaluated on the basis of $\mathcal{F}_{j,t}$ for $j \in J_n$, and
its value, denoted by $\widehat{\lambda}_{j, t}^{(n)}$, is interpreted as a
prediction of $\lambda_{j,t}$. The stochasticity in $\widehat{\lambda}^{(n)}$ arises from
its dependence on data indexed by $J_n^c$, from which its functional form is
completely determined. Similarly, $\widehat{G}^{(n)}$ is a function that can be
evaluated on the basis of $\mathcal{G}_{j,t}$ for $j \in J_n$ to give a prediction $\widehat{G}_{j,t}^{(n)}$ of
$G_{j,t}$. In Section \ref{subsec:ex} we illustrate through the Cox example how
$\widehat{\lambda}^{(n)}$ and $\widehat{G}^{(n)}$ are to be computed in practice when we
use sample splitting. In Section~\ref{sec:estimationofnuisance} in the supplement we give
more examples of such estimation procedures and discuss their statistical properties 
in greater detail. 

To ease notation, we will throughout assume that $(N, X, \mathcal{F})$ 
denotes one additional process and filtration -- independent of and with the 
same distribution as the observed processes. Then the estimated 
intensity $\widehat{\lambda}^{(n)}$ and estimated residual process 
$\widehat{G}^{(n)}$ can be evaluated on $(N, X, \mathcal{F})$, and thus we may write 
$\widehat{\lambda}_t^{(n)}$ and $\widehat{G}_t^{(n)}$ to denote template copies of $\widehat{\lambda}^{(n)}_{j,t}$ and $\widehat{G}^{(n)}_{j,t}$ for $j \in J_n$. 

In terms of the estimates $\widehat{\lambda}^{(n)}$ and $\widehat{G}^{(n)}$  
we estimate LCM by the stochastic process $\widehat{\gamma}^{(n)}$ given by
\begin{equation} \label{eq:LCM}
\widehat{\gamma}_t^{(n)} = \frac{1}{|J_n|} \sum_{j \in J_n} \int_0^t 
\widehat{G}_{j,s}^{(n)}
\mathrm{d} \widehat{M}^{(n)}_{j, s},
\end{equation}
where $\widehat{M}_{j, t}^{(n)} = N_{j,t} - 
\int_0^t \widehat{\lambda}_{j, s}^{(n)} \mathrm{d}s$. 
We can regard $\widehat{\gamma}_t^{(n)}$ as a double machine learning estimator
of $\gamma_t$, with the observations indexed by $J_n^c$ used to learn models of
$\lambda$ and $G$, and with observations indexed by $J_n$ used to estimate
$\gamma_t$ based on these models. In Section \ref{sec:cf} we define the more
efficient estimator that uses cross-fitting, but it is instructive to study the
simpler estimator based on sample splitting first. 

In practical applications, we do not directly observe the filtration $\cF_j$, but rather samples from the stochastic processes generating the filtration. In accordance with the introductory Cox example, consider $\cF_j$ and $\cG_j$ given by $\cF_{j,t} = 
\sigma(Z_{j,s}, N_{j,s}; s\leq t)$ and $\cG_{j,t} = \sigma(X_{j,s}, Z_{j,s}, N_{j,s}; s\leq t)$ for a third stochastic process $Z_j$, with $Z_j$ possibly being multivariate. 
Within this setup, a general procedure for numerically computing the LCM is described in Algorithm \ref{alg:lcm}. Here, historical regression refers to any method which regresses the outcome at a given time on the history of the regressors up to that time. For example, historical linear regression is discussed in Section \ref{sec:simulations} and various alternative methods are discussed in Section~\ref{sec:estimationofnuisance} in the supplement. The choice of sample split will be discussed further in Section \ref{sec:cf} in the context of cross-fitting.

\begin{algorithm} \caption{Sample split estimator of LCM} \label{alg:lcm}
  \textbf{input}: processes $(N_j,X_j,Z_j)_{j=1,\ldots,n}$, partition $J_n \cup J_n^c$ of indices \;
  \textbf{options}: historical regression methods for estimation of $\lambda$ and $G$ given $N$ and $Z$, 
  
  discrete time grid $0 = t_0 <\cdots< t_k \leq 1$\; %
  \Begin{
    historically regress $(X_j)_{j\in J_n^c}$ on $(N_j,Z_j)_{j\in J_n^c}$ 
    to obtain a fitted model $\widehat{G}^{(n)}$ \; 
    historically regress $(N_j)_{j\in J_n^c}$ on $(N_j,Z_j)_{j\in J_n^c}$ 
    to obtain a fitted model $\widehat{\lambda}^{(n)}$ \; 
    compute out of sample residuals $\widehat{G}_{j,t_i}^{(n)}$ and 
    $\widehat{M}_{j,t_i}^{(n)}$ for $j \in J_n$ and $i = 0,\ldots,k$ \;
    for each $i=1,\ldots, k$, compute
    $$
    \widetilde{\gamma}_{t_i}^{(n)} = \frac{1}{|J_n|} \sum_{j \in J_n} 
    \sum_{1\leq l \leq i}
    \widehat{G}_{j,t_l}^{(n)}
    (\widehat{M}^{(n)}_{j, t_l} - \widehat{M}^{(n)}_{j, t_{l-1}})
    $$
  }
  \textbf{output}: Local Covariance Measure $\widetilde{\gamma}^{(n)}$ numerically approximated on grid\;
\end{algorithm}

As in Section \ref{sec:introex} we could suggest estimating the entire function 
$t \mapsto \gamma_t$ by a simple plug-in estimator of $\lambda$ 
using the representation \eqref{eq:gammarep}.
Figure \ref{fig:timevaryingexample} illustrates the distribution 
of estimators of the entire time dependent LCM for this plug-in 
estimator together with the double machine learning estimator with and 
without using cross-fitting. The figure also shows the distribution of 
the endpoint being the same distribution shown in Figure 
\ref{fig:endpoint_example}. The simulation is under $H_0$, 
and we see that only the double machine learning estimator with 
cross-fitting results in estimated sample paths centered around $0$.

\begin{figure}
    \includegraphics[width=\linewidth]{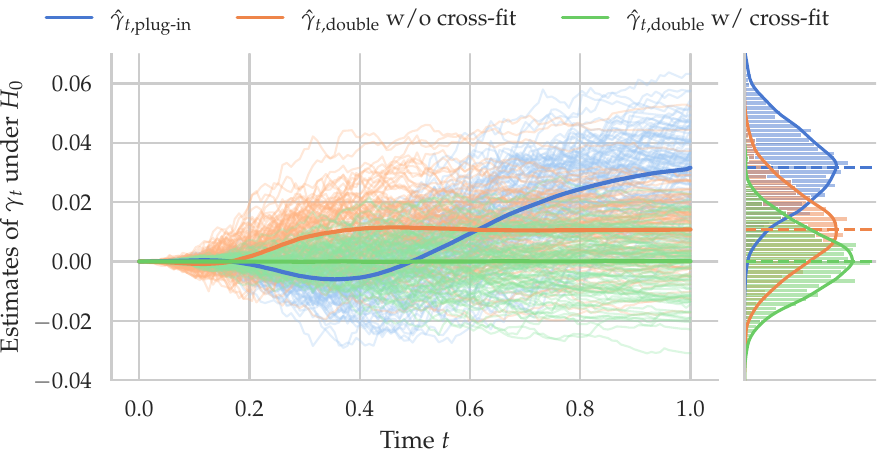}
    \caption{
        A time dependent extension of Figure \ref{fig:endpoint_example}
        showing the distribution of the sample paths  
        $t \mapsto \widehat{\gamma}_{t, \mathrm{plug-in}}^{(500)}$ and 
        $t \mapsto \widehat{\gamma}_{t, \mathrm{double}}^{(500)}$, 
        the latter with and without using cross-fitting. 
        The data were simulated under $H_0$ where 
        $t\mapsto \gamma_t$ is the zero function.
        See Section \ref{sec:SamplingScheme} for further details of the data generating process.
    }
    \label{fig:timevaryingexample}
\end{figure}
%




\section{Interpretations of the LCM estimator} \label{sec:interpretations}

In this section we provide some additional perspectives on and 
interpretations of the LCM. First 
we show that the LCM estimator can be seen as a Neyman orthogonalization
of the score statistic for a particular one-parameter family. The abstract formulation of the residual process $(G_t)$ permits that we transform $X$ into another $\mathcal{G}_t$-predictable processes. Using this perspective, we may optimize the choice of the process $X$ in terms of power. 

Next we show that when $X$ is independent of time, the 
test statistic reduces in a survival context to certain covariance measures between $X$-residuals and Cox-Snell-residuals, which we can link to existing test statistics for ordinary conditional independence.

\subsection{Neyman orthogonalization of a score statistic} \label{sec:neyman}
Consider the one-parameter family of $\mathcal{G}_t$-intensities
$$\blambda^{\beta}_t = e^{\beta X_t} \lambda_t$$
for $\beta \in \mathbb{R}$. Within this one-parameter family, the hypothesis 
of conditional local independence is equivalent to $H_0 : \beta = 0$.
The normalized log-likelihood with $n$ i.i.d. observations
in the interval $[0,t]$ is  
\begin{align*} 
    \ell_t(\beta) & = \frac{1}{n} \sum_{j=1}^n \left( 
        \int_0^t \log(\blambda_{j,s}^{\beta}) \mathrm{d}N_{j,s} - 
        \int_0^t \blambda_{j,s}^{\beta} \mathrm{d}s\right) \\
    & = \frac{1}{n} \sum_{j=1}^n \left( 
        \int_0^t \beta X_{j,s} + \log(\lambda_{j,s}) \mathrm{d}N_{j,s} - 
        \int_0^t e^{\beta X_{j,s}} \lambda_{j,s} \mathrm{d}s\right).
\end{align*}
Straightforward computations show that 
\begin{align*}
    \partial_{\beta} \ell_t(0) = \frac{1}{n}\sum_{j=1}^n \int_0^t X_{j,s} \mathrm{d} M_{j,s} 
    \quad \text{and} \quad
    - \partial_{\beta}^2 \ell_t(0) = \frac{1}{n} \sum_{j=1}^n \int_0^t X_{j,s}^2 \lambda_{j,s} \mathrm{d}s. 
\end{align*}
If $\lambda$ were known, the score statistic $\partial_{\beta} \ell_t(0)$ 
satisfies $\ex(\partial_{\beta} \ell_t(0)) = \gamma_t$. Moreover, 
under $H_0: \beta = 0$ we have that
$- \partial_{\beta}^2 \ell_t(0) = \langle \partial_{\beta} \ell_t(0) \rangle $ 
is a consistent estimate of the asymptotic variance of the 
mean zero martingale $\partial_{\beta} \ell_t(0)$. The hypothesis of 
local independence -- with $\lambda$ known -- could thus be tested 
using the score test statistic 
$- \partial_{\beta} \ell_t(0)^2 / \partial_{\beta}^2 \ell_t(0)$. 

The nuisance parameter $\lambda$ is, however, unknown and we want to 
avoid restrictive parametric assumptions about $\lambda$. Replacing 
$X_{j,t}$ by the residual process $G_{j,t}$ in the score statistic 
$\partial_{\beta} \ell_t(0)$ gives a \emph{Neyman orthogonalized} 
score 
$$
    \frac{1}{n}\sum_{j=1}^n \int_0^t G_{j,s} \mathrm{d} M_{j,s}.
$$
This score is linear in $\lambda$, which is used in supplementary Section~\ref{sec:orthodetails} to show that it satisfies the Neyman orthogonality condition under $H_0$, cf. Definition 2.1 in \cite{chernozhukov2018}. In this section, it is also shown that the act of replacing $X_t$ with $G_t = X_t - \Pi_t$ can, in fact, be viewed as \emph{concentrating out} the intensity of the score statistic in the sense of \citet{newey1994asymptotic}.
While Neyman orthogonality is never invoked explicitly, it is implicitly a central part of the asymptotic results for the LCM estimator (in particular Lemma~\ref{lem:R1} in the supplement).

The perspective on the LCM from a Neyman orthogonalized score 
statistic suggests that a test based on the LCM has most power against alternatives 
in the one-parameter family $\blambda^{\beta}$. If it happens 
that the most important alternatives are of the form  
$$\blambda^{\beta}_t = e^{\beta \bar{X}_t} \lambda_t$$
for some $\mathcal{G}_t$-predictable process $\bar{X}_t$ different from $X_t$, 
then we should replace $X_t$ by $\bar{X}_t$ in our test statistic, 
that is, in the residualization procedure. Examples of processes 
$\bar{X}_t$ are: 
\begin{itemize}
    \item transformations, $\bar{X}_t = f(X_t)$ for a function $f$
    \item time-shifts, $\bar{X}_t = X_{t-s}$ for $s > 0$
    \item linear filters, $\bar{X}_t = \int_0^t \kappa(t - s) X_s \mathrm{d} s$ for a kernel $\kappa$
    \item non-linear filters, $\bar{X}_t = \phi\left(\int_0^t \kappa(t - s) f(X_s) \mathrm{d} s\right)$
    for a kernel $\kappa$ and functions $f$ and~$\phi$.
\end{itemize}
Any finite number of such processes could, of course, also be combined 
into a vector process, and we could, indeed, generalize the LCM estimator \eqref{eq:LCM} to a vector process. 
The generalization is straightforward.

\subsection{Survival time with time-independent covariates} \label{sec:copula}
A different perspective on the test statistic is obtained if $X$ is 
constant over time, if $N_t = \one(T \leq t)$ is the counting process of a survival time $T$,
and if $\mathcal{F}_t = \sigma(N_s, Z; s \leq t)$ where $Z$ is a vector of additional 
baseline variables. Then the $\mathcal{F}_t$-intensity is 
$$\lambda_t = \one(T \geq t) h(t, Z),$$
where $h(t, Z)$ is the hazard function for $T$ given the baseline $Z$. In this 
special case, the hypothesis of conditional local independence is equivalent to the ordinary 
conditional independence 
\begin{equation} \label{eq:ordCI}
    X \ind T \mid Z.
\end{equation} 
We also find that 
$$
    \gamma_t = \ex(X (\one(T \leq t) - \Lambda_{t \wedge T})),
$$
and in particular $\gamma_1 = \ex(X (1 - \Lambda_T))$ as $T \in [0,1]$ by assumption. Since $\Lambda_T$ is exponentially distributed with mean $1$, we may write
$$
    \gamma_1 = - \mathrm{cov}(X, \Lambda_T).
$$
    Testing if $\gamma_1 \neq 0$ in this particular setup is 
    effectively a test of the conditional independence \eqref{eq:ordCI}. 
    When \eqref{eq:ordCI} is true, it further holds that $\Pi_t = \ex( X \mid \mathcal{F}_{t-})
    = \ex(X \mid Z) = \Pi_0$ is independent of $t$, and if we incorporate this 
    into our model of $\Pi$, the LCM estimator of $\gamma_1$ equals
    \begin{equation} \label{eq:LCM-baselineXadd}
     \widehat{\gamma}_1^{(n)} = \frac{1}{|J_n|} \sum_{j \in J_n} 
    (X_j - \widehat{\Pi}_{j,0})(1 - \widehat{\Lambda}_{T_j}).
    \end{equation}
    This is a (non-normalized) generalized covariance measure (GCM), see \citep{shah2020hardness},
    which is simply the (negative) empirical covariance between the additive residuals 
    $X_j - \widehat{\Pi}_{j,0}$ and the Cox-Snell residuals $\widehat{\Lambda}_{T_j}$. 
    
    Alternatively, consider the quantile residual process $G_{j,t} = F_t(X_j) - \frac{1}{2}$ 
    where $F_t(x) = \mathbb{P}(X \leq x \mid \mathcal{F}_{t-})$. If \eqref{eq:ordCI} is true, 
    it holds again that $F_t(x) = F_0(x) = \mathbb{P}(X \leq x \mid Z)$ 
    is independent of $t$, and our LCM estimator becomes
    \begin{equation} \label{eq:LCM-baselineX}
    \widehat{\gamma}_1^{(n)} = \frac{1}{|J_n|} \sum_{j \in J_n} 
    \widehat{G}_{j,0} \left(1 - \widehat{\Lambda}_{T_j}\right).
    \end{equation}
    This is likewise an empirical covariance, but now between the 
    generalized residuals and the Cox-Snell residuals. This is closely related to the 
    partial copula between $X$ and $T$ given $Z$, which can be estimated as 
    $$\frac{1}{|J_n|} \sum_{j \in J_n} 
    \widehat{G}_{j,0} \left(\frac{1}{2} - \exp(-\widehat{\Lambda}_{T_j}) \right).$$
    See \cite{Petersen:2021} for further details on the partial 
    copula and how this statistic can be used to test conditional 
    independence.
    
    Under a combined rate condition on estimation of $G$ and $\Lambda$, 
    the endpoint statistics above are known to be asymptotically 
    Gaussian with mean zero when the hypothesis of  
    conditional independence in \eqref{eq:ordCI} holds. Within this survival setting, 
    the endpoint statistics \eqref{eq:LCM-baselineXadd} can, furthermore, be seen as 
    a score test derivable from a semiparametric efficient score function. 
    Section~\ref{sec:survmodels} in the supplement gives the details for two specific 
    semiparametric survival models.

    Whenever $\widehat{G}_{j,t} = \widehat{G}_{j,0}$ is independent of time, e.g., if 
    we incorporate \eqref{eq:ordCI} into the residual model, the $t$-indexed LCM estimator 
    is 
     $$\widehat{\gamma}_t^{(n)} = \frac{1}{|J_n|} \sum_{j \in J_n} 
     \widehat{G}_{j,0} \left(\one(T_j \leq t) -  \widehat{\Lambda}_{T_j \wedge t}\right),$$
    which can be seen as a $t$-indexed extension of \eqref{eq:LCM-baselineX}. 
    For a general, time-dependent residual process, the full $t$-indexed 
    LCM estimator is 
    $$\widehat{\gamma}_t^{(n)} = \frac{1}{|J_n|} \sum_{j \in J_n} 
    \one(T_j \leq t) \widehat{G}_{j,T_j}  - \int_0^{t \wedge T_j} \widehat{G}_{j,s} \widehat{\lambda}_s \mathrm{d}s.$$
    The general results of this paper show that the $t$-indexed LCM estimator 
    is asymptotically distributed as a mean zero Gaussian martingale under $H_0$.
    This appears to be a novel result even when $X$ is constant over time. 
    However, the main contributions of this paper is to the case where $X$
    and $Z$ are stochastic processes varying with time -- where the 
    hypothesis of conditional local independence is also distinct from \eqref{eq:ordCI}.

    

\section{General asymptotic results} \label{sec:asymptotics}

In this section we derive uniform asymptotic results regarding the general LCM estimator as a stochastic process. In Section \ref{sec:lct} we discuss how to construct tests of $H_0$ based on the asymptotic results. 

We assume that $N$ has a $\cG_t$-intensity $\blambda_t$, we let $\bLambda_t = \int_0^t \blambda_s\mathrm{d}s$ denote the $\mathcal{G}_t$-compensator of $N$ and let $\mathbf{M}_t = N_t - \bLambda_t$ be the compensated local $\mathcal{G}_t$-martingale. We also recall that $\widehat{\gamma}^{(n)}$
denotes the LCM estimator based on sample splitting as defined in Section 
\ref{sec:statistic}. Within this framework we consider the decomposition
\begin{align}\label{eq:decomposition}
    \sqrt{|J_n|}\widehat{\gamma}^{(n)}
        = U^{(n)} + R_1^{(n)} + R_2^{(n)} + R_3^{(n)}
            + D_1^{(n)} + D_2^{(n)},
\end{align}
where the processes $U^{(n)}, R_1^{(n)}, R_2^{(n)}, R_3^{(n)}, D_1^{(n)}$, and $D_2^{(n)}$ are given by
    \begin{align}
    U^{(n)}_t & = \frac{1}{\sqrt{|J_n|}} \sum_{j \in J_n} 
    \int_0^t G_{j, s} \mathrm{d} \bM_{j, s}, \label{eq:oracleterm}
    \\
    R^{(n)}_{1, t} & = \frac{1}{\sqrt{|J_n|}} \sum_{j \in J_n}  \int_0^t
    G_{j, s}
    \left( \lambda_{j, s} - \widehat{\lambda}_{j, s}^{(n)} \right) 
    \mathrm{d}s,
    \\
    R^{(n)}_{2, t} & = \frac{1}{\sqrt{|J_n|}} \sum_{j \in J_n}  \int_0^t 
    \left( \widehat{G}_{j, s}^{(n)} - G_{j, s} \right) \mathrm{d} \bM_{j, s},
    \\
    R^{(n)}_{3, t} & = \frac{1}{\sqrt{|J_n|}} \sum_{j \in J_n}  \int_0^t 
    \left( \widehat{G}_{j, s}^{(n)} - G_{j, s} \right) 
    \left( \lambda_{j, s} - \widehat{\lambda}_{j, s}^{(n)} \right) 
    \mathrm{d}s,
    \\
    D^{(n)}_{1, t} & = \frac{1}{\sqrt{|J_n|}} \sum_{j \in J_n}  \int_0^t 
    G_{j, s}
    ( \blambda_{j, s} - \lambda_{j, s} ) 
    \mathrm{d}s,
    \\
    D^{(n)}_{2, t} & = \frac{1}{\sqrt{|J_n|}} \sum_{j \in J_n}  \int_0^t 
    (\widehat{G}_{j, s}^{(n)} - G_{j, s})
    (\blambda_{j, s} - \lambda_{j, s}) 
    \mathrm{d}s.
    \end{align}
Note that the processes $D_1$ and $D_2$ are (almost surely) the zero-process under $H_0$, since the null is equivalent to $\lambda_t$ being a version of $\blambda_t$. We proceed to show that:
\begin{itemize}
    \item the processes $U^{(n)}$ and $D_1^{(n)}-\sqrt{|J_n|}\gamma$ each converge in distribution,
    
    \item and the processes $R_1,R_2,R_3,D_2$ converge to the zero-process. 
\end{itemize}
For the analysis of each of $R_1$, $R_2$, and $D_2$, sample splitting is used to render the summands conditionally independent.

These asymptotic properties imply that $\sqrt{|J_n|}(\widehat{\gamma}^{(n)} - \gamma)$ is stochastically bounded in general, so the LCM estimator will asymptotically detect if the LCM is non-zero. Moreover, it will follow that $U^{(n)}$ drives the asymptotic limit of the LCM estimator under $H_0$. Based on these general asymptotic results we derive in Section \ref{sec:lct} asymptotic error control for tests based on the LCM estimator.

\subsection{Asymptotics of the LCM estimator}\label{sec:LCMasymptotics}
Our asymptotic results are formulated in terms of uniform stochastic convergence, which has also been discussed extensively in the recent literature on hypothesis testing \citep{shah2020hardness,lundborg2021conditional,lundborg2022projected,scheidegger2022weighted,neykov2021minimax}. Uniform convergence allows us to establish uniform asymptotic level of our proposed test, as well as power under local alternatives. We have collected key definitions and results related to uniform convergence in Section~\ref{app:UniformAsymptotics} in the supplement.

To state uniform assumptions and asymptotic results we 
need to indicate a range of possible sampling distributions for which the 
assumptions apply and the results hold. For this purpose, we extend our setup and allow all data to be parametrized by a fixed parameter set $\Theta$. The set $\Theta$ is not 
\emph{a priori} assumed to have any structure, and $\theta \in \Theta$ simply indicates
that $N^{\theta}$, $X^{\theta}$, $\lambda^{\theta}$, 
$G^{\theta}$ etc. have $\theta$-dependent distributions. We generally denote evaluation of processes or derived quantities for a specific $\theta$-value by a superscript, with the LCM, $\gamma^{\theta}$, in particular, depending on $\theta$. The LCM estimator is likewise written as $\widehat{\gamma}^{(n),\theta}=(\widehat{\gamma}^{(n),\theta}_t)$ for $\theta \in \Theta$ to denote its dependence on the sampling distribution. The superscript notation is, however, heavy and unnecessary in many cases and we will suppress the dependency on $\theta\in \Theta$ whenever it is not needed. Any result that does not explicitly involve $\Theta$ should be understood as a pointwise result for each $\theta \in \Theta$. 

The parametrization allows us to express convergence in distribution and probability uniformly over $\Theta$, which are denoted by $\convUD$ and $\convUP$, respectively. These concepts are defined rigorously in Definition~\ref{dfn:UniformConvergence} in the supplement. We note that uniform convergence reduces to classical (pointwise) convergence if $\Theta$ is a singleton, which corresponds to fixing the sampling distribution. We also introduce the parameter subset
\begin{equation}
    \Theta_0 \coloneqq \{\theta \in \Theta \mid H_0 \text{ is valid}\},
\end{equation}
consisting of all parameter values for which the hypothesis of conditional local independence holds. Correspondingly, we will use $\convUDnull$ and $\convUPnull$ to denote stochastic convergences uniformly over $\Theta_0$.

We are now ready to formulate the underlying assumptions on the data required for our asymptotic results. These assumptions may appear strong, but we argue in the discussion in Section 7 that they are not unreasonable from a practical viewpoint.

\begin{asm}\label{asm:UniformBounds} 
    There exist constants $C,C'>0$, such that for any parameter value $\theta\in \Theta:$
    \begin{itemize}
        \item[i)] The $\cG_t^\theta$-intensity $\blambda_t^\theta$ of $N^\theta$ is \lc{} with $\sup_{0\leq t \leq 1} \blambda_t^\theta \leq C$ almost surely.
        
        \item[ii)] The residual process $G^\theta$ is \lc{} with $\sup_{0\leq t \leq 1} |G_t^\theta| \leq C'$ almost surely.
    \end{itemize}
\end{asm}

The estimator, $\widehat{\lambda}^{(n)}_t$, of $\lambda_t$ and the estimator,
$\widehat{G}^{(n)}_t$, of the residual process are assumed to satisfy the same 
bounds as $\lambda_t$ and $G_t$. We note that Assumption \ref{asm:UniformBounds} i) implies that $\bM_t$ is a true $\cG_t$-martingale, and by the innovation theorem, $\lambda_t = \ex[\blambda_t \mid \cF_{t-}]$. As a consequence, the $\cF_t$-intensity $\lambda_t$ inherits the boundedness from the $\cG_t$-intensity $\blambda_t$, and $M_t$ is an $\cF_t$-martingale.
More generally, we have the following proposition ensuring that stochastic integrals 
are true martingales, e.g., that $I_t$ is a martingale under $H_0$.

\begin{prop} \label{lem:truemartingales}
    Under Assumption \ref{asm:UniformBounds} it holds that each of the processes
    \begin{equation*}
        \Big(\int_0^t f(G_s)\mathrm{d}\bM_s \Big)_{t\in[0,1]}
        \quad \text{ and } \quad
        \Big(\int_0^t f(\widehat{G}_s^{(n)})\mathrm{d}\bM_s \Big)_{t\in[0,1]}
    \end{equation*}
    are mean zero, square integrable $\mathcal{G}_t$-martingales
    for any $f\in C(\mathbb{R})$.  
\end{prop}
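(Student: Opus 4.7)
The plan is to verify this as a fairly routine application of standard martingale theory for stochastic integrals against counting-process martingales, with the boundedness assumptions doing essentially all of the work. The two processes are treated in parallel, and the only non-trivial wrinkle is the sample-split randomness in $\widehat{G}^{(n)}$, which I will handle by conditioning on the training fold.

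First I would check that the integrand is well behaved. Because $G$ is \lc{} and $\mathcal{G}_t$-adapted, it is $\mathcal{G}_t$-predictable, and since $f$ is continuous (hence Borel measurable and bounded on the compact set $[-C',C']$ by some constant $K \coloneq \sup_{|x|\le C'} |f(x)|$), the composition $s \mapsto f(G_s)$ is $\mathcal{G}_t$-predictable with $|f(G_s)| \le K$ almost surely for all $s \in [0,1]$. Next, I would note that Assumption~\ref{asm:UniformBounds} i) makes $\bM$ a true square-integrable $\mathcal{G}_t$-martingale: $N_t \le \bLambda_t + \bM_t$ with $\bLambda_t = \int_0^t \blambda_s\mathrm{d}s \le C$, so $N_t$ is integrable and the predictable quadratic variation $\langle \bM\rangle_t = \int_0^t \blambda_s\mathrm{d}s \le C$.

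With these two pieces in hand, I would invoke the standard stochastic-integration result for counting-process martingales (e.g., Theorem II.3.1 in \citet{AndersenBorganGillKeiding:1993}): whenever $H$ is a $\mathcal{G}_t$-predictable process satisfying $\ex\big[\int_0^1 H_s^2 \mathrm{d}\langle\bM\rangle_s\big] < \infty$, the process $\big(\int_0^t H_s \mathrm{d}\bM_s\big)_{t\in[0,1]}$ is a mean-zero, square-integrable $\mathcal{G}_t$-martingale with $\ex\big[\big(\int_0^t H_s\mathrm{d}\bM_s\big)^2\big] = \ex\big[\int_0^t H_s^2 \blambda_s\mathrm{d}s\big]$. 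Applying this with $H_s = f(G_s)$ and the bound $\ex\big[\int_0^1 f(G_s)^2 \blambda_s\mathrm{d}s\big] \le K^2 C < \infty$ yields the claim for the first process.

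For the second process, I would use that $\widehat{G}^{(n)}$ is, by the stated assumption following Assumption~\ref{asm:UniformBounds}, \lc{} and bounded by $C'$ almost surely, with functional form determined by the training indices $J_n^c$. Conditionally on the data indexed by $J_n^c$, the plug-in process $s \mapsto \widehat{G}^{(n)}_s$ is a deterministic function of the $\mathcal{G}_s$-adapted observables, hence $\mathcal{G}_t$-predictable, and the exact same argument (with $K$ replaced by $\sup_{|x|\le C'}|f(x)|$) gives that $\big(\int_0^t f(\widehat{G}^{(n)}_s)\mathrm{d}\bM_s\big)_{t\in[0,1]}$ is a mean-zero, square-integrable $\mathcal{G}_t$-martingale conditionally on $J_n^c$; integrating out yields the unconditional statement. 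There is no real obstacle here: the only point requiring a moment of care is the mild measurability/predictability issue for the plug-in integrand, which is handled cleanly by conditioning on the holdout sample.
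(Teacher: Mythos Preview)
Your proposal is correct and follows essentially the same route as the paper: bound $f(G_s)$ via continuity of $f$ on the compact range $[-C',C']$, verify the integrability condition $\ex\big[\int_0^1 f(G_s)^2\blambda_s\,\mathrm{d}s\big]\le K^2 C<\infty$, and invoke a standard result on stochastic integrals against compensated counting-process martingales (the paper packages this as an internal lemma citing Fleming--Harrington rather than Andersen et al.). Your conditioning argument for $\widehat{G}^{(n)}$ is in fact more explicit than the paper's, which simply declares the second case ``identical''; the conditional-martingale conclusion (with respect to $\mathcal{G}_t$ given the training fold, equivalently with respect to the enlarged filtration $\mathcal{G}_t\vee\tilde{\mathcal{G}}_1^{n,c}$) is exactly what is used downstream.
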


To express the asymptotic distribution of $U^{(n)}$ we need its \emph{variance function}. 

\begin{definition} We define the variance function 
$\mathcal{V}\colon [0,1] \to [0,\infty]$ as
    \begin{align} \label{eq:variance}
    \mathcal{V}(t) = \ex \left(
    \int_0^t G_s^2 \mathrm{d} N_s
    \right)
    \end{align}
for $t\in[0,1]$.
\end{definition}

As everything else, the variance function, $\mathcal{V} = \mathcal{V}^\theta$, is 
also indexed by the parameter $\theta$, which we, for notational simplicity, 
suppress unless explicitly needed.  

By taking $f(x) = x^2$ in Proposition \ref{lem:truemartingales}, 
Assumption \ref{asm:UniformBounds} implies that for each $t\in [0,1]$,
$$
    \mathcal{V}(t) 
    = \ex \left(
        \int_0^t G_s^2 \blambda_s \mathrm{d} s
        \right) < \infty.
$$ 
Moreover, $\mathcal{V}(t)$ is the variance of  $\int_0^t G_s\mathrm{d}\bM_s$, which under $H_0$ is the same as the variance of $I_t = \int_0^t G_s\mathrm{d}M_s$.

With the assumptions above we can prove the following proposition 
about the uniform distributional limit of the process 
$U^{(n)}$ in the \emph{Skorokhod space} $D[0,1]$, the space of \cl{} functions from $[0, 1]$ to $\mathbb{R}$ endowed with the Skorokhod topology. A corresponding pointwise result is an application of Rebolledo's classical
martingale CLT. Our generalization to uniform convergence is based on 
a uniform extension of Rebolledo's theorem, see Theorem~\ref{thm:URebo}
in Section~\ref{sec:fclt} in the supplement. 
\begin{prop}\label{prop:UasymptoticGaussian}
    Under Assumption \ref{asm:UniformBounds} it holds that
    \begin{align*}
        U^{(n),\theta} \convUD U^\theta
    \end{align*}
    in $D[0,1]$ as $n\to \infty$, where for each $\theta\in \Theta$, $U^\theta$ is a mean zero continuous Gaussian martingale on $[0,1]$ with variance function $\mathcal{V}^\theta$.
\end{prop}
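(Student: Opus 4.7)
The plan is to apply the uniform version of Rebolledo's martingale CLT (Theorem~\ref{thm:URebo} in the supplement), for which we must verify three ingredients: that $U^{(n),\theta}$ is a square-integrable $\mathcal{G}_t$-martingale, that its predictable quadratic variation converges uniformly to $\mathcal{V}^\theta$, and that a Lindeberg-type condition on the jumps holds.

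For the martingale property, I would invoke Proposition~\ref{lem:truemartingales} with $f(x)=x$: each summand $\int_0^t G_{j,s}\mathrm{d}\bM_{j,s}$ is a mean-zero square-integrable $\mathcal{G}_{j,t}$-martingale, so by independence across $j\in J_n$ and linearity, $U^{(n),\theta}$ is itself a mean-zero square-integrable $\mathcal{G}_t^{(n)}$-martingale (where $\mathcal{G}_t^{(n)}$ denotes the product filtration). Its predictable quadratic variation is
\begin{equation*}
    \langle U^{(n),\theta}\rangle_t
    = \frac{1}{|J_n|}\sum_{j\in J_n} \int_0^t G_{j,s}^2 \blambda_{j,s}^\theta\,\mathrm{d}s,
\end{equation*}
which is an i.i.d. average of bounded random variables (uniformly in $\theta$, by Assumption~\ref{asm:UniformBounds}, each summand lies in $[0,(C')^2 C]$) with common mean $\mathcal{V}^\theta(t)$.

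Next I would establish uniform convergence $\langle U^{(n),\theta}\rangle_t \convUP \mathcal{V}^\theta(t)$ in $D[0,1]$. Pointwise in $t$, Chebyshev's inequality gives $|\langle U^{(n),\theta}\rangle_t - \mathcal{V}^\theta(t)|\leq (C')^2 C/\sqrt{|J_n|\varepsilon}$ with probability at least $1-\varepsilon$, and crucially the constants are independent of $\theta$, giving pointwise uniform convergence. To upgrade to uniform-in-$t$ convergence, I would exploit that both $t\mapsto \langle U^{(n),\theta}\rangle_t$ and $t\mapsto \mathcal{V}^\theta(t)$ are monotone nondecreasing, and that $\mathcal{V}^\theta$ is Lipschitz with constant $(C')^2 C$ (again uniformly in $\theta$). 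The standard monotone-function argument (analogous to Pólya's extension of Glivenko--Cantelli) then converts pointwise convergence on a sufficiently fine grid to uniform-in-$t$ convergence, with the grid spacing chosen independently of $\theta$.

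For the Lindeberg condition, the jumps of $U^{(n),\theta}$ satisfy
\begin{equation*}
    \sup_{t\in[0,1]}\bigl|\Delta U^{(n),\theta}_t\bigr|
    = \frac{1}{\sqrt{|J_n|}}\max_{j\in J_n}\sup_{t}|G_{j,t}^\theta|\cdot|\Delta N_{j,t}|
    \leq \frac{C'}{\sqrt{|J_n|}},
\end{equation*}
which vanishes deterministically as $n\to\infty$ uniformly in $\theta$, so the Lindeberg integral is identically zero for $n$ large enough. Combining these three ingredients with Theorem~\ref{thm:URebo} yields convergence in $D[0,1]$, uniformly over $\Theta$, to a continuous mean-zero Gaussian martingale with variance function $\mathcal{V}^\theta$. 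The main obstacle is the upgrade from pointwise-in-$t$ to uniform-in-$t$ convergence of the predictable variation while retaining uniformity over $\Theta$; this is handled by using the $\theta$-uniform Lipschitz bound on $\mathcal{V}^\theta$ to control the discretization error.
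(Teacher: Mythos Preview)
Your proposal is correct and follows essentially the same route as the paper: invoke the uniform Rebolledo CLT (Theorem~\ref{thm:URebo}) after verifying the martingale property via Proposition~\ref{lem:truemartingales}, pointwise convergence of the predictable variation via a uniform LLN/Chebyshev argument, and the Lindeberg condition via the deterministic jump bound $C'/\sqrt{|J_n|}$. One small simplification: Theorem~\ref{thm:URebo} only requires pointwise-in-$t$ convergence $\langle U^{(n),\theta}\rangle_t \convUP \mathcal{V}^\theta(t)$ (together with uniform equicontinuity and boundedness of $\mathcal{V}^\theta$, which you note), so the monotone-function upgrade to uniform-in-$t$ convergence that you flag as the ``main obstacle'' is not actually needed here.
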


To control the remainder terms in \eqref{eq:decomposition} we will bound 
the estimation errors in terms of the 2-norm, $\vertiii{\cdot}_{2}$, on  $L_2([0,1] \times \Omega)$, i.e.,
    \begin{align*}
    \vertiii{W}_{2}^2 = \ex \left( \int_0^1 W_s^2 \mathrm{d}s
    \right)
    \end{align*}
for any process $W\in L_2([0,1] \times \Omega)$.
We will make the following consistency assumptions on 
$\widehat{\lambda}^{(n)}$ and $\widehat{G}^{(n)}$. 
\begin{asm}\label{asm:UniformRates}
Assume that $|J_n| \to \infty$ when $n \to \infty$ and let
    \begin{align*}
        g^\theta(n) = \vertiii{G^\theta - \widehat{G}^{(n),\theta}}_{2}
        \qquad
        \text{and}
        \qquad
        h^\theta(n) = \vertiii{\lambda^\theta - \widehat{\lambda}^{(n),\theta}}_{2}.
    \end{align*}
Then each of the sequences $g^\theta(n)$, $h^\theta(n)$, and $\sqrt{|J_n|}g^\theta(n)h^\theta(n)$ converge to zero uniformly over $\Theta$ as $n\to\infty$, i.e.,
\begin{align*}
    \lim_{n\to \infty}\sup_{\theta\in\Theta} \max\{ g^\theta(n), h^\theta(n), 
        \sqrt{|J_n|}g^\theta(n)h^\theta(n) \} = 0.
\end{align*}
\end{asm}

With this assumption we can establish that the remainder terms also converge uniformly
to the zero-process.
\begin{prop}\label{prop:remainderterms}
    Under Assumptions \ref{asm:UniformBounds} and \ref{asm:UniformRates}, it holds that 
    \begin{equation*}
        \sup_{t\in [0,1]} |R_{i,t}^{(n),\theta}| \convUP 0
    \end{equation*}
    as $n\to \infty$ for $i=1,2,3$. 
\end{prop}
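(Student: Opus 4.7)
The strategy is to analyze each of the three remainder processes separately, reducing each to a moment bound whose order depends only on the nuisance rates $g^\theta(n)$ and $h^\theta(n)$. An application of Markov's inequality then upgrades each moment bound to a convergence-in-probability statement, and the uniformity over $\Theta$ follows directly from Assumption~\ref{asm:UniformRates} since the derived bounds are themselves uniform in $\theta$.

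For $R_3^{(n)}$, which is the purely bilinear product-error term, a crude bound suffices. Replacing the supremum by the $L^1$-norm of the integrand in time, and then applying the Cauchy--Schwarz inequality first in the time integral and subsequently in the sum over $j \in J_n$, yields
$$
    \ex \sup_{t\in[0,1]} |R_{3,t}^{(n),\theta}|
    \leq \sqrt{|J_n|}\, g^\theta(n)\, h^\theta(n),
$$
which vanishes uniformly by the product-rate requirement in Assumption~\ref{asm:UniformRates}.

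The bound for $R_1^{(n)}$ is more delicate: a naive total-variation argument would only yield order $\sqrt{|J_n|}\, h^\theta(n)$, which is insufficient. The trick is to exploit the residual property $\ex[G_s \mid \mathcal{F}_{s-}] = 0$ together with sample splitting. Setting $V_s^{(n)} \coloneq |J_n|^{-1/2} \sum_{j\in J_n} G_{j,s}(\lambda_{j,s} - \widehat{\lambda}_{j,s}^{(n)})$ and conditioning on the training sample, the summands at each fixed $s$ are independent with mean zero -- indeed $\lambda_{j,s} - \widehat{\lambda}_{j,s}^{(n)}$ is measurable with respect to $\mathcal{F}_{j,s-}$ jointly with the training sample, so the orthogonality of $G_{j,s}$ to $\mathcal{F}_{j,s-}$ kills the conditional mean. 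The cross terms thus vanish in the second-moment computation, leaving $\ex[(V_s^{(n)})^2] \leq (C')^2\, \ex[(\lambda_s - \widehat{\lambda}_s^{(n)})^2]$. Exchanging the supremum for the $L^1$-integral in $s$, and applying Jensen's inequality and then Cauchy--Schwarz in $s$, gives
$$
    \ex \sup_{t\in[0,1]} |R_{1,t}^{(n),\theta}|
    \leq \int_0^1 \sqrt{\ex[(V_s^{(n)})^2]}\, \mathrm{d}s
    \leq C'\, h^\theta(n).
$$

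For $R_2^{(n)}$, the integrator $\bm{M}$ is a genuine martingale by Assumption~\ref{asm:UniformBounds}~i) and the innovation theorem. Conditionally on the training sample, each integrand $\widehat{G}_{j,s}^{(n)} - G_{j,s}$ is bounded, \lc{}, and hence $\mathcal{G}_{j,t}$-predictable, so by linearity together with Proposition~\ref{lem:truemartingales} the individual summands -- and thus also $R_{2,\cdot}^{(n)}$ -- are mean-zero square-integrable $\mathcal{G}_t$-martingales. Doob's $L^2$ maximal inequality combined with the predictable variation formula yields
$$
    \ex \sup_{t\in[0,1]} |R_{2,t}^{(n),\theta}|^2
    \leq 4\, \ex \int_0^1 (\widehat{G}_s^{(n),\theta} - G_s^\theta)^2\, \blambda_s^\theta\, \mathrm{d}s
    \leq 4C\, (g^\theta(n))^2.
$$
The main obstacle is the handling of $R_1^{(n)}$: being a process of bounded variation, it is not a martingale and Doob-type inequalities do not apply. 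The essential observation is that the Neyman-orthogonality condition $\ex[G_s \mid \mathcal{F}_{s-}] = 0$, combined with the independence afforded by sample splitting, supplies precisely the additional factor of $\sqrt{|J_n|}$ that a naive total-variation bound is lacking.
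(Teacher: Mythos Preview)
Your proof is correct. The treatments of $R_2^{(n)}$ and $R_3^{(n)}$ are essentially the same as the paper's (Doob's inequality plus predictable variation for $R_2^{(n)}$; Cauchy--Schwarz for $R_3^{(n)}$), up to a minor imprecision in which filtration $R_2^{(n)}$ is a martingale for --- it should be the joint filtration $\tilde{\mathcal{G}}_t^n$, conditionally on the training sample $\tilde{\mathcal{G}}_1^{n,c}$.

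The genuinely different part is your handling of $R_1^{(n)}$. The paper proceeds in two stages: it first shows pointwise convergence $R_{1,t}^{(n)} \convUP 0$ via a variance bound, and then upgrades to the supremum by proving stochastic equicontinuity through a chaining argument, using conditional Hoeffding to get sub-Gaussian tails for the increments $(t-s)^{-1}(R_{1,t}^{(n)}-R_{1,s}^{(n)})$. Your route is considerably more elementary: you simply bound $\sup_t |R_{1,t}^{(n)}| \leq \int_0^1 |V_s^{(n)}|\,\mathrm{d}s$, pull the expectation inside by Tonelli, and control $\ex|V_s^{(n)}|$ via its second moment. This works cleanly because the time index set is $[0,1]$ and the process is an ordinary Lebesgue integral, so the total-variation bound costs nothing. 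The paper's chaining machinery is overkill for this particular term, but it yields stochastic equicontinuity as a byproduct, and the same framework is recycled verbatim for the terms $D_1^{(n)}$ and $D_2^{(n)}$ (Lemmas~\ref{lem:D1asymptotics} and~\ref{lem:D2asymptotics}), where a direct $L^1$ bound on the supremum is not available in the same way. So your approach buys simplicity here; the paper's approach buys a reusable tool.
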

%
%

To control the asymptotic behavior of the LCM estimator in the alternative we 
need to control the two terms $D_1^{(n)}$ and $D_2^{(n)}$.

\begin{prop}\label{prop:Dasymptotics}
    Let Assumptions \ref{asm:UniformBounds} and \ref{asm:UniformRates} hold true. 
    \begin{enumerate}
    \item[i)] The stochastic process
    $D_1^{(n),\theta}-\sqrt{|J_n|} \gamma^{\theta} $
    converges in distribution in $(C[0,1],\|\cdot\|_\infty)$ uniformly over $\Theta$ as $n\to \infty$.
    \item[ii)] If $G_t^\theta = X_t^\theta - \Pi_t^\theta$ is the additive residual process, then $D_2^{(n),\theta}\convUP 0$ in $D[0,1]$ as $n\to \infty$.
    \end{enumerate} 
\end{prop}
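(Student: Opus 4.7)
The plan is to combine two distinct techniques: a uniform functional CLT in $(C[0,1],\|\cdot\|_\infty)$ for part i), and a martingale decomposition together with Doob's $L^2$-inequality for part ii).

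For part i), I would rewrite
$$
D_{1,t}^{(n),\theta} - \sqrt{|J_n|}\gamma_t^\theta
= \frac{1}{\sqrt{|J_n|}}\sum_{j\in J_n}\bigl(\xi_j^\theta(t) - \gamma_t^\theta\bigr),
$$
where $\xi_j^\theta(t) \coloneqq \int_0^t G_{j,s}^\theta(\blambda_{j,s}^\theta - \lambda_{j,s}^\theta)\,\mathrm{d}s$. The first step is to verify $\ex^\theta[\xi_j^\theta(t)] = \gamma_t^\theta$ via Fubini together with $\ex(G_s\mid \cF_{s-})=0$ (which also gives $\ex(G_s)=0$): these yield $\ex^\theta[\xi_j^\theta(t)] = \int_0^t \ex^\theta[G_s(\blambda_s-\lambda_s)]\,\mathrm{d}s = \int_0^t \cov^\theta(G_s, \blambda_s - \lambda_s)\,\mathrm{d}s = \gamma_t^\theta$ by Proposition \ref{prop:gammaalternative}. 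The summands are then i.i.d.\ centered elements of $C[0,1]$, and Assumption~\ref{asm:UniformBounds} supplies $\theta$-free bounds $\sup_t|\xi_j^\theta(t)|\le CC'$ and $|\xi_j^\theta(t) - \xi_j^\theta(s)|\le 2CC'|t-s|$. I would then invoke a uniform functional CLT: finite-dimensional convergence follows from a multivariate Lindeberg CLT whose uniformity over $\Theta$ is automatic from the uniformly bounded third moments, while uniform tightness in $C[0,1]$ follows by applying Hoeffding's inequality to the bounded Lipschitz increments, giving sub-Gaussian tails on $|X_n^\theta(t) - X_n^\theta(s)|$ of scale proportional to $|t-s|$, which comfortably satisfies Kolmogorov's tightness criterion uniformly.

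For part ii), the crucial observation is that
$$
(\blambda_s - \lambda_s)\,\mathrm{d}s = \mathrm{d}\bLambda_s - \mathrm{d}\Lambda_s = \mathrm{d}M_s - \mathrm{d}\bM_s,
$$
while for the additive residual $\widehat{G}_s^{(n)} - G_s = \Pi_s - \widehat{\Pi}_s^{(n)}$. Combined, these yield $D_{2,t}^{(n),\theta} = A_t^{(n),\theta} - B_t^{(n),\theta}$, where
$$
A_t^{(n),\theta} \coloneqq \frac{1}{\sqrt{|J_n|}}\sum_{j\in J_n}\int_0^t(\Pi_{j,s}^\theta - \widehat{\Pi}_{j,s}^{(n),\theta})\,\mathrm{d}M_{j,s}^\theta,
$$
and $B^{(n),\theta}$ is the analogous integral against $\mathrm{d}\bM^\theta$. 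Because $\widehat{\Pi}^{(n),\theta}$ is computed only from the hold-out fold indexed by $J_n^c$ and $|\Pi - \widehat{\Pi}^{(n)}| = |G - \widehat{G}^{(n)}| \le 2C'$ by Assumption \ref{asm:UniformBounds}, the processes $A^{(n)}$ and $B^{(n)}$ are true square-integrable martingales (with conditionally independent summands) in appropriately enlarged filtrations, by an analogue of Proposition \ref{lem:truemartingales}. Doob's $L^2$-inequality and the It\^o isometry then give
$$
\ex^\theta\bigl[\sup_{t\in[0,1]}|A_t^{(n),\theta}|^2\bigr]
\le \frac{4}{|J_n|}\sum_{j\in J_n}\ex^\theta\Bigl[\int_0^1(\Pi_{j,s}^\theta-\widehat{\Pi}_{j,s}^{(n),\theta})^2\lambda_{j,s}^\theta\,\mathrm{d}s\Bigr]
\le 4C\,(g^\theta(n))^2,
$$
and the same bound for $B^{(n)}$. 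Markov's inequality combined with Assumption~\ref{asm:UniformRates} then delivers $\sup_t|D_{2,t}^{(n),\theta}|\convUP 0$.

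The main obstacle lies in part i): making the functional CLT argument genuinely uniform over $\Theta$. The Lindeberg and tightness bounds above are already $\theta$-free by Assumption \ref{asm:UniformBounds}, but I would need to check that the uniform-convergence formalism in Section~\ref{app:UniformAsymptotics} of the supplement accommodates weak convergence in $(C[0,1],\|\cdot\|_\infty)$ to a $\theta$-dependent centered Gaussian limit with covariance $\cov^\theta(\xi^\theta(s),\xi^\theta(t))$. For part ii), the only genuine idea is the martingale decomposition of $(\blambda - \lambda)\,\mathrm{d}s$; once that substitution is made, the rest is routine.
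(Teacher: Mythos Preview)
For part i), your approach coincides with the paper's (Lemma~\ref{lem:D1asymptotics}): both verify $\ex[\xi^\theta(t)]=\gamma_t^\theta$, establish finite-dimensional convergence via a uniform multivariate CLT with third-moment bounds from Assumption~\ref{asm:UniformBounds} (the paper cites Proposition~19 of \cite{lundborg2021conditional}), prove uniform tightness by showing sub-Gaussian increments via Hoeffding and invoking chaining (Corollary~\ref{cor:uniformchaining}), and then combine the two via a Prokhorov-type principle (Lemma~\ref{lem:ProkhorovsPrinciple}). Your identification of the main obstacle---packaging this into the paper's uniform-convergence formalism---is exactly right; the paper handles it via Proposition~\ref{prop:equicontinuityistightness} and Lemma~\ref{lem:ProkhorovsPrinciple}.

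For part ii), your approach is genuinely different from the paper's and more direct. The paper (Lemma~\ref{lem:D2asymptotics}) does \emph{not} use the decomposition $(\blambda_s-\lambda_s)\,\mathrm{d}s = \mathrm{d}M_s - \mathrm{d}\bM_s$; it keeps $D_2^{(n)}$ as a Lebesgue integral and argues in two stages: first, conditionally on the training fold, the summands have mean zero by the innovation theorem ($\ex(\blambda_t\mid\mathcal{F}_{t-}\vee\tilde{\mathcal{G}}_1^{n,c})=\lambda_t$, using that $\widehat\Pi^{(n)}-\Pi$ is $\mathcal{F}_t$-predictable once conditioned on training data) and the conditional variance is bounded by $4C^2g(n)^2$, giving pointwise convergence via Chebyshev; second, a separate chaining argument (conditional Hoeffding, sub-Gaussian increments) upgrades pointwise to uniform convergence in $t$. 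Your martingale decomposition bypasses the chaining step entirely: once $A^{(n)}$ and $B^{(n)}$ are identified as true square-integrable martingales in the enlarged filtrations $\tilde{\mathcal{F}}_t^n\vee\tilde{\mathcal{G}}_1^{n,c}$ and $\tilde{\mathcal{G}}_t^n\vee\tilde{\mathcal{G}}_1^{n,c}$, Doob's maximal inequality delivers the supremum bound in a single stroke. This is the same mechanism the paper uses for $R_2^{(n)}$ (Lemma~\ref{lemma:R2}), transplanted to $D_2^{(n)}$ via your key identity. The only care required is checking that $M_j$ and $\bM_j$ remain martingales after enlarging by the independent $\sigma$-algebras from other samples and the training fold, and that the cross quadratic covariations $\langle \int \cdot\,\mathrm{d}M_j, \int \cdot\,\mathrm{d}M_k\rangle$ vanish for $j\neq k$; both follow from independence, as you note.
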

We note that $D_2^{(n)}$ might not vanish without an assumption like $G_t$ 
being the additive residual process, and it is not clear if $D_2^{(n)}$ will even converge in general. We will not pursue an analysis of the asymptotic behavior of $D_2^{(n)}$ in the general case. We note, however, that if we can estimate $G$ with 
a parametric rate, that is, $\sqrt{|J_n|} g(n) = O(1)$, then 
it follows from the Cauchy-Schwarz 
inequality that $D_2^{(n)}$ is stochastically bounded, and $D_1^{(n)}$ 
still dominates in the alternative where $\gamma \neq 0$. 

We can combine all of the propositions into a single theorem regarding the asymptotics of the LCM estimator, which we consider as our main result.

\begin{thm}\label{thm:main}
    Let Assumptions \ref{asm:UniformBounds} and \ref{asm:UniformRates} hold true.
    \begin{enumerate}
        \item[i)] It holds that
            \begin{align*}
                \sqrt{|J_n|}\widehat \gamma^{(n),\theta} \convUDnull U^\theta
            \end{align*}
        in $D[0,1]$ as $n\to \infty$, where for each $\theta\in \Theta_0$, $U^\theta$ is a mean zero continuous Gaussian martingale on $[0,1]$ with variance function $\mathcal{V}^\theta$.

        \item[ii)] For the additive residual process it holds that
        for every $\varepsilon>0$ there exists $K>0$ such that
        \begin{align}\label{eq:stochboundedness}
            \limsup_{n\to \infty}\sup_{\theta\in \Theta} \mathbb{P}\pa{
            \sqrt{|J_n|} \cdot \|\widehat{\gamma}^{(n),\theta} - \gamma^\theta\|_\infty > K} < \varepsilon.
        \end{align}
    \end{enumerate}
\end{thm}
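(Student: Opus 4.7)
The plan is to read off the theorem directly from the decomposition \eqref{eq:decomposition}, using the asymptotic behavior already established for each of its six summands, together with the uniform Slutsky/tightness machinery from supplementary Section~\ref{app:UniformAsymptotics}.

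For part (i), I would first observe that for any $\theta\in\Theta_0$ the hypothesis of conditional local independence forces $\lambda^\theta$ to be a version of $\blambda^\theta$, so the two bias terms $D_1^{(n),\theta}$ and $D_2^{(n),\theta}$ are almost surely identically zero. Under $H_0$ the decomposition therefore collapses to
\[
\sqrt{|J_n|}\,\widehat{\gamma}^{(n),\theta} = U^{(n),\theta} + R_1^{(n),\theta} + R_2^{(n),\theta} + R_3^{(n),\theta}.
\]
Proposition~\ref{prop:UasymptoticGaussian} gives $U^{(n),\theta}\convUDnull U^\theta$ in $D[0,1]$, and Proposition~\ref{prop:remainderterms} gives $\sup_t|R_i^{(n),\theta}|\convUPnull 0$ for $i=1,2,3$, which translates to Skorokhod convergence of each $R_i^{(n),\theta}$ to the zero process uniformly over $\Theta_0$. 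Since the limit $U^\theta$ lies in $C[0,1]$, addition is continuous at $(U^\theta,0)$ in the Skorokhod topology, and a uniform Slutsky theorem (from supplementary Section~\ref{app:UniformAsymptotics}) closes the argument.

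For part (ii), I would keep the full decomposition, rewritten as
\[
\sqrt{|J_n|}\bigl(\widehat{\gamma}^{(n),\theta} - \gamma^\theta\bigr) = U^{(n),\theta} + \sum_{i=1}^3 R_i^{(n),\theta} + \bigl(D_1^{(n),\theta} - \sqrt{|J_n|}\gamma^\theta\bigr) + D_2^{(n),\theta}.
\]
Proposition~\ref{prop:UasymptoticGaussian} and Proposition~\ref{prop:Dasymptotics}(i) provide uniform weak convergence of $U^{(n),\theta}$ and of $D_1^{(n),\theta}-\sqrt{|J_n|}\gamma^\theta$ to continuous limits in $D[0,1]$ and $(C[0,1],\|\cdot\|_\infty)$, respectively; via uniform tightness (a Prokhorov-type statement in supplementary Section~\ref{app:UniformAsymptotics}) this gives uniform stochastic boundedness of the supremum norms of these two terms. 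Proposition~\ref{prop:remainderterms} and, because we are in the additive residual case, Proposition~\ref{prop:Dasymptotics}(ii) supply four further terms whose supremum norms vanish uniformly in probability and are, in particular, uniformly stochastically bounded. Summing the six bounds via the triangle inequality in $\|\cdot\|_\infty$ yields \eqref{eq:stochboundedness}.

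The main technical subtlety is the passage from uniform weak convergence in the functional spaces $D[0,1]$ and $C[0,1]$ to uniform stochastic boundedness of the supremum functional; this relies on the uniform tightness and continuous mapping arguments from supplementary Section~\ref{app:UniformAsymptotics}, together with the fact that the $U^\theta$-limit is continuous so that Skorokhod convergence is compatible with the supremum. Once those tools are invoked, both parts reduce to an application of the triangle inequality to the decomposition \eqref{eq:decomposition}.
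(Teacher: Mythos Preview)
Your proposal is correct and follows essentially the same approach as the paper: both parts are read off the decomposition \eqref{eq:decomposition} via the uniform Slutsky lemma for part (i) and the triangle inequality for part (ii), exactly as you describe. The one place where the paper is more explicit than your sketch is the ``main technical subtlety'' you flag: uniform weak convergence does not by itself control $\sup_\theta\mathbb{P}(\|\cdot\|_\infty>K)$ for the limit families, and the paper supplies this separately via Lemma~\ref{lem:Ulimitistight} (tightness of $(U^\theta)$ from the time-changed Brownian motion representation) and Lemma~\ref{lem:D1bounded} (uniform boundedness of $\ex\|\Gamma^\theta\|_\infty$ via the Sudakov--Fernique comparison inequality), after which the bound \eqref{eq:LCMboundednes} follows.
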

Thus we have established the weak asymptotic limit of 
$\sqrt{|J_n|} \widehat{\gamma}^{(n)}$ under $H_0$. However, the variance function $\mathcal{V}$ of the limiting Gaussian 
martingale is unknown and must be estimated from data. We 
propose to use the empirical version of \eqref{eq:variance},
    \begin{align}\label{eq:empiricalvariance}
    \widehat{\mathcal{V}}_n(t) = 
    \frac{1}{|J_n|} \sum_{j \in J_n} 
    \int_0^t \left(\widehat{G}_{j,s}^{(n)}\right)^2 \mathrm{d} N_{j,s} 
    = \frac{1}{|J_n|} \sum_{j \in J_n} 
    \sum_{\tau \leq t: \Delta N_{j,s} = 1} \left(\widehat{G}_{j,\tau}^{(n)}\right)^2,
    \end{align}
for which we have the following consistency result.

\begin{prop}\label{prop:varianceconsistent}
    Under Assumptions \ref{asm:UniformBounds} and \ref{asm:UniformRates} it holds that 
    $$
        \sup_{t\in[0,1]}|\widehat{\mathcal{V}}_n^\theta(t) - \mathcal{V}^\theta(t)| \convUP 0,
    $$
    as $n \to \infty$.
\end{prop}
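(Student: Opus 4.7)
The plan is to decompose $\widehat{\mathcal{V}}_n(t) - \mathcal{V}(t)$ into three pieces using the Doob--Meyer relation $\mathrm{d}N = \mathrm{d}\bM + \blambda\,\mathrm{d}s$. Writing
\begin{align*}
    \widehat{\mathcal{V}}_n(t) - \mathcal{V}(t) = A_n(t) + B_n(t) + C_n(t),
\end{align*}
I define
\begin{align*}
    A_n(t) &= \tfrac{1}{|J_n|}\sum_{j\in J_n}\int_0^t \bigl[(\widehat{G}_{j,s}^{(n)})^2 - G_{j,s}^2\bigr]\,\mathrm{d}N_{j,s}, \\
    B_n(t) &= \tfrac{1}{|J_n|}\sum_{j\in J_n}\int_0^t G_{j,s}^2\,\mathrm{d}\bM_{j,s}, \\
    C_n(t) &= \tfrac{1}{|J_n|}\sum_{j\in J_n}\int_0^t G_{j,s}^2 \blambda_{j,s}\,\mathrm{d}s - \mathcal{V}(t),
\end{align*}
and show that the supremum over $t\in[0,1]$ of each piece vanishes in probability, uniformly over $\Theta$. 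Since Assumptions~\ref{asm:UniformBounds} and \ref{asm:UniformRates} supply the bounds $C,C'$ and the rate $g^\theta(n)$ uniformly in $\theta$, each estimate below will be intrinsically $\theta$-uniform.

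For $A_n$, the factorization $\widehat{G}^2 - G^2 = (\widehat{G}-G)(\widehat{G}+G)$ together with $|G|,|\widehat{G}^{(n)}|\leq C'$ gives $\sup_t|A_n(t)| \leq \tfrac{2C'}{|J_n|}\sum_{j}\int_0^1|\widehat{G}_{j,s}^{(n)} - G_{j,s}|\,\mathrm{d}N_{j,s}$. Conditionally on the holdout sample that determines $\widehat{G}^{(n)}$, the integrand is \lc{} and predictable, so by the innovation theorem and $\blambda\leq C$ together with Cauchy--Schwarz, the conditional expectation of each summand is at most $C\cdot\vertiii{\widehat{G}^{(n)}-G}_1\leq C\cdot g^\theta(n)$. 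Markov's inequality then yields $\sup_t|A_n(t)|\convUP 0$. For $B_n$, Proposition~\ref{lem:truemartingales} with $f(x)=x^2$ shows that $B_n$ is a mean-zero square-integrable $\cG_t$-martingale, and Doob's $L^2$-inequality gives
\begin{align*}
    \ex\bigl[\sup_{t\in[0,1]} B_n(t)^2\bigr] \leq \tfrac{4}{|J_n|}\ex\Bigl[\int_0^1 G_s^4 \blambda_s\,\mathrm{d}s\Bigr] \leq \tfrac{4(C')^4 C}{|J_n|}\to 0,
\end{align*}
uniformly in $\theta$, so $\sup_t|B_n(t)|\convUP 0$.

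For $C_n$, the integrand is bounded by $(C')^2C$, so Chebyshev's inequality together with the variance bound $\var_\theta(C_n(t))\leq (C')^4C^2/|J_n|$ gives pointwise-in-$t$ uniform convergence $C_n(t)\convUP 0$. The same bound renders $C_n$ Lipschitz in $t$ with constant $2(C')^2C$ uniformly in $n$, $\omega$, and $\theta$; a standard $\epsilon$-grid argument — pointwise convergence on the grid plus equicontinuity — upgrades this to $\sup_t|C_n(t)|\convUP 0$. Combining the three controls via the triangle inequality completes the proof. The main obstacle is the $A_n$ term, where the sample-split predictability of $\widehat{G}^{(n)}$ must be exploited to invoke the innovation identity so that $\mathrm{d}N$ can be replaced by $\blambda\,\mathrm{d}s$ and the $L^2$-rate $g^\theta(n)$ extracted; the other two pieces are handled by essentially classical martingale and Glivenko--Cantelli-type arguments, with uniformity in $\theta$ inherited directly from Assumption~\ref{asm:UniformBounds}.
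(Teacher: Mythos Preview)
Your proof is correct. The plugin-error piece $A_n$ is handled essentially as the paper handles its error terms $B^{(n)}$ and $C^{(n)}$: a first-moment bound after replacing $\mathrm{d}N$ by $\blambda\,\mathrm{d}s$ via the compensator identity (which you call the innovation theorem---a slight misnomer, but the substance is right), then extracting $g^\theta(n)$ by Cauchy--Schwarz.

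The genuine difference lies in how you treat the oracle part $\tfrac{1}{|J_n|}\sum_j\int_0^t G_{j,s}^2\,\mathrm{d}N_{j,s}-\mathcal{V}(t)$. The paper keeps this intact and exploits that both the empirical and limiting processes are nondecreasing in $t$, invoking its auxiliary Lemma~\ref{lem:convergenceofincreasing} (pointwise convergence of monotone processes to a uniformly equicontinuous limit upgrades automatically to uniform convergence). You instead perform a Doob--Meyer split into the martingale $B_n$ (controlled by Doob's $L^2$ inequality) and the absolutely continuous $C_n$ (controlled by Chebyshev on a grid plus uniform Lipschitz continuity). Your route is slightly longer but more elementary in that it avoids the monotone-convergence lemma and relies only on standard martingale and equicontinuity arguments; the paper's route is more economical but requires that auxiliary result. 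Both deliver the $\theta$-uniformity directly from the uniform constants $C,C'$.
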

We emphasize that $\mathcal{V}$ is only the asymptotic variance function of the 
LCM estimator under $H_0$. It is always the asymptotic variance function of 
$U^{(n)}$, but in the alternative the asymptotic distribution of $\widehat{\gamma}^{(n)}$
also involves the asymptotic distribution of $D_1^{(n)}$ and is thus more complicated.

Tests of conditional local independence can now be 
constructed in terms of univariate functionals of 
$\widehat{\gamma}^{(n)}$ and $\widehat{\mathcal{V}}_n$ that quantify the magnitude of the LCM. The asymptotics of such test statistics under $H_0$ are described in the following corollary, which is essentially an application of the continuous mapping theorem.
\begin{cor} \label{cor:statisticsdistribution}
    Let $\mathcal{J}\colon D[0,1]\times D[0,1] \to \mathbb{R}$ be a functional that is continuous on the closed subset $C[0,1]\times \overline{\{\mathcal{V}^\theta \colon \theta \in \Theta_0\}}$
    with respect the uniform topology, i.e., the topology generated by the norm $\|(f_1,f_2)\|= \max\{\|f_1\|_\infty,\|f_2\|_\infty\}$ for $f_1,f_2\in D[0,1]$. Define the test statistic 
    $$
        \widehat{D}_n^\theta = \mathcal{J}\left( 
            \sqrt{|J_n|}\widehat{\gamma}^{(n),\theta}, \; \widehat{\mathcal{V}}_n^\theta\right).
    $$ 
    Under Assumptions \ref{asm:UniformBounds} and \ref{asm:UniformRates},
    it holds that
        \begin{align}
            \widehat{D}_n^\theta \convUDnull \mathcal{J}(U^\theta,\mathcal{V}^\theta),
            \qquad n \to \infty,
        \end{align}
    where $U^\theta$ is a mean zero continuous Gaussian martingale with variance function $\mathcal{V}^\theta$.
\end{cor}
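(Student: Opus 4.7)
The plan is to prove this as a uniform application of the continuous mapping theorem, combining the asymptotic Gaussian limit of the LCM estimator under $H_0$ with the consistency of the empirical variance estimator. The two key inputs are Theorem~\ref{thm:main}(i), which gives
\[
    \sqrt{|J_n|}\widehat{\gamma}^{(n),\theta} \convUDnull U^\theta \quad \text{in } D[0,1],
\]
and Proposition~\ref{prop:varianceconsistent}, which gives $\|\widehat{\mathcal{V}}_n^\theta - \mathcal{V}^\theta\|_\infty \convUP 0$, and therefore in particular $\convUPnull 0$ when restricted to $\Theta_0$.

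First, I would promote these two separate convergences to a joint uniform weak convergence of the pair $(\sqrt{|J_n|}\widehat{\gamma}^{(n),\theta},\widehat{\mathcal{V}}_n^\theta)$ to $(U^\theta,\mathcal{V}^\theta)$. Because the second coordinate converges uniformly in probability to the deterministic function $\mathcal{V}^\theta$, this is a uniform Slutsky-type statement, which I would invoke directly from the catalogue of uniform asymptotic results in Section~\ref{app:UniformAsymptotics} of the supplement. Next, since the limiting process $U^\theta$ has continuous sample paths almost surely by Proposition~\ref{prop:UasymptoticGaussian}, weak convergence in the Skorokhod topology to $U^\theta$ coincides with weak convergence in the uniform topology — so the joint convergence holds in $(D[0,1] \times D[0,1], \|\cdot\|)$ with $\|(f_1,f_2)\| = \max\{\|f_1\|_\infty,\|f_2\|_\infty\}$, uniformly over $\Theta_0$.

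Second, I would verify that the distributional limit is almost surely supported on the continuity set of $\mathcal{J}$: for each $\theta \in \Theta_0$, the sample paths of $U^\theta$ lie in $C[0,1]$ almost surely, and $\mathcal{V}^\theta$ trivially lies in $\overline{\{\mathcal{V}^{\theta'} \colon \theta' \in \Theta_0\}}$. Applying a uniform continuous mapping theorem (again from Section~\ref{app:UniformAsymptotics}) to the continuous functional $\mathcal{J}$ then yields
\[
    \widehat{D}_n^\theta = \mathcal{J}\bigl(\sqrt{|J_n|}\widehat{\gamma}^{(n),\theta},\widehat{\mathcal{V}}_n^\theta\bigr) \convUDnull \mathcal{J}(U^\theta,\mathcal{V}^\theta),
\]
which is the claim.

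The main obstacle is not any single deep calculation but rather a bookkeeping matter: assembling the right uniform analogues of Slutsky's theorem and the continuous mapping theorem, and carefully matching the topologies (Skorokhod for the weak convergence in Theorem~\ref{thm:main}, uniform for the continuity hypothesis on $\mathcal{J}$). Both transitions rely on $U^\theta$ having continuous paths, which is precisely why the continuity hypothesis on $\mathcal{J}$ is only required on $C[0,1] \times \overline{\{\mathcal{V}^{\theta'}\colon \theta' \in \Theta_0\}}$ rather than on all of $D[0,1] \times D[0,1]$. Once these ingredients are in place, the proof is short.
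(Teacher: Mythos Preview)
Your strategy is sound at a high level, but you underestimate what you dismiss as a ``bookkeeping matter.'' The off-the-shelf results in Section~\ref{app:UniformAsymptotics} do not quite deliver what you need. Lemma~\ref{lem:GeneralSlutsky} is a replacement-type Slutsky, not a joint-convergence statement; and the uniform continuous mapping theorem with a continuity-on-the-support hypothesis, Proposition~\ref{prop:UctsmappingT}, requires both Polish ambient spaces and uniform tightness of the limit family $(U^\theta,\mathcal{V}^\theta)_{\theta\in\Theta_0}$. Neither $(D[0,1],\|\cdot\|_\infty)$ nor its square is Polish, and you never establish the tightness. Likewise, the Skorokhod-to-uniform transition you invoke (Proposition~\ref{prop:convergencetoCont2}) itself requires uniform tightness of $(U^\theta)$ in $C[0,1]$, which you do not mention.

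The paper supplies exactly these missing pieces and, rather than invoking an abstract uniform continuous mapping theorem, carries out the underlying subsequence argument directly. Given any $(\theta_n)\subset\Theta_0$, it uses Lemma~\ref{lem:Ulimitistight} (tightness of $(U^\theta)$ in $C[0,1]$, proved via the Brownian representation $U_t^\theta \overset{\mathcal D}{=} B_{\mathcal V^\theta(t)}$) with Prokhorov's theorem to extract a subsequence with $U^{\theta_{k(n)}}\xrightarrow{\mathcal D}\tilde U$, and Lemma~\ref{lem:regularityofgammasigma} (uniform Lipschitz bounds on $\mathcal{V}^\theta$) with Arzel\`a--Ascoli to pass to a further subsequence with $\mathcal{V}^{\theta_{k(n)}}\to\tilde{\mathcal V}$ uniformly. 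Along that subsequence the limits are \emph{fixed}, so the Skorokhod-to-uniform passage and the classical continuous mapping theorem apply, and Proposition~\ref{prop:seqUni}(c) closes the loop. Those two tightness/compactness lemmas are the substance of the proof, not the bookkeeping around it.
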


\section{The Local Covariance Test} \label{sec:lct}
In this section we introduce a practically applicable 
test based on the LCM estimator. Using the asymptotic distribution of the
LCM estimator we show that the asymptotic distribution of our proposed test 
is independent of the sampling distribution under $H_0$ and has an explicit representation. 
We show, in addition, uniform asymptotic level of the test, and we give a uniform 
power result for the additive residual process. Finally, we modify the test to be 
based on a cross-fitted estimator of the LCM instead of using sample splitting, 
and we show uniform level of that test.

To construct a test statistic based on the LCM estimator it is beneficial that
its distributional limit does not depend on the variance function. As a simple example,
consider the endpoint test statistic:
    \begin{align} \label{eq:endpointstatistic}
        \big(\widehat{\mathcal{V}}_n(1)\big)^{-\frac{1}{2}}\sqrt{|J_n|}\widehat{\gamma}^{(n)}_1,
    \end{align}
which under $H_0$ converges in distribution to 
$\mathcal{V}(1)^{-\frac{1}{2}}U_1$ by Corollary \ref{cor:statisticsdistribution}. The distribution of the latter is the standard normal distribution, and in particular it does not depend on $\mathcal{V}$. 

Any test statistic constructed from $\widehat{\gamma}^{(n)}$ should  
capture deviations of $\gamma_t$ away from $0$. The test statistic in
\eqref{eq:endpointstatistic} does, however, only consider the endpoint of the
process, and since $\gamma$ is not necessarily monotone, $\gamma_t$ may 
deviate more from $0$ for other $t \in [0,1]$. 
Thus in order to increase power against such alternatives we consider the test statistic
    \begin{align} \label{eq:supremumstatistic}
        \widehat{T}_n = 
        \frac{\sqrt{|J_n|}}{\sqrt{\widehat{\mathcal{V}}_n(1)}}  \sup_{0 \leq t \leq 1} 
        \big|
            \widehat{\gamma}^{(n)}_{t}
        \big|.
    \end{align}
We refer to $\widehat{T}_n$ as the \emph{Local Covariance Test statistic} (LCT statistic). We proceed to show that the LCT statistic can be turned into a test of $H_0$ with asymptotic level $\alpha$, and which has asymptotic power against any alternative with a non-zero LCM. This is the best we can hope for of any test based on the LCM estimator. 

We note that it might be possible to establish similar results for other norms of the LCM, for example, a statistic based on a weighted $L_2$-norm.
However, since other norms of the distributional limit $U$ will generally have a distribution with a complicated dependency on $\mathcal{V}$, we believe that the LCT statistic is the simplest to construct.

To establish uniform asymptotic level via Corollary
\ref{cor:statisticsdistribution} for tests based on test statistics such as \eqref{eq:supremumstatistic} we need to assume that the asymptotic variances in $t = 1$ 
are uniformly bounded away from zero.

\begin{asm}  \label{asm:UniformVariance}
    There exists $\delta_1 > 0$ such that for all $\theta \in \Theta$ it holds 
    that $\mathcal{V}^{\theta}(1) \geq \delta_1$.
\end{asm}

\subsection{Type I and type II error control}
We proceed to show that under $H_0$, the LCT statistic is distributed as $\sup_{0\le t \le 1} |B_t|$, where $(B_t)$ is a standard Brownian motion. From this point onwards, we let $S$ denote a random variable with such a distribution and note that its CDF can be written as:
    \begin{align} \label{eq:supdistribution}
        F_S(x) = \mathbb{P}(S \leq x ) = \frac{4}{\pi} \sum_{k=0}^\infty \frac{(-1)^k}{2k+1}
        \exp\left(- \frac{\pi^2(2k+1)^2}{8x^2}\right),
        \qquad x>0.
    \end{align}
See, for example, Section 12.2 in \citet{SchillingPartzsch:2012} where the formula is derived from Lévy's triple law. 

The $p$-value for a test of $H_0$ equals $1 - F_S(\widehat{T}_n)$, and since the series in \eqref{eq:supdistribution}
converges at an exponential rate, the $p$-value can be computed with high numerical precision by truncating the series. Given a significance level $\alpha \in (0,1)$, we also let $z_{1-\alpha}$ denote the $(1-\alpha)$-quantile of $F_S$, which exists and is unique since the right-hand side of \eqref{eq:supdistribution} is strictly increasing and continuous. The \emph{Local Covariance Test} (LCT) with significance level $\alpha$ is then defined by
\begin{align}\label{eq:LCT}
    \Psi_n 
    = \Psi_n^\alpha
    = \one(F_S(\widehat{T}_n)> 1-{\alpha}) 
    = \one(\widehat{T}_n > z_{1-{\alpha}}).
\end{align}

From Theorem \ref{thm:main} we can now deduce the asymptotic properties of the LCT under 
the hypothesis of conditional local independence. 
Recall that $\convUDnull$ denotes uniform convergence in distribution under $H_0$.
\begin{thm}\label{thm:LCTlevel}
    Let Assumptions \ref{asm:UniformBounds}, \ref{asm:UniformRates} and \ref{asm:UniformVariance} hold true. Then it holds that
    $$
        \widehat T_n^\theta \convUDnull S
    $$
    as $n\to \infty$. As a consequence, for any $\alpha \in (0,1)$,
    \begin{align*}
        \limsup_{n\to \infty} \sup_{\theta \in \Theta_0} \mathbb{P}(\Psi_n^{\alpha, \theta} =1) \leq \alpha.
    \end{align*}
    In other words, the Local Covariance Test defined in \eqref{eq:LCT} has uniform asymptotic level~$\alpha$.
\end{thm}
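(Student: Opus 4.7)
The plan is to deduce the theorem from Corollary~\ref{cor:statisticsdistribution} by expressing $\widehat{T}_n$ as a continuous functional of $(\sqrt{|J_n|}\widehat{\gamma}^{(n)}, \widehat{\mathcal{V}}_n)$ and then identifying the limit law via a time-change argument.

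First, I would define $\mathcal{J}\colon D[0,1]\times D[0,1] \to \real$ by $\mathcal{J}(f,g) = g(1)^{-1/2}\|f\|_\infty$, so that $\widehat{T}_n = \mathcal{J}(\sqrt{|J_n|}\widehat{\gamma}^{(n)}, \widehat{\mathcal{V}}_n)$. By Assumption~\ref{asm:UniformVariance}, every $g$ in the uniform closure of $\{\mathcal{V}^\theta : \theta \in \Theta_0\}$ satisfies $g(1) \geq \delta_1 > 0$, since the evaluation map $g\mapsto g(1)$ is $1$-Lipschitz in the uniform topology. Combined with the fact that $\|\cdot\|_\infty$ is continuous on $D[0,1]$, this shows that $\mathcal{J}$ is continuous on $C[0,1]\times \overline{\{\mathcal{V}^\theta : \theta \in \Theta_0\}}$, and Corollary~\ref{cor:statisticsdistribution} then yields
\begin{align*}
    \widehat{T}_n^\theta \convUDnull \mathcal{V}^\theta(1)^{-1/2}\sup_{t\in[0,1]}|U^\theta_t|.
\end{align*}

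The next step is to recognise this limit law as that of $S$ for every $\theta \in \Theta_0$. Since $U^\theta$ is a continuous, mean-zero Gaussian martingale with quadratic variation $\mathcal{V}^\theta$, the Dambis--Dubins--Schwarz theorem yields a standard Brownian motion $B^\theta$ (possibly on an enlarged probability space) with $U_t^\theta = B^\theta_{\mathcal{V}^\theta(t)}$. Continuity and monotonicity of $\mathcal{V}^\theta$ give $\sup_{t\in[0,1]}|U^\theta_t| = \sup_{s\in[0,\mathcal{V}^\theta(1)]}|B^\theta_s|$, and Brownian scaling produces the distributional identity $\sup_{s\in[0,c]}|B^\theta_s| \stackrel{d}{=} \sqrt{c}\cdot S$ for $c = \mathcal{V}^\theta(1)$. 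Dividing by $\sqrt{\mathcal{V}^\theta(1)}$ shows that the limit equals $S$ in distribution for every $\theta \in \Theta_0$, and since this limit no longer depends on $\theta$ the uniform weak convergence automatically descends to $\widehat{T}_n^\theta \convUDnull S$.

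For the level assertion I would use that $F_S$ is continuous and strictly increasing, so $z_{1-\alpha}$ is a continuity point of $F_S$. Unpacking the definition of $\convUDnull$ at $z_{1-\alpha}$ then gives
\begin{align*}
    \sup_{\theta\in\Theta_0}\bigl|\mathbb{P}(\widehat{T}_n^\theta > z_{1-\alpha}) - \alpha\bigr| \to 0,
\end{align*}
which yields $\limsup_{n\to\infty}\sup_{\theta\in\Theta_0}\mathbb{P}(\Psi_n^{\alpha,\theta}=1) \leq \alpha$ (in fact with equality). The only delicate point I anticipate is bookkeeping around possible degeneracy of $\mathcal{V}^\theta$ (e.g.\ constancy on subintervals, in which case the time-change requires an enlarged filtration), but this does not alter the distribution of $\sup_{t\in[0,1]}|U^\theta_t|$, so the Brownian scaling identity, and hence the whole argument, goes through unchanged.
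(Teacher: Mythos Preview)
Your proposal is correct and follows essentially the same route as the paper: apply Corollary~\ref{cor:statisticsdistribution} to the functional $\mathcal{J}(f,g)=g(1)^{-1/2}\|f\|_\infty$, then identify the limit as $S$ via a Brownian time-change (the paper invokes its Proposition on the representation $U_t^\theta \stackrel{d}{=} B_{\mathcal{V}^\theta(t)}$, which is exactly your DDS step), and finally pass from uniform weak convergence to uniform convergence of tail probabilities at the continuity point $z_{1-\alpha}$. Two minor bookkeeping points: your $\mathcal{J}$ is technically undefined on parts of $D[0,1]\times D[0,1]$ where $g(1)\le 0$, so you should extend it measurably (e.g.\ set it to $0$ there) as the paper does; and the step from $\convUDnull$ to $\sup_\theta|\mathbb{P}(\widehat T_n^\theta>z_{1-\alpha})-\alpha|\to 0$ is not literally ``unpacking the definition'' (which is in terms of the bounded Lipschitz metric) but requires a short Portmanteau-type argument or a reference such as Theorem~4.1 in \citet{bengs2019uniform}, which the paper cites.
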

In general, we cannot expect that the test has power against alternatives to $H_0$ for which the LCM is the zero-function. This is analogous to other types of conditional independence tests based on conditional 
covariances, e.g., GCM \citep{shah2020hardness}. However, we do have the following result that establishes power against local alternatives with $\|\gamma\|_\infty$ decaying at an order of at most $|J_n|^{-1/2}$.
\begin{thm}\label{thm:rootNpower}
    Let Assumptions \ref{asm:UniformBounds} and \ref{asm:UniformRates} hold true. 
    Using the additive residual process it holds that for any $0<\alpha<\beta<1$ there exists $c>0$ such that
    \begin{align*}
        \liminf_{n\to \infty}\inf_{\theta \in \mathcal{A}_{c,n}} \mathbb{P}(\Psi_n^{\alpha,\theta} = 1) \geq \beta,
    \end{align*}
    where $\mathcal{A}_{c,n} = \{\theta \in \Theta \mid \|\gamma^\theta\|_\infty \geq c |J_n|^{-1/2}\}$.
\end{thm}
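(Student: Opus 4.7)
The plan is to lower-bound the test statistic $\widehat{T}_n^\theta$ on a high-probability event and to show that, when $c$ is chosen sufficiently large, this lower bound exceeds the rejection threshold $z_{1-\alpha}$ for every $\theta \in \mathcal{A}_{c,n}$. Two ingredients are central: the uniform stochastic boundedness of $\sqrt{|J_n|}(\widehat{\gamma}^{(n)}-\gamma)$ from Theorem~\ref{thm:main}~ii), which is valid specifically for the additive residual process, together with a uniform upper bound on the variance estimator $\widehat{\mathcal{V}}_n(1)$.

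First, given $0<\alpha<\beta<1$, I would set $\varepsilon = (1-\beta)/2$. By Theorem~\ref{thm:main}~ii) there exists $K>0$ such that
\[
\limsup_{n\to\infty}\sup_{\theta\in\Theta}\mathbb{P}\!\left(\sqrt{|J_n|}\,\|\widehat{\gamma}^{(n),\theta}-\gamma^\theta\|_\infty > K\right) < \varepsilon.
\]
Next, I would obtain a uniform upper bound on $\widehat{\mathcal{V}}_n(1)$. By Assumption~\ref{asm:UniformBounds}, $|G^\theta|\leq C'$ and $\blambda_t^\theta\leq C$, so the identity $\mathcal{V}^\theta(1)=\ex(\int_0^1 (G_s^\theta)^2 \blambda_s^\theta\,\mathrm{d}s)$, which holds under Assumption~\ref{asm:UniformBounds} by Proposition~\ref{lem:truemartingales}, yields $\mathcal{V}^\theta(1)\leq (C')^2 C =: M_0$ uniformly in $\theta$. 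Combined with Proposition~\ref{prop:varianceconsistent}, this gives $\sup_{\theta\in\Theta}\mathbb{P}(\widehat{\mathcal{V}}_n^\theta(1) > M_0 + 1) \to 0$; set $M := M_0+1$.

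Third, I would choose $c := K + z_{1-\alpha}\sqrt{M} + 1$. On the intersection of the two events above, which has probability at least $1-2\varepsilon=\beta$ uniformly in $\theta$ for all $n$ sufficiently large, and for any $\theta\in \mathcal{A}_{c,n}$ (so that $\sqrt{|J_n|}\|\gamma^\theta\|_\infty \geq c$), the reverse triangle inequality gives
\[
\sqrt{|J_n|}\,\|\widehat{\gamma}^{(n),\theta}\|_\infty \;\geq\; \sqrt{|J_n|}\,\|\gamma^\theta\|_\infty - K \;\geq\; c - K,
\]
and therefore
\[
\widehat{T}_n^\theta \;\geq\; \frac{c-K}{\sqrt{M}} \;>\; z_{1-\alpha},
\]
so the test rejects. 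Taking the liminf in $n$ and the infimum over $\mathcal{A}_{c,n}$ yields the claim.

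I do not anticipate a substantial obstacle here: the uniformity required is exactly what Theorem~\ref{thm:main}~ii) and Proposition~\ref{prop:varianceconsistent} supply over all of $\Theta$ (not only $\Theta_0$). The one structural restriction is that Theorem~\ref{thm:main}~ii) is proved only for the additive residual process—reflecting that the bias term $D_2^{(n)}$ is controlled only in that case—which is why the present uniform power result inherits the same restriction.
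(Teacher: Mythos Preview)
Your proposal is correct and follows essentially the same route as the paper: invoke Theorem~\ref{thm:main}~ii) for uniform stochastic boundedness of $\sqrt{|J_n|}(\widehat{\gamma}^{(n),\theta}-\gamma^\theta)$, use the uniform bound $\mathcal{V}^\theta(1)\leq C(C')^2$ (this is Lemma~\ref{lem:regularityofgammasigma}) together with Proposition~\ref{prop:varianceconsistent} to control $\widehat{\mathcal{V}}_n(1)$, and then apply the reverse triangle inequality to force $\widehat{T}_n^\theta>z_{1-\alpha}$ once $c$ is large enough. The only cosmetic difference is that the paper allocates the full $1-\beta$ to the event $\{\sqrt{|J_n|}\|\widehat{\gamma}^{(n),\theta}-\gamma^\theta\|_\infty>K\}$ and uses that $\mathbb{P}(\widehat{\mathcal{V}}_n^\theta(1)>1+\mathcal{V}^\theta(1))\to 0$ uniformly, whereas you split $1-\beta$ evenly; both work.
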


\subsection{Extension to cross-fitting} \label{sec:cf} 
In Section \ref{sec:asymptotics} we considered sample splitting with observations indexed by
$J_n^c$ used to estimate the two models and with observations indexed by $J_n$
used to estimate $\gamma$. Following \citet{chernozhukov2018}, we can improve
efficiency by cross-fitting, i.e., by flipping the roles of $J_n$ and $J_n^c$ to
obtain a second equivalent estimator of $\gamma$. Heuristically, the two
estimators are approximately independent, and thus their average should be a more
efficient estimator. This procedure generalizes directly to a partition $J_n^1
\cup \cdots \cup J_n^K = \{1,\ldots,n\}$ of the indices into $K$ disjoint folds.
The partition is assumed to have a uniform asymptotic density, meaning that
$|J_n^k|/n \to \frac{1}{K}$ as $n \to \infty$ for each $k$. 

We estimate $G$ and $\lambda$ using $(J_n^k)^c=\{1,\ldots, n\}\setminus
J_n^k$ and subsequently estimate $\gamma$ using $J_n^k$. Then the $K$-fold \emph{Cross-fitted LCM estimator}, abbreviated as X-LCM, is defined as the average LCM estimator over the $K$ folds, i.e.,
    \begin{align} \label{eq:KfoldI}
        \check{\gamma}_t^{K,(n)} 
        = \frac{1}{K} \sum_{k=1}^K \frac{1}{|J_n^k|}\sum_{j \in J_n^k} 
        \int_0^t \widehat{G}_{j,s}^{k,(n)}\mathrm{d} \widehat{M}^{k,(n)}_{j, s},
    \end{align}
where for each $j \in J_n^k$, the processes $\widehat{G}_{j}^{k,(n)}$ and $\widehat{M}^{k,(n)}_{j}$ are the model predictions of $G_j$ and $M_j$, respectively, based on training data indexed by $(J_n^k)^c$. 
We also define a $K$-fold version of the variance estimator:
    \begin{align} \label{eq:Kfoldsig}
        \check{\mathcal{V}}_{n}^K(t)
        = 
        \frac{1}{K} \sum_{k=1}^K \frac{1}{|J_n^k|} \sum_{j \in J_n^k} 
        \int_0^t \left(
            \widehat{G}_{j,s}^{k,(n)}\right)^2 \mathrm{d} N_{j,s}.
    \end{align}
Now, similarly to the LCT statistic, the cross-fitted estimator can be used to construct a test statistic,
    \begin{align} \label{eq:DMLtest}
        \check{T}_{n}^K 
        = \sqrt{\frac{n}{\check{\mathcal{V}}_{K,n}(1)}} \sup_{0\leq t \leq 1}\left| \check{\gamma}_t^{K,(n)} \right|,
    \end{align}
from which we define the following test of conditional local independence.
\begin{definition} \label{dfn:test} Let $\alpha \in (0,1)$ and let $\check{T}_{n}^K$ be
    the test statistic from \eqref{eq:DMLtest}. The $K$-fold
    \emph{Cross-fitted Local Covariance Test} (X-LCT) with
    significance level $\alpha$ is defined by
        \begin{align*}
            \check{\Psi}_n^K 
            = \one(F_S(\check{T}_{n}^K) > 1-\alpha)
            = \one(\check{T}_{n}^K > z_{1-\alpha}),
        \end{align*}
    where $z_{1-\alpha}$ is the $(1-\alpha)$-quantile of the distribution
    function $F_S$ given in \eqref{eq:supdistribution}.
\end{definition}

We provide a summary of the computation of the X-LCT in Algorithm \ref{alg:X-lct}. The asymptotic analysis of $\widehat{\gamma}^{(n)}$ generalizes to $\check{\gamma}^{K,(n)}$, but we will refrain from restating all results for the $K$-fold cross-fitted estimator. For simplicity, we focus on the fact that the X-LCT has asymptotic level $\alpha$.
\begin{thm} \label{thm:LCTXlevel}
    Suppose that Assumption \ref{asm:UniformRates} is satisfied for every sample split $J_n^k \cup (J_n^k)^c, k=1,\ldots,K$. Under Assumptions \ref{asm:UniformBounds} and \ref{asm:UniformVariance}, the X-LCT statistic satisfies
    $$
        \check{T}_{n}^{K,\theta} \convUDnull S
    $$
    for $n\to \infty$. In particular, the X-LCT has uniform asymptotic level $\alpha$.
\end{thm}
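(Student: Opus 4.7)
The plan is to mimic the proof of Theorem~\ref{thm:LCTlevel} after first establishing a fold-wise decomposition of $\sqrt{n}\check{\gamma}^{K,(n)}$ that behaves like an average of $K$ (asymptotically independent) copies of the sample-split LCM estimator.

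First, denote by $\widehat{\gamma}^{k,(n)}$ and $\widehat{\mathcal{V}}_n^k$ the LCM estimator and variance estimator built on fold $k$, so that
\begin{equation*}
\sqrt{n}\,\check{\gamma}^{K,(n)}_t \;=\; \frac{1}{K}\sum_{k=1}^K \frac{\sqrt{n}}{\sqrt{|J_n^k|}}\,\sqrt{|J_n^k|}\,\widehat{\gamma}^{k,(n)}_t,
\qquad
\check{\mathcal{V}}_n^K(t) \;=\; \frac{1}{K}\sum_{k=1}^K \widehat{\mathcal{V}}_n^k(t).
\end{equation*}
Using the decomposition in \eqref{eq:decomposition} on each fold, and the fact that $D_1^{k,(n)} = D_2^{k,(n)} = 0$ under $H_0$, Theorem~\ref{thm:main}(i) gives $\sqrt{|J_n^k|}\,\widehat{\gamma}^{k,(n),\theta} \convUDnull U^{k,\theta}$ in $D[0,1]$, where $U^{k,\theta}$ is a mean zero continuous Gaussian martingale with variance function $\mathcal{V}^\theta$. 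This step uses that Assumption~\ref{asm:UniformRates} is assumed to hold on each fold, and that Assumption~\ref{asm:UniformBounds} is automatically inherited.

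The key observation for combining the folds is that the oracle terms $U^{k,(n)}_t = |J_n^k|^{-1/2}\sum_{j\in J_n^k}\int_0^t G_{j,s}\,\mathrm{d}\bM_{j,s}$ depend only on the true (non-estimated) processes on fold $k$, and because the folds are disjoint, $U^{1,(n)},\ldots,U^{K,(n)}$ are \emph{mutually independent} as processes (for fixed $n$). The remainder terms $R_1^{k,(n)}, R_2^{k,(n)}, R_3^{k,(n)}$ on each fold vanish uniformly by Proposition~\ref{prop:remainderterms}. Combining a functional (Skorokhod-space) and uniform-in-$\theta$ version of Lemma~\ref{lem:DML1asymp} --- which follows by joint convergence of the independent oracle terms (Theorem 2.8(ii) in Billingsley), the convolution property of Gaussian measures, and the fact that $\sqrt{n/K}/\sqrt{|J_n^k|}\to 1$ --- yields
\begin{equation*}
\sqrt{n}\,\check{\gamma}^{K,(n),\theta} \;\convUDnull\; U^\theta \qquad \text{in } D[0,1],
\end{equation*}
with $U^\theta$ a mean zero continuous Gaussian martingale having variance function $\mathcal{V}^\theta$ (since $\tfrac{1}{K}\sum_{k=1}^K U^{k,\theta}/\sqrt{K}$ has the same law as $U^\theta$).

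Next, applying Proposition~\ref{prop:varianceconsistent} to each fold gives $\sup_{t}|\widehat{\mathcal{V}}_n^{k,\theta}(t) - \mathcal{V}^\theta(t)|\convUP 0$ for every $k$, and the triangle inequality then gives the analogous uniform consistency for $\check{\mathcal{V}}_n^{K,\theta}$. Assumption~\ref{asm:UniformVariance} guarantees $\mathcal{V}^\theta(1)\ge \delta_1>0$ uniformly, so $x\mapsto x^{-1/2}$ is continuous at the relevant limit, and by the continuous mapping argument underlying Corollary~\ref{cor:statisticsdistribution} (applied to the functional $\mathcal{J}(f,v) = v(1)^{-1/2}\sup_{t\in[0,1]}|f(t)|$) we conclude
\begin{equation*}
\check{T}_n^{K,\theta} \;\convUDnull\; \mathcal{V}^\theta(1)^{-1/2}\sup_{t\in[0,1]}|U^\theta_t|.
\end{equation*}
Since $U^\theta/\sqrt{\mathcal{V}^\theta(1)}$ is distributed as a standard Brownian motion on $[0,1]$ (by the time change $s\mapsto \mathcal{V}^\theta(s)/\mathcal{V}^\theta(1)$ applied to the Gaussian martingale $U^\theta$), its supremum has the distribution of $S$ given in~\eqref{eq:supdistribution}, and this limit law is independent of $\theta$. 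Uniform asymptotic level then follows as in Theorem~\ref{thm:LCTlevel} from continuity of $F_S$. The main obstacle is the genuinely functional combination of the $K$ fold-wise weak limits into a single process-level limit uniformly in $\theta$; once this is settled by leveraging fold-independence of the oracle terms and the tightness implicit in Theorem~\ref{thm:main}(i), the remainder of the argument is a direct lift of the sample-split case.
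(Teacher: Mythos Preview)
Your proposal is correct and follows essentially the same approach as the paper: decompose each fold via \eqref{eq:decomposition}, use mutual independence of the oracle terms $U^{k,(n)}$ across disjoint folds to combine them (the paper formalizes the ``functional, uniform-in-$\theta$ version of Lemma~\ref{lem:DML1asymp}'' you invoke as Lemma~\ref{lem:sumofindependent}), then apply fold-wise variance consistency and the continuous mapping argument of Corollary~\ref{cor:statisticsdistribution}. One small slip: the expression with the law of $U^\theta$ should be $\tfrac{1}{\sqrt{K}}\sum_{k=1}^K U^{k,\theta}$, not $\tfrac{1}{K}\sum_{k=1}^K U^{k,\theta}/\sqrt{K}$.
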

Note that cross-fitting recovers full efficiency in the sense that the scaling
factor is $\sqrt{n}$ rather than $\sqrt{|J_n|}$, which leads to a more powerful test.
Moreover, the asymptotic
distribution of $\check{T}_{n}^K$ does not depend on the number of folds $K$, and any difference between various choices of $K$ can thus be attributed to finite sample errors. Larger values of $K$ will allocate more data to estimation of $G$ and $\lambda$, which intuitively should be the harder estimation problem. Following Remark 3.1 in \citet{chernozhukov2018}, we believe that a default choice of $K=4$ or $K=5$ should be reasonable in practice.


\begin{algorithm} \caption{$K$-fold cross-fitted local covariance test (X-LCT)} \label{alg:X-lct}
  \textbf{input}: processes $(N_j,X_j,Z_j)_{j=1,\ldots,n}$, partition $J_n^1 \cup \cdots \cup J_n^K$ of indices into $K$ folds\;
  \textbf{options}: historical regression methods for estimation of $\lambda$ and $G$ given $(N,Z)$, 
  
  discrete time grid $\mathbb{T}\subset [0,1]$, significance level $\alpha\in (0,1)$\; %
  \Begin{
    \For{$k = 1,\ldots, K$}{
        apply Algorithm \ref{alg:lcm} on sample split $J_n^k \cup (J_n^k)^c$ to compute $\widetilde{\gamma}^{k,(n)}$ on the grid $\mathbb{T}$\;
        use Equation \eqref{eq:empiricalvariance} on sample split $J_n^k \cup (J_n^k)^c$ to compute $\widetilde{\mathcal{V}}_{k,n}(1)$ \;
    }
    compute 
    $\check{\gamma}^{K,(n)} 
        = \frac{1}{K} \sum_{k=1}^K \widetilde{\gamma}^{k,(n)}$
    on grid $\mathbb{T}$ \;
    compute  
        $\check{\mathcal{V}}_{K,n}(1) = \frac{1}{K} \sum_{k=1}^K \widetilde{\mathcal{V}}_{k,n}(1)$\; 
    compute the X-LCT statistic
    $\check{T}_n^K = \sqrt{n} \cdot \max_{t\in \mathbb{T}}|\check{\gamma}_t^{K,(n)}| / \sqrt{\check{\mathcal{V}}_{K,n}(1)}$ \;
    compute $p$-value $\check{p}=1-F_S(\check{T}_n^K)$ by truncating the series in Equation \eqref{eq:supdistribution}.
  }
  \textbf{output}: 
  the X-LCT $\check{\Psi}_n^K = \one(\check{p} < \alpha)$, and optionally the $p$-value $\check{p}$\;
\end{algorithm}

\section{Simulation study} \label{sec:simulations} In this section we present
the results from a simulation study based on the Cox example introduced in
Section \ref{sec:introex}. We elaborate in Section \ref{subsec:ex} on the 
full model specification used for the simulation study -- which will also 
illuminate how $\Pi$ and $\lambda$ can be modeled and estimated. 
The results from the simulation study  
focus on the distribution of the X-LCT statistic $\check{T}_{n}^K$ and validate the asymptotic level and power of the X-LCT $\check{\Psi}_n^K$. The latter is also compared to a hazard ratio test based on the marginal Cox model \eqref{eq:coxmarg}. The simulations were implemented in Python and
the code is
available\footnote{\url{https://github.com/AlexanderChristgau/nonparametric-cli-test}}.

\subsection{Cox model continued} \label{subsec:ex}
Consider the same setup as in Section \ref{sec:introex}. To fully specify the 
model we need to specify the distribution of the processes $X$, $Y$ and $Z$.
We suppose that $X$ and $Y$ can be written in terms of $Z$ as
\begin{align}\label{eq:historicallinear}
X_t = \int_0^t Z_s \rho_X(s, t) \mathrm{d}s + V_t,
\qquad \text{and} \qquad
Y_t = \int_0^t Z_s \rho_Y(s, t) \mathrm{d}s + W_t,
\end{align}
where $\rho_X$ and $\rho_Y$ are two functions defined on the triangle $\{(s,t)
\in [0,1]^2 \mid s \leq t \}$, and where $V = (V_t)_{0\leq t \leq 1}$ and $W =
(W_t)_{0\leq t \leq 1}$ are two noise processes with mean zero. The processes
$Z$, $V$ and $W$ are assumed independent, which implies \eqref{eq:exas} and thus 
that $N$ is conditionally locally independent of $X$ given $\mathcal{F}_t = \mathcal{F}_t^{N,Z}$.

The specific dependency of $X$ and $Y$ on $Z$ is known as the \emph{historical functional linear
model} in functional data analysis \citep{Malfait:2003}.
Within this model, 
\begin{equation} \label{eq:piModel}
\Pi_t = \ex(X_t \mid \mathcal{F}_{t-}) = \int_0^t Z_s \rho_X(s, t) \mathrm{d} s,
\end{equation}
and on $(T \geq t)$
\begin{align*}
    \ex(e^{Y_t} \mid \mathcal{F}_t) 
    = e^{\int_{0}^t Z_s \rho_Y(s, t) \mathrm{d} s} 
    \ex(e^{W_t} \mid T \geq t)
    = e^{\tilde{\beta}_0(t) 
        + \int_{0}^t Z_s \rho_Y(s, t) \mathrm{d} s},
\end{align*}
where $\tilde{\beta}_0(t) = \log (\ex(e^{W_t} \mid T \geq t))$. Since
$$
    \lambda_t 
    = \one(T\geq t)\lambda_t^0 
        e^{\beta Z_t} \ex(e^{Y_t} \mid \mathcal{F}_t),
$$
it follows that on $(T \geq t)$,
\begin{align} 
\log(\lambda_t) 
    = \beta_0(t) + \beta Z_t 
        + \int_{0}^t Z_s \rho_Y(s, t) \mathrm{d} s,
\label{eq:lambdaModel}
\end{align}
where the two baseline terms depending only on time have been merged into
$\beta_0$. 

The computations above suggest how the estimators $\widehat \lambda^{(n)}$ and $\widehat \Pi^{(n)}$ 
could be constructed. That is, $\widehat \lambda^{(n)}$ could be based on estimates of
$\beta$, $\beta_0$ and $\rho_Y$ from the
observations $(T_j,Z_j)_{j\in J_n^c}$, and $\widehat \Pi^{(n)}$ could be based on estimates of  
$\rho_X$ from $(X_j,Z_j)_{j\in J_n^c}$. We would then have 
$$\widehat{\Pi}^{(n)}_{j,t} =  \int_0^t Z_{j,s} \widehat{\rho}_X^{(n)}(s, t) \mathrm{d} s$$
for $j \in J_n$ where $\widehat{\rho}_{X}^{(n)}$ denotes the estimate of $\rho_X$, 
and similarly for $\widehat \lambda^{(n)}_{j,t}$.
Particular choices of estimators $\widehat{\rho}_{X}^{(n)}$ and $\widehat{\rho}_{Y}^{(n)}$ and their 
theoretical properties are reviewed in Section~\ref{sec:estimationofnuisance} in the supplement. Our conclusion from this 
review is that for the historical functional linear model, sufficient rate results 
should be possible but have not yet been established rigorously.

\subsection{Sampling scheme}\label{sec:SamplingScheme}
The actual time-discretized simulations and computations were implemented using an
equidistant grid $\mathbb{T} = (t_i)_{i=1}^{q}$ with $q=128$ time points $0=t_1
< \cdots < t_q = 1$. Inspired by \citet{harezlak2007penalized}, we generated the
processes as follows: let $\xi\in \mathbb{R}^3$ and $V,W,\VVV \in
\mathbb{R}^\mathbb{T}$ be independent random variables such that $\xi
\sim \mathcal{N}(0,\operatorname{I}_3)$ and such that $V,W$, and $\VVV$ are identically
distributed with 
    $
        V_{t_1},V_{t_2} - V_{t_1}, \ldots, V_{t_q}-V_{t_{q-1}} 
        \stackrel{\mathrm{i.i.d.}}{\sim} \mathcal{N}(0,1 / q).
    $
Then the process $Z$ is determined by
    $
        Z_t = \xi_1 + \xi_2 t + \sin(2\pi \xi_3 t) + \VVV_t
    $
for $t\in \mathbb{T}$. The processes $X$ and $Y$ were then given by the
historical linear model \eqref{eq:historicallinear} with kernels $\rho_X$ and
$\rho_Y$ being one of the following four kernels:
\begin{align*}
    \begin{array}{l l}
        \text{zero: } (s,t)\mapsto 0, 
        & \text{constant: }(s,t)\mapsto 1, \\
        \text{Gaussian: } (s,t)\mapsto e^{-2(t-s)^2}, \qquad
        &\text{sine: } (s,t)\mapsto \sin(4t-20s).
    \end{array}
\end{align*}
To compute $X$ and $Y$, we evaluated the kernels on $\{(s,t)\in \mathbb{T}^2 \mid
s\leq t\}$ and approximated the integrals by Riemann sums. The full intensity for $N_t=\one(T\leq t)$ was specified with a Weibull baseline of the form
$
    \lambda_t^{\text{full}} = \one(T\geq t) \beta_1 t^2 
        \exp\left(\beta_2 Z_t + Y_t\right),
$
for $\beta_1>0$ and a choice of $\beta_2 \in \{-1,1\}$. 
To sample $T$ we applied the inverse hazard method, which utilizes that $\Lambda_T^{\text{full}}$ is standard exponentially distributed. That is, we sampled $E \sim \mathrm{Exp}(1)$ and numerically computed $T = \max\{ t\in \mathbb{T} \mid \Lambda_t^{\text{full}} < E\}$ as a discretized approximation.
For any given parameter setting, the baseline coefficient $\beta_1$ was chosen sufficiently large to ensure that 
$\Lambda_t^{\text{full}} \geq E$
would occur before time $t=1$ in more that $\frac{q-1}{q}\cdot n$ samples. 

The simulation setting used to sample the data for Figures \ref{fig:endpoint_example} and \ref{fig:timevaryingexample} was
$\beta_2=-1$ and $\rho_X=\rho_Y =\text{constant}$.

With this setup, Assumption \ref{asm:UniformBounds} is satisfied if $V$, $W$ 
and $\VVV$ were bounded. Since we use the Gaussian distribution, they are 
technically not bounded, but they could be made bounded by introducing a 
lower and upper cap. Due to the light tails of the Gaussian distribution
such caps would have no noticeable effect on the simulation results, and 
the results we report are generated without a cap. 

The implementation details for the X-LCT and the hazard ratio test are given in Section~\ref{fig:H0pvals_full} in the supplement.

\begin{figure}
    \includegraphics[width=\linewidth]{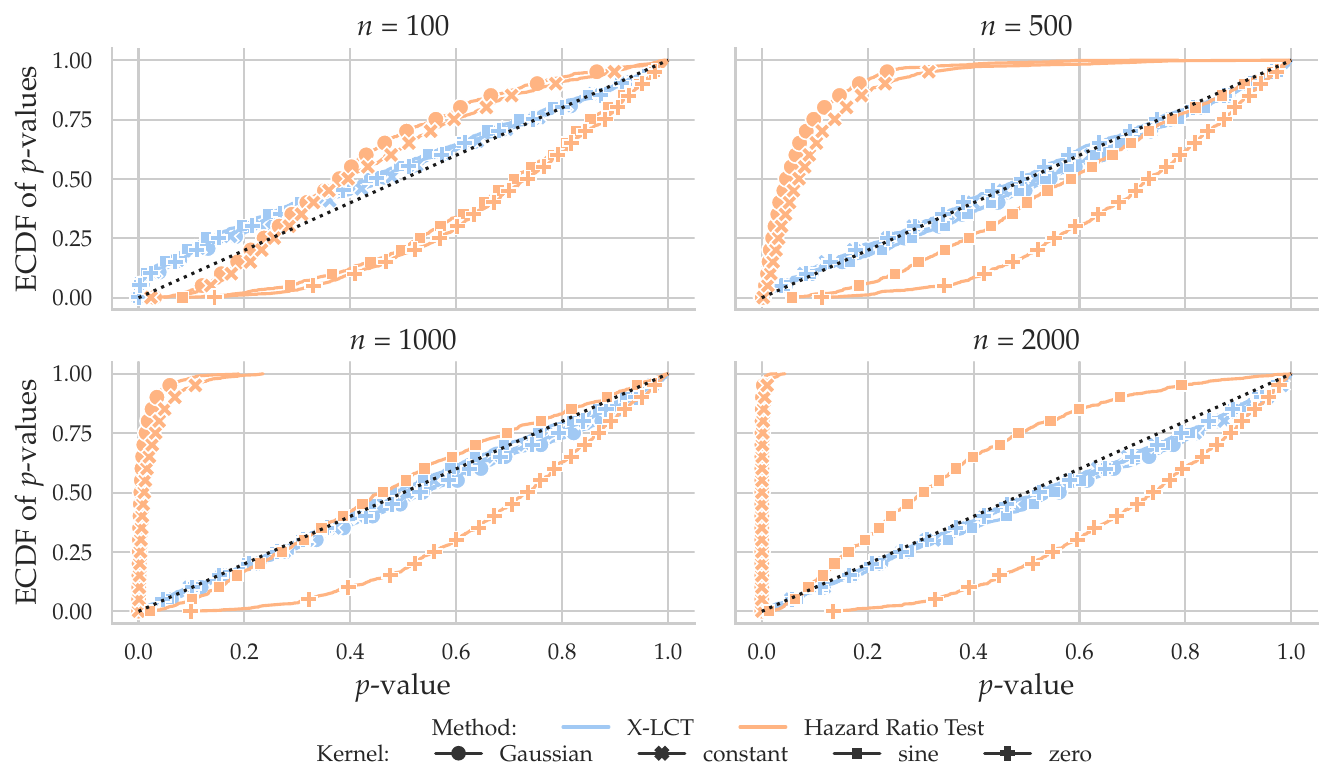}
        \caption{Empirical cumulative distribution functions of simulated
        $p$-values for the cross-fitted local covariance test and the hazard ratio test.
        The simulated data
        satisfies the hypothesis of conditional local independence, so the
        $p$-values are supposed to be uniformly distributed, and the CDF 
        should fall on the diagonal dotted line.}\label{fig:H0pvals2}
\end{figure}

\subsection{Distributions of $p$-values under $H_0$}
We examine the distributional approximation $\check{T}_{n}^K \stackrel{\mathrm{as.}}{\sim} S$, cf. Theorem \ref{thm:LCTXlevel}, by comparing the $p$-values
$1-F_S(\check{T}_{n}^K)$ to a uniform distribution. Figure \ref{fig:H0pvals2}
shows the empirical distribution functions of the $p$-values computed from data
simulated according to the scheme described in the previous section. 
The results are aggregated over the two choices of $\beta_2 \in \{-1,1\}$ since
these two settings were found to be similar. For more detailed results from
the experiment, see Figure~\ref{fig:H0pvals_full} in Section~\ref{sec:extrafigs} in the supplement, 
which also includes the $p$-values corresponding to the endpoint test statistic.

For the hazard ratio test, Figure \ref{fig:H0pvals2} shows that the $p$-values
are sub-uniform for the zero-kernel. In this case, the marginal Cox model is
correct, and the non-uniformity of the $p$-values can be explained by the
$L_2$-penalization. For the constant and Gaussian kernels the hazard ratio test
fails completely, whereas for the sine kernel, the mediated effect of $Z$ on $T$
through $Y$ is more subtle, and the model misspecification only becomes apparent
for $n=2000$. Overall, these results are consistent with the reasoning in the
Section \ref{sec:introex}: a test based on the misspecified Cox model will wrongly 
reject the hypothesis of conditional local independence. 

For the proposed X-LCT, Figure \ref{fig:H0pvals2} shows that the associated $p$-values are slightly anti-conservative for $n=100$. This is to be expected, and can be
explained by the finite sample errors leading to more extreme values of $\check{T}_{n}^K$ than the approximation by $S$. As $n$ increases, these errors
become smaller -- and for $n=2000$ the $p$-values actually seem to be
sub-uniform. The sub-uniformity may be explained by the time discretization,
since the maximum of the process is taken over $\mathbb{T}$ rather than $[0,1]$.
Figure~\ref{fig:supremumapprox} in Section~\ref{sec:extrafigs} in the supplement illustrates the asymptotic effect of
the time discretization which supports this claim. Another support of this claim
is that the endpoint test does not appear to give sub-uniform $p$-values for
large $n$, see Figure~\ref{fig:H0pvals_full}. We finally note that the 
distributions of the $p$-values for our proposed test is largely unaffected 
by the kernel used to generate the data. 

\begin{figure}
    \includegraphics[width = \linewidth]{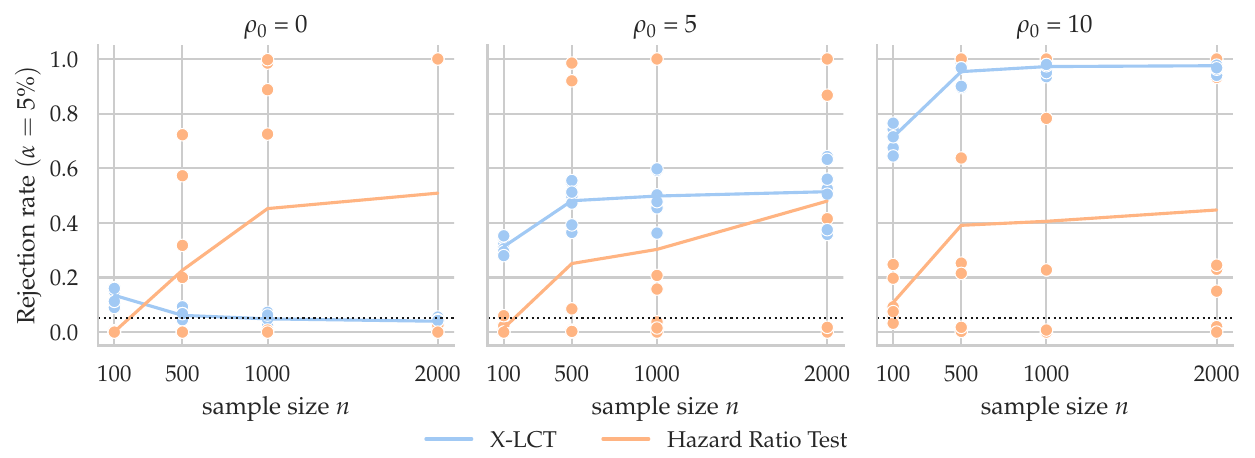}
    \caption{For each $\rho_0 \in \{0,5,10\}$, the lines show the average 
    rejection rates of our proposed test X-LCT (blue) and the hazard ratio test (orange)
    as functions of sample size, with each average taken over $8$ different 
    settings. For each setting, the rejection rate is computed from
    400 simulated datasets at a $5\%$ significance level 
    and the rejection rate is displayed with a dot.}
    \label{fig:alternatives}
\end{figure}

\subsection{Power against local alternatives} \label{subsec:localalternatives}
To investigate the power of the X-LCT we construct local alternatives to $H_0$ in
accordance with the right graph in Figure \ref{fig:lig} by replacing $Y_t$ by
the process $Y_t + \frac{\rho_0}{\sqrt{n}} X_t$. That is, for $\rho_0 \neq 0$,
blood pressure is then directly affected by pension savings, and $N_t$ is no
longer conditionally locally independent of $X_t$ given $\mathcal{F}_t$. In
terms of the full intensity, these local alternatives are equivalent to 
\begin{equation} \label{eq:alternative}
    \lambda_t^{\text{full}} = \one(T\geq t) \beta_1 t^2 
        \exp\left(\beta_2 Z_t + Y_t + \frac{\rho_0}{\sqrt{n}} X_t\right).
\end{equation}
We simulated data for the dependency parameter $\rho_0 \in \{0,5,10\}$. Note
that $\rho_0 = 0$ corresponds to our previous sampling scheme with conditional
local independence. For each of the $96 = 4\times 2 \times 4 \times 3$ choices
of kernel, $\beta_2$, $n$ and $\rho_0$ we ran the tests $400$ times and computed the
$p$-values. For simplicity, we report the rejection rate at an $\alpha = 5\%$
significance level and the results are shown in Figure \ref{fig:alternatives}.

In the leftmost panel, the data was generated under $H_0$ and the plot shows what
we noted previously, namely that the X-LCT holds level for large $n$,
whereas the hazard ratio test does not. 
For the local alternatives, $\rho_0=5$
and $\rho_0 = 10$, we note that the power of the hazard ratio test is quite
sensitive to the simulation settings. For some settings it has no power, while
for others it has some power. 

In contrast, the proposed X-LCT has power against all of the 
local alternatives. The power increases with $n$ initially but 
stabilizes from around $n=1000$. This is similar to the 
behavior observed under the null hypothesis and is not surprising.
We expect that the sample size needs to be sufficiently large for the 
nonparametric estimators to work sufficiently well, and we expect 
the sufficient sample size to be mostly 
unaffected by the value of $\rho_0$. For fixed $n$, we also note that the power 
of $\check{\Psi}_n^K$ is fairly robust with respect to the choice of $\beta_2$ 
and the choice of kernel. Overall, we find that the X-LCT
is applicable in these settings with historical effects: 
it has consistent power against the $\sqrt{n}$ alternatives while controlling type I error for $n$ reasonably large.

In Section~\ref{sec:extrafigs} in the supplement, we provide additional numerical results for time-varying alternatives, and we compare the X-LCT with its endpoint counterpart.

\section{Discussion} \label{sec:dis} 
The LCM was introduced as a functional parameter that quantifies deviations from the hypothesis $H_0$ of conditional local independence. 
We showed how the parameter may be expressed in several ways, but that it is the representation in terms of the residual process that allows us to estimate the LCM with a $\sqrt{n}$-rate under $H_0$ without parametric model assumptions. The residual process was introduced as an abstract model of $X_t$ for each $t$ given the history up to time $t$, and we showed that such a residualization could be viewed as a form of orthogonalization. Similar ideas have been used recently for classical conditional independence testing, such as GCM \citep{shah2020hardness}, tests based on the partial 
copula \citep{Petersen:2021}, and GHCM \citep{lundborg2021conditional}. It is, however,  not possible to use any of these to test $H_0$, which 
cannot be expressed as a classical conditional independence. Our test based on the LCM 
is the first nonparametric test of conditional local independence with substantial 
theoretical support, and we propose to test $H_0$ in practice by using X-LCT based on the cross-fitted estimator of LCM.

Contrary to the tests of conditional independence mentioned above, we need
sample splitting -- even under $H_0$ -- to achieve our asymptotic results. 
We do not believe that this can be avoided. The standard argument to avoid this uses 
classical conditional independence in a crucial way, which does not translate 
into our framework -- basically because we condition on information that
changes with time. Our simulation study also indicates 
that sample splitting 
or cross-fitting is needed in practice for the LCM estimator to be unbiased 
under $H_0$.

While our cross-fitted estimator of the LCM, the X-LCM, share 
some of the general patterns of other double machine learning procedures -- including 
the overall decomposition \eqref{eq:decomposition} -- our analysis and results required 
a range of generalizations of known results and some novel ideas. 
The asymptotic distribution of the leading term, $U^{(n)}$, is also a well 
known consequence of Rebolledo's CLT, see, e.g., Section V.4 in \citep{AndersenBorganGillKeiding:1993} for related results in the context of survival analysis. However, we generalized this result to uniform convergence in the 
Skorokhod space $D[0,1]$, and we introduced new techniques for handling 
the remainder terms. These novel techniques are made necessary by the 
decomposition \eqref{eq:decomposition} being a decomposition of 
stochastic processes indexed by time. 
We outline below the three most important technical contributions we made.


First, to obtain uniform control of level and power, all asymptotic results in Section \ref{sec:asymptotics} are formulated in terms of uniform stochastic convergence. Since this notion of convergence had not previously been considered on general metric spaces, and especially not on the Skorokhod space, we had to develop the necessary theory. This development could be of 
independent interest, and we have collected the general definitions and main results on uniform stochastic convergence in metric spaces in Section~\ref{app:UniformAsymptotics} in the supplement. 
This framework also allowed us to show a uniform version of Rebolledo's martingale CLT in Section~\ref{sec:fclt} in the supplement.

Second, to establish distributional convergence under $H_0$, we need to control the remainder terms $R_{i,t}^{(n)}$ uniformly over $t$. The third term, $R_{3}^{(n)}$, is simple to bound, and by exploiting Doob's submartingale inequality, the second term, $R_{2}^{(n)}$, can also be bounded. The most difficult first term, $R_{1}^{(n)}$, was controlled using  
stochastic equicontinuity via an exponential tail bound and the use of
the chaining lemma. The necessary general uniform stochastic equicontinuity and chaining arguments are collected in Section~\ref{app:UniformChaining} in the supplement.

Third, to achieve rate results in the alternative, the processes $D_1^{(n)}$ and $D_2^{(n)}$ must be controlled. The process $D_1^{(n)}$ does, like $U^{(n)}$, not involve any estimation, and its distributional convergence follows from a general CLT argument for continuous stochastic processes. The term $D_{2}^{(n)}$ is more difficult to handle, as it may not have mean zero if $G_t$ is not the additive residual process. However,  $X_t$ cancels out in $\widehat{G}_t^{(n)} - G_t$ for the additive residual process, which makes the difference $\mathcal{F}_t$-predictable, and $D_{2}^{(n)}$ can then be bounded similarly to $R_1^{(n)}$. For a general residual process, it seems possible for $D_{2}^{(n)}$ to have a bias of order $\sqrt{|J_n|}g(n)$.

Our main result, Theorem \ref{thm:main}, is stated under two assumptions. 
The second, Assumption~\ref{asm:UniformRates}, is a straightforward 
generalization to our setup of similar assumptions in the double machine 
learning literature on rates of convergence for the two estimators used. 
Both estimation errors are measured using a $2$-norm, and it is plausible 
that we can relax one norm to a weaker form of convergence if we simultaneously 
strengthen the other norm. The first assumption, Assumption \ref{asm:UniformBounds},
requires uniform bounds on both $\lambda$ and $G$. This is a strong assumption but 
perhaps not particularly problematic from a practical viewpoint. Indeed, 
$G$ is a process we can choose, and we can thus make it bounded if necessary. 
And though many theoretically interesting counting process models have unbounded 
intensities, a large cap on the intensity will make no difference in 
practice. We believe, nevertheless, that it is possible to relax Assumption \ref{asm:UniformBounds} to a weaker form of control on the 
magnitudes of $\lambda$ and $G$ as functions of time, e.g., moment bounds
uniform in $\theta$. However, such a generalization will come at the  
expense of considerably more technical proofs, and we did not pursue this 
line of research.

A major practical question is whether we can estimate $\lambda$ and $G$ 
with sufficient rates, e.g. $n^{- \frac{1}{4} + \epsilon}$. 
In Section~\ref{sec:estimationofnuisance} in the supplement we give an overview of some known and some conjectured rate results for specific forms of $\lambda$ and $\Pi$. 
Beyond parametric models we conclude that the existing rate results are scarce, and we regard it is as an independent research project to establish rates for general historical regression methods.

Another question is whether we can replace the counting process $N$ by a more general 
semimartingale. \cite{commenges2009} define conditional local independence for a 
class of special semimartingales, and \cite{Mogensen:2018} and \cite{Mogensen:2022} 
show global Markov properties for local independence graphs of certain Itô processes,
which are, in particular, special semimartingales.
Thus conditional local independence is well defined beyond counting 
processes, and we believe that most definitions and results of this paper would  
generalize beyond $N$ being a counting process. Besides some additional 
technical challenges, the major practical obstacle with such a generalization 
is that we cannot realistically assume to have completely observed sample paths 
of  Itô processes, say. The discrete time nature of the observations should 
then be included in the analysis, and this is beyond the scope of the present
paper. 

Irrespectively of the remaining open problems, the simulation 
study demonstrated some important properties of our proposed test, the X-LCT. 
First, it was fairly simple to implement for the specific example 
considered using some standard estimation techniques that were not 
tailored to the specific model class. Second, it had good level 
and power properties and clearly outperformed the test based on the 
misspecified marginal Cox model. Third, both Neyman orthogonalization 
as well as cross-fitting were pivotal for achieving
the good properties of the test.

\paragraph*{Acknowledgments}
The authors would like to thank the Associate Editor and the anonymous reviewers for constructive comments that lead to substantial improvements of the paper.
The work was supported by a grant (NNF20OC0062897) from the Novo Nordisk Foundation.

\paragraph*{Supplement}
The supplementary article \citep{christgau2023supplement} contains proofs of results from the main text, auxiliary results, additional discussions and figures. 

    \clearpage
\section*{Supplement to `Nonparametric conditional local independence testing'}
The supplementary material is organized as follows: 
In Section~\ref{sec:proofs},
we give the proofs of the results of the main text.
In Section~\ref{app:UniformAsymptotics}, we formulate a general uniform asymptotic theory for metric spaces, whereafter we specialize the theory to the Skorokhod space $D[0,1]$ and chaining of stochastic processes.
In Section~\ref{sec:fclt}, we state Rebolledo's martingale central limit theorem, and then we generalize the result to a uniform version that is used in the proofs.
In Section~\ref{sec:estimationofnuisance}, we discuss estimation of the intensity $\lambda$ and the residual process $G$ in practice. In particular, we compare known rate results with the rates required in Assumption \ref{asm:UniformRates}.
In Section~\ref{sec:survmodels}, we compare the LCM estimator with existing work in semiparametric survival models.
In Section~\ref{sec:orthodetails}, we provide mathematical details regarding Neyman orthogonality.
Finally, Section~\ref{sec:extrafigs} contains  additional details, numerical results, and figures related to the simulation study of Section~\ref{sec:simulations}.

\supplementarysection{Proofs of results in the main text} \label{sec:proofs}

\subsection{Proof of Proposition \ref{prop:cli-mg}}

The process $G_t$ is \lc{} and $\mathcal{G}_t$-predictable by assumption, and the process
$I = (I_t)$ is a stochastic integral of $G_t$ w.r.t. a local
$\mathcal{G}_t$-martingale under the hypothesis $H_0$. It is thus 
also a local $\mathcal{G}_t$-martingale under $H_0$. By definition, 
$I_0 = 0$, and if $I$ is a martingale, $\gamma_t = \ex(I_t) =
\ex(I_0) = 0.$ \hfill \qedsymbol{} 

\subsection{Proof of proposition \ref{prop:gammaalternative}}
Suppose that $H$ is non-negative, \lc{} and $\mathcal{G}_t$-predictable, then since $\int_0^t H_s \mathrm{d}\bM_s$ is a local $\cG_t$-martingale it follows by monotone convergence along a localizing sequence that 
\begin{equation} \label{eq:Gid}
\ex\left(\int_0^t H_{s} \mathrm{d}N_{s}\right) = \ex\left(\int_0^t H_{s}\blambda_s\mathrm{d}s\right) = 
\int_0^t \ex(H_{s}\blambda_s) \mathrm{d}s
\end{equation}
for all $t\in[0,1]$. We can apply the identity 
above with $H$ the positive and negative part of $G$, respectively, 
and the integrability assumption ensures that \eqref{eq:Gid} also holds with $H = G$. It follows that
\begin{align*}
    \gamma_t = \ex (I_t)
    = \ex\left(\int_0^t G_{s}(\blambda_s 
        - \lambda_s)\mathrm{d}s\right) 
    = \int_0^t \ex\pa{G_{s}(\blambda_s 
        - \lambda_s)}\mathrm{d}s.
\end{align*}
The latter expectation is indeed a covariance since $\ex(G_s) = \ex(\ex(G_s\mid\cF_{s-})) = 0$. \hfill \qedsymbol{}

\subsection{Proof of Lemma \ref{lem:truemartingales}}
Before proving Lemma \ref{lem:truemartingales}, we first state general martingale criteria in the context of counting processes.
\begin{lem}\label{lem:countingmartingales}
    Let $(H_t)$ be a locally bounded $\cG_t$-predictable process, 
    let $N$ be a counting process with a $\cG_t$-intensity $\blambda_t$, and let $\bM_t = N_t - \int_0^t \blambda_s \mathrm{d}s$.
    
    If $\int_0^1 \blambda_s \mathrm{d}s$ 
    (or equivalently $N_1$)
    is integrable, then $\bM_t$ and $\bM_t^2-\int_0^t \blambda_s \mathrm{d}s$ are each $\cG_t$-martingales. 
    If, in addition, $\int_0^1 H_s^2 \blambda_s \mathrm{d}s$ is integrable, then
    $\int_0^t H_s \mathrm{d} \bM_s$
    is a mean zero square integrable martingale.
\end{lem}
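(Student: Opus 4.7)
The plan is to handle the three claims in sequence, in each case first invoking a standard result to identify the process as a local martingale and then upgrading to a true martingale via an integrability argument.

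For $\bM_t$ itself, by definition of $\blambda_t$ being the $\cG_t$-intensity, $\bM_t$ is a local $\cG_t$-martingale. I would choose a localizing sequence such as $T_n = \inf\{t\colon \int_0^t \blambda_s \mathrm{d}s > n\} \wedge 1$, so that the stopped process $\bM_{t \wedge T_n}$ is a uniformly integrable martingale. The pathwise bound $|\bM_{t \wedge T_n}| \leq N_1 + \int_0^1 \blambda_s \mathrm{d}s$ combined with the integrability assumption provides an $L^1$-dominator, so dominated convergence upgrades the identity $\ex(\bM_{t \wedge T_n} \mid \cG_u) = \bM_{u \wedge T_n}$ for $u \leq t$ to the martingale property of $\bM_t$.

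For $\bM_t^2 - \int_0^t \blambda_s \mathrm{d}s$, the first step is to recognize that because $N$ has unit jumps, $(\Delta \bM_t)^2 = \Delta N_t$, so the quadratic variation satisfies $[\bM]_t = N_t$. Integration by parts then yields $\bM_t^2 = 2\int_0^t \bM_{s-}\mathrm{d}\bM_s + N_t$, from which $\bM_t^2 - \int_0^t \blambda_s\mathrm{d}s = 2\int_0^t \bM_{s-}\mathrm{d}\bM_s + \bM_t$, a sum of local $\cG_t$-martingales. The main obstacle is again upgrading to a true martingale, and this is the most delicate step: I would stop at $T_n$ so that the stopped process is a true martingale with $\ex(\bM_{t\wedge T_n}^2) = \ex\bigl(\int_0^{t\wedge T_n}\blambda_s\mathrm{d}s\bigr) \leq \ex\bigl(\int_0^1 \blambda_s\mathrm{d}s\bigr)$, then apply Doob's $L^2$-inequality to the stopped martingale to obtain $\ex(\sup_{t\leq 1}\bM_{t\wedge T_n}^2) \leq 4\ex\bigl(\int_0^1\blambda_s\mathrm{d}s\bigr)$, and invoke monotone convergence in $n$ to conclude $\ex(\sup_{t\leq 1}\bM_t^2) < \infty$. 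This $L^2$-bound serves as the dominator needed to pass the conditional expectation identity $\ex(\bM_{t\wedge T_n}^2 - \int_0^{t\wedge T_n}\blambda_s\mathrm{d}s \mid \cG_u) = \bM_{u\wedge T_n}^2 - \int_0^{u\wedge T_n}\blambda_s\mathrm{d}s$ to the limit.

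Finally, for the stochastic integral $\int_0^t H_s\mathrm{d}\bM_s$, local boundedness of $H$ combined with $\cG_t$-predictability ensures this is well-defined as a local $\cG_t$-martingale by standard stochastic integration theory. Its predictable quadratic variation is $\int_0^t H_s^2\mathrm{d}\langle\bM\rangle_s = \int_0^t H_s^2\blambda_s\mathrm{d}s$, whose integrability is exactly the hypothesis. Applying the same localize-then-Doob strategy as in the previous step -- with a joint localizing sequence that also controls $\int_0^{\cdot} H_s^2\blambda_s\mathrm{d}s$ -- and invoking the Itô isometry $\ex\bigl((\int_0^{t\wedge T_n}H_s\mathrm{d}\bM_s)^2\bigr) = \ex\bigl(\int_0^{t\wedge T_n}H_s^2\blambda_s\mathrm{d}s\bigr)$ produces a uniform $L^2$-bound, and a final dominated convergence passage yields the mean-zero square-integrable martingale property. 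The mean-zero conclusion is immediate from the martingale property together with $\int_0^0 H_s\mathrm{d}\bM_s = 0$.
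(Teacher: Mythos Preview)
Your proposal is correct and essentially reconstructs from first principles what the paper dispatches by citation. The paper's proof is three lines: it invokes Lemma~2.3.2 and Theorem~2.5.3 of Fleming and Harrington (2011) for the claims about $\bM$ and $\bM^2-\int_0^\cdot\blambda_s\mathrm{d}s$, then observes that $\langle\bM\rangle_t=\int_0^t\blambda_s\mathrm{d}s$ so that Theorem~2.4.4 of the same reference applies to the stochastic integral. Your localize--Doob--dominated convergence route is precisely the machinery underlying those cited results, so the two proofs coincide in substance if not in presentation. One minor caution: your specific choice $T_n=\inf\{t:\int_0^t\blambda_s\mathrm{d}s>n\}\wedge 1$ need not be a localizing sequence for $\bM$ without additional argument (it bounds the compensator but not $N$), though this is harmless since you only need the existence of \emph{some} localizing sequence and your pathwise dominator $N_1+\int_0^1\blambda_s\mathrm{d}s$ works for any such sequence.
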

\begin{proof}
    The first part is Lemma 2.3.2 and Theorem 2.5.3 in \citet{fleming2011counting}. 
    For the second part, assume that $\int_0^1 \blambda_s \mathrm{d}s$
    and $\int_0^1 H_s^2 \blambda_s \mathrm{d}s$ are both integrable.
    In this case, the $\cG_t$-predictable quadratic variation of $\bM$ is $\langle \bM \rangle(t) = \int_0^t\blambda_s\mathrm{d}s$ by the first part.
    Then it remains to note that $(H_t)$ is a locally bounded $\cG_t$-predictable process, so the conditions of Theorem 2.4.4 in \citet{fleming2011counting} are satisfied if $\int_0^1 H_s^2 \blambda_s \mathrm{d}s$ is integrable. This establishes the second part.
\end{proof}
We now return to the proof of Lemma \ref{lem:truemartingales}. 
Let $f\in C(\mathbb{R})$, and we shall prove that $\int_0^t f(G_s)\mathrm{d}\bM_s$ is a mean zero, square integrable $\mathcal{G}_t$-martingale. The proof for the integral with $f(\widehat{G}_s^{(n)})$ is identical. 

Continuity of $f$ implies that $C_f \coloneqq \sup_{x\in [-C',C']}|f(x)| <\infty$ and that $(f(G_t))$ is a $\mathcal{G}_t$-predictable process. By Assumption \ref{asm:UniformBounds}, the process $(f(G_t))$ is almost surely bounded by $C_f$ and therefore
\begin{align*}
    \ex\left(\int_0^1 f(G_s)^2 \blambda_s \mathrm{d}s \right) 
    \leq C_f^2 C < \infty.
\end{align*}
Thus we can apply Lemma \ref{lem:countingmartingales} to conclude that $\int_0^t f(G_s)\mathrm{d}\bM_s$ is a mean zero, square integrable $\mathcal{G}_t$-martingale.
\hfill \qedsymbol{}

\subsection{Proof of Proposition \ref{prop:UasymptoticGaussian}}

As noted elsewhere, the explicit parametrization of all objects by $\theta$ is 
notationally heavy, and there will thus be an implicit parameter value $\theta \in \Theta$
in most of the subsequent constructions and arguments.

To simplify notation we write
    \begin{align*}
    U^{(n)}_t = \sum_{j \in J_n} \int_0^t H_{j, s}^{(n)} \mathrm{d} \bM_{j, s},
    \quad \text{where} \quad 
    H_{j, s}^{(n)} = \frac{G_{j, s}}{\sqrt{|J_n|}}.
    \end{align*}
We will use a uniform extension of Rebolledo's martingale central limit theorem on the sequence $(U^{(n)})_{n \geq 1}$ to show the result. See Section~\ref{sec:fclt} for a discussion of Rebolledo's CLT and Theorem \ref{thm:URebo} for its uniform extension.

Define $\tilde{\mathcal{G}}_{t}^n$ be the smallest right continuous and complete filtration generated by the filtrations $\{\mathcal{G}_{j, t} \mid j \in J_n\}$. We can apply Lemma~\ref{lem:truemartingales} to each of the terms of $U^{(n)}$ to conclude that the $j$-th term is a square integrable, mean zero $\mathcal{G}_{j,t}^n$-martingale.
By independence of the observations for each $j$, we can enlarge the filtration for each term and conclude that they are also square integrable, mean zero $\mathcal{\tilde{G}}_t^n$-martingales. Thus $U^{(n)}$ is also a square integrable, mean zero $\mathcal{\tilde{G}}_t^n$-martingale.

To apply Theorem \ref{thm:URebo} first establish that the conditions in Equation \eqref{eq:RebolledoConditions} are fulfilled. By Proposition \ref{prop:Rebolledospecialcase}, we have that
    \begin{align*}
    \big\langle 
    U^{(n)}
    \big\rangle (t) 
    & = 
    \sum_{j \in J_n} \int_0^t \left( H_{j, s}^{(n)} \right)^2 \blambda_{j, s}
    \mathrm{d} s
    = 
    \frac{1}{|J_n|} \sum_{j \in J_n}
        \int_0^t G_{j, s}^2 \blambda_{j, s} \mathrm{d}s.
    \end{align*}
By directly applying the bounds from Assumption \ref{asm:UniformBounds}, we see that the square mean of $\int_0^t G_s^2 \blambda_s \mathrm{d}s$ is bounded by $C^2(C')^4$. 
Thus, for fixed $t\in[0,1]$, the uniform law of large numbers \citep[Lemma 19]{shah2020hardness} gives that
    \begin{align*}
        \big\langle 
        U^{(n)}
        \big\rangle (t)
        = \frac{1}{|J_n|} \sum_{j \in J_n}
        \int_0^t G_{j, s}^2 \blambda_{j, s} \mathrm{d} s 
        \convUP
        \ex \left( \int_0^t G_{s}^2 \blambda_{s}\mathrm{d} s\right) 
        = \mathcal{V}(t)
    \end{align*}
for $n \to \infty$, since the integrals are i.i.d. with the 
same distribution as $\int_0^t G_{s}^2 \blambda_{s}\mathrm{d} s$. 
This establishes the first part of the condition in Equation \eqref{eq:RebolledoConditions}.
For the second part, we also have from Proposition \ref{prop:Rebolledospecialcase} that
    \begin{align} \label{eq:rarefactionterms}
    \big\langle 
    U_{\epsilon}^{(n)}
    \big\rangle (t) 
    & =
    \sum_{j \in J_n} \int_0^t 
    \left( H_{j, s}^{(n)} \right)^2 \one \left( |H_{j, s}^{(n)}| \geq \epsilon \right)
    \mathrm{d} \bLambda_{j, s}
        \nonumber \\
    & = 
    \frac{1}{|J_n|} \sum_{j \in J_n}
    \int_0^t
    G_{j, s}^2 
    \one \left(
    \left|
    G_{j, s}
    \right| \geq \epsilon \sqrt{|J_n|}
    \right)
    \blambda_{j, s} \mathrm{d} s
    \end{align}
for each $t \in [0,1]$ and $\epsilon > 0$. 
From Assumption \ref{asm:UniformBounds}, we note that 
for $n$ sufficiently large such that $|J_n| > (C')^2/\epsilon^2$, it holds that
\(
    \mathbb{P}
    \left(
        \left|
        G_{j, s}
        \right| \geq \epsilon \sqrt{|J_n|}
    \right) = 0
\)
for all $j\in J_n$. As a consequence, the terms in \eqref{eq:rarefactionterms} are almost surely zero for $n$ sufficiently large uniformly over $\Theta$. It follows that
\(
    \big\langle 
    U_{\epsilon}^{(n)}
    \big\rangle (t) \convUP 0
\), which establishes the second part of \eqref{eq:RebolledoConditions}.

We finally note that the collection of variance functions, $( \mathcal{V}^\theta)_{\theta \in \Theta}$, is uniformly equicontinuous and bounded above under Assumption \ref{asm:UniformBounds}. This is established in Lemma \ref{lem:regularityofgammasigma} below. We have thus verified all the conditions of Theorem~\ref{thm:URebo}, so we conclude that
    \begin{align*}
    U^{(n),\theta} \convUD U^\theta
    \end{align*}
in $D[0,1]$ as $n\to \infty$, where $U^\theta$ is a mean zero continuous Gaussian martingale with variance function $\mathcal{V}^\theta$. 
\hfill \qedsymbol{} \\

Note that the convergence of \eqref{eq:rarefactionterms} is established directly from the uniform bounds in Assumption \ref{asm:UniformBounds}. However, the convergence could also be established under a milder conditions with alternative arguments. For example, under the weaker assumption of uniformly bounded variance functions, dominated convergence can be used to establish $L_1$-convergence. 

In the proof above we invoked the following lemma, which we will also use in several proofs in the sequel.

\begin{lem}\label{lem:regularityofgammasigma}
Under Assumption \ref{asm:UniformBounds}, the collections $(\gamma^\theta)_{\theta\in \Theta}$ and $(\mathcal{V}^\theta)_{\theta \in \Theta}$ are each uniformly Lipschitz and in particular uniformly equicontinuous. Moreover, it holds almost surely that
\begin{align*}
        \sup_{t\in [0,1]} |\gamma_t| \leq 2CC' 
        \qquad \text{and} \qquad
        \mathcal{V}(1) = \ex \left(\int_0^1 G_s^2 \blambda_s \mathrm{d}s\right) \leq C(C')^2.
    \end{align*}
\end{lem}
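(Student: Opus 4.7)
The plan is to prove the bound and the uniform Lipschitz property for each collection by a direct integral representation, using the uniform bounds from Assumption~\ref{asm:UniformBounds} together with the innovation theorem and Proposition~\ref{prop:gammaalternative}.

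\textbf{Step 1: Integrability and an integral representation of $\gamma$.} The innovation theorem yields $\lambda_t^\theta = \ex(\blambda_t^\theta \mid \mathcal{F}_{t-})$, so the bound $\blambda_t^\theta \leq C$ a.s.\ descends to $\lambda_t^\theta \leq C$ a.s. Combined with $|G_s^\theta|\leq C'$ a.s., this gives $\ex[|G_s^\theta|(\blambda_s^\theta + \lambda_s^\theta)] \leq 2CC'$ uniformly over $s$ and $\theta$, so the hypothesis of Proposition~\ref{prop:gammaalternative} is met. Hence
\begin{equation*}
    \gamma_t^\theta = \int_0^t \cov(G_s^\theta, \blambda_s^\theta - \lambda_s^\theta)\,\mathrm{d}s
\end{equation*}
for every $\theta\in\Theta$ and $t\in[0,1]$.

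\textbf{Step 2: Lipschitz property and uniform bound for $\gamma$.} Since the integrand in Step~1 is bounded in absolute value by $\ex[|G_s^\theta||\blambda_s^\theta - \lambda_s^\theta|] \leq 2CC'$ uniformly in $s$ and $\theta$, it follows for all $0\leq s\leq t\leq 1$ and all $\theta$ that $|\gamma_t^\theta - \gamma_s^\theta|\leq 2CC'|t-s|$. Taking $s=0$ (where $\gamma_0^\theta = 0$) yields $\sup_t|\gamma_t^\theta|\leq 2CC'$.

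\textbf{Step 3: Integral representation, Lipschitz property and bound for $\mathcal{V}$.} By Proposition~\ref{lem:truemartingales} applied to $f(x)=x^2$, the process $\int_0^\cdot G_s^2\,\mathrm{d}\bM_s$ is a mean-zero martingale, so taking expectations in $\int_0^t G_s^2\,\mathrm{d}N_s = \int_0^t G_s^2\,\mathrm{d}\bM_s + \int_0^t G_s^2\blambda_s\,\mathrm{d}s$ gives
\begin{equation*}
    \mathcal{V}^\theta(t) = \ex\Bigl(\int_0^t (G_s^\theta)^2 \blambda_s^\theta\,\mathrm{d}s\Bigr).
\end{equation*}
The integrand is bounded by $(C')^2 C$ almost surely and uniformly in $s,\theta$, so for $0\leq s\leq t\leq 1$, $|\mathcal{V}^\theta(t)-\mathcal{V}^\theta(s)|\leq C(C')^2 |t-s|$. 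Setting $s=0$ yields $\mathcal{V}^\theta(1)\leq C(C')^2$, concluding the proof. The two Lipschitz bounds give uniform Lipschitz (and hence uniform equicontinuity) of the respective collections.

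There is no serious obstacle here; the only point requiring care is invoking Proposition~\ref{prop:gammaalternative} (which needs the integrability condition), and verifying that the martingale representation of $\mathcal{V}$ via Proposition~\ref{lem:truemartingales} is legitimate under the standing bounds.
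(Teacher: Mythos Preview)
Your proof is correct and follows essentially the same approach as the paper. The only cosmetic difference is that for $\gamma$ you route through Proposition~\ref{prop:gammaalternative} to get the covariance representation before bounding, whereas the paper bounds $|\gamma_t-\gamma_s|$ directly from the definition $\gamma_t=\ex\bigl(\int_0^t G_u\,\mathrm{d}M_u\bigr)$ via $\ex\bigl|\int_s^t G_u\,\mathrm{d}N_u\bigr|\leq \ex\bigl(\int_s^t |G_u|\blambda_u\,\mathrm{d}u\bigr)$; both arguments rest on the same martingale identity from Proposition~\ref{lem:truemartingales} and yield the same Lipschitz constant $2CC'$. For $\mathcal{V}$ your argument is identical to the paper's.
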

\begin{proof}
For any $0\leq s < t \leq 1$, a direct application of Assumption~\ref{asm:UniformBounds} and Proposition~\ref{lem:truemartingales} yields
\begin{align*}
    |\gamma_t-\gamma_s| 
        \leq \ex \left\lvert
                \int_s^t G_u\mathrm{d}N_u 
                    - \int_s^t G_u\lambda_u \mathrm{d}u
            \right \rvert 
        \leq \ex \left(\int_s^t |G_u|(\blambda_u  + \lambda_u)\mathrm{d}u\right)
        \leq 2CC'(t-s),
\end{align*}
and similarly,
\begin{align*}
    \mathcal{V}(t) - \mathcal{V}(s) 
        = \ex \left(\int_s^t G_u^2 \blambda_u \mathrm{d}u\right)
        \leq C(C')^2(t-s).
\end{align*}
This establishes the first part. The bounds follow from inserting $(s,t)=(0,t)$ in the first inequality and $(s,t)=(0,1)$ in the second inequality. 
\end{proof}

\subsection{Proof of Proposition \ref{prop:remainderterms}}

We will divide the proof into three lemmas for each of the remainder terms $R_1^{(n)}, R_2^{(n)}$ and $R_3^{(n)}$, where we establish convergence to the zero-process uniformly over $t$ and $\theta$. However, note that the notion of uniform convergence differs for the process index, $t\in [0,1]$, and the parameter, $\theta \in \Theta$, as we need to show that

\begin{align*}
        \forall i \in \{1,2,3\}\forall \epsilon>0: \quad 
        \lim_{n\to \infty}
        \sup_{\theta \in \Theta} \mathbb{P}\Big(
            \sup_{t\in[0,1]}|R_{i,t}^\theta|>\epsilon
        \Big) = 0.
\end{align*}

For a general discussion of the relation between weak convergence and convergence in probability uniformly as a stochastic process, see \cite{Newey:1991}. For a general discussion of uniform stochastic convergence over a distribution parameter, see Section~\ref{app:UniformAsymptotics} and the references contained therein. 
In Section~\ref{app:UniformChaining}, we discuss the combination of both convergences.

As in the proof of Proposition \ref{prop:UasymptoticGaussian}, 
$\tilde{\mathcal{G}}_t^n$ denotes the smallest right continuous and complete filtration 
generated by the filtrations $\{\mathcal{G}_{j, t} \mid j \in J_n\}$. 
Analogously, we let $\mathcal{\tilde{G}}_t^{n,c}$ be the smallest 
right continuous and complete filtration generated by the filtrations 
$\{ \mathcal{G}_{j, t} \mid j \in J_n^c \}$.
We start by considering $R_3^{(n)}$, since this is the easiest case. 

\begin{lem} \label{lem:R3}
Under Assumption~\ref{asm:UniformRates} it holds that $\sup_{t\in [0,1]} |R_{3,t}^{(n)}| \convUP 0$. 
\end{lem}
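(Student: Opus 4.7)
The plan is to prove this via a deterministic pathwise bound followed by a Markov-type argument that leverages sample splitting. Since $R_3^{(n)}$ is itself already a bounded variation integral of a product (rather than a stochastic integral), we can dispense with any martingale machinery here.

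First, I would pass the supremum inside by bounding the integrand in absolute value and extending the integral to all of $[0,1]$:
\begin{align*}
\sup_{t\in[0,1]} |R_{3,t}^{(n)}|
\leq \frac{1}{\sqrt{|J_n|}} \sum_{j\in J_n} \int_0^1 \bigl|\widehat G^{(n)}_{j,s} - G_{j,s}\bigr| \cdot \bigl|\lambda_{j,s} - \widehat\lambda^{(n)}_{j,s}\bigr| \, \mathrm{d}s.
\end{align*}
Two applications of the Cauchy--Schwarz inequality (once inside the integral for each $j$, and once across the sum over $j\in J_n$) then yield
\begin{align*}
\sup_{t\in[0,1]} |R_{3,t}^{(n)}|
\leq \frac{1}{\sqrt{|J_n|}} \left(\sum_{j\in J_n} \int_0^1 (\widehat G^{(n)}_{j,s} - G_{j,s})^2 \,\mathrm{d}s \right)^{\!1/2} \!\left(\sum_{j\in J_n} \int_0^1 (\lambda_{j,s} - \widehat\lambda^{(n)}_{j,s})^2 \,\mathrm{d}s \right)^{\!1/2}.
\end{align*}

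Next I would take expectations. Because the estimates $\widehat G^{(n)}$ and $\widehat\lambda^{(n)}$ are built from data indexed by $J_n^c$, they are independent of each $(N_j, X_j, \mathcal F_j)$ for $j\in J_n$. Hence for every such $j$,
\begin{align*}
\ex \int_0^1 (\widehat G^{(n)}_{j,s} - G_{j,s})^2 \,\mathrm{d}s = \vertiii{G - \widehat G^{(n)}}_2^{\,2} = g(n)^2,
\end{align*}
and analogously with $h(n)^2$ for the intensity error. A third application of Cauchy--Schwarz, this time to the product of square roots under the outer expectation, gives
\begin{align*}
\ex \sup_{t\in[0,1]} |R_{3,t}^{(n)}|
\leq \frac{1}{\sqrt{|J_n|}} \sqrt{|J_n| \, g(n)^2} \cdot \sqrt{|J_n| \, h(n)^2}
= \sqrt{|J_n|} \, g(n)\, h(n).
\end{align*}

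Finally, Markov's inequality yields, for every $\theta\in\Theta$ and every $\varepsilon>0$,
\begin{align*}
\mathbb P\bigl(\sup_{t\in[0,1]} |R_{3,t}^{(n),\theta}| > \varepsilon\bigr)
\leq \frac{\sqrt{|J_n|}\, g^\theta(n)\, h^\theta(n)}{\varepsilon}.
\end{align*}
Taking the supremum over $\theta\in\Theta$ and letting $n\to\infty$, the right-hand side tends to zero by Assumption~\ref{asm:UniformRates}, which exactly requires $\sup_{\theta\in\Theta} \sqrt{|J_n|}\, g^\theta(n)\, h^\theta(n) \to 0$. There is no real obstacle here; the only point to be careful about is that all three bounds are pathwise, so the use of sample splitting enters solely to factor the expectation of the product of the two conditional $L_2$-norms into the product of the unconditional $\vertiii{\cdot}_2$-norms. \qed
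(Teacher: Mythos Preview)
Your proof is correct and follows essentially the same approach as the paper: both bound $\ex\bigl[\sup_{t\in[0,1]}|R_{3,t}^{(n)}|\bigr]$ by $\sqrt{|J_n|}\,g(n)h(n)$ via repeated Cauchy--Schwarz and then invoke Markov together with Assumption~\ref{asm:UniformRates}. The only cosmetic difference is the order of operations---the paper first applies the triangle inequality to reduce to a single term (using identical distribution) and then Cauchy--Schwarz twice, whereas you apply Cauchy--Schwarz pathwise across $[0,1]\times J_n$ before taking expectations---but the resulting bound is identical.
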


\begin{proof}
We will show the result by showing that
    \begin{align*}
    \sup_{\theta \in \Theta}
    \ex \left(
        \sup_{0 \leq t \leq 1} |R_{3, t}^{(n),\theta} |
    \right) \to 0
    \end{align*}
as $n \to \infty$. Using that the random variables 
    \begin{align*}
            \sup_{0 \leq t \leq 1} \left|
            G_{j, t} - \widehat{G}_{j, t}^{(n)}
            \right|
            \cdot 
            \sup_{0 \leq t \leq 1} \left|
            \lambda_{j, t} - \widehat{\lambda}_{j, t}^{(n)}
            \right|
    \end{align*}
for $j \in J_n$ are identically distributed for each fixed $n \geq 2$, we have that
    \begin{align*}
        & \ex \left(
        \sup_{0 \leq t \leq 1} | R_{3, t}^{(n)} |
        \right) 
        \\
        & = 
        \ex \left( \sup_{0 \leq t \leq 1} \left|
        \frac{1}{\sqrt{|J_n|}} 
        \sum_{j \in J_n}
        \int_0^t 
        \left( G_{j, s} - \widehat{G}_{j, s}^{(n)} \right) 
        \left( \lambda_{j, s} - \widehat{\lambda}_{j, s}^{(n)} \right) 
        \mathrm{d}s \right|
        \right)
        \\
        &   \leq \frac{1}{\sqrt{|J_n|}} 
            \sum_{j \in J_n}
            \ex \left( \sup_{0 \leq t \leq 1} 
            \int_0^t \left|
            G_{j, s} - \widehat{G}_{j, s}^{(n)} \right| \cdot
            \left| \lambda_{j, s} - \widehat{\lambda}_{j, s}^{(n)} \right|
            \mathrm{d}s
            \right) \\
        &   = \sqrt{|J_n|} \ex \left(  
            \int_0^1 \left|
            G_{s} - \widehat{G}_{s}^{(n)} \right| \cdot
            \left| \lambda_{s} - \widehat{\lambda}_{s}^{(n)} \right|
            \mathrm{d}s
            \right) \\
        &   \leq \sqrt{|J_n|} \ex \left( 
            \sqrt{ \int_0^1 \left( 
                G_{s} - \widehat{G}_{s}^{(n)} \right)^2 \mathrm{d}s}
            \sqrt{ \int_0^1 \left( 
                \lambda_{s} - \widehat{\lambda}_{s}^{(n)} \right)^2 \mathrm{d}s}
            \right) \\
        &   \leq \sqrt{|J_n|}
            \sqrt{\ex \left( 
                 \int_0^1 \left( 
                    G_{s} - \widehat{G}_{s}^{(n)} \right)^2 \mathrm{d}s\right)}
            \sqrt{\ex \left( \int_0^1 \left( 
                    \lambda_{s} - \widehat{\lambda}_{s}^{(n)} \right)^2 
                    \mathrm{d}s\right)} \\
        &   = \sqrt{|J_n|} g(n)h(n).
    \end{align*}
By Assumption \ref{asm:UniformRates}, $\sqrt{|J_n|} g(n)h(n)\to 0$ uniformly over $\Theta$ as $n \to \infty$, so the result follows.
\end{proof}


Next we proceed to the remainder process $R_2^{(n)}$.

\begin{lem} \label{lemma:R2}
Under Assumptions~\ref{asm:UniformBounds} and \ref{asm:UniformRates}, it holds 
that 
\(
    \sup_{t\in [0,1]} |R_{2,t}^{(n)}| \convUP 0
\).
\end{lem}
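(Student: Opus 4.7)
\textbf{Proof plan for Lemma~\ref{lemma:R2}.} My approach will exploit the sample splitting structure and the martingale nature of $t\mapsto R_{2,t}^{(n)}$, combined with Doob's $L^2$ maximal inequality. Let $\mathcal{H}_t = \tilde{\mathcal{G}}_t^n \vee \tilde{\mathcal{G}}_1^{n,c}$ denote the enlargement of $\tilde{\mathcal{G}}_t^n$ by the $\sigma$-algebra generated by the training data used to fit $\widehat{G}^{(n)}$. By the independence between observations indexed by $J_n$ and $J_n^c$, each $\bM_{j,t}$ for $j\in J_n$ remains an $\mathcal{H}_t$-martingale, while $\widehat{G}_{j,s}^{(n)} - G_{j,s}$ is \lc, $\mathcal{H}_s$-predictable, and uniformly bounded by $2C'$ by Assumption~\ref{asm:UniformBounds} together with the matching bound imposed on the estimator. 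Proposition~\ref{lem:truemartingales} (applied in the enlarged filtration) then yields that $R_{2,t}^{(n)}$ is itself a mean zero square integrable $\mathcal{H}_t$-martingale.

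Doob's $L^2$ maximal inequality now gives
\begin{equation*}
    \mathbb{P}\!\left(\sup_{0\le t\le 1}|R_{2,t}^{(n)}| > \epsilon\right)
    \le \frac{1}{\epsilon^2}\ex\bigl((R_{2,1}^{(n)})^2\bigr),
\end{equation*}
so it suffices to bound $\ex((R_{2,1}^{(n)})^2)$ uniformly over $\Theta$. Conditional on $\tilde{\mathcal{G}}_1^{n,c}$, the summands in $R_{2,1}^{(n)}$ are independent and mean zero, so by the It\^o isometry for counting-process martingales and the identical distribution of the summands (with the template copy $(N,X,\mathcal{F},\mathcal{G})$),
\begin{equation*}
    \ex\bigl((R_{2,1}^{(n)})^2\bigr)
    = \ex\!\int_0^1 \bigl(\widehat{G}_s^{(n)} - G_s\bigr)^2 \blambda_s\,\mathrm{d}s
    \le C\,\vertiii{\widehat{G}^{(n)} - G}_{2}^2
    = C\,g(n)^2,
\end{equation*}
where the inequality uses the uniform bound $\blambda \le C$ from Assumption~\ref{asm:UniformBounds}. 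By Assumption~\ref{asm:UniformRates}, $g(n) \to 0$ uniformly over $\Theta$, so the Doob bound vanishes uniformly, which is the desired conclusion.

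I do not expect an analytic obstacle of substance; the bookkeeping lies in carefully verifying that $\widehat{G}_{j,s}^{(n)}$ remains \lc{} and $\mathcal{H}_s$-predictable after the enlargement and that each $\bM_{j,t}$ retains the martingale property in the enlarged filtration. Both follow from sample split independence between the two folds combined with the predictability properties already imposed on $G$ and on the estimator. Notably, the factor $\sqrt{|J_n|}$ in $R_2^{(n)}$ is absorbed by the orthogonality of the $|J_n|$ conditionally independent martingale increments at $t=1$; this is the mechanism by which sample splitting permits a rate condition on $g(n)$ alone for this term, without any rate requirement on $h(n)$.
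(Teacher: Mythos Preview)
Your proposal is correct and follows essentially the same approach as the paper: both arguments exploit the martingale structure of $R_2^{(n)}$, apply Doob's maximal inequality, and bound the second moment at $t=1$ via the It\^o isometry and the uniform bound on $\blambda$, arriving at the same $C g(n)^2$ bound. The only cosmetic difference is that the paper conditions on $\tilde{\mathcal{G}}_1^{n,c}$ throughout and then takes expectations, whereas you work unconditionally in the enlarged filtration $\mathcal{H}_t = \tilde{\mathcal{G}}_t^n \vee \tilde{\mathcal{G}}_1^{n,c}$; these are equivalent ways of leveraging the sample-splitting independence.
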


\begin{proof}
We first write
    \begin{align*}
        R^{(n)}_{2, t} 
        &= 
            \frac{1}{\sqrt{|J_n|}} \sum_{j \in J_n}  
            \int_0^t \left( G_{j, s} - \widehat{G}_{j, s}^{(n)} \right) 
            \mathrm{d} \bM_{j, s}, 
    \end{align*}
and note that $R^{(n)}_{2, t}$ is a square integrable, mean zero $\tilde{\mathcal{G}}_t^n$-martingale conditionally on $\tilde{\mathcal{G}}_1^{n,c}$.
This follows by applying Lemma~\ref{lem:truemartingales} to each of the terms, which are i.i.d. conditionally on $\tilde{\mathcal{G}}_1^{n,c}$. 
We conclude that the squared process
$(R_{2,t}^{(n)} )^2$ is a $\tilde{\mathcal{G}}_t^n$-submartingale conditionally on $\tilde{\mathcal{G}}_1^{n,c}$. By Doob's
submartingale inequality we have that 
    \begin{align*}
    \mathbb{P} \left(
        \sup_{0 \leq t \leq 1} |  R_{2, t}^{(n)} | \geq \epsilon
    \right)
    & =
    \mathbb{P} \left( 
        \sup_{0 \leq t \leq 1} \left( R_{2, t}^{(n)} \right)^2
        \geq \epsilon^2
    \right) 
    \\
    & = 
    \ex \left( 
    \mathbb{P} \left(
        \sup_{0 \leq t \leq 1} \left(  R_{2, t}^{(n)} \right)^2 \geq \epsilon^2
        \mid
        \tilde{\mathcal{G}}_1^{n,c}
    \right)
    \right)
    \\
    & \leq 
    \frac{\ex \left( \var \left( R_{2, 1}^{(n)} \mid
    \tilde{\mathcal{G}}_1^{n,c} \right) \right)}{\epsilon^2} 
    \end{align*}
for $\epsilon > 0$. The collection of random variables
    \begin{align*}
        \left(
            \int_0^1 \left( G_{j, s} - \widehat{G}_{j, s}^{(n)} \right) 
        \mathrm{d} \bM_{j, s}
        \right)_{j \in J_n}
    \end{align*}
are i.i.d. conditionally on $\tilde{\mathcal{G}}_1^{n,c}$. Therefore,
    \begin{align*}
        \var \left( R_{2, 1}^{(n)} \mid \tilde{\mathcal{G}}_1^{n,c} \right)
        & = 
        \frac{1}{|J_n|} \sum_{j \in J_n} \var \left( 
            \int_0^1 \left( G_{j, s} - \widehat{G}_{j, s}^{(n)} \right) 
            \mathrm{d} \bM_{j, s} 
            \mid \tilde{\mathcal{G}}_1^{n,c}
        \right) 
        \\
        & = 
        \ex \left( \int_0^1
        \left( G_{s} - \widehat{G}_{s}^{(n)} \right)^2 
        \mathrm{d} \langle \bM \rangle_s 
        \mid \tilde{\mathcal{G}}_1^{n,c} \right)
        \\
        & =
        \ex \left(
        \int_0^1 \left( G_{s} - \widehat{G}_{s}^{(n)} \right)^2 \blambda_{s}
        \mathrm{d}s 
        \mid \tilde{\mathcal{G}}_1^{n,c}
        \right) \\
        & \leq C \cdot \ex \left(
                \int_0^1 \left( G_{s} - \widehat{G}_{s}^{(n)} \right)^2
                \mathrm{d}s 
                \mid \tilde{\mathcal{G}}_1^{n,c}
                \right)
    \end{align*}
where we have used that $\blambda_t$ is bounded by Assumption~\ref{asm:UniformBounds} (i). Thus 
\begin{align*}
    \ex \left(
        \var \left( R_{2, 1}^{(n)} \mid \tilde{\mathcal{G}}_1^{n,c} \right)
    \right) 
    \leq 
    C \cdot \ex \left(
    \int_0^1 \left( G_{s} - \widehat{G}_{s}^{(n)} \right)^2 \mathrm{d}s
    \right) 
    = C \cdot g(n)^2,
\end{align*}
and we conclude that 
    \begin{align*}
        \mathbb{P} \left(
        \sup_{0 \leq t \leq 1} | R_{2, t}^{(n)} | \geq \epsilon
    \right)
    \leq 
    \frac{C\cdot g(n)^2}{\epsilon^2} \to 0,
    \end{align*}
as $n \to \infty$ uniformly over $\Theta$ by Assumption~\ref{asm:UniformRates}.
\end{proof}

Before proving that $R_1^{(n)}$ converges weakly to the zero-process, we will need two auxiliary lemmas. The first is a conditional version of Hoeffding's lemma, which lets us conclude conditional sub-Gaussianity. Recall that a mean zero random variable $A$ is sub-Gaussian with variance factor $\nu>0$ if 
    \begin{align*}
        \log \ex(e^{xA}) \leq \frac{x^2 \nu}{2}
    \end{align*}
for all $x \in \mathbb{R}$. See, for example, \citet{boucheron2013concentration}, Lemma 2.2, for the classical unconditional version.

\begin{lem}[conditional Hoeffding's lemma]\label{lem:conditionalHoeffding}
    Let $Y$ be a random variable taking values on a bounded interval $[a,b]$, satisfying $\ex[Y|\mathcal{G}]=0$ for a $\sigma$-algebra $\mathcal{G}$. 
    
    Then $\log \ex(e^{xY}\mid\mathcal{G}) \leq (b-a)^2 x^2 /8$ almost surely for all $x\in \mathbb{R}$.
\end{lem}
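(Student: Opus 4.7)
The plan is to mimic the classical proof of Hoeffding's lemma, replacing ordinary expectations with conditional expectations throughout, and then observing that the resulting bound is deterministic in $x$ and therefore holds almost surely simultaneously for a dense set of $x$, from which the full bound follows by continuity. The cornerstone of the argument is convexity of the exponential function.

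First I would fix $x \in \mathbb{R}$ and use the convexity of $y \mapsto e^{xy}$ on $[a,b]$: for every $y \in [a,b]$,
\begin{equation*}
    e^{xy} \leq \frac{b-y}{b-a}\, e^{xa} + \frac{y-a}{b-a}\, e^{xb}.
\end{equation*}
Since $Y \in [a,b]$ almost surely, I can apply this inequality pointwise to $Y$ and take conditional expectation given $\mathcal{G}$. Using linearity of conditional expectation and $\ex[Y\mid\mathcal{G}]=0$ almost surely, this yields
\begin{equation*}
    \ex[e^{xY}\mid \mathcal{G}] \leq \frac{b}{b-a}\, e^{xa} - \frac{a}{b-a}\, e^{xb}
\end{equation*}
on a $\mathbb{P}$-null-free event. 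Note that the right-hand side is a deterministic constant (for each fixed $x$), which is where the conditional mean zero assumption is doing its work.

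Next I would reduce to the standard scalar bound. Setting $p = -a/(b-a) \in [0,1]$ (the case $a=b=0$ is trivial) and $u = (b-a)x$, the right-hand side becomes $e^{-pu}(1-p+pe^u)$, so it suffices to show that the deterministic function
\begin{equation*}
    \varphi(u) \coloneq -pu + \log\bigl(1-p+pe^u\bigr)
\end{equation*}
satisfies $\varphi(u) \leq u^2/8$ for all $u\in\mathbb{R}$. This is the classical step: compute $\varphi(0)=0$, $\varphi'(0)=0$, and bound $\varphi''(u) = q(u)(1-q(u))$ where $q(u) = pe^u/(1-p+pe^u) \in [0,1]$, so $\varphi''(u) \leq 1/4$. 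A second-order Taylor expansion with integral remainder then gives $\varphi(u) \leq u^2/8$, and unwinding the substitution produces $\log \ex[e^{xY}\mid\mathcal{G}] \leq (b-a)^2 x^2/8$ almost surely for each fixed $x$.

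The final (minor) obstacle is upgrading from "for each fixed $x$ a.s." to "a.s. for all $x\in\mathbb{R}$" since the exceptional null set could in principle depend on $x$. I would resolve this by taking the inequality on the countable dense set $\mathbb{Q}$, obtaining a single null set outside of which the bound holds for all rational $x$, and then invoking continuity of $x\mapsto \log \ex[e^{xY}\mid\mathcal{G}]$ (which follows from dominated convergence, using $|Y|\leq \max(|a|,|b|)$) together with continuity of the quadratic $x\mapsto (b-a)^2 x^2/8$ to extend to all $x\in\mathbb{R}$.
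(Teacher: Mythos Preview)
Your proof is correct and follows essentially the same route as the paper: convexity of the exponential, conditional expectation using $\ex[Y\mid\mathcal{G}]=0$ to reduce to a deterministic bound, then the standard second-derivative estimate $\varphi''\leq 1/4$ and Taylor expansion. Your final paragraph upgrading from ``for each fixed $x$ a.s.'' to ``a.s.\ for all $x$'' via a countable dense set and continuity is a nice touch that the paper's proof omits.
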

\begin{proof}
    Fix $x\in \mathbb{R}$. By convexity of the exponential function we have
    \begin{align*}
        e^{xy} \leq \frac{b-y}{b-a} e^{xa} + \frac{y-a}{b-a} e^{xb},
        \qquad y\in[a,b].
    \end{align*}
    Inserting $Y$ in place of $y$ and taking the conditional expectation yields
    \begin{align*}
        \ex[e^{xY}\mid\mathcal{G}]
        \leq \frac{b}{b-a} e^{xa} - \frac{a}{b-a} e^{xb}
        = e^{L(x(b-a))}
    \end{align*}
    almost surely,
    where $L(h) = \frac{ha}{b-a} + \log(1+\frac{a-e^ha}{b-a})$.
    Standard calculations show that $L(0)=L'(0)=0$, and the AM-GM inequality implies
    \begin{align*}
        L''(h) = - \frac{abe^h}{(b-ae^h)^2} \leq \frac{1}{4}.
    \end{align*}
    Thus, a second order Taylor expansion yields that $L(h)\leq \frac{1}{8}h^2$, and it follows that
    $\log \ex[e^{xY}\mid\mathcal{G}] \leq 
    \frac{(b-a)^2}{8}x^2$ as desired.  
\end{proof}

For the next lemma define for $s, t \in [0,1]$ with $s < t$ 
    \begin{align*}
        W^{s, t} = \frac{1}{t - s} \int_s^t G_u (\lambda_u - \widehat{\lambda}^{(n)}_u)
        \mathrm{d}u. 
    \end{align*}

\begin{lem} \label{lemma:subgaussian}
Let Assumption~\ref{asm:UniformBounds} hold true. Then, for any $0\leq s < t \leq 1$, it holds that $\ex(W^{s,t}\mid\tilde{\mathcal{G}}_1^{n,c}) = 0$ and 
that $W^{s,t}$ is sub-Gaussian conditionally on $\tilde{\mathcal{G}}_1^{n,c}$ with variance factor $\nu = (2CC')^2$, that is,
    \begin{align*}
        \log \ex (e^{xW^{s, t}} \mid \tilde{\mathcal{G}}_1^{n,c}) 
        \leq 2 (xCC')^2
    \end{align*}
for all $s < t$ and $x \in \mathbb{R}$. 
\end{lem}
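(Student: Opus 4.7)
The plan is to establish the two claims separately using the tower property for the conditional mean, and then combining a uniform boundedness argument with the conditional Hoeffding lemma (Lemma~\ref{lem:conditionalHoeffding}) for the sub-Gaussianity.

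First, for the conditional mean, the goal is to show $\ex(W^{s,t}\mid \tilde{\mathcal{G}}_1^{n,c}) = 0$. By Fubini's theorem (which applies because the integrand is uniformly bounded, as discussed below), one can interchange the integral over $u$ with the conditional expectation and reduce the problem to showing that $\ex(G_u(\lambda_u - \widehat{\lambda}^{(n)}_u) \mid \tilde{\mathcal{G}}_1^{n,c}) = 0$ for each $u \in (s,t]$. Using iterated conditional expectation with the enlarged $\sigma$-algebra $\tilde{\mathcal{G}}_1^{n,c} \vee \mathcal{F}_{u-}$, and exploiting that both $\lambda_u$ and $\widehat{\lambda}^{(n)}_u$ are measurable with respect to this $\sigma$-algebra (since $\lambda_u$ is $\mathcal{F}_{u-}$-predictable and $\widehat{\lambda}^{(n)}_u$ is, given $\tilde{\mathcal{G}}_1^{n,c}$, a fixed function of $\mathcal{F}_{u-}$), the inner expectation reduces to $(\lambda_u - \widehat{\lambda}^{(n)}_u) \ex(G_u \mid \mathcal{F}_{u-}, \tilde{\mathcal{G}}_1^{n,c})$. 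Since the template $(N,X,\mathcal{F})$ is independent of the training sample generating $\tilde{\mathcal{G}}_1^{n,c}$, we have $\ex(G_u \mid \mathcal{F}_{u-}, \tilde{\mathcal{G}}_1^{n,c}) = \ex(G_u \mid \mathcal{F}_{u-}) = 0$ by the defining property of the residual process in Definition~\ref{dfn:rescond}.

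Second, for sub-Gaussianity, the key observation is that $W^{s,t}$ is uniformly bounded. By Assumption~\ref{asm:UniformBounds}, $|G_u| \leq C'$, and since $\blambda_u \leq C$, the innovation theorem yields $\lambda_u = \ex(\blambda_u \mid \mathcal{F}_{u-}) \leq C$ almost surely; the estimator $\widehat{\lambda}^{(n)}_u$ is assumed to inherit the same bound. Consequently, $|G_u(\lambda_u - \widehat{\lambda}^{(n)}_u)| \leq 2CC'$, which after averaging over $[s,t]$ gives $W^{s,t} \in [-2CC',\, 2CC']$. Combining this deterministic bound with the first part, we may apply the conditional Hoeffding lemma (Lemma~\ref{lem:conditionalHoeffding}) with $a = -2CC'$, $b = 2CC'$, to conclude that
\[
    \log \ex(e^{xW^{s,t}} \mid \tilde{\mathcal{G}}_1^{n,c})
    \leq \frac{(b-a)^2 x^2}{8} = \frac{(4CC')^2 x^2}{8} = 2(xCC')^2,
\]
which is the desired bound.

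No real obstacle is anticipated: the conditional mean calculation is a careful tower-property argument leveraging the independence between template and training fold, and the sub-Gaussianity reduces to Hoeffding once the pointwise bound $2CC'$ is established. The only point requiring mild care is that $\widehat{\lambda}^{(n)}$ depends on the training sample but, viewed as a function evaluated on the template, is $\mathcal{F}_{u-}$-measurable once $\tilde{\mathcal{G}}_1^{n,c}$ is fixed.
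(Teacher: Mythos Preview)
Your proposal is correct and follows essentially the same approach as the paper: first establish the conditional mean is zero via the tower property with $\mathcal{F}_{u-} \vee \tilde{\mathcal{G}}_1^{n,c}$ (using that $\lambda_u - \widehat{\lambda}^{(n)}_u$ is measurable there and that $G_u$ is independent of $\tilde{\mathcal{G}}_1^{n,c}$ so $\ex(G_u \mid \mathcal{F}_{u-}, \tilde{\mathcal{G}}_1^{n,c}) = \ex(G_u \mid \mathcal{F}_{u-}) = 0$), then apply the conditional Hoeffding lemma after bounding $|W^{s,t}| \leq 2CC'$ from Assumption~\ref{asm:UniformBounds}. The order of Fubini and the tower step differs cosmetically, but the argument is the same.
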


\begin{proof}
For fixed $u\in [0,1]$, note that
\begin{align} \label{eq:R1meanzero}
    \ex\left( G_u
    \left( 
    \lambda_u - \widehat{\lambda}^{(n)}_u \right) 
    \mid
    \tilde{\mathcal{G}}_1^{n,c}
    \right) 
    & = 
    \ex\left( 
    \ex\left( 
    G_u
    \left( 
    \lambda_u - \widehat{\lambda}^{(n)}_u \right) 
    \mid 
    \mathcal{F}_{s-} \vee \tilde{\mathcal{G}}_1^{n,c}
    \right) 
    \mid
    \tilde{\mathcal{G}}_1^{n,c}
    \right) \nonumber \\
    & = 
    \ex\left(
    \ex\left( G_u \mid \mathcal{F}_{s-} \right)
     \left( 
        \lambda_u - \widehat{\lambda}^{(n)}_u \right) 
    \mid 
    \tilde{\mathcal{G}}_1^{n,c}
    \right) = 0,
    \end{align}
where we have used that $\lambda_{t} - \widehat{\lambda}^{(n)}_{t}$ is 
$\mathcal{F}_{t}$-predictable conditionally on $\tilde{\mathcal{G}}_1^{n,c}$, 
that $G_{t}$ is independent of $\tilde{\mathcal{G}}_1^{n,c}$ since it is $\mathcal{G}_{t}$-predictable, and that 
$\ex\left( G_{s} \mid \mathcal{F}_{s-} \right) = 0$ per definition. 
By applying the conditional Fubini theorem \citep[Theorem 27.17]{schilling2017measures}, we conclude that $\ex(W^{s,t}\mid\tilde{\mathcal{G}}_1^{n,c}) = 0$.

We can now use the conditional version of Hoeffding's lemma formulated in Lemma \ref{lem:conditionalHoeffding}. 
Indeed, we have that for all $s<t$
    \begin{align*}
        |W^{s, t}| & \leq \frac{1}{t - s} \int_s^t | G_u |
        | (\lambda_u - \widehat{\lambda}^{(n)}_u) |
        \mathrm{d}u
        \\
        & \leq \sup_{0 \leq u \leq 1}|G_u|
        \sup_{0 \leq u \leq 1}|(\lambda_u - \widehat{\lambda}^{(n)}_u)|
        \leq 2 C C' 
    \end{align*}
by Assumption~\ref{asm:UniformBounds}. Hence, for all $s < t$, Lemma \ref{lem:conditionalHoeffding} lets us conclude that 

\(
        \log \ex (e^{xW^{s, t}} \mid \tilde{\mathcal{G}}_1^{n,c}) 
        \leq 2 (xCC')^2
\), 
$x \in \mathbb{R}$.
\end{proof}

Then we have the following regarding $R_1^{(n)}$.

\begin{lem} \label{lem:R1}
Under Assumptions~\ref{asm:UniformRates} and \ref{asm:UniformBounds} it holds that $\sup_{t\in [0,1]} |R_{1,t}^{(n)}| \convUP 0$. 
\end{lem}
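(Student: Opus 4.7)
The plan is to decompose the uniform convergence $\sup_{t\in[0,1]} |R_{1,t}^{(n)}| \convUP 0$ into two standard ingredients: (a) pointwise convergence at finitely many times, and (b) stochastic equicontinuity of the process $R_1^{(n)}$. The key structural observation, prepared by Lemma~\ref{lemma:subgaussian}, is that conditionally on the holdout filtration $\tilde{\mathcal{G}}_1^{n,c}$ the $|J_n|$ summands defining $R_1^{(n)}$ are i.i.d.~with conditional mean zero, and any increment can be written as $R_{1,t}^{(n)} - R_{1,s}^{(n)} = \frac{t-s}{\sqrt{|J_n|}}\sum_{j\in J_n} W_j^{s,t}$ with the $W_j^{s,t}$ conditionally i.i.d.

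For the pointwise bound I would fix $t\in[0,1]$, condition on the holdout, and apply Cauchy--Schwarz inside the $du$-integral to obtain $\ex[(R_{1,t}^{(n)})^2] = \ex[\var(R_{1,t}^{(n)} \mid \tilde{\mathcal{G}}_1^{n,c})] \le (C')^2\, \ex[\int_0^1 (\lambda_u - \widehat{\lambda}_u^{(n)})^2\, du] \le (C')^2\, h^\theta(n)^2$. Assumption~\ref{asm:UniformRates} ensures $h^\theta(n) \to 0$ uniformly over $\Theta$, so Chebyshev's inequality gives $R_{1,t}^{(n)} \convUP 0$ for each fixed $t$, uniformly over $\theta\in\Theta$.

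For equicontinuity I would use the conditional sub-Gaussianity from Lemma~\ref{lemma:subgaussian} together with the fact that sums of conditionally i.i.d.~sub-Gaussians are sub-Gaussian, yielding $\mathbb{P}(|R_{1,t}^{(n)} - R_{1,s}^{(n)}| > \delta \mid \tilde{\mathcal{G}}_1^{n,c}) \le 2\exp\bigl(-\delta^2/(2(2CC')^2(t-s)^2)\bigr)$, uniformly in $n$ and $\theta$. This is precisely the Lipschitz sub-Gaussian increment condition required by the chaining machinery in Section~\ref{app:UniformChaining} of the supplement, with canonical metric $d(s,t)=2CC'|s-t|$ on $[0,1]$. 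Applied to any sub-interval of length $\delta$, Dudley-style chaining yields an expected modulus of order $CC'\delta$ and a matching sub-Gaussian deviation bound; a union bound over $\lceil 1/\delta\rceil$ subintervals then produces stochastic equicontinuity of $R_1^{(n)}$ uniformly in $n$ and $\theta$.

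To conclude, fix $\varepsilon,\eta>0$, pick $\delta$ from the equicontinuity step so that $\sup_{|t-s|<\delta}|R_{1,t}^{(n)}-R_{1,s}^{(n)}| < \varepsilon/2$ with probability at least $1-\eta/2$ for all $n$ and $\theta$, and set $t_k = k\delta$. The pointwise step with a union bound over the finitely many grid points gives $\max_k |R_{1,t_k}^{(n)}| < \varepsilon/2$ with probability at least $1-\eta/2$ for $n$ large enough, uniformly in $\theta$, whence $\sup_t |R_{1,t}^{(n)}| < \varepsilon$ with probability at least $1-\eta$. The main obstacle is the mismatch between $L^\infty$ and $L^2$ control: the sub-Gaussian variance factor in Lemma~\ref{lemma:subgaussian} is the constant $(2CC')^2$ and contains no $h^\theta(n)$, so pure chaining yields only an $O(1)$ bound on $\sup_t |R_{1,t}^{(n)}|$; the shrinkage must be extracted from the pointwise variance bound and combined with the equicontinuity through the grid-plus-modulus decomposition, rather than from any single-scale argument. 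Uniformity over $\theta\in\Theta$ is then automatic since $C,C'$ are $\theta$-free and $h^\theta(n)$ vanishes uniformly by Assumption~\ref{asm:UniformRates}.
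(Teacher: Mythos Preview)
Your proposal is correct and follows essentially the same approach as the paper: pointwise convergence via the conditional-mean-zero structure, the variance bound $(C')^2 h^\theta(n)^2$, and Chebyshev; stochastic equicontinuity via Lemma~\ref{lemma:subgaussian} and the chaining in Corollary~\ref{cor:uniformchaining}; and the final combination via the grid-plus-modulus argument, which is exactly Lemma~\ref{lem:uniformequcont}. Your identification of the ``main obstacle'' (that chaining alone gives only an $O(1)$ bound and the shrinkage must come from the pointwise variance) is precisely the reason the paper splits the proof into these two parts.
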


\begin{proof}
The proof consists of two parts. First we show that for each $t \in [0,1]$
it holds that
    \begin{align*}
    R_{1,t}^{(n)} \convUP 0
    \end{align*}
for $n \to \infty$. Then we show 
\emph{stochastic equicontinuity} of the process 
$R_{1}^{(n)}$ uniformly over $\Theta$, and by Lemma \ref{lem:uniformequcont} it follows
that 
    \begin{align*}
    \sup_{t \in [0,1]} |R_{1,t}^{(n)}| \convUP 0.
    \end{align*}
This is a direct generalization of Theorem 2.1 in \cite{Newey:1991}.
The collection of random variables 
    \begin{align*}
    \left(
        G_{j,s}
        \left( \lambda_{j, s} - \widehat{\lambda}^{(n)}_{j, s} \right)
    \right)_{j \in J_n}
    \end{align*}
are i.i.d. conditionally on $\tilde{\mathcal{G}}_1^{n,c}$. Therefore, an application of the conditional Fubini theorem yields
    \begin{align*}
    \ex( R_{1, t} \mid \tilde{\mathcal{G}}_1^{n,c}) 
    & = 
    \frac{1}{\sqrt{|J_n|}} \sum_{j \in J_n} \int_0^t 
    \ex\left( 
    G_{j,s}
    \left( \lambda_{j, s} - \widehat{\lambda}^{(n)}_{j, s} \right) 
    \mid 
    \tilde{\mathcal{G}}_1^{n,c}
    \right)
    \mathrm{d}s
    = 0
    \end{align*}
where the last equality follows from the computation in \eqref{eq:R1meanzero}.
Whence $\ex(R^{(n)}_{1, t}) = 0$, and 
$\var(R^{(n)}_{1, t}) = \ex(\var(R^{(n)}_{1, t} \mid \tilde{\mathcal{G}}_1^{n,c}))$, so
    \begin{align*}
    \var(R_{1, t}^{(n)}) 
    & =
    \ex \left( 
    \frac{1}{|J_n|} \sum_{j \in J_n} \var \left( 
        \int_0^t 
        G_{j,s}
        \left( \lambda_{j, s} - \widehat{\lambda}^{(n)}_{j, s} \right) \mathrm{d}s 
        \mid 
        \tilde{\mathcal{G}}_1^{n,c}
    \right)
    \right)  
    \\
    & =
    \ex \left(
    \ex \left(
    \left(
        \int_0^t 
        G_{s}
        \left( \lambda_{s} - \widehat{\lambda}^{(n)}_{s} \right) \mathrm{d}s 
    \right)^2
    \mid 
    \tilde{\mathcal{G}}_1^{n,c}
    \right)
    \right)
    \\
    & = 
    \ex \left(
    \left(
        \int_0^t 
        G_{s}
        \left( \lambda_{s} - \widehat{\lambda}^{(n)}_{s} \right) \mathrm{d}s 
    \right)^2
    \right)
    \\
    & \leq (C')^2 \ex \left( \int_0^t 
        \left( \lambda_{s} - \widehat{\lambda}^{(n)}_{s} \right)^2 
            \mathrm{d}s \right)\\
    & \leq (C')^2 h(n)^2
    \end{align*}
where we have used Assumption~\ref{asm:UniformBounds} (ii). Hence by Chebychev's inequality, it holds for all $\epsilon>0$ that
    \begin{align*}
    \mathbb{P}(|R^{(n)}_{1, t}| > \epsilon)
    \leq 
    \frac{\var(R^{(n)}_{1, t})}{\epsilon^2} 
    \leq \frac{(C')^2 h(n)^2}{\epsilon^2}  \longrightarrow 0
    \end{align*}
as $n \to \infty$ uniformly over $\Theta$ by Assumption~\ref{asm:UniformRates}. This completes the first part of the proof. 
For the second part, we use a chaining argument based on the exponential inequality in 
Lemma~\ref{lemma:subgaussian}. We let 
    \begin{align*}
    W^{s, t}_j = \frac{1}{t - s} \int_s^t G_{j, u}
    \left(
    \lambda_{j, u} - \widehat{\lambda}_{j, u}^{(n)}
    \right) \mathrm{d}u
    \end{align*}
and 
    \begin{align*}
        A
        =  \frac{1}{\sqrt{|J_n|}} 
            \sum_{j \in J_n} W_j^{s, t}
        = 
        \frac{1}{t - s} (R_{1, t}^{(n)} - R_{1, s}^{(n)}).
    \end{align*}
Using that $(W_j^{s, t})_{j \in J_n}$ are i.i.d. conditionally on $\tilde{\mathcal{G}}_1^{n,c}$ 
we have by Lemma~\ref{lemma:subgaussian} that
$\ex(A)=0$ and that
    \begin{align*}
    \log \ex \left( e^{x A} \right)
    & = 
    \log \ex \left( \ex \left( e^{x A} \mid \tilde{\mathcal{G}}_1^{n,c} \right) \right) 
    \\
    & = \log  \ex \left( \prod_{j \in J_n}
    \ex\left( e^{\frac{x}{\sqrt{|J_n|}} W_j^{s, t}} 
    \mid 
    \tilde{\mathcal{G}}_1^{n,c}
    \right)
    \right) 
    \\
    & \leq \log  \ex \left( e^{\frac{x^2 \nu}{2}}
    \right) 
    \\
    & = \frac{x^2 \nu}{2}.
    \end{align*}
Hence $A$ is also sub-Gaussian with variance factor $\nu$. This implies that
    \begin{align*}
        \mathbb{P}(|A| > \eta) \leq 2 e^{- \frac{\eta^2 \nu}{2}}
    \end{align*}
for all $\eta > 0$. Rephrased in terms of $R_1^{(n)}$ this bound reads
    \begin{align*}
    \mathbb{P} \left(
    |R_{1, t}^{(n)} - R_{1, s}^{(n)}| > \eta (t - s)
    \right) \leq 2 e^{-\frac{\eta^2 \nu}{2}}
    \end{align*}
for all $\eta > 0$ and $s < t$. It now follows from the chaining lemma, \cite{Pollard:1984} Lemma 
VII.9, that $R_1^{(n)}$ is stochastic equicontinuous. 
Since the variance factor $\nu=(2CC')^2$ does not depend on $\theta \in \Theta$, we have stochastic equicontinuity uniformly over $\Theta$ by Corollary \ref{cor:uniformchaining}. 
This completes the second part of the proof and we are done.
\end{proof}

Note that the second part of the proof above establishes stochastic 
equicontinuity by a bound on the probability that the increments 
of the process are large. This is a well known technique, see, e.g.,
Example 2.2.12 in \cite{Vaart:1996}, from which the same conclusion 
will follow if 
$$\ex(|R_{1, t}^{(n)} - R_{1, s}^{(n)}|^p) \leq K |t - s|^{1 + r}$$
for $K,p,r > 0$.

Proposition~\ref{prop:remainderterms} now follows from combining the Lemmas \ref{lem:R1}, \ref{lemma:R2}, and \ref{lem:R3}.
\hfill \qed{}

\subsection{Proof of Proposition \ref{prop:Dasymptotics}}
We separate the discussion of $D_1^{(n)}$ and $D_2^{(n)}$ into the Lemmas \ref{lem:D1asymptotics} and \ref{lem:D2asymptotics}, respectively, which together amount to Proposition \ref{prop:Dasymptotics}.

\begin{lem}\label{lem:D1asymptotics}
    Suppose that Assumptions \ref{asm:UniformBounds} and \ref{asm:UniformRates} hold. Then the stochastic process
    $\overline{D}^{(n)} \coloneqq D_1^{(n)}-\sqrt{|J_n|} \cdot \gamma$
    converges in distribution in $C[0,1]$ uniformly over $\Theta$.
\end{lem}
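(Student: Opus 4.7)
The plan is to recognize $\overline{D}^{(n)}$ as a normalized sum of i.i.d., centered, $C[0,1]$-valued random variables and then invoke a uniform functional CLT on $C[0,1]$.

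For each $j \in J_n$, set
$$Y_j(t) = \int_0^t G_{j,s}(\blambda_{j,s} - \lambda_{j,s}) \mathrm{d}s, \qquad t\in[0,1].$$
By Proposition~\ref{prop:gammaalternative}, $\ex Y_j(t) = \gamma_t$ for each $t$, so
$$\overline{D}^{(n)}_t = \frac{1}{\sqrt{|J_n|}} \sum_{j \in J_n}(Y_j(t) - \gamma_t).$$
The innovation theorem gives $\lambda_s = \ex(\blambda_s\mid\mathcal{F}_{s-})$, and under Assumption~\ref{asm:UniformBounds} both $|G_s|\leq C'$ and $\blambda_s,\lambda_s \leq C$ uniformly over $\Theta$. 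Consequently each sample path $t \mapsto Y_j(t)$ is Lipschitz with deterministic constant $2CC'$ and is bounded by $2CC'$ on $[0,1]$, all uniformly over $\Theta$. In particular, $Y_j$ takes values in a fixed compact subset of $C[0,1]$.

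I would then prove weak convergence in $(C[0,1],\|\cdot\|_\infty)$ uniformly over $\Theta$ by verifying (i) uniform convergence of finite-dimensional distributions and (ii) uniform stochastic equicontinuity, in the sense of Definition~\ref{dfn:UniformConvergence} and the companion uniform tightness criterion from Section~\ref{app:UniformAsymptotics}. For (i), fix $0 \leq t_1 < \cdots < t_k \leq 1$; the vectors $(Y_j(t_1),\ldots,Y_j(t_k)) - (\gamma_{t_1},\ldots,\gamma_{t_k})$ are i.i.d., centered, and uniformly bounded by $4CC'\sqrt{k}$, so a Lindeberg–Feller style argument (the Lindeberg condition being trivially satisfied under a uniform bound, cf. the proof of Proposition~\ref{prop:UasymptoticGaussian}) delivers $\mathcal{N}(0,K^\theta)$ convergence uniformly over $\Theta$, where $K^\theta(s,t) = \operatorname{cov}^\theta(Y_1(s),Y_1(t))$. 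For (ii), the Lipschitz bound gives
$$|\overline{D}^{(n)}_t - \overline{D}^{(n)}_s| \leq \frac{1}{\sqrt{|J_n|}}\sum_{j\in J_n}|Y_j(t)-Y_j(s) - (\gamma_t - \gamma_s)| \leq 4CC'\sqrt{|J_n|}\,|t-s|,$$
which is unhelpful; instead I would apply the same conditional Hoeffding / sub-Gaussian tail bound used in Lemma~\ref{lem:R1}, now applied to the centered increments $(Y_j(t)-Y_j(s)) - (\gamma_t-\gamma_s) \in [-4CC'(t-s),4CC'(t-s)]$, to obtain that $|\overline{D}^{(n)}_t - \overline{D}^{(n)}_s|/(t-s)$ is sub-Gaussian with a variance factor independent of $\theta$ and $n$. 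The uniform chaining argument of Corollary~\ref{cor:uniformchaining} then yields stochastic equicontinuity of $(\overline{D}^{(n),\theta})$ uniformly over $\Theta$.

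Combining (i) and (ii) with the uniform tightness–plus–finite-dimensional-convergence criterion in Section~\ref{app:UniformAsymptotics} shows that $\overline{D}^{(n),\theta}$ converges in distribution in $(C[0,1],\|\cdot\|_\infty)$ uniformly over $\Theta$ to a continuous centered Gaussian process with covariance kernel $K^\theta$. The main subtlety I anticipate is bookkeeping: ensuring that all tail bounds, Lipschitz constants, and covariance estimates are genuinely $\theta$-free, so that the uniform versions of Prokhorov tightness and the CLT from Section~\ref{app:UniformAsymptotics} apply without $\theta$-dependent constants. This is precisely where Assumption~\ref{asm:UniformBounds} is used.
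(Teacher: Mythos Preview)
Your proposal is correct and follows essentially the same route as the paper: write $\overline{D}^{(n)}$ as a normalized i.i.d.\ sum of centered, uniformly bounded $C[0,1]$-variables, get finite-dimensional convergence from a uniform Lindeberg--Feller argument, and get uniform stochastic equicontinuity from a Hoeffding-based sub-Gaussian tail bound on the increments combined with the chaining Corollary~\ref{cor:uniformchaining}. The only piece you do not spell out is that the criterion in Lemma~\ref{lem:ProkhorovsPrinciple} also requires uniform tightness of the \emph{limit} family $(\Gamma^\theta)_{\theta\in\Theta}$; the paper handles this by running the same sub-Gaussian chaining argument directly on $\Gamma^\theta$ (its increments are Gaussian with variance bounded by $(2CC'(t-s))^2$ from \eqref{eq:squareincrement}), which is a one-line addition to your sketch.
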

\begin{proof}
Let $\overline{D}^{(n)} \coloneqq D_1^{(n)}-\sqrt{|J_n|} \cdot \gamma$ and note that 
$$
    \overline{D}^{(n)} = |J_n|^{-\frac{1}{2}} \sum_{j\in J_n} W_j,
$$
where $W_j$ is given by $W_{j,t} \coloneqq \int_0^t G_{j,s} (\blambda_{j,s} - \lambda_{j,s})\mathrm{d}s - \gamma_t$ for each $j\in J_n$. By assumption, the variables $\{W_j\colon j\in J_n\}$ are i.i.d. with the same distribution as the process $W$ given by $W_{t} \coloneqq \int_0^t G_{s} (\blambda_{s} - \lambda_{s})\mathrm{d}s - \gamma_t$.
For each $\theta\in \Theta$, let $\Gamma^\theta$ be a Gaussian process with mean zero and covariance function $(s,t)\mapsto \cov(W_s^\theta,W_t^\theta)$, which is well-defined by computations shown below.

We will show that $\overline{D}^{(n),\theta} \convUD \Gamma^\theta$ in $C[0,1]$ by applying Lemma \ref{lem:ProkhorovsPrinciple}, which is an example of Prokhorov's method of "tightness + identification of limit".
We first prove that for any given $k\in \mathbb{N}$ and $0\leq t_1<t_2<\cdots<t_k\leq 1$,
\begin{align*}
    \mathbf{D}^{(n)} \coloneqq
    (\overline{D}_{t_1}^{(n)},\overline{D}_{t_2}^{(n)},\ldots, \overline{D}_{t_k}^{(n)})
    \convUD
    (\Gamma_{t_1}^\theta,\Gamma_{t_2}^\theta,\ldots, \Gamma_{t_k}^\theta).
\end{align*}
To this end we will apply the uniform CLT of \citet[Proposition 19]{lundborg2021conditional}
to the sequence of random vectors $\mathbf{D}^{(n)}\in \mathbb{R}^k$, i.e., the sequence of normalized sums of i.i.d. copies of $\mathbf{W} \coloneqq (W_{t_1},\ldots , W_{t_k})$.
The process $(W_t)$ is mean zero and hence $\mathbf{W}$ is also mean zero. For any $t\in [0,1]$ we observe that
\begin{align*}
    \var(W_{t}) = \var(W_{t}+\gamma_t)
    \leq \ex\left[\pa{\int_0^t |G_s|\cdot |\blambda_s - \lambda_s|\mathrm{d}s}^2
        \right]
    \leq 2C^2(C')^2.
\end{align*}
Therefore the trace of $\var(\mathbf{W})$ is uniformly bounded, which is implies the trace condition in Proposition 19 of \citet{lundborg2021conditional}. 
From Hölder's inequality and Minkowski's inequality, we note that for any $\mathbf{a},\mathbf{b}\in \mathbb{R}^k$
$$
    \|\mathbf{a} + \mathbf{b}\|_2^3 
        \leq k^{3/2}\|\mathbf{a} + \mathbf{b}\|_3^3 
        \leq k^{3/2} (\|\mathbf{a}\|_3 + \|\mathbf{b}\|_3)^3 
        \leq 8 k^{3/2} (\|\mathbf{a}\|_3^3 + \|\mathbf{b}\|_3^3).
$$
Combining the above with Assumption \ref{asm:UniformBounds} and Lemma \ref{lem:regularityofgammasigma} yields that 
$$
    \ex[\|\mathbf{W}\|_2^3] 
    \leq 
    C_k \ex\left[\pa{\int_0^1 |G_s|\cdot|\blambda_s - \lambda_s|\mathrm{d}s}^3\right]
    + C_k \sup_{t\in[0,1]} |\gamma_t|^3
    \leq 16 C_k C^3(C')^3,
$$
where $C_k=8k^{5/2}$. Hence Proposition 19 of \citet{lundborg2021conditional} lets us conclude that $\mathbf{D}^{(n)} \convUD \mathcal{N}(0,\var(\mathbf{W}))$. By definition of $\Gamma^\theta$, this is equivalent to 
$\mathbf{D}^{(n)} \convUD (\Gamma_{t_1}^\theta,\Gamma_{t_2}^\theta,\ldots, \Gamma_{t_k}^\theta)$.
 
We now argue that $(\overline{D}^{(n)})$ and $(\Gamma^\theta)$ are stochastically equicontinuous uniformly over $\Theta$. From the definition of $\Gamma^\theta$ and by Assumption \ref{asm:UniformBounds}, it follows that 
\begin{align} \label{eq:squareincrement}
    \ex[(\Gamma_t^\theta-\Gamma_s^\theta)^2] 
    = \ex[(W_t-W_s)^2] 
    \leq (2CC' (t-s))^2.
\end{align}
Hence $\frac{1}{t-s}(\Gamma_t^\theta-\Gamma_s^\theta)$ is Gaussian with a variance bounded over $\Theta$ and $0\leq s<t\leq 1$. In particular, it is sub-Gaussian with a uniform variance factor over $\Theta$ and $0\leq s<t\leq 1$.
Since $W$ is uniformly bounded over $\Theta$, an application of Hoeffding's Lemma yields that $A_j^{t,s}\coloneqq \frac{1}{t-s}(W_{j,t}-W_{j,s})$ is also sub-Gaussian with a variance factor $\nu$ that is uniform over $\Theta$, $0\leq s<t\leq 1$, and $j\in J_n$. Letting $A_\bullet^{s,t} =\frac{1}{t-s}(\overline{D}_t^{(n)}-\overline{D}_s^{(n)})$, we have
\begin{align*}
    \ex e^{x A_\bullet^{s,t}} 
    =   \prod_{j\in J_n}  
        \ex[e^{x|J_n|^{-1/2}A_j^{s,t}}]
    \leq \prod_{j\in J_n} e^{\frac{x^2\nu}{2|J_n|}} 
    = e^{x^2\nu/2}.
\end{align*}
Hence $A_\bullet^{s,t}$ is also sub-Gaussian with a variance factor uniformly over $\Theta$ and $0\leq s<t\leq 1$.

From the uniform chaining lemma, Corollary \ref{cor:uniformchaining}, we now conclude that both $(\Gamma^\theta)$ and $(\overline{D}^{(n)})$ are stochastically equicontinuous uniformly over $\Theta$. By Proposition~\ref{prop:equicontinuityistightness}, this means that the collection $(\overline{D}^{(n),\theta})$ is sequentially tight
and that $(\Gamma^\theta)$, which is constant in $n$, is uniformly tight. 

Now we have shown convergence of the finite-dimensional marginals and appropriate tightness conditions, so Lemma \ref{lem:ProkhorovsPrinciple} lets us conclude that $\overline{D}^{(n)} \convUD \Gamma^\theta$ weakly in $C[0,1]$.
\end{proof}

Before moving on to the term $D_2^{(n)}$, we first note that Lemma \ref{lem:D1asymptotics} implies that stochastic boundedness, as we will use this result in the proof of Theorem \ref{thm:main}.

\begin{lem}\label{lem:D1bounded}
    Suppose that Assumptions \ref{asm:UniformBounds} and \ref{asm:UniformRates} hold.
    Then $\overline{D}^{(n)} \coloneqq D_1^{(n)}-\sqrt{|J_n|} \cdot \gamma$ is stochastically bounded uniformly over $\Theta$, i.e., 
    for every $\varepsilon>0$ there exists $K>0$ such that
    \begin{align*}
        \limsup_{n\to \infty}\sup_{\theta\in \Theta} \mathbb{P}\pa{
        \|\overline{D}^{(n),\theta}\|_\infty > K} < \varepsilon.
    \end{align*}
\end{lem}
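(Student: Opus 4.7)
The plan is to deduce this stochastic boundedness claim by reusing the sub-Gaussian chaining estimates already developed in the proof of Lemma~\ref{lem:D1asymptotics}. The key observation is that $\overline{D}^{(n),\theta}_0 = D_{1,0}^{(n),\theta} - \sqrt{|J_n|}\gamma_0^\theta = 0$ deterministically for every $n$ and $\theta$, since both the stochastic integral and the LCM vanish at time zero. Thus controlling $\|\overline{D}^{(n),\theta}\|_\infty$ reduces to controlling the oscillation of $\overline{D}^{(n),\theta}$ around its value at $t=0$.

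First I would recall from the proof of Lemma~\ref{lem:D1asymptotics} that the normalized increments $A_\bullet^{s,t} \coloneqq (\overline{D}_t^{(n),\theta} - \overline{D}_s^{(n),\theta})/(t-s)$ are sub-Gaussian with a variance factor $\nu$ that is independent of $n$, $\theta \in \Theta$, and $0 \leq s<t \leq 1$. This is precisely the estimate obtained there via Hoeffding's lemma (applied to bounded summands) followed by averaging over the i.i.d.\ copies indexed by $J_n$. Next, I would apply the uniform chaining bound of Corollary~\ref{cor:uniformchaining} along a dyadic partition of $[0,1]$, anchored at $\overline{D}_0^{(n),\theta} = 0$, to upgrade the increment bound to a tail inequality for the supremum of the form
\begin{equation*}
    \sup_{n} \sup_{\theta \in \Theta} \mathbb{P}\!\left( \sup_{0\leq t\leq 1}|\overline{D}^{(n),\theta}_t| > K \right) \leq c_1 e^{-c_2 K^2}
\end{equation*}
for all $K$ exceeding some universal threshold, where $c_1, c_2 > 0$ depend only on $\nu$. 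Given such a bound, the stochastic boundedness statement follows immediately by choosing $K$ large enough that $c_1 e^{-c_2 K^2} < \varepsilon$.

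The only substantive point to check is that Corollary~\ref{cor:uniformchaining} can indeed be applied to translate the uniform sub-Gaussian increment bound into a uniform tail bound on the supremum, rather than merely into uniform stochastic equicontinuity (which is what was explicitly invoked in the proof of Lemma~\ref{lem:D1asymptotics}). Since the standard chaining argument, e.g.\ Lemma~VII.9 in \cite{Pollard:1984}, delivers a tail bound on the Hölder modulus of the process, combining this with $\overline{D}_0^{(n),\theta}=0$ gives a tail bound on $\|\overline{D}^{(n),\theta}\|_\infty$ with constants depending only on $\nu$, hence uniform over $n$ and $\theta$. No new estimates beyond those assembled in the proof of Lemma~\ref{lem:D1asymptotics} are required, which makes this essentially a corollary of that lemma's proof rather than a separate technical result.
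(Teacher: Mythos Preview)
Your approach is correct and takes a genuinely different route from the paper. The paper argues indirectly via the limit: it invokes the full distributional convergence $\overline{D}^{(n),\theta} \convUD \Gamma^\theta$ from Lemma~\ref{lem:D1asymptotics}, applies the uniform continuous mapping theorem to pass to $\|\cdot\|_\infty$, then uses a uniform Portmanteau-type bound to reduce to showing $\sup_\theta \ex\|\Gamma^\theta\|_\infty < \infty$, which is established via the Sudakov--Fernique comparison with a scaled Brownian motion. Your argument is more direct: it exploits that the sub-Gaussian increment bounds from the proof of Lemma~\ref{lem:D1asymptotics} are already uniform in $n$ and $\theta$, and combines them with $\overline{D}_0^{(n),\theta}=0$ to control the supremum without ever passing to the limit process. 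This is more elementary (no Sudakov--Fernique, no Bengs--Holzmann result needed) and in fact yields a bound that is uniform in $n$ as well, not just in the $\limsup$.

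One small caveat: the chaining lemma as stated in the paper (and Corollary~\ref{cor:uniformchaining}) only gives a modulus-of-continuity bound of the form $\mathbb{P}(\sup_{|s-t|\leq\epsilon}|Z_s-Z_t|>26\varsigma J(\epsilon))\leq 2\epsilon$, which becomes trivial as $\epsilon\to 1$. To get control of $\sup_t|Z_t|$ you need the extra step you allude to: combine the modulus bound at a fixed small scale with a union bound over a finite net (using the sub-Gaussian tail of each $Z_t = Z_t - Z_0$). This gives the desired ``for every $\varepsilon$ there exists $K$'' conclusion; the sharper exponential tail $c_1 e^{-c_2 K^2}$ you write down is also true by standard sub-Gaussian chaining theory, but it is not what the paper's lemma literally states, so you should either cite a source for that form or carry out the net-plus-modulus argument explicitly.
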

\begin{proof}
    We have established in the proof of Lemma \ref{lem:D1asymptotics}, under the same conditions, that $\overline{D}^{(n),\theta} \convUD \Gamma^\theta$ weakly in $C[0,1]$.
    By the uniform continuous mapping theorem formulated in Proposition \ref{prop:Uctsmapping}, it follows that $\|\overline{D}^{(n),\theta}\|_\infty \convUD \|\Gamma^\theta\|_\infty$.
    From \citet{bengs2019uniform} Theorem 4.1 we then obtain that
    \begin{align*}
        \limsup_{n\to \infty} \sup_{\theta \in \Theta}
            \mathbb{P}( \|\overline{D}^{(n),\theta}\|_\infty > K) 
        \leq  \sup_{\theta \in \Theta}\mathbb{P}(\|\Gamma^\theta\|_\infty > K) 
        \leq \frac{\ex\|\Gamma^\theta\|_\infty}{K}.
    \end{align*}
    Hence it suffices to argue that $\ex \|\Gamma^\theta\|_\infty$ is uniformly bounded over $\Theta$.
    To this end, we note that Equation \eqref{eq:squareincrement} 
    shows that square means of the increments of $\Gamma^\theta$
    are smaller that those of a standard Brownian motion scaled by $2CC'$. Then the Sudakov–Fernique comparison inequality \citep[Theorem 2.2.3]{adler2007random} allows us to leverage this relationship to the expected uniform norms, i.e., $\ex\|\Gamma^\theta\|_\infty \leq 2CC' \ex (\sup_{t\in[0,1]}|B_t|)$. It can be verified that $\ex (\sup_{t\in[0,1]}|B_t|)$ is finite, and in fact, equal to $\sqrt{\pi/2}$ as shown in \citet{saz2019expectedsup}. 
\end{proof}

\begin{lem}\label{lem:D2asymptotics}
    Suppose that Assumptions \ref{asm:UniformBounds} and \ref{asm:UniformRates}
    hold, and that $G_t = X_t - \Pi_t$ is the additive residual process. 
    Then $D_2^{(n)}\convUP 0$ in $D[0,1]$ as $n\to \infty$.
\end{lem}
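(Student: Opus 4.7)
The plan is to mimic the proof of Lemma~\ref{lem:R1} almost verbatim, once we exploit the key observation that for the additive residual process the difference
\[
\widehat{G}_t^{(n)} - G_t = (X_t - \widehat{\Pi}_t^{(n)}) - (X_t - \Pi_t) = \Pi_t - \widehat{\Pi}_t^{(n)}
\]
is $\mathcal{F}_t$-predictable conditionally on $\tilde{\mathcal{G}}_1^{n,c}$ (since the $X_t$-contribution cancels). This is precisely what made $R_1^{(n)}$ tractable, and it is the entire reason the additive residual is needed here. By Assumption~\ref{asm:UniformBounds} (and the stated analogous bound on $\widehat{G}^{(n)}$), we have $|\widehat{G}_t^{(n)} - G_t| \leq 2C'$ and $|\blambda_t - \lambda_t| \leq 2C$, which will be the inputs for both the variance estimate and the sub-Gaussian bound.

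First, I would establish pointwise convergence $D_{2,t}^{(n),\theta} \convUP 0$ for each $t \in [0,1]$. The innovation theorem gives $\lambda_u = \ex(\blambda_u \mid \mathcal{F}_{u-})$, so using that $\Pi_u - \widehat{\Pi}_u^{(n)}$ is $\mathcal{F}_{u-}$-predictable conditionally on $\tilde{\mathcal{G}}_1^{n,c}$ together with the conditional Fubini theorem, I obtain $\ex(D_{2,t}^{(n)} \mid \tilde{\mathcal{G}}_1^{n,c}) = 0$ by the same tower computation as in~\eqref{eq:R1meanzero}. Since the summands are i.i.d.\ conditionally on $\tilde{\mathcal{G}}_1^{n,c}$, Cauchy--Schwarz together with $|\blambda_s - \lambda_s| \leq 2C$ yields
\[
\var(D_{2,t}^{(n)}) \leq 4C^2\,\ex\!\int_0^1 \bigl(\Pi_s - \widehat{\Pi}_s^{(n)}\bigr)^2 \mathrm{d}s = 4C^2 g(n)^2,
\]
and Chebyshev's inequality combined with Assumption~\ref{asm:UniformRates} gives the pointwise uniform convergence.

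Next, I would establish stochastic equicontinuity uniformly over $\Theta$ by chaining, in exact parallel to the second half of the proof of Lemma~\ref{lem:R1}. Define
\[
\widetilde W_j^{s,t} = \frac{1}{t-s}\int_s^t \bigl(\Pi_{j,u} - \widehat{\Pi}_{j,u}^{(n)}\bigr)\bigl(\blambda_{j,u} - \lambda_{j,u}\bigr)\,\mathrm{d}u,
\]
which is bounded by $4CC'$ and satisfies $\ex(\widetilde W_j^{s,t} \mid \tilde{\mathcal{G}}_1^{n,c}) = 0$ by the argument above. The conditional Hoeffding lemma (Lemma~\ref{lem:conditionalHoeffding}) then shows that $\widetilde W_j^{s,t}$ is conditionally sub-Gaussian with variance factor $\nu = (4CC')^2$ that does not depend on $\theta$, $s$, $t$, or $j$. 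Averaging preserves sub-Gaussianity with the same $\nu$, giving
\[
\mathbb{P}\!\left(|D_{2,t}^{(n)} - D_{2,s}^{(n)}| > \eta(t-s)\right) \leq 2 e^{-\eta^2\nu/2}
\]
for all $0 \leq s < t \leq 1$ and $\eta > 0$. Since the bound is uniform in $\theta \in \Theta$, Corollary~\ref{cor:uniformchaining} yields uniform stochastic equicontinuity.

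Finally, combining pointwise uniform convergence with uniform stochastic equicontinuity via Lemma~\ref{lem:uniformequcont} gives $\sup_{t\in[0,1]}|D_{2,t}^{(n),\theta}| \convUP 0$, which is the desired conclusion in $D[0,1]$. I do not anticipate any real obstacle here: every step has a direct template in the $R_1^{(n)}$ analysis, and the only substantive new ingredient is the cancellation $\widehat{G}^{(n)} - G = \Pi - \widehat{\Pi}^{(n)}$ that the additive residual structure provides --- which is precisely the hypothesis of the lemma. The proof would in fact break for a general residual process exactly because $\widehat{G}^{(n)} - G$ would then generally fail to be $\mathcal{F}_t$-predictable, removing the mean-zero property essential to both the Chebyshev and the chaining arguments.
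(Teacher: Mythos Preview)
Your proposal is correct and follows essentially the same approach as the paper's own proof: both exploit the cancellation $\widehat{G}^{(n)} - G = \Pi - \widehat{\Pi}^{(n)}$ to obtain $\mathcal{F}_t$-predictability conditionally on $\tilde{\mathcal{G}}_1^{n,c}$, establish conditional mean zero via the innovation theorem, bound the conditional variance by $4C^2 g(n)^2$ for pointwise convergence via Chebyshev, and then use conditional Hoeffding together with the chaining lemma (Corollary~\ref{cor:uniformchaining}) for uniform stochastic equicontinuity before concluding with Lemma~\ref{lem:uniformequcont}. The constants and structure match the paper's argument step for step.
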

\begin{proof}
Note first that the terms in $D_2^{(n)}$ are i.i.d. conditionally on $\tilde{\cG}_1^{n,c}$, with the same distribution as the process $\xi$ given by
\begin{align*}
    \xi_t = \frac{1}{\sqrt{|J_n|}}\int_0^t (\widehat{G}_s^{(n)} - G_s)(\blambda_s - \lambda_s)\mathrm{d}s.
\end{align*}
Since $\blambda_t$ is independent of $\tilde{\cG}_1^{n,c}$, we have from the innovation theorem that
$$
    \ex(\blambda_t \mid \mathcal{F}_{t-}\vee \tilde{\cG}_1^{n,c})
    = \ex(\blambda_t \mid \mathcal{F}_{t-}) = \lambda_t.
$$
For the additive residual process we also note that
$G_t-\widehat{G}_t^{(n)} = \widehat{\Pi}_t^{(n)} - \Pi_t$ is $\mathcal{F}_{t}$-predictable conditionally on $\tilde{\cG}_1^{n,c}$. It now follows that
\begin{align*}
    \sqrt{|J_n|} \cdot \ex[\xi_t\mid\tilde{\cG}_1^{n,c}]
    &= 
        \int_0^t \ex[(\widehat{G}_s^{(n)} - G_s)(\blambda_s - \lambda_s)\mid\tilde{\cG}^{n,c}]\mathrm{d}s \\
    &= 
        \int_0^t \ex[(\widehat{G}_s^{(n)} - G_s)(\ex[\blambda_s\mid\mathcal{F}_{s-}\vee \tilde{\cG}^{n,c}]- \lambda_s)\mid\tilde{\cG}^{n,c}]\mathrm{d}s
    = 0.
\end{align*}
We can therefore conclude that $D_2^{(n)}$ is mean zero conditionally on $\tilde{\cG}^{n,c}$. 
Using that the terms of $D_2^{(n)}$ are i.i.d. conditionally on $\tilde{\cG}^{n,c}$ once more, we now obtain that
\begin{align*}
    \var(D_{2,t}^{(n)}\mid\tilde{\cG}^{n,c}) 
    = |J_n| \cdot \var( \xi_t \mid\tilde{\cG}^{n,c} )
    &= \ex\left[\pa{
        \int_0^t (\widehat{G}_s^{(n)} - G_s)(\blambda_s - \lambda_s)\mathrm{d}s
        }^2 \Big|\tilde{\cG}^{n,c} \right] \\
    &\leq 4C^2 \cdot \ex\left(
        \int_0^1 (\widehat{G}_s^{(n)} - G_s)^2 \mathrm{d}s
        \Big|\tilde{\cG}^{n,c} 
        \right).
\end{align*}
Taking expectation of the above we have 
$\var(D_{2,t}^{(n)}) = \ex(\var(D_{2,t}^{(n)}\mid\tilde{\cG}^{n,c}))
\leq 4 C^2 g(n)^2$.
By Chebyshev's inequality we get for all $\epsilon>0$
\begin{align*}
    \mathbb{P}\pa{|D_{2,t}^{(n)}|>\epsilon} 
        \leq \frac{4C^2 g(n)^2}{\epsilon^2},
\end{align*}
and by Assumption \ref{asm:UniformRates} we conclude that $D_{2,t}^{(n)} \convUP 0$ for each $t\in [0,1]$. 

We now apply the same chaining argument used in the proofs of Lemma \ref{lem:R1} and Lemma \ref{lem:D1asymptotics}. From Assumption \ref{asm:UniformBounds}, we have for $0\leq s<t\leq 1$ that $|\xi_t - \xi_s| \leq 4 \sqrt{|J_n|}CC' (t-s)$. Hence the conditional Hoeffding's lemma (Lemma \ref{lem:conditionalHoeffding}) yields that
$$
    A_j^{s,t} 
        = \frac{1}{t-s}
        \int_s^t (\widehat{G}_{j,s}^{(n)} - G_{j,s})(\blambda_{j,s} - \lambda_{j,s})\mathrm{d}s
$$
is sub-Gaussian conditionally on $\tilde{\cG}_1^{n,c}$ with a variance factor $\nu$ that is uniform over $\Theta$ and $s<t$ (cf. the proof of Lemma \ref{lemma:subgaussian}). Letting $A_\bullet^{s,t} =\frac{1}{t-s}(D_{2,t}^{(n)}-D_{2,s}^{(n)})$, we have for any $x\in \mathbb{R}$
\begin{align*}
    \ex \left(e^{x A_\bullet^{s,t}} \right)
    &= \ex\left(\ex\left[e^{x A_\bullet^{s,t}}
        \mid\tilde{\cG}_1^{n,c}\right]\right) \\
    &= \ex\bigg(\prod_{j\in J_n}  
            \ex\left[e^{x|J_n|^{-1/2}A_j^{s,t}}\mid\tilde{\cG}_1^{n,c}
        \right]\bigg)
    \leq \prod_{j\in J_n} e^{\frac{x^2\nu}{2|J_n|}} 
    = e^{x^2\nu/2},
\end{align*}
so $A_\bullet^{s,t}$ is also sub-Gaussian uniformly over $s<t$ and 
$\Theta$. In terms of $D_2^{(n)}$, this means that we can apply the uniform chaining lemma, Corollary \ref{cor:uniformchaining}, and conclude that it is stochastically equicontinuous uniformly over $\Theta$. 

Since $D_{2,t}^{(n)} \convUP 0$ for each $t\in [0,1]$ and $(D_2^{(n)})$ is stochastically equicontinuous uniformly over $\Theta$, Lemma \ref{lem:uniformequcont} now lets us conclude that $\sup_{t\in [0,1]} |D_{2,t}^{(n)}| \convUP 0$ and we are done.
\end{proof}

\subsection{Proof of Theorem \ref{thm:main}}
Before proving Theorem \ref{thm:main}, we first prove that the collection of Gaussian martingales from Proposition \ref{prop:UasymptoticGaussian} is tight in $C[0,1]$ (see Definition \ref{dfn:tight}). 
\begin{lem}\label{lem:Ulimitistight}
    Let $(U^\theta)_{\theta \in \Theta}$ be the collection of Gaussian martingales from Proposition~\ref{prop:UasymptoticGaussian}, i.e., $U^\theta$ is a mean zero continuous Gaussian martingale with variance function $\mathcal{V}^\theta$. 
    Under Assumption \ref{asm:UniformBounds}, $(U^\theta)_{\theta \in \Theta}$ is uniformly tight in $C[0,1]$.
\end{lem}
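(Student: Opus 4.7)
The plan is to verify the two ingredients of uniform tightness in $C[0,1]$: (i) marginal tightness at $t=0$, which is immediate since $U^\theta(0)=0$ for every $\theta$, and (ii) uniform stochastic equicontinuity of the sample paths. Given (i) and (ii), Proposition~\ref{prop:equicontinuityistightness} will deliver the conclusion. The key input is that the family $(\mathcal{V}^\theta)_{\theta\in\Theta}$ is uniformly Lipschitz.

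More concretely, by Lemma~\ref{lem:regularityofgammasigma}, there is a universal constant $L = C(C')^2$ such that $\mathcal{V}^\theta(t) - \mathcal{V}^\theta(s) \leq L(t-s)$ for all $0 \leq s < t \leq 1$ and all $\theta \in \Theta$. Since $U^\theta$ is a centered continuous Gaussian martingale with variance function $\mathcal{V}^\theta$, the increment $U^\theta_t - U^\theta_s$ is $\mathcal{N}(0, \mathcal{V}^\theta(t) - \mathcal{V}^\theta(s))$, and standard Gaussian moment identities give, for every integer $p \geq 2$,
\[
\ex\!\left[|U_t^\theta - U_s^\theta|^p\right] \leq c_p \bigl(\mathcal{V}^\theta(t) - \mathcal{V}^\theta(s)\bigr)^{p/2} \leq c_p L^{p/2} (t-s)^{p/2},
\]
uniformly in $\theta$. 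Taking $p = 4$ yields a Kolmogorov–Chentsov bound with exponent $1 + \beta = 2$ and constant independent of $\theta$, so a standard chaining argument produces a uniform modulus-of-continuity estimate: for every $\eta > 0$,
\[
\lim_{\delta \downarrow 0} \sup_{\theta \in \Theta} \mathbb{P}\!\left(\sup_{|t-s|\leq \delta} |U^\theta_t - U^\theta_s| > \eta\right) = 0.
\]
Combined with $U^\theta(0)=0$ and Proposition~\ref{prop:equicontinuityistightness}, this gives uniform tightness of $(U^\theta)_{\theta \in \Theta}$ in $C[0,1]$.

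The only nontrivial step is verifying the uniform equicontinuity claim from the Gaussian moment bound. A clean alternative to the direct chaining argument is the Dambis–Dubins–Schwarz time change: for each $\theta$, on an enlargement of the probability space, $U^\theta_t = B^\theta_{\mathcal{V}^\theta(t)}$ for a standard Brownian motion $B^\theta$. Since $\mathcal{V}^\theta$ is $L$-Lipschitz and takes values in $[0,L]$, one has the sample-path domination
\[
\sup_{|t-s|\leq \delta} |U^\theta_t - U^\theta_s| \leq \sup_{\substack{u,v \in [0,L] \\ |u-v| \leq L\delta}} |B^\theta_u - B^\theta_v|,
\]
whose law is the same for every $\theta$; applying the classical modulus-of-continuity result for Brownian motion then yields the uniform equicontinuity above. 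Either route therefore completes the proof; the remaining work is routine.
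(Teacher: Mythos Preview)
Your proposal is correct and essentially matches the paper's proof: the paper verifies Billingsley's tightness criterion for $C[0,1]$ by using the distributional representation $U^\theta_t \overset{\mathcal{D}}{=} B_{\mathcal{V}^\theta(t)}$ (your second route), combining H\"older continuity of Brownian motion with the uniform Lipschitz bound on $\mathcal{V}^\theta$ from Lemma~\ref{lem:regularityofgammasigma}. One minor point: Proposition~\ref{prop:equicontinuityistightness} is stated for sequences and yields \emph{sequential} tightness, so the cleaner citation for the conclusion you draw is Billingsley's Theorem~7.3 directly (which is what underlies that proposition and what the paper invokes).
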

\begin{proof}
We will use Theorem 7.3 in \citet{billingsley2013convergence}, which characterizes tightness of measures in $C[0,1]$. The first condition of the theorem is trivially satisfied for $(U^\theta)_{\theta \in \Theta}$ since $\mathbb{P}(U_0^\theta = 0)=1$ for all $\theta \in \Theta$. 

By Proposition \ref{prop:BMrepresentation}, $U^\theta$ has a distributional representation as a time-transformed Brownian motion such that $(U_t^\theta)_{t\in[0,1]} \overset{\mathcal{D}}{=} (B_{\mathcal{V}^\theta(t)})_{t\in [0,1]}$, where $B$ is a Brownian motion.
Recall that Brownian motion is $\alpha$-Hölder continuous for $\alpha \in (0,\frac12)$, which means 
that 
$$K(\alpha) = \sup_{s \neq t} \frac{|B_t - B_s|}{|t - s|^\alpha} < \infty.$$
Note also that the collection of variance functions is uniformly Lipschitz by Lemma \ref{lem:regularityofgammasigma} with uniform Lipschitz constant $C_0$, say. It follows that for every $\epsilon>0$,
\begin{align*}
    \lim_{\delta\to 0^+} \sup_{\theta \in \Theta}
    \mathbb{P}\Big(\sup_{|t-s|<\delta}|U_t^\theta 
        - U_s^\theta|>\epsilon \Big)
    &= \lim_{\delta\to 0^+} \sup_{\theta \in \Theta}\mathbb{P}\Big(\sup_{|t-s|<\delta}|B_{\mathcal{V}^\theta(t)} 
        - B_{\mathcal{V}^\theta(s)}|>\epsilon \Big)\\
    &\leq 
    \lim_{\delta\to 0^+} \sup_{\theta \in \Theta}\mathbb{P}\Big(K(\alpha) \sup_{|t-s|<\delta}|\mathcal{V}^\theta(t) 
        - \mathcal{V}^\theta(s)|^\alpha >\epsilon \Big) \\
    &= \lim_{\delta\to 0^+} \mathbb{P}\Big(K(\alpha) C_0^\alpha \delta^\alpha >\epsilon \Big) = 0.
\end{align*}
This establishes the second condition of Theorem 7.3 in \citet{billingsley2013convergence}, and we thus conclude that $(U^\theta)_{\theta \in \Theta}$ is uniformly tight in $C[0,1]$. 
\end{proof}

We now return to the proof of Theorem \ref{thm:main}.

For part i), we first note that under $H_0$ we can take $\blambda_t = \lambda_t$,
which implies that both $D_1^{(n)}$ and $D_2^{(n)}$ equal the zero-process. 

Combining Propositions \ref{prop:UasymptoticGaussian} and \ref{prop:remainderterms} with the uniform version of Slutsky's theorem formulated in Lemma \ref{lem:GeneralSlutsky}, we conclude that
\begin{align*}
    \sqrt{|J_n|}\widehat{\gamma}^{(n)}
        = 
        \underbrace{U^{(n)}}_{\convUDnull U^\theta} 
        + \underbrace{R_1^{(n)}+R_2^{(n)}+R_3^{(n)}}_{
            \convUP 0}
        + \underbrace{D_1^{(n)}+D_2^{(n)}}_{
            =0 \, \text{under } H_0}
        \convUPnull U^\theta,
\end{align*}
in $D[0,1]$ as $n\to \infty$, where $U^\theta$ is the Gaussian martingale from Proposition \ref{prop:UasymptoticGaussian}.

For part ii) we can, in addition to Propositions \ref{prop:UasymptoticGaussian} and \ref{prop:remainderterms}, apply Proposition \ref{prop:Dasymptotics} and Lemma~\ref{lem:D1bounded}. Using the triangle inequality on the decomposition \eqref{eq:decomposition} yields that
\begin{align*}
    \sqrt{|J_n|}\cdot \|\widehat{\gamma}^{(n)} - \gamma\|_\infty
        \leq
        &\|U^{(n)}\|_\infty + \|D_1^{(n)} - \sqrt{|J_n|}\gamma \|_\infty \\
            &+ \|R_1^{(n)}\|_\infty 
            + \|R_2^{(n)}\|_\infty 
            + \|R_3^{(n)}\|_\infty 
            + \|D_2^{(n)}\|_\infty.
\end{align*}
All the terms in the second line converge in probability to zero uniformly over $\Theta$. Combined with the convergences established in Proposition \ref{prop:UasymptoticGaussian} and Lemma \ref{lem:D1asymptotics}, we obtain that
\begin{align}\label{eq:LCMboundednes}
    &\limsup_{n\to \infty}\sup_{\theta\in \Theta} \mathbb{P}\pa{
    \sqrt{|J_n|} \cdot \|\widehat{\gamma}^{(n),\theta} - \gamma^\theta\|_\infty > K} \nonumber \\
    & \qquad \leq 
    \sup_{\theta\in \Theta} 
        \mathbb{P}\pa{ \| U^\theta \|_\infty > K/6}
    + \sup_{\theta\in \Theta} 
        \mathbb{P}\pa{ \| \Gamma^\theta \|_\infty > K/6},
\end{align}
where $\Gamma^\theta$ is the limiting Gaussian process from (the proof of) Lemma \ref{lem:D1asymptotics}. 
The last term in \eqref{eq:LCMboundednes} can be made arbitrarily small for $K$ sufficiently large by Lemma \ref{lem:D1bounded}. 
Lemma~\ref{lem:Ulimitistight} states that the family $(U^\theta)_{\theta \in \Theta}$ is tight in $C[0,1]$, and hence the family $(\|U^\theta\|_\infty)_{\theta \in \Theta}$ is tight in $\mathbb{R}_{\geq 0}$. This implies that the first term in \eqref{eq:LCMboundednes} can also be made arbitrarily small for $K$ sufficiently large.
This establishes \eqref{eq:stochboundedness} and we are done.
\hfill \qed{}

\subsection{Proof of Proposition \ref{prop:varianceconsistent}}
Consider the decomposition of the variance function estimator given by
    \begin{align*}
        \widehat{\mathcal{V}}_n(t) & = A^{(n)}_t + B^{(n)}_t + 2 C^{(n)}_t
    \end{align*}
where 
    \begin{align*}
        A^{(n)}_t & =
        \frac{1}{|J_n|} \sum_{j \in J_n} 
        \int_0^t G_{j,s}^2 \mathrm{d} N_{j,s},
        \\
        B^{(n)}_t & = 
        \frac{1}{|J_n|} \sum_{j \in J_n} 
        \int_0^t \left( G_{j,s} - \widehat{G}_{j,s}^{(n)} \right)^2 \mathrm{d} N_{j,s}, 
        \\
        C^{(n)}_t & = 
        \frac{1}{|J_n|} \sum_{j \in J_n}
        \int_0^t G_{j,s} \left(G_{j,s} - \widehat{G}_{j,s}^{(n)} \right) 
        \mathrm{d} N_{j,s}.
    \end{align*}
We first consider the asymptotic limit of $A^{(n)}$, which is the empirical mean of $|J_n|$ i.i.d. samples of the process $\int_0^t G_s^2\mathrm{d}N_s$.
Under Assumption \ref{asm:UniformBounds}, we can apply the first part of Lemma \ref{lem:countingmartingales} which states $\bM_t^2 - \bLambda_t$ is a martingale. We use this fact to note that
\begin{align*}
    \ex(N_1^2) 
        = \ex((\bM_1 + \bLambda_1)^2)
        \leq 2\left(\ex(\bM_1^2) + \ex(\bLambda_1^2)\right)
        = 4\, \ex\left(\Big(\int_0^1 \blambda_s \mathrm{d}s\Big)^2\right)
        \leq 4 C^2.
\end{align*}
Now, another use of Assumption \ref{asm:UniformBounds} shows that $\int_0^t G_s^2\mathrm{d}N_s$ has a second moment bounded by $4(CC')^2$. 
Thus we can apply the uniform law of large numbers \cite[Lemma 19]{shah2020hardness} to conclude for each $t\in [0,1]$,
\begin{align*}
    A_t^{(n)} 
        = \frac{1}{|J_n|} \sum_{j \in J_n} 
            \int_0^t G_{j,s}^2 \mathrm{d} N_{j,s}
        \convUP 
        \ex\left(\int_0^t G_{s}^2 \mathrm{d} N_s \right) = \mathcal{V}(t).
\end{align*}
Note also that $A^{(n)}$ and $\mathcal{V}$ are non-decreasing and that the collection $(\mathcal{V}^\theta)_{\theta\in\Theta}$ is uniformly equicontinuous by Lemma \ref{lem:regularityofgammasigma}. These are exactly the conditions for Lemma \ref{lem:convergenceofincreasing}, so we can automatically conclude that $\sup_{t\in [0,1]}|A_t^{(n)} - \mathcal{V}(t)|\convUP 0$.

Next we show that the remainder terms  
$B^{(n)}$ and $C^{(n)}$ converge uniformly 
to zero in expectation. Similarly to the proof of Lemma~\ref{lemma:R2}, we have under Assumptions \ref{asm:UniformBounds} and \ref{asm:UniformRates},
    \begin{align*}
        \ex \left( \sup_{0 \leq t \leq 1} B_t^{(n)} \right) 
        = \ex (B_1^{(n)}) 
        & = \ex \left(
            \frac{1}{|J_n|} \sum_{j \in J_n} 
            \int_0^1 \left( G_{j,s} - \widehat{G}_{j,s}^{(n)} \right)^2 \blambda_{j, s} \mathrm{d}s
        \right) 
        \\
        & = \ex \left(
        \ex \left( 
        \frac{1}{|J_n|} \sum_{j \in J_n} 
            \int_0^1 \left( G_{j,s} - \widehat{G}_{j,s}^{(n)} \right)^2 
            \blambda_{j, s} \mathrm{d}s
        \mid 
        \tilde{\mathcal{G}}_1^c
        \right)
        \right)
        \\
        & = 
        \ex \left( 
            \int_0^1 \left( G_{s} - \widehat{G}_{s}^{(n)} \right)^2 \blambda_{s} \mathrm{d}s
        \right)
        \\
        & \leq C \cdot g(n)^2 \longrightarrow 0
    \end{align*}
as $n \to \infty$ uniformly over $\Theta$. 
Lastly, we see that
    \begin{align*}
        \ex \left| \sup_{0 \leq t \leq 1} C_t^{(n)} \right| 
        & \leq 
        \ex \left( \sup_{0 \leq t \leq 1} |C_t^{(n)}| \right)
        \\
        & \leq 
        \ex \left( 
            \frac{1}{|J_n|} \sum_{j \in J_n}
            \sup_{0 \leq t \leq 1}
            \int_0^t |G_{j,s}| |G_{j,s} - \widehat{G}_{j,s}^{(n)}| \blambda_{j, s}
            \mathrm{d} s
        \right)
        \\
        & =
        \ex \left(
            \frac{1}{|J_n|} \sum_{j \in J_n}
            \int_0^1 |G_{j,s}| |G_{j,s} - \widehat{G}_{j,s}^{(n)}| \blambda_{j, s}
            \mathrm{d} s
        \right)
        \\
        & = 
        \ex \left( 
        \ex \left( 
            \frac{1}{|J_n|} \sum_{j \in J_n}
            \int_0^1 |G_{j,s}| |G_{j,s} - \widehat{G}_{j,s}^{(n)}| \blambda_{j, s}
            \mathrm{d} s
        \mid
        \tilde{\mathcal{G}}_1^c
        \right) 
        \right) 
        \\
        & = 
        \ex \left( 
            \int_0^1 |G_{s}||G_{s} - \widehat{G}_{s}^{(n)}| \blambda_{s}
            \mathrm{d} s
        \right)
        \\
        & \leq CC' \ex \left( 
                \int_0^1 |G_{s} - \widehat{G}_{s}^{(n)}|
                \mathrm{d} s
            \right) \\
        & \leq CC' \cdot g(n) \longrightarrow 0
    \end{align*}
as $n \to \infty$ uniformly over $\Theta$ by Assumption~\ref{asm:UniformRates}. 
Combining the convergences established for $A^{(n)}$, $B^{(n)}$, and $C^{(n)}$, we get by a generalized Slutsky (Lemma \ref{lem:SkorokhodSlutsky}) that 
$$
    \sup_{t\in[0,1]}|\widehat{\mathcal{V}}_n(t) - \mathcal{V}(t)| \convUP 0.
$$
\hfill \qed{}

\subsection{Proof of Corollary \ref{cor:statisticsdistribution}}
Under Assumptions~\ref{asm:UniformBounds} and \ref{asm:UniformRates} we know by Theorem~\ref{thm:main} and Proposition~\ref{prop:varianceconsistent} that
    \begin{align}\label{eq:LCMandVarconvergent}
    \sqrt{|J_n|} \widehat{\gamma}^{(n),\theta} \convUDnull U^\theta
    \qquad \text{and} \qquad
    \widehat{\mathcal{V}}_n^\theta \convUPnull \mathcal{V}^\theta
    \end{align}
in $D[0,1]$ as $n \to \infty$. If we were to show pointwise convergence of the test statistic, this would now be a straightforward consequence of the continuous mapping theorem. 
However, to show uniform convergence, we will need an additional tightness argument. 

Let $(\theta_n)_{n\in\mathbb{N}}\subset \Theta_0$ be an arbitrary sequence. Proposition \ref{prop:seqUni} then states that it suffices to show that there exists a subsequence $(\theta_{k(n)})_{n\in \mathbb{N}} \subseteq (\theta_{n})_{n\in \mathbb{N}}$, with $k\colon \mathbb{N} \to \mathbb{N}$ strictly increasing, such that
\begin{equation}\label{eq:ProofOfTeststatistic}
    \lim_{n\to \infty} d_{BL}\big(\widehat{D}_{k(n)}^{\theta_{k(n)}}, 
    \mathcal{J}(U^{\theta_{k(n)}},\mathcal{V}^{\theta_{k(n)}})\big)
    = 0.
\end{equation}
Here $d_{BL}$ denotes the bounded Lipschitz metric defined in Section \ref{app:UniformAsymptotics}. By Lemma~\ref{lem:Ulimitistight}, the collection $(U^\theta)_{\theta\in \Theta}$ is tight in $C[0,1]$ under Assumption \ref{asm:UniformBounds}. Therefore, Prokhorov's theorem \citep[Theorem 23.2]{kallenberg2021foundations} asserts that there exists a subsequence $(\theta_{a(n)}) \subset (\theta_n)$, and a $C[0,1]$-valued random variable $\tilde{U}$ such that $U^{\theta_{a(n)}} \xrightarrow{\mathcal{D}} \tilde U$ in $C[0,1]$.

Likewise, Lemma \ref{lem:regularityofgammasigma} states that the collection $(V^\theta)_{\theta\in \Theta}$ is uniformly bounded and uniformly equicontinuous under Assumption \ref{asm:UniformBounds}. Thus the Arzelà-Ascoli theorem yields that there exists a further subsequence $(\theta_{b(n)})\subset (\theta_{a(n)})$ and a function $\tilde{\mathcal{V}}\in C[0,1]$ such that $\|\mathcal{V}^{\theta_{b(n)}} - \tilde{\mathcal{V}}\|_\infty \to 0$. 

Combining the convergences of $U^{\theta_{b(n)}}$ and $\mathcal{V}^{\theta_{b(n)}}$ with those in Equation \eqref{eq:LCMandVarconvergent}, it follows from the triangle inequality of the metric $d_{BL}$ that also
    \begin{align*}
    \sqrt{|J_{b(n)}|} \widehat{\gamma}^{(b(n)),{\theta_{b(n)}}} \xrightarrow{\mathcal{D}} \tilde{U}
    \qquad \text{and} \qquad
    \widehat{\mathcal{V}}_{b(n)}^{\theta_{b(n)}}\xrightarrow{P} \tilde{\mathcal{V}},
    \end{align*}
in $D[0,1]$ as $n\to \infty$. Now we may use that convergence in Skorokhod topology is equivalent to convergence in uniform topology whenever the limit variable continuous, see e.g. \citet[Theorem 23.9]{kallenberg2021foundations}. Hence the convergences above also hold in $(D[0,1], \|\cdot\|_\infty)$.

Since $\mathcal{V}$ is deterministic, this implies the joint convergences
    \begin{align*}
        (U^{\theta_{b(n)}},\mathcal{V}^{\theta_{b(n)}})
        \xrightarrow{\mathcal{D}}
        (\tilde{U},\tilde{\mathcal{V}})
            \qquad \text{and} \qquad
        \Big(\sqrt{|J_{b(n)}|} \widehat{\gamma}^{(b(n)),{\theta_{b(n)}}}, \widehat{\mathcal{V}}_{b(n)}^{\theta_{b(n)}}\Big) 
        \xrightarrow{\mathcal{D}} 
        (\tilde{U},\tilde{\mathcal{V}})
    \end{align*}
in the product space $D[0,1]\times D[0,1]$ endowed with the uniform topology. 
Since $(\tilde{U},\tilde{\mathcal{V}}) \in C[0,1]\times \overline{\{\mathcal{V}^\theta \colon \theta \in \Theta_0\}}$ takes values in the continuity set of $\mathcal{J}$ by assumption, the classical continuous mapping theorem lets us conclude that
\begin{align*}
    \mathcal{J}(U^{\theta_{b(n)}},\mathcal{V}^{\theta_{b(n)}})
    \xrightarrow{\mathcal{D}}
    \mathcal{J}(\tilde{U},\tilde{\mathcal{V}})
        \quad \text{and} \quad
    \widehat{D}_{b(n)}^{\theta_{b(n)}}=\mathcal{J}\Big(\sqrt{|J_{b(n)}|} \widehat{\gamma}^{(b(n)),{\theta_{b(n)}}}, \widehat{\mathcal{V}}_{b(n)}^{\theta_{b(n)}}\Big) 
    \xrightarrow{\mathcal{D}} 
    \mathcal{J}(\tilde{U},\tilde{\mathcal{V}})
\end{align*}
as $n \to \infty$. Now another application of the triangle inequality with $\mathcal{J}(\tilde{U},\tilde{\mathcal{V}})$ as intermediate value shows that \eqref{eq:ProofOfTeststatistic} holds with $k(n)=b(n)$, so we are done.
\hfill \qedsymbol{}


\subsection{Proof of Theorem \ref{thm:LCTlevel}}
We will apply Corollary \ref{cor:statisticsdistribution} with the functional $\mathcal{J}$ given by
$$
    \mathcal{J}(f_1,f_2) = 
    \one(f_2 \neq 0)\frac{\|f_1\|_\infty }{\sqrt{|f_2(1)|}}, \qquad f_1,f_2 \in D[0,1].
$$
Under Assumption \ref{asm:UniformVariance}, it suffices to check continuity of $\mathcal{J}$ on the set $\Upsilon$ given by
$$ 
    \Upsilon \coloneqq 
    C[0,1] \times \{f \in C[0,1] \mid \delta_1 \leq |f(1)| \} 
    \supset 
    C[0,1]\times \overline{\{\mathcal{V}^\theta \colon \theta \in \Theta_0\}} .
$$
To see that $\mathcal{J}$ is continuous on $\Upsilon$ in the uniform topology, we note that it can be written as a composition of the continuous maps
\begin{align*}
    \Upsilon \longrightarrow [0,\infty)\times [\delta_1, \infty), &
        \qquad (f_1,f_2) \mapsto (\|f_1\|_\infty,|f_2(1)|), \\
    [0,\infty)\times [\delta_1, \infty) \longrightarrow \mathbb{R}, &
        \qquad (x_1,x_2)\mapsto \frac{x_1}{\sqrt{x_2}}.
\end{align*}
Thus it follows from Corollary \ref{cor:statisticsdistribution} that
\begin{align*}
    \widehat T_n = \frac{\sqrt{|J_n|}\sup_{t\in [0,1]} |\widehat \gamma_t^{(n)}|
                }{\sqrt{\widehat{\mathcal{V}}_n(1)}}
     = \mathcal{J}\left( 
        \sqrt{|J_n|}\widehat{\gamma}^{(n)}, \; \widehat{\mathcal{V}}_n\right)
    \convUDnull  
    \mathcal{J}(U, \mathcal{V})
    = \frac{\|U\|_\infty}{\mathcal{V}(1)}.
\end{align*}
With $(B_u)$ a Brownian motion it follows by Proposition \ref{prop:BMrepresentation} 
that 
\begin{align}\label{eq:UnormalizedIsS}
    \dfrac{\|U\|_\infty}{\sqrt{\mathcal{V}(1)}}
    \overset{\mathcal{D}}{=}
        \dfrac{\sup_{0\leq t\leq 1}|B_{\mathcal{V}(t)}|}{\sqrt{\mathcal{V}(1)}}
    =   \dfrac{\sup_{0\leq u\leq \mathcal{V}(1)}|B_{u}|}{\sqrt{\mathcal{V}(1)}}
    \stackrel{\mathcal{D}}{=} 
        \sup_{0\leq t\leq 1}|B_{t}| \stackrel{\mathcal{D}}{=} S,
\end{align}
where we have used that $\mathcal{V}$ is continuous and that Brownian motion is scale invariant. This establishes the first part of the theorem.

For the second part, we first note that the distribution of $S$ is absolutely continuous with respect to Lebesgue measure, which follows from Equation \eqref{eq:supdistribution}.
Then we can use Theorem~4.1 of \citet{bengs2019uniform} to conclude that
$$
    \limsup_{n\to \infty} \sup_{\theta \in \Theta} 
    |\mathbb{P}(\widehat T_n \leq z_{1-\alpha}) - (1-\alpha)| = 0.
$$
It follows from the triangle inequality that
\begin{align*}
    \limsup_{n\to \infty}\sup_{\theta \in \Theta} \mathbb{P}(\Psi_n^\alpha = 1)
    = \limsup_{n\to \infty} \sup_{\theta \in \Theta} \mathbb{P}(\widehat T_n > z_{1-\alpha})
    \leq \alpha.
\end{align*}
\hfill \qedsymbol{}

\subsection{Proof of Theorem \ref{thm:rootNpower}}
Let $0<\alpha<\beta<1$ be given.
The second part of Theorem~\ref{thm:main} permits us to choose $K>0$ sufficiently large such that
\begin{align}
    \limsup_{n\to \infty}\sup_{\theta\in \Theta} \mathbb{P}\pa{
    (\sqrt{|J_n|}\|\widehat{\gamma}^{(n),\theta} - \gamma^\theta\|_\infty )> K} < 1 - \beta.
\end{align}
We then choose $c > K + z_{1-\alpha}\sqrt{1+C(C')^2}$ such that for all $\theta \in \mathcal{A}_{c,n}$, it holds that
\begin{align*}
    \sqrt{|J_n|}\|\gamma^\theta\|_\infty-z_{1-\alpha}\sqrt{1+\mathcal{V}^\theta(1)} 
    \geq c - z_{1-\alpha} \sqrt{1+C(C')^2} > K,
\end{align*}
where we have used Lemma \ref{lem:regularityofgammasigma} in the first inequality. The (reverse) triangle inequality now yields that for any $\theta \in \mathcal{A}_{c,n}$
\begin{align*}
    (\Psi_n^\theta = 0)
    = (\widehat{T}_n^\theta \leq z_{1-\alpha}) 
    &= \pa{\|\widehat\gamma^{(n),\theta}\|_\infty 
        \leq \sqrt{\widehat{\mathcal{V}}_n^\theta(1)}\frac{z_{1-\alpha}}{\sqrt{|J_n|}}} \\
    &\subseteq \pa{
        \|\gamma^\theta\|_\infty
        -
        \left\|
            \widehat\gamma^{(n),\theta}
                - \gamma^\theta
        \right\|_\infty
        \leq \sqrt{\widehat{\mathcal{V}}_n^\theta(1)}\frac{z_{1-\alpha}}{\sqrt{|J_n|}}} \\
    &\subseteq E_1^{(n),\theta} \cup E_2^{(n),\theta},    
\end{align*}
where
\begin{align*}
    E_1^{(n),\theta} &= \pa{\sqrt{|J_n|} \left\| \widehat\gamma^{(n),\theta} 
            - \gamma^\theta\right\|_\infty > K},\\
    E_2^{(n),\theta} &= \pa{\widehat{\mathcal{V}}_n^\theta(1) 
        > 1+\mathcal{V}^\theta(1)}
        \subseteq
            \pa{|\widehat{\mathcal{V}}_n^\theta(1)-\mathcal{V}^\theta(1)| > 1}.
\end{align*}
From Proposition \ref{prop:varianceconsistent} we know that $\limsup_{n\to \infty} \sup_{\theta \in \Theta} \mathbb{P}(E_2^{(n),\theta}) = 0$, so from the choice of $K$ we conclude that
\begin{align*}
    \limsup_{n\to \infty}\sup_{\theta \in \Theta} \mathbb{P}(\Psi_n=0)
    \leq \limsup_{n\to \infty} \sup_{\theta \in \Theta} \mathbb{P}(E_1^{(n),\theta})
    < 1- \beta.
\end{align*}
The desired statement follows from substituting $\mathbb{P}(\Psi_n=0) = 1 - \mathbb{P}(\Psi_n=1)$ into the above equation and simplifying. \hfill \qedsymbol

\subsection{Proof of Theorem \ref{thm:LCTXlevel}}
Assume that $H_0$ holds and note that Assumptions \ref{asm:UniformBounds} and \ref{asm:UniformRates} are satisfied for every sample split $J_n^k \cup (J_n^k)^c$, $k=1,\ldots,K$. 

We consider the decomposition in Equation \eqref{eq:decomposition} for each sample split $J_n^k \cup (J_n^k)^c$, and denote the corresponding processes by
$U^{k,(n)}$, $R_1^{k,(n)}$, $R_2^{k,(n)}$, $R_3^{k,(n)}$, $D_1^{k,(n)}$, and $D_2^{k,(n)}$. 
For each fold $k\in\{1,\ldots,K\}$, we can then apply the results in Section \ref{sec:asymptotics} for a single data split:
\begin{itemize}
    \item By Proposition~\ref{prop:UasymptoticGaussian}, we have that $U^{k,(n)} \convUD U$ in $D[0,1]$, where $U$ is a mean zero continuous Gaussian martingale with variance function $\mathcal{V}$. 

    \item By Proposition~\ref{prop:remainderterms}, 
    $R_\ell^{k,(n)} \convUP 0$ in $(D[0,1],\|\cdot\|_\infty)$ as $n \to \infty$.
    
    \item Under $H_0$, the processes $D_1^{k,(n)}$ and $D_2^{k,(n)}$ are equal to the zero process almost surely.
\end{itemize}
Recall that the folds are assumed to have uniform asymptotic density, which is equivalent to $\frac{\sqrt{n}}{\sqrt{K|J_n^k|}} \to 1$ as $n\to \infty$.
Thus we may also conclude that for each fixed $k$ and $\ell$,
\begin{align*}
    \frac{\sqrt{n}}{\sqrt{K|J_n^k|}} U^{k,(n)}\convUD U 
    \quad \text{and} \quad 
    \frac{\sqrt{n}}{K\sqrt{|J_n^k|}} R_\ell^{k,(n)} \convUP 0,
\end{align*}
where the convergences hold in the Skorokhod and uniform topology, respectively. Now the key observation is that 
\begin{align*}
    U^{1,(n)} 
    \ind \cdots \ind 
    U^{K,(n)}.
\end{align*}
To see this, note that $U^{k,(n)}$ is constructed from $(G_j,M_j)_{j\in J^k}$ only, and by the i.i.d. assumption of the data, the collections $(G_j,M_j)_{j\in J^1},\ldots,(G_j,M_j)_{j\in J_n^K}$ are jointly independent. 
We can therefore apply Lemma \ref{lem:sumofindependent} iteratively to the sequences 
$$
    \frac{\sqrt{n}}{\sqrt{K|J_n^1|}} U^{1,(n)}, \ldots, \frac{\sqrt{n}}{\sqrt{K|J_n^K|}} U^{K,(n)}
$$ to conclude that their sum is uniformly convergent to the sum of $K$ independent copies of $U$. Using the convolution property of the Gaussian distribution, it therefore follows that 
\begin{align*}
    \check{U}^{K,(n)}
    \coloneqq
    \frac{1}{\sqrt{K}}\sum_{k=1}^K 
        \frac{\sqrt{n}}{\sqrt{K|J_n^k|}} U^{k,(n)}
        \convUD U
\end{align*}
in $D[0,1]$ as $n\to \infty$. By the uniform Slutsky theorem formulated in Lemma \ref{lem:SkorokhodSlutsky}, we can therefore conclude that
    \begin{align*}
        \sqrt{n}\check{\gamma}^{K,(n)}      
        = 
        \check{U}^{K,(n)}
        +   \sum_{k=1}^K 
            \frac{\sqrt{n}}{K\sqrt{|J_n^k|}} \left( 
            R_1^{k,(n)} + R_2^{k,(n)} + R_3^{k,(n)} + 
            D_1^{k,(n)} + D_2^{k,(n)} 
        \right)
        \convUDnull U
    \end{align*}
in $D[0,1]$ as $n\to \infty$. Note that we use $\theta\in \Theta_0$ to ensure that $D_1^{k,(n)} + D_2^{k,(n)}$ is equal to the zero process almost surely. 
Since the limit $(U^\theta)_{\theta \in \Theta_0}$ is tight in $C[0,1]$ by Lemma~\ref{lem:Ulimitistight}, 
Proposition~\ref{prop:convergencetoCont2} lets us conclude that $\sqrt{n}\|\check{\gamma}^{K,(n)}\|_\infty \convUDnull \|U\|_\infty$.

Consider now the cross-fitted variance estimator at its endpoint
\begin{equation*}
    \check{\mathcal{V}}_{K,n}(1) = \frac{1}{K} \sum_{k=1}^K \frac{1}{|J_n^k|}\sum_{j \in J_n^k} \int_0^1 \left(\widehat{G}_{j,s}^{k,(n)}\right)^2 \mathrm{d} N_{j,s}.
\end{equation*}
From Proposition~\ref{prop:varianceconsistent}, we see that $\check{\mathcal{V}}_{K,n}(1)$ is an average of $K$ variables converging uniformly in probability to $\mathcal{V}(1)$ in the uniform topology. Hence $\check{\mathcal{V}}_{K,n}(1)$ also converges uniformly in probability to $\mathcal{V}(1)$ in the uniform topology. We can then apply Theorem 6.3 of \citet{bengs2019uniform}, which is a uniform version of Slutsky's theorem, to conclude that
\begin{align*}
    \check{T}_{n}^K = \frac{\sqrt{n} \|\check{\gamma}^{K,(n)}\|_\infty}{\check{\mathcal{V}}_{K,n}(1)} 
    \convUDnull
    \frac{\|U\|_\infty}{\mathcal{V}(1)} \stackrel{\mathcal{D}}{=} S,
\end{align*}
as $n\to \infty$, where last equality in distribution was established in \eqref{eq:UnormalizedIsS}. 

Following the second part of the proof of Theorem \ref{thm:LCTlevel}, we conclude in the X-LCT has uniform asymptotic level. \hfill \qedsymbol{}

\supplementarysection{Uniform stochastic convergence}\label{app:UniformAsymptotics}
In this section, we discuss weak convergence of random variables with values in a metric space uniformly over a parameter set $\Theta$. The uniformity over the parameter set can be used, for example, to establish uniform asymptotic level as well as power under local alternatives.

The content of this section extends the works of \citet{bengs2019uniform} and \citet{kasy2019uniformity}, and we especially build upon Appendix B of \citet{lundborg2021conditional}, in which uniform stochastic convergence is considered in \emph{separable} Banach spaces and Hilbert spaces. 
The space space $(D[0,1], \|\cdot\|_\infty)$ of \cl{} functions endowed with the uniform norm is a Banach space, but it is unfortunately not separable. Therefore we extend the notion of uniform stochastic convergence to random variables in metric spaces, with the condition that the limit is supported on a separable set. This allows to consider uniform weak convergence in two important special cases: 
i) convergence in $(D[0,1], \|\cdot\|_\infty)$ towards variables in $(C[0,1], \|\cdot\|_\infty)$, and ii) convergence in $D[0,1]$ endowed with the Skorokhod metric. 

The Skorokhod space $D[0,1]$ is, if not specified otherwise, equipped with the complete Skorokhod metric $d^\circ$, which makes it a Polish space, i.e., a complete and separable metric space. See for example Section 12 in \citet{billingsley2013convergence} for a discussion of the Skorokhod space and in particular Equation (12.16) for a definition of $d^\circ$.

\subsection{Uniform stochastic convergence in metric spaces}
Throughout this section we consider a background probability space $(\Omega, \mathbb{F}, \mathbb{P})$ and let $(\mathbb{D},d_{\mathbb{D}})$ denote a generic metric space. 
We define $BL_1(\mathbb{D})$ as the set of real-valued functions on $\mathbb{D}$ with Lipschitz norm bounded by $1$, that is, functions $f\colon \mathbb{D} \to \mathbb{R}$ with 
$\|f\|_\infty \leq 1$ and $|f(x)-f(y)| \leq d_{\mathbb{D}}(x,y)$ for every $x,y\in \mathbb{D}$.
Let $\mathcal{M}_1(\mathbb{D})$ denote the set of Borel probability measures on $\mathbb{D}$. We then define the \emph{bounded Lipschitz metric} on $\mathcal{M}_1(\mathbb{D})$ by 
\begin{align*}
    d_{BL}(\mu,\nu) \coloneqq 
    \sup_{f\in BL_1(\mathbb{D})}
    \Big|\int f \mathrm{d}\mu -\int f \mathrm{d}\nu\Big|,
    \qquad \mu,\nu \in \mathcal{M}_1(\mathbb{D}).
\end{align*}
For any pair $(X,Y)$ of $\mathbb{D}$-valued random variables we use the shorthand notation
\begin{align*}
    d_{BL} (X,Y)
    = d_{BL} (X(\mathbb{P}),Y(\mathbb{P}))
    = \sup_{f\in BL_1(\mathbb{D})} \lvert \mathbb{E}(f(X)-f(Y))\rvert.
\end{align*}
If the underlying metric space is ambiguous for $d_{BL}$, we will specify that it is the bounded Lipschitz metric on $\mathcal{M}_1(\mathbb{D})$ by writing $d_{BL(\mathbb{D})}$. 
Our interest in the bounded Lipschitz metric is due to its characterization of weak convergence.

\begin{prop}\label{prop:boundedLipconv}
    Let $X, X_1, X_2, \ldots$ be a sequence of $\mathbb{D}$-valued random variables. Assume that there exists a separable subset $\mathbb{D}_0 \subseteq \mathbb{D}$
    such that $\mathbb{P}(X \in \mathbb{D}_0)=1$. Then the following are equivalent:
    \begin{itemize}
        \item The sequence $(X_n)_{n\geq 1}$ converges in distribution to $X$, i.e., for all $f\in C_b(\mathbb{D})$ it holds that
        $\mathbb{E}[f(X_n)] \to \mathbb{E}[f(X)]$ as $n\to \infty$.
        \item It holds that $d_{BL}(X_n,X) \to 0$ as $n\to \infty$.
    \end{itemize}
\end{prop}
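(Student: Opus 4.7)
The plan is to prove both directions by reducing to the classical Portmanteau/Dudley theorem on a compact set $K_\epsilon$ that captures most of the mass of $X$. The central ingredient is tightness: since $X$ is concentrated on a separable subset $\mathbb{D}_0$---which in our applications (e.g.\ $\mathbb{D}_0 = C[0,1]$ inside $\mathbb{D} = D[0,1]$) can be taken to be Polish---Ulam's theorem gives, for every $\epsilon > 0$, a compact $K_\epsilon \subseteq \mathbb{D}_0$ with $\mathbb{P}(X \in K_\epsilon) > 1-\epsilon$. A closely related ingredient is that asymptotic tightness of $(X_n)$ may be inherited from $X$ using a 1-Lipschitz cutoff function supported near $K_\epsilon$, either through $d_{BL}$-convergence or through weak convergence tested against $C_b(\mathbb{D})$.

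For the implication $d_{BL}(X_n, X) \to 0 \Rightarrow X_n \xrightarrow{d} X$, I would take $f \in C_b(\mathbb{D})$ with $\|f\|_\infty \leq 1$ and exploit uniform continuity of $f$ on the compact set $K_\epsilon$ to build a bounded Lipschitz approximation $\tilde f$ via a McShane extension from $K_\epsilon$, so that $\sup_{K_\epsilon}|f-\tilde f|<\epsilon$ and $\tilde f \in c \cdot BL_1(\mathbb{D})$ for some constant $c$ depending on $f$ and the modulus of continuity. The triangle inequality then decomposes $|\ex[f(X_n)] - \ex[f(X)]|$ into a principal term bounded by $c \cdot d_{BL}(X_n, X)$, which vanishes by assumption, plus approximation errors of order $\epsilon + 2\mathbb{P}(X \notin K_\epsilon) + 2\mathbb{P}(X_n \notin K_\epsilon)$; each can be made small by choosing $\epsilon$ small and $n$ large, using tightness transfer for the $X_n$ term.

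For the reverse implication, the main obstacle is obtaining uniform control over the infinite class $BL_1(\mathbb{D})$. The key observation is that the restricted class $\{f|_{K_\epsilon} : f \in BL_1(\mathbb{D})\}$ is uniformly bounded and equicontinuous (since each $f \in BL_1$ is 1-Lipschitz), hence totally bounded in $C(K_\epsilon)$ by the Arzela--Ascoli theorem. Covering it by finitely many $\epsilon$-balls centered at $f_1, \ldots, f_N \in BL_1(\mathbb{D})$ reduces, up to errors controlled by $\epsilon + \mathbb{P}(X \notin K_\epsilon) + \mathbb{P}(X_n \notin K_\epsilon)$, the supremum $\sup_{f \in BL_1} |\ex f(X_n) - \ex f(X)|$ to the finite maximum $\max_{i \leq N} |\ex f_i(X_n) - \ex f_i(X)|$, which converges to zero by pointwise weak convergence. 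The main technical subtlety beyond this standard scheme is that strict appeal to Ulam's theorem requires Polishness of $\mathbb{D}_0$ rather than mere separability; since the downstream applications (Skorokhod and uniform topologies on $D[0,1]$ with limits in $C[0,1]$) satisfy the stronger property, this is not a practical concern and the formulation with separable $\mathbb{D}_0$ is the appropriate abstraction.
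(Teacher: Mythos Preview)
The paper does not actually prove this proposition: its ``proof'' is a one-line citation to Theorem~1.12.2, Addendum~1.12.3, and the surrounding discussion in van der Vaart and Wellner (1996). Your proposal instead sketches a direct argument, which is essentially the standard textbook route (tightness of the limit, Arzel\`a--Ascoli on a compact set, approximation of $C_b$ by Lipschitz functions). So the approaches differ in that the paper defers entirely to the literature while you reconstruct the proof.

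Two remarks on your sketch. First, the direction $d_{BL}(X_n,X)\to 0 \Rightarrow X_n \xrightarrow{d} X$ is easier than you make it: since $BL_1(\mathbb{D})\subset C_b(\mathbb{D})$, convergence in $d_{BL}$ gives $\ex f(X_n)\to \ex f(X)$ for every bounded Lipschitz $f$, and the Portmanteau theorem then yields weak convergence directly (approximate $\one_F$ for closed $F$ by $(1-k\,d(\cdot,F))_+$). No tightness, McShane extension, or separability is needed for this half. Second, your observation about Ulam's theorem is correct and worth flagging: the compactness argument you use for the harder direction requires the limit to be tight, which follows from Polishness of $\mathbb{D}_0$ but not from separability alone. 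Van der Vaart and Wellner's statement in fact assumes tightness of the limit, and the paper's applications (limits in $C[0,1]$) satisfy this, so your caveat is accurate and the gap is cosmetic rather than substantive.
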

\begin{proof}
    See Theorem 1.12.2, Addendum 1.12.3, and the following discussion in \citet{Vaart:1996}.
\end{proof}
To discuss uniform stochastic convergence, we will for the remaining part of this section let $\Theta$ be fixed set, which is used as a (possible) parameter set for every random variable. We say that a collection $(X^{\theta})_{\theta \in \Theta}$ of $\mathbb{D}$-valued random variables is \emph{separable} if there exists a separable subset $\mathbb{D}_0 \subseteq \mathbb{D}$ such that $\mathbb{P}(X^\theta \in \mathbb{D}_0)=1$ for all $\theta \in \Theta$. If $\mathbb{D}$ is a separable metric space, then any collection of $\mathbb{D}$-valued random variables is automatically separable.

Now Lemma~\ref{prop:boundedLipconv} justifies the following generalization of weak convergence uniformly over $\Theta$:
\begin{definition}\label{dfn:UniformConvergence}
    Let $(X_{n}^\theta)_{n\in \mathbb{N}, \theta \in \Theta}$ and $(X^{\theta})_{\theta \in \Theta}$ be collections of $\mathbb{D}$-valued random variables and assume that $(X^{\theta})_{\theta \in \Theta}$ is separable. We say that:
    \begin{enumerate}[label=(\roman*)]
        \item \emph{$X_{n}^\theta$ converges uniformly in distribution over $\Theta$ to $X^\theta$ in $\mathbb{D}$}, and write $X_{n}^\theta \convUD X^\theta$, if
        \begin{align*}
            \lim_{n\to \infty}\sup_{\theta \in \Theta}
            d_{BL(\mathbb{D})}(X_{n}^\theta,X^\theta)
            = 0.
        \end{align*}
        
        \item \emph{$X_{n}^\theta$ converges uniformly in probability over 
        $\Theta$ to $X^\theta$ in $\mathbb{D}$}, and write $X_{n}^\theta \convUP X^\theta$, if 
        \begin{align*}
            \lim_{n\to \infty}\sup_{\theta \in \Theta}\mathbb{P}(d_{\mathbb{D}}(X_{n}^\theta, X^\theta)>\epsilon)
            = 0
        \end{align*}
        for every $\epsilon > 0$.
    \end{enumerate}
\end{definition}
If for some $\mu\in \mathcal{M}_1(\mathbb{D})$, it holds that $X_n^\theta \convUD X^\theta$ with $X^\theta(\mathbb{P}) = \mu$ for all $\theta \in \Theta$, we also write $X_n^\theta \convUD \mu$. Similarly, we may replace the limit random variable $X^\theta$ by a point $x\in \mathbb{D}$ by interpreting $x$ as the constant map $(\theta,\omega)\mapsto x$ for $\theta \in \Theta$ and $\omega \in \Omega$.

Note that if the parameter set $\Theta = \{\theta_0\}$ is a singleton, then each type of uniform convergence reduces to the corresponding classical definition of convergence in distribution or probability. 
If $\mathbb{D}$ is a separable Banach space, we note that Definition \ref{dfn:UniformConvergence} coincides with Definition 3 in \citet{lundborg2021conditional}.
\begin{prop}\label{prop:seqUni}
    Let $(X_n^\theta)_{n\in \mathbb{N}, \theta \in \Theta}$ and $(X^\theta)_{\theta \in \Theta}$ be collections of \(\mathbb{D}\)-valued random variables and assume $(X^\theta)_{\theta \in \Theta}$ is separable. Then the following are equivalent:
    \begin{itemize}
        \item[a)] $X_n^\theta \convUD X^\theta$ as $n\to \infty$.

        \item[b)] For any sequence $(\theta_n)_{n\in\mathbb{N}}\subseteq \Theta$ it holds that $d_{BL}(X_{n}^{\theta_{n}},X^{\theta_{n}}) \to 0$ as $n\to 0$.
        
        \item[c)] For any sequence $(\theta_n)_{n\in\mathbb{N}}\subseteq \Theta$ there exists a subsequence $(\theta_{k(n)})_{n\in\mathbb{N}}$, with $k\colon \mathbb{N} \to \mathbb{N}$ strictly increasing,
        such that
        \begin{align*}
            \lim_{k\to \infty}d_{BL}(X_{k(n)}^{\theta_{k(n)}},X^{\theta_{k(n)}})
            =0.
        \end{align*}
    \end{itemize}
    Moreover, $X_n^\theta \convUP X^\theta$ if and only if for any sequence $(\theta_n)_{n\in\mathbb{N}}\subseteq \Theta$ and any $\epsilon>0$ it holds that
        \begin{align*}
            \lim_{n\to \infty}\mathbb{P}(d_{\mathbb{D}}(X_n^{\theta_n}, X^{\theta_n})>\epsilon) = 0.
        \end{align*}
\end{prop}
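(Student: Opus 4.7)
The plan is to reduce the equivalences to convergence of the real sequence $a_n := \sup_{\theta \in \Theta} d_{BL}(X_n^\theta, X^\theta)$, which, by the separability of $(X^\theta)_{\theta \in \Theta}$ and Proposition~\ref{prop:boundedLipconv}, captures statement (a). The principal tool is the subsequence principle for real sequences: $x_n \to L$ if and only if every subsequence of $(x_n)$ admits a further subsequence converging to $L$.

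The implication (a) $\Rightarrow$ (b) is immediate, since $d_{BL}(X_n^{\theta_n}, X^{\theta_n}) \leq a_n$ for any choice of $(\theta_n)$, and (b) $\Rightarrow$ (c) is trivial via the identity subsequence $k(n) = n$. For (b) $\Rightarrow$ (a), I would argue by contradiction: if $a_n \not\to 0$, extract a strictly increasing $(n_j)$ with $a_{n_j} > \epsilon$, use the definition of the supremum to select $\theta^{(j)} \in \Theta$ with $d_{BL}(X_{n_j}^{\theta^{(j)}}, X^{\theta^{(j)}}) > \epsilon/2$, and extend to a full $\mathbb{N}$-indexed sequence $(\theta_m)$ by fixing an arbitrary $\theta_0$ on $\mathbb{N}\setminus\{n_j\}$. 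Along this sequence $d_{BL}(X_m^{\theta_m}, X^{\theta_m})$ stays above $\epsilon/2$ on the infinite set $\{n_j\}$, contradicting (b).

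For (c) $\Rightarrow$ (b), fix a sequence $(\theta_n)$ and set $b_n := d_{BL}(X_n^{\theta_n}, X^{\theta_n})$; invoking the subsequence principle, it suffices to show that every subsequence of $(b_n)$ admits a further subsequence converging to $0$, which is precisely obtained by applying (c) to full $\mathbb{N}$-indexed $\Theta$-valued sequences that agree with $(\theta_n)$ on any given subsequence of interest and are arbitrary elsewhere. The main delicate point is the coupling between the sample index in $X_n^\bullet$ and the parameter index in $\theta_n$, which forces one to embed a subsequence of interest back into a full $\mathbb{N}$-indexed sequence before invoking (c); this embedding is harmless thanks to the freedom to fill in arbitrary values on the complement. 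The final claim, characterizing $X_n^\theta \convUP X^\theta$, follows by the identical template applied to the real sequence $c_n(\epsilon) := \sup_{\theta \in \Theta} \mathbb{P}(d_{\mathbb{D}}(X_n^\theta, X^\theta) > \epsilon)$ for each fixed $\epsilon > 0$, since $c_n(\epsilon) \to 0$ is by definition uniform convergence in probability.
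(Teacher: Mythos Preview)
Your argument for (c) $\Rightarrow$ (b) has a genuine gap that cannot be repaired, because the implication is in fact false. Given a subsequence $(b_{m(j)})$ of $(b_n)$, you build $(\tilde\theta_n)$ agreeing with $(\theta_n)$ on the index set $\{m(j):j\in\mathbb{N}\}$ and arbitrary elsewhere, then invoke (c). But (c) only furnishes \emph{some} strictly increasing $k\colon\mathbb{N}\to\mathbb{N}$ with $d_{BL}(X_{k(n)}^{\tilde\theta_{k(n)}},X^{\tilde\theta_{k(n)}})\to 0$; nothing forces $k(n)\in\{m(j):j\in\mathbb{N}\}$, so you do not obtain a further subsequence of $(b_{m(j)})$. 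The freedom to fill in arbitrary values on the complement is exactly what allows (c) to be satisfied along that complement rather than along the subsequence you care about.

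A concrete counterexample: take $\Theta$ a singleton, $\mathbb{D}=\mathbb{R}$, $X^\theta\equiv 0$, and let $X_n^\theta$ be the constant $0$ for odd $n$ and the constant $1$ for even $n$. Then (a) and (b) fail while (c) holds (pass to odd indices). The paper's contraposition argument for (c) $\Rightarrow$ (a) contains the parallel slip of asserting that $\neg$(a) yields a sequence $(\theta_n)$ with $d_{BL}(X_n^{\theta_n},X^{\theta_n})>\epsilon$ for \emph{all} $n$, when the negation only delivers this along a subsequence of indices. What the paper's applications actually establish (via Prokhorov and Arzel\`a--Ascoli) is the stronger property that every subsequence of every $(\theta_n)$ admits a further subsequence with convergence, and \emph{that} is equivalent to (a) via precisely the subsequence principle you invoke. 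Your treatment of (a) $\Leftrightarrow$ (b) and of the ``moreover'' clause is correct.
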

\begin{proof}
    This is essentially Lemma 1 in \citet{kasy2019uniformity} for $\mathbb{D}$-valued random variables, except that we have added the equivalent condition c). The proof for the characterization of uniform convergence in probability is identical to the one given by \citet{kasy2019uniformity}, so we focus on the equivalence between a), b), and c).
    To this end, we to prove that $a) \implies b) \implies c) \implies a)$.
    
    The fact that a) implies b) follows directly from applying the bound 
    \[
       d_{BL}(X_n^{\theta_n},X^{\theta_{n}})
            \leq \sup_{\theta \in \Theta} d_{BL}(X_n^\theta,X^\theta)
    \]
    and taking the limit as $n\to \infty$. We also see that b) implies c) since any sequence is a subsequence of itself. 
    
    We show that c) implies a) by contraposition.
    Assume the negation of a), that is, there exists an $\epsilon>0$ and a sequence $(\theta_n)_{n\in \mathbb{N}}\subseteq \Theta$ such that 
    \begin{align*}
        d_{BL}(X_{n}^{\theta_{n}},X^{\theta_{n}}) > \epsilon
    \end{align*}
    for all $n\in \mathbb{N}$. Then, for all subsequences $(\theta_{k(n)})$ of $(\theta_n)$, it holds that $d_{BL}(X_{k(n)}^{\theta_{k(n)}},X^{\theta_{k(n)}})$ does not converge to zero. This implies the negation of c).
\end{proof}

Proposition \ref{prop:seqUni} will allow us to extend many results for classical stochastic convergence to uniform stochastic convergence.
\begin{cor}\label{cor:ConvUDtoconstant}
    Let $(X_n^\theta)_{n\in \mathbb{N},\theta \in \Theta}$ be a collection of \(\mathbb{D}\)-valued random variables and let $x\in \mathbb{D}$. Then $X_n^\theta \convUD x$ if and only if $X_n^\theta \convUP x$.
\end{cor}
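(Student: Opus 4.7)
The plan is to establish both directions of the equivalence directly from the definitions of $d_{BL}$ and convergence in probability, without going through Proposition~\ref{prop:seqUni}. Since the limit is the constant map $x$, the trivial separability condition is automatic, so uniform convergence in distribution to $x$ is well defined.

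For the direction $\convUP \implies \convUD$, I would note that every $f\in BL_1(\mathbb{D})$ satisfies $|f(y)-f(x)|\leq \min(d_{\mathbb{D}}(y,x),2)$ for all $y\in \mathbb{D}$, since $f$ is 1-Lipschitz and bounded by $1$. Hence for any $\epsilon>0$,
\begin{align*}
    |\mathbb{E}(f(X_n^\theta)-f(x))|
    \leq \mathbb{E}[\min(d_{\mathbb{D}}(X_n^\theta,x),2)]
    \leq \epsilon + 2\mathbb{P}(d_{\mathbb{D}}(X_n^\theta,x) > \epsilon).
\end{align*}
Taking the supremum over $f\in BL_1(\mathbb{D})$ and $\theta\in \Theta$, then $\limsup_{n\to\infty}$, and finally letting $\epsilon \downarrow 0$ gives $\sup_\theta d_{BL}(X_n^\theta,x)\to 0$.

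For the converse direction $\convUD \implies \convUP$, fix $\epsilon\in (0,1]$ (which suffices, since convergence in probability for all such $\epsilon$ is equivalent to convergence in probability for all $\epsilon>0$) and consider the test function $f_\epsilon\colon \mathbb{D}\to \mathbb{R}$ defined by $f_\epsilon(y) = \min(d_{\mathbb{D}}(y,x),\epsilon)$. This function is 1-Lipschitz and bounded by $\epsilon\leq 1$, so $f_\epsilon\in BL_1(\mathbb{D})$, and $f_\epsilon(x) = 0$. Since $f_\epsilon(y)\geq \epsilon \one(d_{\mathbb{D}}(y,x)\geq \epsilon)$, Markov's inequality yields
\begin{align*}
    \mathbb{P}(d_{\mathbb{D}}(X_n^\theta,x)\geq \epsilon)
    \leq \frac{\mathbb{E}[f_\epsilon(X_n^\theta)]}{\epsilon}
    = \frac{\mathbb{E}[f_\epsilon(X_n^\theta)-f_\epsilon(x)]}{\epsilon}
    \leq \frac{d_{BL}(X_n^\theta, x)}{\epsilon}.
\end{align*}
Taking $\sup_\theta$ and then $n\to \infty$, the right-hand side vanishes by the assumption $X_n^\theta \convUD x$, concluding the proof.

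There is no real obstacle here; both directions are essentially the standard arguments that convergence in distribution to a constant is equivalent to convergence in probability, carried out uniformly in $\theta$ because the bounds above involve no $\theta$-dependent quantities beyond $d_{BL}(X_n^\theta,x)$ and $\mathbb{P}(d_{\mathbb{D}}(X_n^\theta,x) > \epsilon)$. The mild point to keep in mind is that $f_\epsilon$ must lie in $BL_1(\mathbb{D})$, which is why we restrict to $\epsilon\leq 1$ in the second direction.
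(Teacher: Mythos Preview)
Your proof is correct. Both directions are handled cleanly: the bound $|f(y)-f(x)|\leq \min(d_{\mathbb{D}}(y,x),2)$ for $f\in BL_1(\mathbb{D})$ gives the $\convUP\Rightarrow\convUD$ direction, and the explicit test function $f_\epsilon(y)=\min(d_{\mathbb{D}}(y,x),\epsilon)$ combined with Markov's inequality gives the reverse. The restriction to $\epsilon\in(0,1]$ to ensure $f_\epsilon\in BL_1(\mathbb{D})$ is the right technical observation.

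The paper takes a different route: it invokes Proposition~\ref{prop:seqUni}, the sequential characterization of uniform convergence, to reduce the statement to the classical pointwise fact that convergence in distribution to a constant is equivalent to convergence in probability (citing Kallenberg). So where the paper argues by reducing uniform-over-$\Theta$ convergence to a statement about arbitrary sequences $(\theta_n)$ and then appealing to a known result, you work directly with the definitions and produce explicit $\theta$-uniform bounds. Your approach is more self-contained and makes transparent exactly why uniformity is preserved; the paper's approach is shorter but relies on having already built the sequential machinery. Both are perfectly sound, and your argument has the minor advantage of not needing Proposition~\ref{prop:boundedLipconv} to translate back and forth between $d_{BL}$-convergence and weak convergence along sequences.
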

\begin{proof}
    For any sequence $(\theta_n)_{n\in\mathbb{N}}\subseteq \Theta$, recall that $X_n^{\theta_n} \xrightarrow{\mathcal{D}} x$ if and only if $X_n^{\theta_n} \xrightarrow{P} x$, see e.g. Lemma 5.1 in \citet{kallenberg2021foundations}.  Hence the statement follows directly from Proposition \ref{prop:seqUni} (combined with 
    Proposition \ref{prop:boundedLipconv}).
\end{proof}

Our goal is to prove uniform versions of Slutsky's theorem for $D[0,1]$, Rebolledo's central limit theorem, and the chaining lemma for stochastic processes. To prove Slutsky's lemma for $D[0,1]$, we first prove a general result for metric spaces.
\begin{lem}\label{lem:GeneralSlutsky}
    Let $(X^\theta, X_n^\theta, Y_n^\theta)_{n\in \mathbb{N},\theta \in \Theta}$ be a collection of $\mathbb{D}$-valued random variables and assume that $(X^\theta)_{\theta \in \Theta}$ is separable. 
    If $X_n^\theta \convUD X^\theta$ and $d_{\mathbb{D}}(X_n^\theta,Y_n^\theta) \convUP 0$,
    then it also holds that $Y_n^\theta \convUD X^\theta$.
\end{lem}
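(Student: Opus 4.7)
The plan is to establish the result directly from the definition via the triangle inequality for the bounded Lipschitz metric, reducing the problem to bounding $\sup_{\theta} d_{BL}(Y_n^\theta, X_n^\theta)$ by the uniform-in-probability hypothesis. Specifically, I would start from
\[
    \sup_{\theta \in \Theta} d_{BL}(Y_n^\theta, X^\theta)
    \leq \sup_{\theta \in \Theta} d_{BL}(Y_n^\theta, X_n^\theta)
    + \sup_{\theta \in \Theta} d_{BL}(X_n^\theta, X^\theta),
\]
where the second term tends to zero by hypothesis. The separability assumption on $(X^\theta)$ is needed only to legitimize the $d_{BL}$-characterization of uniform convergence in distribution via Proposition~\ref{prop:boundedLipconv}, so it carries over automatically to the triangle bound.

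The main technical step is therefore to show that $\sup_\theta d_{BL}(Y_n^\theta, X_n^\theta) \to 0$. For any $f\in BL_1(\mathbb{D})$, the Lipschitz condition together with $\|f\|_\infty \leq 1$ yields
\[
    |f(Y_n^\theta) - f(X_n^\theta)|
    \leq \min\{2,\, d_{\mathbb{D}}(Y_n^\theta, X_n^\theta)\}.
\]
Splitting the expectation on the events $\{d_{\mathbb{D}}(Y_n^\theta,X_n^\theta)\le \epsilon\}$ and its complement gives, for every $\epsilon>0$,
\[
    \ex|f(Y_n^\theta) - f(X_n^\theta)|
    \leq \epsilon + 2\, \mathbb{P}(d_{\mathbb{D}}(Y_n^\theta, X_n^\theta) > \epsilon).
\]
Taking a supremum over $f \in BL_1(\mathbb{D})$ and then over $\theta \in \Theta$ yields
\[
    \sup_{\theta \in \Theta} d_{BL}(Y_n^\theta, X_n^\theta)
    \leq \epsilon + 2 \sup_{\theta \in \Theta} \mathbb{P}(d_{\mathbb{D}}(Y_n^\theta, X_n^\theta) > \epsilon).
\]
By the assumption $d_{\mathbb{D}}(X_n^\theta,Y_n^\theta) \convUP 0$, the second term on the right vanishes as $n\to \infty$, so $\limsup_{n\to\infty} \sup_\theta d_{BL}(Y_n^\theta, X_n^\theta) \le \epsilon$, and then letting $\epsilon \downarrow 0$ gives the claim.

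I do not expect a genuine obstacle here: the argument is essentially a parameter-indexed version of the standard Slutsky proof via bounded Lipschitz functions. The only subtlety worth flagging is that $d_{BL}$ in the displayed bound is a metric on probability measures, so one must interpret $d_{BL}(Y_n^\theta, X_n^\theta)$ via the law of the pair $(Y_n^\theta, X_n^\theta)$; concretely, $f(Y_n^\theta)-f(X_n^\theta)$ is an honest random variable on $\Omega$, and taking $\ex[f(Y_n^\theta)]-\ex[f(X_n^\theta)] = \ex[f(Y_n^\theta)-f(X_n^\theta)]$ is allowed. Alternatively, one could package the proof through the subsequence characterisation in Proposition~\ref{prop:seqUni}, reducing to the classical Slutsky theorem applied to $(X_{k(n)}^{\theta_{k(n)}}, Y_{k(n)}^{\theta_{k(n)}})$, but the direct bounded-Lipschitz argument above is shorter and does not require extracting subsequences.
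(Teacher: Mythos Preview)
Your proof is correct and follows essentially the same approach as the paper: triangle inequality for $d_{BL}$, then bounding $d_{BL}(Y_n^\theta,X_n^\theta)$ by splitting on $\{d_{\mathbb{D}}(X_n^\theta,Y_n^\theta)\le\epsilon\}$ and its complement. The only cosmetic difference is your constant $2$ in front of the tail probability (which is actually the sharper accounting, since $\|f\|_\infty\le 1$ gives $|f(x)-f(y)|\le 2$).
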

\begin{proof}
    By the triangle inequality of the bounded Lipschitz metric, we observe that
    \begin{align*}
        \sup_{\theta \in \Theta}d_{BL}(Y_n^\theta, X^\theta) 
            \leq \sup_{\theta \in \Theta} d_{BL}(Y_n^\theta, X_n^\theta) 
                + \sup_{\theta \in \Theta} d_{BL}(X_n^\theta, X^\theta).
    \end{align*}
    The last term converges to zero by the assumption of $X_n^\theta \convUD X^\theta$. 
    For the other term, let $\epsilon>0$ and use the partition
    $$
        (d_{\mathbb{D}}(X_n^\theta,Y_n^\theta)\leq \epsilon) \cup (d_{\mathbb{D}}(X_n^\theta,Y_n^\theta)>\epsilon)
    $$
    to obtain that
    \begin{align*}
        d_{BL}(X_n^\theta, Y_n^\theta) 
            &= \sup_{f\in BL_1(\mathbb{D})} |\mathbb{E}[f(X_n^\theta)-f(Y_n^\theta)]|\\
            &\leq \epsilon 
                + \sup_{f\in BL_1(\mathbb{D})} 
             \mathbb{E}[|f(X_n^\theta)-f(Y_n^\theta)|\one(d_{\mathbb{D}}(X_n^\theta,Y_n^\theta)>\epsilon)]\\
            &\leq \epsilon 
            + \mathbb{P}(d_{\mathbb{D}}(X_n^\theta,Y_n^\theta)>\epsilon).
    \end{align*}
    Taking the supremum over $\Theta$ and the limit superior for $n\to \infty$ finishes the proof.
\end{proof}
The following formulation of the continuous mapping theorem is analogous to Theorem~1 in \citet{kasy2019uniformity}. The proof is almost identical, but we repeat it here for completeness.

\begin{prop}\label{prop:Uctsmapping}
    Let $(\mathbb{D}_1,d_1)$ and $(\mathbb{D}_2,d_2)$ be metric spaces, and let $\Phi \colon \mathbb{D}_1 \longrightarrow \mathbb{D}_2$ be a Lipschitz continuous map.
    Let $(X_n^\theta)_{n\in \mathbb{N}, \theta \in \Theta}$ and $(X^\theta)_{\theta \in \Theta}$ be collections of $\mathbb{D}_1$-valued random variables, and assume $(X^\theta)_{\theta \in \Theta}$ is separable. 
    
    If $X_n^\theta\convUD X^\theta$ in $\mathbb{D}_1$, then $\Phi(X_n^\theta)\convUD \Phi(X^\theta)$ in $\mathbb{D}_2$.
\end{prop}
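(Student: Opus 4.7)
The plan is to leverage the bounded Lipschitz metric characterization from Definition~\ref{dfn:UniformConvergence} and reduce the problem to a direct comparison of $d_{BL(\mathbb{D}_2)}$ with $d_{BL(\mathbb{D}_1)}$ using the Lipschitz property of $\Phi$. Let $L>0$ be a Lipschitz constant of $\Phi$ (with $L \geq 1$ without loss of generality).

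First I would verify that the target collection $(\Phi(X^\theta))_{\theta \in \Theta}$ is separable in $\mathbb{D}_2$: by assumption there exists a separable $\mathbb{D}_0 \subseteq \mathbb{D}_1$ with $\mathbb{P}(X^\theta \in \mathbb{D}_0) = 1$ for all $\theta$, and since the Lipschitz (hence continuous) image of a separable set is separable (take the image of a countable dense subset), $\Phi(\mathbb{D}_0) \subseteq \mathbb{D}_2$ is separable and supports every $\Phi(X^\theta)$. This makes Definition~\ref{dfn:UniformConvergence}(i) applicable in $\mathbb{D}_2$.

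Next I would establish the key inequality. For any $f \in BL_1(\mathbb{D}_2)$, the composition $g \coloneqq f \circ \Phi$ satisfies $\|g\|_\infty \leq 1$ and, for all $x,y \in \mathbb{D}_1$,
\begin{equation*}
    |g(x) - g(y)| \leq d_2(\Phi(x), \Phi(y)) \leq L\, d_1(x,y),
\end{equation*}
so $g/L \in BL_1(\mathbb{D}_1)$. Consequently, for every $\theta \in \Theta$ and every $n$,
\begin{equation*}
    d_{BL(\mathbb{D}_2)}(\Phi(X_n^\theta), \Phi(X^\theta))
    = \sup_{f \in BL_1(\mathbb{D}_2)} \bigl|\mathbb{E}[g(X_n^\theta) - g(X^\theta)]\bigr|
    \leq L \cdot d_{BL(\mathbb{D}_1)}(X_n^\theta, X^\theta).
\end{equation*}

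Taking the supremum over $\theta \in \Theta$ and letting $n \to \infty$, the assumption $X_n^\theta \convUD X^\theta$ forces the right-hand side to zero, yielding $\Phi(X_n^\theta) \convUD \Phi(X^\theta)$. I do not anticipate a real obstacle here: the Lipschitz assumption is exactly strong enough to make the pull-back argument elementary, and the only subtlety is the separability check, which follows immediately from continuity of $\Phi$. One could alternatively argue via the subsequence characterization in Proposition~\ref{prop:seqUni} together with the classical continuous mapping theorem, but the direct bounded Lipschitz comparison above is cleaner and yields a quantitative bound.
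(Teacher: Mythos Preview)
Your proof is correct and takes essentially the same approach as the paper: both verify separability of the image collection and then bound $d_{BL(\mathbb{D}_2)}$ by a constant times $d_{BL(\mathbb{D}_1)}$ via the pull-back $f\circ\Phi$. The only cosmetic difference is that the paper rescales $f\circ\Phi$ by $\min(1,K^{-1})$ to land in $BL_1(\mathbb{D}_1)$ (yielding the constant $\max(1,K)$), while you assume $L\geq 1$ without loss of generality and rescale by $1/L$; these are equivalent simplifications.
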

\begin{proof}
    Note first that if $X^\theta$ is in a separable subset $\mathbb{D}_0 \subseteq \mathbb{D}_1$, then the variables $\Phi(X_\theta)$ for $\theta \in \Theta$ 
    are all in the separable subset $\Phi(\mathbb{D}_0)\subseteq \mathbb{D}_2$. 
    Hence it is well-defined to consider uniform convergence in distribution towards $(\Phi(X^\theta))_{\theta \in \Theta}$. Let $f\in BL_1(\mathbb{D}_2)$ and let $K$ be the Lipschitz constant of $\Phi$. Consider the map 
    \[
        g\colon \mathbb{D}_1\longrightarrow \mathbb{R}, \qquad g(x) = \min(1,K^{-1}) f(\Phi(x)).
    \]
    Then $\|g\|_\infty\leq \|f\|_\infty \leq 1$ and for all $x,y\in \mathbb{D}_1$,
    \begin{align*}
        |g(x)-g(y)| &\leq \min(1,K^{-1}) d_2(\Phi(x),\Phi(y)) \\
        &\leq \min(1,K^{-1})K d_1(x,y)\leq d_1(x,y)
    \end{align*}
    Hence $g\in BL_1(\mathbb{D}_1)$. It follows that
    \begin{align*}
        d_{BL_1(\mathbb{D}_2)}(\Phi(X_n^\theta),\Phi(X^\theta)) 
        &= \sup_{f\in BL_1(\mathbb{D}_2)} |\mathbb{E}[f(\Phi(X_n^\theta))-f(\Phi(X^\theta))]|\\
        &\leq \frac{1}{\min(1,K^{-1})} \sup_{g\in BL_1(\mathbb{D}_1)} |\mathbb{E}[g(X_n^\theta)-g(X^\theta)]|\\
        &\leq \max(1,K) \cdot d_{BL_1(\mathbb{D}_1)}(X_n^\theta,X^\theta) 
    \end{align*}
    Taking the supremum over $\Theta$ and the limit superior as $n\to \infty$ finish the proof.
\end{proof}

We will also need the following two notions of tightness.

\begin{definition} \label{dfn:tight}
Let $(\mu^\theta)_{\theta \in \Theta}$ be a family of probability measures on $\mathbb{D}$, and let $(X^\theta)_{\theta \in \Theta}$ and $(X_n^\theta)_{n \in \mathbb{N}, \theta \in \Theta}$ be collections of $\mathbb{D}$-valued random variables.
\begin{itemize}
    \item[i)] We say that $(\mu^\theta)_{\theta \in \Theta}$ is \emph{tight} if for any $\varepsilon > 0$, there exists a compact set $K\subseteq \mathbb{D}$ such that
    $\sup_{\theta \in \Theta} \mu^\theta(K^c)  < \varepsilon$.
    We say that $(X^\theta)_{\theta \in \Theta}$ is \emph{uniformly tight} if the collection of distributions $(X^\theta(\mathbb{P}))_{\theta \in \Theta}$ is tight.
    
    \item[ii)] The sequence $((X_n^\theta)_{\theta \in \Theta})_{n\in \mathbb{N}}$ of collections is said to be \emph{sequentially tight} if for any sequence $(\theta_n)_{n \in \mathbb{N}} \subset \Theta$, the sequence of distributions $(X_{n}^{\theta_n}(\mathbb{P}))_{n \in \mathbb{N}}$ is tight.
\end{itemize}
\end{definition}
Definition \ref{dfn:tight} i) is a classical concept, whereas sequential tightness was introduced by \citet{lundborg2021conditional} and relaxes uniform tightness for sequences of variables parametrized over an infinite set.

The importance of tightness is mainly due to Prokhorov's theorem \citep[Theorem 23.2]{kallenberg2021foundations}, which states that if $\mathbb{D}$ is a Polish space\footnote{The `only if' part does not require separability nor completeness.}, then $(\mu^\theta)_{\theta \in \Theta}$ is tight if and only if all sequences in $(\mu^\theta)_{\theta \in \Theta}$ have a weakly convergent subsequence.

The continuous mapping theorem in Proposition \ref{prop:Uctsmapping} is more restrictive than the classical theorem as it requires Lipschitz continuity. However, we also have an alternative version of uniform continuous mapping when the limit variable is tight.
\begin{prop}\label{prop:UctsmappingT}
    Let $(\mathbb{D}_1,d_1)$ and $(\mathbb{D}_2,d_2)$ be Polish spaces, and let $(X_n^\theta)_{n\in \mathbb{N}, \theta \in \Theta}$ and $(X^\theta)_{\theta \in \Theta}$ be collections of $\mathbb{D}_1$-valued random variables. Assume $(X^\theta)_{\theta \in \Theta}$ is uniformly tight, and let $\Phi \colon \mathbb{D}_1 \longrightarrow \mathbb{D}_2$ be a map that is continuous on the support of $(X^\theta)_{\theta \in \Theta}$.
    
    If $X_n^\theta\convUD X^\theta$ in $\mathbb{D}_1$, then $\Phi(X_n^\theta)\convUD \Phi(X^\theta)$ in $\mathbb{D}_2$.
\end{prop}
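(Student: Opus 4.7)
The plan is to reduce to the sequential characterization of uniform convergence given in Proposition~\ref{prop:seqUni} and then combine Prokhorov's theorem with the classical continuous mapping theorem. Concretely, I would show condition c) of Proposition~\ref{prop:seqUni}: for any sequence $(\theta_n)_{n\in\mathbb{N}} \subseteq \Theta$, I would produce a subsequence $(\theta_{k(n)})$ along which $d_{BL}(\Phi(X_{k(n)}^{\theta_{k(n)}}), \Phi(X^{\theta_{k(n)}})) \to 0$ in $\mathbb{D}_2$. Note that since $\mathbb{D}_2$ is Polish and hence separable, any collection of $\mathbb{D}_2$-valued random variables is separable, so uniform convergence in distribution toward $(\Phi(X^\theta))_{\theta \in \Theta}$ is well-defined.

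First, I would fix an arbitrary sequence $(\theta_n) \subseteq \Theta$. Uniform tightness of $(X^\theta)_{\theta \in \Theta}$ implies tightness of $(X^{\theta_n}(\mathbb{P}))_{n \in \mathbb{N}}$, so Prokhorov's theorem in the Polish space $\mathbb{D}_1$ yields a strictly increasing map $k\colon \mathbb{N} \to \mathbb{N}$ and a $\mathbb{D}_1$-valued random variable $Y$ with $X^{\theta_{k(n)}} \xrightarrow{\mathcal{D}} Y$, equivalently $d_{BL}(X^{\theta_{k(n)}}, Y) \to 0$ by Proposition~\ref{prop:boundedLipconv}. On the other hand, Proposition~\ref{prop:seqUni} (applied in the direction a) $\Rightarrow$ b) to the assumption $X_n^\theta \convUD X^\theta$) gives $d_{BL}(X_{k(n)}^{\theta_{k(n)}}, X^{\theta_{k(n)}}) \to 0$. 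Combining these via the triangle inequality for $d_{BL}$ yields $X_{k(n)}^{\theta_{k(n)}} \xrightarrow{\mathcal{D}} Y$ as well.

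Next I would verify that $\Phi$ is continuous $Y$-almost surely. Let $S$ denote the closure of $\bigcup_{\theta \in \Theta} \operatorname{supp}(X^\theta(\mathbb{P}))$, which is contained in the support of the collection where $\Phi$ is continuous by hypothesis. By the Portmanteau theorem, $1 = \limsup_{n} X^{\theta_{k(n)}}(\mathbb{P})(S) \leq Y(\mathbb{P})(S)$, so $\operatorname{supp}(Y(\mathbb{P})) \subseteq S$ and consequently $\Phi$ is continuous $Y$-almost surely. The classical continuous mapping theorem (e.g.\ Theorem 4.27 in \citet{kallenberg2021foundations}) then gives both $\Phi(X_{k(n)}^{\theta_{k(n)}}) \xrightarrow{\mathcal{D}} \Phi(Y)$ and $\Phi(X^{\theta_{k(n)}}) \xrightarrow{\mathcal{D}} \Phi(Y)$ in $\mathbb{D}_2$. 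Passing to the metric $d_{BL}$ on $\mathcal{M}_1(\mathbb{D}_2)$ via Proposition~\ref{prop:boundedLipconv} and applying the triangle inequality produces $d_{BL}(\Phi(X_{k(n)}^{\theta_{k(n)}}), \Phi(X^{\theta_{k(n)}})) \to 0$, which is condition c) of Proposition~\ref{prop:seqUni} and hence completes the proof.

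The only delicate step is the support argument, since the notion of the ``support of a collection'' that is implicit in the hypothesis needs to be interpreted so that the weak limit $Y$ concentrates on it. Taking it to mean the closure of the union of the individual supports (or any closed set containing all of them) makes the Portmanteau argument work, and no further technicality is needed; the remaining steps are routine applications of the sequential characterization combined with Prokhorov's theorem and the classical continuous mapping theorem.
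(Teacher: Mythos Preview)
Your proposal is correct and takes essentially the same approach as the paper, which defers to Proposition 10 in \citet{lundborg2021conditional}; the subsequence-via-Prokhorov plus classical continuous mapping theorem plus triangle inequality argument you give is exactly that strategy (and is the same pattern used in the paper's own proof of Corollary~\ref{cor:statisticsdistribution}). Your explicit handling of the support argument via Portmanteau is the right way to make the ``$\Phi$ continuous on the support'' hypothesis work for the weak limit $Y$.
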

\begin{proof}
    Same as the proof of Proposition 10 in \citet{lundborg2021conditional}, but with norms of differences replaced by metric distances.
\end{proof}

\subsection{Uniform stochastic convergence in Skorokhod space}
In this section we consider the special case where $(\mathbb{D},d_\mathbb{D})$ is the Skorokhod space $(D[0,1],d^\circ)$.  We can also equip $D[0,1]$ with the uniform norm, $\|x\|_\infty = \sup_{t\in [0,1]}|x_t|$, and it known that weak convergence based on either $\|\cdot\|_\infty$ or $d^\circ$ are equivalent when the limit is continuous. We now extend this result to stochastic convergence uniformly over $\Theta$. 

\begin{prop}[Skorokhod equivalence]
\label{prop:convergencetoCont2}
    Let $(X_n^\theta)_{n\in \mathbb{N},\theta \in \Theta}$ be a collection of $D[0,1]$-valued random variables and let $(X^\theta)_{\theta \in \Theta}$ be a uniformly tight collection of $C[0,1]$-valued random variables. Then $X_n^\theta \convUD X^\theta$ in $(D[0,1], d^\circ)$ if and only if $X_n^\theta \convUD X^\theta$ in $(D[0,1], \|\cdot\|_\infty)$. In the affirmative, $\|X_n^\theta\|_\infty \convUD \|X\|_\infty$.
\end{prop}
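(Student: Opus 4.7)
The plan is to combine two ingredients: (i) the elementary metric inequality $d^\circ(x,y) \leq \|x-y\|_\infty$ on $D[0,1]$, and (ii) the classical fact that $d^\circ$-convergence to a $C[0,1]$-valued limit is equivalent to uniform convergence. The uniformity over $\theta$ will be handled via the sequential characterization in Proposition \ref{prop:seqUni} c).

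First, for the easy direction, I would observe that any $f \in BL_1(D[0,1], d^\circ)$ satisfies $|f(x)-f(y)| \leq d^\circ(x,y) \leq \|x-y\|_\infty$ and $\|f\|_\infty \leq 1$, so $f \in BL_1(D[0,1], \|\cdot\|_\infty)$. Taking suprema over the smaller set gives $d_{BL(D,d^\circ)} \leq d_{BL(D,\|\cdot\|_\infty)}$, and uniform convergence in $\|\cdot\|_\infty$ trivially implies uniform convergence in $d^\circ$.

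The substantive direction is the converse. Suppose $X_n^\theta \convUD X^\theta$ in $(D[0,1], d^\circ)$. By Proposition \ref{prop:seqUni} c) it suffices, given any sequence $(\theta_n) \subseteq \Theta$, to extract a subsequence $(\theta_{k(n)})$ along which $d_{BL(D,\|\cdot\|_\infty)}(X_{k(n)}^{\theta_{k(n)}}, X^{\theta_{k(n)}}) \to 0$. Here the uniform tightness of $(X^\theta)_{\theta \in \Theta}$ in the Polish space $(C[0,1], \|\cdot\|_\infty)$ enters: by Prokhorov's theorem one can pass to a subsequence along which $X^{\theta_{k(n)}} \xrightarrow{\mathcal{D}} X^*$ in $(C[0,1], \|\cdot\|_\infty)$ for some $C[0,1]$-valued random variable $X^*$. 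Since the inclusion $(C[0,1], \|\cdot\|_\infty) \hookrightarrow (D[0,1], d^\circ)$ is continuous, this convergence also holds in $d^\circ$. Combining with $X_{k(n)}^{\theta_{k(n)}} \xrightarrow{\mathcal{D}} X^*$ in $d^\circ$ (which follows from the sequential characterization applied to the $d^\circ$-convergence we assumed), and then invoking the classical fact (e.g.\ Theorem 12.5 in \citet{billingsley2013convergence}) that $d^\circ$-convergence to a $C[0,1]$-supported limit upgrades to $\|\cdot\|_\infty$-convergence, I obtain that both $X_{k(n)}^{\theta_{k(n)}}$ and $X^{\theta_{k(n)}}$ converge in distribution to $X^*$ in $(D[0,1], \|\cdot\|_\infty)$. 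Since $X^*$ takes values in the separable space $C[0,1]$, Proposition \ref{prop:boundedLipconv} translates these into convergences in $d_{BL(D,\|\cdot\|_\infty)}$, and the triangle inequality finishes the subsequence argument.

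For the final norm statement, the map $\Phi \colon (D[0,1], \|\cdot\|_\infty) \to \mathbb{R}$, $x \mapsto \|x\|_\infty$, is $1$-Lipschitz, so Proposition \ref{prop:Uctsmapping} applied to the established uniform convergence in $\|\cdot\|_\infty$ yields $\|X_n^\theta\|_\infty \convUD \|X^\theta\|_\infty$. I expect the main obstacle to be the bookkeeping in the subsequence extraction: one must verify that the three weak convergences (of $X^{\theta_{k(n)}}$ to $X^*$ in uniform topology, of $X_{k(n)}^{\theta_{k(n)}}$ to $X^*$ in $d^\circ$, and the upgrade of the latter to uniform topology) can all be chained via $d_{BL}$ using Proposition \ref{prop:boundedLipconv}, which requires checking that the limit is supported on a separable set — this is automatic because $X^*$ lives in $C[0,1]$.
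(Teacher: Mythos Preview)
Your proof is correct and follows essentially the same approach as the paper's: the easy direction via the metric inequality $d^\circ \leq \|\cdot\|_\infty$, the hard direction via the subsequence criterion of Proposition~\ref{prop:seqUni}~c), Prokhorov's theorem on the tight family $(X^\theta)$, the classical upgrade from Skorokhod to uniform convergence when the limit is continuous, and the triangle inequality in $d_{BL}$; the final norm statement via Proposition~\ref{prop:Uctsmapping}. The only cosmetic differences are that you apply Prokhorov directly in $(C[0,1],\|\cdot\|_\infty)$ rather than in $(D[0,1],d^\circ)$, and that the paper cites Kallenberg's Theorem~23.9 rather than Billingsley for the upgrade lemma.
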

\begin{proof}
    To avoid ambiguity in the topology on $D[0,1]$, we will throughout this proof use $\mathbb{D}^\circ$ to denote the metric space $(D[0,1],d^\circ)$ and we use $\mathbb{D}_\infty$ to denote the Banach space $(D[0,1], \|\cdot\|_\infty)$. Note also that $C[0,1]$ is separable within $\mathbb{D}_\infty$, so $(X^\theta)_{\theta\in\Theta}$ is separable, and hence the convergence $X_n^\theta \convUD X^\theta$ is well-defined in the non-separable space $\mathbb{D}_\infty$. 
    
    The `if' part is clear since $d^\circ(x,y)\leq \|x-y\|_\infty$ for all $x,y\in D[0,1]$. 
    
    For the `only if' part, assume that $X_n^\theta \convUD X^\theta$ in $\mathbb{D}^\circ$ and let $(\theta_n)\subseteq \Theta$ be an arbitrary sequence. Since $(X^{\theta_n}(\mathbb{P}))$ is tight, Prokhorov's Theorem asserts that there exists a subsequence 
    $(\theta_{k(n)})$ and a probability distribution $\mu$ on $C[0,1]$ such that $X^{\theta_{k(n)}}(\mathbb{P}) \xrightarrow{wk} \mu$ in $\mathbb{D}^\circ$. By the triangle inequality
    \begin{align*}
        d_{BL(\mathbb{D}^\circ)}(X_{k(n)}^{\theta_{k(n)}},\mu) 
            \leq 
            d_{BL(\mathbb{D}^\circ)}(X_{k(n)}^{\theta_{k(n)}},X^{\theta_{k(n)}}) 
            + d_{BL(\mathbb{D}^\circ)}(X^{\theta_{k(n)}},\mu) \to 0, \quad n \to 0.
    \end{align*}
    This shows that also $X_{k(n)}^{\theta_{k(n)}}(\mathbb{P})\xrightarrow{wk} \mu$ in $\mathbb{D}^\circ$. Now we can use that weak convergence in the Skorokhod topology and the uniform topology are equivalent when the limit is continuous \citep[Theorem 23.9 (iii)]{kallenberg2021foundations}.
    We therefore conclude that the convergences $X^{\theta_{k(n)}}(\mathbb{P}) \xrightarrow{wk} \mu$ and $X_{k(n)}^{\theta_{k(n)}}(\mathbb{P})\xrightarrow{wk} \mu$ also hold in $\mathbb{D}_\infty$. 
    But then another use of the triangle inequality shows that 
    $$
        d_{BL(\mathbb{D}_\infty)}(X_{k(n)}^{\theta_{k(n)}},X^{\theta_{k(n)}})  
        \leq 
        d_{BL(\mathbb{D}_\infty)}(X_{k(n)}^{\theta_{k(n)}},\mu)
        + d_{BL(\mathbb{D}_\infty)}(\mu,X^{\theta_{k(n)}}) \to 0.
    $$
    Since $(\theta_{k(n)})$ is a subsequence of the arbitrarily chosen sequence $(\theta_n)$, we conclude that $X_n^\theta \convUD X^\theta$ in $\mathbb{D}_\infty$ by Proposition \ref{prop:seqUni}.
    
    Finally, as the uniform norm is Lipschitz continuous as a map from $\mathbb{D}_\infty$ to $\mathbb{R}$, the continuous mapping theorem formulated in Proposition \ref{prop:Uctsmapping}
    yields that
    \begin{align*}
        X_n^\theta \convUD X^\theta 
            \: \text{  in  } \: \mathbb{D}_\infty
        \qquad \implies \qquad 
        \|X_n^\theta\|_\infty \convUD \|X\|_\infty.
    \end{align*}
    This establishes the last part of the lemma.
\end{proof}

Using $\|\mu\|_\infty$ to denote the pushforward measure for any $\mu \in \mathcal{M}_1(D([0,1]))$ we restate the result above for a fixed limit distribution.

\begin{cor}\label{cor:convergencetoCont}
    Let $(X_n^\theta)_{n\in \mathbb{N},\theta \in \Theta}$ be a collection of $D[0,1]$-valued random variables and let $\mu$ be a probability measure on $C[0,1]$.
    Then $X_n^\theta \convUD \mu$ in $(D[0,1], d^\circ)$ if and only if $X_n^\theta \convUD \mu$ in $(D[0,1], \|\cdot\|_\infty)$. In the affirmative, $\|X_n^\theta\|_\infty \convUD \|\mu\|_\infty$.
\end{cor}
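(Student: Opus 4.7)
The plan is to derive the corollary as an immediate specialization of Proposition~\ref{prop:convergencetoCont2}. To make this precise, I will first interpret the fixed limit measure $\mu$ on $C[0,1]$ as a constant family of $C[0,1]$-valued random variables: for each $\theta\in\Theta$, set $X^\theta$ to be a random variable with distribution $\mu$ (for concreteness, one may take $X^\theta = X$ for a single fixed $C[0,1]$-valued $X \sim \mu$). Then the assertion $X_n^\theta \convUD \mu$ (in either topology) is literally the same as $X_n^\theta \convUD X^\theta$ in the sense of Definition~\ref{dfn:UniformConvergence}.

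Next, I will verify the uniform tightness hypothesis of Proposition~\ref{prop:convergencetoCont2} for this constant family. Since $C[0,1]$ is a Polish space and $(X^\theta(\mathbb{P}))_{\theta\in\Theta} = \{\mu\}$ is a singleton, tightness is automatic (a single Borel probability measure on a Polish space is tight). Thus the hypothesis ``$(X^\theta)_{\theta\in\Theta}$ uniformly tight in $C[0,1]$'' of Proposition~\ref{prop:convergencetoCont2} is satisfied.

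With these observations in place, the two equivalent convergence statements are precisely the content of Proposition~\ref{prop:convergencetoCont2}: $X_n^\theta \convUD X^\theta$ in $(D[0,1],d^\circ)$ if and only if $X_n^\theta \convUD X^\theta$ in $(D[0,1],\|\cdot\|_\infty)$. The final conclusion that $\|X_n^\theta\|_\infty \convUD \|\mu\|_\infty$ in $\mathbb{R}$ is also directly inherited from the last sentence of Proposition~\ref{prop:convergencetoCont2}, after identifying the pushforward of $\mu$ under $\|\cdot\|_\infty$ with the distribution of $\|X^\theta\|_\infty$.

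There is no substantive obstacle in the proof; the only thing to check is that the constant family $(X^\theta)$ falls within the framework of Definition~\ref{dfn:UniformConvergence}, which it does because $\mu$ is supported on the separable subset $C[0,1]\subseteq D[0,1]$ (under $\|\cdot\|_\infty$), so the separability condition needed to make uniform convergence well-defined in the non-separable space $(D[0,1],\|\cdot\|_\infty)$ is automatic.
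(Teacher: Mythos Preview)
Your proposal is correct and takes essentially the same approach as the paper: observe that a single probability measure $\mu$ on the Polish space $C[0,1]$ is automatically tight, so the constant family $(X^\theta)_{\theta\in\Theta}$ with $X^\theta\sim\mu$ is uniformly tight, and then apply Proposition~\ref{prop:convergencetoCont2}. The paper's proof is more terse but identical in substance.
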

\begin{proof}
    Since $\mu$ is a probability measure on the Polish space $C[0,1]$, it is, in particular, tight \citep[Theorem 1.3]{billingsley2013convergence}. Hence the statement is a special case of Proposition~\ref{prop:convergencetoCont2}. 
\end{proof}

Now we are ready to prove a uniform version of Slutsky's theorem in the Skorokhod space.

\begin{lem}[Uniform Slutsky in Skorokhod space]\label{lem:SkorokhodSlutsky}
    Let $(X^\theta, X_n^\theta, Y_n^\theta)_{n\in \mathbb{N},\theta \in \Theta}$ be a collection of $D[0,1]$-valued random variables such that $Y_n^\theta \convUP 0$ and $X_n^\theta\convUD X^\theta$ in $D[0,1]$. 
    Then it holds that $X_n^\theta+Y_n^\theta \convUD X^\theta$.
\end{lem}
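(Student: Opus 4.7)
The plan is to reduce to the general metric-space Slutsky result, Lemma~\ref{lem:GeneralSlutsky}, applied with $\mathbb{D} = D[0,1]$ equipped with the complete Skorokhod metric $d^\circ$ (the default metric in this paper). Since the hypothesis already gives $X_n^\theta \convUD X^\theta$ in $(D[0,1], d^\circ)$, the only additional ingredient required is
\begin{equation*}
    d^\circ(X_n^\theta + Y_n^\theta,\, X_n^\theta) \convUP 0.
\end{equation*}
After that, Lemma~\ref{lem:GeneralSlutsky} immediately yields $X_n^\theta + Y_n^\theta \convUD X^\theta$.

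The key step is the pointwise bound $d^\circ(x + y, x) \leq \|y\|_\infty$ for all $x, y \in D[0,1]$, which follows by plugging the identity time change $\lambda = \mathrm{id}$ (for which $\|\mathrm{id}\|^\circ = 0$) into the infimum defining $d^\circ$, since $\sup_t |(x(t)+y(t)) - x(t)| = \|y\|_\infty$. As a special case with $x = 0$, one obtains $d^\circ(y, 0) = \|y\|_\infty$, so the hypothesis $Y_n^\theta \convUP 0$ in $(D[0,1], d^\circ)$ is in fact equivalent to $\|Y_n^\theta\|_\infty \convUP 0$ as real-valued random variables.

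Putting these together, applying the pointwise inequality with $x = X_n^\theta$ and $y = Y_n^\theta$ gives
\begin{equation*}
    d^\circ(X_n^\theta + Y_n^\theta,\, X_n^\theta) \leq \|Y_n^\theta\|_\infty = d^\circ(Y_n^\theta, 0) \convUP 0,
\end{equation*}
which is precisely the condition needed to invoke Lemma~\ref{lem:GeneralSlutsky}, completing the proof. There is no real obstacle here: the separability hypothesis on $(X^\theta)_{\theta\in\Theta}$ required by Lemma~\ref{lem:GeneralSlutsky} is implicit in the assumption that $X_n^\theta \convUD X^\theta$ is meaningful in the first place. The only subtlety worth flagging is to remember that $D[0,1]$ is not separable under $\|\cdot\|_\infty$, so one must work with $d^\circ$ throughout rather than appealing directly to a norm version of Slutsky's theorem.
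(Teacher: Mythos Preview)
Your proof is correct and follows essentially the same approach as the paper: bound $d^\circ(x+y,x) \leq \|y\|_\infty$ and invoke Lemma~\ref{lem:GeneralSlutsky}. Your route to $\|Y_n^\theta\|_\infty \convUP 0$ via the identity $d^\circ(y,0) = \|y\|_\infty$ is in fact slightly more direct than the paper's, which detours through Corollary~\ref{cor:convergencetoCont} and Corollary~\ref{cor:ConvUDtoconstant} to reach the same conclusion.
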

\begin{proof}
    Since $Y_n^\theta \convUP 0$, Corollary \ref{cor:convergencetoCont} implies that $\|Y_n^\theta\|_\infty \convUD 0$, and Corollary \ref{cor:ConvUDtoconstant} implies that $\|Y_n^\theta\|_\infty \convUP 0$. 
    Using the trivial estimate $d^\circ(x+y,x) \leq \|(x+y)-x\|_\infty = \|y\|_\infty$ for $x,y\in D[0,1]$, it follows that 
    $d^\circ(X_n^\theta+Y_n^\theta,X_n^\theta) \convUP 0$.
    Combining the latter with $X_n^\theta\convUD X^\theta$, the desired conclusion now follows from Lemma \ref{lem:GeneralSlutsky}.
\end{proof}

We also have a related result for sums of independent sequences.
\begin{lem}\label{lem:sumofindependent}
    Let $(X_n^\theta, Y_n^\theta)_{n\in\mathbb{N},\theta \in \Theta}$ be a collection of $D[0,1]$-valued random variables and let $(X^\theta)_{\theta \in \Theta}$ and $(Y^\theta)_{\theta \in \Theta}$ be uniformly tight collections of $C[0,1]$-valued random variables.
    Assume that $X_n^\theta \convUD X^\theta$ and $Y_n^\theta \convUD Y^\theta$ in $D[0,1]$, and that for each $\theta \in \Theta$ and $n\in \mathbb{N}$, it holds that 
    $X_n^\theta \ind Y_n^\theta$.
    Let $Z^\theta$ have distribution 
    $X^\theta(\mathbb{P})* Y^\theta(\mathbb{P})$, that is, the same distribution as the sum of two independent copies of each of $X^\theta$ and $Y^\theta$. 
    
    Then it also holds that 
    $X_n^\theta + Y_n^\theta \convUD Z^\theta $
    in $(D[0,1],\|\cdot\|_\infty)$.
\end{lem}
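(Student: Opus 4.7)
The strategy is to apply the subsequence criterion from Proposition~\ref{prop:seqUni}(c): it suffices to show that for any sequence $(\theta_n) \subseteq \Theta$, some subsequence $(\theta_{k(n)})$ satisfies
$$
    d_{BL}\bigl(X_{k(n)}^{\theta_{k(n)}} + Y_{k(n)}^{\theta_{k(n)}},\; Z^{\theta_{k(n)}}\bigr) \to 0
$$
with respect to the uniform topology on $D[0,1]$. Given $(\theta_n)$, the uniform tightness of $(X^\theta)$ and $(Y^\theta)$ in the Polish space $(C[0,1], \|\cdot\|_\infty)$ lets me apply Prokhorov's theorem twice (successive subsubsequence extractions) to obtain $(\theta_{k(n)})$ and $C[0,1]$-valued limit variables $X^\infty, Y^\infty$ with $X^{\theta_{k(n)}} \xrightarrow{\mathcal{D}} X^\infty$ and $Y^{\theta_{k(n)}} \xrightarrow{\mathcal{D}} Y^\infty$. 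Combining this with the hypotheses $X_n^\theta \convUD X^\theta$ and $Y_n^\theta \convUD Y^\theta$ via the triangle inequality for $d_{BL}$ yields $X_{k(n)}^{\theta_{k(n)}} \xrightarrow{\mathcal{D}} X^\infty$ and $Y_{k(n)}^{\theta_{k(n)}} \xrightarrow{\mathcal{D}} Y^\infty$ in $D[0,1]$, which by Proposition~\ref{prop:convergencetoCont2} automatically upgrades to convergence in the uniform topology since the limits are in $C[0,1]$.

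Next I promote the two marginal weak convergences to joint weak convergence with independent components. For any $f, g \in C_b((D[0,1], \|\cdot\|_\infty))$, the assumed exact independence $X_{k(n)}^{\theta_{k(n)}} \ind Y_{k(n)}^{\theta_{k(n)}}$ gives
$$
    \ex\bigl[f(X_{k(n)}^{\theta_{k(n)}})\, g(Y_{k(n)}^{\theta_{k(n)}})\bigr]
    = \ex\bigl[f(X_{k(n)}^{\theta_{k(n)}})\bigr]\cdot \ex\bigl[g(Y_{k(n)}^{\theta_{k(n)}})\bigr]
    \to \ex[f(X^\infty)]\cdot \ex[g(Y^\infty)].
$$
Marginal tightness implies joint tightness of $(X_{k(n)}^{\theta_{k(n)}}, Y_{k(n)}^{\theta_{k(n)}})$, and any subsequential joint limit has marginals equal to the laws of $X^\infty$ and $Y^\infty$. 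Because product continuous functions separate Borel measures on the Polish product space $C[0,1] \times C[0,1]$, the displayed identity pins the joint limit down to the product measure $X^\infty(\mathbb{P}) \otimes Y^\infty(\mathbb{P})$. The full sequence then converges jointly to this product, and the classical continuous mapping theorem applied to the (Lipschitz, hence continuous) addition map $(x,y) \mapsto x+y$ on $(D[0,1], \|\cdot\|_\infty)$ delivers
$$
    X_{k(n)}^{\theta_{k(n)}} + Y_{k(n)}^{\theta_{k(n)}} \xrightarrow{\mathcal{D}} X^\infty(\mathbb{P}) * Y^\infty(\mathbb{P}).
$$

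Finally, the very same argument applied to independent pairs with marginals $X^{\theta_{k(n)}}, Y^{\theta_{k(n)}}$ — whose sum by definition has the distribution of $Z^{\theta_{k(n)}}$ — gives $Z^{\theta_{k(n)}} \xrightarrow{\mathcal{D}} X^\infty(\mathbb{P}) * Y^\infty(\mathbb{P})$. A last use of the triangle inequality for $d_{BL}$ yields the required convergence to zero, completing the subsequence criterion. The main obstacle is precisely this step of converting marginal weak convergence plus exact finite-sample independence into joint weak convergence toward a product measure within the non-separable space $(D[0,1], \|\cdot\|_\infty)$; this is resolved by pushing all limiting arguments into the separable Polish subspace $C[0,1]$, where product continuous functions form a determining class.
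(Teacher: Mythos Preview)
Your proof is correct and follows essentially the same route as the paper's: extract a subsequence via Prokhorov using uniform tightness of the limit families, upgrade to the uniform topology via Proposition~\ref{prop:convergencetoCont2}, pass to joint convergence toward a product measure, apply continuous mapping for addition, and close with the triangle inequality and Proposition~\ref{prop:seqUni}. The only cosmetic difference is that the paper invokes Billingsley's Theorem~2.8(ii) directly for the product-measure convergence, whereas you spell out that step by hand (factorization of $\ex[f\cdot g]$, marginal-to-joint tightness, and separation of measures on the Polish space $C[0,1]\times C[0,1]$); your explicit remark that the non-separability of $(D[0,1],\|\cdot\|_\infty)$ is circumvented by the limits living in $C[0,1]$ is exactly the point the paper leaves implicit.
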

\begin{proof}
    We may assume without loss of generality that $X^\theta \ind Y^\theta$ and that $Z^\theta = X^\theta + Y^\theta$. Let $(\theta_n)\subseteq\Theta$ be an arbitrary sequence. By tightness of $(X^\theta)_{\theta \in \Theta}$ and $(Y^\theta)_{\theta \in \Theta}$, we can apply Prokhorov's theorem twice to obtain probability measures $\mu$ and $\nu$ on $C[0,1]$, and a subsequence $(\theta_{k(n)})$, such that 
    $X^{\theta_{k(n)}}(\mathbb{P}) \xrightarrow{wk} \mu$
    and 
    $Y^{\theta_{k(n)}}(\mathbb{P}) \xrightarrow{wk} \nu$.
    Hence the product measures converge,
    $$
    X^{\theta_{k(n)}}(\mathbb{P}) \otimes Y^{\theta_{k(n)}}(\mathbb{P}) \xrightarrow{wk} \mu \otimes \nu,
    $$
    in $C[0,1]$ as $n\to \infty$, see, for example, Theorem 2.8 (ii) in \citet{billingsley2013convergence}. 

    Since $X_n^\theta \convUD X^\theta$ in $D[0,1]$ by assumption and $(X^\theta)$ is uniformly tight in $C[0,1]$, Proposition \ref{prop:convergencetoCont2} implies that the convergence also holds in $(D[0,1],\|\cdot\|_\infty)$.
    The triangle inequality now yields
    \begin{align*}
        d_{BL}(X_{k(n)}^{\theta_{k(n)}}, \mu) 
            \leq d_{BL}(X_{k(n)}^{\theta_{k(n)}}, X^{\theta_{k(n)}}) 
            + d_{BL}( X^{\theta_{k(n)}}, \mu) \to 0,
    \end{align*}
    so also $X_{k(n)}^{\theta_{k(n)}}(\mathbb{P}) \xrightarrow{wk} \mu$ in $(D[0,1],\|\cdot\|_\infty)$. An analogous computation shows that $Y_{k(n)}^{\theta_{k(n)}}(\mathbb{P}) \xrightarrow{wk} \nu$, and hence also
    \begin{align*}
        X_{k(n)}^{\theta_{k(n)}}(\mathbb{P}) \otimes Y_{k(n)}^{\theta_{k(n)}}(\mathbb{P}) 
        \xrightarrow{wk} \mu \otimes \nu
    \end{align*}
    in the product space $D[0,1]\times D[0,1]$ endowed with the uniform product topology. From the independence statements 
    $X^\theta \ind Y^\theta$ and 
    $X_n^\theta \ind Y_n^\theta$, we have thus shown that
    \begin{equation*}
        (X^{\theta_{k(n)}},Y^{\theta_{k(n)}}) \xrightarrow{\mathcal{D}} \mu \otimes \nu
        \quad \text{and} \quad
       (X_{k(n)}^{\theta_{k(n)}},Y_{k(n)}^{\theta_{k(n)}})
       \xrightarrow{\mathcal{D}} \mu \otimes \nu
    \end{equation*}
    in the uniform product topology. Since addition $+\colon D[0,1]\times D[0,1] \to D[0,1]$ is continuous with respect to this topology, we conclude by the classical continuous mapping theorem that
    \begin{align*}
        Z^{\theta_{k(n)}} = X^{\theta_{k(n)}}+Y^{\theta_{k(n)}} 
            \xrightarrow{\mathcal{D}}
            \mu * \nu
        \quad \text{and} \quad
        X_{k(n)}^{\theta_{k(n)}}+Y_{k(n)}^{\theta_{k(n)}}
            \xrightarrow{\mathcal{D}} \mu * \nu.
    \end{align*}
    It now follows that
    \begin{align*}
        d_{BL}&(X_{k(n)}^{\theta_{k(n)}}+Y_{k(n)}^{\theta_{k(n)}},Z^{\theta_{k(n)}}) \\
        &\leq 
        d_{BL}(X_{k(n)}^{\theta_{k(n)}}+Y_{k(n)}^{\theta_{k(n)}}, \mu * \nu)
        +
        d_{BL}(\mu * \nu, Z^{\theta_{k(n)}}) \to 0.
    \end{align*}
    Since $(\theta_{k(n)})$ is a subsequence of the arbitrarily chosen sequence $(\theta_n)$, we conclude that $X_n^\theta + Y_n^\theta \convUD Z^\theta $ in $(D[0,1],\|\cdot\|_\infty)$ by Proposition \ref{prop:seqUni}.
\end{proof}

We also need the following lemma, which is a generalization of the classical result: pointwise convergence of a sequence of monotone functions towards a continuous limit is in fact uniform over compact intervals. 

\begin{lem} \label{lem:convergenceofincreasing}
    Let $(X_n^\theta)_{n\in \mathbb{N},\theta \in \Theta}$ be a collection of $D[0,1]$-valued random variables with non-decreasing sample paths. 
    Let $(f^\theta)_{\theta \in \Theta}\subset C[0,1]$ be a uniformly equicontinuous collection of non-decreasing functions. If $X_n^\theta(t)\convUP f^\theta(t)$ for each $t\in [0,1]$, then it also holds that 
    \begin{align*}
        \sup_{t\in [0,1]}|X_n^\theta(t)-f^\theta(t)| \convUP 0.
    \end{align*}
\end{lem}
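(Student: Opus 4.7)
The plan is to mimic the classical Pólya-type argument that pointwise convergence of monotone functions to a continuous limit upgrades to uniform convergence, but to do it with enough care that the finite partition used in the argument can be chosen independently of $\theta$. The uniform equicontinuity of $(f^\theta)_{\theta \in \Theta}$ is exactly the ingredient that allows this.

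Fix $\epsilon > 0$. By uniform equicontinuity, choose $\delta > 0$ such that $|s-t| < \delta$ implies $|f^\theta(s) - f^\theta(t)| < \epsilon/3$ for every $\theta \in \Theta$ simultaneously. Pick a single partition $0 = t_0 < t_1 < \cdots < t_K = 1$ with mesh smaller than $\delta$; note $K$ depends only on $\epsilon$ and not on $\theta$. For any $t \in [t_{i-1}, t_i]$, monotonicity of $X_n^\theta$ and $f^\theta$ give the sandwich
\begin{align*}
X_n^\theta(t) - f^\theta(t) &\leq X_n^\theta(t_i) - f^\theta(t_{i-1}) \leq (X_n^\theta(t_i) - f^\theta(t_i)) + \epsilon/3,\\
X_n^\theta(t) - f^\theta(t) &\geq X_n^\theta(t_{i-1}) - f^\theta(t_i) \geq (X_n^\theta(t_{i-1}) - f^\theta(t_{i-1})) - \epsilon/3.
\end{align*}
Taking suprema yields the deterministic bound
\[
\sup_{t \in [0,1]} |X_n^\theta(t) - f^\theta(t)| \leq \max_{0 \leq i \leq K} |X_n^\theta(t_i) - f^\theta(t_i)| + \epsilon/3,
\]
valid for every $\theta$ and every $n$.

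From this bound a union bound gives
\[
\sup_{\theta \in \Theta} \mathbb{P}\!\left(\sup_{t \in [0,1]} |X_n^\theta(t) - f^\theta(t)| > \epsilon \right) \leq \sum_{i=0}^{K} \sup_{\theta \in \Theta} \mathbb{P}\!\left(|X_n^\theta(t_i) - f^\theta(t_i)| > \epsilon/3 \right).
\]
The pointwise assumption $X_n^\theta(t_i) \convUP f^\theta(t_i)$ for each of the finitely many $t_i$ makes each of the $K+1$ terms on the right converge to zero as $n \to \infty$. Letting $n \to \infty$ and then $\epsilon$ arbitrary closes the proof.

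The only nontrivial point is the step that lets the partition be chosen uniformly in $\theta$; this is precisely where uniform equicontinuity is used (rather than equicontinuity of each $f^\theta$ separately). The rest is the standard monotone sandwich combined with a finite union bound, so no further structural obstacles are anticipated.
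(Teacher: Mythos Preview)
Your proof is correct and follows essentially the same Pólya-type argument as the paper: use uniform equicontinuity to pick a single $\theta$-independent partition, sandwich via monotonicity, and reduce to finitely many grid points. The only cosmetic differences are that the paper uses $\epsilon/2$ instead of $\epsilon/3$ and phrases the final step as $\max_i |X_n^\theta(t_i)-f^\theta(t_i)| \convUP 0$ rather than writing out the union bound explicitly.
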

\begin{proof}
    Let $\epsilon>0$. By uniform equicontinuity we can find 
    $0=t_1<\cdots < t_k = 1$ such that $f^\theta(t_i)-f^\theta(t_{i-1})<\epsilon/2$ for all $\theta$ and $i$. Using that $X_n^\theta$ and $f^\theta$ are non-decreasing, we observe that for $t_{i-1}\leq t \leq t_{i}$:
    \begin{align*}
        X_n^\theta(t)-f^\theta(t) &\leq X_n^\theta(t_i)-f^\theta(t_i) + \epsilon/2, \\
        X_n^\theta(t)-f^\theta(t) &\geq X_n^\theta(t_{i-1})-f^\theta(t_{i-1}) - \epsilon/2.
    \end{align*}
    Combining the inequalities over the entire grid we have 
    \[
            \sup_{t\in [0,1]}|X_n^\theta(t)-f^\theta(t)| \leq \max_{i=0,\ldots,k} |X_n^\theta(t_i)-f^\theta(t_i)| + \epsilon/2.
    \]
    By assumption, $X_n^\theta(t)\convUP f^\theta(t)$ for each $t$, and in particular
    \[
        \max_{i=0,\ldots,k} |X_n^\theta(t_i)-f^\theta(t_i)| \convUP 0
    \]
    as $n\to \infty$. We therefore conclude that
    \begin{align*}
        \sup_{\theta \in \Theta}
        \mathbb{P}
        \Big(\sup_{t\in [0,1]}|X_n^\theta(t)-f^\theta(t)| > \epsilon\Big) 
        \leq 
        \sup_{\theta \in \Theta}
        \mathbb{P}
        \Big(\max_{i=0,\ldots,k} |X_n^\theta(t_i)-f^\theta(t_i)| > \epsilon/2\Big) 
                \longrightarrow 0
    \end{align*}
    as $n\to \infty$.
\end{proof}

The last auxiliary result of this section is an example of Prokhorov's method of ``tightness + identification of limit''.

\begin{lem}\label{lem:ProkhorovsPrinciple}
    Let $(\mathbb{D},d_\mathbb{D})$ be either $(C[0,1],\|\cdot\|_\infty)$ or $(D[0,1],d^\circ)$, and let $(X^\theta, X_n^\theta)_{n\in \mathbb{N},\theta \in \Theta}$ be a collection of $\mathbb{D}$-valued random variables with $(X^\theta)_{\theta \in \Theta}$ separable. Suppose that
    \begin{itemize}
        \item The finite dimensional marginals converge uniformly:        for any $0\leq t_1< \cdots < t_k\leq 1$
            \begin{align*}
                \pi_{t_1,\ldots, t_k}(X_n^\theta) \convUD 
                \pi_{t_1,\ldots, t_k}(X^\theta), 
                    \qquad n\to \infty,
            \end{align*}
            where $\pi_{t_1,\ldots, t_k} \colon \mathbb{D} \to \mathbb{R}^k$ is the projection given by $\pi_{t_1,\ldots, t_k}(x) = (x(t_1), \ldots, x(t_k))$.
        \item $(X_n^\theta)_{n\in \mathbb{N},\theta \in \Theta}$ is sequentially tight.
        \item $(X^\theta)_{n\in \mathbb{N},\theta \in \Theta}$ is uniformly tight.
    \end{itemize}
    Then $X_n^\theta \convUD X^\theta$ as $n\to \infty$.
\end{lem}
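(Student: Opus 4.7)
The plan is to verify the sequential characterization of uniform convergence in Proposition \ref{prop:seqUni}(c). That is, given an arbitrary sequence $(\theta_n)_{n\in\mathbb{N}} \subseteq \Theta$, I must produce a subsequence $(\theta_{k(n)})$ along which $d_{BL}(X_{k(n)}^{\theta_{k(n)}}, X^{\theta_{k(n)}}) \to 0$. The overall strategy is a standard ``tightness plus identification of the limit'' argument, executed twice to extract the subsequence, and then matched on finite-dimensional marginals to identify the limits.

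First, I would extract subsequential weak limits. Since $(\mathbb{D}, d_\mathbb{D})$ is Polish in both cases under consideration, Prokhorov's theorem applies. By uniform tightness of $(X^{\theta})_{\theta \in \Theta}$, there exists a subsequence $(\theta_{a(n)}) \subseteq (\theta_n)$ and a Borel probability measure $\mu$ on $\mathbb{D}$ such that $X^{\theta_{a(n)}}(\mathbb{P}) \xrightarrow{wk} \mu$. Applying sequential tightness of $(X_n^{\theta})$ to the sequence $(\theta_{a(n)})$, Prokhorov again yields a further subsequence $(\theta_{b(n)}) \subseteq (\theta_{a(n)})$ and a Borel probability measure $\nu$ on $\mathbb{D}$ such that $X_{b(n)}^{\theta_{b(n)}}(\mathbb{P}) \xrightarrow{wk} \nu$.

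Next, I would identify $\mu = \nu$ using the uniform convergence of finite-dimensional marginals. Fix $0 \leq t_1 < \cdots < t_k \leq 1$; in the case $\mathbb{D} = D[0,1]$, restrict to the co-countable set $T^\star$ of times $t$ that are neither $0$, $1$, nor a fixed discontinuity of $\mu$ or $\nu$, so that $\pi_{t_1,\ldots,t_k}$ is continuous $\mu$- and $\nu$-a.s. (in the case $\mathbb{D} = C[0,1]$ this is automatic and $T^\star = [0,1]$). The classical continuous mapping theorem applied to the two subsequential limits gives $\pi_{t_1,\ldots,t_k}(X^{\theta_{b(n)}}) \xrightarrow{wk} \pi_{t_1,\ldots,t_k}(\mu)$ and $\pi_{t_1,\ldots,t_k}(X_{b(n)}^{\theta_{b(n)}}) \xrightarrow{wk} \pi_{t_1,\ldots,t_k}(\nu)$, which via Proposition \ref{prop:boundedLipconv} are equivalent to convergence in $d_{BL}$ on $\mathbb{R}^k$. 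The hypothesis $\pi_{t_1,\ldots,t_k}(X_n^\theta) \convUD \pi_{t_1,\ldots,t_k}(X^\theta)$ forces $d_{BL}(\pi_{t_1,\ldots,t_k}(X_{b(n)}^{\theta_{b(n)}}), \pi_{t_1,\ldots,t_k}(X^{\theta_{b(n)}})) \to 0$, and a triangle-inequality comparison in $\mathcal{M}_1(\mathbb{R}^k)$ then yields $\pi_{t_1,\ldots,t_k}(\mu) = \pi_{t_1,\ldots,t_k}(\nu)$. Since $T^\star$ is dense in $[0,1]$ and contains all but countably many points, this agreement of finite-dimensional distributions on a determining subclass implies $\mu = \nu$ (standard on $C[0,1]$ by separability of the Borel $\sigma$-algebra from coordinate projections, and on $D[0,1]$ by Theorem 14.5 of Billingsley or the equivalent result for $D[0,1]$).

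Finally, with $\mu = \nu$, the triangle inequality in the bounded Lipschitz metric gives
\begin{equation*}
    d_{BL}\!\left(X_{b(n)}^{\theta_{b(n)}}, X^{\theta_{b(n)}}\right)
    \leq d_{BL}\!\left(X_{b(n)}^{\theta_{b(n)}}, \mu\right)
    + d_{BL}\!\left(\mu, X^{\theta_{b(n)}}\right) \longrightarrow 0,
\end{equation*}
so taking $k = b$ furnishes the subsequence required by Proposition \ref{prop:seqUni}(c). The hardest step is the identification $\mu = \nu$ in the Skorokhod case: projections $\pi_t$ are discontinuous on $D[0,1]$ precisely at paths jumping at $t$, so the argument must restrict to $t_i$ avoiding the fixed discontinuities of $\mu$ and $\nu$, and then appeal to the fact that the resulting dense family of finite-dimensional marginals still determines a law on $(D[0,1], d^\circ)$. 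Everything else is routine manipulation of the bounded Lipschitz metric and Prokhorov's theorem.
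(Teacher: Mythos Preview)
Your proof is correct and follows essentially the same approach as the paper: the paper merely cites Proposition~18 of Lundborg et al.\ and notes that finite-dimensional marginals form a separating class on $C[0,1]$ and $D[0,1]$ (Billingsley, Example~1.3 and Theorem~12.5(iii)), whereas you spell out the underlying Prokhorov argument in full, including the care needed for continuity of the projections in the Skorokhod case. The structure---extract subsequential limits via tightness, identify them through finite-dimensional marginals, conclude via the triangle inequality for $d_{BL}$ and Proposition~\ref{prop:seqUni}(c)---is exactly the intended one.
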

\begin{proof}
    The statement is analogous to Proposition 18 in \citet{lundborg2021conditional}, the difference being that the functionals $\langle \cdot, h\rangle$ in 
    \cite{lundborg2021conditional} have been replaced by the functionals $\pi_{t_1,\ldots,t_k}$.
    
    The proof of \citet{lundborg2021conditional} also works in our case, given that the finite dimensional marginals form a separating class for the both the Borel algebra on $C[0,1]$ and the Borel algebra on $D[0,1]$. This is established in \citet{billingsley2013convergence}, Example 1.3 and Theorem 12.5 (iii).
\end{proof}

\subsection{Chaining in time uniformly over a parameter} \label{app:UniformChaining}
We extend the basic chaining arguments to hold uniformly over $\Theta$. Our arguments closely follow those of \citet[Chapter VII.2.]{Pollard:1984} and \citet{Newey:1991}. The results are formulated for processes indexed over a general metric space $T$, but we will only apply the results in the case $T=[0,1]$. We have the following extension of stochastic equicontinuity to the uniform setting.
\begin{definition}\label{dfn:UniformChaining}
    A collection of sequences
    $$
        \big(Z^{(n),\theta}\big)_{n\in \mathbb{N},\theta \in \Theta}
        =
        \Big(Z_t^{(n),\theta}\Big)_{t\in T, n\in \mathbb{N},\theta \in \Theta}
    $$
    of stochastic processes indexed over a metric space $(T,d)$ is called \emph{stochastically equicontinuous uniformly over $\Theta$} if for all $\epsilon, \eta > 0$ there exists $\delta >0$ such that
    \begin{align*}
        \limsup_{n\to \infty} \sup_{\theta \in \Theta}\mathbb{P}\Big(\sup_{s,t\in T \colon d(s,t)\leq \delta} \big|Z_s^{(n),\theta}-Z_t^{(n),\theta}\big| 
                > \epsilon\Big)
            < \eta.
    \end{align*}
\end{definition}
In Section 2.8.2 of \citet{Vaart:1996}, the same definition is given in the context of empirical processes.
Recall that we write, e.g., $Z^{(n)}$ as a shorthand for $Z^{(n),\theta}$ and let the dependency on $\theta$ be implicit for notational ease. We also write $\sup_{d(s,t)\leq \delta}$ as a shorthand for $\sup_{s,t\in T \colon d(s,t)\leq \delta}$.
Definition \ref{dfn:UniformChaining} is a direct extension of pointwise stochastic equicontinuity.
Accordingly, Theorem 2.1 from \citet{Newey:1991} generalizes as follows:
\begin{lem}\label{lem:uniformequcont}
    Let $(Z_t^{(n)})_{t\in T, n\in \mathbb{N}}$ be a sequence of stochastic processes indexed by a compact metric space $T$. Assume that $(Z_t^{(n)})$ is stochastically equicontinuous uniformly over $\Theta$ and that for each $t\in T$ it holds that $Z_t^{(n)} \convUP 0$. Then $\sup_{t\in T} |Z_t^{(n)}|\convUP 0$ as $n\to \infty$.
\end{lem}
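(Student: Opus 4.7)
The plan is to adapt the classical argument (Newey's Theorem 2.1) that reduces the supremum to a finite maximum via a $\delta$-net, and then verify that each of the two resulting controls is uniform over $\Theta$. The two ingredients we have are tailor-made for this: uniform stochastic equicontinuity gives a $\delta$ that works uniformly in $\theta$ (in the $\limsup$ sense), and the pointwise hypothesis $Z_t^{(n)} \convUP 0$ gives uniform-in-$\theta$ control at each fixed $t \in T$.

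First I would fix $\epsilon, \eta > 0$ and invoke the uniform stochastic equicontinuity from Definition~\ref{dfn:UniformChaining} to choose $\delta > 0$ such that
\begin{equation*}
    \limsup_{n\to\infty} \sup_{\theta\in\Theta} \mathbb{P}\Big(\sup_{d(s,t)\leq \delta}\bigl|Z_s^{(n),\theta}-Z_t^{(n),\theta}\bigr| > \epsilon/2\Big) < \eta/2.
\end{equation*}
Next, using compactness of $T$, I would cover $T$ by finitely many open balls $B(t_1,\delta),\ldots,B(t_N,\delta)$ of radius $\delta$ centered at $t_1,\ldots,t_N \in T$. For any $t\in T$ there is some $t_i$ with $d(t,t_i)\leq \delta$, so the standard triangle-inequality bound gives
\begin{equation*}
    \sup_{t\in T}\bigl|Z_t^{(n),\theta}\bigr|
    \leq \max_{1\leq i\leq N}\bigl|Z_{t_i}^{(n),\theta}\bigr| + \sup_{d(s,t)\leq \delta}\bigl|Z_s^{(n),\theta}-Z_t^{(n),\theta}\bigr|.
\end{equation*}

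The key step is then to control the two pieces uniformly over $\Theta$. Splitting the event $\{\sup_t |Z_t^{(n),\theta}|>\epsilon\}$ according to which summand exceeds $\epsilon/2$, I would bound
\begin{align*}
    \sup_{\theta\in\Theta}\mathbb{P}\Big(\sup_{t\in T}\bigl|Z_t^{(n),\theta}\bigr|>\epsilon\Big)
    &\leq \sup_{\theta\in\Theta}\sum_{i=1}^N \mathbb{P}\bigl(\bigl|Z_{t_i}^{(n),\theta}\bigr|>\epsilon/(2N)\bigr) \\
    &\qquad + \sup_{\theta\in\Theta}\mathbb{P}\Big(\sup_{d(s,t)\leq \delta}\bigl|Z_s^{(n),\theta}-Z_t^{(n),\theta}\bigr|>\epsilon/2\Big).
\end{align*}
Taking $\limsup_{n\to\infty}$, the first sum is a finite sum of terms each of which tends to $0$ by the hypothesis $Z_{t_i}^{(n)} \convUP 0$, and the second is bounded by $\eta/2$ by the choice of $\delta$. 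Since $\eta$ was arbitrary, this establishes $\sup_{t\in T}|Z_t^{(n)}|\convUP 0$.

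There is no real obstacle here beyond keeping the quantifier order correct: the finite cover depends only on $T$ and $\delta$, not on $\theta$, so the number $N$ can be pulled outside the $\sup_\theta$. The only subtlety is to use the $N$-dependent threshold $\epsilon/(2N)$ (or apply a union bound) for the finite maximum so that a crude union bound still yields uniform convergence for each fixed $N$; this is legitimate because $N$ was fixed before taking $n\to\infty$.
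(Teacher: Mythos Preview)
The proposal is correct and follows essentially the same approach as the paper: choose $\delta$ from uniform stochastic equicontinuity, take a finite $\delta$-net of $T$ by compactness, bound $\sup_{t\in T}|Z_t^{(n)}|$ by the finite maximum over net points plus the modulus of continuity, and control the two pieces separately. The only cosmetic difference is that the paper absorbs the finite-maximum step into the single assertion $\max_{t\in T^*}|Z_t^{(n)}|\convUP 0$, whereas you spell out a union bound; your threshold $\epsilon/(2N)$ is more conservative than needed (plain $\epsilon/2$ already suffices after the union bound), but this is harmless.
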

\begin{proof}
    Let $\varepsilon,\eta>0$ be given, and let $\delta>0$ be the corresponding distance obtained from the uniform stochastic equicontinuity of $(Z^{(n)})$. By compactness of $T$ there exists a finite set $T^*\subseteq T$ such that $T = \bigcup_{t\in T^*}B(t,\delta)$. By the triangle inequality we get that
    \begin{align*}
        \sup_{t\in T} |Z_t^{(n)}|
        = \sup_{t\in T^*}\sup_{s\in B(t,\delta)} |Z_s^{(n)}|
        \leq 
            \sup_{t\in T^*}|Z_t^{(n)}| +
            \sup_{t\in T^*}\sup_{s\in B(t,\delta)} |Z_s^{(n)}-Z_t^{(n)}|.
    \end{align*}
    Since $T^*$ is finite, it follows that $\sup_{t\in T^*} |Z_t^{(n)}|\convUP 0$, which combined with the inequality implies that
    \begin{align*}
        &\limsup_{n \to \infty} \sup_{\theta \in \Theta} \mathbb{P}
            \big(\sup_{t\in T} |Z_t^{(n)}| > 2\varepsilon\big) \\
        &\leq 0 + 
        \limsup_{n \to \infty} \sup_{\theta \in \Theta}
        \mathbb{P}\Big(
            \sup_{t\in T^*}\sup_{s\in B(t,\delta)} |Z_t^{(n)}-Z_t^{(n)}|>\varepsilon
        \Big)
        \leq \eta.
    \end{align*}
    As $\varepsilon,\eta>0$ were chosen arbitrarily, we conclude that $\sup_{t\in T} |Z_t^{(n)}| \convUP 0$.
\end{proof}
To establish uniform stochastic equicontinuity we extend the chaining lemma to a uniform setting. To formulate the theorem we first need some classical definitions related to chaining.
\begin{definition}
    Let $T$ be a compact metric space. A subset $T^* \subseteq T$ is called a $\delta$-net if $\bigcup_{t\in T^*} B(t,\delta) = T$. 
    The covering number
    $$
        N(\delta) =N(\delta,T) \coloneqq \min\{\,|T^*| \colon
            T^*\subseteq T, T^* \text{ is a } \delta\text{-net}\,\}
    $$
    is the smallest possible cardinality of a $\delta$-net, which is finite by compactness. The associated covering integral is
    \begin{align*}
        J(\delta) = \int_0^\delta \pa{2 \log(N(\epsilon)/\epsilon)}^{\frac{1}{2}}\mathrm{d}\epsilon,
        \qquad 0\leq \delta \leq 1.
    \end{align*}
\end{definition}

\begin{lem}
    Let $(T,d)$ be a metric space with finite covering integral $J(\cdot)$ and let $(Z_t^\theta)_{t\in T,\theta \in \Theta}$ be a collection of stochastic processes indexed by $T$ with continuous sample paths. Assume there is a uniform constant $\varsigma>0$ such that, for all $s,t\in T$ and $\eta>0$,
    \begin{align*}
        \sup_{\theta \in \Theta} \mathbb{P}\pa{|Z_s^\theta - Z_t^\theta| > \eta \cdot d(s,t)}
            \leq 2 e^{- \frac{\eta^2}{2\varsigma^2}}.
    \end{align*}
    Then, for all $0<\epsilon<1$,
    \begin{align*}
        \sup_{\theta \in \Theta} \mathbb{P}
        \Big(\sup_{d(s,t)\leq \epsilon}|Z_s^\theta - Z_t^\theta| > 26 \varsigma J(\epsilon)\Big)
        \leq 2 \epsilon.
    \end{align*}
\end{lem}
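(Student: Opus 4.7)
The plan is to imitate the classical chaining argument of \citet[Lemma VII.9]{Pollard:1984} verbatim and observe that every quantitative bound that appears in that proof depends only on (i) the sub-Gaussian constant $\varsigma$, and (ii) the metric geometry of $T$ through its covering numbers $N(\delta)$ and the covering integral $J(\delta)$. Both ingredients are uniform over $\theta$ by hypothesis, so the pointwise chaining bound is automatically a uniform bound over $\Theta$. In other words, the strategy is not to prove a new inequality, but to verify that the classical inequality is already $\theta$-free once the assumption is $\theta$-free.

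Concretely, I would proceed as follows. First, fix $\epsilon\in(0,1)$ and choose an integer $k_0\geq 0$ with $2^{-k_0-1}<\epsilon\leq 2^{-k_0}$. For each $k\geq k_0$, pick a minimal $2^{-k}$-net $T_k\subseteq T$ of cardinality $N(2^{-k})$, and for each $t\in T$ let $\pi_k(t)\in T_k$ denote a nearest point, so $d(t,\pi_k(t))\leq 2^{-k}$. Next, for any $s,t\in T$ with $d(s,t)\leq \epsilon$, use continuity of sample paths to write the telescoping chain
\begin{equation*}
Z_t^\theta - Z_s^\theta
= \bigl(Z_{\pi_{k_0}(t)}^\theta - Z_{\pi_{k_0}(s)}^\theta\bigr)
+ \sum_{k\geq k_0}\bigl[(Z_{\pi_{k+1}(t)}^\theta - Z_{\pi_k(t)}^\theta) - (Z_{\pi_{k+1}(s)}^\theta - Z_{\pi_k(s)}^\theta)\bigr].
\end{equation*}
Then define the ``bad event'' at level $k$ as the existence of a pair $(u,v)\in T_{k+1}\times T_k$ with $d(u,v)\leq 3\cdot 2^{-k}$ such that $|Z_u^\theta-Z_v^\theta|$ exceeds a level $\eta_k\cdot d(u,v)$, with $\eta_k$ a calibrated sub-Gaussian deviation constant of order $\varsigma\sqrt{\log(N(2^{-k})/2^{-k})}$. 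Applying the hypothesized tail bound pairwise and taking a union bound over the (finitely many) pairs in $T_{k+1}\times T_k$ yields a probability bound that involves only $\varsigma$, $N(2^{-k})$, and $\eta_k$ — crucially, it does not involve $\theta$. The corresponding contribution to $\sup_{d(s,t)\leq \epsilon}|Z_t^\theta-Z_s^\theta|$ from level $k$ is then controlled by $\eta_k\cdot 2^{-k}$ times a constant, and summing over $k\geq k_0$ produces a bound proportional to $\varsigma J(\epsilon)$ with the constant $26$ arising from the geometric series and the calibration constants; the sum of the union-bound probabilities is at most $2\epsilon$ by the same calibration.

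Because the ``outer'' bound $2\epsilon$ and the threshold $26\varsigma J(\epsilon)$ in the final inequality depend only on $\varsigma$ and on quantities intrinsic to $(T,d)$, the bound obtained at each $\theta$ is identical, and hence taking $\sup_{\theta\in\Theta}$ preserves it. The main technical obstacle is purely bookkeeping: choosing the level-dependent deviation thresholds $\eta_k$ so that (a) the probabilities at level $k$ form a summable series bounded by $2\epsilon$, and (b) the corresponding increments at level $k$ form a summable series bounded by $26\varsigma J(\epsilon)$. This is standard for the classical lemma, and the only thing requiring care in the uniform version is the observation that the union bound over finitely many net pairs at each level only uses the hypothesis through $\sup_{\theta}\mathbb{P}(\cdot)$, so uniformity is maintained throughout. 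A small additional subtlety is that sample-path continuity is needed to reduce $\sup_{d(s,t)\leq \epsilon}$ (over all $T$) to a limit of suprema over the dense countable set $\bigcup_k T_k$; this holds by assumption and is the same reduction used in the pointwise proof.
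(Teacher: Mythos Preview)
Your proposal is correct and takes essentially the same approach as the paper: apply the classical chaining lemma of \citet[p.~144]{Pollard:1984} to each fixed $\theta$ and observe that the resulting bound $2\epsilon$ and threshold $26\varsigma J(\epsilon)$ depend only on $\varsigma$ and the metric geometry of $T$, hence are $\theta$-free. The paper simply cites Pollard's lemma as a black box rather than re-sketching the chaining construction, so your write-up is more detailed than necessary but the logic is identical.
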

\begin{proof}
    The lemma is a direct consequence of classical chaining lemma \citep[page 144]{Pollard:1984}. For each $\theta \in \Theta$, the conditions of the chaining lemma are met for $(Z_t^\theta)_{t\in T}$ with sub-exponential factor $\varsigma$. This implies, in particular, that for any $\theta \in \Theta$ and $0<\epsilon<1$,
    \begin{align*}
        \mathbb{P}
        \Big(\sup_{d(s,t)\leq \epsilon}|Z_s^\theta - Z_t^\theta| > 26 \varsigma J(\epsilon)\Big)
        \leq 2 \epsilon,
    \end{align*}
    which is equivalent to the conclusion of the lemma.
\end{proof}

This immediately implies the following corollary.
\begin{cor}\label{cor:uniformchaining}
    Let $(T,d)$ be a metric space with finite covering integral $J(\cdot)$ and let $(Z^{(n),\theta})$ be a sequence of stochastic processes on $T$ with continuous sample paths. Assume there exists a constant $\varsigma>0$ such that, for all $s,t\in T$ and $\eta>0$ and $n\in \mathbb{N}$,
    \begin{align*}
        \sup_{\theta \in \Theta} \mathbb{P} \pa{|Z_s^{(n),\theta} - Z_t^{(n),\theta}| > \eta \cdot d(s,t)}
            \leq 2 e^{- \frac{\eta^2}{2\varsigma^2}}.
    \end{align*}
    Then $(Z^{(n)})$ is stochastically equicontinuous uniformly over $\Theta$.
\end{cor}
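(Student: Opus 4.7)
The plan is to derive the uniform stochastic equicontinuity of $(Z^{(n)})$ directly from the bound supplied by the preceding lemma. Let $\varepsilon,\eta>0$ be given, and recall that $J(\delta)\to 0$ as $\delta \to 0$, since $J$ is the integral of a function that is locally integrable near zero and $J$ is finite by assumption. Choose $\delta\in (0,1)$ small enough that both $26 \varsigma J(\delta) \leq \varepsilon$ and $2\delta \leq \eta$.

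Applying the preceding lemma pointwise in $n$ to the sequence $(Z^{(n),\theta})_{\theta\in\Theta}$ (whose increments satisfy the same uniform sub-Gaussian tail bound with the common factor $\varsigma$ by the hypothesis of the corollary), we obtain for every $n\in \mathbb{N}$ that
\begin{align*}
    \sup_{\theta \in \Theta} \mathbb{P}\Big(
        \sup_{d(s,t)\leq \delta}
            \big|Z_s^{(n),\theta} - Z_t^{(n),\theta}\big| > 26 \varsigma J(\delta)
    \Big) \leq 2\delta.
\end{align*}
By the choice of $\delta$, the event $\{\sup_{d(s,t)\leq \delta}|Z_s^{(n),\theta}-Z_t^{(n),\theta}|>\varepsilon\}$ is contained in the event displayed above, so
\begin{align*}
    \limsup_{n\to \infty}\sup_{\theta \in \Theta} \mathbb{P}\Big(
        \sup_{d(s,t)\leq \delta} \big|Z_s^{(n),\theta} - Z_t^{(n),\theta}\big| > \varepsilon
    \Big) \leq 2\delta \leq \eta.
\end{align*}
Since $\varepsilon,\eta>0$ were arbitrary, this is precisely the definition (Definition~\ref{dfn:UniformChaining}) of stochastic equicontinuity uniformly over $\Theta$.

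There is no real obstacle here: the only thing to verify is that the conclusion of the previous lemma, which is stated for a fixed family $(Z^\theta)_{\theta\in\Theta}$, can be applied to the doubly-indexed family $(Z^{(n),\theta})_{n,\theta}$ at each $n$. This is immediate because the tail bound in the corollary's hypothesis is assumed to hold uniformly in $n$ with the same constant $\varsigma$, so the hypothesis of the lemma is satisfied by $(Z^{(n),\theta})_{\theta\in\Theta}$ for every fixed $n$ with the same $\varsigma$, and the resulting probability bound involves only $\delta$ and $J(\delta)$, not $n$.
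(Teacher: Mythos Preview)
Your argument is correct and is exactly the unpacking of what the paper means by ``This immediately implies the following corollary'': you apply the preceding lemma at each fixed $n$ with the common $\varsigma$, use that the resulting bound depends only on $\delta$ and $J(\delta)$, and then choose $\delta$ so that $26\varsigma J(\delta)\le\varepsilon$ and $2\delta\le\eta$. There is nothing to add.
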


For stochastic processes with continuous sample paths, stochastic equicontinuity turns out to be equivalent to sequential tightness (Definition \ref{dfn:tight} ii)).
\begin{prop}\label{prop:equicontinuityistightness}
    Let $(Z^{(n),\theta})_{n\in \mathbb{N},\theta\in\Theta}$ be a collection of $C[0,1]$-valued random variables such that $\mathbb{P}(Z_0^{(n),\theta}=0)=1$ all $n\in \mathbb{N}$ and $\theta \in \Theta$.
    The following are equivalent:
    \begin{enumerate}
        \item $(Z^{(n),\theta})$ is stochastically equicontinuous uniformly over $\Theta$.
        
        \item $(Z^{(n),\theta})$ is sequentially tight.
    \end{enumerate}
\end{prop}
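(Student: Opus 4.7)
The plan is to use the classical characterization of tightness in $C[0,1]$ (Theorem 7.3 in \citet{billingsley2013convergence}), which states that a family of $C[0,1]$-valued laws is tight if and only if the one-dimensional marginal at a fixed point (here $t=0$) is tight and the family is stochastically equicontinuous in the sense that for every $\epsilon>0$,
\begin{align*}
    \lim_{\delta \to 0^+} \sup_{\mu \in \mathcal{F}} \mu\Big(\sup_{|s-t|\leq \delta}|x_s-x_t|>\epsilon\Big)=0.
\end{align*}
Because $\mathbb{P}(Z_0^{(n),\theta}=0)=1$ by hypothesis, marginal tightness at $t=0$ is automatic throughout, so for each sequence $(\theta_n)\subseteq \Theta$, tightness of $(Z^{(n),\theta_n})_{n\in\mathbb{N}}$ in $C[0,1]$ is equivalent to the stochastic equicontinuity of the diagonal sequence. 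This reduces the proposition to the following equivalence: uniform stochastic equicontinuity over $\Theta$ in the sense of Definition~\ref{dfn:UniformChaining} holds if and only if every diagonal sequence $(Z^{(n),\theta_n})_n$ is itself stochastically equicontinuous.

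For the direction $(1) \Rightarrow (2)$, assume uniform stochastic equicontinuity and fix an arbitrary sequence $(\theta_n)\subseteq \Theta$. For any $\epsilon,\eta>0$ choose $\delta>0$ as in Definition~\ref{dfn:UniformChaining}, and estimate
\begin{align*}
    \limsup_{n\to \infty} \mathbb{P}\Big(\sup_{|s-t|\leq \delta} |Z_s^{(n),\theta_n}-Z_t^{(n),\theta_n}|>\epsilon \Big)
    \leq \limsup_{n\to \infty}\sup_{\theta \in \Theta}\mathbb{P}\Big(\sup_{|s-t|\leq \delta}|Z_s^{(n),\theta}-Z_t^{(n),\theta}|>\epsilon\Big)<\eta.
\end{align*}
This shows equicontinuity of the diagonal sequence, hence its tightness in $C[0,1]$ by the Billingsley characterization.

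For the direction $(2) \Rightarrow (1)$, I will argue by contraposition. If uniform stochastic equicontinuity fails, there exist $\epsilon,\eta>0$ such that for every $\delta>0$,
\begin{align*}
    \limsup_{n\to \infty}\sup_{\theta\in \Theta} \mathbb{P}\Big(\sup_{|s-t|\leq \delta}|Z_s^{(n),\theta}-Z_t^{(n),\theta}|>\epsilon\Big) \geq \eta.
\end{align*}
Applying this with $\delta_k = 1/k$ and extracting inductively, I can select a strictly increasing sequence $(n_k)$ and parameters $(\vartheta_k)\subseteq \Theta$ with
\begin{align*}
    \mathbb{P}\Big(\sup_{|s-t|\leq 1/k} |Z_s^{(n_k),\vartheta_k}-Z_t^{(n_k),\vartheta_k}|>\epsilon\Big) \geq \eta/2.
\end{align*}
Define a diagonal sequence by $\theta_{n_k}=\vartheta_k$ for each $k$ and $\theta_n$ arbitrary for $n\notin \{n_k\}$. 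Then $(Z^{(n),\theta_n})_n$ fails the equicontinuity condition along the subsequence $(n_k)$, and hence by the Billingsley characterization the diagonal sequence is not tight in $C[0,1]$. This contradicts sequential tightness, completing the proof.

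The only subtle point is the contrapositive step, where the extraction needs $(n_k)$ to be strictly increasing so that $\vartheta_k$ really can be planted into a legitimate $\Theta$-valued sequence $(\theta_n)$; this is handled by working with $\delta_k=1/k$ one after another and choosing each $n_k$ larger than $n_{k-1}$ using the $\limsup$ being $\geq \eta$. The rest reduces to the already-cited classical characterization of tightness in $C[0,1]$.
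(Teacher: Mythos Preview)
Your proof is correct and follows the same overall route as the paper: both directions are reduced to the classical tightness characterization in $C[0,1]$ (Theorem~7.3 in \citet{billingsley2013convergence}), with the marginal condition at $t=0$ being trivial because $Z_0^{(n),\theta}=0$ almost surely. The $(1)\Rightarrow(2)$ direction is identical to the paper's. For $(2)\Rightarrow(1)$, the paper argues directly by choosing, for each $n$, a $\theta_n$ that nearly achieves $\sup_{\theta}\mathbb{P}(\sup_{|s-t|\leq\delta}|Z_s^{(n),\theta}-Z_t^{(n),\theta}|\geq\epsilon)$ and then applying sequential tightness to the resulting diagonal sequence to obtain $\delta$; you instead argue by contraposition, extracting a bad subsequence along $\delta_k=1/k$. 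Your contrapositive version sidesteps a subtle order-of-quantifiers issue in the paper's direct argument---the paper's near-supremizer inequality already involves $\delta$, yet $\delta$ is only produced afterwards from tightness of $(\theta_n)$---so your writeup is in fact slightly more careful here while buying the same conclusion.
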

\begin{proof}
    The equivalence is a straightforward application of Theorem 7.3 in \citet{billingsley2013convergence}. 
    Condition $(i)$ of the aforementioned theorem is satisfied for any sequence of measures from the collection $(Z^{(n),\theta}(\mathbb{P}))_{n\in \mathbb{N},\theta \in \Theta}$, since $Z_0^{(n),\theta}=0$ almost surely for all $n$ and $\theta$.
    For any sequence $(\theta_n)\subseteq \Theta$, stochastic equicontinuity uniformly over $\Theta$ implies condition $(ii)$ of Theorem 7.3 in \citet{billingsley2013convergence} for the measures $((Z^{(n),\theta_n})(\mathbb{P}))$. 
    We therefore conclude that stochastic equicontinuity uniformly over $\Theta$ implies sequential tightness. 
    
    On the contrary, assume that $(Z^{(n),\theta})$ is sequentially tight and let $\epsilon,\eta>0$ be given. For each $n$, choose $\theta_n$ such that 
    \begin{align*}
        \sup_{\theta\in\Theta} \mathbb{P} \Big(\sup_{|s-t|\leq \delta} \big|Z_s^{(n),\theta}-Z_t^{(n),\theta}\big| 
                \geq \epsilon\Big)
        \leq \mathbb{P}\Big(\sup_{|s-t|\leq \delta} \big|Z_s^{(n),\theta_n}-Z_t^{(n),\theta_n}\big| 
                \geq \epsilon\Big) + \frac{1}{n}.
    \end{align*}
    Since $((Z^{(n),\theta_n})(\mathbb{P}))$ is tight by assumption, condition $(ii)$ of Theorem 7.3 asserts that there exists $\delta, N>0$ such that
    $$ 
        \mathbb{P}\Big(\sup_{|s-t|\leq \delta} \big|Z_s^{(n),\theta_n}-Z_t^{(n),\theta_n}\big| 
                \geq \epsilon\Big) < \eta
    $$ 
    for $n\geq N$. Combining both inequalities and taking the limit superior finish the proof.
\end{proof}


\supplementarysection{The Functional Martingale CLT} \label{sec:fclt}
In this section we state Rebolledo's martingale CLT \citep{rebolledo1980central} based on its formulation in \citet{AndersenBorganGillKeiding:1993}, and then we extend the result to a uniform version without fixed variance functions.
The one-dimensional case suffices for our purpose, so for simplicity, every local martingale in the following is a real-valued stochastic process. For a local square integrable martingale $(M_t)$, we let $\langle M \rangle(t)$ denote its quadratic characteristic.
The theorem requires a condition on the jumps of the local martingales, for which we will need the following definition.
\begin{definition}
    Let $M_t$ be a local square integrable $\cF_t$-martingale. For any $\varepsilon>0$, we define $\langle M_\varepsilon \rangle(t)$ to be the quadratic characteristic of the pure jump-process given by 
    $$
        t\mapsto \sum_{0\leq s\leq t} M_s \one(|\Delta M_s|>\varepsilon).
    $$
\end{definition}
We also need a representation of Gaussian martingales, which ensures their continuity.

\begin{prop} \label{prop:BMrepresentation}
    Let $(B_t)_{t\in[0,\infty)}$ be a Brownian motion on $[0,\infty)$ with continuous sample paths. For every non-decreasing $f\in C[0,1]$, the process $(B_{f(t)})_{t\in [0,1]}$ is a continuous mean zero Gaussian martingale on $[0,1]$ with variance function $f$. 

    Consequently, if $U=(U_t)_{t\in [0,1]}$ is a mean zero Gaussian martingale with a continuous variance function $V$, then $U$ has the distributional representation 
    \begin{equation}\label{eq:BMrepresentation}
            (U_t)_{t\in [0,1]} 
            \stackrel{\mathcal{D}}{=}
            (B_{V(t)})_{t\in [0,1]}.
    \end{equation}
\end{prop}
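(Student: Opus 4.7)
For part (i), I would verify the four claims separately. Continuity of $(B_{f(t)})$ is immediate from the continuity of $B$ and $f$. The Gaussian property follows because for any $0 \leq t_1 < \cdots < t_k \leq 1$, the vector $(B_{f(t_1)}, \ldots, B_{f(t_k)})$ is a marginal of the Gaussian process $B$, hence Gaussian; in particular $E[B_{f(t)}] = 0$ and $\var(B_{f(t)}) = f(t)$. For the martingale property, I would introduce the time-changed filtration $\mathcal{F}_t = \sigma(B_u : u \leq f(t))$, which is a bona fide filtration because $f$ is non-decreasing. Since $B$ is a martingale in its natural filtration and $f(s) \leq f(t)$ for $s \leq t$,
\[
    E[B_{f(t)} \mid \mathcal{F}_s] = E[B_{f(t)} \mid \sigma(B_u : u \leq f(s))] = B_{f(s)},
\]
so $(B_{f(t)})$ is an $\mathcal{F}_t$-martingale.

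For part (ii), the key observation is that a centered Gaussian process is determined by its covariance function, so it suffices to compute the covariance of $U$ and match it with that of $B \circ V$. Since $V$ is the variance function of the martingale $U$, $V$ is non-decreasing and $\var(U_t) = V(t)$. For $s \leq t$, the martingale property yields
\[
    \cov(U_s, U_t) = E[U_s U_t] = E[U_s \, E[U_t \mid \mathcal{F}_s^U]] = E[U_s^2] = V(s) = V(s \wedge t).
\]
On the other hand, since $V$ is non-decreasing,
\[
    \cov(B_{V(s)}, B_{V(t)}) = V(s) \wedge V(t) = V(s \wedge t).
\]
Both processes are centered Gaussian with identical covariance functions, so all finite-dimensional marginals coincide.

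Finally, to upgrade the equality of finite-dimensional distributions to equality in distribution on a path space, I would invoke part (i): $B \circ V$ has continuous sample paths, so it provides a continuous modification of $U$, and the two processes agree in law on $C[0,1]$ (equivalently, in $D[0,1]$ equipped with either the uniform or the Skorokhod topology). I expect no substantial obstacle here; the only mildly delicate point is being clear about what $\stackrel{\mathcal{D}}{=}$ means for processes indexed by $[0,1]$ and justifying why matching finite-dimensional distributions suffices given that both sides can be realized with continuous paths.
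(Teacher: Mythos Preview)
Your proposal is correct and follows essentially the same approach as the paper: verify the four properties of $B\circ f$ directly from properties of Brownian motion, then for part (ii) use that a martingale's covariance function is determined by its variance function via $\cov(U_s,U_t)=V(s\wedge t)$, and conclude by the fact that centered Gaussian processes are determined in law by their covariance. You spell out the filtration argument and the covariance computation more explicitly than the paper does, and you are slightly more careful about the path-space interpretation of $\stackrel{\mathcal{D}}{=}$, but the underlying argument is the same.
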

\begin{proof}
    Let $f\in C[0,1]$ be non-decreasing. From the properties of Brownian motion, it follows directly that the time-transformed process $(B_{f(t)})_{t\in [0,1]}$ is a mean zero Gaussian process with variance function $f$. 
    Since $f$ is continuous, each sample path $t\mapsto B_{f(t)}$ is a composition of continuous functions and thus continuous itself. 
    Since $f$ is non-decreasing, the time-transformation also preserves the martingale property. This establishes the first part.
    
    For the second part, recall that the covariance function of a martingale is determined by its variance function.
    Hence the first part implies that the right-hand side in \eqref{eq:BMrepresentation} is a Gaussian process with the same mean and covariance structure as the left-hand side. Since the distribution of a Gaussian processes is uniquely determined by its mean and covariance structure, the equality in distribution follows.
\end{proof}

Proposition \ref{prop:BMrepresentation} is a simple, distributional variant of the 
Dubins-Schwarz theorem, see \cite{revuz2013continuous}, Chapter V, Theorems 1.6 and 1.7. 
The Dubins-Schwarz theorem implies that, in fact,  $U_t = B_{V(t)}$ for $t\in [0,1]$, 
where $B$ is a Brownian motion on $[0,V(1)]$. For the purpose of this work we only need 
the simpler, distributional equality \eqref{eq:BMrepresentation}.

We can now formulate Rebolledo's CLT for local martingales. 
To this end, note that Proposition \ref{prop:BMrepresentation} ensures the existence of the continuous 
Gaussian limit martingale $U$ when the variance function $V$ is continuous. 

\begin{thm}[Rebolledo's CLT]\label{thm:fclt}
    Let $(U^{(n)})_{n\in \mathbb{N}}$ be a sequence a local square integrable martingales in $D[0,1]$, possibly defined on different sample spaces and with different filtrations for each $n\in \mathbb{N}$. Let $U$ be a continuous Gaussian martingale with continuous variance function $V\colon [0,1] \to [0,\infty)$, and assume that $U^{(n)}_0 = U_0 = 0$.
    Suppose that for every $t\in [0,1]$ and $\varepsilon>0$,
    \begin{align*}
        \langle U^{(n)} \rangle(t) \xrightarrow{P} V(t)
        \qquad \text{ and } \qquad 
        \langle U_\varepsilon^{(n)} \rangle(t) \xrightarrow{P} 0,
    \end{align*}
    as $n\to \infty$. 
    Then it holds that $U^{(n)} \xrightarrow{\mathcal{D}} U$
    in $D[0,1]$ as $n\to \infty$. 
\end{thm}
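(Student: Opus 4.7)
The plan is to prove $U^{(n)}\xrightarrow{\mathcal{D}} U$ in $D[0,1]$ via the classical two-step Prokhorov procedure: first establish tightness of the sequence $(U^{(n)})$ in $(D[0,1], d^\circ)$, and second identify the limit via convergence of all finite-dimensional marginals to those of $U$. Because $U$ has continuous sample paths with $U_0=0$ almost surely, convergence in the Skorokhod topology automatically upgrades to convergence in the uniform topology by Proposition \ref{prop:convergencetoCont2}, so it suffices to argue $d^\circ$-convergence.

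For tightness I would invoke Aldous's criterion for $D[0,1]$, which for martingales reduces to: for every $\varepsilon>0$,
\begin{align*}
\lim_{\delta\downarrow 0}\limsup_{n\to \infty} \sup_{\tau\leq \sigma\leq \tau+\delta} \mathbb{P}(|U^{(n)}_\sigma - U^{(n)}_\tau|>\varepsilon) = 0,
\end{align*}
the supremum being over stopping times bounded by $1$, together with stochastic boundedness of $U^{(n)}_1$. Both conditions follow from Lenglart's inequality applied to the square martingale $(U^{(n)})^2 - \langle U^{(n)}\rangle$: the increment $|U^{(n)}_\sigma - U^{(n)}_\tau|^2$ is dominated by $\langle U^{(n)}\rangle(\sigma)-\langle U^{(n)}\rangle(\tau)$ in expectation, and by hypothesis $\langle U^{(n)}\rangle(t)\xrightarrow{P} V(t)$ with $V$ continuous, so these increments become uniformly small as $\delta\downarrow 0$.

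For finite-dimensional convergence I would use the exponential-martingale method. For each $\lambda\in\mathbb{R}$ one considers the complex-valued process
\begin{align*}
\mathcal{E}^{(n)}_t(\lambda) = \exp\!\Big(i\lambda U^{(n)}_t + \tfrac{\lambda^2}{2}\langle U^{(n)}\rangle(t)\Big),
\end{align*}
which is a local martingale up to an explicit remainder determined by the compensated jumps of $U^{(n)}$. The rarefaction condition $\langle U^{(n)}_\varepsilon\rangle(t)\xrightarrow{P} 0$ is precisely what renders this remainder negligible, so that the conditional characteristic function of the increment $U^{(n)}_t - U^{(n)}_s$ converges to $\exp(-\tfrac{\lambda^2}{2}(V(t)-V(s)))$. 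Iterating this over a partition $0\leq t_1 <\cdots < t_k\leq 1$ identifies the joint limit as that of a mean-zero Gaussian process with independent increments of variance $V(t_{i+1})-V(t_i)$, which by Proposition \ref{prop:BMrepresentation} is the law of $U$.

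The main obstacle is the quantitative control of the exponential-martingale remainder in terms of the truncated bracket $\langle U^{(n)}_\varepsilon\rangle$; this analysis is the technical heart of \citet{rebolledo1980central}. Since this theorem is a classical result stated here only as a stepping stone for the novel uniform extension in Theorem \ref{thm:URebo}, an equally valid route is to cite Theorem II.5.1 in \citet{AndersenBorganGillKeiding:1993}, which proves this statement in a form directly applicable to the counting-process setup of the paper.
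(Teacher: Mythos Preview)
Your sketch is a correct outline of the standard proof of Rebolledo's CLT, but the paper does not reprove the result at all: its entire proof is a one-line citation, ``This is a special case of Theorem II.5.2 in \citet{AndersenBorganGillKeiding:1993}.'' So your concluding paragraph already matches the paper's approach exactly (modulo the theorem number: II.5.2, not II.5.1). The tightness-plus-identification argument you outline is the content behind that citation, and is fine as a self-contained account; the paper simply treats this as a known result and reserves its effort for the genuinely new uniform extension in Theorem~\ref{thm:URebo}.
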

\begin{proof}
    This is a special case of Theorem $\mathrm{II}.5.2$ in \citet{AndersenBorganGillKeiding:1993}.
\end{proof}

The general formulation of Rebolledo's CLT above, which allows for $n$-dependent sample spaces and filtrations, can now be leveraged to obtain a uniform version via the sequential characterization of uniform stochastic convergence.

\begin{thm}[Uniform Rebolledo CLT]\label{thm:URebo}
    For each $n \in \mathbb{N}$ and $\theta \in \Theta$:
    \begin{itemize}[leftmargin=15pt]
        \item Let $\mathcal{F}^{(n),\theta}=(\mathcal{F}_t^{(n),\theta})_{t\in [0,1]}$ be a filtration satisfying the usual conditions.
        
        \item Let $U_t^{(n),\theta}$ be a local square integrable $\mathcal{F}_t^{(n),\theta}$-martingale in $D[0,1]$ with $U_0^{(n),\theta}=0$.
        
        \item Let $V^\theta \colon [0,1] \to [0,\infty)$ be a non-decreasing function with $V^\theta(0)=0$.
    \end{itemize}
    Assume that $(V^\theta)_{\theta \in \Theta}$ is uniformly equicontinuous and that $\sup_{\theta \in \Theta} V^\theta(1) < \infty$. 
    Assume further that for every $\varepsilon>0$ and $t\in [0,1]$,
    \begin{align}\label{eq:RebolledoConditions}
        \langle U^{(n),\theta} \rangle(t) \convUP V^\theta(t)
        \qquad \text{ and } \qquad
        \langle U_\varepsilon^{(n),\theta}\rangle(t) \convUP 0,
    \end{align}
    as $n\to \infty$. Then it holds that
    \begin{align*}
        U^{(n),\theta} \convUD U^\theta, \qquad n\to \infty,
    \end{align*}
    in $D[0,1]$ uniformly over $\Theta$, where for each $\theta \in \Theta$, $U^\theta$ is a mean zero continuous Gaussian martingale on $[0,1]$ with variance function $V^\theta$.
\end{thm}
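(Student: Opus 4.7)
The plan is to reduce the uniform statement to the classical Rebolledo CLT (Theorem \ref{thm:fclt}) via the sequential characterization of uniform weak convergence (Proposition \ref{prop:seqUni}), where the ``compactness'' needed to extract useful subsequences is supplied by Arzelà--Ascoli applied to the variance functions $(V^\theta)_{\theta \in \Theta}$. Concretely, I would first note that the collection $(U^\theta)_{\theta \in \Theta}$ is a family of $C[0,1]$-valued random variables, hence automatically separable, so uniform convergence in distribution to $U^\theta$ is well-defined, both in $(D[0,1],d^\circ)$ and in $(D[0,1],\|\cdot\|_\infty)$ (Proposition \ref{prop:convergencetoCont2}). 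By criterion (c) of Proposition \ref{prop:seqUni}, it suffices to show that for every sequence $(\theta_n)\subset \Theta$ there is a subsequence $(\theta_{k(n)})$ with
\begin{equation*}
    d_{BL}\bigl(U^{(k(n)),\theta_{k(n)}},\, U^{\theta_{k(n)}}\bigr) \longrightarrow 0.
\end{equation*}

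The key step is subsequence extraction. Since $(V^\theta)_{\theta\in\Theta}$ is uniformly equicontinuous and uniformly bounded, Arzelà--Ascoli gives a subsequence $(\theta_{k(n)})$ and a continuous, non-decreasing $\tilde V \in C[0,1]$ with $\tilde V(0)=0$ such that $V^{\theta_{k(n)}} \to \tilde V$ uniformly on $[0,1]$. Along this subsequence I would apply the classical Rebolledo CLT to the martingales $U^{(k(n)),\theta_{k(n)}}$. The two hypotheses of Theorem \ref{thm:fclt} are verified as follows: the uniform convergence $\langle U^{(n),\theta}\rangle(t) \convUP V^\theta(t)$ implies, via the sequential characterization of uniform convergence in probability (second part of Proposition \ref{prop:seqUni}), that $\langle U^{(k(n)),\theta_{k(n)}}\rangle(t) - V^{\theta_{k(n)}}(t) \xrightarrow{P} 0$, which combined with $V^{\theta_{k(n)}}(t) \to \tilde V(t)$ yields $\langle U^{(k(n)),\theta_{k(n)}}\rangle(t) \xrightarrow{P} \tilde V(t)$; the rarefaction condition transfers analogously. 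Theorem \ref{thm:fclt} then gives $U^{(k(n)),\theta_{k(n)}} \xrightarrow{\mathcal{D}} \tilde U$ in $D[0,1]$, where $\tilde U$ is the continuous mean zero Gaussian martingale with variance function $\tilde V$ (existence guaranteed by Proposition \ref{prop:BMrepresentation}).

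Next I would argue that the limiting laws $U^{\theta_{k(n)}}$ themselves converge to the same $\tilde U$. Using the Brownian time-change representation \eqref{eq:BMrepresentation}, I can couple all the $U^{\theta_{k(n)}}$ to a single Brownian motion $B$ on $[0, C^*]$ with $C^* = \sup_\theta V^\theta(1)$, giving $U^{\theta_{k(n)}} \stackrel{\mathcal{D}}{=} B \circ V^{\theta_{k(n)}}$ and $\tilde U \stackrel{\mathcal{D}}{=} B \circ \tilde V$. Since $V^{\theta_{k(n)}} \to \tilde V$ uniformly and $B$ has (almost surely uniformly) continuous paths on the compact interval $[0,C^*]$, $B\circ V^{\theta_{k(n)}} \to B\circ \tilde V$ almost surely in $(C[0,1],\|\cdot\|_\infty)$, so in particular $U^{\theta_{k(n)}} \xrightarrow{\mathcal{D}} \tilde U$ in $D[0,1]$.

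Finally, since the limit $\tilde U$ takes values in the separable space $C[0,1]$, Proposition \ref{prop:boundedLipconv} turns both convergences into $d_{BL}(U^{(k(n)),\theta_{k(n)}},\tilde U) \to 0$ and $d_{BL}(U^{\theta_{k(n)}},\tilde U)\to 0$, and the triangle inequality yields $d_{BL}(U^{(k(n)),\theta_{k(n)}}, U^{\theta_{k(n)}}) \to 0$, completing the verification of (c) in Proposition \ref{prop:seqUni}. The main obstacle I anticipate is bookkeeping: the filtrations $\mathcal{F}^{(n),\theta}$ vary with both $n$ and $\theta$, so a pointwise-in-$\theta$ application of Rebolledo would not suffice. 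What makes the reduction work is precisely that Theorem \ref{thm:fclt} is already stated with $n$-dependent filtrations, so passing to the indexing $n \mapsto (k(n),\theta_{k(n)})$ is legitimate. The remaining delicate point is that Rebolledo's theorem requires a single deterministic limit variance, which is why Arzelà--Ascoli is essential: without extracting the subsequence that makes the $V^{\theta_{k(n)}}$ converge, one cannot directly feed the varying variance functions into the classical CLT.
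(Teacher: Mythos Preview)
Your proposal is correct and follows the same overall architecture as the paper's proof: reduce to the sequential criterion of Proposition~\ref{prop:seqUni}~(c), extract a subsequence via Arzelà--Ascoli so that the variance functions converge to some $\tilde V$, apply the classical Rebolledo CLT along that subsequence, separately show $U^{\theta_{k(n)}} \xrightarrow{\mathcal{D}} \tilde U$, and conclude by the triangle inequality for $d_{BL}$.

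The one genuine difference is how you establish $U^{\theta_{k(n)}} \xrightarrow{\mathcal{D}} \tilde U$. The paper invokes Theorem~7.5 in \citet{billingsley2013convergence}, checking convergence of finite-dimensional marginals (immediate from Gaussianity and convergence of variances) and a modulus-of-continuity tightness condition (verified exactly as in Lemma~\ref{lem:Ulimitistight}, via H\"older continuity of Brownian motion and uniform Lipschitz control on the $V^\theta$). Your route is more direct: realize all the $U^{\theta_{k(n)}}$ simultaneously as $B \circ V^{\theta_{k(n)}}$ for a single Brownian motion $B$ on the compact interval $[0,\sup_\theta V^\theta(1)]$, and use almost sure uniform continuity of $B$ together with uniform convergence of the time changes to get almost sure convergence in $(C[0,1],\|\cdot\|_\infty)$, hence convergence in law. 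This coupling argument is a clean shortcut that avoids the explicit tightness verification; the paper's approach, by contrast, has the minor advantage that its tightness calculation is reused elsewhere (Lemma~\ref{lem:Ulimitistight}).
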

\begin{proof}
    We will use the characterization of uniform convergence as stated in Proposition~\ref{prop:seqUni} c). To this end, let $(\theta_n)\subseteq \Theta$ be an arbitrary sequence. By assumption $(V_{\theta_n})_{n\in \mathbb{N}}$ is a uniformly equicontinuous and bounded sequence of functions on a compact interval, so the Arzelà–Ascoli theorem states that there exists a subsequence $\theta_{k(n)}$, with $k\colon \mathbb{N} \to \mathbb{N}$ strictly increasing, and a function $\tilde{V}\in C[0,1]$ such that 
    \[
        \sup_{t\in [0,1]} |V^{\theta_{k(n)}}(t) - \tilde{V}(t)| 
        \longrightarrow 0, \qquad n \to \infty.
    \]
    Since each function $V^{\theta_{k(n)}}$ is non-decreasing, it follows that $\tilde{V}$ is non-decreasing. It also holds that 
    $\tilde{V}(0)=\lim_{n\to \infty}V^{\theta_{k(n)}}(0)=0$, and therefore $\tilde{V}$ is 
    the variance function of a continuous Gaussian martingale $\tilde{U}$ with $\tilde{U}_0 = 0$. 
    
    By assumption of the convergences in \eqref{eq:RebolledoConditions}, we may conclude that 
    \begin{align*}
        |\langle U^{(k(n)),\theta_{k(n)}} \rangle(t) - \tilde{V}(t)|
        \leq
        \underbrace{
            |\langle U^{(k(n)),\theta_{k(n)}} \rangle(t) - V^{\theta_{k(n)}}(t)|
        }_{\xrightarrow{P} 0}
        +
        \underbrace{
        |V^{\theta_{k(n)}}(t) - \tilde{V}(t)|
        }_{\to 0}
        \xrightarrow{P} 0
    \end{align*}
    and that
    \(
        \langle U_\epsilon^{(k(n)),\theta_{k(n)}} \rangle(t) \to 0
    \) as $n\to \infty$. Thus we have established the conditions of the classical Rebolledo CLT -- Theorem \ref{thm:fclt} -- for the sequence $U^{(k(n)),\theta_{k(n)}}$ and the Gaussian martingale $\tilde{U}$ with variance function $\tilde{V}$. We therefore conclude that
    \begin{align*}
        U^{(k(n)),\theta_{k(n)}} 
            \xrightarrow{\mathcal{D}}
        \tilde{U}
    \end{align*}
    in $D[0,1]$ as $n\to \infty$. 

    We now establish that the sequence $(U^{\theta_{k(n)}})$ also converges in distribution to $\tilde{U}$ in $C[0,1]$, and in particular also in $D[0,1]$. To this end, we use the characterization of convergence in distribution in $C[0,1]$ from Theorem 7.5 in \citet{billingsley2013convergence}, which states that we need to show that
    \begin{enumerate}
        \item For all $0\leq t_1 < \cdots < t_m \leq 1$, it holds that
        \[
            (U_{t_1}^{\theta_{k(n)}}, \ldots, U_{t_m}^{\theta_{k(n)}})
                \xrightarrow{\mathcal{D}}
            (\tilde{U}_{t_1},\ldots, \tilde{U}_{t_m}), 
            \qquad n \to \infty.
        \]
        \item For all $\epsilon>0$
        \begin{align*}
            \lim_{\delta\to 0^+} \limsup_{n \to \infty}
            \mathbb{P}\Big(\sup_{|t-s|<\delta}|U_t^{\theta_{k(n)}}
            - U_s^{\theta_{k(n)}}|>\epsilon \Big)=0.
        \end{align*}
    \end{enumerate}
    The first condition is clear since all the marginals are multivariate Gaussian, and the mean and variance of the sequence converges to the mean and variance of the limit distribution. The second condition follows from the same computation as in the proof of Lemma~\ref{lem:Ulimitistight}. By Theorem 7.5 in \citet{billingsley2013convergence} we therefore conclude that 
    \begin{align*}
        U^{\theta_{k(n)}} \xrightarrow{\mathcal{D}} \tilde{U},
        \qquad \text{for} \:\:  n \to \infty,
    \end{align*}
    in $C[0,1]$, and hence also in $D[0,1]$. 

    We can now apply the triangle inequality for the bounded Lipschitz metric to conclude that
    \begin{align*}
        d_{BL}(U^{(k(n)),\theta_{k(n)}}, U^{\theta_{k(n)}} )
        \leq 
        d_{BL}(U^{(k(n)),\theta_{k(n)}}, \tilde{U} )
        +
        d_{BL}( \tilde{U}, U^{\theta_{k(n)}} )
        \longrightarrow 0.
    \end{align*}
    Since $(\theta_n) \subseteq \Theta$ was an arbitrary sequence, we conclude that $U^{(n),\theta} \convUD U^\theta$ by Proposition~\ref{prop:seqUni}.
\end{proof}

The following proposition gives explicit expressions for the quadratic characteristics that 
appear in Rebolledo's CLT in the special case where the local martingales are given as stochastic integrals with respect to a compensated counting processes. 

\begin{prop}\label{prop:Rebolledospecialcase}
    Let $N_1,\ldots, N_n$ be counting processes and assume that for each $j=1,\ldots,n$,
    $N_j$ has an absolutely continuous $\mathcal{F}_t^{(n)}$-compensator $\Lambda_{j,t}$ such that $M_{j,t}=N_{j,t}-\Lambda_{j,t}$ is a locally square integrable $\mathcal{F}_t^{(n)}$-martingale.
    Let $H_1,\ldots,H_n$ be locally bounded $\mathcal{F}_t^{(n)}$-predictable processes, and define the process $U_t^{(n)} = \sum_{j=1}^n \int_0^t H_{j,s} \mathrm{d}M_{j,s}$. Then $U_t^{(n)}$ is a local square integrable $\mathcal{F}_t^{(n)}$-martingale, and for any $t,\epsilon>0$ it holds that
    \begin{align*}
        \langle U^{(n)} \rangle(t) 
            &= \sum_{j=1}^n \int_0^t H_{j,s}^2 \mathrm{d}\Lambda_{j,s}, \\
        \langle U_\epsilon^{(n)} \rangle(t) 
            &= \sum_{j=1}^n \int_0^t H_{j,s}^2\one(|H_{j,s}|\geq \epsilon)\mathrm{d}\Lambda_{j,s}.
    \end{align*}
\end{prop}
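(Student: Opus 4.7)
The plan is to build up the proof in three steps: verify the martingale property, compute the predictable quadratic variation of $U^{(n)}$, and then handle the rare-jump process $U_\epsilon^{(n)}$ by reducing it to the same calculation.

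First, I would argue that $U^{(n)}$ is a local square integrable martingale. For each $j$, the compensated counting process $M_{j,t} = N_{j,t} - \Lambda_{j,t}$ is a local square integrable $\mathcal{F}_t^{(n)}$-martingale whose predictable quadratic variation is $\langle M_j\rangle_t = \Lambda_{j,t}$, because the compensator is absolutely continuous and hence continuous. Since each $H_j$ is locally bounded and $\mathcal{F}_t^{(n)}$-predictable, the standard stochastic integral theory for counting process martingales (e.g., Lemma \ref{lem:countingmartingales} applied locally, or Theorem 2.4.4 in \citet{fleming2011counting}) gives that $\int_0^t H_{j,s}\,\mathrm{d}M_{j,s}$ is a local square integrable martingale. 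A finite sum of such martingales remains a local square integrable martingale, giving the first claim.

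For the second claim, I would use bilinearity of the predictable quadratic covariation to write
\begin{align*}
    \langle U^{(n)}\rangle(t)
    = \sum_{i,j=1}^n \int_0^t H_{i,s}H_{j,s}\,\mathrm{d}\langle M_i, M_j\rangle_s.
\end{align*}
The cross terms with $i\neq j$ vanish: since the compensators are continuous, the jumps of each $M_j$ coincide with the jumps of $N_j$ and occur at totally inaccessible times, and the counting processes $N_1,\ldots,N_n$ (which in all applications arise from independent replications) have no common jumps almost surely. Hence $[M_i,M_j]_t=0$ a.s., so $\langle M_i,M_j\rangle = 0$. Combined with $\langle M_j\rangle_t = \Lambda_{j,t}$, this yields the stated formula for $\langle U^{(n)}\rangle(t)$.

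For the third claim, I would identify $U_\epsilon^{(n)}$ explicitly. Because the $N_j$ have no common jumps, at any jump time $s$ of $U^{(n)}$ exactly one $N_j$ jumps and $\Delta U^{(n)}_s = H_{j,s}$, so the truncation $|\Delta U^{(n)}_s|\geq \epsilon$ reduces to $|H_{j,s}|\geq \epsilon$. Thus the pure-jump process satisfies
\begin{align*}
    U^{(n)}_{\epsilon,t}
    = \sum_{j=1}^n \int_0^t H_{j,s}\,\one(|H_{j,s}|\geq \epsilon)\,\mathrm{d}N_{j,s},
\end{align*}
and its compensated version, $\tilde U_\epsilon^{(n)} = \sum_j \int_0^t H_{j,s}\one(|H_{j,s}|\geq \epsilon)\,\mathrm{d}M_{j,s}$, has the same jumps as $U_\epsilon^{(n)}$ (again because the compensators are continuous). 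Applying exactly the bilinearity argument of the previous step to $\tilde U_\epsilon^{(n)}$ — the truncated integrands $H_{j,s}\one(|H_{j,s}|\geq \epsilon)$ are still locally bounded and predictable — yields $\langle U_\epsilon^{(n)}\rangle(t) = \sum_j \int_0^t H_{j,s}^2\one(|H_{j,s}|\geq \epsilon)\,\mathrm{d}\Lambda_{j,s}$.

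The only real subtlety is the vanishing of the cross covariations $\langle M_i,M_j\rangle$ for $i\neq j$; this is where the joint structure of the counting processes enters, and I would make it explicit that absolute continuity of the compensators together with the absence of common jumps (as inherited from the independent replications in the application) justifies it. The rest is bookkeeping with standard stochastic-integral identities.
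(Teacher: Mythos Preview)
Your argument is correct and is essentially a faithful unpacking of what the paper merely cites: the paper's proof consists of a reference to the discussion following Theorem~II.5.2 in \citet{AndersenBorganGillKeiding:1993}, in particular their equations (2.5.6) and (2.5.8), and your three steps reconstruct precisely that derivation. You are also right to flag the ``no common jumps'' assumption as the one substantive point---it is implicit in the proposition (and automatic in the paper's application to independent replications with absolutely continuous compensators), but it is exactly the hypothesis under which the cited formulas in \citet{AndersenBorganGillKeiding:1993} are stated, so your proof and the paper's reference are aligned.
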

\begin{proof}
    See the discussion following Theorem $\mathrm{II}.5.2$ in \citet{AndersenBorganGillKeiding:1993}, in particular equations $(2.5.6)$ and $(2.5.8)$.
\end{proof}

\supplementarysection{Estimation of \texorpdfstring{$\lambda$}{intensity} and \texorpdfstring{$G$}{residual process}} \label{sec:estimationofnuisance} 
The asymptotic theory for estimation of the LCM crucially relies on $\widehat \lambda^{(n)}$ and $\widehat G^{(n)}$ being consistent,
and more importantly, having a product error decaying at an $n^{-1/2}$-rate.
Therefore, a central question when applying the test, is how to model $\lambda$ and $G$. 

In principle, we could use parametric models to learn $\widehat \lambda^{(n)}$ and $\widehat G^{(n)}$, and under such models it should be possible to achieve $n^{-1/2}$-rates. For example, if we consider a parametrization $(t,\theta) \mapsto \lambda_t(\theta)$ which is $\kappa(t)$-Lipschitz in $\theta\in \Theta \subseteq \mathbb{R}^p$ for each $t$, then
\begin{align*}
    h(n)^2 
    = E\left(\int_0^1 (\lambda_t(\theta_0)-\lambda_t(\widehat \theta^{(n)}))^2 
    \mathrm{d}t \right) 
    \leq 
    \|\kappa\|_{L_2([0,1])}^2 E\| \theta_0 - \widehat \theta^{(n)}\|_{\mathbb{R}^p}^2.
\end{align*}
Thus the rates from parametric asymptotic theory can be converted to rates for $g$ and~$h$. 

However, it is of greater interest if sufficient rates can be achieved with
nonparametric estimators. Below we give
concrete examples of nonparametric models and discuss which rates are
achievable. For simplicity, we focus on the case where $\mathcal{F}_t =
\mathcal{F}_t^{N,Z}$ and where $G_t = X_t - \Pi_t$ as in the introductory example.

\subsection{Nonparametric functional estimation of \texorpdfstring{$\Pi$}{the predictable projection}}
As seen in Section~\ref{subsec:ex}, assumptions on the form of $\Pi$ turn the
general estimation problem into a concrete problem of estimating a function. 

If the system is Markovian, it can be reasonable to assume a
\emph{functional concurrent model}. The model asserts that $\Pi_t = \mu(t,Z_t)$
for a bivariate function $\mu$, and a survey of methods for estimating $\mu$ is
given by \citet{maity2017nonparametric}. Notably, \citet{jiang2011functional}
achieve an $n^{-1/3}$-rate of $g(n)$ under certain regularity and moment
assumptions, see their Theorem 3.3. That result also holds if $Z$ is 
replaced by a linear predictor $\beta^T\mathbf{Z}$ of several covariates.

Consider again the historical linear regression model from Section \ref{subsec:ex},
and assume that the effect of $Z$ on $X$ is homogeneous over time. 
That is, $\rho_X(s,t) = \tilde \rho_X(t-s)$ for some function $\tilde \rho_X$.
This submodel is known as the
\emph{functional convolution model}, since $\Pi$ can be written as the
convolution of $Z$ and $\tilde \rho_X$. Applying the Fourier transform converts
it into a (complex) linear concurrent model, so by Plancherel's theorem one can leverage the convergence rates from the concurrent model. \citet[Theorem
16]{manrique2016functional} uses this idea to transfer the $n^{-1/4}$-rate of the
\emph{functional ridge regression estimator} \citep{manrique2018ridge} 
to the convolution model, which holds under modest moment conditions on the data. With additional distributional assumptions, we conjecture that faster rate results for the linear concurrent model can also be leveraged to the convolution model. \citet{csenturk2010functional} consider a similar model under
the assumption that
$$
    \rho_X(s,t) = \one(t-\Delta \leq s\leq t)\tilde \rho_X^1(t)\tilde \rho_X^2(t-s),
$$
for two functions $\tilde \rho_X^1$ and $\tilde \rho_X^2$ and a lag $\Delta>0$.
They establish a pointwise rate result for the response curve, but it is not 
obvious how to cast their result as a polynomial rate for $g(n)$.

For the full historical functional linear model we are not aware of any
published rate results. \citet{yuan2010reproducing,cai2012minimax} establish
rates on the prediction error for scalar-on-function regression, and
\citet{Yao:2005} establish various rates for function-on-function regression,
but in a non-historical setting. Based on the former, we give a heuristic for
which rates are achievable for $g(n)$ in this model. If $\widehat{\Pi}$ is based on
a kernel estimate $\widehat{\rho}_X^{(n)}$ of $\rho_X$, then Tonelli's theorem yields
    \begin{align*}
        g(n)^2 = \vertiii{ 
            \Pi - \widehat{\Pi}^{(n)}
        }_{2}^2
        &= \ex \left(
            \int_0^1 \left(
                \int_0^t (\rho_X(s,t)-\widehat{\rho}_X^{(n)}(s,t))Z_s\mathrm{d}s
            \right)^2\mathrm{d}t
        \right) \\
        &= \int_0^1 \ex
            \left(\left(
                \int_0^t (\rho_X(s,t)-\widehat{\rho}_X^{(n)}(s,t))Z_s\mathrm{d}s
            \right)^2\right)\mathrm{d}t.
    \end{align*}
Theorem 4 in \citet{cai2012minimax} asserts that we, under certain regularity
conditions, can estimate $\rho_X(\cdot,t)$ such that 
\begin{equation*}
    \ex \left(
    \left(
        \int_0^t (\rho_X(s,t)-\widehat{\rho}_X^{(n)}(s,t))Z_s\mathrm{d}s
    \right)^2
    \right)
\end{equation*}
decays at a $n^{-2r_t/(2r_t+1)}$-rate for a fixed $t$. Here $r_t$ is a constant
describing the eigenvalue decay of a certain operator related to the 
autocovariance of $Z$ and the regularity of $\rho_X$. As a concrete example, if
$Z$ is a Wiener process and $\rho_X(\cdot,t)\in \mathcal{W}_2^m([0,t])$ is in
the $m$-th Sobolev space for each $t>0$, then $r_t = 1 + m$ and $g(n)$ will
converge at an $n^{-(1+m)/(2m+3)}$-rate, see the discussion after Corollary 8 in
\cite{yuan2010reproducing}. Based on these arguments, we believe that the
desired $n^{-(1/4+\varepsilon)}$-rate for $g(n)$ is achievable with 
suitable regularity assumptions on $Z$ and $\rho_X$. 

\subsection{Estimation of \texorpdfstring{$\lambda$}{the intensity}}
Within the framework of the Cox model, \cite{Wells:1994} demonstrate 
that the baseline intensity can be estimated with rate $n^{-2/5}$ using 
a standard kernel smoothing technique. With the parametric $n^{-1/2}$-rate 
on the remaining parameters, this translates readily into $h(n) = O(n^{-2/5})$.

We suspect that the same rate should also be attainable in a sparse setting with high-dimensional covariates, for example by applying the smoothing approach of \citet{Wells:1994} to the baseline hazard estimators of e.g. \citet{fang2017testing} and \citet{hou2023treatment}.

\citet{hiabu2021smooth} consider the more general multiplicative intensity model with $\lambda_t = \one(T\geq t)f(t,Z_t))$, where $f$ has a multiplicative structure over its arguments. They introduce an estimator with optimal rate $h(n)=n^{-2/(5+d)}$, where $d$ is the dimension of $Z$. For $d>3$, we therefore need faster rates on $g(n)$ in order for the LCT to maintain type I error control.

Omitting the multiplicative structure on $f$, \citet{bender2020general}
propose a general framework for nonparametric estimation of Markovian
intensities. They
survey existing methods such as gradient boosted trees and neural networks and
relate them to this setting. Based on real and synthetic data, they find that
both gradient boosted trees and neural networks outperform the Cox
model in terms of predictive performance as measured by the Brier score. In
essence, the framework relies on discretizing time and approximating the
intensity with successive Poisson regressions. 
Using the same idea,
\citet{rytgaard2021estimation} argue that $h(n) = o(n^{-1/4})$
can be achieved for time-independent covariates. 

Similarly, \citet{rytgaard2022continuous} mention that $h(n) = o(n^{-1/4})$ can
be achieved for estimation of intensities in a multivariate point process with a
uniformly bounded number of events, which we place into a general modeling 
framework below.

\subsection{Estimation of \texorpdfstring{$\lambda$}{the intensity} and \texorpdfstring{$\Pi$}{the predictable projection} for counting processes}
In Sections~\ref{sec:setup} and \ref{sec:asymptotics} we considered the setup where $N$ was a counting
process adapted to a filtration $\mathcal{F}_t$, which could contain information
on baseline covariates and covariate processes that were not necessarily
counting processes. In this section we explore how our testing framework can be
applied when all stochastic processes of interest are
counting processes.

More specifically, let $(N_t^d)_{d \in [p]}$ be 
a $p$-dimensional counting process. For $a, b \in [p]$ and 
$C \subset [p] \setminus \{b\}$ with 
$a \neq b$ and $a \in C$ we are interested in testing 
the hypothesis that $N^a$ is conditionally locally independent 
of $N^b$ given the filtration, $\mathcal{F}^C_t$, generated 
by $N^C = (N^d)_{d \in C}$.

We can cast this setup in the framework of Section~\ref{sec:setup} as follows.
Naturally, we let $N = N^a$ and $\mathcal{F}_t = \mathcal{F}^C_t$. The auxiliary
process $X$ is chosen to be \lc{} and predictable with respect to the
filtration, $\mathcal{F}^b_t$, generated by $N^b$. For example, we could choose
$X_t = N^b_{t-}$. But $X_t$ could be any functional of $N^b$ such
as $X_t = f(N_{t-}^b)$ for a suitable function $f$ or a linear filter of $N^b$,
    \begin{align*}
        X_t = \int_0^{t-} \kappa(t - s) \mathrm{d} N_s^b,
    \end{align*}
where $\kappa$ is a suitable kernel function, see also Section \ref{sec:neyman}.
In principle, the process $X$ could also depend on the process $N^C$, but it is
important that the filtration, $\mathcal{G}_t$, generated by $\mathcal{F}_t$ and
$X_t$ is strictly larger than $\mathcal{F}_t$, i.e., $X_t$ should depend on
$N^b$, in order to get a non-trivial test as explained in Section~\ref{sec:setup}. 

In the framework of counting processes, we can approach the estimation of both
$\lambda$ and $\Pi$ in a unified and general way as follows: Let $(\tau_j,
z_j)_{j \geq 1}$ be the marked point process associated with the counting
process $N^C$, i.e., $(\tau_j)_{j \geq 1}$ is a sequence of almost surely
strictly increasing event times located at the jumps of $N^C$, and $(z_j)_{j
\geq 1}$ for $z_j \in C$ are the corresponding event types. 

Since both $\lambda_t$ and $\Pi_t$ are real-valued and
$\mathcal{F}^C_{t-}$-measurable for each fixed $t \geq 0$, they can be
represented as measurable functions of $\{(\tau_j, z_j) \mid \tau_j < t, z_j \in
C\}$. Hence, we can model both $\lambda$ and $\Pi$ using any sequence-to-number
model. For the intensity process, \citet{rytgaard2022continuous} propose a
sequence of HAL estimators when the total event count is uniformly bounded. As
an alternative, \citet{Xiao:2019} propose using a recurrent neural network
(LSTM). Unless there is a uniform bound on the total number of events, as
assumed by \cite{rytgaard2022continuous}, there are currently no published
results available on the rates of convergence for nonparametric estimation of
sequence-to-number functions.

\supplementarysection{Relation to semiparametric survival models} \label{sec:survmodels}
In this section, we relate the LCM to existing work on treatment effects in survival analysis. We resume to the setting of Section \ref{sec:copula}, that is, the case where $N_t=\one(T\geq t)$ is the counting process of a survival time, $X$ is a baseline treatment variable, and where $\mathcal{F}_t = \sigma(N_s, Z; s \leq t)$ for additional baseline covariates $Z$. Supposing that $X$ is also non-negative, we may consider two different models for the intensity:
\begin{align} 
    \blambda_t^\times 
        &= \one(T\geq t)\underline{\lambda}(t) \exp(\theta X + \phi(Z)),
            \label{eq:PLCM} \\
    \blambda_t^+
        &= \one(T\geq t)(\vartheta X + \varphi(t,Z)),
            \label{eq:AHM}
\end{align}
where $\theta,\vartheta\in \mathbb{R}$ are treatment parameters of interest, and where $\underline{\lambda},\phi,\varphi$ deterministic nuisance functions. The model in \eqref{eq:PLCM} is known as the partially linear Cox model (PLCM, \citet{sasieni1992information}), and the additive model in \eqref{eq:AHM} was considered by \citet{dukes2019doubly} among others.

While the parameters $\theta$ and $\vartheta$ are difficult to compare directly, the hypothesis of conditional local independence corresponds to the hypothesis of zero treatment effect within each of the models, and testing this hypothesis can be done using a score test.

\citet{sasieni1992information} shows that within the PLCM, the efficient score for $\theta$ is given by
\begin{align} \label{eq:PLCMscore}
    S^\times(\theta; \underline{\lambda},\phi)  = 
    \int_0^1  (X-\alpha^*(t)-h^*(Z))(\mathrm{d}N_t -  \one(T\geq t) \underline{\lambda}(t)e^{\theta X + \phi(Z)}\mathrm{d}t), 
\end{align}
where $(\alpha^*,h^*)$ are defined as the minimizers $\ex (X - \alpha(T) - h(Z) )^2$.
Recall that, when $X$ and $Z$ are time-independent, the null hypothesis $H_0$ of conditional local independence reduces to the conditional independence statement $X\ind  T \mid Z$. Consequently, it holds that $\alpha^*(T)=0$ and $h^*(Z) = \ex[X\mid Z] = \Pi_0$ under $H_0$. Evaluating the efficient score at $\theta=0$ under $H_0$ therefore gives
\begin{align*}
    S^\times(0; \underline{\lambda},\phi) 
    = \int_0^1  (X-\Pi_0)(\mathrm{d}N_t - \one(T\geq t) \underline{\lambda}(t)e^{\phi(Z)}\mathrm{d}t)
    = (X-\Pi_0)(1-\Lambda_T(\underline{\lambda},\phi)).
\end{align*}
We see that the empirical version of $S^\times$ is exactly the endpoint of the LCM estimator with additive residual process (cf. Equation \ref{eq:LCM-baselineX}). This means that our test, the X-LCT, can under the PLCM be interpreted as a score test based on the efficient score. 

A similar connection can be made for the additive model. \citet{dukes2019doubly} show that the efficient score for $\vartheta$ is given by
\begin{align*}
S^+(\vartheta; \varphi) = 
    \int_0^1 \left(
    X - \frac{\ex[(\vartheta X + \varphi(t,Z))^{-1}Xe^{-\vartheta Xt} \mid Z]
    }{
        \ex[(\vartheta X + \varphi(t,Z))^{-1}e^{-\vartheta Xt} \mid Z]
    }
    \right)
    \frac{(\mathrm{d}N_t-\blambda_t(\vartheta,\phi)\mathrm{d}t)
    }{
        (\vartheta X + \varphi(t,Z))
    }.
\end{align*}
Plugging in $\vartheta=0$ and simplifying under $H_0$ yields
\begin{align*}
S^+(0; \varphi) = 
    \int_0^1 
    \frac{
         (X - \ex[X \mid Z])
    }{
        \varphi(t,Z)
    }
    (\mathrm{d}N_t-\one(T\geq t)\varphi(t,Z)\mathrm{d}t)
    = \int_0^1 
    \frac{
         (X - \Pi_0)
    }{
        \lambda_t
    }
    (\mathrm{d}N_t - \lambda_t\mathrm{d}t).
\end{align*}
We recognize the empirical version of $S^+$ as the endpoint of the LCM estimator
with the time-constant $X$ replaced by the hazard weighted process $X / \lambda_t$.

Other works that consider effect estimation based on orthogonal scores include \cite{huang1999efficient,fang2017testing,niu2022reconciling,zhong2022deep} for the PLCM and \citet{hou2023treatment} for the additive model in \eqref{eq:AHM}.

We also suspect, as the derivations in Section \ref{sec:neyman} likewise suggest, that the LCM 
is still an efficient score for certain semiparametric survival models even when the covariates 
vary with time, but we are not aware of existing results on such a connection.

\supplementarysection{Details on Neyman orthogonality} \label{sec:orthodetails}
In this section, we first show by direct computation that the LCM is Neyman orthogonal with respect to both general residual processes and intensities. We then show that 
the LCM with an additive residual process can be viewed as a concentrated-out score in the sense of \citet{newey1994asymptotic}.

\subsection{General Neyman orthogonality}
The definition of Neyman orthogonality by \citet[Def. 2.1.]{chernozhukov2018} requires that we formally define function spaces for the collections of nuisance parameters. 
However, to avoid extensive technical specifications (and redundant model assumptions), we prove a simpler -- but more general -- condition, from which Neyman orthogonality can be derived within specific semiparametric models.
 
First, we generalize the integral $I_t$ from Definition \ref{dfn:lcm} to a function of pairs $(x,y)$ of \lc{} functions, given by
\begin{align*}
    I_t(x,y) = \int_0^t x_s (\mathrm{d}N_s - y_s \mathrm{d}s).
\end{align*}
With this notation, the LCM is given by $\gamma_t = \ex[I_t(G,\lambda)]$, where $G$ is a residual process and $\lambda$ is the $\mathcal{F}_t$-intensity of $N$. We assume for simplicity that $\lambda$ and $G$ are bounded such that the expectation is well-defined.

Now let $\tilde{G}_t$ be an arbitrary bounded $\mathcal{G}_t$-predictable \lc{} process, and let $\tilde{\lambda}_t$ be an arbitrary bounded $\mathcal{F}_t$-predictable \lc{} process.
We establish the following orthogonality condition: under $H_0$ it holds that
\begin{align}\label{eq:generalortho}
    \partial_r \ex[
    I_t(G + r(\tilde{G}-G),\lambda + r (\tilde{\lambda}-\lambda)]\big\vert_{r=0} = 0.
\end{align}
Indeed, observe that
\begin{align*}
    I_t(G + r (\tilde{G}-G),&\lambda + r (\tilde{\lambda}-\lambda)) \\
    &= (1-r)^2 I_t(G, \lambda) + r(1-r) (I_t(G,\tilde{\lambda}) 
        + I_t(\tilde{G},\lambda)) + r^2 I_t(\tilde{G},\tilde{\lambda}),
\end{align*}
from which it follows that
\begin{align*}
    \partial_r \ex[
    I_t(G + r(\tilde{G}-G),\lambda + r (\tilde{\lambda}-\lambda)]\big\vert_{r=0} 
    = 2\cdot\ex[I_t(G, \lambda)] + \ex[I_t(G,\tilde{\lambda})]
        + \ex[I_t(\tilde{G},\lambda)].
\end{align*}
The first term is zero under $H_0$ by Proposition \ref{prop:cli-mg}, and the third term vanishes under $H_0$ by the same argument. For the second term, we note that
\begin{align*}
    \ex[I_t(G,\tilde{\lambda})]
    = \gamma_t + \ex\left[\int_0^t G_s (\lambda_s - \tilde{\lambda}_s)\mathrm{d}s\right]
    = \gamma_t + \int_0^t \ex\left[\ex\left[G_s\mid \mathcal{F}_{s-}\right] (\lambda_s - \tilde{\lambda}_s)\right]\mathrm{d}s
    = \gamma_t,
\end{align*}
which also vanishes under $H_0$. This lets us conclude that \eqref{eq:generalortho} holds under $H_0$.

\subsection{Concentrating-out}
To derive the concentrated-out score, we first need to formalize the nuisance parameters and the collection thereof. We consider the case where $\mathcal{F}_t = \mathcal{F}_t^{N,Z}$ for a process $Z=(Z_t)$, and let $D_X$ and $D_Z$ denote the respective sample spaces of the $X$ and $Z$. We posit the following semiparametric model for the intensity:
$$
    \blambda_t(\beta,h) = \one(T\geq t)e^{\beta X_t} h(t,Z),
$$
where $h\colon [0,1]\times D_Z \to [0,\infty)$ is a function such that $t\mapsto h(t, Z)$ is an $\mathcal{F}_t^Z$-predictable \lc{} process. Denote the collection of such functions by $\mathcal{T}_1$. To make the space not dependent on the particular instantiation of the process $Z$, we could also take it to be the set:
\begin{align*}
    \Big\{
        (h_t)_{t\in [0,1]} \in 
            \int_{[0,1]}^{\oplus} L_2({D_Z\vert}_{[0,t)},\mathbb{R}) \mathrm{d}t \: \Big\vert \: 
            &t\mapsto h_t((z_s)_{s< t}) \text{ is a bounded nonnegative} \\ 
            &\qquad \text{\lc{} function for all } (z_s)_{0\leq s\leq 1} \in D_Z \Big\},
\end{align*}
where $\int^{\oplus}$ denotes the direct integral, and where ${D_Z\vert}_{[0,t)}$ is the path space $D_Z$ restricted to the domain $[0,t)$.

Recall that the likelihood at $t=1$ is given by
\begin{align*}
    \ell_1(\beta,h) = 
    \int_0^1 \log(\blambda_{s}(\beta,h)) \mathrm{d}N_s - 
        \int_0^1 \blambda_s(\beta,h) \mathrm{d}s.
\end{align*}
We show that concentrating out the nuisance parameter $h$ of $\ell_1(\beta,h)$ yields to the LCM with additive residual process under $H_0$. 

Let $P_0$ be fixed distribution satisfying $H_0$ with ground truth $\beta^0=0$ and $h^0\in\mathcal{T}_1$. Suppose that for each $\beta\in(-\epsilon,\epsilon)$ in a neighborhood of zero, there is a function $\varpi_\beta \in \mathcal{T}_1$ such that 
$$
    \varpi_\beta(t,Z) = \ex_{P_0}[e^{\beta X_t}\mid T\geq t, \mathcal{F}_{t-}^Z].
$$
The first step of concentrating-out is to maximize the expected likelihood over $h$. We claim that for each fixed $\beta$, the function $h_\beta^* \coloneqq h^0/\varpi_\beta \in \mathcal{T}_1$
maximizes the objective $u(h)\coloneqq \ex_{P_0}[\ell_1(\beta,h)]$ over $h\in \mathcal{T}_1$. 

Since the logarithm is concave and $\blambda(\beta,h)$ is linear in $h$, it follows that $u$ is a concave objective. 
Moreover, one can show that $h_\beta^*$ is the unique critical point of $u$ by equating its Gateaux derivative to zero and invoking the fundamental lemma of the calculus of variations. For simplicity, we settle with verifying (see below) that the Gateaux derivative is zero at $h_\beta^*$. From these properties we may conclude that the global maximum of $u$ is attained at $h_\beta^*$.

A straightforward differentiation shows that for any $\tilde{h}\in \mathcal{T}_1$,
\begin{align*}
    \partial_r\ell_1(\beta,h_\beta^* + r\cdot \tilde{h})\mid_{r=0}
    =
    \int_0^1 \frac{\tilde{h}(s,Z)}{h_\beta^*(s,Z)} \mathrm{d}N_s 
    - \int_0^1 \blambda_s(\beta,\tilde{h})\mathrm{d}s.
\end{align*}
Because $\frac{\tilde{h}(t,Z)}{h_\beta^*(t,Z)}$ is $\mathcal{G}_t$-predictable and $N_t - \int_0^t \blambda_s(0,h^0)\mathrm{d}s$ is a $\mathcal{G}_t$-martingale, taking expectation under $P_0$ yields that
\begin{align*}
    \partial_r u(h_\beta^* + r\cdot \tilde{h})\mid_{r=0}
    &= \ex[\partial_r\ell_1(\beta,h^* + r\cdot \tilde{h})\mid_{r=0}] \\
    &= \ex\left[
        \int_0^1  \frac{\tilde{h}(s,Z)}{h_\beta^*(s,Z)}
        \blambda_s(0,h^0)\mathrm{d}s
        - \int_0^1\blambda_s(\beta,\tilde{h})\mathrm{d}s
    \right] \\
    &= \ex\left[
        \int_0^1 \one(T\geq s)\left(
        \varpi_\beta(t,Z)
        -
        e^{\beta X_s} \right)\tilde{h}(s,Z)\mathrm{d}s
    \right] \\
    &= 
        \int_0^1 \ex\left[\one(T\geq s)\left(
        \varpi_\beta(t,Z)
        -
        \ex [e^{\beta X_s}\mid T\geq t, \mathcal{F}_{s-}^Z] \right)\tilde{h}(s,Z)
        \right]     
    \mathrm{d}s
    = 0,
\end{align*}
where we have used that $\tilde{h}(s,Z)$ is $\mathcal{F}_s$-predictable. 

Now, by the method of concentrating-out, we are led to consider nuisance functions of the form
\begin{align*}
    \eta_h &\colon (-\epsilon, \epsilon) \longrightarrow \mathcal{T}_1, \\
    \eta_h(\beta) &= \left((t,z) \mapsto \frac{h(t,z)}{\varpi_\beta(t,z)}\right), 
\end{align*}
for any $h \in \mathcal{T}_1$, which in particular includes $\eta_{h_0}(\beta) = h_\beta^*$. The resulting concentrating-out score is therefore
\begin{align*}
    &\psi(\beta, \eta_h) = \frac{\partial \ell_1(\beta,\eta_h(\beta))
        }{\partial \beta} \\
        &=
        \int_0^1
        \left(X_s - \frac{\partial_\beta \varpi_\beta(t,Z)}{\varpi_\beta(t,Z)}\right) \mathrm{d}N_s
        - \int_0^1
        \left(\frac{X_s}{\varpi_\beta(t,Z)} 
            - \frac{\partial_\beta{\varpi_\beta(t,Z)}}{\varpi_\beta(t,Z)^2}
        \right) \blambda(\beta,h)\mathrm{d}s \\
        &=
        \int_0^1
        \left(X_s - \frac{\ex_{P_0}[X_s e^{\beta X_s}\mid T\geq s, \mathcal{F}_{s-}^Z]}{\ex[e^{\beta X_s}\mid T\geq s, \mathcal{F}_{s-}^Z]}\right) \mathrm{d}N_s \\
        &\quad - \int_0^1
        \left(\frac{X_s}{\ex_{P_0}[e^{\beta X_s}\mid T\geq s, \mathcal{F}_{s-}^Z]} 
            - \frac{\ex_{P_0}[X_s e^{\beta X_s}\mid T\geq s, \mathcal{F}_{s-}^Z]}{(\ex_{P_0}[e^{\beta X_s}\mid T\geq s, \mathcal{F}_{s-}^Z])^2}
        \right) \blambda(\beta,h)\mathrm{d}s.
\end{align*}
After plugging in $\beta=0$ and simplifying we see that
\begin{align*}
    \psi(0, \eta_h) = \int_0^1 (X_s - \ex[X_s \mid \mathcal{F}_{s-}])(\mathrm{d}N_s - \blambda(0,h)\mathrm{d}s).
\end{align*}
This shows that the concentrating-out score evaluated at $\beta=0$ is exactly the score for the endpoint of the LCM estimator.

\supplementarysection{Additional details of simulation study} \label{sec:extrafigs}

This section contains additional details, numerical results, and figures related to the simulations of Section~\ref{sec:simulations}.

\begin{figure}
    \centering
    \includegraphics[width=\linewidth]{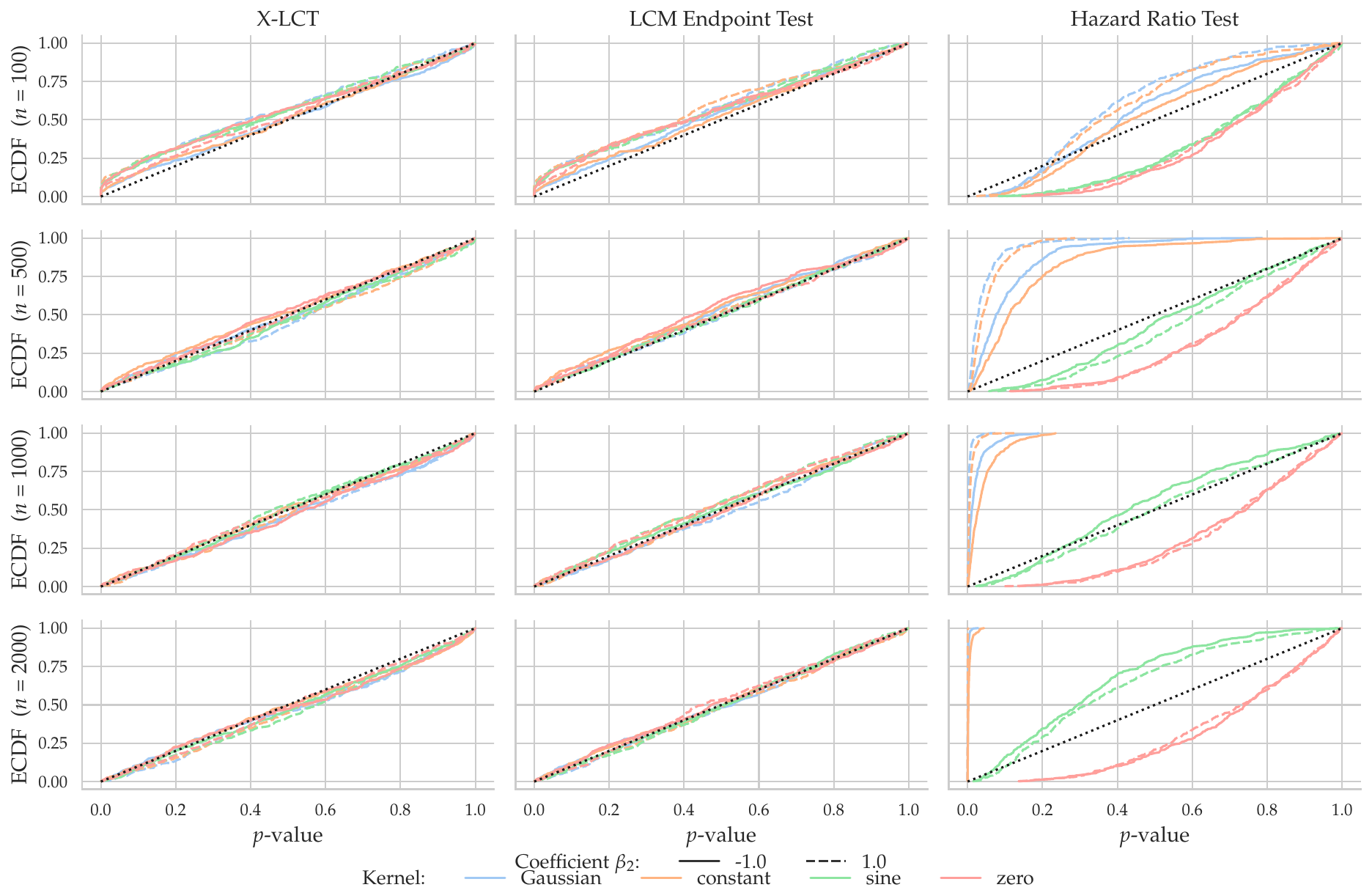}
    \caption{Empirical distribution functions of $p$-values for the three
    different conditional local independence tests considered, simulated under
    the sampling scheme described in Section~\ref{sec:simulations}. The dotted line
    shows $y=x$ corresponding to a uniform distribution.}\label{fig:H0pvals_full}
\end{figure}

\begin{figure}
    \centering
    \includegraphics[width=\linewidth]{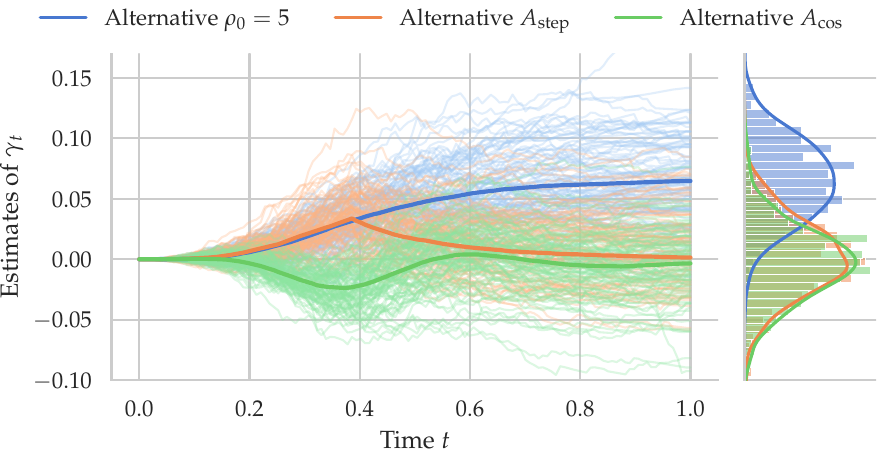}
    \caption{Sample paths of $\check{\gamma}^{K,(500)}$ fitted on data
    sampled from three different alternatives as described in Section 
    \ref{subsec:localalternatives}. 
    Here $(X,Y,Z)$ are sampled from the scheme described in Section \ref{sec:simulations},
    with both $\rho_X$ and $\rho_Y$ being the constant kernel and with $\beta = -1$.
    For each alternative, 100 paths are shown. The 
    empirical mean functions and the endpoint distributions
    are highlighted and computed based on 500 samples.}
     \label{fig:pathsalternative}
\end{figure}

\begin{figure}
    \centering
    \includegraphics[width = .7\linewidth]{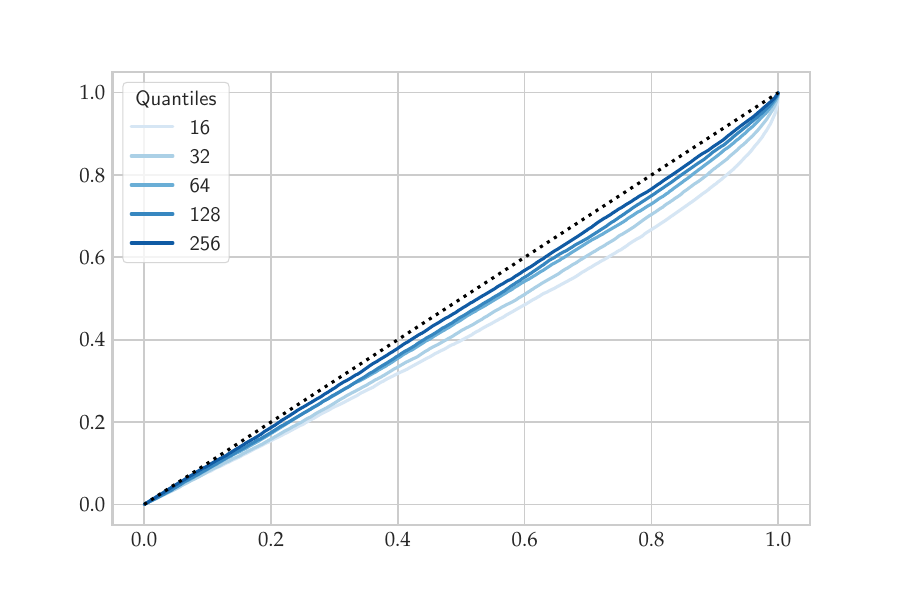}
    \caption{Empirical distribution functions of $p^{(q)} = 1- F_S(M^{(q)})$,
    where $M^{(q)} = (M_t^{(q)})_{t=1,\ldots,q}$ is a random walk with Gaussian
    increments such that $M_q^{(q)}$ has unit variance for each
    $q\in \{2^\ell \colon \ell=4,\ldots,8\}$. 
    Each empirical distribution function is based on $N=20\, 000$ samples.}
    \label{fig:supremumapprox}
\end{figure}

\subsection{Implementation of estimators and tests} \label{sec:implementation}
For our proof-of-concept implementation we used two simple off-the-shelf
estimators.

To estimate $\lambda$ we used the \texttt{BoXHED2.0} estimator from
\citet{pakbin2021boxhed2}, based on the works of \citet{wang2020boxhed} and
\citet{lee2021boosted}. In essence, the estimator is a gradient boosted forest
adapted to the setting of hazard estimation with time-dependent covariates. The
maximum depth and number of trees were tuned by 5-fold cross-validation over
the same grid as in \citet{pakbin2021boxhed2}. For computational ease, the
hyperparameters were tuned once on the entire dataset instead of tuning them on
each fold $J_n^k$. In principle, this may invalidate the asymptotic properties of $\check{\Psi}_n^K$ since it breaks the independence between $\widehat{\lambda}^{k,(n)}$ and $(T_j,X_j,Z_j)_{j\in J_n^k}$, but we believe that this dependency is negligible.

To estimate the predictable projection $\Pi_t = \ex(X_t \mid \mathcal{F}_{t-})$,
we fitted a series of linear least squares estimators by regressing $X_t$ on
$(Z_s)_{s\in \mathbb{T}:s<t}$ for each $t\in \mathbb{T}$. To stabilize the
estimation error $g(n)$, we added a small $L_2$-penalty with coefficient $0.001$
fixed across all experiments for simplicity. Since $X_t$ was sampled from a discretized historical linear model, the error $g(n)$ should in principle converge with a classical $n^{-1/2}$-rate. 
The historical linear regression estimator from the \texttt{scikit-fda} library was also considered initially, but we found that fitting this model was too computationally expensive for a simulation study with cross-fitting. 
In principle, in our time-continuous setting, we would like to use a functional estimator 
of $\Pi$ that would utilize the regularity along $s$ and $t$. Initial experiments, 
however, suggested that the simpler historical regression described above
gave similar results as using the \texttt{scikit-fda} library,
and we went with the less time consuming implementation.

Based on these estimators, the X-LCT was implemented based on Algorithm \ref{alg:X-lct}. Following the
recommendation by \citet[Remark 3.1.]{chernozhukov2018}, we computed the X-LCT with $K=5$ folds. The associated $p$-value was
computed with the series representation of $F_S$ truncated to the first $1000$ terms.

We compared our results for X-LCT with a hazard ratio test in the possibly misspecified 
marginal Cox model given by \eqref{eq:coxmarg}. This test was computed using the lifelines library
\citep{Davidson-Pilon2019}, specifically the \texttt{CoxTimeVaryingFitter}
model. The model was fitted with an $L_2$-penalty with a coefficient set
to $0.1$ (the default), and as a consequence the hazard ratio test is expected to be conservative. 

\subsection{Comparison with endpoint statistic} \label{sec:endpointsim}
We compare the X-LCT, which is based on the uniform norm of the X-LCM, with its endpoint counterpart. More precisely, we consider the test statistic
$$
    \left(\check{\mathcal{V}}_{K,n}(1)\right)^{-\frac{1}{2}}
    \sqrt{n}\check{\gamma}_1^{K,(n)},
$$
which is asymptotically standard normal under $H_0$.
With the simulation settings in Section~\ref{subsec:localalternatives}, the X-LCT turns out to be more 
or less indistinguishable from the corresponding endpoint test. 
This is because the alternatives considered have corresponding LCMs, which are most extreme towards $t=1$.
Therefore, the supremum and the endpoint behave similarly in these cases.

For this reason we consider local alternatives that result in a non-monotonic LCM. Using the same expression for the 
intensity \eqref{eq:alternative}, but with a time-varying $\rho_0$, 
we consider the alternatives
    \begin{align*}
        A_\text{step}\colon& 
            \rho_0(t) = 5 \cdot \one(t\leq 0.4) - 5 \cdot \one(t>0.4),\\
        A_\text{cos} \colon& 
            \rho_0(t) = 7 \cdot \cos(4\pi \cdot t).
    \end{align*}
The idea behind the alternative $A_\text{step}$ is that the LCM
should be increasing on $[0,0.4]$ and decreasing on $(0.4,1]$. 
Figure \ref{fig:pathsalternative} shows sample paths
of $\check{\gamma}^{K,(n)}$ for data simulated under each of the alternatives
$\rho_0=5$, $A_\text{step}$ and $A_\text{cos}$. The figure illustrates that
$t \mapsto |\check{\gamma}_t^{K,(n)}|$ is, indeed, mostly maximal towards $t=1$ 
for the alternative $\rho_0=5$, but not for the time-varying alternatives 
$A_\text{step}$ and $A_\text{cos}$.

With the same sampling scheme for $(X,Y,Z)$ as in Section \ref{sec:SamplingScheme}, we conducted
an analogous experiment with 400 runs for each setting.
Figure \ref{fig:timevaryingrho} shows the rejection rates for the 
two tests.

Under the hypothesis of conditional local independence, the left plot 
in Figure \ref{fig:timevaryingrho} shows that
the endpoint test behaves similarly to $\check{\Psi}_n^K$ as expected. Both
tests have power against the local alternatives, but for $A_\text{step}$ the
power does not seem to stabilize before $n=2000$. This is different from the
previous settings, and can be explained by a slower convergence of the intensity
estimator due to the more complex dependency on $X$. For both of the local
alternatives, we observe that $\check{\Psi}_n^K$ is more powerful than the
endpoint test, with the difference being largest for $A_\text{step}$. 
In conclusion, these results show that the supremum test dominates the 
endpoint test in certain situations.

\begin{figure}
    \includegraphics[width=\linewidth]{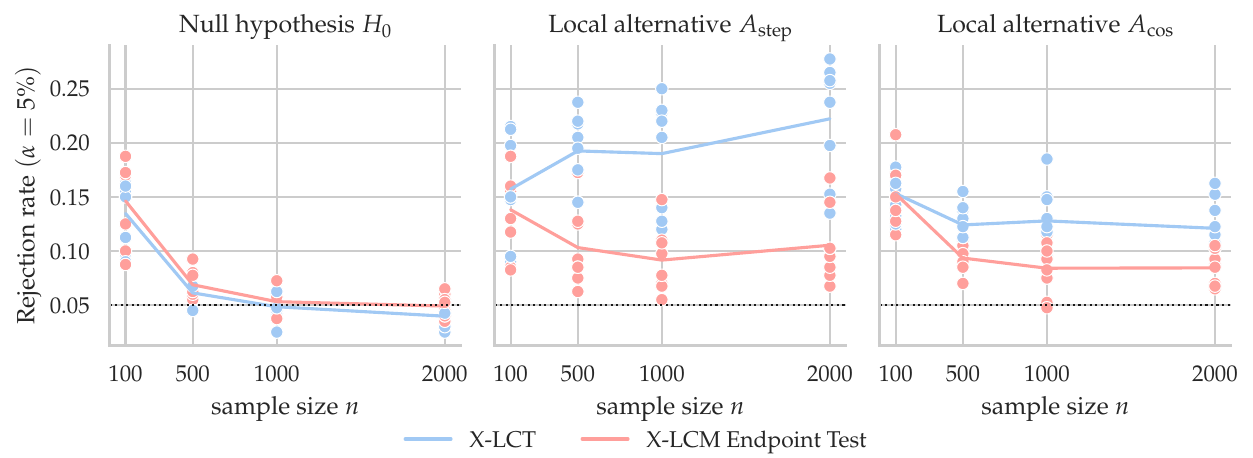}
    \caption{The plots show the average rejection rate of the 
    double machine learning tests based on the supremum statistic (blue)
    and the endpoint statistic (red).} \label{fig:timevaryingrho}
\end{figure}

\fi

\chapter{Efficient adjustment for complex~covariates: Gaining efficiency with DOPE}\label{chapter:adjustment}

\if\includepapers1
    \setcounter{section}{0}
    \setcounter{supsection}{0}
    \setHyperefPrefix{ch3}
    \renewcommand\thesubsection{\thesection.\arabic{subsection}}
    \renewcommand\theequation{\thesection.\arabic{equation}}
    \renewcommand\thefigure{\thesection.\arabic{figure}}
    \renewcommand\thetable{\thesection.\arabic{table}}
    \renewcommand\thelem{\thesection.\arabic{lem}}
    \renewcommand\theprop{\thesection.\arabic{prop}}
    \renewcommand\thethm{\thesection.\arabic{thm}}
    \renewcommand\thecor{\thesection.\arabic{cor}}
    \renewcommand\thedfn{\thesection.\arabic{dfn}}
    \begin{authorlist}
        \textsc{Alexander Mangulad Christgau and Niels Richard Hansen}
    \end{authorlist}
    \begin{abstract}
Covariate adjustment is a ubiquitous method used to estimate the average treatment effect (ATE) from observational data. Assuming a known graphical structure of the data generating model, recent results give graphical criteria for optimal adjustment, which enables efficient estimation of the ATE. 
However, graphical approaches are challenging for high-dimensional and complex data, and it is not straightforward to specify a meaningful graphical model of non-Euclidean data such as texts.
We propose a general framework that accommodates adjustment for any subset of information expressed by the covariates.
We generalize prior works and leverage these results to identify the optimal covariate information for efficient adjustment. This information is minimally sufficient for prediction of the outcome conditionally on treatment. 

Based on our theoretical results, we propose the Debiased Outcome-adapted Propensity Estimator (DOPE) for efficient estimation of the ATE, and we provide asymptotic results for the DOPE under general conditions. Compared to the augmented inverse propensity weighted (AIPW) estimator, the DOPE can retain its efficiency even when the covariates are highly predictive of treatment. We illustrate this with a single-index model, and with an implementation of the DOPE based on neural networks, we demonstrate its performance on simulated and real data. Our results show that the DOPE provides an efficient and robust methodology for ATE estimation in various observational settings.
\end{abstract}




\section{Introduction}
Estimating the population average treatment effect (ATE) of a treatment on an outcome variable is a fundamental statistical task. A naive approach is to contrast the mean outcome of a treated population with the mean outcome of an untreated population. Using observational data this is, however, generally a flawed approach due to confounding. 
If the underlying confounding mechanisms are captured by a set of pre-treatment covariates~$\bW$, it is possible to adjust for confounding by conditioning on $\bW$ in a certain manner. 
Given that multiple subsets of $\bW$ may be valid for this adjustment, it is natural to ask if there is an `optimal adjustment subset' that enables the most efficient estimation of the ATE.

Assuming a causal linear graphical model, \citet{henckel2022graphical} established the existence of -- and gave graphical criteria for -- an \emph{optimal adjustment set} for the OLS estimator. \citet{rotnitzky2020efficient} extended the results of \citet{henckel2022graphical}, by proving that the optimality was valid within general causal graphical models and for all regular and asymptotically linear estimators. 
Critically, this line of research assumes knowledge of the underlying graphical structure. 

To accommodate the assumption of \textit{no unmeasured confounding}, 
observational data is often collected with as many covariates as possible, which means that $\bW$ can be high-dimensional. In such cases, assumptions of a known graph are unrealistic, and graphical estimation methods are
statistically unreliable \citep{uhler2013geometry,shah2020hardness,chickering2004large}. 
Furthermore, for non-Euclidean data such as images or texts, it is not clear how to impose any graphical structure pertaining to causal relations. 
Nevertheless, we can in these cases still imagine that the information that $\bW$ represents can be separated into distinct components that affect treatment and outcome directly, as illustrated in Figure~\ref{fig:semistructuredW}.
\begin{figure}
    \centering
      \begin{tikzpicture}[node distance=1.5cm, thick, roundnode/.style={circle, draw, inner sep=1pt,minimum size=7mm},
        squarenode/.style={rectangle, draw, inner sep=1pt, minimum size=7mm}]
        \node[roundnode] (T) at (0,-0.3) {$T$};
        \node (t) at (0,-0.78) {\scriptsize Treatment};
        \node[roundnode] (Y) at (2,-0.3) {$Y$};
        \node (y) at (2,-0.78) {\scriptsize Outcome};
        \node (Z) at (1,1.85) {$\bW$};
        \node[roundnode,dashed] (Z1) at (-0.5,1.4) {$\mathbf{C}_1$};
        \node[roundnode,dashed] (Z2) at (1,1.1) {$\mathbf{C}_2$};
        \node[roundnode,dashed] (Z3) at (2.5,1.4) {$\mathbf{C}_3$};
    
        \draw (1,1.35) ellipse (2.5cm and 0.9cm);
        \path [-latex,draw,thick, blue] (T) edge [bend left = 0] node {} (Y);
        \path [-latex,draw,thick] (Z1) edge [bend left = 0] node {} (T);
        \path [-latex,draw,thick] (Z2) edge [bend left = 0] node {} (T);
        \path [-latex,draw,thick] (Z2) edge [bend left = 0] node {} (Y);
        \path [-latex,draw,thick] (Z3) edge [bend left = 0] node {} (Y);
      \end{tikzpicture}
    \caption{The covariate $\bW$ can have a complex data structure, even if the information it represents is structured and can be categorized into components that influence treatment and outcome separately.}
    \label{fig:semistructuredW}
\end{figure}
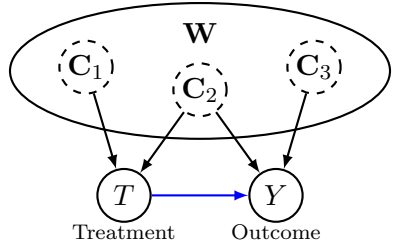

In this paper, we formalize this idea and formulate a novel and general adjustment theory with a focus on efficiency bounds for the estimation of the average treatment effect. 
Based on the adjustment theory, we propose a general estimation procedure and analyze its asymptotic behavior. 

\subsection{Setup}
Throughout we consider a discrete treatment variable $T\in \mathbb{T}$, a square-integrable outcome variable $Y\in \real$, and pre-treatment covariates $\bW \in \mathbb{W}$. For now we only require that $\mathbb{T}$ is a finite set and that $\mathbb{W}$ is a measurable space.
The joint distribution of $(T,\bW,Y)$ is denoted by $P$ and it is assumed to belong to a collection of probability measures~$\mathcal{P}$. If we need to make the joint distribution explicit, we denote expectations and probabilities with $\ex_P$ and $\mathbb{P}_P$, respectively, but usually the $P$ is omitted for ease of notation.

Our model-free target parameters of interest are of the form
\begin{align}\label{eq:adjustedmean}
    \chi_t
    \coloneq \ex[\ex[Y\given T=t,\bW]]
    = \ex\left[
        \frac{\one(T=t)Y}{\mathbb{P}(T=t \given \bW)}
    \right],
    \qquad t\in \mathbb{T}.
\end{align}
In words, these are treatment specific means of the outcome when adjusting for the covariate $\bW$. To ensure that this quantity is well-defined, we assume the following condition, commonly known as \textit{positivity}.
\begin{asm}[Positivity] \label{asm:positivity}
    It holds that $0<\mathbb{P}(T=t \given \bW)<1$ almost surely for each $t\in\mathbb{T}$.
\end{asm}

Under additional assumptions common in causal inference literature -- which informally entail that $\bW$ captures all confounding -- the target parameter $\chi_t$ has the interpretation as the \emph{interventional mean}, which is expressed by $\ex[Y\given \mathrm{do}(T=t)]$ in \textit{do-notation} or by $\ex[Y^{t}]$ using \textit{potential outcome} notation \citep{peters2017elements,van2011targeted}. Under such causal assumptions, the average treatment effect is identified, and 
 when $\mathbb{T} = \{0,1\}$ it is typically expressed as the contrast $\chi_1-\chi_0$. 
The theory in this paper is agnostic with regards to whether or not $\chi_t$ has this causal interpretation, although it is the primary motivation for considering $\chi_t$ as a target parameter.

Given $n$ i.i.d. observations of $(T,\bW,Y)$, one may proceed to estimate $\chi_t$ by estimating either of the equivalent expressions for $\chi_t$ in Equation \eqref{eq:adjustedmean}. Within parametric models, the outcome regression function 
$$
    g(t,\bw) = \ex[Y\given T=t,\bW=\bw]
$$ 
can typically be estimated with a $\sqrt{n}$-rate. In this case, the sample mean of the estimated regression function yields a $\sqrt{n}$-consistent estimator of $\chi_t$ under Donsker class conditions or sample splitting. However, many contemporary datasets indicate that parametric model-based regression methods get outperformed by nonparametric methods such as boosting and neural networks \citep{bojer2021kaggle}.

Nonparametric estimators of the regression function typically converge at rates slower than $\sqrt{n}$, and likewise for estimators of the propensity score 
$$
    m(t\mid \bw) = \mathbb{P}(T=t\given \bW=\bw).
$$
Even if both nonparametric estimators have rates slower than $\sqrt{n}$, it is in some cases possible to achieve a $\sqrt{n}$-rate of $\chi_t$ by modeling both $m$ and $g$, and then combining their estimates in a way that achieves `rate double robustness' \citep{smucler2019unifying}. That is, an estimation error of the same order as the product of the errors for $m$ and $g$.
Two prominent estimators that have this property are the Augmented Inverse Probability Weighted estimator (AIPW) and the Targeted Minimum Loss-based Estimator (TMLE) \citep{robins1995semiparametric,chernozhukov2018,van2011targeted}.

In what follows, the premise is that even with a $\sqrt{n}$-rate estimator of $\chi_t$, it might still be intractable to model $m$ -- and possibly also $g$ -- as a function of $\bW$ directly. This can happen for primarily two reasons:
(i) the sample space $\mathbb{W}$ is high-dimensional or has a complex structure, or (ii) the covariate is highly predictive of treatment, leading to unstable predictions of the inverse propensity score $\mathbb{P}(T=t\given \bW)^{-1}$.

In either of these cases, which are not exclusive, we can try to manage these difficulties by instead working with a \emph{representation} $\bZ = \phi(\bW)$, given by a measurable mapping $\phi$ from $\mathbb{W}$ into a more tractable space such as $\real^d$. In the first case above, such a representation might be a pre-trained word embedding, e.g., the celebrated BERT and its offsprings \citep{devlin2018bert}. The second case has been well-studied in the special case where $\mathbb{W}=\real^k$ and where $\mathcal{P}$ contains the distributions that are consistent with respect to a fixed DAG (or CPDAG). We develop a general theory that subsumes both cases, and we discuss how to represent the original covariates to efficiently estimate the adjusted mean $\chi_t$.

\subsection{Relations to existing literature}
Various studies have explored the adjustment for complex data structures by utilizing a (deep) representation of the covariates, as demonstrated in works such as \citet{shi2019adapting,veitch2020adapting}. In a different research direction, the \textit{Collaborative TMLE} \citep{van2010collaborative} has emerged as a robust method for estimating 
average treatment effects by collaboratively learning the outcome regression and propensity score, particularly in scenarios where covariates are highly predictive of treatment \citep{ju2019scalable}. Our overall estimation approach shares similarities with the mentioned strategies; for instance, our proof-of-concept estimator in the experimental section employs neural networks with shared layers. However, unlike the cited works, it incorporates the concept of efficiently tuning the representation specifically for predicting outcomes, rather than treatment. Related to this idea is another interesting line of research, which builds upon the \textit{outcome adapted lasso} proposed by \citet{shortreed2017outcome}. Such works include \citet{ju2020robust,benkeser2020nonparametric,greenewald2021high,balde2023reader}. These works all share the common theme of proposing estimation procedures that select covariates based on $L_1$-penalized regression onto the outcome, and then subsequently estimate the propensity score based on the selected covariates adapted to the outcome. The theory of this paper generalizes the particular estimators proposed in the previous works, and also allows for other feature selection methods than $L_1$-penalization. Moreover, our generalization of (parts of) the efficient adjustment theory from \citet{rotnitzky2020efficient} allows us to theoretically quantify the efficiency gains from these estimation methods. Finally, our asymptotic theory considers a novel regime, which, according to the simulations, seems more adequate for describing the finite sample behavior than the asymptotic results of \citet{benkeser2020nonparametric} and \citet{ju2020robust}.

Our general adjustment results in Section~\ref{sec:informationbounds} draw on the vast literature on classical adjustment and confounder selection, for example \citet{rosenbaum1983central,hahn1998role,henckel2022graphical,rotnitzky2020efficient,guo2022confounder,perkovic2018complete,peters2017elements,forre2023mathematical}. 
In particular, two of our results are direct extensions of results from \citet{rotnitzky2020efficient,henckel2022graphical}.

\subsection{Organization of the paper}
In Section~\ref{sec:adjustment} we discuss generalizations of classical adjustment concepts to abstract conditioning on information.
In Section~\ref{sec:informationbounds} we discuss information bounds in the framework of Section \ref{sec:adjustment}. 
In Section~\ref{sec:estimation} we propose a novel method, the DOPE, for efficient estimation of adjusted means, and we discuss the asymptotic behavior of the resulting estimator.
In Section~\ref{sec:experiments} we implement the DOPE and demonstrate its performance on synthetic and real data.
The paper is concluded by a discussion in Section~\ref{sec:discussion}.

\section{Generalized adjustment concepts}\label{sec:adjustment}
In this section we discuss generalizations of classical adjustment concepts. These generalizations are motivated by the premise from the introduction: it might be intractable to model the propensity score directly as a function of~$\bW$, so instead we consider adjusting for a representation $\bZ=\phi(\bW)$. This is, in theory, confined to statements about conditioning on $\bZ$, and is therefore equivalent to adjusting for any representation of the form $\widetilde{\bZ} = \psi \circ \phi (\bW)$, where $\psi$ is a bijective and bimeasureable mapping. 
The equivalence class of such representations is characterized by the $\sigma$-algebra generated by~$\bZ$, denoted by $\sigma(\bZ)$, which informally describes the information contained in $\bZ$. In view of this, we define adjustment with respect to sub-$\sigma$-algebras contained in $\sigma(\bW)$. 

\begin{remark}\label{rmk:conditional}
    Conditional expectations and probabilities are, unless otherwise indicated, defined conditionally on $\sigma$-algebras as in \citet{kolmogoroff1933grundbegriffe}, see also \citet[Ch. 8]{kallenberg2021foundations}. Equalities between conditional expectations are understood to hold almost surely. When conditioning on the random variable $T$ and a $\sigma$-algebra $\cZ$ we write `$\given T,\cZ$' as a shorthand for `$\given \sigma(T)\vee \cZ$'. Finally, we define conditioning on both the event $(T=t)$ and on a $\sigma$-algebra $\cZ\subseteq \sigma(\bW)$ by
    $
        \ex[Y \given T=t, \cZ] 
        \coloneq \frac{\ex[Y \one(T=t) \given \cZ]
            }{\mathbb{P}(T=t \given \cZ)},
    $
    which is well-defined under Assumption \ref{asm:positivity}.
\end{remark}

\begin{definition}\label{def:adjustment} A sub-$\sigma$-algebra $\cZ\subseteq \sigma(\bW)$ is 
called a \emph{description} of $\bW$. For each $t\in \mathbb{T}$ and $P\in\mathcal{P}$, 
and with $\cZ$ a description of $\bW$, we define
    \begin{align*}
        \pi_t(\cZ; P) &\coloneq \mathbb{P}_P(T=t\given \cZ), \\
        b_t(\cZ; P) &\coloneq \ex_P[Y\given T=t, \cZ], \\
        \chi_t(\cZ;P) &\coloneq \ex_P[b_t(\cZ;P)] 
        = \ex_P[\pi_t(\cZ;P)^{-1}\one(T=t)Y].
    \end{align*}
    If a description $\cZ$ of $\bW$ is given as $\cZ = \sigma(\bZ)$ for a representation $\bZ=\phi(\bW)$,
    we may write $\chi_t(\bZ;P)$ instead of $\chi_t(\sigma(\bZ);P)$ etc. 
    
    We say that  
    \begin{align*}
        \textnormal{$\cZ$ is $P$-valid if:} 
            &\qquad
            \chi_t(\cZ; P) = \chi_t(\bW; P), &\text{ for all } 
            t \in \mathbb{T}, \\
        \textnormal{$\cZ$ is $P$-OMS if:} 
            &\qquad
            b_t(\cZ; P) = b_t(\bW; P), &\text{ for all } 
            t \in \mathbb{T}, \\
        \textnormal{$\cZ$ is $P$-ODS if:} 
            &\qquad
            Y \indP \bW \given T,\cZ.
    \end{align*}
    Here OMS means \emph{Outcome Mean Sufficient} and ODS means 
    \emph{Outcome Distribution Sufficient}.
    If $\cZ$ is $P$-valid for all $P\in \mathcal{P}$, we say that it is $\mathcal{P}$-valid. We define $\mathcal{P}$-OMS and $\mathcal{P}$-ODS analogously. 
\end{definition}
A few remarks are in order.
\begin{itemize}
    \item We have implicitly extended the shorthand notation described in Remark \ref{rmk:conditional} to the quantities in Definition \ref{def:adjustment}. 

    \item The quantity $\chi_t(\cZ; P)$ is deterministic, whereas $\pi_t(\cZ;P)$ and $b_t(\cZ;P)$ are $\cZ$-measurable real valued random variables. Thus, if $\cZ$ is generated by a representation $\bZ$, then by the Doob-Dynkin lemma \citep[Lemma 1.14]{kallenberg2021foundations} these random variables can be expressed as functions of $\bZ$. 
    This fact does not play a role until the discussion of estimation in Section~\ref{sec:estimation}.

    \item There are examples of descriptions $\cZ \subseteq \sigma(\bW)$ that are not given by representations. Such descriptions might be of little practical importance, but our results do not require $\cZ$ to be given by a representation. 
    We view the $\sigma$-algebraic framework as a convenient abstraction that generalizes equivalence classes of representations. 

    \item We have the following hierarchy of the properties in Definition~\ref{def:adjustment}:
    \begin{align*}
    P\textnormal{-ODS} 
    \Longrightarrow P \textnormal{-OMS} 
    \Longrightarrow P \textnormal{-valid},
    \end{align*}
    and the relations also hold if $P$ is replaced by $\mathcal{P}$.

    \item The $P$-ODS condition relates to existing concepts in statistics. The condition can be viewed as a $\sigma$-algebraic analogue of \textit{prognostic scores} \citep{hansen2008prognostic}, and it holds that any prognostic score generates a $P$-ODS description. Moreover, if a description $\cZ = \sigma(\bZ)$ is $P$-ODS, then its generator $\bZ$ is \textit{$c$-equivalent} to $\bW$ in the sense of \citet{pearl2009causality}, see in particular the claim following his Equation (11.8). In Remark~\ref{rmk:minimalsufficiency} we discuss the relations between $\mathcal{P}$-ODS descriptions and classical statistical sufficiency.
\end{itemize}

The notion of $\mathcal{P}$-valid descriptions can be viewed as a generalization of valid adjustment sets, where subsets are replaced with sub-$\sigma$-algebras.

\begin{example}[Comparison with adjustment sets in causal DAGs]
    Suppose $\bW \in \real^k$ and let $\mathcal{D}$ be a DAG on the nodes $\mathbf{V}=(T,\bW,Y)$. Let $\mathcal{P}=\mathcal{M}(\mathcal{D})$ be the collection of continuous distributions (on $\real^{k+2}$) that are Markovian with respect to $\mathcal{D}$ and with $\ex|Y|<\infty$.

    Any subset $\bZ\subseteq \bW$ is a representation of $\bW$ given by a coordinate projection, and the corresponding $\sigma$-algebra $\sigma(\bZ)$ is a description of $\bW$. In this framework, a subset $\bZ\subseteq \bW$ is called a \textit{valid adjustment set} for $(T,Y)$ if for all $P\in\mathcal{P}$, $t\in \mathbb{T}$, and $y\in \real$
    \begin{align*}
        \ex_P\left[
            \frac{\one(T=t)\one(Y\leq y)}{\mathbb{P}_P(T=t\given \mathrm{pa}_{\mathcal{D}}(T))}
            \right]
        = \ex_P[\mathbb{P}_P(Y\leq y\given T=t, \bZ)],
    \end{align*}
    where $\mathrm{pa}_{\mathcal{D}}$ denotes the parents $T$ in $\mathcal{D}$, see for example Definition 2 in \citet{rotnitzky2020efficient}.
    It turns out that $\bZ$ is a valid adjustment set if and only if 
    \begin{align*}
        \ex_P\left[
            \frac{\one(T=t)Y}{\mathbb{P}_P(T=t\given \mathrm{pa}_{\mathcal{D}}(T))}
            \right]
        = \ex_P[\ex_P[Y\given T=t, \bZ]] 
        = \chi_t(\cZ; P)
    \end{align*}
    for all $P \in\mathcal{P}$ and $t\in \mathbb{T}$. 
    This follows\footnote{We thank Leonard Henckel for pointing this out.} from results of \citet{perkovic2018complete}, which we discuss for completeness in Proposition~\ref{prop:adjset} in the supplement.
    Thus, if we assume that $\bW$ is a valid adjustment set, then any subset $\bZ\subseteq \bW$ is a valid adjustment set if and only if the corresponding description $\sigma(\bZ)$ is $\mathcal{P}$-valid. 
\end{example}

In general, $\mathcal{P}$-valid descriptions are not necessarily generated from valid adjustment sets. This can happen if structural assumptions are imposed on the conditional mean, which is illustrated in the following example.

\begin{example}\label{exm:magnitude}
    Suppose that the outcome regression is known to be invariant under rotations of $\bW \in \real^{k}$ such that for any $P\in \mathcal{P}$,
    \begin{align*}
        \ex_P[Y \given T,\bW] = \ex_P[Y \given T,\|\bW\|\,] \eqcolon g_P(T,\|\bW\|).
    \end{align*}
    Without further graphical assumptions, we cannot deduce that any proper subset of $\bW$ is a valid adjustment set. In contrast, the magnitude $\|\bW\|$ generates a $\mathcal{P}$-OMS -- and hence also $\mathcal{P}$-valid -- description of $\bW$ by definition.
    
    Suppose that there is a distribution $\tilde{P}\in \mathcal{P}$ for which 
    $g_{\tilde{P}}(t,\cdot)$ is bijective. 
    Then $\sigma(\|\bW\|)$ is also the smallest $\mathcal{P}$-OMS description up to $\mathbb{P}_{\tilde{P}}$-negligible sets: if $\cZ$ is another $\mathcal{P}$-OMS description, we see that $b_t(\cZ; \tilde{P}) = b_t(\bW; \tilde{P}) = g_{\tilde{P}}(t,\|\bW\|)$ almost surely. 
    Hence 
    $$
        \sigma(\|\bW\|) =\sigma(g_{\tilde{P}}(t,\|\bW\|))  
        \subseteq \ol{\sigma(b_t(\cZ; \tilde{P}))} 
        \subseteq \ol{\cZ},
    $$
    where overline denotes the $\mathbb{P}_{\tilde{P}}$-completion of a $\sigma$-algebra. That is, $\ol{\cZ}$
    is the smallest $\sigma$-algebra containing $\cZ$ and all its $\mathbb{P}_{\tilde{P}}$-negligible sets.
\end{example}
    

Even in the above example, where the regression function is known to depend on a one-dimensional function of $\bW$, it is generally not possible to estimate the regression function at a $\sqrt{n}$-rate without restrictive assumptions. 
Thus, modeling of the propensity score is required in order to obtain a $\sqrt{n}$-rate estimator of the adjusted mean. If $\bW$ is highly predictive of treatment, then naively applying a doubly robust estimator (AIPW, TMLE) can be statistically unstable due to large inverse propensity weights. Alternatively, since $\sigma(\|\bW\|)$ is a $\mathcal{P}$-valid description, we could also base our estimator on the \textit{pruned propensity} $\mathbb{P}(T=t \given \|\bW\|)$. 
This approach should intuitively provide more stable weights, as we expect $\|\bW\|$ to be less predictive of treatment. We proceed to analyze the difference between the asymptotic efficiencies of the two approaches and we show that, under reasonable conditions, the latter approach is never worse asymptotically.

\section{Efficiency bounds for adjusted means}\label{sec:informationbounds}
We now discuss efficiency bounds based on the concepts introduced in Section~\ref{sec:adjustment}. To this end, let $\bZ = \phi(\bW)$ be a representation of $\bW$. 
Under a sufficiently dense model $\mathcal{P}$, the \textit{influence function} for $\chi_t(\bZ;P)$ is given by 
\begin{equation}\label{eq:IF}
    \psi_{t}(\bZ; P) 
        = b_t(\bZ; P) + \frac{\one(T=t)}{\pi_t(\bZ; P)}(Y-b_t(\bZ; P))
            -\chi_t(\bZ;P).
\end{equation}
The condition that $\mathcal{P}$ is sufficiently dense entails that no `structural assumptions' are imposed on the functional forms of $\pi_t$ and $b_t$, see also \citet{robins1994estimation,robins1995semiparametric,hahn1998role}. 
Structural assumptions do not include smoothness conditions, but do include, for example, parametric model assumptions on the outcome regression.

Using the formalism of Section \ref{sec:adjustment}, we define 
$\psi_{t}(\cZ; P)$ analogously to \eqref{eq:IF} for any description $\cZ$ of $\bW$. 
We denote the variance of the influence function by
\begin{align}\label{eq:asympvar}
    \bV_t(\cZ;P) 
    &\coloneq \var_P[\psi_{t}(\cZ; P)] \nonumber \\
    &= \ex_P\left[
        \frac{\var_P(Y\given T=t, \cZ)}{\pi_t(\cZ;P)}
    \right]
    + \var_P(b_t(\cZ;P)).
\end{align}

The importance of this variance was highlighted by \citet{hahn1998role}, who computed $\bV_t(\bW;P)$ as the semiparametric efficiency bound for regular asymptotically linear (RAL) estimators of $\chi_t(\bW;P)$. That is, for any $\mathcal{P}$-consistent RAL estimator $\widehat{\chi}_t(\bW;P)$ based on $n$ i.i.d. observations, the asymptotic variance of $\sqrt{n}(\widehat{\chi}_t(\bW;P) - \chi_t(\bW;P))$ is bounded below by $\bV_t(\bW; P)$, provided that $\mathcal{P}$ does not impose structural assumptions on $\pi_t(\bW)$ and $b_t(\bW)$. 
Moreover, the asymptotic variance of both the TMLE and AIPW estimators achieve this lower bound when the propensity score and outcome regression can be estimated with sufficiently fast rates \citep{chernozhukov2018,van2011targeted}.

Since the same result can be applied for the representation $\bZ = \phi(\bW)$, we see that estimation of $\chi_t(\bZ; P)$ has a semiparametric efficiency bound of $\bV_t(\bZ ; P)$. Now suppose that $\sigma(\bZ)$ is a $\mathcal{P}$-valid description of $\bW$, which means that $\chi_t \coloneq \chi_t(\bW;P) = \chi_t(\bZ; P)$ for all $P$. It is then natural to ask if we should estimate $\chi_t$ based on $\bW$ or the representation $\bZ$? 
We proceed to investigate this question based on which of the efficiency bounds $\bV_t(\bZ ; P)$ or $\bV_t(\bW ; P)$ is smallest. 

If $\bV_t(\bZ ; P)$ is smaller than $\bV_t(\bW ; P)$, then $\bV_t(\bW ; P)$ is not an actual efficiency bound for $\chi_t(\bW;P)$. This does not contradict the result of \citet{hahn1998role}, as we have assumed the existence of a non-trivial $\mathcal{P}$-valid description of $\bW$, which implicitly imposes structural assumptions on the functional form of $b_t(\bW)$. Nevertheless, it is sensible to compare the variances $\bV_t(\bZ ; P)$ and $\bV_t(\bW ; P)$, as these are the asymptotic variances of the AIPW when using either $\bZ$ or $\bW$, respectively, as a basis for the nuisance functions. 

We formulate our efficiency bounds in terms of the more general contrast parameter
\begin{equation}\label{eq:contrasttarget}
    \Delta = \Delta(\bW ; P) = \sum_{t\in \mathbb{T}} c_t \chi_t(\bW; P),
\end{equation}
where $\mathbf{c}\coloneq (c_t)_{t\in \mathbb{T}}$ are fixed real-valued coefficients. 
The prototypical example, when $\mathbb{T} = \{0,1\}$, is $\Delta =\chi_1 - \chi_0$, which is the average treatment effect under causal assumptions, cf. the discussion following Assumption~\ref{asm:positivity}. 
Note that the family of $\Delta$-parameters includes the adjusted mean $\chi_t$ as a special case.

To estimate $\Delta$, we consider estimators of the form
\begin{equation}\label{eq:hatDelta}
    \widehat \Delta (\cZ;P) =
        \sum_{t\in \mathbb{T}} c_t\widehat\chi_t(\cZ;P),
\end{equation}
where $\widehat\chi_t(\cZ;P)$ denotes a consistent RAL estimator of $\chi_t(\cZ;P)$.
Correspondingly, the efficiency bound for such an estimator is
\begin{equation}
    \avar(\cZ;P) \coloneq \var_P\Big(\sum_{t\in \mathbb{T}} c_t\psi_{t}(\cZ; P)\Big).
\end{equation}

It turns out that two central results by \citet{rotnitzky2020efficient}, 
specifically their Lemmas 4 and 5, can be generalized from covariate subsets to descriptions. One conceptual difference is that there is \textit{a priori} no natural generalization of \textit{precision variables} and \textit{overadjustment (instrumental) variables}. To wit, if $\cZ_1$ and $\cZ_2$ are descriptions, there is no canonical way\footnote{equivalent conditions to the existence of an independent complement are given in Proposition 4 in \citet{emery2001vershik}.} to subtract $\cZ_2$ from $\cZ_1$ in a way that maintains their join $\cZ_1 \vee \cZ_2 \coloneq \sigma(\cZ_1,\cZ_2)$. 
Apart from this technical detail, the proofs translate more or less directly. The following lemma is a direct extension of \citet[Lemma 4]{rotnitzky2020efficient}.

\begin{lem}[Deletion of overadjustment]\label{lem:overadj}
    Fix a distribution $P\in \mathcal{P}$ and let $\cZ_1\subseteq \cZ_2$ be $\sigma$-algebras such that $Y\indP \cZ_2 \given T,\cZ_1$. Then it always holds that
    $$
    \avar(\cZ_2;P) - \avar(\cZ_1;P)
    = \sum_{t\in\mathbb{T}} c_t^2 D_t(\cZ_1,\cZ_2;P) \geq 0,
    $$
    where for each $t\in \mathbb{T}$,
    \begin{align*}
        D_t&(\cZ_1,\cZ_2;P) \coloneq \bV_t(\cZ_2 ; P)-\bV_t(\cZ_1 ; P) \\
        &= \ex_P\,\Big[
            \pi_t(\cZ_1;P)
            \var(Y\given T=t,\cZ_1)
            \var\big(\pi_t(\cZ_2;P)^{-1}
                \big| \; T=t,\cZ_1 \big)\Big].
    \end{align*}
    Moreover, if $\cZ_2$ is a description of $\bW$ then $\cZ_1$ is $P$-valid if and only if $\cZ_2$ is $P$-valid.
\end{lem}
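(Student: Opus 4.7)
The plan is to reduce the identity to a per-treatment computation and then evaluate the resulting variance difference using an iterated-expectation trick with the weight $\pi_t(\cZ_2;P)^{-1}$.

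First, I would observe that the hypothesis $Y\indP \cZ_2\given T,\cZ_1$ together with $\cZ_1\subseteq \cZ_2$ implies that the conditional distribution of $Y$ given $(T,\cZ_2)$ coincides with the conditional distribution of $Y$ given $(T,\cZ_1)$. Consequently $b_t(\cZ_2;P)=b_t(\cZ_1;P)$ and $\var_P(Y\given T=t,\cZ_2)=\var_P(Y\given T=t,\cZ_1)$ almost surely. The equivalence of $P$-validity then follows immediately, since $\chi_t(\cZ;P)=\ex_P[b_t(\cZ;P)]$, so $\chi_t(\cZ_2;P)=\chi_t(\cZ_1;P)$.

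Next I would expand $\avar(\cZ;P)=\sum_{t,s}c_tc_s\cov_P(\psi_t(\cZ;P),\psi_s(\cZ;P))$ and show that the off-diagonal terms are equal for $\cZ_1$ and $\cZ_2$. For $t\neq s$ the product $\one(T=t)\one(T=s)$ vanishes, so only the `mean' parts of $\psi_t$ and $\psi_s$ survive; the cross-terms between a mean part and an augmentation part vanish by conditioning on $\cZ$, because $\ex_P[\one(T=r)(Y-b_r(\cZ;P))\given \cZ]=0$. Hence $\cov_P(\psi_t(\cZ;P),\psi_s(\cZ;P))=\cov_P(b_t(\cZ;P),b_s(\cZ;P))$ for $t\neq s$. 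Since $b_r(\cZ_2;P)=b_r(\cZ_1;P)$ for every $r$, these cross-covariances agree for $\cZ_1$ and $\cZ_2$, so $\avar(\cZ_2;P)-\avar(\cZ_1;P)=\sum_t c_t^2\bigl(\bV_t(\cZ_2;P)-\bV_t(\cZ_1;P)\bigr)=\sum_t c_t^2 D_t(\cZ_1,\cZ_2;P)$.

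For the pointwise identity I would use the decomposition $\bV_t(\cZ;P)=\ex_P[\var_P(Y\given T=t,\cZ)/\pi_t(\cZ;P)]+\var_P(b_t(\cZ;P))$, which was verified below equation \eqref{eq:asympvar}. Because $b_t$ and $\var_P(Y\given T=t,\cdot)$ are the same for $\cZ_1$ and $\cZ_2$, the difference collapses to
\[
    D_t(\cZ_1,\cZ_2;P)=\ex_P\!\left[\var_P(Y\given T=t,\cZ_1)\Bigl(\pi_t(\cZ_2;P)^{-1}-\pi_t(\cZ_1;P)^{-1}\Bigr)\right].
\]
To rewrite each term I would apply the weight identity $\ex_P[g\cdot\pi_t(\cZ;P)^{-1}]=\ex_P[g\cdot \one(T=t)\pi_t(\cZ;P)^{-2}]$ for any $\cZ$-measurable $g$, valid for $\cZ\in\{\cZ_1,\cZ_2\}$, and then condition on $\{T=t\}\vee \cZ_1$ (in the sense of Remark~\ref{rmk:conditional}). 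The key intermediate identity is $\ex_P[\pi_t(\cZ_2;P)^{-1}\given T=t,\cZ_1]=\pi_t(\cZ_1;P)^{-1}$, obtained by a short tower computation through $\cZ_2$ using $\ex_P[\one(T=t)\given \cZ_2]=\pi_t(\cZ_2;P)$. Substituting yields
\[
    D_t(\cZ_1,\cZ_2;P)=\ex_P\!\left[\pi_t(\cZ_1;P)\var_P(Y\given T=t,\cZ_1)\Bigl(\ex_P[\pi_t(\cZ_2;P)^{-2}\given T=t,\cZ_1]-\pi_t(\cZ_1;P)^{-2}\Bigr)\right],
\]
and the bracketed quantity is precisely $\var_P(\pi_t(\cZ_2;P)^{-1}\given T=t,\cZ_1)$ by the intermediate identity. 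Non-negativity of each $D_t$ is then automatic.

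The main obstacle is the careful handling of conditional expectations conditioned simultaneously on the event $\{T=t\}$ and on the $\sigma$-algebra $\cZ_1$; the weight identity and the tower computation through $\cZ_2$ package this cleanly, after which the rest is algebra.
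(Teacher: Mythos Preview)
Your argument is correct and reaches the same conclusion, but the organization differs from the paper's. The paper proves the identity by a single application of the law of total variance, conditioning on $(T,Y,\cZ_1)$: it first shows $\ex_P[\psi_t(\cZ_2;P)\given T,Y,\cZ_1]=\psi_t(\cZ_1;P)$ (using the same inverse-propensity identity you derive), so that $\bV_t(\cZ_1;P)$ is the explained variance and $D_t$ is the residual $\ex_P[\var_P(\psi_t(\cZ_2;P)\given T,Y,\cZ_1)]$, which is then computed directly. The off-diagonal terms are handled by noting that $\psi_t(\cZ_2;P)$ and $\psi_{t'}(\cZ_2;P)$ are \emph{conditionally} uncorrelated given $(T,Y,\cZ_1)$, so the conditional variance of $\mathbf{c}^\top\psi(\cZ_2;P)$ splits into a diagonal sum. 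By contrast, you work unconditionally: you use the pre-computed decomposition \eqref{eq:asympvar} to subtract the two $\bV_t$'s term by term, and you treat the off-diagonals separately by showing $\cov_P(\psi_t(\cZ;P),\psi_s(\cZ;P))=\cov_P(b_t(\cZ;P),b_s(\cZ;P))$ for either $\cZ$. The paper's route is slightly more economical because the single conditioning on $(T,Y,\cZ_1)$ handles the diagonal and off-diagonal pieces simultaneously and yields the $D_t$ formula without the extra weight-identity manipulation; your route makes it more transparent that the $\var_P(b_t)$ pieces cancel outright and that the entire difference lives in the inverse-propensity term.
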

The lemma quantifies the efficiency lost in adjustment when adding information that is irrelevant for the outcome. 

We proceed to apply this lemma to the minimal information in $\bW$ that is predictive of $Y$ conditionally on $T$. To define this information, we use the regular conditional distribution function of $Y$ given $T=t,\bW=\bw$, which we denote by
\begin{align*}
    F(y\given t,\bw; P) \coloneq \mathbb{P}_P (Y\leq y \given T=t,\bW=\bw), 
        \qquad y\in \real, t\in\mathbb{T}, \bw\in \mathbb{W}.
\end{align*}
See \citet[Sec. 8]{kallenberg2021foundations} for a rigorous treatment of regular conditional distributions.
We will in the following, by convention, take
\begin{equation} \label{eq:btw}
    b_t(\bw;P) = \int y\,\mathrm{d}F(y\given t,\bw; P),
\end{equation}
so that $b_t(\bW;P)$ is given in terms of the regular conditional distribution.

\begin{definition}\label{def:OutcomeAlgebras}
    Define the $\sigma$-algebras
    \begin{align*}
        \mathcal{Q} &\coloneq \bigvee_{P\in\mathcal{P}} \mathcal{Q}_P,
            \qquad \mathcal{Q}_P \coloneq \sigma(F(y\given t,\bW; P) ; \, y\in \real, t\in \mathbb{T}), \\
        \mathcal{R} &\coloneq \bigvee_{P\in\mathcal{P}} \mathcal{R}_P,
            \qquad \mathcal{R}_P \coloneq \sigma(b_t(\bW; P) ; \, t\in \mathbb{T}).
    \end{align*}
\end{definition}

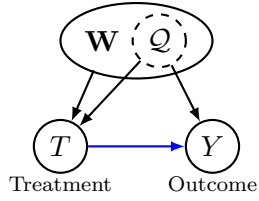
\begin{figure}
    \centering
    \begin{tikzpicture}[node distance=1.5cm, thick, roundnode/.style={circle, draw, inner sep=1pt,minimum size=7mm},
      squarenode/.style={rectangle, draw, inner sep=1pt, minimum size=7mm}]
        \node[roundnode] (T) at (0,0) {$T$};
        \node (t) at (0,-0.5) {\scriptsize Treatment};
        \node[roundnode,dashed] (W) at (1.3,1.4) {$\mathcal{Q}$};
        \node (Z) at (0.55,1.4) {$\bW$};
        \node (z) at (0.5,1.15) {};
        \node[roundnode] (Y) at (2,0) {$Y$};
        \node (y) at (2,-0.5) {\scriptsize Outcome};
        
        \draw (1,1.4) ellipse (1cm and 0.5cm);
        
        \path [-latex,draw,thick] (W) edge [bend left = 0] node {} (T);
        \path [-latex,draw,thick] (z) edge [bend left = 0] node {} (T);
        \path [-latex,draw,thick] (W) edge [bend left = 0] node {} (Y);
        \path [-latex,draw,thick, blue] (T) edge [bend left = 0] node {} (Y);
    \end{tikzpicture}
    \caption{The $\sigma$-algebra $\mathcal{Q}$ given in Definition~\ref{def:OutcomeAlgebras}
    as a description of $\bW$.}
    \label{fig:COA}
\end{figure}
Note that $\mathcal{Q}_P$, $\mathcal{Q}$, $\mathcal{R}_P$ and $\mathcal{R}$ are all descriptions 
of $\bW$, see Figure \ref{fig:COA} for a depiction of the information contained in $\mathcal{Q}$. 
Note also that $\mathcal{R}_P \subseteq \mathcal{Q}_P$ by the convention \eqref{eq:btw}.

We now state one of the main results of this section. 
\begin{thm}\label{thm:COSefficiency}
    Fix $P\in \mathcal{P}$. Then, under Assumption \ref{asm:positivity}, it holds that
    \begin{enumerate}
        \item[(i)] If $\cZ$ is a description of $\bW$ and $\mathcal{Q}_P\subseteq \ol{\cZ}$, then $\cZ$ is $P$-ODS. In particular, $\mathcal{Q}_P$ is a $P$-ODS description of $\bW$.
        \item[(ii)] If $\cZ$ is $P$-ODS description, then $\mathcal{Q}_P\subseteq \ol{\cZ}$.
        \item[(iii)] For any $P$-ODS description $\cZ$ it holds that
        \begin{equation}\label{eq:COApointwise}
            \avar(\cZ; P) - \avar(\mathcal{Q}_P; P) 
            = \sum_{t\in \mathbb{T}} 
                c_t^2 D_t(\mathcal{Q}_P,\cZ; P)
            \geq 0,
        \end{equation}
        where $D_t$ is given as in Lemma~\ref{lem:overadj}.
    \end{enumerate}
\end{thm}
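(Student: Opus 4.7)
The strategy is to first establish the characterization of $P$-ODS descriptions in terms of $\mathcal{Q}_P$ in parts (i) and (ii), and then apply Lemma~\ref{lem:overadj} to compare the efficiency bounds in part (iii). The central observation driving (i) and (ii) is that $F(y\given t,\bW;P)$ is, by definition of $\mathcal{Q}_P$, a $\mathcal{Q}_P$-measurable random variable for each $(t,y)$, so $F(y\given T,\bW;P)$ is $(\sigma(T)\vee \mathcal{Q}_P)$-measurable and hence a regular conditional distribution of $Y$ given $(T,\bW)$.

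For part (i), assume $\mathcal{Q}_P\subseteq\ol{\cZ}$. Then for each $y$, $F(y\given T,\bW;P)$ has a $(\sigma(T)\vee \cZ)$-measurable version, and by the tower property
\[
    \mathbb{P}_P(Y\leq y\given T,\cZ)
    = \ex_P[F(y\given T,\bW;P)\given T,\cZ]
    = F(y\given T,\bW;P)
    = \mathbb{P}_P(Y\leq y\given T,\bW).
\]
Running this over rational $y$ and using right-continuity of distribution functions yields the equality of the full regular conditional distributions, which since $\cZ\subseteq\sigma(\bW)$ is equivalent to $Y\indP\bW\given T,\cZ$. The special case $\cZ=\mathcal{Q}_P$ gives that $\mathcal{Q}_P$ is itself $P$-ODS.

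For part (ii), suppose $\cZ$ is $P$-ODS. Since $\cZ\subseteq\sigma(\bW)$, conditional independence gives $F(y\given T,\bW;P)=\mathbb{P}_P(Y\leq y\given T,\cZ)$ for each fixed $y$, so $F(y\given T,\bW;P)$ admits a $(\sigma(T)\vee\ol{\cZ})$-measurable version. Writing such a version on $\{T=t\}$ yields an $\ol{\cZ}$-measurable random variable $g_{t,y}$ with $F(y\given t,\bW;P)=g_{t,y}$ almost surely on $\{T=t\}$. To extend this to a global equality we invoke positivity: if $E\coloneq\{F(y\given t,\bW;P)\neq g_{t,y}\}$ had positive probability, then
\[
    \mathbb{P}_P(E\cap\{T=t\})=\ex_P[\one_E\,\mathbb{P}_P(T=t\given \bW)]>0,
\]
contradicting the equality on $\{T=t\}$. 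Hence $F(y\given t,\bW;P)$ is $\ol{\cZ}$-measurable for each $(t,y)$, and ranging $y$ over a countable dense set gives $\mathcal{Q}_P\subseteq\ol{\cZ}$. This transfer from conditioning on $\{T=t\}$ back to a statement about $\bW$ alone is, I expect, the most delicate step of the proof; it is exactly where positivity is needed.

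For part (iii), note that $\avar(\cdot;P)$ depends only on conditional expectations, which are invariant under completion, so we may assume $\cZ$ is complete and then part (ii) yields $\mathcal{Q}_P\subseteq\cZ$. By part (i), $\mathcal{Q}_P$ is $P$-ODS, so $Y\indP\bW\given T,\mathcal{Q}_P$; since $\cZ\subseteq\sigma(\bW)$ this implies $Y\indP\cZ\given T,\mathcal{Q}_P$, which is precisely the hypothesis of Lemma~\ref{lem:overadj} with $\cZ_1=\mathcal{Q}_P$ and $\cZ_2=\cZ$. Expanding $\avar=\var_P(\sum_t c_t\psi_t)$ using the fact that the oracle residuals $\one(T=t)(Y-b_t)/\pi_t$ for distinct $t$ are orthogonal (their product vanishes because $\mathbb{T}$ is discrete) reduces the difference $\avar(\cZ;P)-\avar(\mathcal{Q}_P;P)$ to the sum $\sum_t c_t^2(\bV_t(\cZ;P)-\bV_t(\mathcal{Q}_P;P))$, and Lemma~\ref{lem:overadj} identifies each summand as $c_t^2 D_t(\mathcal{Q}_P,\cZ;P)\geq 0$, giving \eqref{eq:COApointwise}.
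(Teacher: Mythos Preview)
Your proof is correct and follows essentially the same approach as the paper's: use the $\sigma(T,\mathcal{Q}_P)$-measurability of $F(y\given T,\bW;P)$ for (i), Doob's characterization together with positivity for (ii), and then apply Lemma~\ref{lem:overadj} with $\cZ_1=\mathcal{Q}_P$, $\cZ_2=\ol{\cZ}$ for (iii). One remark on (iii): your orthogonality argument is unnecessary, since Lemma~\ref{lem:overadj} already states the full identity $\avar(\cZ_2;P)-\avar(\cZ_1;P)=\sum_t c_t^2 D_t(\cZ_1,\cZ_2;P)$ directly; moreover, the orthogonality of the residual terms $\one(T=t)(Y-b_t)/\pi_t$ alone does not suffice for the decomposition you claim---you also implicitly use that $b_t(\cZ)=b_t(\mathcal{Q}_P)$ so that the cross-covariances $\cov(\psi_t,\psi_{t'})=\cov(b_t,b_{t'})$ coincide for the two descriptions and cancel in the difference.
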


Together parts (i) and (ii) state that a description of $\bW$ is $P$-ODS if and only if its $\mathbb{P}_P$-completion contains $\mathcal{Q}_P$. Part (iii) states that, under $P\in \mathcal{P}$, $\mathcal{Q}_P$ leads to 
the optimal efficiency bound among all $P$-ODS descriptions.

In the following corollary we use $\ol{\cZ}^P$ to denote the $\mathbb{P}_P$-completion of a $\sigma$-algebra $\cZ$.

\begin{cor}\label{cor:Qefficiency} Let $\cZ$ be a description of $\bW$. Then 
   $\cZ$ is $\mathcal{P}$-ODS if and only if $\mathcal{Q}_P \subseteq \ol{\cZ}^P$ for all $P \in \mathcal{P}$.  
   A sufficient condition for $\cZ$ to be $\mathcal{P}$-ODS is that 
    $\mathcal{Q} \subseteq \ol{\cZ}^P$ for all $P \in \mathcal{P}$, in which case
    \begin{equation}\label{eq:COAuniform}
        \avar(\cZ; P) - \avar(\mathcal{Q}; P) 
        = \sum_{t\in \mathbb{T}} 
            c_t^2 D_t(\mathcal{Q},\cZ; P)\geq 0,
            \qquad P\in \mathcal{P}.
    \end{equation}
    In particular, $\mathcal{Q}$ is a $\mathcal{P}$-ODS description of $\bW$, and \eqref{eq:COAuniform} holds with $\cZ = \sigma(\bW)$.
\end{cor}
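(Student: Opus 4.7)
The entire corollary should follow from Theorem~\ref{thm:COSefficiency} together with the definition $\mathcal{Q} = \bigvee_{P\in\mathcal{P}} \mathcal{Q}_P$, and I expect the proof to be short. The plan is to address the four assertions in sequence, using parts (i)--(iii) of the theorem and Lemma~\ref{lem:overadj} essentially as black boxes.

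For the equivalence in the first sentence: by definition, $\cZ$ is $\mathcal{P}$-ODS iff it is $P$-ODS for every $P \in \mathcal{P}$. Parts (i) and (ii) of Theorem~\ref{thm:COSefficiency} together say that, for a description $\cZ$, being $P$-ODS is equivalent to $\mathcal{Q}_P \subseteq \ol{\cZ}^P$. Quantifying over $P$ gives the equivalence. The sufficient condition then follows immediately: since $\mathcal{Q}_P \subseteq \mathcal{Q}$ by construction, the hypothesis $\mathcal{Q} \subseteq \ol{\cZ}^P$ for all $P\in\mathcal{P}$ forces $\mathcal{Q}_P \subseteq \ol{\cZ}^P$ for all $P$, so $\cZ$ is $\mathcal{P}$-ODS.

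For the identity \eqref{eq:COAuniform}, I plan to apply Lemma~\ref{lem:overadj} with $\cZ_1 = \mathcal{Q}$ and $\cZ_2 = \cZ \vee \mathcal{Q}$. The inclusion $\cZ_1 \subseteq \cZ_2$ is automatic, and the required conditional independence $Y \indP \cZ_2 \given T, \cZ_1$ reduces to $Y \indP \bW \given T, \mathcal{Q}$, which holds because part~(i) of the theorem -- applied with $\mathcal{Q}_P \subseteq \mathcal{Q}$ -- ensures that $\mathcal{Q}$ itself is $P$-ODS. The lemma then yields the identity and the non-negativity, but with $\cZ \vee \mathcal{Q}$ in place of $\cZ$ on the left-hand side.

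The main (and essentially only) subtle point is that the hypothesis only gives $\mathcal{Q} \subseteq \ol{\cZ}^P$, not $\mathcal{Q} \subseteq \cZ$. The remedy is to record that $\pi_t(\cdot; P)$, $b_t(\cdot; P)$, and hence $\avar(\cdot; P)$ and $D_t(\cdot,\cdot; P)$, depend on their $\sigma$-algebra argument only up to $\mathbb{P}_P$-negligible sets; this is standard for conditional expectations but is the one place where I expect care is needed. Under the hypothesis, $\ol{\cZ \vee \mathcal{Q}}^P = \ol{\cZ}^P$, so the lemma's output translates back into the form \eqref{eq:COAuniform} stated in the corollary. Finally, the ``in particular'' statement is obtained by specializing the sufficient condition to $\cZ = \mathcal{Q}$ (where $\mathcal{Q} \subseteq \ol{\mathcal{Q}}^P$ trivially) and to $\cZ = \sigma(\bW)$ (where $\mathcal{Q} \subseteq \sigma(\bW)$ trivially from the construction of $\mathcal{Q}$).
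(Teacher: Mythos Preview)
The proposal is correct and takes essentially the same approach as the paper. The only cosmetic difference is in the application of Lemma~\ref{lem:overadj}: the paper takes $\cZ_2 = \ol{\cZ}^P$ directly and then uses $\avar(\ol{\cZ}^P; P) = \avar(\cZ; P)$, whereas you take $\cZ_2 = \cZ \vee \mathcal{Q}$ and invoke the completion-invariance afterward via $\ol{\cZ \vee \mathcal{Q}}^P = \ol{\cZ}^P$; both routes rest on the same observation.
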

\begin{remark}
    It is \textit{a priori} not obvious if $\mathcal{Q}$ is given by a representation, i.e., if  $\mathcal{Q}=\sigma(\phi(\bW))$ for some measurable mapping $\phi$.
    In Example \ref{exm:magnitude} it is, since the arguments can be reformulated to conclude that $\mathcal{Q} = \sigma(\|\bW\|)$.
\end{remark}

Instead of working with the entire conditional distribution, it suffices to work with the conditional mean when assuming, e.g., independent additive noise on the outcome.
\begin{prop}\label{prop:COMSefficiency}
    For fixed $P\in \mathcal{P}$, then $\mathcal{Q}_P = \mathcal{R}_P$ 
    if any of the following statements are true:
    \begin{itemize}
        \item $F(y\given t,\bW;P)$ is $\sigma(b_t(\bW;P))$-measurable for each $t\in \mathbb{T},y\in \real$.

        \item $Y$ is a binary outcome. 

        \item $Y$ has independent additive noise, i.e.,
            $Y = b_T(\bW) + \varepsilon_Y$ with $\varepsilon_Y \indP T,\bW$.
    \end{itemize}
    If $\mathcal{Q}_P = \mathcal{R}_P$ holds,
    then \eqref{eq:COApointwise} also holds for any $P$-OMS description.
\end{prop}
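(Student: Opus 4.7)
The proof splits naturally into two parts: first establish the equality $\mathcal{Q}_P = \mathcal{R}_P$ under each of the three listed conditions, and then derive the final assertion by combining this equality with Theorem~\ref{thm:COSefficiency}. In all three cases the inclusion $\mathcal{R}_P \subseteq \mathcal{Q}_P$ is free: by the convention $b_t(\bw;P) = \int y\,\mathrm{d}F(y\mid t,\bw;P)$, the random variable $b_t(\bW;P)$ is a measurable functional of the regular conditional distribution and thus $\mathcal{R}_P$-measurable generators lie in $\mathcal{Q}_P$. Hence in every case the task reduces to showing that each generator $F(y\mid t,\bW;P)$ of $\mathcal{Q}_P$ is $\mathcal{R}_P$-measurable.

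For condition (i) this is the hypothesis, as $\sigma(b_t(\bW;P)) \subseteq \mathcal{R}_P$ for every $t\in\mathbb{T}$. For condition (ii), a Bernoulli-type conditional distribution is determined by its mean: splitting into the three regimes $y<0$, $0\le y<1$, and $y\ge 1$ gives explicitly $F(y\mid t,\bW;P)\in\{0,\,1-b_t(\bW;P),\,1\}$, so $F(y\mid t,\bW;P)$ is a deterministic function of $b_t(\bW;P)$. For condition (iii), independence of $\varepsilon_Y$ from $(T,\bW)$ under $P$ yields the location-family identity
\begin{equation*}
F(y\mid t,\bw;P) = F_{\varepsilon_Y}\bigl(y - b_t(\bw;P)\bigr),
\end{equation*}
where $F_{\varepsilon_Y}$ is the marginal cdf of $\varepsilon_Y$; once again $F(y\mid t,\bW;P)$ is $\sigma(b_t(\bW;P))$-measurable. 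In each case we conclude $\mathcal{Q}_P \subseteq \mathcal{R}_P$, hence equality.

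For the final assertion, assume $\mathcal{Q}_P=\mathcal{R}_P$ and let $\cZ$ be any $P$-OMS description of $\bW$. By definition $b_t(\cZ;P) = b_t(\bW;P)$ almost surely for each $t$, so $b_t(\bW;P)$ admits a $\cZ$-measurable version. Passing to the $\mathbb{P}_P$-completion gives $\mathcal{R}_P \subseteq \ol{\cZ}^P$, and the standing equality then yields $\mathcal{Q}_P \subseteq \ol{\cZ}^P$. Theorem~\ref{thm:COSefficiency}(i) now promotes $\cZ$ from $P$-OMS to $P$-ODS, and part (iii) of the same theorem supplies the identity \eqref{eq:COApointwise}.

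There is no genuine obstacle in the argument; the only point requiring mild care is the systematic passage to $\mathbb{P}_P$-completions, since conditional expectations are only defined up to null sets. This is already absorbed by the formulation of Theorem~\ref{thm:COSefficiency}, so it enters the proof only as bookkeeping rather than as a substantive difficulty.
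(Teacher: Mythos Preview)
Your proof is correct and follows essentially the same route as the paper: reduce to $\mathcal{Q}_P\subseteq\mathcal{R}_P$ in each of the three cases by showing $F(y\mid t,\bW;P)$ is $\sigma(b_t(\bW;P))$-measurable, and for the final assertion use $P$-OMS to get $\mathcal{R}_P\subseteq\ol{\cZ}$ and then invoke Theorem~\ref{thm:COSefficiency}. Your version is slightly more explicit in separating the appeal to parts (i) and (iii) of that theorem, but the substance is identical.
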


\begin{remark} When $\mathcal{Q}_P = \mathcal{R}_P$
    for all $P \in \mathcal{P}$ we have $\mathcal{Q} = \mathcal{R}$. There is thus the same information 
    in the $\sigma$-algebra $\mathcal{R}$ generated by the conditional means of the outcome as 
    there is in the $\sigma$-algebra $\mathcal{Q}$ generated by the entire conditional distribution of the outcome. 
    The three conditions in Proposition~\ref{prop:COMSefficiency} are sufficient but not necessary
    to ensure this.
\end{remark}


We also have a result analogous to Lemma 4 of \citet{rotnitzky2020efficient}:

\begin{lem}[Supplementation with precision]\label{lem:underadj}
    Fix $P\in\mathcal{P}$ and let $\cZ_1\subseteq \cZ_2$ be descriptions 
    of $\bW$ such that $T\indP \cZ_2 \given \cZ_1$. Then $\cZ_1$ is $P$-valid if and only if $\cZ_2$ is $P$-valid. Irrespectively, it always holds that
    \begin{align*}
        \avar(\cZ_1 ; P)-\avar(\cZ_2 ; P)
        = 
        \mathbf{c}^\top \var_P[\mathbf{R}(\cZ_1,\cZ_2;P)] \mathbf{c} \ge 0
    \end{align*}
    where $\mathbf{R}(\cZ_1,\cZ_2;P) \coloneq (R_t(\cZ_1,\cZ_2;P))_{t\in \mathbb{T}}$ with
    \begin{align*}
        R_t(\cZ_1,\cZ_2;P) 
            &\coloneq \pa{\frac{\one(T=t)}{\pi_t(\cZ_2;P)} - 1}
            \pa{b_t(\cZ_2;P) - b_t(\cZ_1;P)}.
    \end{align*}
    Writing $R_t = R_t(\cZ_1,\cZ_2;P)$, the components of the covariance matrix of $\mathbf{R}$ are given by
    \begin{align*}
        \var_P(R_t) 
            &= \ex_P \left[
                \pa{\frac{1}{\pi_t(\cZ_1;P)} - 1}
                    \var_P[b_t(\cZ_2;P)\given \cZ_1]\right], \\
        \cov_P(R_s,R_t) &= 
            - \ex_P[\cov_P(b_s(\cZ_2;P)b_t(\cZ_2;P)\given \cZ_1)].
    \end{align*}
\end{lem}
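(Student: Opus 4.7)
The plan is to build the argument in four steps that parallel the statement. First, the hypothesis $T \indP \cZ_2 \given \cZ_1$ combined with $\cZ_1 \subseteq \cZ_2$ gives immediately that $\pi_t(\cZ_2;P) = \mathbb{P}_P(T=t\given \cZ_1,\cZ_2) = \pi_t(\cZ_1;P)$. Writing $\pi_t \coloneq \pi_t(\cZ_1;P) = \pi_t(\cZ_2;P)$, pulling $\pi_t^{-1}$ outside a conditional expectation given $\cZ_1$, and using the tower property yields
\begin{align*}
    \chi_t(\cZ_2;P)
    &= \ex_P\!\left[\pi_t^{-1}\,\ex_P[Y\one(T=t)\given \cZ_2]\right]
    = \ex_P\!\left[\pi_t^{-1}\,\ex_P[Y\one(T=t)\given \cZ_1]\right]
    = \chi_t(\cZ_1;P).
\end{align*}
The same computation, applied to $b_t$, shows that $\ex_P[b_t(\cZ_2;P)\given \cZ_1] = b_t(\cZ_1;P)$, which will be essential in Step~4. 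Both facts together give the first claim of the lemma.

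Second, I would substitute $\pi_t(\cZ_1;P) = \pi_t(\cZ_2;P) = \pi_t$ in the definitions of $\psi_t(\cZ_1;P)$ and $\psi_t(\cZ_2;P)$ and check by algebra that
\begin{equation*}
    \psi_t(\cZ_1;P) - \psi_t(\cZ_2;P)
    = \pa{\frac{\one(T=t)}{\pi_t}-1}\pa{b_t(\cZ_2;P) - b_t(\cZ_1;P)}
    = R_t(\cZ_1,\cZ_2;P),
\end{equation*}
where the constants $\chi_t(\cZ_1;P)$ and $\chi_t(\cZ_2;P)$ cancel by Step~1.

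Third, I would establish the orthogonality $\cov_P(\psi_t(\cZ_2;P), R_s) = 0$ for every $s,t \in \mathbb{T}$, which is the key step turning the decomposition in Step~2 into a variance decomposition. Splitting $\psi_t(\cZ_2;P) = (b_t(\cZ_2;P)-\chi_t) + U_t$ with $U_t \coloneq \pi_t^{-1}\one(T=t)(Y-b_t(\cZ_2;P))$, I will show that both $\ex_P[R_s \given \cZ_2] = 0$ (since $\ex_P[\one(T=s)\given \cZ_2] = \pi_s$) and $\ex_P[U_t R_s\given T,\cZ_2] = 0$ (the factor $\ex_P[Y-b_t(\cZ_2;P)\given T=t,\cZ_2] = 0$ kills it when $T=t$, and $U_t$ vanishes otherwise). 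Taking iterated expectations gives $\ex_P[b_t(\cZ_2;P)R_s] = 0$ and $\ex_P[U_t R_s] = 0$, whence $\cov_P(\psi_t(\cZ_2;P),R_s)=0$; combined with Step~2 this yields
\begin{equation*}
    \avar(\cZ_1;P) - \avar(\cZ_2;P) = \mathbf{c}^\top \var_P[\mathbf{R}(\cZ_1,\cZ_2;P)]\mathbf{c} \geq 0.
\end{equation*}

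Finally, the covariance-matrix entries follow by direct moment computations using iterated expectations: conditioning on $\cZ_2$ gives $\ex_P[(\pi_t^{-1}\one(T=t)-1)^2\given \cZ_2] = \pi_t^{-1}-1$, and then conditioning on $\cZ_1$ together with $\ex_P[b_t(\cZ_2;P)\given \cZ_1] = b_t(\cZ_1;P)$ converts the squared increment into $\var_P(b_t(\cZ_2;P)\given \cZ_1)$, giving the formula for $\var_P(R_t)$. For $s\neq t$, the product $\one(T=s)\one(T=t)$ vanishes so $\ex_P[(\pi_s^{-1}\one(T=s)-1)(\pi_t^{-1}\one(T=t)-1)\given \cZ_2] = -1$, and an analogous tower step produces $-\ex_P[\cov_P(b_s(\cZ_2;P),b_t(\cZ_2;P)\given \cZ_1)]$. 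I anticipate that the main technical point is the orthogonality in Step~3 -- in particular the careful case split on whether $T=t$ or not when integrating out $U_t R_s$ -- while the remaining steps are bookkeeping with the tower property.
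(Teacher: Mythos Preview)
Your proposal is correct and follows essentially the same route as the paper: equality of propensities from $T\indP\cZ_2\given\cZ_1$, the algebraic identity $\psi_t(\cZ_1)-\psi_t(\cZ_2)=R_t$, orthogonality of $R_s$ and $\psi_t(\cZ_2)$, and then the moment computations for $\var_P(R_t)$ and $\cov_P(R_s,R_t)$ via iterated expectations. Your Step~3 is in fact more careful than the paper's treatment---the paper argues orthogonality simply from $\ex_P[R_t\given\cZ_2]=0$, which by itself would only suffice if $\psi_t(\cZ_2)$ were $\cZ_2$-measurable; your explicit split into the $\cZ_2$-measurable part and the $U_t$ part, followed by conditioning on $(T,\cZ_2)$, is exactly what is needed to make this rigorous.
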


As a consequence, we obtain the well-known fact that the propensity score is a valid adjustment if $\bW$ is, cf. Theorems 1--3 in \citet{rosenbaum1983central}.
\begin{cor}
    Let $\Pi_P=\sigma(\pi_t(\bW;P)\colon \, t\in \mathbb{T})$. If $\cZ$ is a description of $\bW$ containing $\Pi_P$, then $\cZ$ is $P$-valid and 
    $$
        \avar(\Pi_P ; P)-\avar(\cZ ; P)
        = \mathbf{c}^\top \var_P[\mathbf{R}(\mathcal{R}_P ,\cZ;P)] \mathbf{c} 
        \ge 0.
    $$
\end{cor}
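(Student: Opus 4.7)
The plan is to view this as a clean consequence of Lemma~\ref{lem:underadj} applied with $\cZ_1 = \Pi_P$ and $\cZ_2 = \cZ$. Two things must be verified: the conditional independence hypothesis $T \indP \cZ \given \Pi_P$, and the $P$-validity of $\Pi_P$ (from which $P$-validity of $\cZ$ then follows via the same lemma).

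First I would establish the balancing property $T \indP \cZ \given \Pi_P$, which is the classical Rosenbaum--Rubin statement adapted to the $\sigma$-algebraic framework of Section~\ref{sec:adjustment}. The key observation is that, by definition of $\Pi_P$, each $\pi_t(\bW;P)$ is $\Pi_P$-measurable and therefore $\cZ$-measurable (since $\Pi_P \subseteq \cZ$). Using the tower property with $\sigma(\bW) \supseteq \cZ$,
$$\mathbb{P}_P(T=t\given \cZ) = \ex_P[\ex_P[\one(T=t)\given \bW]\given \cZ] = \ex_P[\pi_t(\bW;P)\given \cZ] = \pi_t(\bW;P),$$
and the identical computation with $\Pi_P$ in place of $\cZ$ yields $\mathbb{P}_P(T=t\given\Pi_P) = \pi_t(\bW;P)$. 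Since $T$ takes values in the finite set $\mathbb{T}$ and $\Pi_P \subseteq \cZ$, equality of these conditional mass functions across $t$ gives the desired conditional independence.

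To conclude $P$-validity, I would apply Lemma~\ref{lem:underadj} once with $\cZ_2 = \sigma(\bW)$ (trivially $P$-valid, since $\chi_t(\sigma(\bW);P) = \chi_t(\bW;P)$ by definition), yielding that $\Pi_P$ is $P$-valid, and then again with $\cZ_2 = \cZ$ to transfer validity to $\cZ$. The variance identity and its non-negativity then follow directly from the same lemma. The one point that will require a moment's care is the reconciliation of the right-hand side: Lemma~\ref{lem:underadj} naturally produces the remainder $\mathbf{R}(\Pi_P,\cZ;P)$, and I anticipate that the display involving $\mathbf{R}(\mathcal{R}_P,\cZ;P)$ either requires a short extra computation relating $b_t(\Pi_P;P)$ to $b_t(\mathcal{R}_P;P)$ or should be read as a notational identification. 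Beyond this cosmetic point, I do not foresee any deeper obstacle, as the entire argument is a specialisation of the already established lemma.
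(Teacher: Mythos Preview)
Your approach is correct and is exactly what the paper intends: the corollary is stated immediately after Lemma~\ref{lem:underadj} with the phrase ``As a consequence'' and a reference to Rosenbaum--Rubin, so the paper treats it as a direct specialisation with $\cZ_1=\Pi_P$ and $\cZ_2=\cZ$, verified via the balancing property you prove. Your observation about the right-hand side is also on target: Lemma~\ref{lem:underadj} yields $\mathbf{R}(\Pi_P,\cZ;P)$, and the appearance of $\mathcal{R}_P$ in the display is a typographical slip rather than something requiring an additional argument.
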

The corollary asserts that while the information contained in the propensity score is valid, it is asymptotically inefficient to adjust for in contrast to all the information of $\bW$. This is in similar spirit to Theorem 2 of \citet{hahn1998role}, which states that the efficiency bound $\avar(\bW; P)$ remains unaltered if the propensity is considered as known. However, the corollary also quantifies the difference of the asymptotic efficiencies.


Corollary \ref{cor:Qefficiency} asserts that $\mathcal{Q}$ is maximally efficient over all $\mathcal{P}$-ODS descriptions $\mathcal{Z}$ satisfying $\mathcal{Q} \subseteq \ol{\mathcal{Z}}^P$, and Proposition \ref{prop:COMSefficiency} asserts that in special cases, $\mathcal{Q}$ reduces to $\mathcal{R}$. 
Since $\mathcal{R}$ is $\mathcal{P}$-OMS, hence $\mathcal{P}$-valid, it is natural to ask if $\mathcal{R}$ is generally more efficient than $\mathcal{Q}$. The following example shows that their efficiency bounds may be incomparable uniformly over~$\mathcal{P}$. 

\begin{example}\label{ex:symmetric}
    Let $0<\delta<\frac12$ be fixed and let $\mathcal{P}$ be the collection of data generating distributions that satisfy:
    \begin{itemize}
        \item $\bW \in [\delta,1-\delta]$ with a symmetric distribution, i.e., $\bW\stackrel{\mathcal{D}}{=}1-\bW$.
        \item $T\in\{0,1\}$ with $\ex(T\given \bW) = \bW$.
        \item $Y = T + g(|\bW-\frac{1}{2}|) + v(\bW)\varepsilon_Y$, where $\varepsilon_Y\ind{} (T,\bW)$, $\ex[\varepsilon_Y^2]<\infty$, $\ex[\varepsilon_Y] = 0$, and where $g\colon [0,\frac{1}{2}-\delta]\to \real$ and $v\colon [\delta,1-\delta]\to [0,\infty)$ are continuous functions.
    \end{itemize}
    Letting $\bZ = |\bW-\frac{1}{2}|$, it is easy to verify directly from Definition \ref{def:OutcomeAlgebras} that 
    $$
        \mathcal{Q} = \sigma(\bW) \neq \sigma(\bZ) = \mathcal{R}.
    $$
    It follows that $\bZ$ is $\mathcal{P}$-OMS but not $\mathcal{P}$-ODS. However, $\bZ$ is $\mathcal{P}_1$-ODS in the homoscedastic submodel $\mathcal{P}_1= \{P\in \mathcal{P}\given v\equiv 1\}$. In fact, it generates the 
    $\sigma$-algebra $\mathcal{Q}$ within this submodel, i.e., $\sigma(\bZ) = \vee_{P\in \mathcal{P}_1}\mathcal{Q}_P$. Thus $\avar(\bZ; P) \leq \avar(\bW ; P)$ for all $P\in\mathcal{P}_1$. 

    We refer to the supplementary Section \ref{sec:symmetriccomputations} for complete calculations of the subsequent formulas.
    
    From symmetry it follows that $\pi_1(\bZ)=0.5$ and hence we conclude that $T\ind \bZ$. By Lemma \ref{lem:underadj}, it follows that $0$ (the trivial adjustment) is $\mathcal{P}$-valid, but with $\avar(\bZ; P) \leq \avar(0 ; P)$ for all $P\in\mathcal{P}$.
    
    Alternatively, direct computation yields that
    \begin{align*}
        \bV_t(0)
        &= 2\var(g(\bZ)) + 2\, \ex[v(\bW)^2]\ex[\varepsilon_Y^2] \\
        \bV_t(\bZ)
        &= \var\left( g(\bZ)\right)  + 2\,\ex[v(\bW)^2] \ex[\varepsilon_Y^2] \\
        \bV_t(\bW) 
        &= \var\left( g(\bZ)\right) + \ex\left[v(\bW)^2/\bW\right] \ex[\varepsilon_Y^2].
    \end{align*}
    With $\Delta = \chi_t$, the first two equalities confirm that $\avar(\bZ; P) \leq \avar(0 ; P)$, $P\in\mathcal{P}$, and the last two yield that indeed $\avar(\bZ; P) \leq \avar(\bW ; P)$ for $P\in\mathcal{P}_1$ by applying Jensen's inequality.
    In fact, these are strict inequalities whenever $g(\bZ)$ and $\varepsilon_Y$ are non-degenerate. 
    
    Finally, we show that it is possible for $\avar(\bZ; P) > \avar(\bW ; P)$ for $P\notin\mathcal{P}_1$. 
    Let $\tilde{P}\in \mathcal{P}$ be a data generating distribution with
    $\ex_{\tilde{P}}[\varepsilon_Y^2]>0$, $v(\bW)=\bW^2$, and with $\bW$ uniformly distributed on $[\delta,1-\delta]$. Then
    \begin{align*}
        2\,\ex_{\tilde{P}}[v(\bW)^2] 
            &= \,2\ex_{\tilde{P}}[\bW^{4}] 
            = 2\frac{(1-\delta)^{5}-\delta^{5}}{5(1-2\delta)} 
            \xrightarrow{\delta\to 0} \frac{2}{5} \\
        \ex_{\tilde{P}}\left[\frac{v(\bW)^2}{\bW}\right] 
            &= \ex_{\tilde{P}}[\bW^{3}] 
            = \frac{(1-\delta)^{4}-\delta^{4}}{4(1-2\delta)} 
            \xrightarrow{\delta\to 0} \frac{1}{4}.
    \end{align*}
    So for sufficiently small $\delta$, it holds that $\avar(\bZ; \tilde{P}) > \avar(\bW ; \tilde{P})$. The example can also be modified to work for other $\delta>0$ by taking $v$ to be a sufficiently large power of $\bW$.
\end{example}

\begin{remark}
    Following Theorem 1 of \citet{benkeser2020nonparametric}, it is stated that the asymptotic variance $\avar(\mathcal{R}_P; P)$ is generally smaller than $\avar(\bW; P)$. The example demonstrates that this requires some assumptions on the outcome distribution, such as the conditions in Proposition~\ref{prop:COMSefficiency}.
\end{remark}
\begin{remark}\label{rmk:minimalsufficiency}
    Suppose that $\mathcal{P}=\{f_\theta\cdot \mu \colon \theta \in \Theta\}$ is a parametrized family of measures with densities $\{f_\theta\}_{\theta\in\Theta}$ with respect to a $\sigma$-finite measure $\mu$. Then informally, a \emph{sufficient sub-$\sigma$-algebra} is any subset of the observed information for which the remaining information is independent of $\theta\in \Theta$, see  \citet{billingsley2017probability} for a formal definition. 
    Superficially, this concept seems similar to that of a $\mathcal{P}$-ODS description. In contrast however, the latter is a subset of the covariate information $\sigma(\bW)$ rather than the observed information $\sigma(T,\bW,Y)$, and it concerns sufficiency for the outcome distribution rather than the entire data distribution. 
    Moreover, the Rao-Blackwell theorem asserts that conditioning an estimator of $\theta$ on a sufficient sub-$\sigma$-algebra leads to an estimator that is never worse. Example~\ref{ex:symmetric} demonstrates that the situation is more delicate when considering statistical efficiency for adjustment.
\end{remark}

\section{Estimation based on outcome-adapted representations}\label{sec:estimation}
In this section we develop a general estimation method based on the insights of Section~\ref{sec:informationbounds}, and we provide an asymptotic analysis of this methodology under general conditions. Our method modifies the AIPW estimator, which we proceed to discuss in more detail. 

We have worked with the propensity score, $\pi_t$, and the outcome regression, $b_t$, as random variables. We now consider, for each $P\in \mathcal{P}$, their function counterparts obtained as regular conditional expectations:
\begin{align*} 
    m_P&\colon \mathbb{T}\times \mathbb{W} \to \real, 
        &m_P(t\given \bw) \coloneq \mathbb{P}_P(T=t\given \bW=\bw), \\
    g_P&\colon \mathbb{T}\times \mathbb{W} \to \real,
        &g_P(t, \bw)  \coloneq  \ex_P[Y\given T=t,\bW=\bw].
\end{align*}
To target the adjusted mean, $\chi_t$, the AIPW estimator utilizes the influence function -- given by \eqref{eq:IF} -- as a score equation. To be more precise, given estimates $(\widehat{m},\widehat{g})$ of the nuisance functions $(m_P,g_P)$, the AIPW estimator of $\chi_t$ is given by
\begin{equation}\label{eq:AIPW}
    \widehat{\chi}_t^{\mathrm{aipw}}(\widehat{m},\widehat{g}) \coloneq
        \mathbb{P}_n\Big[
            \widehat{g}(t,\bW) + \frac{\one(T=t)(Y-\widehat{g}(t,\bW))}{\widehat{m}(t\given\bW)}
        \Big],
\end{equation}
where $\mathbb{P}_n[\cdot]$ is defined as the empirical mean over $n$ i.i.d. observations from $P$. The natural AIPW estimator of $\Delta$ given by $\eqref{eq:hatDelta}$ is then $\widehat{\Delta}^{\mathrm{aipw}}(\widehat{m},\widehat{g}) = \sum_{t\in \mathbb{T}} c_t \widehat{\chi}_t^{\mathrm{aipw}}(\widehat{m},\widehat{g})$.

Roughly speaking, the AIPW estimator $\widehat{\chi}_t^{\mathrm{aipw}}$ converges to $\chi_t$ with a $\sqrt{n}$-rate if
\begin{equation}\label{eq:productbias}
    n\cdot \mathbb{P}_n\big[(\widehat{m}(t\given \bW)-m_P(t\given \bW))^2\big]
        \cdot \mathbb{P}_n \big[(\widehat{g}(t, \bW)-g_P(t, \bW))^2 \big]
        \longrightarrow 0.
\end{equation}
If $(\widehat{m},\widehat{g})$ are estimated using the same data as $\mathbb{P}_n$, 
then the above asymptotic result relies on Donsker class conditions on the spaces containing $\widehat{m}$ and $\widehat{g}$. Among others, \citet{chernozhukov2018} propose circumventing Donsker class conditions by sample splitting techniques such as $K$-fold cross-fitting.

\subsection{Representation adjustment and the DOPE}
Suppose that the outcome regression function factors through an intermediate representation:
\begin{align} \label{eq:factorizedmodel}
    g_P(t,\bw) = h_P(t, \phi(\theta_P,\bw)), \qquad t\in \mathbb{T}, \bw\in\mathbb{W},
\end{align}
where $\phi \colon \Theta \times \mathbb{W} \to \real^d$ is a known measurable mapping with unknown parameter $\theta\in \Theta$, and where $h_P\colon \mathbb{T}\times \real^d \to \real$ is an unknown function. 
We use $\bZ_\theta = \phi(\theta,\bW)$ to denote the corresponding representation parametrized by $\theta\in \Theta$. If a particular covariate value $\bw\in \mathbb{W}$ is clear from the context, we also use the implicit notation $\bz_{\theta}= \phi(\theta,\bw)$.

\begin{example}[Single-index model]\label{ex:singleindex}
    The \textit{(partial) single-index model} applies to $\bW \in \real^k$ 
    and assumes that \eqref{eq:factorizedmodel} holds with $\phi(\theta,\bw)=\bw^\top\theta$, where $\theta \in \Theta \subseteq \real^k$.
    In other words, it assumes that the outcome regression factors through the \textit{linear predictor} $\bZ_{\theta} = \bW^\top \theta$ such that 
    \begin{align} \label{eq:singleindex}
    \forall P\in \mathcal{P}:\qquad
    Y = h_P(T, \bZ_{\theta_P}) + \varepsilon_Y, 
         \qquad \ex_P[\varepsilon_Y \given T, \bW] = 0.
    \end{align}
    For each treatment $T=t$, the model extends the generalized linear model (GLM) by assuming that the (inverse) link function $h_P(t, \cdot)$ is unknown. 
     
    The \textit{semiparametric least squares} (SLS) estimator proposed by \citet{ichimura1993semiparametric} estimates $\theta_P$ and $h_P$ by performing a two-step regression procedure. Alternative estimation procedures are given in \citet{powell1989semiparametric,delecroix2003efficient}.
\end{example}

Given an estimator $\hat{\theta}$ of $\theta_P$, independent of $(T, \bW, Y)$, we use the following notation for various functions and estimators thereof:
\begin{subequations}
\label{eq:linknotation}
\begin{align} 
    g_{\hat{\theta}}(t,\bw)
        &\coloneq \ex_P[Y\given T=t,\bZ_{\hat{\theta}}=\bz_{\hat{\theta}},\hat{\theta}\,], \nonumber \\
    m_{\hat{\theta}}(t\given \bw) 
        &\coloneq \mathbb{P}_P(T=t \given \bZ_{\hat{\theta}}=\bz_{\hat{\theta}},\hat{\theta}\,), \nonumber \\
    \widehat{g}(t,\bw) &\colon 
        \text{an estimator of } g_P(t,\bw),
        \nonumber \\
    \widehat{g}_{\hat{\theta}}(t,\bw) 
        &\colon \text{an estimator of } g_{\hat{\theta}}(t,\bw) \text{ of the form } 
        \widehat{h}(t,\bz_{\hat{\theta}}),  \\
    \widehat{m}_{\hat{\theta}}(t\given \bw) 
        &\colon \text{an estimator of } m_{\hat{\theta}}(t\given \bw) \text{ of the form } 
        \widehat{f}(t, \bz_{\hat{\theta}}).
\end{align}
\end{subequations}
In other words, $g_{\hat{\theta}}$ and $m_{\hat{\theta}}$ are the theoretical propensity score and outcome regression, respectively, when using the \emph{estimated} representation $\bZ_{\hat{\theta}}= \phi(\hat{\theta}, \bW)$. Note that we have suppressed $P$ from the notation on the left-hand side, as we will be working under a fixed $P$ in this section. 

Sufficient conditions are well known that ensure 
\begin{align*}
    \sqrt{n}(\widehat{\chi}_t^{\mathrm{aipw}}(\widehat{m},\widehat{g}) - \chi_t) 
        \xrightarrow{d} \mathrm{N}(0,\bV_t(\bW; P)).
\end{align*}
Such conditions, e.g., Assumption 5.1 in \citet{chernozhukov2018}, primarily entail the condition in \eqref{eq:productbias}. We will leverage \eqref{eq:factorizedmodel} to derive more efficient estimators
under similar conditions.

Suppose for a moment that $\theta_P$ is known. Since $\bZ_{\theta_P}$ is $P$-OMS under \eqref{eq:singleindex}, it holds that $\chi_t(\bZ_{\theta_P};P) = \chi_t$. Then under analogous conditions for the estimators $(\widehat{m}_{\theta_P},\widehat{g}_{\theta_P})$, we therefore have
\begin{align*}
    \sqrt{n}(\widehat{\chi}_t^{\mathrm{aipw}}(\widehat{m}_{\theta_P},\widehat{g}_{\theta_P}) - \chi_t)
        \xrightarrow{d} \mathrm{N}(0,\bV_t(\bZ_{\theta_P};P)).
\end{align*}
Under the conditions of Proposition \ref{prop:COMSefficiency}, it holds that $\bV_t(\bZ_{\theta_P};P)\leq \bV_t(\bW; P)$. In other words,
$\widehat{\chi}_t^{\mathrm{aipw}}(\widehat{m}_{\theta_P},\widehat{g}_{\theta_P})$ is asymptotically at least as efficient as $\widehat{\chi}_t^{\mathrm{aipw}}(\widehat{m},\widehat{g})$.

In general, the parameter $\theta_P$ is, of course, unknown, and we therefore consider adjusting for an estimated representation $\bZ_{\hat{\theta}}$. For simplicity, we present the special case $\Delta = \chi_t$, but the results are easily extended to general contrasts $\Delta$.
Our generic estimation procedure is described in Algorithm \ref{alg:generalalg}, where 
\begin{equation*}
    \bZ_{\hat{\theta},i} \coloneq \phi(\hat{\theta},\bW_i)
\end{equation*}
denotes the estimated representation of the $i$-th observed covariate. We refer to the resulting estimator as the \textit{Debiased Outcome-adapted Propensity Estimator} (DOPE), and it is denoted by $\widehat{\chi}_t^{\mathrm{dope}}$.

\begin{algorithm} \caption{Debiased Outcome-adapted Propensity Estimator} \label{alg:generalalg}
  \textbf{input}: observations $(T_i,\bW_i,Y_i)_{i\in[n]}$, index sets $\mathcal{I}_1,\mathcal{I}_2,\mathcal{I}_3 \subseteq [n]$\;
  \textbf{options}: method for computing $\hat{\theta}$,
  regression methods for propensity score and outcome regression of the form \eqref{eq:linknotation}\;
  \Begin{
    compute estimate $\hat{\theta}$ based on data $(T_i,\bW_i,Y_i)_{i\in\mathcal{I}_1}$\;
  
    regress outcomes $(Y_i)_{i\in \mathcal{I}_2}$ onto $(T_i,\bZ_{\hat{\theta},i})_{i\in\mathcal{I}_2}$ to obtain $\widehat{g}_{\hat{\theta}}(\cdot,\cdot)$\;
    
    regress treatments $(T_i)_{i\in\mathcal{I}_2}$ onto 
    $(\bZ_{\hat{\theta},i})_{i\in \mathcal{I}_2}$ to obtain $\widehat{m}_{\hat{\theta}}(\cdot\given \cdot)$\;

    compute AIPW based on data $(T_i,\bW_i,Y_i)_{i\in\mathcal{I}_3}$ and nuisance estimates $(\widehat{m}_{\hat{\theta}}, \widehat{g}_{\hat{\theta}})$:
    \begin{align*}
    \widehat{\chi}_t^{\mathrm{dope}}
    =
        \frac{1}{|\mathcal{I}_3|}\sum_{i\in\mathcal{I}_3}\left(
            \widehat{g}_{\hat{\theta}}(t,\bW_i) + \frac{\one(T_i=t)(Y_i-\widehat{g}_{\hat{\theta}}(t,\bW_i))}{\widehat{m}_{\hat{\theta}}(t\given\bW_i)}
        \right)
    \end{align*}
  }
  \Return{\textnormal{DOPE:} $\widehat{\chi}_t^{\mathrm{dope}}$.}
\end{algorithm}

Algorithm \ref{alg:generalalg} is formulated such that $\mathcal{I}_1$, $\mathcal{I}_2$, and $\mathcal{I}_3$ can be arbitrary subsets of $[n]$, and for the asymptotic theory we assume that they are disjoint. However, in practical applications it might be reasonable to use the full sample for every estimation step, i.e., employing the algorithm with $\mathcal{I}_1=\mathcal{I}_2=\mathcal{I}_3=[n]$. In this case, we also imagine that line 4 and line 5 are run simultaneously given that $\hat{\theta}$ may be derived from an outcome regression, cf. the SLS estimator in the single-index model (Example \ref{ex:singleindex}). 
In the supplementary Section~\ref{sup:crossfitting}, 
we also describe a more advanced cross-fitting scheme for Algorithm~\ref{alg:generalalg} 

\begin{remark}\label{rmk:benkeser}
\citet{benkeser2020nonparametric} use a similar idea as the DOPE, but their propensity factors through the final outcome regression function instead of a general intermediate representation. 
That our general formulation of Algorithm~\ref{alg:generalalg} contains their collaborative one-step estimator as a special case is seen as follows.
Suppose $T\in\{0,1\}$ is binary and let
$$
    \Theta = \{\theta = (g_P(0,\cdot),g_P(1,\cdot)) \colon P \in \mathcal{P} \}
$$ 
be the collection of outcome regression functions. Define the intermediate representation as the canonical pairing $\phi(\theta,\bW) = (\theta_1(\bW),\theta_2(\bW))$. 
Then, for any outcome regression estimate $\hat{\theta}$, 
we may set $\hat{g}_{\hat{\theta}}(t,\bw) = (1-t)\hat{\theta}_1(\bw) + t\hat{\theta}_2(\bw)$, and the propensity score $\hat{m}_{\hat{\theta}}(t\mid \bw)$ factors through the outcome regressions. 
In this case and with $\mathcal{I}_1=\mathcal{I}_2=\mathcal{I}_3$, Algorithm~\ref{alg:generalalg} yields the collaborative one-step estimator of \citet[Appendix D]{benkeser2020nonparametric}. Accordingly, we refer to this special case as DOPE-BCL (Benkeser, Cai and van der Laan).
\end{remark}

\subsection{Asymptotics of the DOPE}
We proceed to discuss the asymptotics of the DOPE. For our theoretical analysis, we assume that the index sets in Algorithm \ref{alg:generalalg} are disjoint. That is, the theoretical analysis relies on sample splitting.
\begin{asm}\label{asm:samples}
    The observations $(T_i,\bW_i,Y_i)_{i\in[n]}$ used to compute $\widehat{\chi}_t^{\mathrm{dope}}$ are i.i.d. with the same distribution as $(T,\bW,Y)$, and $[n]=\mathcal{I}_1\cup\mathcal{I}_2\cup\mathcal{I}_3$ is a partition such that $|\mathcal{I}_3| \to \infty$ as $n\to \infty$.
\end{asm}
In our simulations, employing sample splitting did not seem to enhance performance, and hence we regard Assumption~\ref{asm:samples} as a theoretical convenience rather than a practical necessity in all cases. Our results can likely also be established under alternative assumptions that avoid sample splitting, in particular Donsker class conditions.

We also henceforth use the convention that each of the quantities $\pi_t,b_t,\chi_t$, and $\bV_t$ are defined conditionally on $\hat{\theta}$, e.g.,
$$
    \chi_t(\bZ_{\hat{\theta}})
    = \ex[b_t(\bZ_{\hat{\theta}})\given \hat{\theta} \,]
    = \ex[\ex[Y\given T=t,\bZ_{\hat{\theta}},\hat{\theta}\,]\given \hat{\theta}\,].
$$
The error of the DOPE estimator can then be decomposed as 
\begin{equation}\label{eq:biasvardecomp}
    \widehat{\chi}_t^{\mathrm{dope}} - \chi_t
    = (\widehat{\chi}_t^{\mathrm{dope}} - \chi_t(\bZ_{\hat{\theta}}))
        + (\chi_t(\bZ_{\hat{\theta}}) - \chi_t).
\end{equation}
The first term is the error had our target been the adjusted mean when adjusting for the estimated representation $\bZ_{\hat{\theta}}$, whereas the second term is the adjustment bias that arises from adjusting for $\bZ_{\hat{\theta}}$ rather than $\bW$ (or $\bZ_{\theta}$). 

\subsubsection{Estimation error conditionally on representation}
To describe the asymptotics of the first term of \eqref{eq:biasvardecomp}, we consider the decomposition
\begin{align}\label{eq:DOPEdecomp}
    \sqrt{|\mathcal{I}_3|}\widehat{\chi}_t^{\mathrm{dope}}
    &= U_{\hat{\theta}}^{(n)} + R_1 + R_2 + R_3,
\end{align}
where
\begin{align*}
    U_{\hat{\theta}}^{(n)} &\coloneq \frac{1}{\sqrt{|\mathcal{I}_3|}}\sum_{i\in \mathcal{I}_3} u_i(\hat \theta), 
    &u_i(\hat \theta) \coloneq g_{\hat \theta}(t,\bW_i) + \frac{\one(T_i=t)(Y_i-g_{\hat \theta}(t,\bW_i))}{m_{\hat \theta}(t\given \bW_i)},\\
    R_1 &\coloneq \frac{1}{\sqrt{|\mathcal{I}_3|}}\sum_{i\in \mathcal{I}_3} r_i^1,
    &r_i^1 \coloneq 
        \left(\widehat{g}_{\hat{\theta}}(t,\bW_i)-g_{\hat{\theta}}(t,\bW_i)\right)
        \left(1-\frac{\one(T_i=t)}{m_{\hat{\theta}}(t\given \bW_i)}\right), \\
    R_2 &\coloneq  \frac{1}{\sqrt{|\mathcal{I}_3|}}\sum_{i\in \mathcal{I}_3} r_i^2,
    &r_i^2 \coloneq  (Y_i-g_{\hat{\theta}}(t,\bW_i))\left(
            \frac{\one(T_i=t)}{\widehat{m}_{\hat{\theta}}(t\given \bW_i)}
            -\frac{\one(T_i=t)}{m_{\hat{\theta}}(t\given \bW_i)}\right), \\
    R_3 &\coloneq  \frac{1}{\sqrt{|\mathcal{I}_3|}}\sum_{i\in \mathcal{I}_3} r_i^3,
    &r_i^3 \coloneq  
        \big(\widehat{g}_{\hat{\theta}}(t,\bW_i)
            -g_{\hat{\theta}}(t,\bW_i)\big)\left(
        \frac{\one(T_i=t)}{\widehat{m}_{\hat{\theta}}(t\given \bW_i)}
            -\frac{\one(T_i=t)}{m_{\hat{\theta}}(t\given \bW_i)}\right).
\end{align*}

We show that the oracle term, $U_{\hat{\theta}}^{(n)}$, drives the asymptotic limit, and that the terms $R_1,R_2,R_3$ are remainder terms, subject to the conditions stated below. 

\begin{asm}\label{asm:AIPWconv}
    For $(T,\bW,Y)\sim P$ satisfying the representation model~\eqref{eq:factorizedmodel}, it holds that:
    \begin{enumerate}
        \item[(i)] There exists $c>0$ such that $\max\{|\widehat{m}_{\hat{\theta}}-\frac{1}{2}|,|m_{\hat{\theta}}-\frac{1}{2}|\}\leq \frac{1}{2}-c$.
        \label{asm:strictoverlap}
        
        \item[(ii)] There exists $C>0$ such that 
        $\ex[Y^2\given \bW, T] \leq C$.
        \label{asm:CVarbound}

        \item[(iii)] There exists $\delta>0$ such that $\ex\big[|Y|^{2+\delta}\big]<\infty$.
        \label{asm:varlowerbound}
        
        \item[(iv)] It holds that \(
                \mathcal{E}_{1,t}^{(n)} \coloneq \frac{1}{|\mathcal{I}_3|} \sum_{i\in \mathcal{I}_3} (\widehat{m}_{\hat{\theta}}(t\given \bW_i) - m_{\hat{\theta}}(t\given \bW_i))^2 \xrightarrow{P} 0
            \).
        \label{asm:propensityconsistency}
        
        \item[(v)] It holds that \(
                \mathcal{E}_{2,t}^{(n)} \coloneq\frac{1}{|\mathcal{I}_3|} \sum_{i\in \mathcal{I}_3} (\widehat{g}_{\hat{\theta}}(t,\bW_i) - g_{\hat{\theta}}(t,\bW_i))^2 \xrightarrow{P} 0
            \).
        \label{asm:ORconsistency}
        
        \item[(vi)] It holds that $|\mathcal{I}_3|\cdot\mathcal{E}_{1,t}^{(n)} \mathcal{E}_{2,t}^{(n)} 
                \xrightarrow{P} 0$.
        \label{asm:producterror}
    \end{enumerate}    
\end{asm}

Classical convergence results of the AIPW are proven under similar conditions, but with $(vi)$ replaced by the stronger convergence in \eqref{eq:productbias}.
We establish conditional asymptotic results under conditions on the conditional errors $\mathcal{E}_{1,t}^{(n)}$ and $\mathcal{E}_{2,t}^{(n)}$. 
To the best of our knowledge, the most similar results that we are aware of are those of \citet{benkeser2020nonparametric}, and our proof techniques are most similar to those of \citet{chernozhukov2018,lundborg2023perturbation}.

We can now state our first asymptotic result for the DOPE. 

\begin{thm} \label{thm:conditionalAIPWconv}
    Under Assumptions \ref{asm:samples} and \ref{asm:AIPWconv}, it holds that
    \begin{align*}
        \mathbb{V}_t(\bZ_{\hat{\theta}})^{-1/2}
        \cdot
        U_{\hat{\theta}}^{(n)} \xrightarrow{d} \mathrm{N}(0,1),
        \qquad \text{and} \qquad 
        R_i \xrightarrow{P} 0, \quad i=1,2,3,
    \end{align*}
    as $n\to \infty$. As a consequence,
    \begin{align*}
        \sqrt{|\mathcal{I}_3|} \cdot \bV_t(\bZ_{\hat{\theta}})^{-1/2}\left(
        \widehat{\chi}_t^{\mathrm{dope}} - \chi_t(\bZ_{\hat{\theta}})
        \right) \xrightarrow{d} \mathrm{N}(0, 1).
    \end{align*}
\end{thm}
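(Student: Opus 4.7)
I would work conditionally on the auxiliary $\sigma$-algebra $\mathcal{F}_n = \sigma((T_i,\bW_i,Y_i)_{i\in \mathcal{I}_1\cup\mathcal{I}_2})$, which contains $\hat\theta$, $\widehat g_{\hat\theta}$, and $\widehat m_{\hat\theta}$. By Assumption~\ref{asm:samples}, the observations indexed by $\mathcal{I}_3$ are i.i.d. and independent of $\mathcal{F}_n$, so conditionally on $\mathcal{F}_n$ the nuisance functions and $\hat\theta$ behave as if deterministic. The key structural fact I would exploit throughout is that both $g_{\hat\theta}$ and $m_{\hat\theta}$ (and their estimates) factor through the representation $\bZ_{\hat\theta}$, so that $\ex[\one(T=t)\mid \bZ_{\hat\theta},\hat\theta]=m_{\hat\theta}(t\mid\bW)$ and $\ex[Y\one(T=t)\mid\bZ_{\hat\theta},\hat\theta] = g_{\hat\theta}(t,\bW)\, m_{\hat\theta}(t\mid\bW)$.

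\textbf{Oracle term.} Conditionally on $\mathcal{F}_n$, the summands $u_i(\hat\theta)$ are i.i.d. with mean $\chi_t(\bZ_{\hat\theta})$ and variance $\bV_t(\bZ_{\hat\theta})$, a check that follows from the identities above via iterated expectation through $\bZ_{\hat\theta}$. Under Assumption~\ref{asm:AIPWconv}(i)--(iii), the centered summands have a conditional $(2+\delta)$-moment that is bounded uniformly in $\hat\theta$ by a constant depending only on $c$, $C$, and $\ex|Y|^{2+\delta}$, which supplies a Lyapunov condition. The conditional Lyapunov CLT then gives
\[
    \bV_t(\bZ_{\hat\theta})^{-1/2}\bigl(U_{\hat\theta}^{(n)} - \sqrt{|\mathcal{I}_3|}\,\chi_t(\bZ_{\hat\theta})\bigr) \xrightarrow{d} \mathrm{N}(0,1) \qquad \text{given } \mathcal{F}_n.
\]
To lift this to unconditional convergence (as stated in the theorem), I would show that the conditional characteristic functions converge in probability to $e^{-\xi^2/2}$ and then apply bounded convergence, a device used, e.g., in the asymptotics for DML estimators in \cite{chernozhukov2018}.

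\textbf{Remainder terms.} For $R_1$, iterated expectations through $\bZ_{\hat\theta}$ yield $\ex[r_i^1\mid\mathcal{F}_n]=0$, since $\widehat g_{\hat\theta}-g_{\hat\theta}$ is $\sigma(\bZ_{\hat\theta})$-measurable and $\ex[1-\one(T=t)/m_{\hat\theta}(t\mid\bW)\mid \bZ_{\hat\theta},\hat\theta]=0$. The uniform lower bound on $m_{\hat\theta}$ from~(i) then gives $\var(r_i^1\mid\mathcal{F}_n,\bW_i)\le C(\widehat g_{\hat\theta}-g_{\hat\theta})^2(t,\bW_i)$, so that $\var(R_1\mid\mathcal{F}_n)\le C\,\mathcal{E}_{2,t}^{(n)}$; conditional Chebyshev together with~(v) delivers $R_1\xrightarrow{P}0$. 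The term $R_2$ is symmetric: $\ex[r_i^2\mid\mathcal{F}_n]=0$ follows from $\ex[\one(T=t)(Y-g_{\hat\theta}(t,\bW))\mid\bZ_{\hat\theta},\hat\theta]=0$, and the conditional variance is bounded by $C\,(\widehat m_{\hat\theta}-m_{\hat\theta})^2(t,\bW_i)\,\ex[Y^2\mid\bW_i,T_i=t]$ using (i) and~(ii), which by~(iv) produces $R_2\xrightarrow{P}0$. Finally, $R_3$ is handled directly by Cauchy--Schwarz: $|R_3|\le C\sqrt{|\mathcal{I}_3|\,\mathcal{E}_{1,t}^{(n)}\mathcal{E}_{2,t}^{(n)}}$, which vanishes in probability by~(vi).

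\textbf{Assembly and main obstacle.} Combining these pieces with the decomposition \eqref{eq:DOPEdecomp} and a conditional Slutsky argument yields the consequence stated in the theorem. The main obstacle is that $\bV_t(\bZ_{\hat\theta})$ is itself random, so one cannot directly appeal to an unconditional CLT; the subsequence-plus-characteristic-function route outlined above sidesteps this, but it requires that $\bV_t(\bZ_{\hat\theta})$ be bounded away from zero with probability tending to one (which follows from strict overlap in~(i) and a non-degenerate conditional variance of $Y$) in order for the standardization by $\bV_t(\bZ_{\hat\theta})^{-1/2}$ to be well-behaved. A secondary technical point is the need to transfer the remainder bounds from the conditional probability $\mathbb{P}(\,\cdot\mid\mathcal{F}_n)$ to the unconditional one; this is routine since the conditional bounds hold for $\mathcal{F}_n$-measurable quantities that themselves tend to zero in (unconditional) probability by Assumption~\ref{asm:AIPWconv}(iv)--(vi).
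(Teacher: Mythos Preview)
Your overall architecture matches the paper's: handle $R_3$ by Cauchy--Schwarz, handle $R_1,R_2$ by showing conditional mean zero and bounding conditional variance, and treat the oracle term via a CLT that is uniform/conditional in $\hat\theta$. The treatment of $R_3$ and of $U_{\hat\theta}^{(n)}$ is essentially the same as the paper's (the paper phrases the oracle step as a Lindeberg--Feller CLT uniformly over $\theta\in\Theta$, then passes to the unconditional limit via \eqref{eq:uniformimpliescondtional}, which is your ``characteristic-function plus bounded convergence'' device in different clothing).

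There is, however, a genuine gap in your handling of $R_1$ and $R_2$. You write $\var(r_i^1\mid\mathcal{F}_n,\bW_i)\le C(\widehat g_{\hat\theta}-g_{\hat\theta})^2(t,\bW_i)$ and then conclude $\var(R_1\mid\mathcal{F}_n)\le C\,\mathcal{E}_{2,t}^{(n)}$. This inference is invalid: the left-hand side is $\mathcal{F}_n$-measurable, whereas $\mathcal{E}_{2,t}^{(n)}$ depends on $(\bW_i)_{i\in\mathcal{I}_3}$ and is \emph{not} $\mathcal{F}_n$-measurable. Conditioning on $\mathcal{F}_n$ alone only yields $\var(R_1\mid\mathcal{F}_n)\le C\,\ex[(\widehat g_{\hat\theta}-g_{\hat\theta})^2(t,\bW)\mid\hat\theta]$, the \emph{population} error, which Assumption~\ref{asm:AIPWconv}(v) does not directly control. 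On the other hand, conditioning additionally on all of $(\bW_i)_{i\in\mathcal{I}_3}$ destroys the mean-zero property, since $\ex[\one(T_i=t)\mid\bW_i]=m_P(t\mid\bW_i)\neq m_{\hat\theta}(t\mid\bW_i)$ in general. Your closing remark that ``the conditional bounds hold for $\mathcal{F}_n$-measurable quantities'' is therefore also incorrect.

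The paper resolves this by conditioning on the \emph{representations} $\bZ_{\hat\bullet}=(\bZ_{\hat\theta,i})_{i\in\mathcal{I}_3}$ together with $\hat\theta$ (and additionally $T_\bullet$ for $R_2$). This choice threads the needle: it preserves mean zero because $m_{\hat\theta}$ and $g_{\hat\theta}$ are precisely the propensity and regression \emph{given the representation}, and it renders $\mathcal{E}_{1,t}^{(n)},\mathcal{E}_{2,t}^{(n)}$ measurable because both $\widehat g_{\hat\theta},g_{\hat\theta},\widehat m_{\hat\theta},m_{\hat\theta}$ factor through $\bZ_{\hat\theta}$. One then gets $\mathbb{P}(|R_1|>\epsilon\mid\bZ_{\hat\bullet},\hat\theta)\le \epsilon^{-2}\tfrac{1-c}{c}\,\mathcal{E}_{2,t}^{(n)}\xrightarrow{P}0$, and since conditional probabilities lie in $[0,1]$, this lifts to $R_1\xrightarrow{P}0$ unconditionally. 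You already note that the nuisance functions factor through $\bZ_{\hat\theta}$ when establishing the mean-zero property; the missing step is to exploit the same fact in the choice of conditioning $\sigma$-algebra for the variance bound.
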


In other words, we can expect the DOPE to have an asymptotic distribution, conditionally on $\hat{\theta}$, approximated by 
$$
    \widehat{\chi}_t^{\mathrm{dope}}(\widehat{g}_{\hat{\theta}},\widehat{m}_{\hat{\theta}}) \given \hat{\theta}
    \quad \stackrel{as.}{\sim} \quad 
    \mathrm{N}\Big(\chi_t(\bZ_{\hat{\theta}}), \, 
    \frac{1}{|\mathcal{I}_3|}\mathbb{V}_t(\bZ_{\hat{\theta}})\Big).
$$
Note that if $|\mathcal{I}_3| = \lfloor n/3 \rfloor$, say, then the asymptotic variance is $\frac{3}{n}\mathbb{V}_t(\bZ_{\hat{\theta}})$. Our simulation study indicates that the asymptotic approximation may be valid in some cases without the use of sample splitting, in which case $|\mathcal{I}_3|=n$ and the asymptotic variance is $\frac{1}{n}\mathbb{V}_t(\bZ_{\hat{\theta}})$. A direct implementation of the sample splitting procedure thus comes with an efficiency cost. In the supplementary Section~\ref{sup:crossfitting} we discuss how the cross-fitting procedure makes use of sample splitting without an efficiency cost.

Given nuisance estimates $(\widehat{g},\widehat{m})$
we consider the empirical variance estimator given by
\begin{align}\label{eq:varestimator}
    \widehat{\mathcal{V}}_t(\widehat{g},\widehat{m}) 
    = \frac{1}{|\mathcal{I}_3|} \sum_{i\in \mathcal{I}_3} \widehat{u}_i^2 
        - \Big(\frac{1}{|\mathcal{I}_3|} \sum_{i\in \mathcal{I}_3}
        \widehat{u}_{i}\Big)^2 
    = \frac{1}{|\mathcal{I}_3|} \sum_{i\in \mathcal{I}_3}
        \Big(\widehat{u}_i - \frac{1}{|\mathcal{I}_3|} \sum_{j\in \mathcal{I}_3}\widehat{u}_{j}\Big)^2, 
\end{align}
where
\begin{align*}
    \widehat{u}_i =\widehat{u}_i(\widehat{g},\widehat{m})  
        = \widehat{g}(t,\bW_i) + \frac{\one(T=t)(Y_i-\widehat{g}(t,\bW_i))}{\widehat{m}(t\given \bW_i)}.
\end{align*}

The following theorem states that the variance estimator with nuisance functions $(\widehat{g}_{\hat{\theta}},\widehat{m}_{\hat{\theta}})$ is consistent for the asymptotic variance in Theorem~\ref{thm:conditionalAIPWconv}. 
\begin{thm}\label{thm:varconsistent}
    Under Assumptions \ref{asm:samples} and \ref{asm:AIPWconv}, it holds that
    \begin{align*}
    \widehat{\mathcal{V}}_t(\widehat{g}_{\hat{\theta}},
    \widehat{m}_{\hat{\theta}})
    -\mathbb{V}_t(\bZ_{\hat{\theta}}) 
    \xrightarrow{P} 0
    \end{align*}
    as $n\to \infty$.
\end{thm}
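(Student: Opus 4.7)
The plan is to compare $\widehat{\mathcal{V}}_t(\widehat{g}_{\hat\theta},\widehat{m}_{\hat\theta})$ with an oracle empirical variance built from the $u_i(\hat\theta)$'s and analyze the two pieces separately. Set $u_i\coloneq u_i(\hat\theta)$ and $\widehat{u}_i\coloneq \widehat{u}_i(\widehat{g}_{\hat\theta},\widehat{m}_{\hat\theta})$; from the decomposition in \eqref{eq:DOPEdecomp} we have $\widehat{u}_i-u_i=r_i^1+r_i^2+r_i^3$. I would first show $A_n\coloneq \tfrac{1}{|\mathcal{I}_3|}\sum_{i\in\mathcal{I}_3}(\widehat{u}_i-u_i)^2\xrightarrow{P}0$, then establish that the oracle variance $V_n\coloneq \tfrac{1}{|\mathcal{I}_3|}\sum u_i^2-\bigl(\tfrac{1}{|\mathcal{I}_3|}\sum u_i\bigr)^2$ converges in probability to $\mathbb{V}_t(\bZ_{\hat\theta})$, and finally combine the two via Cauchy--Schwarz.

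For $A_n$, I would use $(\widehat{u}_i-u_i)^2\leq 3\sum_{k=1}^3(r_i^k)^2$ together with strict overlap (Assumption~\ref{asm:AIPWconv}(i)) to bound the propensity-related factors uniformly. For $k=1,3$, the factors $1-\one(T_i=t)/\widehat{m}_{\hat\theta}$ and $\one(T_i=t)(\widehat{m}_{\hat\theta}^{-1}-m_{\hat\theta}^{-1})$ are dominated by constants depending only on $c$, so $\tfrac{1}{|\mathcal{I}_3|}\sum_i[(r_i^1)^2+(r_i^3)^2]\leq K\,\mathcal{E}_{2,t}^{(n)}\xrightarrow{P}0$ by Assumption~\ref{asm:AIPWconv}(v). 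The delicate cross-term satisfies $(r_i^2)^2\leq c^{-4}\,\one(T_i=t)(Y_i-g_{\hat\theta}(t,\bW_i))^2(\widehat{m}_{\hat\theta}-m_{\hat\theta})^2$, where $Y$ is not uniformly bounded. I would condition on the auxiliary data $\mathcal{A}$ (the observations indexed by $\mathcal{I}_1\cup\mathcal{I}_2$, which under Assumption~\ref{asm:samples} determine $\hat\theta,\widehat{g}_{\hat\theta},\widehat{m}_{\hat\theta}$ and are independent of the $\mathcal{I}_3$-sample). Conditional on $\mathcal{A}$, the triples $(T_i,\bW_i,Y_i)_{i\in\mathcal{I}_3}$ are i.i.d.\ copies of $(T,\bW,Y)$; further conditioning on $(T_i,\bW_i)$ and using Jensen together with Assumption~\ref{asm:AIPWconv}(ii) to get $|g_{\hat\theta}|\leq\sqrt{C}$ yields $\ex[(Y-g_{\hat\theta}(t,\bW))^2\mid T=t,\bW,\hat\theta]\leq 4C$, so
\[
\ex\!\Big[\tfrac{1}{|\mathcal{I}_3|}\!\sum_{i\in \mathcal{I}_3}(r_i^2)^2 \,\Big|\, \mathcal{A}\Big]\leq \frac{4C}{c^4}\,\ex\!\bigl[\mathcal{E}_{1,t}^{(n)} \,\big|\, \mathcal{A}\bigr].
\]
Since the summands of $\mathcal{E}_{1,t}^{(n)}$ are uniformly bounded by $c^{-2}$, a conditional Chebyshev argument gives $\mathcal{E}_{1,t}^{(n)}-\ex[\mathcal{E}_{1,t}^{(n)}\mid\mathcal{A}]\xrightarrow{P}0$; combined with Assumption~\ref{asm:AIPWconv}(iv) this forces $\ex[\mathcal{E}_{1,t}^{(n)}\mid\mathcal{A}]\xrightarrow{P}0$, and conditional Markov then gives $\tfrac{1}{|\mathcal{I}_3|}\sum_i(r_i^2)^2\xrightarrow{P}0$, completing $A_n\xrightarrow{P}0$.

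For the oracle piece, conditional on $\mathcal{A}$ the $u_i$'s are i.i.d.\ with $\ex[u_i\mid\mathcal{A}]=\chi_t(\bZ_{\hat\theta})$ and $\var(u_i\mid\mathcal{A})=\mathbb{V}_t(\bZ_{\hat\theta})$ (the standard AIPW identity, since $\psi_t(\bZ_{\hat\theta};P)=u_i-\chi_t(\bZ_{\hat\theta})$), and a computation analogous to the one above gives $\ex[u_1^2\mid\mathcal{A}]\leq K(C,c)$ almost surely. A conditional weak law of large numbers thus yields $\tfrac{1}{|\mathcal{I}_3|}\sum u_i^2-\ex[u_1^2\mid\mathcal{A}]\xrightarrow{P}0$ and $\tfrac{1}{|\mathcal{I}_3|}\sum u_i-\chi_t(\bZ_{\hat\theta})\xrightarrow{P}0$, so $V_n-\mathbb{V}_t(\bZ_{\hat\theta})\xrightarrow{P}0$. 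To pass from $V_n$ to $\widehat{\mathcal{V}}_t$, I would write $\widehat{u}_i^2-u_i^2=(\widehat{u}_i-u_i)(\widehat{u}_i+u_i)$ and apply Cauchy--Schwarz:
\[
\Big|\tfrac{1}{|\mathcal{I}_3|}\!\sum_i(\widehat{u}_i^2-u_i^2)\Big|\leq \sqrt{A_n}\cdot\sqrt{\tfrac{1}{|\mathcal{I}_3|}\!\sum_i(\widehat{u}_i+u_i)^2},
\]
where the second factor is $O_P(1)$ because $\tfrac{1}{|\mathcal{I}_3|}\sum u_i^2=O_P(1)$ and $A_n\xrightarrow{P}0$; an analogous estimate handles the squared-mean term. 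Hence $\widehat{\mathcal{V}}_t-V_n\xrightarrow{P}0$, and combining gives the claim.

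The main obstacle is controlling the cross-term $\tfrac{1}{|\mathcal{I}_3|}\sum \one(T_i=t)(Y_i-g_{\hat\theta})^2(\widehat{m}_{\hat\theta}-m_{\hat\theta})^2$: Assumption~\ref{asm:AIPWconv}(iv) only delivers an empirical $L^2$-type convergence of the propensity estimator and $Y$ is unbounded, so a naive uniform bound fails. The sample splitting afforded by Assumption~\ref{asm:samples}, combined with the tower-property conditioning that reduces this average to $\ex[\mathcal{E}_{1,t}^{(n)}\mid\mathcal{A}]$ and a conditional weak law that converts empirical convergence of $\mathcal{E}_{1,t}^{(n)}$ into convergence of its conditional mean, is the key device that overcomes the obstacle.
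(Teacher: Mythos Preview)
Your decomposition and the handling of $A_n=\tfrac{1}{|\mathcal{I}_3|}\sum(\widehat{u}_i-u_i)^2$ are essentially the paper's approach; your treatment of $(r_i^3)^2$ via the uniform bound on $\widehat{m}_{\hat\theta}^{-1}-m_{\hat\theta}^{-1}$ is in fact a cleaner route than the paper's, which invokes the product-rate condition (vi) there. Your conversion of $\mathcal{E}_{1,t}^{(n)}\xrightarrow{P}0$ into $\ex[\mathcal{E}_{1,t}^{(n)}\mid\mathcal{A}]\xrightarrow{P}0$ is also fine, since the summands are uniformly bounded.

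There is, however, a genuine gap in the oracle step. From $\ex[u_1^2\mid\mathcal{A}]\leq K(C,c)$ alone you cannot conclude $\tfrac{1}{|\mathcal{I}_3|}\sum u_i^2-\ex[u_1^2\mid\mathcal{A}]\xrightarrow{P}0$: this is a triangular-array WLLN (the conditional distribution of $u_i=u_i(\hat\theta)$ changes with $n$ through $\hat\theta$), and a uniformly bounded first moment is not sufficient---the standard counterexample is $V_i^{(n)}$ taking value $n$ with probability $1/n$. You never invoke Assumption~\ref{asm:AIPWconv}(iii), which is precisely what the paper uses: it bounds $\ex[|u_i(\theta)|^{2+\delta}]\leq C'\ex[|Y|^{2+\delta}]$ uniformly in $\theta$ and then applies a uniform LLN (Lemma~19 of \citet{shah2020hardness}) over $\theta\in\Theta$, followed by the passage from uniform-in-$\theta$ convergence to convergence with $\hat\theta$ plugged in, as in \eqref{eq:uniformimpliescondtional}. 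An alternative fix, closer in spirit to your argument, is to use the pointwise domination $u_i(\theta)^2\lesssim 1+Y_i^2$ (which follows from your bound $|g_{\hat\theta}|\leq\sqrt{C}$ and strict overlap) together with a truncation argument: the dominating variable $1+Y_i^2$ is integrable and independent of $\theta$, giving uniform integrability of $\{u_i(\theta)^2:\theta\in\Theta\}$ and hence a uniform WLLN via truncate--Chebyshev--untruncate. Either way, the step as written does not go through.
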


\subsubsection{Asymptotics of representation induced error}
We now turn to the discussion of the second term in \eqref{eq:biasvardecomp}, i.e., the difference between the adjusted mean for the estimated representation $\bZ_{\hat{\theta}}$ and the adjusted mean for the full covariate $\bW$. Under sufficient regularity, the delta method \citep[Thm. 3.8]{van2000asymptotic} describes the distribution of this error:
\begin{prop}\label{prop:deltamethod}
    Assume $(T,\bW,Y)\sim P$ satisfies the model \eqref{eq:factorizedmodel} with $\theta_P \in \Theta\subseteq \real^p$. Let $u\colon \Theta \to \real$ be the function given by $u(\theta) = \chi_t(\phi(\theta,\bW);P)$ and assume that $u$ is differentiable in $\theta_P$. Suppose that $\hat{\theta}$ is an estimator of $\theta_P$ with rate $r_n$ such that
    \begin{align*}
        r_n \cdot (\hat{\theta} -\theta_P) 
        \xrightarrow{d} \mathrm{N}(0,\Sigma).
    \end{align*}
    Then
    \begin{align*}
        r_n \cdot (\chi_t(\bZ_{\hat{\theta}}) - \chi_t)
        \xrightarrow{d} \mathrm{N}(0, \nabla u(\theta_P)^\top \Sigma \nabla u(\theta_P))
    \end{align*}
    as $n \to \infty$.
\end{prop}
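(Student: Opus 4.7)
The proof is a direct application of the classical delta method \citep[Theorem 3.8]{van2000asymptotic}, and the main task is to reconcile the two representations of the quantity of interest with the function $u$.

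My first step is to identify the ``true'' value. Under the factorization \eqref{eq:factorizedmodel} with ground-truth parameter $\theta_P$, the representation $\bZ_{\theta_P} = \phi(\theta_P, \bW)$ satisfies $g_P(t,\bw) = h_P(t,\phi(\theta_P,\bw))$, so $g_P(t,\cdot)$ is $\sigma(\bZ_{\theta_P})$-measurable. Hence $b_t(\bZ_{\theta_P};P) = b_t(\bW;P)$, which means $\sigma(\bZ_{\theta_P})$ is $P$-OMS and therefore $P$-valid. It follows that $u(\theta_P) = \chi_t(\bZ_{\theta_P};P) = \chi_t(\bW;P) = \chi_t$.

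Next, I identify the random side. By the convention introduced just before Theorem~\ref{thm:conditionalAIPWconv}, the quantity $\chi_t(\bZ_{\hat\theta})$ is computed conditionally on $\hat\theta$, i.e., $\chi_t(\bZ_{\hat\theta}) = \ex[\ex[Y\mid T=t,\bZ_{\hat\theta},\hat\theta\,]\mid \hat\theta]$. Because $\hat\theta$ is computed on $\mathcal{I}_1$ and is independent of the generic template $(T,\bW,Y)\sim P$ appearing in the definition of $u$, the inner expectations factor as those of $P$ applied to $\phi(\theta,\bW)$ with $\theta=\hat\theta$ plugged in. This yields the almost-sure identity $\chi_t(\bZ_{\hat\theta}) = u(\hat\theta)$ as random variables.

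With these two identifications, the claimed convergence reduces to $r_n(u(\hat\theta) - u(\theta_P)) \xrightarrow{d} \mathrm{N}(0,\nabla u(\theta_P)^\top \Sigma\, \nabla u(\theta_P))$. Since $u$ is differentiable at $\theta_P$ by assumption and $r_n(\hat\theta - \theta_P) \xrightarrow{d} \mathrm{N}(0,\Sigma)$, this is immediate from the classical (multivariate) delta method. No genuine obstacle arises: the only subtlety worth highlighting is the correct interpretation of $\chi_t(\bZ_{\hat\theta})$ as the composed random variable $u(\hat\theta)$, which relies on the sample-splitting structure of Assumption~\ref{asm:samples} and the stated convention for conditioning on the estimated nuisance $\hat\theta$.
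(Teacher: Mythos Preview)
Your proposal is correct and follows exactly the approach indicated in the paper, which simply attributes the result to the classical delta method \citep[Thm. 3.8]{van2000asymptotic} without giving a separate proof. You have spelled out the two identifications $u(\theta_P)=\chi_t$ (via $P$-OMS) and $\chi_t(\bZ_{\hat\theta})=u(\hat\theta)$ (via the conditioning convention and independence of $\hat\theta$ from the template observation) that make the application of the delta method immediate; the paper leaves these implicit.
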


The delta method requires that the adjusted mean, $\chi_t = \chi_t(\phi(\theta,\bW);P)$, is differentiable with respect to $\theta\in\Theta$. The theorem below showcases that this is the case for the single-index model in Example~\ref{ex:singleindex}.
\begin{thm}\label{thm:SIregularity}
    Let $(T,\bW,Y)\sim P$ be given by the single-index model in Example~\ref{ex:singleindex} with $h_t(\cdot)\coloneq h(t,\cdot)\in C^1(\mathbb{R})$. Assume that $\bW$ has a distribution with density $p_{\bW}$ with respect to Lebesgue measure on $\real^d$ and that $p_{\bW}$ is continuous almost everywhere with bounded support. Assume also that the propensity $m(t\mid \bw) = \mathbb{P}(T=t\given \bW=\bw)$ is continuous in $\bw$.
    
    Then $u\colon \real^d \to \real$, defined by $u(\theta) = \chi_t(\bW^\top \theta; P)$, is differentiable at $\theta=\theta_P$ with
    \begin{align*}
    \nabla u (\theta_P)=
    \ex_P\Big[
     h_t'(\bW^\top \theta_P)
    \Big(
    1
    -
    \frac{\mathbb{P}(T=t\given \bW)}{\mathbb{P}(T=t\given \bW^\top \theta_P)}
    \Big)\bW\Big].
    \end{align*}
\end{thm}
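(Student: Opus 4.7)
The plan is to derive a compact identity for $u(\theta) - u(\theta_P)$ that separates the dependence on $\theta$ into one piece vanishing to first order (a difference of $h_t$ values) times one piece that stays bounded (a propensity ratio). Dominated convergence then extracts the gradient directly.

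I start from the IPW representation: because the single-index model gives $\ex_P[Y\mid T=t, \bW] = h_t(\bW^\top\theta_P)$,
\begin{equation*}
u(\theta) \;=\; \ex_P\!\left[\frac{\one(T=t)\, Y}{m_\theta(t\mid \bW)}\right] \;=\; \ex_P\!\left[\frac{m(t\mid \bW)}{m_\theta(t\mid \bW)}\, h_t(\bW^\top\theta_P)\right],
\end{equation*}
where $m_\theta(t\mid \bw) \coloneq \mathbb{P}(T=t\mid \bW^\top\theta = \bw^\top\theta)$. The key auxiliary identity is that for any $\sigma(\bW^\top\theta)$-measurable $\varphi(\bW)$, iterated expectations yield $\ex_P[m(t\mid \bW)\,\varphi(\bW)/m_\theta(t\mid \bW)] = \ex_P[\varphi(\bW)]$, since $\ex_P[m(t\mid\bW)\mid\bW^\top\theta] = m_\theta(t\mid\bW)$. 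Applying this with $\varphi = h_t$ and adding and subtracting $h_t(\bW^\top\theta)$ inside the expectation yields
\begin{equation*}
u(\theta) - u(\theta_P) \;=\; \ex_P\!\left[\Bigl(1 - \frac{m(t\mid \bW)}{m_\theta(t\mid \bW)}\Bigr)\bigl(h_t(\bW^\top\theta) - h_t(\bW^\top\theta_P)\bigr)\right].
\end{equation*}

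For $\eta \in \real^d$ and small $s$, I divide this identity (with $\theta = \theta_P + s\eta$) by $s$. By the $C^1$-regularity of $h_t$, the second factor divided by $s$ converges pointwise to $h_t'(\bW^\top\theta_P)\,\bW^\top\eta$; by continuity of $\theta \mapsto m_\theta(t\mid\bw)$, addressed below, the first factor converges to $1 - m(t\mid\bW)/m_{\theta_P}(t\mid\bW)$. For the dominating function: continuity of $m(t\mid\cdot)$ on the compact support of $\bW$ together with Assumption~\ref{asm:positivity} yields a constant $c > 0$ with $m(t\mid\bw) \geq c$ on the support; hence $m_\theta(t\mid\bw) \geq c$ uniformly in $(\theta,\bw)$ because $m_\theta$ is a conditional average of $m$. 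Consequently $|1 - m/m_\theta| \leq 1 + 1/c$, and by the mean value theorem the difference quotient of $h_t$ is bounded by $(\sup_K |h_t'|)\,\|\bW\|\,\|\eta\|$ on a compact set $K$ containing $\{\bw^\top\theta \colon \bw \in \mathrm{supp}(\bW),\,\theta\ \text{near}\ \theta_P\}$, which is finite because $\bW$ has bounded support and $h_t' \in C(\real)$. Dominated convergence then produces the directional derivative $\nabla u(\theta_P)^\top\eta = \ex_P\bigl[\bigl(1 - m(t\mid\bW)/m_{\theta_P}(t\mid\bW)\bigr)\,h_t'(\bW^\top\theta_P)\,\bW^\top\eta\bigr]$, and since $\eta$ is arbitrary, this is the claimed gradient formula.

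The main obstacle is establishing the pointwise convergence $m_\theta(t\mid\bw) \to m_{\theta_P}(t\mid\bw)$ as $\theta \to \theta_P$. I plan to handle this via a change of variables: after choosing a coordinate with $(\theta_P)_j \neq 0$, the substitution $(\bw_j, \bw_{-j}) \mapsto (\bw^\top\theta, \bw_{-j})$ expresses both the marginal density $f_\theta(z)$ of $\bW^\top\theta$ and the numerator $m_\theta(t\mid z)\,f_\theta(z)$ as $(d-1)$-dimensional integrals whose integrands are a.e.\ continuous in $(\theta, z)$ and locally uniformly bounded, using boundedness of the support and the a.e.\ continuity of $p_\bW$. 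Dominated convergence then yields joint continuity of both objects, and positivity of $f_\theta(z)$ on the relevant range makes the quotient $m_\theta(t\mid z)$ jointly continuous as well. Once this step is in place, the preceding derivation is a routine application of iterated expectations and dominated convergence.
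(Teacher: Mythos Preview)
Your proposal is correct and follows essentially the same line as the paper's proof: both establish the identity
\[
u(\theta)-u(\theta_P)=\ex\Bigl[\Bigl(1-\tfrac{m(t\mid\bW)}{m_\theta(t\mid\bW)}\Bigr)\bigl(h_t(\bW^\top\theta)-h_t(\bW^\top\theta_P)\bigr)\Bigr]
\]
(the paper reaches it from the outcome-regression side, you from IPW) and then argue continuity of $\theta\mapsto m_\theta(t\mid\bw)$ via a change of variables. One small point: the paper factors this identity as $C(\theta;\theta_P)^\top(\theta-\theta_P)$ and shows $C$ is continuous at $\theta_P$, which yields Fr\'echet differentiability directly; your phrasing via directional derivatives and ``since $\eta$ is arbitrary'' technically only gives G\^ateaux differentiability, but because your dominating function is independent of $\eta$ and the pointwise convergence holds as $\theta\to\theta_P$ (not just along rays), the upgrade to full differentiability is routine.
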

The theorem is stated with some restrictive assumptions that simplify the proof, but these are likely not necessary. 
It should also be possible to extend this result to more general models of the form \eqref{eq:factorizedmodel}, but we leave such generalizations for future work. In fact, the proof technique of Theorem~\ref{thm:SIregularity} has already found application in \citet{gnecco2023boosted} in a different context.

The convergences implied in Theorem \ref{thm:conditionalAIPWconv} and Proposition \ref{prop:deltamethod}, with $r_n=\sqrt{|\mathcal{I}_3|}$, suggest that the MSE of the DOPE is of order
\begin{align}\label{eq:asymptoticMSE}
    \ex[(\widehat{\chi}_t^{\mathrm{dope}} - \chi_t)^2]
    &= \ex[(\widehat{\chi}_t^{\mathrm{dope}} - \chi_t(\bZ_{\hat{\theta}}))^2]
        + \ex[(\chi_t(\bZ_{\hat{\theta}}) - \chi_t)^2] \nonumber \\
        &\qquad + 2\,\ex[(\widehat{\chi}_t^{\mathrm{dope}} - \chi_t(\bZ_{\hat{\theta}}))(\chi_t(\bZ_{\hat{\theta}}) - \chi_t)] \nonumber \\
    &\approx \frac{1}{|\mathcal{I}_3|}\Big(\ex[\bV_t(\bZ_{\hat{\theta}})] + 
        \nabla u(\theta_P)^\top \Sigma \nabla u(\theta_P)
        \Big)
        + o\left(|\mathcal{I}_3|^{-1}\right)
\end{align}
The informal approximation `$\approx$' can be turned into an equality by establishing (or simply assuming) uniform integrability, which enables the distributional convergences to be lifted to convergences of moments.

\begin{remark}\label{rmk:confidenceinterval}
    The expression in \eqref{eq:asymptoticMSE} suggests an approximate confidence interval of the form
    \begin{equation*}
        \widehat{\chi}_t^{\mathrm{dope}} \pm \frac{z_{1-\alpha}}{\sqrt{|\mathcal{I}_3|}} (\widehat{\mathcal{V}}_t + \widehat{\nabla}^\top \widehat{\Sigma} \widehat{\nabla}),
    \end{equation*}
     where $\widehat{\Sigma}$ is a consistent estimator of the asymptotic variance of $\sqrt{|\mathcal{I}_3|}(\hat{\theta}-\theta_P)$, and where $\widehat{\nabla}$ is a consistent estimator of $\nabla_\theta \chi_t(\phi(\theta, \bW))\vert_{\theta=\theta_P}$. However, the requirement of constructing both $\widehat{\Sigma}$ and $\widehat{\nabla}$ adds further to the complexity of the methodology, so it might be preferable to rely on bootstrapping techniques in practice. In the simulation study we return to the question of inference and examine the coverage of a naive interval that does not include the term $\widehat{\nabla}^\top \widehat{\Sigma} \widehat{\nabla}$.
\end{remark}

\section{Experiments}\label{sec:experiments}
In this section, we showcase the performance of the DOPE on simulated and real data.
All code is publicly available on \href{https://github.com/AlexanderChristgau/OutcomeAdaptedAdjustment}{GitHub}\footnote{
\url{https://github.com/AlexanderChristgau/OutcomeAdaptedAdjustment}}.

\subsection{Simulation study}\label{sec:simulation}
We present results of a simulation study based on the single-index model from Example~\ref{ex:singleindex}. We demonstrate the performance of the DOPE from
Algorithms \ref{alg:generalalg} and \ref{alg:crossfitted}, and compare with various alternative estimators.

\subsubsection{Sampling scheme}
We simulated datasets consisting of $n$ i.i.d. copies of $(T,\bW,Y)$ sampled according to the following scheme:
\begin{align}\label{eq:samplescheme}
    \bW = (W^1,\ldots,W^d)
        &\sim \mathrm{Unif}(0,1)^{\otimes d} \nonumber \\
    T \given \bW
        &\sim \mathrm{Bern}( 0.01 + 0.98 \cdot \one(W^1 > 0.5)) \nonumber\\
    Y \given T,\bW, \beta_Y
        &\sim \mathrm{N}(h(T, \bW^\top \beta_Y ),1)
\end{align}
where $\beta_Y$ is sampled once for each dataset with
\begin{align*}
    \beta_Y 
        &= (1,\tilde{\beta}_Y ), \qquad \tilde{\beta}_Y \sim \mathrm{N}(0,\mathbf{I}_{d-1}),
\end{align*}
and where $n, d$, and $h$ are experimental parameters.
The settings considered were $n\in \{300,900,2700\}$, $d\in\{4,12,36\}$, and with $h$ being one of
\begin{align}\label{eq:simulationlinks}
    \begin{array}{ll}
       h_{\text{lin}}(t,z) = t+3z,   & \quad
            h_{\text{square}} (t,z) = z^{1+t}, \\
       h_{\text{cbrt}} (t,z) = (2+t)z^{1/3},  & \quad 
            h_{\text{sin}} (t,z) = (3+t)\sin(\pi z).
    \end{array}
\end{align}
For each setting, $N=300$ datasets were simulated.

Note that while $\ex[T] = 0.01 + 0.98 \cdot \mathbb{P}(W^1 > 0.5) = 0.5$, the propensity score $m(t\mid \bw) = 0.01 + 0.98 \cdot \one(w^1 > 0.5)$ takes the rather extreme values $\{0.01,0.99\}$. Even though it technically satisfies (strict) positivity, these extreme values of the propensity makes the adjustment for $\bW$ a challenging task. For each dataset, the adjusted mean $\chi_1$ (conditional on $\beta_Y$) was considered as the target parameter, and the ground truth was numerically computed as the sample mean of $10^7$ observations of $h(1,\bW^\top \beta_Y)$.

\subsubsection{Simulation estimators}\label{sec:simestimators} This section contains an overview of the estimators used in the simulation. For a complete description see Section \ref{sup:simdetails} in the supplementary material.

Two settings were considered for outcome regression (OR):
\begin{itemize}
    \item \textbf{Linear}: Ordinary Least Squares (OLS). 

    \item \textbf{Neural network}: A feedforward neural network with two hidden layers: a linear layer with one neuron, followed by a fully connected ReLU-layer with 100 neurons.
    The first layer is a linear bottleneck that enforces the single-index model, and we denote the weights by $\theta\in \real^d$. An illustration of the architecture can be found in the supplementary Section~\ref{sup:simdetails}.
    For a further discussion of leaning single-and multiple-index models with neural networks, see \citet{parkinson2023linear} and references therein.        
\end{itemize}
For propensity score estimation, logistic regression was used across all settings. A ReLU-network with one hidden layer with 100 neurons was also considered for estimation of the propensity score, but it did not enhance the performance of the resulting ATE estimators in this setting. Random forests and other methods were also initially used for both outcome regression and propensity score estimation. They were subsequently excluded because they did not seem to yield any noteworthy insights beyond what was observed for the methods above.

For each outcome regression, two implementations were explored: a stratified regression where $Y$ is regressed onto $\bW$ separately for each stratum $T=1$ and $T=2$, and a joint regression where $Y$ is regressed onto $(T,\bW)$ simultaneously. In the case of joint regression, the neural network architecture represents the regression function as a single index model, given by $Y = h(\alpha T + \theta^\top \bW) + \varepsilon_Y$. This representation differs from the description of the regression function specified in the sample scheme~\eqref{eq:samplescheme}, which allows for a more complex interaction between treatment and the index of the covariates. In this sense, the joint regression is misspecified. 

Based on these methods for nuisance estimation, we considered the following estimators of the adjusted mean: 
\begin{itemize}
    \item The regression estimator 
    \(
    \widehat{\chi}_1^{\mathrm{reg}}
        \coloneq \mathbb{P}_n[\widehat{g}(1,\bW)] 
    \).
    \item The AIPW estimator given in Equation \eqref{eq:AIPW}.
    
    \item The DOPE from Algorithm~\ref{alg:generalalg}, 
    where $\bZ_{\hat{\theta}} = \bW^\top\hat{\theta}$ and where    
    $\hat{\theta}$ contains the weights of the first layer of the neural network designed for single-index regression.
    We refer to this estimator as the DOPE-IDX.

    \item The DOPE-BCL described in Remark~\ref{rmk:benkeser}, where the propensity score is based on the final outcome regression. 
    See also \citet[App. D]{benkeser2020nonparametric}.
\end{itemize}
We considered two versions of each DOPE estimator: one without sample splitting and another using 4-fold cross-fitting\footnote{consistent with Rem. 3.1. in \citet{chernozhukov2018}, which recommends using 4 or 5 folds for cross-fitting.}. For the latter, the final empirical mean is calculated based on the hold-out fold, while the nuisance parameters are fitted using the remaining three folds. For each fold $k=1,\ldots,4$, this means employing Algorithm~\ref{alg:generalalg} with $\mathcal{I}_1=\mathcal{I}_2 = [n]\setminus J_k$ and $\mathcal{I}_3 = J_k$. The two versions showed comparable performance for larger samples. In this section we only present the results for the DOPE without sample splitting, which performed better overall in our simulations. However, a comparison with the cross-fitted version can be found in Section \ref{sup:simdetails} in the supplement.

Numerical results for the IPW estimator 
\( \widehat{\chi}_1^{\mathrm{ipw}}
    \coloneq \mathbb{P}_n[\one(T=t)Y / \widehat{m}(1\given \bW)] \) 
were also gathered. It performed poorly in our settings, and we have omitted the results for the sake of clarity in the presentation.

\subsubsection{Empirical performance of estimators}
\begin{figure}
    \centering
    \includegraphics[width=\linewidth]{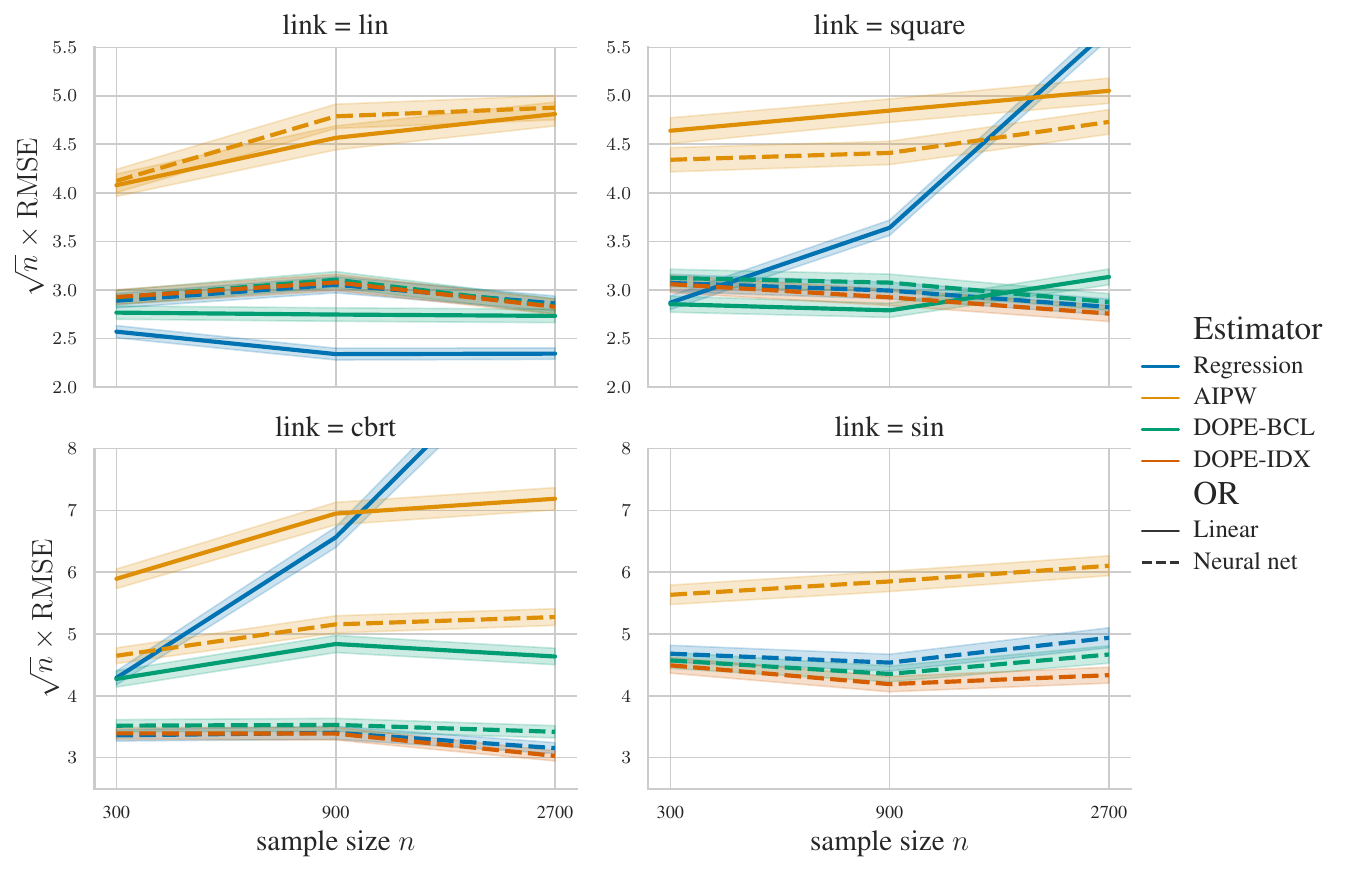}
    \caption{Root mean square errors for various estimators of $\chi_1$ plotted against sample size. Each data point is an average over 900 datasets, 300 for each choice of $d\in\{4,12,36\}$. The bands around each line correspond to $\pm 1$ standard error.    
    For this plot, the outcome regression (OR) was fitted separately for each stratum $T=1$ and $T=2$.}
    \label{fig:RMSE-nocf}
\end{figure}

The results for stratified outcome regression are shown in Figure \ref{fig:RMSE-nocf}. Each panel, corresponding to a link function in \eqref{eq:simulationlinks}, displays the RMSE for various estimators against sample size. Across different values of $d \in \{4,12,36\}$, the results remain consistent, and thus the RMSE is averaged over the $3\times 300$ datasets with varying $d$. In the upper left panel, where the link is linear, the regression estimator $\widehat{\chi}_1^{\mathrm{reg}}$ with OLS performs best as expected. The remaining estimators exhibit similar performance, except for AIPW, which consistently performs poorly across all settings. This can be attributed to extreme values of the propensity score, resulting in a large efficiency bound for AIPW. All estimators seem to maintain an approximate $\sqrt{n}$-consistent RMSE for the linear link as anticipated.

For the nonlinear links, we observe that the OLS-based regression estimators perform poorly and do not exhibit approximate $\sqrt{n}$-consistency. 
For the sine link, the RMSEs for all OLS-based estimators are large, and they are not shown for ease of visualization.
For neural network outcome regression, the AIPW still performs the worst across all nonlinear links. 
The regression estimator and the DOPE estimators share similar performance when used with the neural network, with DOPE-IDX being more accurate overall. Since the neural network architecture is tailored to the single-index model, it is not surprising that the outcome regression works well, and as a result there is less need for debiasing. On the other hand, the debiasing introduced in the DOPE does not hurt the accuracy, and in fact, improves it in this setting.

\begin{figure}
    \centering
    \includegraphics[width=\linewidth]{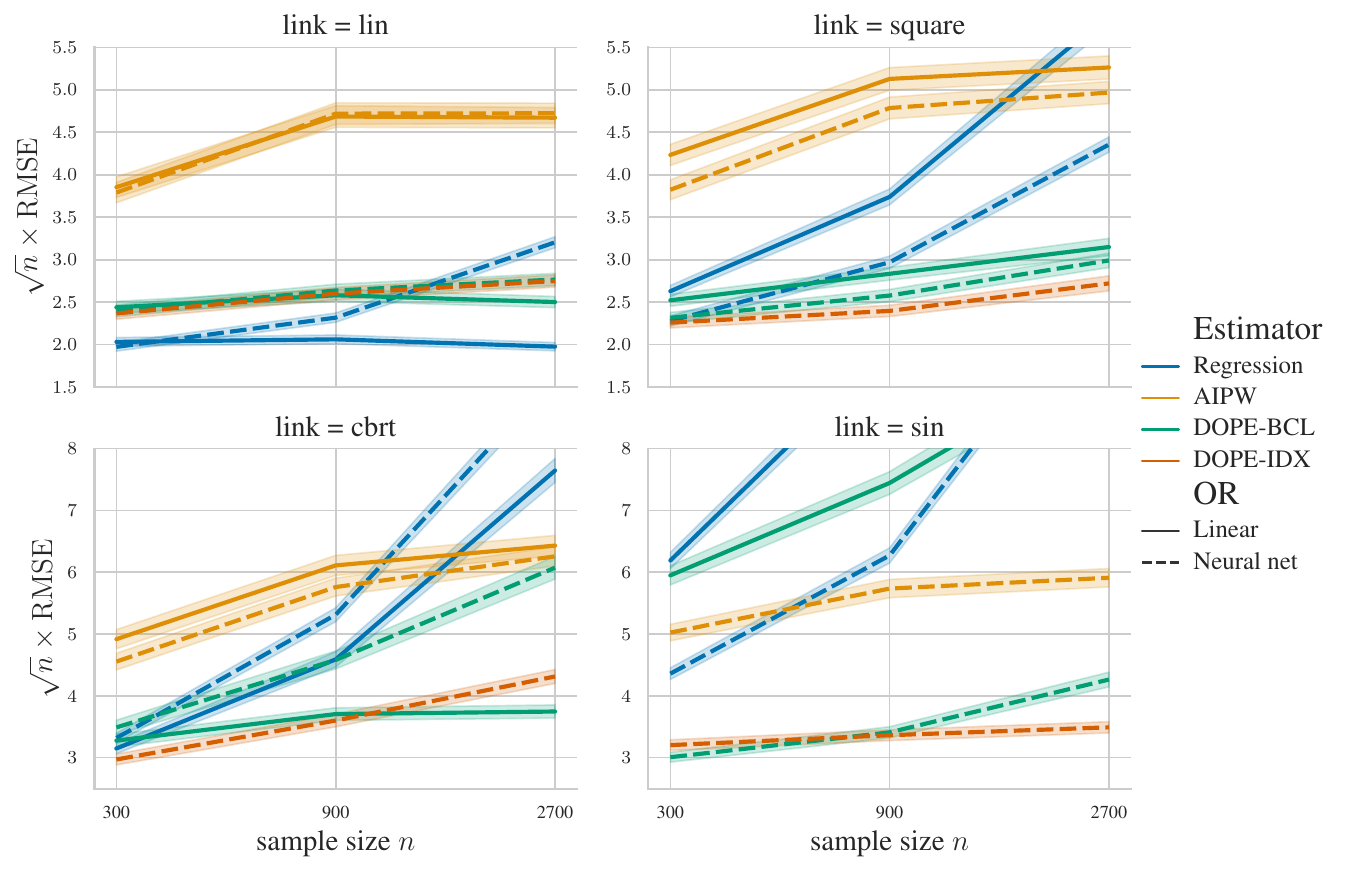}
    \caption{Root mean square errors for various estimators of $\chi_1$ plotted against sample size. Each data point is an average over 900 datasets, 300 for each choice of $d\in\{4,12,36\}$. The bands around each line correspond to $\pm 1$ standard error.    
    For this plot, the outcome regression (OR) was fitted jointly onto $(T,\bW)$.}
    \label{fig:RMSE-joint}
\end{figure}

The results for joint regression of $Y$ on $(T,\bW)$ are shown in Figure \ref{fig:RMSE-joint}. 
The results for the OLS-based estimators provide similar insights as previously discussed, so we focus on the results for the neural network based estimators. The jointly trained neural network is, in a sense, misspecified for the single-index model (except for the linear link), as discussed in Section~\ref{sec:simestimators}. 
Thus it is not surprising that the regression estimator fails to adjust effectively for larger sample sizes. What is somewhat unexpected, however, is that the precision of DOPE, especially the DOPE-IDX, does not appear to be compromised by the misspecified outcome regression. In fact, the DOPE-IDX even seems to perform better with joint outcome regression. 
We suspect that this could be attributed to the joint regression producing more robust predictions for the rare treated subjects with $W^1\leq 0.5$, for which $m(1\given \bW)=0.01$. The predictions are more robust since the joint regression can leverage some of the information from the many untreated subjects with $W^1\leq 0.5$ at the cost of introducing systematic bias, which the DOPE-IDX deals with in the debiasing step.
While this phenomenon is interesting, a thorough exploration of its exact details, both numerically and theoretically, is a task we believe is better suited for future research.

In summary, the DOPE serves as a middle ground between the regression estimator and the AIPW. It provides an additional safeguard against biased outcome regression, all while avoiding the potential numerical instability entailed by using standard inverse propensity weights.

\subsubsection{Inference}\label{sec:simulationinference}
We now consider approximate confidence intervals obtained from the empirical variance estimator $\widehat{\mathcal{V}}$ defined in \eqref{eq:varestimator}. 
Specifically, we consider intervals of the form $\widehat{\chi}_1\pm \frac{z_{0.975}}{\sqrt{n}}\widehat{\mathcal{V}}^{-1/2}$, where $z_{0.975}$ is the $0.975$ quantile of the standard normal distribution.

\begin{figure}
    \centering
    \includegraphics[width=\linewidth]{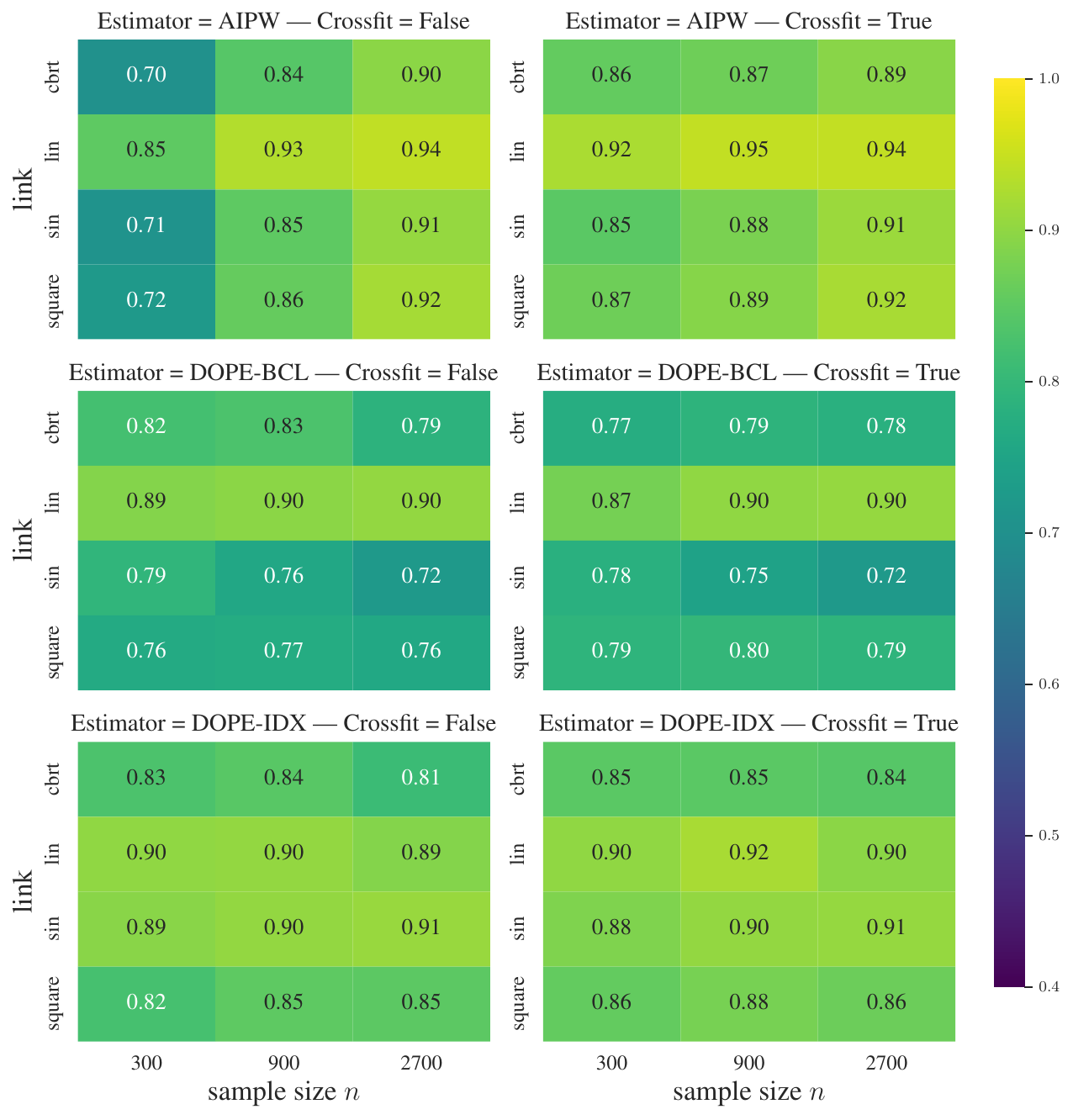}
    \caption{Coverage of asymptotic confidence intervals of the adjusted mean $\chi_1$, aggregated over $d\in\{4,12,36\}$ and with the true $\beta$ given in \eqref{eq:samplescheme}.
    } 
    \label{fig:Coverage-cf}
\end{figure}
Figure \ref{fig:Coverage-cf} shows the empirical coverage of $\chi_1$ for the AIPW, DOPE-BCL and DOPE-IDX, all based on neural network outcome regression fitted jointly. Based on the number of repetitions, we can expect the estimated coverage probabilities to be accurate up to $\pm 0.01$ after rounding. 

The left column shows the results for no sample splitting, whereas the right column shows the results for 4-fold cross-fitting. For all estimators, cross-fitting seems to improve coverage slightly for large sample sizes. The AIPW is a little anti-conservative, especially for the links $h_{\text{sin}}$ and $h_{\text{cbrt}}$, but overall maintains approximate coverage.
The slight miscoverage is to be expected given the challenging nature with the extreme propensity weights.
For the DOPE-BCL and DOPE-IDX, the coverage probabilities are worse, but with the DOPE-IDX having better coverage overall. According to our asymptotic analysis, specifically Theorem~\ref{thm:conditionalAIPWconv}, the DOPE-BCL and DOPE-IDX intervals 
will only achieve asymptotic $95\%$ coverage of the \emph{random} parameters 
$\chi_1(\widehat{h}(t, \bZ_{\hat{\theta}}))$ and $\chi_1(\bZ_{\hat{\theta}})$, respectively. Thus we do not anticipate these intervals to asymptotically achieve the nominal $95\%$ coverage of $\chi_1$. 
However, we show in Figure~\ref{fig:Coverage-with-length} in the supplementary Section~\ref{sup:simdetails} that the average widths of the confidence intervals are significantly smaller for the DOPE intervals than the AIPW interval. Hence it is possible that the DOPE intervals can be corrected -- as we have also discussed in Remark~\ref{rmk:confidenceinterval} -- to gain advantage over the AIPW interval, but we leave such explorations for future work. 

\clearpage
In summary, the lack of full coverage for the naively constructed intervals are consistent with the asymptotic analysis conditionally on a representation $\bZ_{\hat{\theta}}$. 
In view of this, our asymptotic results might be more realistic and pragmatic than the previously studied regime \citep{benkeser2020nonparametric,ju2020robust}, which assumes fast unconditional rates of the nuisance parameters 
$(\widehat{m}_{\hat{\theta}},\widehat{g}_{\hat{\theta}})$.

\subsection{Application to NHANES data}
We consider the mortality dataset collected by the
National Health and Nutrition Examination Survey I Epidemiologic Followup Study \citep{cox1997plan}, henceforth referred to as the NHANES dataset.
The dataset was initially collected as in \citet{lundberg2020local}\footnote{
See \url{https://github.com/suinleelab/treeexplainer-study}
for their GitHub repository.
}.
The dataset contains several baseline covariates, and our outcome variable is the indicator of death at the end of study. 
To deal with missing values, we considered both a complete mean imputation as in \citet{lundberg2020local}, or a trimmed dataset where covariates with more than $50\%$ of their values missing are dropped and the rest are mean imputed. 
The latter approach reduces the number of covariates from $79$ to $65$. 
The final results were similar for the two imputation methods, so we only report the results for the mean imputed dataset here. In the supplementary Section~\ref{sup:simdetails} we show the results for the trimmed dataset.

The primary aim of this section is to evaluate the different estimation methodologies on a realistic and real data example. For this purpose we consider a treatment variable based on \textit{pulse pressure} and study its effect on mortality. Pulse pressure is defined as the difference between systolic and diastolic blood pressure (BP). 
High pulse pressure is not only used as an indicator of disease but is also reported to increase the risk of cardiovascular diseases \citep{franklin1999pulse}. We investigate the added effect of high pulse pressure when adjusting for other baseline covariates, in particular systolic BP and levels of white blood cells, hemoglobin, hematocrit, and platelets. 
We do not adjust for diastolic BP, as it determines the pulse pressure when combined with systolic BP, which would therefore lead to a violation of positivity. 

\begin{figure}
    \centering
    \includegraphics[width=0.75\linewidth]{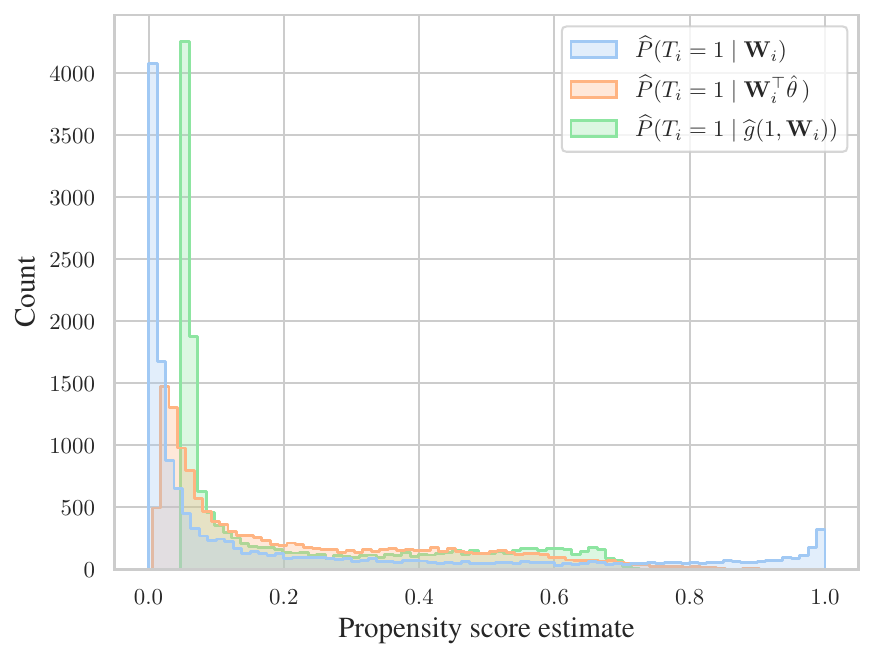}
    \caption{Distribution of estimated propensity scores based on the full covariate set $\bW$, the single-index $\bW^\top \hat{\theta}$, and the outcome predictions $\widehat{g}(1,\bW)$, respectively.}
    \label{fig:NHANESpropensities}
\end{figure}

A pulse pressure of 40 mmHg is considered normal, and as the pressure increases past 50 mmHg, the risk of cardiovascular diseases is reported to increase.
Some studies have used 60 mmHg as the threshold for high pulse pressure \citep{homan2024physiology}, and thus we consider the following treatment variable corresponding to high pulse pressure:
\begin{align*}
    T = \one\left( \textrm{pulse pressure} > 60 \textrm{ mmHg}\right).
\end{align*}

For the binary outcome regression we consider logistic regression and a variant of the neural network from 
Section~\ref{sec:simestimators} with an additional `sigmoid-activation' on the output layer. 
Based on 5-fold cross-validation, logistic regression yields a log-loss of $0.50$, whereas the neural network yields a log-loss of $0.48$ (smaller is better).

Figure~\ref{fig:NHANESpropensities} shows the distribution of the estimated propensity scores used in the AIPW, DOPE-IDX, and DOPE-BCL, based on neural network regression. As expected, we see that the full covariate set yields propensity scores that essentially violate positivity, with this effect being less pronounced for DOPE-IDX and even less so for DOPE-BCL, where the scores are comparatively less extreme.

\begin{table}
    \centering
    \begin{tabular}{lrrr}
    \toprule
    Estimator &  Estimate &  BS se &             BS CI \\
    \midrule
    Regr. (NN)          &     0.022 &  0.009 &    (0.004, 0.037) \\
    Naive contrast      &     0.388 &  0.010 &    (0.369, 0.407) \\
    Regr. (Logistic)    &     0.026 &  0.010 &    (0.012, 0.049) \\
    DOPE-BCL (Logistic) &     0.023 &  0.010 &    (0.003, 0.040) \\
    DOPE-BCL (NN)       &     0.021 &  0.010 &    (0.001, 0.039) \\
    DOPE-IDX (NN)       &     0.024 &  0.012 &    (0.002, 0.048) \\
    AIPW (NN)           &     0.023 &  0.015 &   (-0.009, 0.051) \\
    AIPW (Logistic)     &     0.026 &  0.015 &   (-0.004, 0.056) \\
    IPW (Logistic)      &    -0.040 &  0.025 &  (-0.103, -0.006) \\
    \bottomrule \\
    \end{tabular}

    \caption{Adjusted effect estimates of unhealthy pulse pressure on mortality based on NHANES dataset. Estimators are sorted in increasing order after bootstrap (BS) standard error. }
    \label{tab:pulsepressure}
\end{table}
Table~\ref{tab:pulsepressure} shows the estimated treatment effects 
$\widehat{\Delta} = \widehat{\chi}_1-\widehat{\chi}_0$ for various estimators together with $95\%$ bootstrap confidence intervals. The estimators are sorted according to bootstrap variance based on $B=1000$ bootstraps.  
Because we do not know the true effect, we cannot directly decide which estimator is best. 
The naive contrast provides a substantially larger estimate than all of the adjustment estimators, which indicates that the added effect of high pulse pressure on mortality when adjusting is different from the unadjusted effect. 
The IPW estimator, $\widehat{\chi}_1^{\mathrm{ipw}}$, is the only estimator to yield a negative estimate of the adjusted effect, and its bootstrap variance is also significantly larger than the other estimators. Thus it is plausible, that the IPW estimator fails to adjust appropriately. This is not surprising given extreme distribution of the propensity weights, as shown in blue in Figure~\ref{fig:NHANESpropensities}.

The remaining estimators yield comparable estimates of the adjusted effect, with the logistic regression based estimators having a marginally larger estimates than their neural network counterparts. The (bootstrap) standard errors are comparable for the DOPE and regression estimators, but the AIPW estimators have slightly larger standard errors. As a result, the added effect of high pulse pressure on mortality cannot be considered statistically significant for the AIPW estimators, whereas it can for the other estimators. 

In summary, the results indicate that the choice treatment effect estimator impacts the final estimate and confidence interval. While it is uncertain which estimator is superior for this particular application, the DOPE estimators seem to offer a reasonable and stable estimate. 

\section{Discussion}\label{sec:discussion}
In this paper, we address challenges posed by complex data with unknown underlying structure, an increasingly common scenario in observational studies. Specifically, we have formulated a refined and general formalism for studying efficiency of covariate adjustment. This formalism extends and builds upon the efficiency principles derived from causal graphical models, in particular results provided by \citet{rotnitzky2020efficient}. Our theoretical framework has led to the identification of the optimal covariate information for adjustment. 
From this theoretical groundwork, we introduced the DOPE for general estimation of the ATE with increased efficiency.

Several key areas within this research merit further investigation:

\textbf{Extension to other targets and causal effects}: Our focus has predominantly been on the adjusted mean $\chi_t = \ex[\ex[Y\given T=t,\bW]]$. However, the extension of our methodologies to continuous treatments, instrumental variables, or other causal effects, such as the average treatment effect among the treated, is an area that warrants in-depth exploration. We suspect that many of the fundamental ideas can be modified to prove analogous results for such target parameters. 

\textbf{Beyond neural networks}: While our examples and simulation study has applied the DOPE with neural networks, its compatibility with other regression methods such as kernel regression and gradient boosting opens new avenues for alternative adjustment estimators. Investigating such estimators can provide practical insights and broaden the applicability of our approach. 
The Outcome Highly Adapted Lasso \citep{ju2020robust} is a similar method in this direction that leverages the Highly Adaptive Lasso \citep{benkeser2016highly} for robust treatment effect estimation.

\textbf{Causal representation learning integration}: Another possible direction for future research is the integration of our efficiency analysis into the broader concept of causal representation learning \citep{scholkopf2021toward}. This line of research, which is not concerned with specific downstream tasks, could potentially benefit from some of the insights of our efficiency analysis and the DOPE framework.

\textbf{Implications of sample splitting}: Our current asymptotic analysis relies on the implementation of three distinct sample splits. Yet, our simulation results suggest that sample splitting might not be necessary. Using the notation of Algorithm~\ref{alg:generalalg}, we hypothesize that the independence assumption $\mathcal{I}_3 \cap (\mathcal{I}_1 \cup \mathcal{I}_2)= \emptyset$ could be replaced with Donsker class conditions. However, the theoretical justifications for setting $\mathcal{I}_1$ identical to $\mathcal{I}_2$ are not as evident, thus meriting additional exploration into this potential simplification.

\textbf{Valid and efficient confidence intervals}: 
The naive confidence intervals explored in the simulation study, cf. 
Section~\ref{sec:simulationinference}, do not provide an adequate coverage of the unconditional adjusted mean. As indicated in Remark~\ref{rmk:confidenceinterval}, it would be interesting to explore methods for correcting the intervals, for example by: (i) adding a bias correction, (ii) constructing a consistent estimator of the unconditional variance, or (iii) using bootstrapping. Appealing to bootstrapping techniques, however, might be computationally expensive, in particular when combined with cross-fitting. To implement (i), one approach could be to generalize
Theorem~\ref{thm:SIregularity} to establish general differentiability of the adjusted mean with respect to general representations, and then construct an estimator of the gradient. However, one must asses whether this bias correction could negate the efficiency benefits of the DOPE.

\section*{Acknowledgments}
We thank Leonard Henckel and Anton Rask Lundborg for helpful discussions. 
AMC and NRH were supported by a research grant (NNF20OC0062897) from Novo Nordisk Fonden.
    \clearpage
\section*{Supplement to `Efficient adjustment for complex covariates: Gaining efficiency with DOPE'}

The supplementary material is organized as follows: 
Section~\ref{sup:proofs} contains proofs of the results in the main text and a few auxiliary results; 
Section~\ref{sup:simdetails} contains details on the experiments in Section~\ref{sec:experiments}; 
and Section~\ref{sup:crossfitting} contains a description of the cross-fitting algorithm.

\supplementarysection{Auxiliary results and proofs}\label{sup:proofs}
The proposition below relaxes the criteria for being a valid adjustment set in a causal graphical model. It follows from the results of \citet{perkovic2018complete}, which was pointed out to the authors by Leonard Henckel, and is stated here for completeness.
\begin{prop}\label{prop:adjset}
    Let $T,\bZ$, and $Y$ be pairwise disjoint node sets in a DAG $\mathcal{D}=(\mathbf{V},\mathbf{E})$, and let $\mathcal{M}(\mathcal{D})$ denote the collection of continuous distributions that are  Markovian with respect to $\mathcal{D}$ and with $\ex|Y|<\infty$. Then $\bZ$ is an adjustment set relative to $(T,Y)$ if and only if for all $P\in \mathcal{M}(\mathcal{D})$ and all $t$
    \begin{align*}
        \ex_P[Y\given \mathrm{do}(T = t)] 
        = \ex_P\left[
            \frac{\one(T=t)Y}{\mathbb{P}_P(T=t\given \mathrm{pa}_{\mathcal{D}}(T))}
            \right]
        = \ex_P[\ex_P[Y\given T=t, \bZ]]
    \end{align*}
\end{prop}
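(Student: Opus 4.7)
My plan is to split the equivalence into its two directions, taking the density-level definition of an adjustment set from \citet{perkovic2018complete} as the anchor. The first observation is that the leftmost equality, $\ex_P[Y \mid \mathrm{do}(T=t)] = \ex_P[\one(T=t) Y / \mathbb{P}_P(T=t \mid \mathrm{pa}_\mathcal{D}(T))]$, is valid for every $P \in \mathcal{M}(\mathcal{D})$ irrespective of $\bZ$: the parent set $\mathrm{pa}_\mathcal{D}(T)$ trivially satisfies the generalized adjustment criterion (the classical back-door criterion in its simplest form), hence is always an adjustment set relative to $(T,Y)$, and the IPW form then follows from a one-line tower-property identity. The proposition therefore reduces to showing that $\bZ$ is an adjustment set if and only if the mean-level identity $\ex_P[Y \mid \mathrm{do}(T=t)] = \ex_P[\ex_P[Y \mid T=t, \bZ]]$ holds for all $P \in \mathcal{M}(\mathcal{D})$ and all $t$.

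The forward direction of this reduced equivalence is immediate: integrating the density-level adjustment formula against $y$ yields the mean formula. For the backward direction I would argue by contrapositive. Assume $\bZ$ is not an adjustment set. By completeness of the GAC \citep[Thm.~3.4]{perkovic2018complete}, there exist $P^* \in \mathcal{M}(\mathcal{D})$, $t^* \in \mathbb{T}$, and $y^* \in \mathbb{R}$ where the density adjustment formula fails. The task is to promote this density-level defect into a mean-level one, i.e.\ to exhibit some $P \in \mathcal{M}(\mathcal{D})$ and $t$ for which the mean formula fails.

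The key idea is to exploit that in $\mathcal{M}(\mathcal{D})$ the conditional distribution of $Y$ given $\mathrm{pa}_\mathcal{D}(Y)$ is a free parameter, so I may replace it without leaving the model. Concretely, I would construct $\tilde P \in \mathcal{M}(\mathcal{D})$ from $P^*$ by replacing only this conditional kernel with the law of $\one(Y \leq y^*) + \sigma \varepsilon$ given $\mathrm{pa}_\mathcal{D}(Y)$, where $\varepsilon$ is independent standard Gaussian and $\sigma > 0$ is small (the smoothing keeps the joint density continuous). Both sides of the mean adjustment formula under $\tilde P$ are linear functionals of this kernel, and in the limit $\sigma \to 0$ they reduce to the two sides of the density formula under $P^*$ integrated against the indicator at $y^*$, which disagree for a suitable $y^*$ by construction. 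The main obstacle is the technical bookkeeping of this smoothing step --- verifying that $\tilde P$ genuinely lies in $\mathcal{M}(\mathcal{D})$ (continuity of the joint density, integrability of the new outcome) and that the $\sigma \to 0$ limit interchanges with both sides of the adjustment formula. A slicker variant, which I would likely favour in the final write-up, is to reuse the explicit counterexamples produced inside the completeness proof of \citet{perkovic2018complete}, where the outcome can already be taken essentially bounded and needs only an arbitrarily small continuous perturbation to land in $\mathcal{M}(\mathcal{D})$.
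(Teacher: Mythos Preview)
Your overall strategy---reduce to the rightmost equality, handle the forward direction by integrating the density-level adjustment formula, and argue the converse by contrapositive via completeness of the GAC---matches the paper's proof. The paper's converse is exactly your ``slicker variant'': it invokes Theorem~56 and the proof of Theorem~57 in \citet{perkovic2018complete} to produce directly a \emph{Gaussian} $\tilde P \in \mathcal{M}(\mathcal{D})$ with $\ex_{\tilde P}[Y \mid \mathrm{do}(T=1)] \neq \ex_{\tilde P}[\ex_{\tilde P}[Y \mid T=1,\bZ]]$, so the mean-level failure is witnessed without any further construction. You should elevate that to the main argument rather than leave it as an afterthought.

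Your smoothing construction, by contrast, has a real gap. The assertion that both sides of the mean adjustment formula are linear functionals of the kernel $p(y \mid \mathrm{pa}_\mathcal{D}(Y))$ is false when $\bZ$ contains descendants of $Y$: in that case $p(t,z)$ and $p(z)$ themselves depend on this kernel, so
\[
\ex_P\bigl[\ex_P[Y \mid T=t,\bZ]\bigr] = \int \frac{p(z)}{p(t,z)} \Bigl(\int y\, p(y,t,z)\,\mathrm{d}y\Bigr)\,\mathrm{d}z
\]
is a ratio of linear functionals, not a linear one, and your $\sigma \to 0$ reduction to the density-level failure breaks down. Since the proposition imposes no structural restriction on $\bZ$---and $\bZ$ may fail the GAC precisely by containing such descendants (e.g.\ a collider child of $Y$)---this is not a degenerate case you can dismiss as technical bookkeeping. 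The Gaussian counterexample route sidesteps the issue entirely.
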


\begin{proof}
The `only if' direction is straightforward if we use an alternative characterization of $\bZ$ being an adjustment set: for any density $p$ of a distribution $P\in \mathcal{P}$ it holds that
$$
    p(y\given \mathrm{do}(T=t)) 
    = \int p(\mathbf{y}|\bz,t) p(\bz)\mathrm{d}\bz,
$$
where the `do-operator' (and notation) is defined as in, e.g., \citet{peters2017elements}. Fubini's theorem then yields 
\begin{align*}
    \ex_P[Y\given \mathrm{do}(T=t)] 
    &= \int_{\real^l} \mathbf{y} p(\mathbf{y}|do(\mathbf{T}=\mathbf{t})) \mathrm{d}\mathbf{y}\\
    &= \int_{\real^l} \mathbf{y} \int_{\real^k} p(\mathbf{y}|\bz,t)
    p(\bz)\mathrm{d}\bz\mathrm{d}\mathbf{y} \\
    &=  \int_{\real^k}\int_{\real^l} \mathbf{y} p(\mathbf{y}|\bz,t) \mathrm{d}\mathbf{y}
    p(\bz)\mathrm{d}\bz \\
    &= \ex_P[\ex_P[Y\given T=t,\bZ]],
\end{align*}
which is equivalent to the `only if' part. On the contrary, assume that $\bZ$ is \emph{not} an adjustment set for $(T,Y)$. Then Theorem~56 and the proof of Theorem~57 in \citet{perkovic2018complete} imply the existence of a Gaussian distribution $\tilde{P}\in \mathcal{P}$ for which 
\begin{align*}
    \ex_{\tilde{P}}[Y\given \mathrm{do}(T = 1)] \neq \ex_{\tilde{P}}[\ex_{\tilde{P}}[Y\given T=1, \bZ]]
\end{align*}
This implies the other direction.
\end{proof}
The following lemma is a generalization of Lemma 27 in \citet{rotnitzky2020efficient}.
\begin{lem}\label{lem:invpropexp}
    For any $\sigma$-algebras $\cZ_1\subseteq \cZ_2 \subseteq \sigma(\bW)$
    it holds, under Assumption~\ref{asm:positivity}, that
    $$
        \ex_P\left[
            \pi_t(\cZ_2; P)^{-1} \given T=t,\cZ_1
        \right]
         = \pi_t(\cZ_1; P)^{-1}
    $$
    for all $P\in\mathcal{P}$.
\end{lem}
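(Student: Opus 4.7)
The plan is to unfold the definition of conditioning on the event $(T=t)$ together with a sub-$\sigma$-algebra, as given in Remark~\ref{rmk:conditional}, and then exploit the tower property using $\cZ_1\subseteq \cZ_2$. Concretely, I would rewrite the left-hand side as
\begin{equation*}
    \ex_P\!\left[\pi_t(\cZ_2;P)^{-1}\,\big|\, T=t,\cZ_1\right]
    = \frac{\ex_P\!\left[\pi_t(\cZ_2;P)^{-1}\one(T=t)\,\big|\,\cZ_1\right]}{\pi_t(\cZ_1;P)},
\end{equation*}
so that establishing the claim reduces to showing
\begin{equation*}
    \ex_P\!\left[\pi_t(\cZ_2;P)^{-1}\one(T=t)\,\big|\,\cZ_1\right] = 1.
\end{equation*}

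Next, since $\cZ_1\subseteq \cZ_2$, I would apply the tower property to write this inner conditional expectation as $\ex_P[\,\ex_P[\pi_t(\cZ_2;P)^{-1}\one(T=t)\mid \cZ_2]\mid \cZ_1]$. Because $\pi_t(\cZ_2;P)^{-1}$ is $\cZ_2$-measurable, it pulls out of the inner conditional expectation, leaving $\pi_t(\cZ_2;P)^{-1}\ex_P[\one(T=t)\mid \cZ_2] = \pi_t(\cZ_2;P)^{-1}\pi_t(\cZ_2;P) = 1$ almost surely. Taking the outer conditional expectation of the constant $1$ then yields $1$, which gives the desired identity.

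There is no real obstacle here: the only subtlety is keeping track of the almost-sure nature of the equalities and ensuring that the ratios defining $\pi_t(\cZ_i;P)^{-1}$ and the conditional expectation given $(T=t,\cZ_1)$ are well-defined. Both are guaranteed by Assumption~\ref{asm:positivity}, which ensures $\pi_t(\cZ;P)\in (0,1)$ almost surely for every description $\cZ$. Thus the proof amounts to a two-line application of tower and pull-out properties of conditional expectation, plus the positivity assumption to justify the division.
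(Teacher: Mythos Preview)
Your proof is correct and follows essentially the same approach as the paper: both unfold the definition of $\ex_P[\,\cdot\mid T=t,\cZ_1]$ from Remark~\ref{rmk:conditional}, apply the tower property with $\cZ_1\subseteq\cZ_2$, and use that $\pi_t(\cZ_2;P)^{-1}\ex_P[\one(T=t)\mid\cZ_2]=1$ almost surely. The only cosmetic difference is that the paper multiplies both sides by $\pi_t(\cZ_1;P)$ rather than dividing, arriving at the same intermediate identity $\ex_P[\one(T=t)/\pi_t(\cZ_2;P)\mid\cZ_1]=1$.
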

\begin{proof}
    Direct computation yields
    \begin{align*}
        &\ex_P\left[
            \pi_t(\cZ_2; P)^{-1} \given T=t,\cZ_1
        \right] \mathbb{P}_P(T=t\given\cZ_1) \\
        &= 
        \ex_P\left[
            \frac{\one(T=t)}{\pi_t(\cZ_2; P)} \given \cZ_1
        \right] 
        =
        \ex_P\left[
            \frac{\ex_P[\one(T=t)\given\cZ_2]}{\pi_t(\cZ_2; P)} \given \cZ_1
        \right]
        = 1.
    \end{align*}
\end{proof}

\subsection{Proof of Lemma \ref{lem:overadj}}
Since $P\in \mathcal{P}$ is fixed, we suppress it from the notation in the following computations.

From $Y\ind \cZ_2 \given \cZ_1,T$ and $\cZ_1\subseteq \cZ_2$, it follows that $\ex[Y\given\cZ_1,T] = \ex[Y\given\cZ_2,T]$. Hence, for each $t\in \mathbb{T}$,
\begin{equation}\label{eq:overadjeq1}
    b_t(\cZ_1) = \ex[Y\given\cZ_1,T=t] 
    = \ex[Y\given\cZ_2,T=t]=b_t(\cZ_2).
\end{equation} 
Therefore $\chi_t(\cZ_1) = \ex[b_t(\cZ_1)] = \ex[b_t(\cZ_2)] = \chi_t(\cZ_2)$. If $\mathcal{Z}_2$ is description of $\bW$, this identity shows that $\cZ_2$ is $\mathcal{P}$-valid if and only if $\cZ_1$ is $\mathcal{P}$-valid. 

To compare the asymptotic efficiency bounds for $\cZ_2$ with $\cZ_1$ we use the law of total variance. Note first that
\begin{align*}
    \ex[\psi_{t}(\cZ_2)\given T,Y,\cZ_1] 
    &= \ex\Big[\one(T=t)(Y-b_t(\cZ_2))
        \frac{1}{\pi_t(\cZ_2)} + b_t(\cZ_2)-\chi_t \given T,Y,\cZ_1 
             \Big] \\
    &= \one(T=t)(Y-b_t(\cZ_1))\,
        \ex\Big[\frac{1}{\pi_t(\cZ_2)}\given T=t,Y,\cZ_1 \Big]
            + b_t(\cZ_1)-\chi_t \\
    &= \one(T=t)(Y-b_t(\cZ_1))
        \ex\left[\frac{1}{\pi_t(\cZ_2)}\given T=t,\cZ_1 \right]
            + b_t(\cZ_1)-\chi_t \\
    &= \one(T=t)(Y-b_t(\cZ_1))
        \frac{1}{\pi_t(\cZ_1)} + b_t(\cZ_1)-\chi_t
        = \psi_t(\cZ_1).
\end{align*}
Second equality is due to $\cZ_1$-measurability and \eqref{eq:overadjeq1}, whereas the third equality follows from $Y\ind \cZ_2 \given \cZ_1,T$ and the last inequality is due to Lemma~\ref{lem:invpropexp}.
On the other hand,
\begin{align*}
    &\phantom{=}\ex\big[\var(\psi_{t}(\cZ_2)\given T,Y,\cZ_1)\big]\\
    &= \ex\Big[\var\Big(\one(T=t)(Y-b_t(\cZ_1))
        \frac{1}{\pi_t(\cZ_2)} + b_t(\cZ_1)-\chi_t(P)\; \big| \; T,Y,\cZ_1 
             \Big)\Big] \\
    &= \ex\Big[\one(T=t)(Y-b_t(\cZ_1))^2\var\Big(
        \frac{1}{\pi_t(\cZ_2)} \; \big| \; T=t,Y,\cZ_1 
             \Big)\Big] \\
    &= \ex\Big[\pi_t(\cZ_1)\var(Y \given T=t,\cZ_1)\var\Big(
        \frac{1}{\pi_t(\cZ_2)} \; \big| \; T=t,\cZ_1 
             \Big)\Big].
\end{align*}
Combined we have that
\begin{align*}
    \bV_t(\cZ_2)
    &= \var(\psi_{t}(\cZ_2))
    = \var(\ex[\psi_{t}(\cZ_2) \given T,Y,\cZ_1]) 
        + \ex[\var[\psi_{t}(\cZ_2) \given T,Y,\cZ_1]] \\
    &= \bV_t(\cZ_1) 
        + \ex\Big[\pi_t(\cZ_1)\var(Y \given T=t,\cZ_1)\var\Big(
        \frac{1}{\pi_t(\cZ_2)}\Big| \; T=t,Y,\cZ_1 
             \Big)\Big].
\end{align*}
To prove the last part of the lemma, we first note that $\psi_t(\cZ_2)$ and $\psi_{t'}(\cZ_2)$ are conditionally uncorrelated for $t\neq t'$. This follows from \eqref{eq:overadjeq1}, from which the conditional covariance reduces to
\begin{align*}
    \cov(\psi_t(\cZ_2), \psi_{t'}(\cZ_2) \given T,Y,\cZ_1)
    = \one(T=t)\one(T=t')
    \cov(
        *,* \given T,Y,\cZ_1
    )
    = 0.
\end{align*}
Thus, letting $\psi(\cZ_2) = (\psi_t(\cZ_2))_{t\in \mathbb{T}}$ and using the computation for $\bV_t$, we obtain that:
\begin{align*}
    &\bV_\Delta(\cZ_2) = \var(\mathbf{c}^\top \psi(\cZ_2)) \\
    &= \var(\ex[\mathbf{c}^\top \psi(\cZ_2) \given T,Y,\cZ_1]) 
        +  \ex[\var[\mathbf{c}^\top \psi(\cZ_2) \given T,Y,\cZ_1]] \\
    &= \bV_\Delta(\cZ_1) 
        + \sum_{t\in \mathbb{T}} c_t^2 \, \ex\Big[\pi_t(\cZ_1)\var(Y \given T=t,\cZ_1)\var\Big(
        \frac{1}{\pi_t(\cZ_2)}\Big| \; T=t,Y,\cZ_1 
             \Big)\Big].
\end{align*}
This concludes the proof. \hfill \qedsymbol

\subsection{Proof of Theorem \ref{thm:COSefficiency}}
(i) Assume that $\cZ\subseteq \sigma(\bW)$ is a description of $\bW$ such that $\mathcal{Q}_P\subseteq \ol{\cZ}$. Observe that almost surely,
$$
    \mathbb{P}_P(Y\leq y \given T, \bW) 
    = \sum_{t\in \mathbb{T}} \one(T=t)F(y\mid t,\bW;P)
$$ 
is $\sigma(T,\mathcal{Q}_P)$-measurable and hence also 
$\sigma(T,\ol{\cZ})$-measurable.
It follows that $\mathbb{P}_P(Y\leq y \given T, \bW)$ is also a version of $\mathbb{P}_P(Y\leq y \given T, \overline{\cZ})$. 
As a consequence, $\mathbb{P}_P(Y\leq y \given T, \bW) = \mathbb{P}_P(Y\leq y \given T, \cZ)$ almost surely. From Doob's characterization of conditional independence \citep[Theorem 8.9]{kallenberg2021foundations}, we conclude that $Y\indP \bW \mid T,\cZ$, or equivalently that $\cZ$ is $P$-ODS. The `in particular' follows from setting $\cZ=\mathcal{Q}_P$.

(ii) Assume that $\cZ \subseteq \sigma(\bW)$ is $P$-ODS. Using Doob's characterization of conditional independence again, 
$\mathbb{P}_P(Y\leq y \given T, \bW) = \mathbb{P}_P(Y\leq y \given T, \cZ)$
almost surely. Under Assumption~\ref{asm:positivity}, this entails that
$F(y\given T=t,\bW;P) = \mathbb{P}_P(Y\leq y \given T=t, \cZ)$ almost surely, and hence $F(y\given T=t,\bW;P)$ must be $\ol{\cZ}$-measurable. As the generators of $\mathcal{Q}_P$ are $\ol{\cZ}$-measurable, we conclude that $\mathcal{Q}_P \subseteq \ol{\cZ}$.

(iii) Let $\cZ$ be a $P$-ODS description. 
From (i) and (ii) it follows that $\mathcal{Q}_P\subseteq \ol{\cZ}$, and since $\cZ \subseteq \sigma(\bW)$ it 
also holds that $Y\indP \ol{\cZ} \mid T,\mathcal{Q}_P$. 
Since $\bV_\Delta(\cZ; P) = \bV_\Delta(\ol{\cZ}; P)$, Lemma~\ref{lem:overadj} gives the desired conclusion. 
\hfill \qed{}

\subsection{Proof of Corollary \ref{cor:Qefficiency}}
Theorem~\ref{thm:COSefficiency} (i,ii) implies that
$\cZ$ is $\mathcal{P}$-ODS if and only if~$\ol{\cZ}^P$ contains $\mathcal{Q}_P$ for all $P \in \mathcal{P}$.

If $\mathcal{Q} \subseteq \ol{\cZ}^P$ for all $P \in \mathcal{P}$ then $\mathcal{Q}_P \subseteq \ol{\cZ}^P$ 
for all $P \in \mathcal{P}$ by definition, and $\cZ$ is $\mathcal{P}$-ODS. 
Equation \eqref{eq:COAuniform} follows from the same argument as Theorem~\ref{thm:COSefficiency} $(iii)$. That is, 
$\mathcal{Q} \subseteq \ol{\cZ}^P$,  and since $\mathcal{Q}$ is $P$-ODS and $\cZ \subseteq \sigma(\bW)$ we have $Y \indP \ol{\cZ}^P \mid T,\mathcal{Q}$. Lemma~\ref{lem:overadj} can be applied for each $P\in \mathcal{P}$ to obtain the desired conclusion. 
\hfill \qed{}

\subsection{Proof of Proposition \ref{prop:COMSefficiency}}
Since $\mathcal{R}_P \subseteq \mathcal{Q}_P$ always holds, so it suffices to prove that $\mathcal{Q}_P \subseteq \mathcal{R}_P$. 
If $F(y\given t,\bW;P)$ is $\sigma(b_t(\bW;P))$-measurable, then $\mathcal{Q}_P \subseteq \mathcal{R}_P$ follows directly from definition. 
If $Y$ is a binary outcome, then the conditional distribution is determined by the conditional mean, i.e., $F(y\given t,\bW;P)$ is $\sigma(b_t(\bW;P))$-measurable. If $Y = b_T(\bW) + \varepsilon_Y$ with $\varepsilon_Y \ind T, \bW$, 
then $F(y\given t,\bw;P) = F_P(y - b_t(\bw))$ where $F_P$ is the distribution function for the distribution of $\varepsilon_Y$. Again, $F(y\given t,\bW;P)$ becomes $\sigma(b_t(\bW;P))$-measurable.


To prove the last part, let $\cZ$ be a $P$-OMS description. 
Since $b_t(\bW;P) = b_t(\cZ;P)$ is $\cZ$-measurable for every $t\in \mathbb{T}$, it follows that $\mathcal{R}_P\subseteq \ol{\cZ}$.
Hence the argument of Theorem~\ref{thm:COSefficiency} (iii) establishes  
Equation~\eqref{eq:COApointwise} when $\mathcal{Q}_P = \mathcal{R}_P$. 
\hfill \qed{}

\subsection{Proof of Lemma~\ref{lem:underadj}}

Since $P\in \mathcal{P}$ is fixed throughout this proof, we suppress it from notation. From $\cZ_1\subseteq \cZ_2 \subseteq \sigma(\bW)$ and $T\indP \cZ_2 \given \cZ_1$, it follows that $\pi_t(\cZ_1) = \pi_t(\cZ_1)$, and hence 
$$
    \chi_t(\cZ_1)
        =\ex[\pi_t(\cZ_1)^{-1}\one(T=t)Y]
        =\ex[\pi_t(\cZ_2)^{-1}\one(T=t)Y]
        = \chi_t(\cZ_2).
$$
This establishes the first part. For the second part, we use that
$\pi_t(\cZ_1) = \pi_t(\cZ_2)$ and $\chi_t(\cZ_1) = \chi_t(\cZ_2)$
to see that
\begin{align*}
    \psi_t(\cZ_1) - \psi_t(\cZ_2)
        &= b_t(\cZ_1)-b_t(\cZ_2) 
        + \frac{\one(T=t)}{\pi_t(\cZ_2)}(b_t(\cZ_2)-b_t(\cZ_1)) \\
        &= \Big(\frac{\one(T=t)}{\pi_t(\cZ_2)}-1\Big)(b_t(\cZ_2)-b_t(\cZ_1))
        \eqcolon R_t(\cZ_1,\cZ_2) .
\end{align*}
Since $\ex[R_t(\cZ_1,\cZ_2)  \given \cZ_2]=0$ we have that $\ex[\psi_t(\cZ_2)R_t(\cZ_1,\cZ_2) ]=0$, which means that $\psi_t(\cZ_2)$ and $R_t(\cZ_1,\cZ_2) $ are uncorrelated. Note that from $\pi_t(\cZ_1) = \pi_t(\cZ_2)$ it also follows that 
\begin{align*}
    b_t(\cZ_1)
    = \ex\Big[\frac{Y\one(T=t)}{\pi_t(\cZ_1)}\given \cZ_1\Big] 
    = \ex\Big[\ex\Big[\frac{Y\one(T=t)}{\pi_t(\cZ_2)}\given \cZ_2\Big]
        \given \cZ_1\Big]
    = \ex[b_t(\cZ_2) \given \cZ_1]
\end{align*}
This means that $\ex[(b_t(\cZ_2)-b_t(\cZ_1))^2\given \cZ_1]= \var(b_t(\cZ_2)\given \cZ_1)$. 
These observations let us compute that
\begin{align} \label{eq:varRtcomputation}
    \bV_t(\cZ_1)
    &= \var(\psi_t(\cZ_1)) \nonumber\\
    &= \var(\psi_t(\cZ_2)) 
        +\var\left(R_t(\cZ_1,\cZ_2) \right)  \nonumber\\
    &= \var(\psi_t(\cZ_2)) 
        +\ex[\ex[R_t(\cZ_1,\cZ_2)^2\given \cZ_2]]  \nonumber\\
    &= \bV_t(\cZ_2)
        + \ex\Big[
        (b_t(\cZ_2)-b_t(\cZ_1))^2
        \ex\Big[\frac{\one(T=t)^2}{\pi_t(\cZ_1)^2}
            -\frac{2\one(T=t)}{\pi_t(\cZ_1)}+1\given \cZ_2\Big]\Big] \nonumber\\
    &= \bV_t(\cZ_2)
        + \ex\Big[
        \var(b_t(\cZ_2)\given \cZ_1)
        \Big(\frac{1}{\pi_t(\cZ_1)}-1\Big)\Big].
\end{align}
This establishes the formula in the case $\Delta = \chi_t$. 

For general $\Delta$, note first that $\mathbf{R}$ and $(\psi_t(\cZ_2))_{t\in\mathbb{T}}$ are uncorrelated since $\ex[\mathbf{R}\given \cZ_2]=0$. Therefore we have
\begin{align*}
    \avar(\cZ_1) 
        &= \var\Big(\sum_{t\in\mathbb{T}}c_t\psi_t(\cZ_1)\Big) \\
        &= \var\Big(\sum_{t\in\mathbb{T}}c_t\psi_t(\cZ_2)\Big) +
        \var\Big(\sum_{t\in\mathbb{T}}c_t R_t(\cZ_1,\cZ_2)\Big) \\
        &= \avar(\cZ_2) + \mathbf{c}^\top \var\Big( \mathbf{R}\Big)\mathbf{c}.
\end{align*}
It now remains to establish the covariance expressions for the last term, since the expression of $\var\left(R_t(\cZ_1,\cZ_2) \right)$ was found in \eqref{eq:varRtcomputation}. 
To this end, note first that for any $s,t\in\mathbb{T}$ with $s\neq t$,
\begin{align*}
    \ex\left[
        \Big(\frac{\one(T=s)}{\pi_s(\cZ_2)}-1\Big)\Big(\frac{\one(T=t)}{\pi_t(\cZ_2)}-1\Big)
        \given \cZ_2
    \right]
    = -1.
\end{align*}
Thus we finally conclude that
\begin{align*}
    \cov(R_s,R_t)
    &=
        -\ex\left[
            (b_s(\cZ_2)-b_s(\cZ_1))
            (b_t(\cZ_2)-b_t(\cZ_1))
        \right] \\
    &=  -\ex\left[
            \cov(b_s(\cZ_2),b_t(\cZ_2) \given \cZ_1)
        \right],
\end{align*}
as desired.
\hfill \qed{}

\subsection{Computations for Example \ref{ex:symmetric}}\label{sec:symmetriccomputations}
    Using symmetry, we see that
    \begin{align*}
        \pi_1(\bZ) &= \ex[T \mid \bZ ] = \ex[\bW \mid \bZ] \\
        &= 
            (0.5-\bZ)\mathbb{P}(\bW\leq 0.5\given \bZ)
             + (0.5+\bZ)\mathbb{P}(\bW > 0.5\given \bZ)\
        = \frac{1}{2}.
    \end{align*}
    From $T\ind \bZ$ we observe that
    \begin{align*}
        b_t(\bW) &= t + \ex[g(\bZ)\given  T=t,\bW] = t + g(\bZ), \\
        b_t(\bZ) &= t + \ex[g(\bZ)\given  T=t,\bZ] = t + g(\bZ), \\
        b_t(0) &= t + \ex[g(\bZ)\given T=t] = t + \ex[g(\bZ)].
    \end{align*}
    Plugging these expressions into the asymptotic variance yields:
    \begin{align*}
        \bV_t(0)  
        &= \var\left( b_t(0) + \frac{\one(T=t)}{\pi_t(0)}(Y-b_t(0))\right)\\
        &= \pi_t(0)^{-2}\var\left(\one(T=t)(Y-b_t(0))\right)\\
        &= 4\ex[\one(T=t)( g(\bZ) - \ex[ g(\bZ)]+v(\bW)\varepsilon_Y)^2] \\
        &= 4\ex[\bW( g(\bZ) - \ex[ g(\bZ)])^2]
            +2\ex[v(\bW)\varepsilon_Y^2] \\
        &= 2\var(g(\bZ)) + 2 \ex[v(\bW)^2]\ex[\varepsilon_Y^2] 
    \end{align*}
    \begin{align*}
        \bV_t(\bZ) 
        &= \var\left( b_t(\bZ) + \frac{\one(T=t)}{\pi_t(\bZ)}(Y-b_t(\bZ))\right) \\
        &= \var\left( g(\bZ) + 2\one(T=t)v(\bW)\varepsilon_Y\right) \\
        &= \var\left( g(\bZ)\right)  + 2\ex[v(\bW)^2] \ex[\varepsilon_Y^2] 
    \end{align*}
    \begin{align*}
        \bV_t(\bW) 
        &= \var\left( b_t(\bW) + \frac{\one(T=t)}{\pi_t(\bW)}(Y-b_t(\bW))\right) \\
        &= \var\left( g(\bZ) + \frac{\one(T=t)v(\bW)\varepsilon_Y}{\bW}\right) \\
        &= \var\left( g(\bZ)\right)  + \ex\left[\frac{v(\bW)^2}{\bW}\right] \ex[\varepsilon_Y^2].
    \end{align*}
\hfill \qed{}

\subsection*{Proof of Theorem \ref{thm:conditionalAIPWconv}}    
We first prove that $R_i \xrightarrow{P} 0$ for $i=1,2,3$. 
The third term is taken care of by combining Assumption \ref{asm:AIPWconv} $(i,iv)$ with Cauchy-Schwarz:
\begin{align*}
    |R_3| 
    &\leq \sqrt{|\mathcal{I}_3|} 
        \sqrt{\frac{1}{|\mathcal{I}_3|} \sum_{i\in \mathcal{I}_3} 
        \big(\widehat{m}_{\hat{\theta}}(t\given \bW_i)^{-1} 
            - m_{\hat{\theta}}(t\given \bW_i)^{-1}\big)^2}
        \sqrt{\mathcal{E}_{2,t}^{(n)}} \\
    &\leq \ell_c \sqrt{|\mathcal{I}_3| 
        \mathcal{E}_{1,t}^{(n)}\mathcal{E}_{2,t}^{(n)}} \xrightarrow{P} 0,
\end{align*}
where $\ell_c>0$ is the Lipschitz constant of $x \mapsto x^{-1}$ 
on $[c,1-c]$. 

For the first term, $R_1$, note that conditionally on $\hat{\theta}$ and the observed  estimated representations
$\bZ_{\hat{\bullet}} \coloneq (\bZ_{\hat{\theta},i})_{i\in\mathcal{I}_3} =(\phi(\hat{\theta},\bW_i))_{i\in\mathcal{I}_3}$, the summands are conditionally i.i.d. due to Assumption~\ref{asm:samples}. They are also conditionally mean zero since
\begin{align*}
    \ex[r_i^1 \given \bZ_{\hat{\bullet}}, \hat{\theta}\,]
    &= 
    \left(\widehat{g}_{\hat{\theta}}(t,\bW_i)-g_{\hat{\theta}}(t,\bW_i)\right)
        \Big(1-\frac{\ex[\one(T_i=t)\given \bZ_{\hat{\bullet}}, \hat{\theta}\,]}{m_{\hat{\theta}}(t \given \bW_i)}\Big) \\
    &= 
    \left(\widehat{g}_{\hat{\theta}}(t,\bW_i)-g_{\hat{\theta}}(t,\bW_i)\right)
        \Big(1-\frac{\ex[\one(T_i=t)\given \bZ_{\hat{\theta},i}, \hat{\theta}]}{m_{\hat{\theta}}(t \given \bW_i)}\Big)
     = 0,
\end{align*}
for each $i\in \mathcal{I}_3$.
Using Assumption \ref{asm:AIPWconv}~$(i)$, we can bound the conditional variance by
\begin{align*}
    \var\left(r_i^1
        \given
        \bZ_{\hat{\bullet}}, \hat{\theta} \,
        \right) 
    &=\frac{\left(
        \widehat{g}_{\hat{\theta}}(t,\bW_i)-g_{\hat{\theta}}(t,\bW_i)\right)^2
        }{m_{\hat{\theta}}(t \given \bW)^2} 
        \var\left(
        \one(T=t)
        \given
        \bZ_{\hat{\bullet}}, \hat{\theta}
        \right) \\
    &=\frac{\left(
        \widehat{g}_{\hat{\theta}}(t,\bW_i)-g_{\hat{\theta}}(t,\bW_i)\right)^2
        }{m_{\hat{\theta}}(t \given \bW)} 
        (1-m_{\hat{\theta}}(t \given \bW)) \\
    &\leq \frac{(1-c)}{c}
    \left(\widehat{g}_{\hat{\theta}}(t,\bW_i)-g_{\hat{\theta}}(t,\bW_i)\right)^2
\end{align*}
Now it follows from Assumption~\ref{asm:AIPWconv}~$(v)$ and the conditional Chebyshev's inequality that
\begin{align*}
    \mathbb{P}\left(|R_1| >\epsilon\given \bZ_{\hat{\bullet}}, \hat{\theta}\,\right)
    &\leq \epsilon^{-2}\var(R_1\given \bZ_{\hat{\bullet}}, \hat{\theta}) 
    \leq \frac{1-c}{\epsilon^2 c} \mathcal{E}_{2,t}^{(n)}
    \xrightarrow{P} 0,
\end{align*}
from which we conclude that $R_1\xrightarrow{P}0$.

The analysis of $R_2$ is similar. Observe that the summands are conditionally i.i.d. given $\bZ_{\hat{\bullet}},\hat{\theta}$, and $T_\bullet \coloneq (T_i)_{i\in\mathcal{I}_3}$. They are also conditionally mean zero since
\begin{align*}
    \ex\left[r_i^2
        \given \bZ_{\hat{\bullet}}, T_\bullet, \hat{\theta}
        \right] 
    &= 
     \left(\ex[Y_i \given \bZ_{\hat{\bullet}}, T_\bullet, \hat{\theta}]-g_{\hat{\theta}}(t,\bW_i)\right)\left(
            \frac{\one(T_i=t)}{\widehat{m}_{\hat{\theta}}(\bW_i)}
            -\frac{\one(T_i=t)}{m_{\hat{\theta}}(\bW_i)}\right) \\
    &= 
     (\ex[Y_i \given \phi(\hat{\theta},\bW_i), T_i, \hat{\theta}]-g_{\hat{\theta}}(t,\bW_i))\left(
            \frac{\one(T_i=t)}{\widehat{m}_{\hat{\theta}}(\bW_i)}
            -\frac{\one(T_i=t)}{m_{\hat{\theta}}(\bW_i)}\right) = 0,
\end{align*}
where we use that on the event $(T_i = t)$,
$$
\ex[Y_i \given \bZ_{\hat{\theta},i}, T_i, \hat{\theta}\,] 
= \ex[Y_i \given \bZ_{\hat{\theta},i}, T_i=t, \hat{\theta}\,] 
= g_{\hat{\theta}}(t,\bW_i).
$$ 
To bound the conditional variance, note first that Assumption~\ref{asm:samples} and Assumption~\ref{asm:CVarbound} imply
\begin{align*}
    \var\big(Y_i\given\bZ_{\hat{\theta},i}, T_i, \hat{\theta}\,\big)
    \leq \ex\big[Y_i^2\given \bZ_{\hat{\theta},i}, T_i, \hat{\theta}\,\big]
    = \ex[\ex\big[Y_i^2\given \bW_i, T_i, \hat{\theta}\,\big]
        \given \bZ_{\hat{\theta},i}, T_i, \hat{\theta}\,\big]
    \leq C,
\end{align*}
which in return implies that
\begin{align*}
    \var\left(
        r_i^2
        \given \bZ_{\hat{\bullet}}, T_\bullet, \hat{\theta}
    \right) 
    &= \one(T_i=t) \left(
        \frac{1}{\widehat{m}_{\hat{\theta}}(\bW_i)}
        -\frac{1}{m_{\hat{\theta}}(\bW_i)}
        \right)^2
        \var\left(Y_i\given\bZ_{\hat{\theta},i}, T_i, \hat{\theta}\,\right) \\
    &\leq C \left(
        \frac{1}{\widehat{m}_{\hat{\theta}}(\bW_i)}
        -\frac{1}{m_{\hat{\theta}}(\bW_i)}
        \right)^2.
\end{align*}
It now follows from the conditional Chebyshev's inequality that
\begin{align*}
    \mathbb{P}(|R_2|>\epsilon \given \bZ_{\hat{\bullet}}, T_\bullet, \hat{\theta}) 
        &\leq \epsilon^{-2}\var(R_2 \given \bZ_{\hat{\bullet}}, T_\bullet, \hat{\theta}) \\
        &\leq \frac{C}{|\mathcal{I}_3|}\sum_{i\in\mathcal{I}_3}\left(
            \frac{1}{\widehat{m}_{\hat{\theta}}(\bW_i)}
            -\frac{1}{m_{\hat{\theta}}(\bW_i)}
            \right)^2
        \leq C \ell_c^2  \mathcal{E}_{1,t}^{(n)}.
\end{align*}
Assumptions~\ref{asm:AIPWconv}~$(iv)$ implies convergence to zero in probability, and from this we conclude that also $R_2 \xrightarrow{P}0$. 

We now turn to the discussion of the oracle term $U_{\hat{\theta}}^{(n)}$.
From Assumption~\ref{asm:samples} it follows that $(T_i,\bW_i,Y_i)_{i\in\mathcal{I}_3}\ind \hat{\theta}$, and hence the conditional distribution $U_{\hat{\theta}}^{(n)} \given \hat{\theta}=\theta$ is the same as $U_\theta^{(n)}$. We show that this distribution is asymptotically Gaussian uniformly over $\theta\in \Theta$. To see this, note that for each $\theta\in \Theta$, the terms of $U_\theta^{(n)}$ are i.i.d. with mean $\chi_t(\bZ_\theta)$ and variance $\mathbb{V}_t(\bZ_\theta)$.
The conditional Jensen's inequality and Assumption~\ref{asm:AIPWconv}~$(i)$ imply that
\begin{align*}
    \ex[|u_i(\theta)|^{2+\delta}]
    &\leq C_{2+\delta}(\ex[|g_\theta(t,\bW)|^{2+\delta}]
        + c^{-(1+\delta)}\ex[|Y-g_\theta(t,\bW)|^{2+\delta}\,]) \\
    &\leq C' \ex[|Y|^{2+\delta}],
\end{align*}
where $C'>0$ is a constant depending on $c$ and $\delta$.
Thus we can apply the Lindeberg-Feller CLT, as stated in \citet[Lem. 18]{shah2020hardness}, to conclude that
\begin{equation*}
    S_n(\theta) \coloneq \sqrt{|\mathcal{I}_3|}\mathbb{V}_t(\bZ_\theta)^{-1/2}(U_{\theta}^{(n)} - \chi_t(\bZ_\theta)) 
        \xrightarrow{d/\Theta} \mathrm{N}(0,1),
\end{equation*}
where the convergence holds uniformly over $\theta\in \Theta$. With $\Phi(\cdot)$ denoting the CDF of a standard normal, this implies that
\begin{align}\label{eq:uniformimpliescondtional}
    \sup_{t\in\real}|\mathbb{P}(S_n(\hat{\theta})\leq t) - \Phi(t)|
    &= \sup_{t\in\real} \left\lvert\int (\mathbb{P}(S_n(\hat{\theta})\leq t \given \hat{\theta}=\theta)
    -\Phi(t))P_{\hat{\theta}}(\mathrm{d}\theta)\right\rvert 
    \nonumber \\
    &\leq \sup_{\theta\in\Theta} \sup_{t\in\real} |\mathbb{P}(S_n(\theta)\leq t)-\Phi(t)|
    \longrightarrow 0.
\end{align}
This shows that $S_n(\hat{\theta}) \xrightarrow{d} \mathrm{N}(0,1)$ as desired.
The last part of the theorem is a simple consequence of Slutsky's theorem. 
\hfill \qedsymbol{}


\subsection*{Proof of Theorem \ref{thm:varconsistent}}
For each $i\in \mathcal{I}_3$, let $\widehat{u}_i = u_i(\hat{\theta}) + r_i^1 + r_i^2 + r_i^3$
be the decomposition from \eqref{eq:DOPEdecomp}. 
For the squared sum, it is immediate from Theorem \ref{thm:conditionalAIPWconv} that
\begin{align*}
    \left(\frac{1}{|\mathcal{I}_3|}
        \sum_{i\in \mathcal{I}_3}\widehat{u}_{i}\right)^2 
    = \left(\frac{1}{|\mathcal{I}_3|}
        \sum_{i\in \mathcal{I}_3} u_{i}(\hat{\theta})\right)^2 + o_P(1).
\end{align*}
For the sum of squares, we first expand the squares as
\begin{align}\label{eq:uhatexpansion}
    \widehat{u}_i^2 = u_i(\hat{\theta})^2 + 2(r_i^1 + r_i^2 + r_i^3) u_i(\hat{\theta}) + (r_i^1 + r_i^2 + r_i^3)^2.
\end{align}
We show that the last two terms are convergent to zero in probability. For the cross-term, we note by Cauchy-Schwarz that
\begin{align*}
    \Big|\frac{1}{|\mathcal{I}_3|}
        \sum_{i\in \mathcal{I}_3} (r_i^1 + r_i^2 + r_i^3) u_i(\hat{\theta}) \Big|
    \leq 
    \left(\frac{1}{|\mathcal{I}_3|}
        \sum_{i\in \mathcal{I}_3} (r_i^1 + r_i^2 + r_i^3)^2\right)^{1/2}
    \left(\frac{1}{|\mathcal{I}_3|}
        \sum_{i\in \mathcal{I}_3} u_i(\hat{\theta})^2\right)^{1/2}.
\end{align*}
We show later that the sum $\frac{1}{|\mathcal{I}_3|}\sum_{i\in \mathcal{I}_3} u_i^2$ is convergent in probability. Thus, to show that the last two terms of \eqref{eq:uhatexpansion} converge to zero, it suffices to show that
$\frac{1}{|\mathcal{I}_3|}\sum_{i\in \mathcal{I}_3} 
(r_i^1 + r_i^2 + r_i^3)^2 \xrightarrow{P}0$. 
To this end, we observe that
\begin{align*}
    \frac{1}{|\mathcal{I}_3|}\sum_{i\in \mathcal{I}_3} (r_i^1 + r_i^2 + r_i^3)^2
    \leq 
    \frac{3}{|\mathcal{I}_3|}\sum_{i\in \mathcal{I}_3} (r_i^1)^2+(r_i^2)^2+(r_i^3)^2
\end{align*}
The last term is handled similarly to $R_3$ in the proof of Theorem~\ref{thm:conditionalAIPWconv},
\begin{align*}
    \frac{1}{|\mathcal{I}_3|}\sum_{i\in \mathcal{I}_3} (r_i^3)^2
    &= 
        \frac{1}{|\mathcal{I}_3|}\sum_{i\in \mathcal{I}_3}
        \left(\widehat{g}_{\hat{\theta}}(t,\bW_i)-g_{\hat{\theta}}(t,\bW_i)\right)^2
        \left(\frac{\one(T_i=t)}{\widehat{m}_{\hat{\theta}}(t \given \bW_i)}-\frac{\one(T_i=t)}{m_{\hat{\theta}}(t \given \bW_i)}\right)^2 \\
    &\leq 
        \frac{1}{|\mathcal{I}_3|}\Big(\sum_{i\in \mathcal{I}_3} (\widehat{g}_{\hat{\theta}}(t,\bW_i)-g_{\hat{\theta}}(t,\bW_i))^2\Big)
        \Big(\sum_{i\in \mathcal{I}_3} \Big(\frac{\one(T_i=t)}{\widehat{m}_{\hat{\theta}}(t \given \bW_i)}-\frac{\one(T_i=t)}{m_{\hat{\theta}}(t \given \bW_i)}\Big)^2\Big) \\
    &\leq  \ell_c^2 n\mathcal{E}_1^{(n)}\mathcal{E}_{2,t}^{(n)} \xrightarrow{P} 0,
\end{align*}
where we have used the naive inequality $\sum a_i^2b_i^2 \leq (\sum a_i^2)(\sum b_i^2)$ rather than Cauchy-Schwarz. From the proof of Theorem \ref{thm:conditionalAIPWconv}, we have that 
\begin{align*}
    \ex[(r_i^1)^2 \given \mathbf{Z}_{\hat{\bullet}}, \hat{\theta}] = \var[r_i^1 \given \mathbf{Z}_{\hat{\bullet}}, \hat{\theta}]\leq \frac{1-c}{c}\left(\widehat{g}_{\hat{\theta}}(t,\bW_i)-g_{\hat{\theta}}(t,\bW_i)\right)^2,
\end{align*}
and hence the conditional Markov's inequality yields
\begin{align*}
    \mathbb{P}\left(\frac{1}{|\mathcal{I}_3|}\sum_{i\in \mathcal{I}_3}(r_i^1)^2>\epsilon \given \mathbf{Z}_{\hat{\bullet}}, \hat{\theta}\right) 
    &\leq \epsilon^{-1}\ex\left[\frac{1}{|\mathcal{I}_3|}\sum_{i\in \mathcal{I}_3}(r_i^1)^2 \given \mathbf{Z}_{\hat{\bullet}}, \hat{\theta}\right] \\
    &= \epsilon^{-1}\frac{1}{n}\sum_{i=1}^n \ex\left[(r_i^1)^2 \given \mathbf{Z}_{\hat{\bullet}}, \hat{\theta}\right] \\
    &\leq \frac{1-c}{\epsilon c}\mathcal{E}_{2,t}^{(n)} \xrightarrow{P} 0.
\end{align*}
Thus we also conclude that $\frac{1}{|\mathcal{I}_3|}\sum_{i\in \mathcal{I}_3}(r_i^1)^2 \xrightarrow{P} 0$.
Analogously, the final remainder $\frac{1}{|\mathcal{I}_3|}\sum_{i\in \mathcal{I}_3} (r_i^2)^2$ can be shown to converge to zero in probability by leveraging the argument for $R_2\xrightarrow{P}0$ in the proof of Theorem \ref{thm:conditionalAIPWconv}.

Combining the arguments above we conclude that 
\begin{align*}
    \widehat{\mathcal{V}}_t(\widehat{g},\widehat{m}) 
    = \frac{1}{|\mathcal{I}_3|}\sum_{i\in \mathcal{I}_3} \widehat{u}_i^2 
        - \left(\frac{1}{|\mathcal{I}_3|}\sum_{i\in \mathcal{I}_3}\widehat{u}_{i}\right)^2 
    = \frac{1}{|\mathcal{I}_3|}\sum_{i\in \mathcal{I}_3} u_i(\hat{\theta})^2
        - \Big(\frac{1}{|\mathcal{I}_3|}
        \sum_{i\in \mathcal{I}_3}u_i(\hat{\theta})\Big)^2
        + o_P(1) 
\end{align*}
As noted in the proof of Theorem~\ref{thm:conditionalAIPWconv},
for each $\theta\in\Theta$ the terms $\{u_i(\theta)\}_{i\in \mathcal{I}_3}$ are i.i.d 
with $\ex[|u_i(\theta)|^{2+\delta}]\leq C'\ex[|Y|^{2+\delta}]<\infty$. 
Hence, the uniform law of large numbers, as stated in \citet[Lem. 19]{shah2020hardness}, implies that
\begin{align*}
    \frac{1}{|\mathcal{I}_3|}\sum_{i\in \mathcal{I}_3} u_i(\theta) 
    - \ex[u_1(\theta)]) \xrightarrow{P/\Theta} 0,
    \qquad
    \frac{1}{|\mathcal{I}_3|}\sum_{i\in \mathcal{I}_3} u_i^2(\theta) 
    - \ex[u_1(\theta)^2] \xrightarrow{P/\Theta} 0,
\end{align*}
where the convergence in probability holds uniformly over $\theta\in\Theta$.
Since $\var(u_1(\theta))=\var(\psi_t(\bZ_\theta))=\mathbb{V}_t(\bZ_\theta)$, this lets us conclude that
\begin{align*}
     S_n(\theta) \coloneq
     \frac{1}{|\mathcal{I}_3|}\sum_{i\in \mathcal{I}_3} {u}_i(\theta)^2
        - \Big(\frac{1}{|\mathcal{I}_3|}\sum_{i\in \mathcal{I}_3} u_{i}(\theta)\Big)^2 
        - \mathbb{V}_t(\bZ_\theta)
        \xrightarrow{P/\Theta} 0.
\end{align*}
We now use that convergence in distribution is equivalent to convergence in probability for deterministic limit variables, see \citet[Cor. B.4.]{christgau2023nonparametric} for a general uniform version of the statement. Since Assumption~\ref{asm:samples} implies that the conditional distribution $S_n(\hat{\theta})\given \hat{\theta}=\theta$ is the same as $S_n(\theta)$, the computation in \eqref{eq:uniformimpliescondtional} with $\Phi(\cdot)=\one(\cdot \geq 0)$ yields that $S_n(\hat{\theta})\xrightarrow{d}0$. 
This lets us conclude that
\begin{equation*}
    \widehat{\mathcal{V}}_t(\widehat{g},\widehat{m}) 
    = S_n(\hat{\theta}) + o_P(1) \xrightarrow{P} 0,
\end{equation*}
which finishes the proof.
\hfill \qedsymbol{}

\subsection{Proof of Theorem \ref{thm:SIregularity}}
We fix $P\in\mathcal{P}$ and suppress it from notation throughout the proof.
Given $\theta\in \real^d$, the single-index model assumption
\eqref{eq:singleindex} implies that
\begin{align*}
    b_t(\bW^\top \theta) 
    = \ex[Y \given T=t, \bW^\top \theta]
    = \ex[h_t(\bW^\top \theta_P) \given T=t, \bW^\top \theta].
\end{align*}
Now since $h_t \in C^1(\mathbb{R})$, we may write
\begin{align*}
    h_t(\bW^\top \theta) 
    &= h_t(\bW^\top \theta_P) + R(\theta,\theta_P),\\
    R(\theta,\theta_P) 
    &=
    h_t'(\bW^\top \theta_P) \bW^\top (\theta - \theta_P)
    + r(\bW^\top(\theta - \theta_P))\bW^\top (\theta - \theta_P)
\end{align*}
for a continuous function $r\colon \mathbb{R} \to \mathbb{R}$ satisfying that $\lim_{\epsilon \to 0}r(\epsilon) = 0$. It follows that
\begin{align*}
    \ex[h_t(\bW^\top \theta_P) \given T=t, \bW^\top \theta] 
    = h_t(\bW^\top \theta) 
    - \ex[R(\theta,\theta_P) \given T=t, \bW^\top \theta],
\end{align*}
The theoretical bias induced by adjusting for $\bW^\top\theta$ instead of $\bW$, or equivalently $\bW^\top \theta_P$, is therefore
\begin{align*}
    \chi_t(\bW^\top \theta;P) - \chi_t(\bW)  
    &= \chi_t(\bW^\top \theta;P) - \chi_t(\bW^\top \theta_P)\\
    &= \ex\left[h_t(\bW^\top \theta) 
        - \ex[R(\theta,\theta_P) \given T=t, \bW^\top \theta]
        - h_t(\bW^\top \theta_P)
        \right] \\
    &= \ex[
        R(\theta,\theta_P)
        -\ex[R(\theta,\theta_P)\given T=t,\bW^\top \theta]] \\
    &= 
    \ex\left[
        R(\theta,\theta_P) \cdot
        \left(
        1
        -
        \frac{\one(T=t)}{\mathbb{P}(T=t\given \bW^\top \theta)}
        \right)
    \right] \\
    &= 
        C(\theta,\theta_P)^\top (\theta - \theta_P),
\end{align*}
where 
\begin{align*}
    C(\theta;\theta_P) = 
    \ex\left[
        \left(
            h_t'(\bW^\top \theta)
            + r(\bW^\top(\theta - \theta_P))
        \right)
        \left(
        1
        -
        \frac{\one(T=t)}{\mathbb{P}(T=t\given \bW^\top \theta)}
        \right)
        \bW
    \right].
\end{align*}
To show that $u$ is differentiable at $\theta = \theta_P$, it therefore suffices to show that the mapping $\theta \mapsto C(\theta;\theta_P)$ is continuous at $\theta_P$. 

We first show that $\mathbb{P}(T=t \given \bW^\top \theta)$ is continuous at $\theta_P$ almost surely, which follows after applying a coordinate change such that $\theta$ becomes a basis vector. To be more precise, we may choose a neighborhood $\mathcal{U}\subset \real^d$ of $\theta_P$ and $d-1$ continuous functions 
\[
    q_2, \ldots, q_d \colon \mathcal{U}\longrightarrow \mathcal{S}^{d-1}
\]
such that $Q(\theta) \coloneq (\lVert\theta \rVert^{-1}\theta, q_2(\theta),\ldots q_d(\theta))$ is an orthogonal matrix for every $\theta\in \mathcal{U}$. For example, the first $d$ vectors of the Gram-Schmidt process applied to $(\theta,\mathbf{e}_1,\ldots, \mathbf{e}_d)$ is continuous and yields an orthonormal basis for $\theta\in \real^d \setminus \operatorname{span}(\mathbf{e}_d)$, so this works in case $\theta_P\neq \pm \mathbf{e}_d$. 
Let $Z(\theta) = Q(\theta)^\top \bW$ and note that $Z(\theta)_1 = \lVert\theta \rVert^{-1} \theta^\top \bW 
=\lVert\theta \rVert^{-1}\bW^\top \theta$.
Then by iterated expectations,
\begin{align*}
    &\mathbb{P}(T=t \given \bW^\top \theta) 
    = \ex[m(t\given \bW)\given \bW^\top \theta] \\
    &= \ex[m(t\given Q(\theta)Z(\theta))\given Z(\theta)_1] \\
    &= 
    \frac{
        \int_{\real^{d-1}} 
        m(t\given \widetilde{\bW}(\theta,z))
        p_{\bW}(\widetilde{\bW}(\theta,z))
        \mathrm{d}z
    }{
        \int_{\real^{d-1}}
        p_{\bW}(\widetilde{\bW}(\theta,z))
        \mathrm{d}z
    },
    \qquad 
    \widetilde{\bW}(\theta,z) \coloneq 
    Q(\theta)
    \begin{pmatrix}
        Z(\theta)_1 \\ z
    \end{pmatrix}
\end{align*}
Each integrand is bounded and continuous over $\theta\in \mathcal{U}$.
From dominated convergence it follows that $\mathbb{P}(T=t \given \bW^\top \theta)$ is continuous at $\theta_P$ almost surely.
It follows that almost surely,
\begin{align*}
    &\left(
            h_t'(\bW^\top \theta)
            + r(\bW^\top(\theta - \theta_P))
    \right)
    \left(
    1
    -
    \frac{\one(T=t)}{\mathbb{P}(T=t\given \bW^\top \theta)}
    \right) 
    \bW \\
    &\longrightarrow
    h_t'(\bW^\top \theta_P)
    \left(
    1
    -
    \frac{\one(T=t)}{\mathbb{P}(T=t\given \bW^\top \theta_P)}
    \right)\bW,
    \qquad \text{for }\theta \to \theta_P. 
\end{align*}
By dominated convergence again we conclude that
\begin{align*}
    C(\theta;\theta_P) \longrightarrow
    \ex\left[
     h_t'(\bW^\top \theta_P)
    \left(
    1
    -
    \frac{\one(T=t)}{\mathbb{P}(T=t\given \bW^\top \theta_P)}
    \right)\bW\right], \qquad \theta \to \theta_P.
\end{align*}
This shows that $\theta \mapsto C(\theta;\theta_P)$ is continuous at $\theta_P$, and hence we conclude that $u$ is differentiable at $\theta=\theta_P$ with gradient
\begin{align*}
    \nabla u (\theta_P) &=
    \ex\left[
     h_t'(\bW^\top \theta_P)
    \left(
    1
    -
    \frac{\one(T=t)}{\mathbb{P}(T=t\given \bW^\top \theta_P)}
    \right)\bW\right] \\
    &=
    \ex\left[
     h_t'(\bW^\top \theta_P)
    \left(
    1
    -
    \frac{\mathbb{P}(T=t\given \bW)}{\mathbb{P}(T=t\given \bW^\top \theta_P)}
    \right)\bW\right].
\end{align*}
\hfill \qedsymbol{}

\supplementarysection{Details of simulation study}\label{sup:simdetails}

Our experiments were conducted in Python \citep{van1995python}. 
The linear and logistic regression was imported from the \texttt{scikit-learn} package \citep{scikit-learn}, and the neural network for the DOPE-IDX was implemented using \texttt{pytorch} \citep{paszke2019pytorch}.

\begin{figure}
        \centering
        \includegraphics[width=0.6\linewidth]{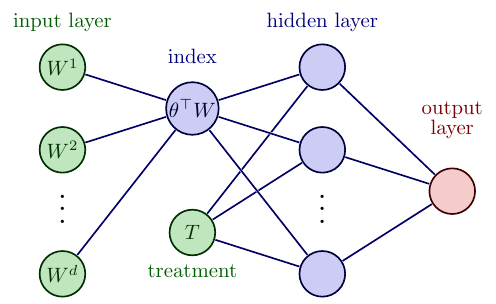}
        \caption{Neural network architecture for single-index model.}
        \label{fig:NNarchitecture}
\end{figure}
The neural network architecture is illustrated in Figure~\ref{fig:NNarchitecture}. 
The network was optimized using MSE loss and the ADAM optimizer with \texttt{lr=1e-3} and \texttt{n\_iter=1200} in the simulation experiment. For the NHANES application the settings were similar, but with BCELoss and \texttt{n\_iter=3000}.
We discovered that the optimization procedure could get stuck in a local minimum with significantly lower training score. To avoid getting stuck in a such a `bad local minimum', it was possible to refit the network with random initialization several times and pick the model with highest training score. The frequency of bad local minimums is roughly $15\%$, with variation between each setting. Thus, refitting the network, say 5 times, is enough to avoid a bad minimum $>99,99\%$ of the time. 
To save the computational cost of refitting many times for each simulation, we initialized the first layer with an initial value $\theta_{init} = \beta + \mathrm{Unif}(-0.1,0.1)^{\otimes d}$. 
The logistic regression for the propensity score was fitted without $\ell_2$-penalty and optimized using the \texttt{lbfgs} optimizer. The propensity score was clipped to the interval $(0.01,0.99)$ for all estimators of the adjusted mean.

\begin{figure}[htb!]
    \centering
    \includegraphics[width=\linewidth]{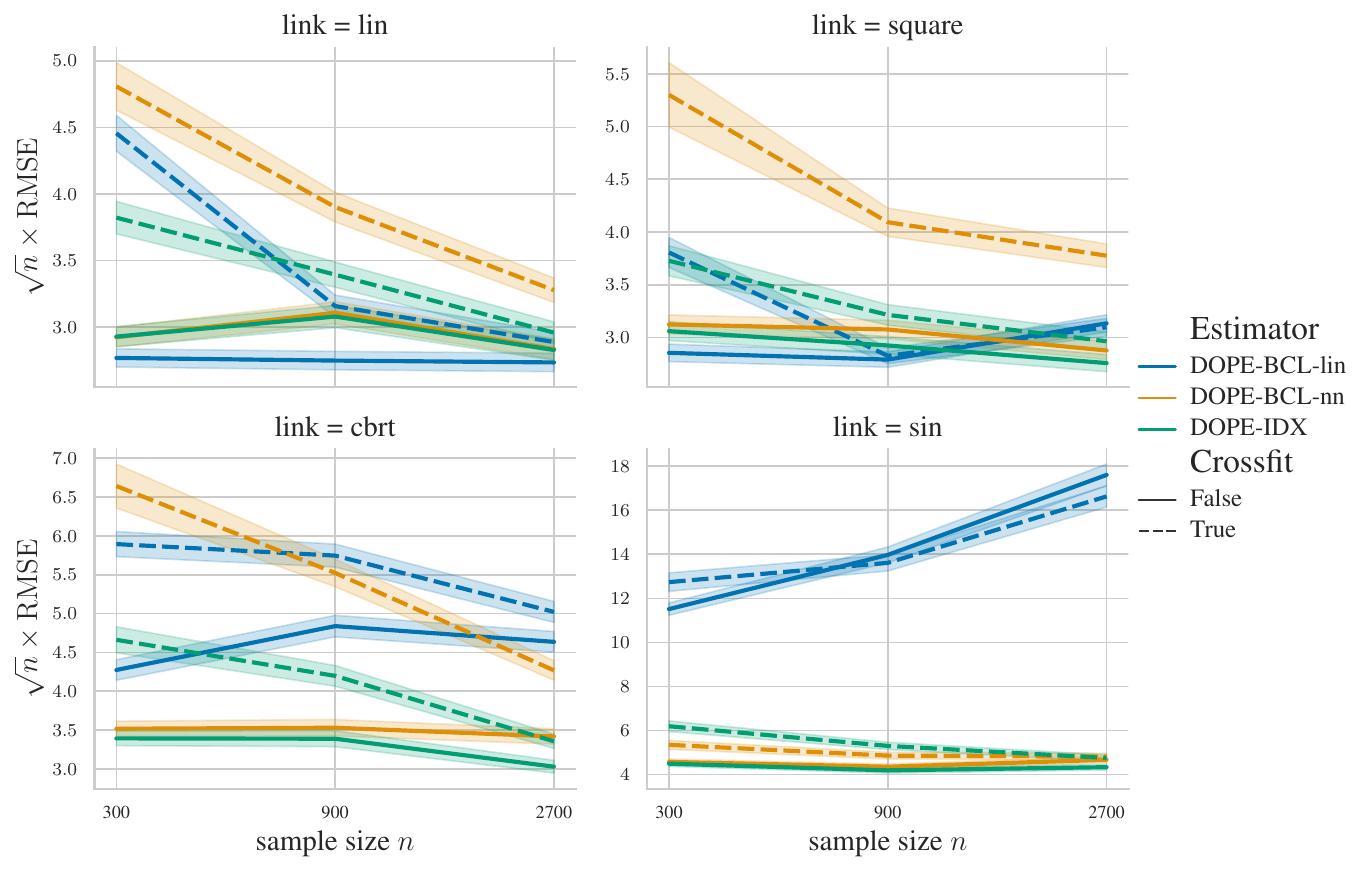}
    \caption{Comparison of DOPE estimators with and without sample splitting.}
    \label{fig:Cross-fitting}
\end{figure}
Figure~\ref{fig:Cross-fitting} shows the cross-fitted DOPE estimators versus their full sample counterparts in the simulation setup of Section~\ref{sec:simulation}. We observe that cross-fitting generally seems to decrease performance for small sample sizes, but the discrepancy diminishes for larger sample sizes.

\begin{table}[htb!]
    \centering
    \begin{tabular}{lrrr}
    \toprule
    Estimator &  Estimate &  BS se &            BS CI \\
    \midrule
    Regr. (NN)          &     0.020 &  0.008 &   (0.002, 0.035) \\
    Regr. (Logistic)    &     0.027 &  0.009 &   (0.012, 0.048) \\
    DOPE-BCL (Logistic) &     0.024 &  0.010 &    (0.004, 0.040) \\
    Naive contrast      &     0.388 &  0.010 &   (0.369, 0.407) \\
    DOPE-BCL (NN)       &     0.018 &  0.010 &  (-0.003, 0.036) \\
    DOPE-IDX (NN)       &     0.023 &  0.012 &  (-0.001, 0.047) \\
    AIPW (NN)           &     0.017 &  0.016 &  (-0.017, 0.046) \\
    AIPW (Logistic)     &     0.022 &  0.016 &  (-0.012, 0.051) \\
    IPW (Logistic)      &    -0.047 &  0.027 &  (-0.119, -0.01) \\
    \bottomrule \\
    \end{tabular}
    \caption{Treatment effect estimates for NHANES dataset, where covariates with more than 50\% missing values have been removed. }
    \label{tab:NHANES_pruned_version}
\end{table}

Table~\ref{tab:NHANES_pruned_version} corresponds to Table~\ref{tab:pulsepressure} in the main manuscript, but where covariates with more than $50\%$ missing data have been removed rather than imputed. Except for the naive contrast (which has dropped two rows) the ordering according to bootstrap variance is the same.

\begin{figure}
    \centering
    \includegraphics[width=\linewidth]{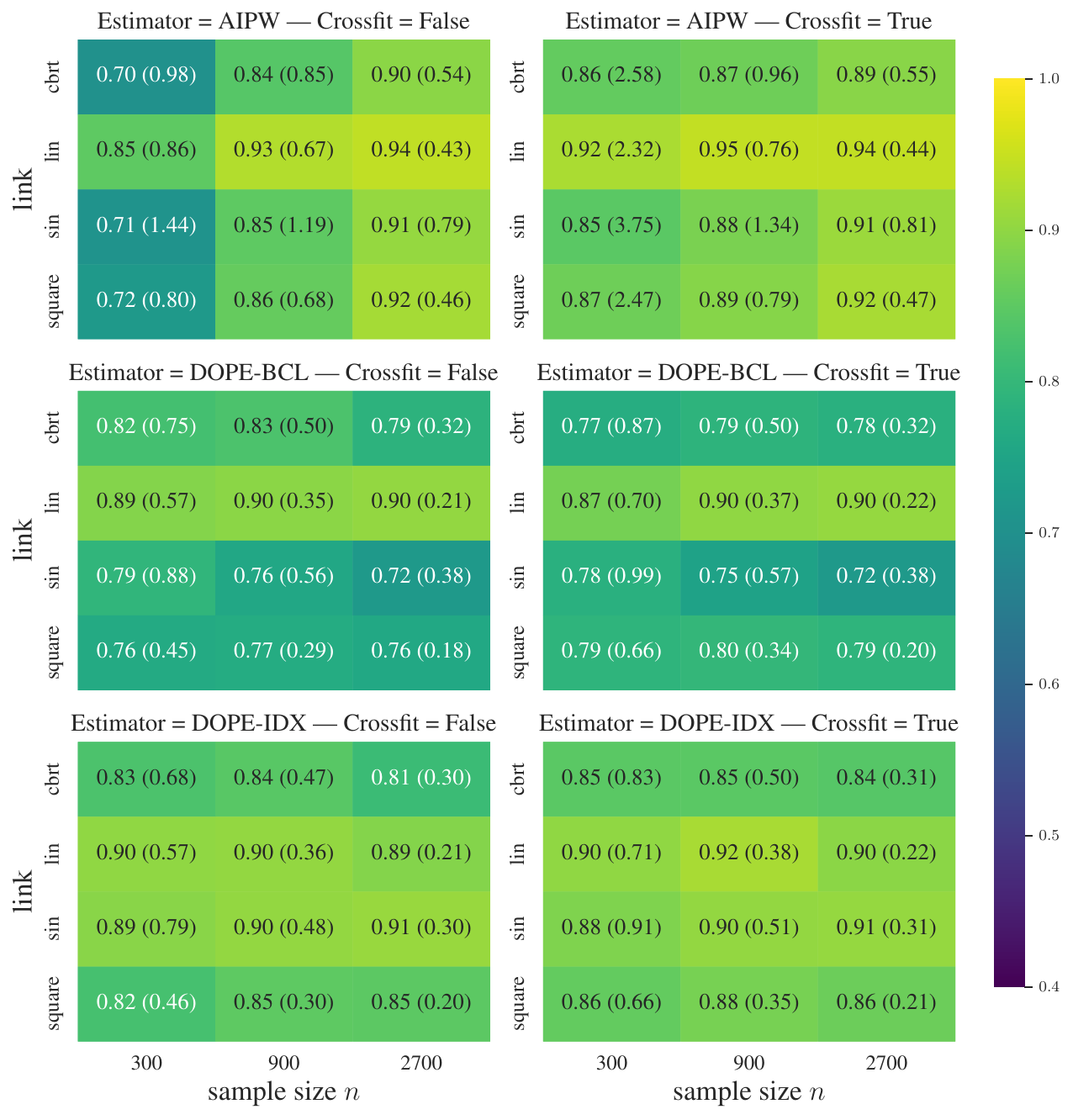}
    \caption{Coverage of asymptotic confidence intervals of the adjusted mean $\chi_1$, aggregated over $d\in\{4,12,36\}$ and with the true $\beta$ given in \eqref{eq:samplescheme}.
    The numbers in the parentheses denote the corresponding average width of the confidence interval.
    } 
    \label{fig:Coverage-with-length}
\end{figure}

Figure~\ref{fig:Coverage-with-length} shows the same coverage values as Figure~\ref{fig:Coverage-cf}, but also includes the average lengths of the confidence intervals in the parentheses. We note that the lengths of the confidence intervals for the DOPE estimators are much smaller than those of the AIPW estimator.

\supplementarysection{Extension to cross-fitting} \label{sup:crossfitting}
\begin{algorithm} \caption{Cross-fitted DOPE} \label{alg:crossfitted}
  \textbf{input}: observations $(T_i,\bW_i,Y_i)_{i\in[n]}$, partition $[n] = J_1\cup \cdots \cup J_K$\;
  \textbf{options}: integer $1\leq m\leq K-2$ and options for Algorithm \ref{alg:generalalg}\;
  \For{$k=1,\ldots,K$}{
    Set $\mathcal{I}_3=J_k$, $\mathcal{I}_1=\bigcup_{l=k+1}^{k+m}J_l$ and 
    $\mathcal{I}_2 = [n]\setminus (\mathcal{I}_1 \cup \mathcal{I}_3)$\;
    
    Compute $\widehat{\chi}_{t,k}^{\mathrm{dope}}$ as the output of Algorithm~\ref{alg:generalalg}\;
    
    Compute variance estimate $\widehat{\mathcal{V}}_{t,k}$ given as in \eqref{eq:varestimator}\;
    }
  \Return{$\widehat{\chi}_{t}^{\mathsf{x}} \coloneq \frac{1}{K} \sum_{k=1}^K \widehat{\chi}_{t,k}^{\mathrm{dope}}$ 
  and $\widehat{\mathcal{V}}_{t}^{\mathsf{x}} \coloneq \frac{1}{K} \sum_{k=1}^K\widehat{\mathcal{V}}_{t,k}$}
\end{algorithm}
A cross-fitting procedure for the DOPE is described in Algorithm~\ref{alg:crossfitted}, which computes both a cross-fitted version of the DOPE and its variance estimator. Here the indices of the folds are understood to cycle modulo $K$ such that $J_{K+1}=J_1$ and so forth. This version of cross-fitting with three index sets has also been referred to as `double cross-fitting' by \citet{zivich2021machine}.

We note that the `standard arguments' for establishing convergence of the cross-fitted estimator cannot be applied directly to our case. This is because, for each fold $k\in[K]$, the corresponding oracle terms $U_{\hat{\theta}_k}^{(n)}$ do not only involve the data indexed by $\mathcal{I}_3^k$, but also depend on $\hat{\theta}_k$ which is estimated from data indexed by $\mathcal{I}_1^k$. Hence the oracle terms are not independent. However, we believe that this dependency should be negligible, and perhaps the convergence can be established under a more refined theoretical analysis.

\fi

\chapter{Assumption-lean Aalen regression}\label{chapter:ACM}

\if\includepapers1
    \setcounter{section}{0}
    \setcounter{supsection}{0}
    \setHyperefPrefix{ch4}
    \renewcommand\thesubsection{\thesection.\arabic{subsection}}
    \renewcommand\theequation{\thesection.\arabic{equation}}
    \renewcommand\thefigure{\thesection.\arabic{figure}}
    \renewcommand\thetable{\thesection.\arabic{table}}
    \renewcommand\thelem{\thesection.\arabic{lem}}
    \renewcommand\theprop{\thesection.\arabic{prop}}
    \renewcommand\thethm{\thesection.\arabic{thm}}
    \renewcommand\thecor{\thesection.\arabic{cor}}
    \renewcommand\thedfn{\thesection.\arabic{dfn}}
    \begin{authorlist}
        \textsc{Alexander Mangulad Christgau and Niels Richard Hansen}
    \end{authorlist}
    \begin{abstract}
We propose a novel and model-free estimand to measure the conditional association between a time-to-event and an exposure given covariates. 
The estimand may be interpreted as a weighted hazard difference, and can be viewed as an assumption-lean generalization of the cumulative exposure coefficient in the Aalen additive hazards model. We develop an estimation method based on the principles of double machine learning, which is algorithm-agnostic and incorporates cross-fitting. We prove that our method is doubly robust in the sense that, when the nuisance functions are learned with modest rate conditions, the resulting estimator converges to the true estimand at a $\sqrt{n}$-rate. A simulation study is conducted, showcasing that our estimation procedure yields an efficient and robust estimator of our estimand.
\end{abstract}

\section{Introduction}
Statistical inference for the conditional association between a time-to-event and an exposure is the primary objective in many bio-medical applications. A critical task for such applications is to first define an appropriate measure of association. Formulating a general and interpretable measure is intricate and remains an active research topic. 

The hazard ratio is one of the most popular and reported measures of association in clinical research, with its application being a subject of ongoing discussion since the influential work of \citet{cox1972regression}.
Despite its popularity, the hazard ratio, and the often accompanying Cox model, are subject to several significant drawbacks, as highlighted by \citet{hernan2010hazards,martinussen2013collaps,stensrud2020test}, among others. These drawbacks 
include non-collapsibility of the hazard ratio and that it is unsuitable for quantifying causal effects. The pitfalls of model misspecification is another notable drawback. When used as a conditional association measure within the framework of the 
Cox model, its interpretation is problematic without strong model assumptions such as proportionality.

To address the issue of model misspecification, \citet{vansteelandt2022assumption} propose a more general measure of association that reduces to the log (cumulative) hazard ratio under the Cox model. Generalizing a model-based measure of association with a model-free measure has now become known as assumption-lean inference \citep{berk2019assumption}. This approach subsumes model-based and estimand-based methods, and can be advantageous when each method has its own merits, see \cite{vansteelandt2022assumptionGLM} and the surrounding discussion for more details.

In this work we develop an analogous approach, but based on the hazard difference as an effect measure.
The hazard difference is collapsible for the \textit{additive hazards model}, introduced by Aalen \citep{aalen1980model,aalen1989linear}. As mentioned, this desirable property does not hold for the hazard ratio and the Cox model
\citep{martinussen2013collaps,daniel2021making}.
There are other appealing properties of the hazard difference, but the discussion is subtle, see, for example, \citet{didelez2022logic,martinussen2020subtleties}.  
Regardless, there are several applications where additive hazards are more plausible than proportional ones; we refer to \cite{Breslow1987statistical,lin1994semiparametric,KRAVDAL1997,Lee2019analysis,bischofberger2023smooth}.

\begin{table}[ht]
    \centering
    \resizebox{0.95\textwidth}{!}{
    \begin{tabular}{|p{4cm}|p{5cm}|p{5cm}|}
        \hline
        \textbf{Assumptions on the non-baseline term} 
        & \textbf{Ratio-based target}  & \textbf{Difference-based target} \\ \hline
        Parametric & Cox model \citep{cox1972regression} & Additive risk model in \cite{lin1994semiparametric} \\ \hline
        Semi-parametric & Partial Cox model \citep{sasieni1992information,huang1999efficient,zhong2022deep}
        & Partially additive hazards \citep{mckeague1994partly,dukes2019doubly,hou2023treatment} \\ \hline
         Assumption-lean  & \citep{vansteelandt2022assumption} 
        & This work \\ \hline
    \end{tabular}
    }\vspace{6pt}
    \caption{Overview of selected works on inference for the conditional association between a time-to-event and an exposure, defined using varying levels of model assumptions. All models include a nonparametric baseline term that does not depend on covariates.
    }
    \label{tab:overview}
\end{table}

Despite some of their well-behaved theoretical properties, additive hazards are ``somewhat overlooked in practice'' according to \citet[p. 103]{martinussen2006dynamic}. 
A common criticism of additive hazard models is that they can lead to hazard estimates that are negative and are thus not valid hazards. 
To circumvent this issue, we define a model-free target parameter based on the \textit{Local Covariance Measure (LCM)} introduced by \citet{christgau2023nonparametric}. We show that, within a partially additive hazards model, it coincides with the cumulative effect of the exposure. 
However, our target does not rely on such modeling assumptions and we discuss its interpretations in more general settings. 
Table~\ref{tab:overview} gives an overview of our target parameter in relation to some existing works.
We show how the target may be estimated using flexible machine learning methods, and prove asymptotic normality under suitable rate conditions on the nuisance parameters.

\subsection{A hazard difference estimand} To motivate our general framework 
and results we first present a simplified example. 

\begin{example} \label{ex:simple}
Let $T \in [0, 1]$ denote a survival 
time, let $X \in \{0, 1\}$ denote a binary baseline exposure and let $Z$ denote 
additional baseline covariates. With $\bH_t(X, Z)$ denoting the conditional cumulative 
hazard given $(X, Z)$, the conditional survival function is 
$\mathbb{P}(T > t \given X, Z) = e^{- \bH_t(X, Z)}$, and 
\begin{equation}
    \label{eq:log-surv-ratio}
    \Theta_t(Z) \coloneq \log \frac{\mathbb{P}(T > t \given X = 0, Z)}{\mathbb{P}(T > t \given X = 1, Z)} = \bH_t(1, Z) - \bH_t(0, Z)
\end{equation}
quantifies the, possibly heterogeneous, exposure effect as a function of $Z$. 
Suppose that 
\[
    \bH_t(X, Z) = \int_0^t \bh_s(X, Z) \mathrm{d} s
\]
is given in terms of the conditional hazard $\bh_s(X, Z)$, then 
\[
\Theta_t(Z) = \int_0^t  \underbrace{\bh_s(1, Z) - \bh_s(0, Z)}_{
    \eqcolon \theta_s(Z)} \mathrm{d} s
= \int_0^t \theta_s(Z) \mathrm{d} s
\]
is the integrated difference of conditional hazards. We can always express the 
conditional hazard as 
\begin{equation}
    \label{eq:cond-hazard-baseline}
    \bh_s(X, Z) = X \theta_s(Z) + \bh_s(0, Z),
\end{equation}
and the additive decomposition \eqref{eq:cond-hazard-baseline} does not impose any model restrictions unless we restrict one of the functions $\theta_s(Z)$ or $\bh_s(0, Z)$.

We seek an estimand the summarizes the values of $\theta_s(Z)$ over $s$ and $Z$. 
To this end, let $w_s(Z)$ denote a \emph{weight process} with 
$w_s(Z) \geq 0$ and $\int_0^1 \ex[w_s(Z)] \mathrm{d} s = 1$, 
and consider the \emph{weighted hazard difference estimand}
\[
    \gamma^w_1 \coloneq \int_0^1 \ex[ \theta_{s}(Z) w_{s}(Z)] \mathrm{d} s.
\]
The real-valued parameter $\gamma^w_1$ is the endpoint of 
the functional estimand 
\[
    \gamma^w_t \coloneq \int_0^t \ex[ \theta_{s}(Z) w_{s}(Z)] \mathrm{d} s, \quad t \in [0, 1].
\]
For $w_s(Z) = 1$, the estimand simplifies to the expected cumulative hazard difference $\gamma^1_t = \ex( \Theta_{t}(Z))$, but other weights 
are also of interest. For the \emph{partially additive hazards model}, where 
$\theta_s$ is independent of $Z$, we see that $\gamma_t^w = \int_0^t \theta_s \ex[w_s(Z)] \mathrm{d} s$ is a cumulative weighted hazard difference.
Efficient and (rate) doubly robust estimation was considered for the partially 
additive model in \cite{dukes2019doubly} and \cite{hou2023treatment} using semiparametric estimation theory  -- in particular when $\theta_t = \theta$ is independent of $t$ and 
$\gamma_t^w = \theta t$. In this paper we introduce an assumption-lean 
additive hazard estimand and corresponding estimation theory that, as a special case, applies to model \eqref{eq:cond-hazard-baseline} without any model restrictions.

To give an explicit formula for our assumption-lean estimand for model \eqref{eq:cond-hazard-baseline} within this simplified example,
let $Y_t = \one(T\geq t)$ denote the at-risk indicator, and define what we call 
the \emph{Aalen covariance} weights
\begin{equation}
    \label{eq:AC-weights}
    w_s^{\textsc{ac}}(Z) 
    \coloneq \frac{Y_s \pi_s(Z) (1 - \pi_s(Z)) }{\ex[Y_s \pi_s(Z) (1 - \pi_s(Z))]}, 
\end{equation}
where $\pi_s(Z) = \mathbb{P}(X = 1 \given T \geq s, Z)$. The resulting estimand can be expressed as
\begin{equation} 
    \label{eq:ACM-baseline}
     \gamma_t = \int_0^t \ex\left[\theta_s(Z) w_s^{\textsc{ac}}(Z) \right] \mathrm{d} s
     = \int_0^t \frac{\ex\left[\theta_s(Z)Y_s \pi_s(Z) (1 - \pi_s(Z))\right]}{\ex[Y_s \pi_s(Z) (1 - \pi_s(Z))]} \mathrm{d}s.
\end{equation}
The weights \eqref{eq:AC-weights} and the formula \eqref{eq:ACM-baseline} may at first appear 
unmotivated. In Section~3 in \cite{vansteelandt2022assumptionGLM} 
we find weights similar to \eqref{eq:AC-weights} used to represent assumption-lean GLM regression estimands, and also in Section 2 in \cite{vansteelandt2022assumption} similar 
weights appear in the representation of assumption-lean Cox regression estimands. 
\end{example}

In Example \ref{ex:simple_cont} we show that the Aalen covariance weights and the 
estimand \eqref{eq:ACM-baseline} appear naturally as an assumption-lean Aalen regression estimand.
The estimand $\gamma_t$ given by \eqref{eq:ACM-baseline} is an example 
of the \emph{Aalen covariance measure} (ACM) -- the general estimand we introduce in Definition \ref{dfn:ACM}.
This is, in turn, an example of a \emph{Local Covariance Measure} as introduced by \citet{christgau2023nonparametric}. 
We show in this paper that ACM can be defined and interpreted as an 
assumption-lean estimand for any real-valued and time-dependent $X$-process and time-dependent $Z$-process, thus generalizing model \eqref{eq:cond-hazard-baseline} considerably.

\subsection{Organization of paper}
We organize the rest of the paper as follows.
In Section~\ref{sec:acmsetup}, we first introduce our general setup and define our target estimand. We then discuss interpretations of the estimand based on theoretical properties and computations in concrete examples. 
Conditions on the censoring mechanism are postponed to the end of the section to avoid a technical discussion from the onset.
In Section~\ref{sec:acmestimation}, we discuss an estimation methods for our target estimand, in particular we propose the \textsc{x-acm}, which is based on double machine learning estimator.
In Section~\ref{sec:acmasymptotics}, we prove that the \textsc{x-acm} is asymptotically Gaussian with a $\sqrt{n}$-rate under suitable conditions. The most significant conditions are rate-conditions on nuisance estimators, which are similar to those for other (rate) doubly robust estimators. 
In Section~\ref{sec:acmsimulations}, conduct a simulation study that demonstrates the effect of the \textsc{x-acm}. We conclude with a discussion
in Section~\ref{sec:acmdiscuss} outlining potential future research directions.

The supplementary material consists of Section~\ref{sec:acmproofs}, which contains proofs of all our main results and
related auxiliary lemmas, and Section~\ref{app:multhaz}, which includes a discussion of our framework in the context of multiplicative hazards models.

\section{Target estimand and interpretations}\label{sec:acmsetup}

\subsection{General setup}
We consider a survival time $T^*>0$ censored by a random variable $C\in [0,1]$, such that the observed time is $T \coloneq T^*\wedge C$ together with the indicator $\Delta \coloneq \one(T^*\leq C)$. From these we define the observed counting process $N_t = \one(T\leq t, \Delta = 1)$, and the at-risk indicator $Y_t = \one(T\geq t)$. 
In addition to $(T,\Delta)$, we also observe a covariate process $Z = (Z_t)$ and an exposure process $X=(X_t)$ of particular interest. The covariate process can, in principle, take values in any measurable space, whereas the exposure process is assumed real-valued and left-continuous. 

We postpone an elaborate discussion of assumptions on the censoring mechanism to Section~\ref{sec:acmcensoring}. In summary, we introduce Assumption~\ref{asm:censoring}, which is an instance of \emph{independent right-censoring} in the sense of \citet{AndersenBorganGillKeiding:1993}, and which will remain an assumption throughout the remainder of the manuscript.

For any stochastic process $W = (W_t)$, we let $\ol{W}_t = (W_s)_{s<t}$ denote the history up to time $t>0$, but not including the timepoint 
$t$.\footnote{This convention differs from other works such as \citep{lok2017mimicking,lok2008}, which do include the timepoint $t$ in the history. For any left-continuous process, this convention does not affect the information content of the history. In particular, there is no distinction between the conventions when conditioning on $\oX_t$, see also Remark \ref{rem:filt}}
Note that by left-continuity, $X_t = X_{t-} \coloneq \lim_{s\to t^-} X_s$ is determined by the history $\oX_t$.

The conditional cumulative hazard $\bH_t$, given the processes $X$ and $Z$, is defined by
\begin{align*}
    \bH_t(\oX_t,\oZ_t) \coloneq - \log \mathbb{P}(T^* > t \given  \oX_t,\oZ_t).
\end{align*}
The overall objective is to quantify how $\bH_t(\oX_t,\oZ_t)$ depends on the 
exposure process $\oX_t$. Such a dependence can, in general, be complicated and depend on 
the value of the covariate process $\oZ_t$. Our approach is \emph{assumption-lean} meaning that 
we do not make strong model assumptions on how $\bH_t$ depends on $\oX_t$ and $\oZ_t$. We will 
assume, though, that $\bH_t(\oX_t,\oZ_t)$ is given by a conditional hazard function, denoted by $\bh_t$, 
that is, $\bH_t(\oX_t,\oZ_t)=\int_0^t \bh_s(\oX_s,\oZ_s) \mathrm{d}s$.

\subsection{Aalen Covariance Measure}\label{sec:acmACMdefinition}
Before we define our model-free estimand in the most general case, we need to introduce two auxiliary stochastic processes:
the predictable projection of $X$ and the $Z$-compensated counting process.

A stochastic process is said to be \textit{\lc{}} if its sample paths are left-continuous and have right-limits at all times. We henceforth assume that $X$ is \lc{} and square-integrable.
Then, the \textit{predictable projection} of $X$ onto the histories of $N$ and $Z$ can be defined as the unique\footnote{Technically, it is unique up to \textit{evanescence}.} \lc{} process, denoted by $\Pi=(\Pi_t)$, satisfying that
\begin{equation}\label{eq:predictableprojection}
    \Pi_t = \ex[X_t \given \ol{N}_t, \oZ_t].
\end{equation}
For more details, see for example Corollary 7.6.8 in \citet{cohen2015stochastic}.
We will primarily need the process $\Pi_t$ on the event $(Y_t=1)=(T\geq t)$, where it takes the value $\pi_t(\oZ_t)\coloneq \ex[X_t \given T\geq t, \oZ_t]$.\footnote{We technically define $\pi_t$ via the Doob-Dynkin lemma such that $Y_t\Pi_t = Y_t \pi_t(\oZ_t)$ holds \textit{surely}. Thus we can ensure that $\pi_t(\oZ_t)$ is a \lc{} version of $\ex[X_t \given T\geq t, \oZ_t]$.}


Introducing 
\[
    h_t(\oZ_s) = \ex[\bh_s(\oX_s,\oZ_s) \given T \geq t, \oZ_t], 
\] 
the \textit{innovation theorem}, see, e.g., Theorem II.T14 in \cite{Bremaud:1981},
gives that $h_t$ is the conditional hazard given $Z$ that satisfies the relation
\begin{align*}
 H_t(\oZ_t) \coloneq \int_0^t h_s(\oZ_s)\mathrm{d}s = - \log \mathbb{P}(T^* > t \given \oZ_t).    
\end{align*}

In terms of the conditional hazard given $Z$ we define the $Z$-compensated counting process 
$$
    M_t 
    \coloneq N_t - \int_0^tY_sh_s(\oZ_s)\mathrm{d}s
    = N_t - H_{t\wedge T}(\oZ_{t\wedge T}).
$$
We can now define our target estimand.
\begin{definition}[Aalen Covariance Measure] \label{dfn:ACM}
    For each $t\in [0,1]$ let 
    \begin{align}\label{eq:rho}
      \rho(t) \coloneq \ex[Y_t(X_t-\Pi_t)^2] = \ex[Y_t(X_t-\pi_t(\oZ_t))^2],
    \end{align}
    and define, whenever $\rho(t) > 0$, the     \textit{residual process} $G = (G_t)$ by
    \begin{align}\label{eq:residualprocess}
    G_t \coloneq \frac{Y_t(X_t - \Pi_t)}{\rho(t)}
    =\frac{Y_t(X_t - \pi_t(\oZ_t))}{\rho(t)}.
    \end{align}
    If $\rho(t)>0$ for all $t\in[0,1]$, the \emph{Aalen Covariance Measure (ACM)} is defined as the functional estimand $\gamma = (\gamma_t)$ given by
    \begin{equation}\label{eq:LCMdef}
        \gamma_t = \ex\left[\int_0^t G_s \mathrm{d}M_s\right]
    \end{equation}
    for $t\in [0,1]$.
\end{definition}

We discuss the positivity assumption $\rho(t) > 0$ further in Remark~\ref{rmk:boundedassumptions}. 
To appreciate Definition \ref{dfn:ACM} we compute the ACM for the partially additive hazards model.

\begin{example}[Partially additive hazards model] \label{ex:pahm}
Suppose that 
\begin{equation}\label{eq:additivemodel}
    \bh_t(\oX_t,\oZ_t) = \theta_t X_t + g_t(\oZ_{t}),
\end{equation}
for some $\theta_t\in \real$ and a function $g_t\in L^1(P_{\oZ_t})$, where $P_{\oZ_t}$ denotes the distribution of~$\oZ_t$. The model given by \eqref{eq:additivemodel} is the \emph{partially additive hazards model}. It specifies the direct effect of $X$ in a separate term with time-varying coefficient $\theta_t$. For this model 
the innovation theorem yields the explicit expression
\begin{equation}\label{eq:innovationinadditivemodel}
    h_t(\oZ_t) = \theta_t\pi_t(\oZ_t) + g_t(\oZ_t).
\end{equation}

Introducing the $(X, Z)$-compensated counting process, $\bM_t \coloneq N_t - \bH_{t\wedge T}(\oX_{t\wedge T}, \oZ_{t\wedge T})$
we get by \eqref{eq:innovationinadditivemodel} that
\[
    M_t = \bM_t + \int_0^t \theta_s(X_s - \pi_s(\oZ_s)) \mathrm{d}s.
\]  
Since $\bM_t$ is a mean zero martingale and the residual process is predictable, 
\begin{align} \nonumber
    \gamma_t & =  \ex\left[ \int_0^t G_s \mathrm{d} \bM_s + \int_0^t G_s \theta_s (X_s - \pi_s(\oZ_s)) \mathrm{d}s \right] \\
    & = \ex\left[ \int_0^t  \theta_s  \frac{Y_s(X_s - \pi_s(\oZ_s))^2}{\rho(s)} \mathrm{d}s \right] 
    =  \int_0^t  \theta_s \mathrm{d}s. \label{eq:cumulativeeffect}
\end{align}
Hence, the ACM reduces to the cumulative direct effect $\Theta_t = \int_0^t \theta_s \mathrm{d}s$ for the 
partially additive hazards model.
\end{example}

We note that both \citet{dukes2019doubly,hou2023treatment} consider models that fall under the partially additive model given by \eqref{eq:additivemodel}, and they propose estimation of the cumulative effect $\Theta_t$ with doubly robust estimation methods. 
The example above shows that our ACM estimand reduces to the cumulative effect for the partially additive model, but we will not make this model assumption. We present additional represetations and interpretations of the ACM in Section \ref{sec:acminter}

\begin{remark} \label{rem:filt}
    The ACM is an example of a \textit{Local Covariance Measure} (LCM), which was introduced by 
    \citet{christgau2023nonparametric} in a more general setting of counting processes, where all definitions and results 
    are formulated in terms of filtrations and intensities. Since this work is focused on survival analysis, the results are formulated using histories and conditional hazards to make them more accessible. To describe the connection between 
    the notation used in this paper and in  \cite{christgau2023nonparametric}, introduce the two filtrations 
    $\cF_t \coloneq \sigma(Z_s, N_s; s \leq t)$ and $\cG_t \coloneq \sigma(X_s,Z_s,N_s; s \leq t)$. Then 
    $\cG_{t-} = \sigma(\ol{X}_t, \ol{Z}_t, \ol{N}_t)$ and $\cF_{t-} = \sigma(\ol{Z}_t, \ol{N}_t)$, and 
    $\lambda_t \coloneq Y_t h_t(\oZ_t)$ and $\blambda_t \coloneq Y_t \bh_t(\oX_t,\oZ_t)$ are the
    $\cF_t$- and $\cG_t$-predictable intensities, respectively, for the counting process $N$.
    The martingale argument in Example \ref{ex:pahm} uses that $\bM_t$ is a $\cG_t$-martingale, which relies on the implicit Assumption~\ref{asm:censoring} on the censoring mechanism.
    The supplementary Section~\ref{sec:acmproofs} elaborates on the notation, 
    which is also used in the proofs. 

    In \cite{christgau2023nonparametric}, the LCM was introduced for any $\cG_t$-predictable 
    residual process $G_t$ satisfying $\ex[G_t\given \cF_{t-}] = 0$. The residual process given 
    by \eqref{eq:residualprocess} is a specific example of such a process. The LCM was introduced 
    in \cite{christgau2023nonparametric} to quantify deviations from the hypothesis that $N$ is 
    \emph{locally independent} of $X$ given $\cF_t$. The ACM, being an example of a LCM, 
    enjoys the same interpretation: under local independence,  $\gamma$ is equal to the zero function. 
    The ACM was, however, not considered in \cite{christgau2023nonparametric}, and we proceed to show that the ACM enjoys the interpretation as an effect measure generalizing the cumulative effect in a partially additive hazards model.    
\end{remark}

\subsection{Interpretations of the ACM as an effect measure} \label{sec:acminter}
By Proposition 2.6 in \cite{christgau2023nonparametric}, the 
LCM, and therefore also the ACM, has the following equivalent representation for each $t\in[0,1]$,
\begin{align}\label{eq:LCMisLCM}
    \gamma_t & = \int_0^t \cov\big(G_s,Y_s\big(\bh_s(\oX_s,\oZ_s)-h_s(\oZ_s)\big)\big) \mathrm{d}s \\
    & = \int_0^t \ex\left[G_s \big(\bh_s(\oX_s,\oZ_s)-h_s(\oZ_s)\big)\right] \mathrm{d}s, \label{eq:LCMisLCM2}
\end{align}
where \eqref{eq:LCMisLCM2} follows from $G_s$ having mean zero and $Y_s \in \{0,1\}$.
Example \ref{ex:pahm} showed that the ACM equals the cumulative direct effect under the partially additive hazards model. The following proposition asserts that in general -- when $\eqref{eq:additivemodel}$ may not hold -- the ACM equals the cumulated coefficient in the $L^2$-projection of the conditional hazard $\bh_t$ onto a partially additive model.

%
\begin{prop}\label{prop:LCMisL2projection}
    Consider the minimization problem
    \begin{align}\label{eq:projectionobjective}
        \minimize_{\vartheta, g} \!:& \quad 
            \ex\left[\int_0^1 Y_t\big(\bh_t(\oX_t,\oZ_t) 
                - \vartheta_t X_t - g_t(\oZ_t)\big)^2\mathrm{d}t
                \right]
    \end{align}
    over all measurable functions $\vartheta \colon [0,1]\to \real$ and collections of functions $(g_t)_{t\in[0,1]}$ such that $g_t(\oZ_t)$ is progressively measurable.
    Let $\vartheta^\star=(\vartheta_t^\star)$ be given by 
    $$
    \vartheta_t^\star \coloneq \cov\!\big(G_t,Y_t\big(\bh_t(\oX_t,\oZ_t)-h_t(\oZ_t)\big)\big)
    $$ 
    for $t\in[0,1]$.
    
    Then there exists $g^\star$ such that $(\vartheta^\star, g^\star)$ is a solution to \eqref{eq:projectionobjective}, and $\vartheta^\star$ is unique up to modifications on a null set of $[0,1]$.
    In particular, the ACM can be expressed as 
    $$
        \gamma_t=\int_0^t\vartheta_s\mathrm{d}s, \qquad t\in[0,1],
    $$ 
    for any solution $(\vartheta,g)$ to \eqref{eq:projectionobjective}.
\end{prop}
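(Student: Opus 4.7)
The plan is to recognize the minimization problem as a weighted $L^2$-projection that decouples pointwise in time, compute the optimizers for each fixed $t$, and then match the resulting coefficient with the representation of the ACM obtained from \eqref{eq:LCMisLCM2}.

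First, I would apply Fubini's theorem to rewrite the objective as
\[
    \int_0^1 \ex\!\left[Y_t\bigl(\bh_t(\oX_t,\oZ_t)-\vartheta_t X_t-g_t(\oZ_t)\bigr)^2\right]\mathrm{d}t,
\]
and argue that, since the integrand is non-negative and the integrand at time $t$ depends on $(\vartheta_t, g_t)$ only, the joint infimum equals the integral of the pointwise infima. Some mild regularity on the map $t\mapsto \vartheta_t^\star$ and progressive measurability of $g_t^\star(\oZ_t)$ is needed to assemble the pointwise optimizers into an admissible pair; measurability follows from the fact that the optimizers are explicit conditional expectations in $(X_t, \oZ_t, Y_t)$. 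This is where I expect the main technical bookkeeping.

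Next, for fixed $t$ with $\rho(t) > 0$, I would solve the pointwise problem
\[
    \minimize_{(\vartheta,g)}\ \ex\!\left[Y_t\bigl(\bh_t-\vartheta X_t-g(\oZ_t)\bigr)^2\right]
\]
by setting the first variations to zero. Variation in $g$ yields the stationarity condition $\ex[Y_t(\bh_t-\vartheta X_t-g(\oZ_t))\mid \oZ_t]=0$ on $(Y_t=1)$, which gives $g^\star(\oZ_t) = h_t(\oZ_t) - \vartheta\,\pi_t(\oZ_t)$ via the innovation theorem. Substituting $g^\star$ into the $\vartheta$-stationarity condition $\ex[Y_t X_t(\bh_t-\vartheta X_t - g(\oZ_t))]=0$ and using iterated expectations to replace $X_t$ by the residual $X_t-\pi_t(\oZ_t)$ (legitimate since the residual is orthogonal to any $\sigma(\oZ_t)$-measurable function on $(Y_t = 1)$), I obtain
\[
    \vartheta_t^\star \;=\; \frac{\ex\!\bigl[Y_t(X_t-\pi_t(\oZ_t))\bigl(\bh_t(\oX_t,\oZ_t)-h_t(\oZ_t)\bigr)\bigr]}{\rho(t)} \;=\; \ex\!\bigl[G_t\bigl(\bh_t(\oX_t,\oZ_t)-h_t(\oZ_t)\bigr)\bigr].
\]
Since $\ex[G_t]=0$ and $Y_tG_t=G_t$, the right-hand side equals $\cov\!\bigl(G_t,Y_t(\bh_t-h_t(\oZ_t))\bigr)$, matching the stated formula. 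Convexity of the quadratic objective ensures that $(\vartheta^\star,g^\star)$ is a global minimizer and that $\vartheta_t^\star$ is uniquely determined whenever $\rho(t)>0$, which gives uniqueness up to $\mathrm{d}t$-null sets.

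Finally, comparing with the representation \eqref{eq:LCMisLCM2},
\[
    \gamma_t \;=\; \int_0^t \ex\!\bigl[G_s\bigl(\bh_s(\oX_s,\oZ_s)-h_s(\oZ_s)\bigr)\bigr]\,\mathrm{d}s \;=\; \int_0^t \vartheta_s^\star\,\mathrm{d}s,
\]
yields the claimed identity. By uniqueness, the same holds for any solution $(\vartheta, g)$ of \eqref{eq:projectionobjective}. The main obstacle, as noted, is not the algebra but ensuring that the pointwise optimizers $t\mapsto(\vartheta_t^\star, g_t^\star)$ assemble into a jointly admissible pair within the class of measurable $\vartheta$ and progressively measurable $g_t(\oZ_t)$; this is addressed by writing $\vartheta_t^\star$ and $g_t^\star$ as explicit regular conditional expectations and invoking standard measurable selection arguments.
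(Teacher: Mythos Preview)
Your proposal is correct and follows essentially the same approach as the paper: decouple the objective in time via Fubini/Tonelli, solve the pointwise weighted least-squares problem for each fixed $t$, and identify the resulting coefficient with the integrand in \eqref{eq:LCMisLCM2}. The only cosmetic difference is that the paper carries out the pointwise minimization by an explicit substitution $g_t = \ell_t + h_t - \vartheta_t \pi_t$ followed by expanding the square and minimizing the resulting quadratic in $\vartheta_t$, whereas you set first-order conditions and invoke convexity; both routes are standard and equivalent here.
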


All proofs are deferred to the supplement.

\begin{example} \label{ex:pahm_cont}
Continuing Example \ref{ex:pahm} we see that for the partially additive hazards model 
given by \eqref{eq:additivemodel}, the minimizer given by Proposition \ref{prop:LCMisL2projection}
is 
\begin{align*}
    \vartheta^\star_t & = \cov\!\big(G_t,Y_t\big(\bh_t(\oX_t,\oZ_t)-h_t(\oZ_t)\big)\big) \\
    & = \cov\!\big(G_t,\theta_t Y_t\big(X_s - \pi_t(\ol{Z}_t)\big)\big) \\
    & = \frac{\theta_t \ex[Y_t(X_t -\pi_t(\ol{Z}_t))^2]}{\rho(s)} = \theta_t.
\end{align*}
Thus, as also established directly in Example \ref{ex:pahm}, for the partially additive hazards model, 
$\gamma_t = \Theta_t = \int_0^t \theta_s \mathrm{d} s$.
\end{example}

\begin{example} \label{ex:simple_cont}
As a generalization of the introductory Example \ref{ex:simple}, suppose that $X_t \in \{0, 1\}$
and that 
\[
    \bh_t(\oX_t,\oZ_t) = X_t \theta_t(\oZ_t) + g_t(\oZ_t).
\]
If $\theta_t(\oZ_t)$ does not depend on $\oZ_t$, this model is also a partially additive hazards model.
In general, we see that 
\[
    h_t(\oZ_t) = \pi_t(\oZ_t) \theta_t(\oZ_t) + g_t(\oZ_t)
\]
where $\pi_t(\oZ_t) = \mathbb{P}(X_t = 1 \given T \geq t, \oZ_t)$. Hence,
\begin{align*}
    \rho(t) & = \ex\left[Y_t(X_t - \pi_t(\oZ_t))^2 \right] \\
    & = \ex\left[Y_t \ex\left[(X_t - \pi_t(\oZ_t))^2 \mid T \geq 1, \oZ_t \right]  \right] \\
    & = \ex\left[Y_t \pi_t(\oZ_t)(1 - \pi_t(\oZ_t)) \right].
\end{align*}
The minimizer given by Proposition \ref{prop:LCMisL2projection} is   
\begin{align*}
    \vartheta^\star_t & = \cov\!\big(G_t,Y_t\big(\bh_t(\oX_t, \oZ_t)-h_t(\oZ_t)\big)\big) \\
    & = \cov\!\big(G_t,\theta_t(\oZ_t) Y_t \big(X_t - \pi_t(\oZ_t)\big)\big) \\
    & = \ex\left[ \theta_t(\oZ_t) \frac{Y_t \ex\left[  \big(X_t - \pi_t(\oZ_t)\big)^2  \given T \geq t, Z \right]}{\rho(t)} \right] \\
    & = \ex\left[ \theta_t(\oZ_t) \frac{Y_t \pi_t(\oZ_t)(1 - \pi_t(\oZ_t))} {\ex\left[Y_t \pi_t(\oZ_t)(1 - \pi_t(\oZ_t)) \right]} \right] \\
    & =  \ex\left[ \theta_t(\oZ_t) w_t^{\textsc{ac}}(\oZ_t) \right],
\end{align*}
where the Aalen covariance weights are 
\begin{equation}
    w_t^{\textsc{ac}}(Z) = \frac{Y_t \pi_t(\oZ_t)(1 - \pi_t(\oZ_t))} {\ex\left[Y_t \pi_t(\oZ_t)(1 - \pi_t(\oZ_t)) \right]}. 
\end{equation}
In the special case with baseline exposure and baseline covariates, as treated 
in Example~\ref{ex:simple}, these weights reduce to the weights in \eqref{eq:AC-weights}, 
and the estimand given by \eqref{eq:ACM-baseline} is, indeed, the ACM.

To give a more specific example of a model that is 
\emph{not} a partially additive model, suppose 
that $Z_t \in \real^d$ and that $\theta_t(\oZ_t) = \theta^T Z_t$ for $\theta \in \real^d$. Then 
    \[
        \gamma_t^w = \theta^T \int_0^t \ex[Z_s w_s^{\textsc{ac}}(\oZ_s)] \mathrm{d} s
        = \theta^T \omega_t^{\textsc{ac}} 
        = \sum_{k=1}^d \theta_k \omega_{k, t}^{\textsc{ac}}
    \]
    is a linear combination of the $\theta_k$-parameters with  
    $\omega_{k, t}^{\textsc{ac}} = \int_0^t  \ex\left[ Z_{k,s} w_s^{\textsc{ac}}(\oZ_s) \right] \mathrm{d} s$.
\end{example}

Proposition~\ref{prop:LCMisL2projection} allows us to interpret the ACM as a cumulative direct effect in the best $L^2$ model-approximation by a partially additive hazards model.
This underlines that the ACM is well-defined and retains some level of interpretability for (small) deviations from the partially additive model. In contrast, the model-based target parameter in \eqref{eq:cumulativeeffect} is \textit{a priori} undefined when the partially additive model is misspecified. 

    

Beyond the partially additive model, as treated in Examples \ref{ex:pahm} and \ref{ex:pahm_cont}, 
we cannot expect the ACM to have a simple closed-form expression. Even for 
the simple baseline model \eqref{eq:cond-hazard-baseline}, the innocent looking representation 
\[
    \gamma_t = \int_0^t \ex\left[ \theta_s(Z) w_s^{\textsc{ac}}(Z) \right] \mathrm{d} s,
\]
that follows from Example \ref{ex:simple_cont}, involves 
the weights and in turn the predictable projections $\pi_t(Z)$. We elaborate on the computation of 
$\pi_t(Z)$ for a multiplicative (baseline) hazards model in supplementary Section~\ref{app:multhaz}, and we discuss the 
representation of the ACM further for the Cox proportional hazards model in 
Example~\ref{ex:LCMinCox}.

%

\subsection{Censoring}\label{sec:acmcensoring}
Let $N_t^*\coloneq \one(T^* \leq t)$ denote the uncensored counting process,  and let $N_t^{\mathrm{c}} = \one(C\leq t)$ denote the counting process for censoring. Recall the definitions of the filtrations $\cF_t$ and $\cG_t$ in Remark \ref{rem:filt},
and define similarly the filtrations $\cF_t^* \coloneq \sigma(Z_s, N_s^*; s \leq t)$ and 
$\cG_t^* \coloneq \sigma(X_s,Z_s,N_s^*; s \leq t)$.

We assume two conditions on the censoring mechanism, both are forms of \emph{independent right-censoring} in the sense of \citet{AndersenBorganGillKeiding:1993}. We refer to \cite{martinussen2006dynamic,AndersenBorganGillKeiding:1993} for interpretations and intuition behind this type of censoring. 
Formulated in terms of local independence as in \citep{roysland2022graphical}, we specifically assume that $N_t^*$ is locally independent of $N_t^{\mathrm{c}}$ given both $\cF_t$ and $\cG_t$. 
Alternatively, the censoring condition may be formulated as in Assumption~\ref{asm:censoring} below.

An equivalent definition of the conditional hazards, $h_s$ and $\bh_s$, is that
\begin{equation*}
    N_t^* - \int_0^t \one(T^*\geq s) h_s(\oZ_s) \mathrm{d}s
    \qquad \text{and} \qquad
    N_t^* - \int_0^t \one(T^*\geq s) \bh_s(\oX_t,\oZ_s) \mathrm{d}s
\end{equation*}
are martingales with respect to the filtrations $\cF_t^*$ and $\cG_t^*$, respectively. Independent right-censoring means that the conditional hazards also yield valid compensators of the observed counting process in the following sense.
\begin{asm}[Independent right-censoring] \label{asm:censoring}
It holds that
\begin{align*}
    N_t - \int_0^t \one(T\geq s) h_s(\oZ_s) \mathrm{d}s
    \qquad \text{and} \qquad
    N_t - \int_0^t \one(T\geq s) \bh_s(\oX_t,\oZ_s) \mathrm{d}s
\end{align*}
are martingales with respect to the filtrations $\cF_t$ and $\cG_t$, respectively.    
\end{asm}
All of our results regarding estimation and asymptotic theory will rely on Assumption~\ref{asm:censoring}, and no other assumptions on censoring.

It is natural to ask if the ACM depends on the censoring distribution. 
For the partially additive hazards model \eqref{eq:additivemodel}, the ACM is determined by the conditional hazard $\bh$
and is thus unaffected by the censoring distribution in this special case. However, the more general interpretations of the ACM in \eqref{eq:LCMisLCM} and in Proposition~\ref{prop:LCMisL2projection} depend \textit{a priori} on the censoring distribution via the at-risk indicator $Y_t$.
In view of \eqref{eq:LCMisLCM}, the ACM is unaffected by censoring if for all $t\in [0,1]$,
\begin{align*}
    \frac{\ex[u(\oX_t,\oZ_t)\given C\geq t,T^*\geq t]}{
    \ex[v(X_t,\oZ_t)\given C\geq t,T^*\geq t]
    }
    =
    \frac{\ex[u(\oX_t,\oZ_t)\given T^*\geq t]}{
    \ex[v(X_t,\oZ_t)\given T^*\geq t]
    },
\end{align*}
where 
\begin{align*}
    u(\oX_t,\oZ_t) 
        &= (X_t-\pi_t(\oZ_t))(\bh_t(\oX_t,\oZ_t)-h_t(\oZ_t)), \\
    v(X_t,\oZ_t) 
        &= (X_t-\pi_t(\ol{Z}_t))^2.
\end{align*}
This holds, for example, when $C\ind X,Z,T^*$. 


\section{Estimation}\label{sec:acmestimation}
We let $[n] = \{1, \ldots, n\}$ for $n \in \mathbb{N}$ and assume that a sample of $n$ observations, $(\bW^{(i)})_{i\in[n]} =(T^{(i)},\Delta^{(i)},X^{(i)},Z^{(i)})_{i\in [n]}$, is available as independent copies of the template observation $\bW \coloneq (T,\Delta,X,Z)$.
For each index $i\in [n]$, we use a superscript notation, ${}^{(i)}$, for the corresponding quantities introduced in Section~\ref{sec:acmsetup}. For example, $M_t^{(i)} = N_t^{(i)} - H_{t\wedge T^{(i)}}(\oZ_{t\wedge T^{(i)}}^{(i)})$ is the $Z^{(i)}$-compensated counting process, where $N_t^{(i)}=\one(T^{(i)}\leq t, \Delta^{(i)}=1)$. 

We will use \textit{cross-fitting} to construct an estimator of the ACM, but for simplicity we first describe our estimator based on single sample split $[n] = \mathcal{I}_1 \cup \mathcal{I}_2$, where $\mathcal{I}_1$ and $\mathcal{I}_2$ are disjoint sets. For convenience we let $n_1 \coloneq |\mathcal{I}_1|$ and $n_2 \coloneq |\mathcal{I}_2|$, and we assume that $\min\{n_1,n_2\}\to \infty$ as $n\to \infty$.
The data indexed by $\mathcal{I}_1$, henceforth called the \textit{training data}, is used to compute estimates $\widehat{\pi}_t$
and $\widehat{h}_t$ of the functions $\pi_t$ and $h_t$. 

The remaining data $(\bW_i)_{i\in \mathcal{I}_2}$ is used to compute the estimate of ACM by the following 
three-step procedure:
\begin{enumerate}[(a)]
    \item  Compute an estimate of $\rho(\cdot)$, as defined by \eqref{eq:rho}. Such an estimate could be
    \begin{equation}\label{eq:empiricalrho}
        \tilde{\rho}(t) = \frac{1}{n_2}\sum_{i\in \mathcal{I}_2} 
            Y_t^{(i)} \big(X_t^{(i)}-\widehat{\pi}_t(\oZ_t^{(i)})\big)^2.
    \end{equation}
    Since we need to divide by an estimate of $\rho(\cdot)$, it will be convenient to consider a clipped estimate of the form $\widehat{\rho}(\cdot) = \max\{\tilde{\rho}(\cdot),c\}$ for some $c>0$. We return to this discussion in both the general theory and for practical considerations.
    
    \item Compute estimates of the residual process and $Z^{(i)}$-compensated counting process by 
    \begin{align}\label{eq:residualestimate}
        \widehat{G}_t^{(i)} &= \widehat{\rho}(t)^{-1}Y_t^{(i)}(X_t^{(i)}-\widehat{\pi}_t(\oZ_t^{(i)})),\\
        \hatM_t^{(i)} &= N_t^{(i)} - \int_0^tY_s\,\widehat{h}_s(\oZ_s^{(i)})\mathrm{d}s, \nonumber
    \end{align}
    for each index $i\in \mathcal{I}_2$.
    
    \item Compute the ACM estimator, denoted $\widehat{\gamma}=(\widehat{\gamma}_t)$, by
    \begin{equation}\label{eq:ACMestimator}
        \widehat{\gamma}_t 
        = \frac{1}{n_2} \sum_{i\in \mathcal{I}_2} \int_0^t \widehat{G}_s^{(i)} \mathrm{d}\hatM_s^{(i)},
        \qquad t\in [0,1].
    \end{equation}
\end{enumerate}

While the formula \eqref{eq:ACMestimator} is the same for the LCM estimator proposed in \cite{christgau2023nonparametric}, there is an important technical distinction:
the residual estimate $\widehat{G}_t^{(i)}$ cannot be determined from the training data and $i$-th observation alone, since the estimate $\widehat{\rho}$ is based on all of the data indexed by $\mathcal{I}_2$.
Thus the residual estimates are not conditionally independent given the training data, which makes the theoretical analysis of the estimator a bit more involved. See the discussion in Section \ref{sec:acmdiscuss} for an elaboration regarding this difference. 

A general procedure for computing the ACM estimator is described in Algorithm~\ref{alg:acm}. There are additional numerical challenges when computing the ACM estimator in practice. The estimator \eqref{eq:ACMestimator} is, for instance, defined as an integral, and its practical evaluation may require numerical integration. We return to this practical discussion in 
Section \ref{sec:acmpractical}.

\begin{algorithm} \caption{Single split estimate of ACM} \label{alg:acm}
  \textbf{input}: sample $(T^{(i)},\Delta^{(i)},X^{(i)},Z^{(i)})_{i \in [n]}$ and partition $\mathcal{I}_1 \cup \mathcal{I}_2 = [n]$\;
  \textbf{options}: regression methods for estimation of $h_t(\cdot)$ and $\pi_t(\cdot)$\;
  \Begin{
    fit conditional hazard $\widehat{h}_t(\cdot)$ by regressing $(T^{(i)},\Delta^{(i)})_{i\in\mathcal{I}_1}$ onto $(\ol{Z}^{(i)}_t)_{i\in\mathcal{I}_1}$\; 
    fit projection $\widehat{\pi}_t(\cdot)$ by regressing $(X^{(i)}_t)_{i\in\mathcal{I}_1}$ onto $(T^{(i)},\Delta^{(i)},\ol{Z}^{(i)}_t)_{i\in\mathcal{I}_1}$\; 
    compute $\widehat{\rho}(\cdot)$ using $\widehat{\pi}_t(\cdot)$ and $(T^{(i)},\Delta^{(i)},X^{(i)},Z^{(i)})_{i\in\mathcal{I}_2}$\;
    compute $(\widehat{G}^{(i)},\hatM^{(i)})_{i\in\mathcal{I}_2}$ according to \eqref{eq:residualestimate}\;
  }
  \textbf{output}: the ACM estimate $\widehat{\gamma}_t = \frac{1}{n_2} \sum_{i\in\mathcal{I}_2} 
    \int_0^t \widehat{G}_s^{(i)}\mathrm{d}\hatM_s^{(i)}$\;
\end{algorithm} 

\subsection{Cross-fitted ACM estimator}\label{sec:acmcross-fitting}
The ACM estimator defined in \eqref{eq:ACMestimator}, and described in Algorithm~\ref{alg:acm}, is based on a single data split $[n]=\mathcal{I}_1\cup \mathcal{I}_2$ and only uses the data indexed by $\mathcal{I}_2$ for estimation of the target. The more sophisticated technique of 2-fold cross-fitting produces an additional estimate by swapping the roles of $\mathcal{I}_1$ and $\mathcal{I}_2$ and aggregates the two estimates by averaging. More generally, $K$-fold cross-fitting partitions the data into $K$ disjoint folds, and cyclically uses each fold for estimation of the ACM. We assume that the folds have sizes 
at least $\floor{n/K}$ and at most $\ceil{n/K}$.
The procedure is described in detail in Algorithm~\ref{alg:acmx}, and the resulting cross-fitted estimator is referred to as the \textsc{x-acm} and is denoted by $\widecheck{\gamma}$. We refer to \cite{chernozhukov2018} and references therein for a more general description of cross-fitting.

\begin{algorithm} \caption{$K$-fold cross-fitted \textsc{x-acm}} \label{alg:acmx}
  \textbf{input}: sample $(T^{(i)},\Delta^{(i)},X^{(i)},Z^{(i)})_{i\in[n]}$ and partition $J_1 \cup \cdots \cup J_K= [n]$\;
  \textbf{options}: options for Algorithm~\ref{alg:acm}\;
  \Begin{
    \For{$k = 1,\ldots, K$}{
        compute $\widehat{\gamma}^{k}$ using Algorithm~\ref{alg:acm} with $\mathcal{I}_1 = \bigcup_{\ell \neq k} J_\ell$ and $\mathcal{I}_2 = J_k$\;
    }
  }
  \textbf{output}: the \textsc{x-acm} estimate $\widecheck{\gamma} = \frac{1}{K} \sum_{k=1}^K \widehat{\gamma}^{k}$
\end{algorithm}

\section{Asymptotic theory}\label{sec:acmasymptotics}
We present in this section asymptotic representations of the estimation errors, 
$\widehat{\gamma} - \gamma$ and $\widecheck{\gamma} - \gamma$, for the sample split and cross-fit 
estimators. We leverage these representations to derive asymptotic Gaussian process limits, 
which can be used for statistical inference. 
The asymptotic properties of the general LCM estimator were analyzed in \cite{christgau2023nonparametric} for the specific purpose of understanding type I and type II errors of the \textit{Local Covariance Test} -- a test of local independence based on the LCM estimator. In \cite{christgau2023nonparametric} it was established, under suitable regularity conditions, that:
\begin{itemize}
    \item The asymptotic limit of the LCM estimator is a Gaussian martingale under the null hypothesis $\gamma = 0$.
    \item The LCM estimator is $\sqrt{n}$-consistent using the \textit{additive residual process} $X_t-\Pi_t$. 
\end{itemize}
These results cannot directly be applied to our proposed ACM estimator, as the residual estimates in \eqref{eq:residualestimate} are constructed with a mutual dependence via $\widehat{\rho}$. However, by tailoring the theoretical analysis to the residual process given by \eqref{eq:residualprocess}, and its corresponding estimates in \eqref{eq:residualestimate}, we establish the asymptotic Gaussian process limit of $\sqrt{n_2}(\widehat{\gamma}-\gamma)$, also under the alternative $\gamma\neq 0$. 

It will be instructive and useful for the general analysis to first understand the asymptotics of the ACM estimator in the simplified scenario where $\rho$ is assumed known. In this case we can largely use the results from \cite{christgau2023nonparametric}, and we can then in Section \ref{sec:acmgeneral} combine these results with a decomposition 
of the full estimation error to arrive at the general results when $\rho$ is estimated.

\subsection{Asymptotics with a \texorpdfstring{$\rho$}{rho}-oracle}

In this section we analyze the asymptotics of the $\rho$-oracle estimator
\begin{align}\label{eq:tgamma}
    \widetilde{\gamma}_t \coloneq \frac{1}{n_2} \sum_{i\in\mathcal{I}_2} \int_0^t \frac{Y_t^{(i)}(X_t^{(i)}-\widehat{\pi}_t(\oZ_t^{(i)}))}{\rho(s)}\mathrm{d}\hatM_s^{(i)}, 
    \qquad t\in[0,1].
\end{align}
That is, we assume that $\rho$ is known and not estimated. 
Since the terms in the sum are conditionally i.i.d. given $(\widehat{\pi}_t,\widehat{h}_t)$ -- and these nuisance estimates are obtained from the training data only -- the estimator $\tgamma_t$ fits into the framework of \cite{christgau2023nonparametric}.
While \cite{christgau2023nonparametric} did not establish an asymptotic limit in the alternative $\gamma\neq 0$, we can follow their initial asymptotic analysis. 
To this end, it will be convenient to first define the processes 
\begin{align}\label{eq:additiveresidual}
     E_t &\coloneq Y_t(X_t-\Pi_t) = \rho(t) G_t, \\
    \widehat{E}_t &\coloneq Y_t(X_t-\widehat{\pi}_t(\oZ_t)) = \widehat{\rho}(t)\widehat{G}_t.
\end{align}

The following assumption is essentially a reformulation of Assumption 4.1 in \citep{christgau2023nonparametric}. 

\begin{asm}[Boundedness]\label{asm:boundedness}
    There exist constants $C_{\bh}, C_E, c_\rho > 0$ such that for each $t\in [0,1]$:
    \begin{enumerate}
        \item[(i)] $Y_t\max\big\{\bh_t(\oX_t,\oZ_t),\ \widehat{h}_t(\oZ_t)\big\} \leq C_{\bh}$.
        \item[(ii)] $\max\{|E_t|,|\widehat{E}_t|\} \leq C_E$.
        \item[(iii)] $\rho(t) \geq c_\rho$.
    \end{enumerate}
\end{asm}
A few remarks regarding Assumption~\ref{asm:boundedness} are in order.
\begin{remark}[Remarks on boundedness assumptions]\label{rmk:boundedassumptions}
    \phantom{} 
    \begin{enumerate}[(a)]
    \item From the innovation theorem it follows that 
    $$
        Y_t\bh_t(\oX_t,\oZ_t) \leq C_{\bh} 
            \implies Y_th_t(\oZ_t) \leq C_{\bh} .
    $$
    Since hazards are non-negative, the assumption $(i)$ also entails the bounds $Y_t|\bh_t(\oX_t,\oZ_t) - h_t(\oZ_t)|\leq C_{\bh}$ and $Y_t |h_t(\oZ_t) - \widehat{h}_t(\oZ_t)|\leq C_{\bh}$.x
    \item If the exposure process takes values in an interval, that is, $X_t\in [a,b]$, then also $\Pi_t\in[a,b]$ and it is reasonable to impose $\widehat{\pi}_t\in[a,b]$. 
    In this case, $(ii)$ is satisfied with $C_E = b-a$. 

    \item Note that
    $$
        \rho(t) = \mathbb{P}(T\geq t)\ex[(X_t-\Pi_t)^2\given T\geq t].
    $$
    Thus Assumption~\ref{asm:boundedness} (iii) is equivalent to the two conditions:
    \mbox{$\mathbb{P}(T\geq 1)>0$} and $\inf_{t\in[0,1]}\ex[(X_t-\Pi_t)^2\given T\geq t]>0$. The first is natural in order to perform statistical inference up to time $t=1$.
    The second condition should also not be surprising: even for the partially additive model we would need 
    $\ex[(X_t-\Pi_t)^2\given T\geq t]>0$ for all $t \in [0, 1]$ to identify the coefficient $\theta_t$.
    
    In case that \textit{(iii)} appears to be violated for the observed data, it is always possible to restrict the observation window and target the ACM over a shorter timespan.
    \end{enumerate}
\end{remark}

To establish convergence it is, of course, also necessary to control the estimation error of the nuisance parameters.

\begin{asm}[Analogous to Assumption 4.2 in \cite{christgau2023nonparametric}]\label{asm:rateconditions}
    With 
    \begin{align*}
        a_n \coloneq \int_0^1 \ex\Big[Y_t(\widehat{\pi}_t(\oZ_t)-\pi_t(\oZ_t))^2\Big] \mathrm{d}t,
        \quad \text{and} \quad
        b_n \coloneq \int_0^1 \ex\Big[Y_t(\widehat{h}_t(\oZ_t)-h_t(\oZ_t))^2\Big] \mathrm{d}t,
    \end{align*}
    it holds that
    \begin{align*}
        \max\{a_n,b_n,n_2\cdot a_nb_n\} \longrightarrow 0
    \end{align*}
    as $n\to\infty$.
\end{asm}

Under these assumptions, it follows that $\tgamma_t$ is asymptotically equivalent to the corresponding oracle term, where the nuisance estimates $(\widehat{\pi}_t,\widehat{h}_t)$ are replaced with $(\pi_t,h_t)$.
\begin{prop}~\label{prop:AisequivtoU}
    Under Assumptions~\ref{asm:boundedness} and \ref{asm:rateconditions}, it holds that
    \begin{equation*}
        \sup_{t\in [0,1]} |\sqrt{n_2}\cdot \tgamma_t-U_t| \xrightarrow{P} 0
    \end{equation*}
    as $n\to \infty$, where $(U_t)$ is the oracle process given by
    \begin{equation*}
        U_t = \frac{1}{\sqrt{n_2}} \sum_{i \in \mathcal{I}_2} 
        \int_0^t G_s^{(i)} \mathrm{d} M_{s}^{(i)},
        \qquad t\in[0,1].
    \end{equation*}
\end{prop}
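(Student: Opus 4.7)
My plan is to establish this result by reducing it to the analysis carried out in \cite{christgau2023nonparametric} for the general LCM estimator. The critical observation is that, since $\rho$ is treated as known and both $\widehat{\pi}_t$ and $\widehat{h}_t$ are learned exclusively from the training data indexed by $\mathcal{I}_1$, the summands defining $\tgamma_t$ are conditionally i.i.d.\ given the training data. Moreover, the process $G_t = Y_t(X_t - \Pi_t)/\rho(t)$ is the additive residual $X_t - \Pi_t$ multiplied by the deterministic, bounded, $\mathcal{F}_t$-predictable factor $Y_t/\rho(t)$ (bounded by $1/c_\rho$ under Assumption \ref{asm:boundedness}(iii)), and $\widehat{G}^{(i),\rho}_t \coloneq Y_t^{(i)}(X_t^{(i)} - \widehat{\pi}_t(\oZ_t^{(i)}))/\rho(t)$ is the analogous quantity built from the estimate of $\pi$.

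The first step is to write the telescoping decomposition
\begin{align*}
\sqrt{n_2}\,\tgamma_t - U_t
&= \underbrace{\frac{1}{\sqrt{n_2}}\sum_{i\in\mathcal{I}_2}\!\int_0^t G_s^{(i)}\big(Y_s^{(i)} h_s(\oZ_s^{(i)}) - Y_s^{(i)}\widehat{h}_s(\oZ_s^{(i)})\big)\mathrm{d}s}_{\eqcolon R_1(t)} \\
&\quad + \underbrace{\frac{1}{\sqrt{n_2}}\sum_{i\in\mathcal{I}_2}\!\int_0^t \big(\widehat{G}^{(i),\rho}_s - G_s^{(i)}\big)\mathrm{d}M_s^{(i)}}_{\eqcolon R_2(t)} \\
&\quad + \underbrace{\frac{1}{\sqrt{n_2}}\sum_{i\in\mathcal{I}_2}\!\int_0^t \big(\widehat{G}^{(i),\rho}_s - G_s^{(i)}\big)\big(Y_s^{(i)} h_s(\oZ_s^{(i)}) - Y_s^{(i)}\widehat{h}_s(\oZ_s^{(i)})\big)\mathrm{d}s}_{\eqcolon R_3(t)}.
\end{align*}
Each of these matches, up to the deterministic factor $1/\rho(s)$, one of the remainder terms analysed in the proof of Proposition 4.5 in \cite{christgau2023nonparametric}, so the same techniques apply.

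For $R_3$, an application of Cauchy--Schwarz together with $\|1/\rho\|_\infty \le 1/c_\rho$ yields the bound $\sup_t |R_3(t)| \le c_\rho^{-1}\sqrt{n_2 a_n b_n}$ in expectation, which vanishes by Assumption \ref{asm:rateconditions}. For $R_2$, one conditions on the training data and on the covariate histories $(\oZ^{(i)})_{i\in\mathcal{I}_2}$; then $R_2$ becomes a sum of i.i.d., conditionally mean-zero martingales in $t$, so Doob's $L^2$-inequality bounds $\ex[\sup_t R_2(t)^2]$ by a constant times $a_n$, which tends to zero. For $R_1$, one first shows pointwise convergence via Chebyshev using the bound on $\var(R_1(t))$ in terms of $b_n$ (using $\ex[G_s\mid \ol{N}_s,\oZ_s] = 0$ to get a mean-zero summand given the training data), and then upgrades to uniform convergence via stochastic equicontinuity: a conditional Hoeffding argument shows that the normalized increments $(R_1(t)-R_1(s))/(t-s)$ are sub-Gaussian uniformly in $s<t$, so the chaining lemma (Corollary A.14 in \cite{christgau2023nonparametric}) gives tightness and hence the uniform limit.

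The main obstacle is the uniform control of $R_1$, since this is the only remainder term where pointwise convergence does not immediately upgrade by a Doob-type inequality; handling it requires the chaining argument just sketched. Once the three terms are shown to vanish uniformly, the claim $\sup_t|\sqrt{n_2}\tgamma_t - U_t| \xrightarrow{P} 0$ follows by the triangle inequality.
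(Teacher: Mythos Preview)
Your approach is correct and largely parallels the paper's, but your three-term decomposition differs in one instructive way from the paper's four-term one. The paper writes $\sqrt{n_2}\,\tgamma = U + R^{(1)} + R^{(2)} + R^{(3)} + R^{(4)}$, where $R^{(2)}$ is the stochastic integral of $\widetilde{G}-G$ against the $\mathcal{G}_t$-martingale $\bM$, and $R^{(4)}$ is the drift correction $\int_0^t(\widetilde{G}_s-G_s)(\blambda_s-\lambda_s)\,\mathrm{d}s$; these two together equal your $R_2$. The paper then appeals to Proposition~4.4 of \cite{christgau2023nonparametric} for $R^{(1)},R^{(2)},R^{(3)}$ and to Lemma~A.10 (a separate chaining argument) for $R^{(4)}$. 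Your merging of $R^{(2)}+R^{(4)}$ into a single integral against the $\mathcal{F}_t$-martingale $M$, handled directly by Doob, is a legitimate shortcut: it works precisely because $\widetilde{G}_s-G_s = Y_s(\Pi_s-\widehat{\Pi}_s)/\rho(s)$ is $\mathcal{F}_t$-predictable (conditionally on the training data), which is the additive-residual property exploited in the paper's Lemma~A.10. So your route is slightly more economical for this specific residual.

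Two minor corrections: the factor $Y_t/\rho(t)$ is not deterministic (only $\rho$ is; $Y_t$ is random but $\mathcal{F}_t$-predictable), and for your $R_2$ argument you need only condition on the training data, not additionally on the covariate histories $(\oZ^{(i)})_{i\in\mathcal{I}_2}$---conditioning on the full $Z$-paths could in principle alter the compensator of $N$ and is unnecessary here.
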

The proof of Proposition~\ref{prop:AisequivtoU} is largely based on the results of \cite{christgau2023nonparametric}, with a few minor modifications. 
Under Assumptions~\ref{asm:boundedness} and \ref{asm:rateconditions}, it is possible to establish that the process $U-\sqrt{n_2} \gamma$ converges in distribution to a continuous Gaussian process by appealing to a central limit theorem in Skorokhod space. However, as we are primarily interested in the actual ACM estimator, $\widehat{\gamma}$, we return to the asymptotics of the process $(U_t)$ in the proof of Theorem \ref{thm:ACMasymptotics}, see Section \ref{sec:acmproofACMasymptotics} in the supplement, where it is analyzed in combination with another non-vanishing term.

\begin{remark}\label{rem:empiricalrates}
    It should be possible to weaken the rate requirements to the empirical errors. To wit, if we define
    \begin{align*}
        \tilde{a}_n
            &\coloneq \frac{1}{n_2}\sum_{i\in \mathcal{I}_2} \int_0^1 Y_t^{(i)}(\widehat{\pi}_t(\oZ_t^{(i)}) - \pi_t(\oZ^{(i)}))^2 \mathrm{d}t, \\
        \tilde{b}_n &\coloneq \frac{1}{n_2}\sum_{i\in \mathcal{I}_2} \int_0^1 Y_t^{(i)}(\widehat{h}_t(\oZ^{(i)}) - h_t(\oZ^{(i)}))^2 \mathrm{d}t,
    \end{align*}
    then Assumption~\ref{asm:rateconditions} can likely be replaced by the weaker condition that
    $$
        \max\{\tilde{a}_n,\tilde{b}_n,n_2\cdot \tilde{a}_n\tilde{b}_n\} \xrightarrow{P} 0,
        \qquad n\to \infty.
    $$   
    In fact, $(\tilde{a}_n,\tilde{b}_n) \xrightarrow{P}0$ is equivalent with $(a_n,b_n) = (\ex[\tilde{a}_n],\ex[\tilde{b}_n])\to 0$ under Assumption~\ref{asm:boundedness} due to uniform integrability. 
    The scaled product error, $n_2\cdot \tilde{a}_n\tilde{b}_n$, is not \textit{a priori} uniformly integrable, so its convergence does not imply that $n_2\cdot a_nb_n \to 0$. 
\end{remark}

\subsection{Asymptotics when \texorpdfstring{$\rho$}{rho} is estimated} \label{sec:acmgeneral}

To analyze the ACM estimator in the general case, where $\rho$ is estimated, we consider the decomposition
\begin{align}\label{eq:acmdecomposition}
    \sqrt{n_2}\cdot \widehat{\gamma}
        = 
        \sqrt{n_2} \cdot \tgamma + B,
\end{align}
where $\tgamma$ is the $\rho$-oracle estimator defined as in \eqref{eq:tgamma}, and where $B$ is the process given by
\begin{align*}
    B_t = \frac{1}{\sqrt{n_2}} \sum_{i\in\mathcal{I}_2} \int_0^t 
    \left(
    \frac{1}{\widehat{\rho}(s)} 
    -
    \frac{1}{\rho(s)}
    \right)
    \widehat{E}_s^{(i)}
    \mathrm{d}\hatM_s^{(i)}.
\end{align*}
In Proposition~\ref{prop:AisequivtoU} we found an asymptotically equivalent oracle process for $\sqrt{n_2}\cdot \tgamma$. 
We proceed to find an asymptotically equivalent oracle process for $B$, and then we identify the limit distribution of the sum of the oracle processes.
The decompositions of $\tgamma$ and $B$ into oracle terms and remainder terms can be found in the proofs of Proposition \ref{prop:AisequivtoU} and Theorem~\ref{thm:BequivtoV}.

In view of Assumption~\ref{asm:boundedness}~$(iii)$, we will for our general theory, and in the residual estimates $\widehat{G}^{(i)}$, consider the clipped estimator $\widehat{\rho}$ given by
\begin{equation}\label{eq:clippedrho}
    \widehat{\rho}(t) = \max\{c_{\widehat{\rho}}, \tilde{\rho}(t)\},
\end{equation}
for a $c_{\widehat{\rho}} > 0$ and where $\tilde{\rho}$ is the empirical estimate from \eqref{eq:empiricalrho}. We assume that $c_{\widehat{\rho}}$ is chosen such that $\lim_{n\to \infty}c_{\widehat{\rho}} = 0$ and $c_{\widehat{\rho}}^{-1} = o(\sqrt{n_2})$, or in other words, $c_{\widehat{\rho}}$ tends to zero, but slower than $1/\sqrt{n_2}$. In particular, this implies that $c_{\widehat{\rho}} < c_{\rho}$
eventually, and thus the event $(\widehat{\rho}\neq \tilde{\rho})$ has vanishing probability. 

To control the error of $\widehat{\rho}$ with a sufficiently fast rate, we require not only that $\widehat{\pi}_t$ is estimated consistently, but that it can be estimated with a $n^{1/4}$-rate.
\begin{asm}\label{asm:Pirate}
    With $a_n' \coloneq \int_0^1 \ex\big[Y_t(\widehat{\pi}_t(\oZ_t)-\pi_t(\oZ_t))^4\big] \mathrm{d}t,$
    it holds that $n_2 a_n'\to 0$ as $n\to \infty$.
\end{asm}
This assumption is analogous to standard assumptions in the partially linear model, cf. Assumption 4.1(e)(ii) in \citet{chernozhukov2018}.
However, we have assumed the rate on the slightly more restrictive 4-norm.
We believe that this rate requirement could be relaxed to $\sqrt{n_2} a_n\to 0$ under (potentially) additional mild assumptions. The reason that $a_n'$ simplifies the asymptotic analysis is that it controls the estimation error of $\widehat{\rho}$ in terms of the 2-norm.
\begin{prop}\label{prop:rhorate}
    Under Assumptions~\ref{asm:boundedness} and \ref{asm:Pirate}, 
    it holds that 
    \begin{equation}\label{eq:rho2normbound}
       \limsup_{n\to \infty} n_2\int_0^1 \ex\big[(\widehat{\rho}(t) - \rho(t))^2\big] 
       \mathrm{d}t
       \leq \frac{C_E^4}{4} . 
    \end{equation}
\end{prop}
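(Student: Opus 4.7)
The plan is to first reduce the clipped estimator $\widehat\rho$ to the unclipped empirical estimator $\tilde\rho$, and then decompose the squared error of $\tilde\rho$ into a conditional variance term (which produces the target constant $C_E^4/4$) and a conditional squared bias term (which vanishes at rate $o(1/n_2)$ by Assumption~\ref{asm:Pirate}).

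For the clipping reduction, since $c_{\widehat\rho}\to 0$ while Assumption~\ref{asm:boundedness}(iii) gives $\rho(t)\geq c_\rho>0$ uniformly, eventually $c_{\widehat\rho}\leq c_\rho\leq\rho(t)$; the $1$-Lipschitz projection $x\mapsto\max\{c_{\widehat\rho},x\}$ then fixes $\rho(t)$, so that $(\widehat\rho(t)-\rho(t))^2\leq(\tilde\rho(t)-\rho(t))^2$ pointwise for $n$ large. It suffices to prove the claim for $\tilde\rho$. Writing $\mathcal{D}_1$ for the training data that determines $\widehat\pi$, and using that the summands $(\widehat E_t^{(i)})^2$ for $i\in\mathcal{I}_2$ are i.i.d.\ conditional on $\mathcal{D}_1$, the orthogonal decomposition
\begin{equation*}
    \ex[(\tilde\rho(t)-\rho(t))^2] = \ex[\var(\tilde\rho(t)\mid\mathcal{D}_1)] + \ex[(\ex[\tilde\rho(t)\mid\mathcal{D}_1]-\rho(t))^2]
\end{equation*}
splits the task into two independent estimates.

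For the variance term, Assumption~\ref{asm:boundedness}(ii) gives $(\widehat E_t)^2\in[0,C_E^2]$, so Popoviciu's inequality bounds the conditional variance of a single summand by $C_E^4/4$; averaging over the $n_2$ i.i.d.\ copies and integrating yields $n_2\int_0^1\ex[\var(\tilde\rho(t)\mid\mathcal{D}_1)]\,\mathrm{d}t\leq C_E^4/4$, which is exactly the target constant. For the bias term I would expand $(X_t-\widehat\pi_t)^2$ around $\pi_t$ as $(X_t-\pi_t)^2+2(X_t-\pi_t)(\pi_t-\widehat\pi_t)+(\pi_t-\widehat\pi_t)^2$, multiply by $Y_t$, and condition on $\mathcal{D}_1$. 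The first piece contributes $\rho(t)$; the cross term vanishes because for a held-out $i$ independent of $\mathcal{D}_1$, conditioning on $(Y_t,\bar Z_t,\mathcal{D}_1)$ gives $\ex[X_t\mid Y_t=1,\bar Z_t]=\pi_t(\bar Z_t)$ by the defining identity $Y_t\Pi_t=Y_t\pi_t(\bar Z_t)$ of the predictable projection (and the term is trivially $0$ on $\{Y_t=0\}$). Thus $\ex[\tilde\rho(t)\mid\mathcal{D}_1]-\rho(t)=\ex[Y_t(\pi_t-\widehat\pi_t)^2\mid\mathcal{D}_1]$, and squaring, applying Jensen to pull the square inside the conditional expectation, and integrating yields the bound $a_n'$, so this contribution after multiplication by $n_2$ is $n_2 a_n'\to 0$ by Assumption~\ref{asm:Pirate}.

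Combining the two bounds gives the advertised $\limsup\leq C_E^4/4$. The only genuinely delicate step is verifying the vanishing of the cross term, which relies on the sample-split independence of $\widehat\pi$ from the held-out sample together with the characterization $Y_t\Pi_t=Y_t\pi_t(\bar Z_t)$; the remaining ingredients — clipping contraction, Popoviciu for a bounded variance, Jensen applied to a conditional expectation — are routine.
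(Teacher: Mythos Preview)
Your proof is correct and arguably a bit cleaner than the paper's, but it takes a different decomposition. The paper expands $(\widehat E_t^{(i)})^2$ algebraically around $(E_t^{(i)})^2$, writing $\sqrt{n_2}(\tilde\rho-\rho)=\sqrt{n_2}(\bar\rho-\rho)+2r_1+r_2$ with $\bar\rho(t)=\frac{1}{n_2}\sum_i (E_t^{(i)})^2$, $r_1(t)=\frac{1}{\sqrt{n_2}}\sum_i E_t^{(i)}(\widehat E_t^{(i)}-E_t^{(i)})$ and $r_2(t)=\frac{1}{\sqrt{n_2}}\sum_i(\widehat E_t^{(i)}-E_t^{(i)})^2$. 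It then bounds each piece in $\|\cdot\|_2$ separately (Popoviciu for $\bar\rho$, conditional i.i.d.\ variance computation for $r_1$, Cauchy--Schwarz and Assumption~\ref{asm:Pirate} for $r_2$) and combines via Minkowski. Your conditional bias--variance split achieves the same endpoint more directly, since orthogonality replaces Minkowski and the cross term vanishes exactly rather than being bounded. The trade-off is that the paper's decomposition is deliberately chosen so that the pieces $\bar\rho$, $r_1$, $r_2$ and the bounds $\|r_1\|_2\to 0$, $\|r_2\|_2\to 0$ can be reused verbatim in a subsequent lemma; your argument proves the proposition but does not leave these intermediate objects available.
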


The following result describes the asymptotics of $B$.

\begin{thm}\label{thm:BequivtoV}
    Under Assumptions \ref{asm:boundedness}, \ref{asm:rateconditions}, and \ref{asm:Pirate}, it holds that 
    $$
        \sup_{t\in[0,1]}|B_t - V_t| \xrightarrow{P} 0,
    $$
    where $V=(V_t)$ is the process given by
    \begin{align*}
    V_t = 
    \sqrt{n_2}\gamma_t  -
    \frac{1}{\sqrt{n_2}}\sum_{i\in\mathcal{I}_2}\int_0^t 
    \frac{(E_s^{(i)})^2}{\rho(s)}
    \ex[G_s(\bh_s(\oX_s,\oZ_s)-h_s(\oZ_s))]\mathrm{d}s.
    \end{align*}
\end{thm}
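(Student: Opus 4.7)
The plan is to exploit the factorization
\begin{equation*}
    B_t = \int_0^t \big(\widehat{\rho}(s)^{-1} - \rho(s)^{-1}\big)\,\mathrm{d}\mathcal{M}_s,
    \qquad \mathcal{M}_s \coloneq \frac{1}{\sqrt{n_2}}\sum_{i\in\mathcal{I}_2}\int_0^s \widehat{E}_u^{(i)}\,\mathrm{d}\hatM_u^{(i)},
\end{equation*}
linearize $\widehat{\rho}^{-1}-\rho^{-1}$ around $\rho$, and substitute $\tilde{\rho}(s) = n_2^{-1}\sum_{j\in\mathcal{I}_2}(\widehat{E}_s^{(j)})^2$. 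This converts $B_t$ into a double sum over $(i,j)\in\mathcal{I}_2^2$, whose leading off-diagonal mean contribution will match $V_t$ exactly, while the martingale and degenerate-fluctuation components vanish at rate $n_2^{-1/2}$.

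First, by Proposition~\ref{prop:rhorate} together with the choice $c_{\widehat{\rho}}^{-1}=o(\sqrt{n_2})$, I may replace $\widehat{\rho}$ by $\tilde{\rho}$ up to $o_P(1)$, and a second-order Taylor expansion gives
\begin{equation*}
    \tilde{\rho}(s)^{-1} - \rho(s)^{-1} = -\frac{\tilde{\rho}(s)-\rho(s)}{\rho(s)^2} + R(s), \qquad |R(s)| \leq \frac{(\tilde{\rho}(s)-\rho(s))^2}{c_{\widehat{\rho}}\,c_\rho^2}.
\end{equation*}
The $R$-contribution to $\sup_t|B_t|$ is bounded by a constant times $c_{\widehat{\rho}}^{-1}\sqrt{n_2}\,\|\tilde{\rho}-\rho\|_2^2 = O_P(c_{\widehat{\rho}}^{-1}n_2^{-1/2}) = o_P(1)$. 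Substituting $\tilde{\rho}$ then yields, uniformly in $t$,
\begin{equation*}
    B_t = -\frac{1}{n_2^{3/2}}\sum_{i,j\in\mathcal{I}_2}\int_0^t \frac{(\widehat{E}_s^{(j)})^2-\rho(s)}{\rho(s)^2}\,\widehat{E}_s^{(i)}\,\mathrm{d}\hatM_s^{(i)} + o_P(1),
\end{equation*}
and the diagonal $(i=j)$ part is $O_P(n_2^{-1/2})$ by bounded summands and the $n_2^{-3/2}$ scaling.

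For the off-diagonal part, split $\mathrm{d}\hatM_s^{(i)} = \mathrm{d}\bM_s^{(i)} + Y_s^{(i)}(\bh_s(\oX_s^{(i)},\oZ_s^{(i)})-\widehat{h}_s(\oZ_s^{(i)}))\,\mathrm{d}s$ using the $(X,Z)$-compensated $\cG_t$-martingale $\bM^{(i)}$ from Remark~\ref{rem:filt}. The resulting martingale piece is itself a martingale in $t$ with respect to the joint filtration $\bigvee_{k\in\mathcal{I}_2}\cG_t^{(k)}$, since each $\bM^{(i)}$ remains a martingale under enlargement by independent data. A variance computation exploiting the orthogonality of distinct $\bM^{(k)}$ shows its second moment at $t=1$ is $O(n_2^{-1})$, so Doob's inequality yields uniform $o_P(1)$ control. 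For the bias piece, successively replacing $\widehat{E}\to E$ and $\widehat{h}\to h$ costs $o_P(1)$ by Assumptions~\ref{asm:rateconditions} and \ref{asm:Pirate}, using the predictable identity $\ex[E_s\,g(\oZ_s)]=0$ for any measurable $g$ to eliminate cross-terms. What remains is
\begin{equation*}
    -\frac{1}{n_2^{3/2}}\sum_{i\neq j}\int_0^t \frac{(E_s^{(j)})^2-\rho(s)}{\rho(s)^2}\,E_s^{(i)}\,(\bh_s^{(i)}-h_s^{(i)})\,\mathrm{d}s,
\end{equation*}
whose mean (obtained by conditioning on the $j$-index and using independence with the $i$-index) equals
\begin{equation*}
    -\frac{1}{\sqrt{n_2}}\sum_{j\in\mathcal{I}_2}\int_0^t \frac{(E_s^{(j)})^2-\rho(s)}{\rho(s)}\,\alpha(s)\,\mathrm{d}s,
\end{equation*}
with $\alpha(s)\coloneq \ex[G_s(\bh_s-h_s)]$. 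Combined with the identity $\gamma_t=\int_0^t\alpha(s)\,\mathrm{d}s$ from~\eqref{eq:LCMisLCM2}, this expression equals $V_t$ exactly. The deviation from this mean is a centered degenerate U-statistic of order two with variance $O(n_2^{-1})$, and Fubini together with Cauchy-Schwarz provides the required $\sup_t$-control at rate $n_2^{-1/2}$.

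The main obstacle is the rate accounting for the nuisance replacements: each substitution $\widehat{E}\to E$ or $\widehat{h}\to h$ sits inside a triple product summed over a double index with $n_2^{-3/2}$ scaling, and can only be absorbed into $o_P(1)$ by a careful chain of applications of Cauchy-Schwarz together with Assumptions~\ref{asm:rateconditions} and \ref{asm:Pirate}. In particular, the implication $a_n = o(n_2^{-1/2})$ obtained from Assumption~\ref{asm:Pirate} via Cauchy-Schwarz is crucial for absorbing the leading-order nuisance errors. A secondary technical point is the uniform-in-$t$ control of the off-diagonal martingale double sum, which relies critically on recognizing that, under the combined filtration, the process is genuinely a $t$-martingale; without this observation, a chaining argument in the spirit of Section~5 of \citet{christgau2023nonparametric} would be required instead.
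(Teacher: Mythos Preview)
Your strategy is essentially the paper's: linearize $\widehat{\rho}^{-1}-\rho^{-1}$, separate the martingale from the drift contribution, replace the nuisance estimates, and identify the surviving term with $V_t$. The organizational difference is that you expand $\tilde{\rho}$ into a sum over $j$ at the outset and treat everything as a double sum over $(i,j)$ (a U-statistic viewpoint), whereas the paper keeps the factor $\zeta_s=(\rho-\widehat{\rho})/\rho^2$ intact throughout, writes $B=\tilde B+\tilde R^{(1)}+\cdots+\tilde R^{(5)}$, and only expands $\tilde\rho=\bar\rho+r_1+r_2$ at the very end via a Cauchy--Schwarz bound in $L^2[0,1]$ combined with a Banach-space LLN. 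Your martingale piece is exactly the paper's $\tilde R^{(2)}+\tilde R^{(4)}$, your drift piece is $\tilde B+\tilde R^{(1)}+\tilde R^{(3)}+\tilde R^{(5)}$, and your Hoeffding projection is the paper's $\tilde B\to\widehat V\to V$ step. The factored route is slightly cleaner because it avoids the extra replacement $\widehat E^{(j)}\to E^{(j)}$ inside the double sum; your U-statistic framing makes the emergence of $V_t$ as the first-order projection more explicit.

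One technical slip: you first replace $\widehat{\rho}\to\tilde{\rho}$ and then Taylor-expand $\tilde{\rho}^{-1}$, bounding the remainder by $(\tilde{\rho}-\rho)^2/(c_{\widehat{\rho}}c_\rho^2)$. But $\tilde{\rho}$ is not bounded below by $c_{\widehat{\rho}}$---only $\widehat{\rho}$ is---so this bound is unjustified as written. The fix is to use the exact identity $\widehat{\rho}^{-1}-\rho^{-1}=-(\widehat{\rho}-\rho)/\rho^2+(\widehat{\rho}-\rho)^2/(\widehat{\rho}\rho^2)$ directly on $\widehat{\rho}$ (as the paper does), bound the second-order piece using $\widehat{\rho}\geq c_{\widehat{\rho}}$, and only afterwards replace $\widehat{\rho}\to\tilde{\rho}$ in the linear term, which is harmless since $\mathbb{P}(\widehat{\rho}\neq\tilde{\rho})\to 0$.
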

The proof of Theorem~\ref{thm:BequivtoV} does not readily follow from any of the results in \cite{christgau2023nonparametric}, and can be dissected into two novel steps: first control the approximation
\begin{align*}
    \left(
    \frac{1}{\widehat{\rho}(s)} 
    -
    \frac{1}{\rho(s)}
    \right)\!
    \left(X_s^{(i)} -\widehat{\pi}_s(\oZ_s^{(i)})\right)
    \mathrm{d}\hatM_s^{(i)}
    \approx
    \frac{\rho(s)-\widehat{\rho}(s)}{\rho(s)^2}
    (X_s^{(i)} -\Pi_s^{(i)})
    \mathrm{d}M_s^{(i)},
\end{align*}
asymptotically (Proposition~\ref{prop:remainderconvergenceB}), and then use a functional law of large numbers to reduce all stochasticity to the stochastics from $\rho-\widehat{\rho}$ (Proposition~\ref{prop:tildeBequivtoV}).

To summarize, we have considered the decomposition
$\sqrt{n_2}\cdot\widehat{\gamma} = \sqrt{n_2}\cdot\tgamma + B$, and established under Assumptions~\ref{asm:boundedness}, \ref{asm:rateconditions}, and \ref{asm:Pirate} that
\begin{align*}
    \sqrt{n_2}\cdot\tgamma = U + o_P(n^{-1/2})
    \quad \text{and} \quad
    B = V + o_P(n^{-1/2}).
\end{align*}
In view of \eqref{eq:LCMisLCM2} we can regard $\gamma$ as a signed measure on $[0,1]$ with Radon-Nikodym derivative
$$
\frac{\mathrm{d}\gamma_s}{\mathrm{d}s} \coloneq \ex[G_s(\bh_s(\oX_s,\oZ_s)-h_s(\oZ_s))],
$$ 
which gives that
\begin{align}\label{eq:oraclesum}
    U_t + V_t - \sqrt{n_2}\gamma_t
    &= \frac{1}{\sqrt{n_2}} \sum_{i \in \mathcal{I}_2} 
    \left(\int_0^t G_s^{(i)}\mathrm{d}M_{s}^{(i)} 
        - \int_0^t E_s^{(i)} G_s^{(i)} \mathrm{d}\gamma_s\right).
\end{align}
The integrand `$G_s^{(i)}\mathrm{d}M_{s}^{(i)} - E_s^{(i)} G_s^{(i)} \mathrm{d}\gamma_s$' corresponds to the oracle term of a partially linear model on an `infinitesimal scale'. Theorem~\ref{thm:oracleconvergence} in the supplement establishes asymptotic normality of \eqref{eq:oraclesum} based on a central limit theorem in Skorokhod space.

Combining Proposition \ref{prop:AisequivtoU}, Theorem \ref{thm:BequivtoV}, and Theorem \ref{thm:oracleconvergence}, we obtain asymptotic Gaussianity of the ACM estimator and the \textsc{x-acm}, $\widecheck{\gamma}$, described in Section~\ref{sec:acmcross-fitting}.
\begin{thm}\label{thm:ACMasymptotics}
    Let $\Gamma=(\Gamma_t)_{t\in[0,1]}$ be a continuous Gaussian process with mean zero and the same covariance function as the process
    $\big(\int_0^tG_s\mathrm{d}M_s - \int_0^t E_s G_s \mathrm{d}\gamma_s\big)_{t\in[0,1]}$.
    
    Under Assumptions~\ref{asm:boundedness}, \ref{asm:rateconditions}, and \ref{asm:Pirate}, it holds that
    \begin{align*}
        \sqrt{n_2}(\;\! \widehat{\gamma} - \gamma) \xrightarrow{d} \Gamma
        \qquad \text{and} \qquad 
        \sqrt{n}(\;\! \widecheck{\gamma} - \gamma) \xrightarrow{d} \Gamma
    \end{align*}
    with respect to the uniform topology as $n\to \infty$.
\end{thm}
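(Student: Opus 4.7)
The plan is to combine the uniform asymptotic representations in Proposition~\ref{prop:AisequivtoU} and Theorem~\ref{thm:BequivtoV} with the functional CLT announced as Theorem~\ref{thm:oracleconvergence}, and then aggregate across folds for the cross-fitted estimator. Starting from the decomposition \eqref{eq:acmdecomposition}, these two results give $\sqrt{n_2}\,\widetilde{\gamma} = U + o_P(1)$ and $B = V + o_P(1)$ uniformly in $t \in [0,1]$. Subtracting $\sqrt{n_2}\gamma$ and invoking \eqref{eq:oraclesum} yields
\[
    \sqrt{n_2}\,(\widehat{\gamma}_t - \gamma_t)
    = \frac{1}{\sqrt{n_2}} \sum_{i \in \mathcal{I}_2} \xi_t^{(i)} + o_P(1),
\]
uniformly in $t$, where $\xi_t^{(i)} \coloneq \int_0^t G_s^{(i)}\mathrm{d}M_s^{(i)} - \int_0^t E_s^{(i)} G_s^{(i)}\mathrm{d}\gamma_s$.

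The processes $(\xi^{(i)})_{i \in \mathcal{I}_2}$ are i.i.d.\ $C[0,1]$-valued, and a short calculation verifies they have mean zero: using $E_s G_s = E_s^2/\rho(s)$ together with $\ex[E_s^2] = \rho(s)$ and the representation \eqref{eq:LCMisLCM2} of $\gamma$, one obtains $\ex[\xi_t^{(i)}] = \gamma_t - \gamma_t = 0$. Their common covariance equals, by construction, the covariance function of $\Gamma$. Theorem~\ref{thm:oracleconvergence} -- a functional CLT in Skorokhod space applicable to such i.i.d.\ sums -- then gives $\frac{1}{\sqrt{n_2}}\sum_{i\in\mathcal{I}_2} \xi^{(i)} \xrightarrow{d} \Gamma$ in the uniform topology, and Slutsky's theorem finishes the single-split claim. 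For the cross-fitted estimator, applying the same representation to each fold $J_k$ (whose size is between $\lfloor n/K\rfloor$ and $\lfloor n/K\rfloor+1$) and summing over $k$ collapses the double sum into a single i.i.d.\ average:
\[
    \sqrt{n}\,(\widecheck{\gamma} - \gamma)
    = \frac{1}{\sqrt{n}} \sum_{i=1}^n \xi^{(i)} + o_P(1),
\]
and a second application of the functional CLT delivers the claimed limit.

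The main obstacle I anticipate is not in the overall assembly, which is largely bookkeeping, but in managing the interaction between the uniform-in-$t$ convergences. The linear addition of the $o_P(1)$ remainders from the two decompositions $\sqrt{n_2}\widetilde{\gamma} - U$ and $B - V$ requires that both convergences hold in the uniform norm -- which the cited results do provide -- so that no cross-terms appear. For the cross-fitted step one must additionally check that the $K$ per-fold remainders are simultaneously $o_P(1)$; this follows from applying the single-split results inside each fold, provided one verifies that the rate hypotheses (Assumptions~\ref{asm:rateconditions} and~\ref{asm:Pirate}) and the clipping constant $c_{\widehat{\rho}}$ are compatible with the slightly unequal fold sizes, which holds because $|J_k| \asymp n/K$.
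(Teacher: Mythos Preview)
Your proposal is correct and follows essentially the same assembly as the paper: decompose via \eqref{eq:acmdecomposition}, invoke Proposition~\ref{prop:AisequivtoU} and Theorem~\ref{thm:BequivtoV} to replace $\sqrt{n_2}\widetilde{\gamma}$ and $B$ by their oracle counterparts $U$ and $V$, apply the functional CLT in Theorem~\ref{thm:oracleconvergence} to the resulting i.i.d.\ sum \eqref{eq:oraclesum}, and for the cross-fitted estimator collapse the per-fold sums into a single average over $[n]$ (the paper formalizes this last step via a short Banach-space lemma handling the $\lfloor n/K\rfloor$ vs.\ $\lceil n/K\rceil$ discrepancy). One small slip: the processes $\xi^{(i)}$ are $D[0,1]$-valued, not $C[0,1]$-valued, since $M^{(i)}$ has a jump; this is immaterial because the limit $\Gamma$ is continuous, so Skorokhod convergence upgrades to uniform convergence.
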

Note that the cross-fitted estimate is scaled with $\sqrt{n}$ rather than $\sqrt{n_2}$, and is thus more efficient than the single-split estimator $\widehat{\gamma}$.

\begin{remark}
Having established asymptotic Gaussianity of the ACM estimator, it is natural to ask if the (co)variance function of $\Gamma$ can be estimated in order to perform statistical inference for $\gamma$. 
Let $\mathcal{V}$ be the variance function of the Gaussian process $\Gamma$. Under the null hypothesis that $\gamma=0$, $\Gamma$ is a martingale and 
\begin{equation*}
    \mathcal{V}(t) = \var\left( \int_0^t G_s \mathrm{d} M_s \right) = 
    \ex\left[ \int_0^t G_s^2 \mathrm{d} N_s \right],
\end{equation*}
which can be estimated by 
\[
    \frac{1}{n_2} \sum_{i\in\mathcal{I}_2} 
    \
        \left(\int_0^t \widehat{G}_s^{(i)}\mathrm{d}\hatM_s^{(i)} 
        \right)^2
        \quad \text{or} \quad 
        \frac{1}{n_2} \sum_{i\in\mathcal{I}_2} 
        \int_0^t (\widehat{G}_s^{(i)})^2\mathrm{d}N_s^{(i)},
\]
cf. Proposition 4.7 in \cite{christgau2023nonparametric}. However, if the main purpose is to test 
the null hypothesis, it suffices to use the LCM with the additive residual process $X_t - \Pi_t$ that omits the scaling by $\rho$ and thus avoids estimation of $\rho$. Then, a test of this null can be carried out using the X-LCT from \cite{christgau2023nonparametric}. However, if a pointwise confidence band for the ACM estimator is desired, we surmise that
\begin{align*}
    \widehat{\mathcal{V}}(t) = 
    \frac{1}{n_2} \sum_{i\in\mathcal{I}_2} 
    \left(
        \int_0^t \widehat{G}_s^{(i)}\mathrm{d}\hatM_s^{(i)}
        -
        \int_0^t \widehat{E}_s^{(i)}\widehat{G}_s^{(i)}\mathrm{d}\widehat{\gamma}_s
    \right)^2
\end{align*}
is a consistent estimator of $\mathcal{V}(t)$, though a proof is outstanding.
\end{remark}

\section{Simulation study}\label{sec:acmsimulations}

Simulations were conducted in the special case with time-independent $X$ and $Z$, and with administrative censoring $C=1$. 
Two different settings were considered for the data generating process:
\begin{enumerate}
    \item[($\mathcal{S}_{\text{lin}}$)] An Aalen additive model with
        \begin{align*}
            Z=(Z_1,\ldots,Z_d) &\sim \mathrm{Unif}(0,1)^{\otimes d}, \\
            X\given Z &\sim \mathrm{Unif}(Z_1,Z_1+1), \\
            \bh_t^{\mathrm{A}}(X,Z) &= 2t (1+X + \beta^\top Z),
        \end{align*}
        where the coefficient $\beta=(\beta_1,\ldots,\beta_d)$ is sampled once and independently for each dataset, with $(\beta_1,\ldots,\beta_4) \sim \mathrm{Dirichlet}(\mathbf{1}_4)$
        and $\beta_i=0$ for $5\leq i\leq d$. \\ \par
        
    \item[($\mathcal{S}_{\text{par}}$)] A partially additive model with
        \begin{align*}
            Z=(Z_1,\ldots,Z_d) &\sim \mathrm{N}(0,1)^{\otimes d}, \\
            X\given Z &\sim \mathrm{Unif}(\phi(Z_1),\phi(Z_1)+1), \\
            \bh_t^{\mathrm{pA}}(X,Z) &= 2t (1+X + \phi(Z_1)),
        \end{align*}
        where $\phi(t)=\exp(-2t^2)$ is a scaled Gaussian density.
        
\end{enumerate}
Note that both settings adhere to the partially additive model \eqref{eq:additivemodel}, and in this case the ACM coincides with the cumulative direct effect $\Theta_t = \int_0^t 2s\mathrm{d}s = t$. 
The uncensored survival time $T^*$ can in this case be simulated by
\begin{equation*}
    T^* \coloneq \frac{\sqrt{E}}{\sqrt{1+X+f(Z)}},
\end{equation*}
where $E\ind X,Z$ is standard exponentially distributed, and where $f(z)$ 
equals $\beta^\top z$ for the setting $\mathcal{S}_{\text{lin}}$, and equals $\phi(z_1)$ for the setting $\mathcal{S}_{\text{par}}$. 

The simulations were performed with covariate dimensions $d\in \{4,16\}$ and datasets of sample sizes $n\in \{200,600,1800\}$.
For each setting $N=500$ datasets were simulated.

\subsection{Estimators and implementation} \label{sec:acmpractical}
The simulation study was implemented in Python.
As a benchmark, an estimator of the exposure coefficient in the Aalen additive model was considered, which was implemented using \texttt{AalenAdditiveFitter} in the \textit{lifelines} package \citep{Davidson-Pilon2019}.
This estimator is referred to as `Aalen estimator' and denoted by $\widehat{A} = (\widehat{A}_t)$.

Estimation of the ACM requires methods for estimating the predictable projection and the conditional hazard. For estimation of $\pi_t$ we implemented the following procedure. 
\begin{enumerate}
    \item Instantiate a grid $\mathcal{T} = (\frac{k}{n_\mathcal{T}-1} \colon k= 0,1,\ldots, n_\mathcal{T}-1)$. Simulations were conducted with $n_\mathcal{T}=20$. 
    
    \item Let $[n]=\mathcal{I}_{1}\cup \mathcal{I}_2$ be a given sample-split, and 
    let $x_{it}\coloneq (t,Y_t^{(i)},Z_{1t}^{(i)},\ldots,Z_{dt}^{(i)})$, where
    $Z_{jt}^{(i)}$ denote the $j$-th covariate for the $i$-th individual (independent of $t$ in this setting). Then format:
    \begin{itemize}
        \item covariate matrices $\mathbb{X}_{\text{train}}\coloneq (x_{it})_{(i,t)\in \mathcal{I}_1 \times \mathcal{T}}$ and
         $\mathbb{X}_{\text{test}}\coloneq (x_{it})_{(i,t)\in\mathcal{I}_2 \times \mathcal{T}}$ of dimensions $(|\mathcal{I}_1|\cdot n_{\mathcal{T}})\times (d+2)$ and $(|\mathcal{I}_2| \cdot n_{\mathcal{T}})\times (d+2)$, respectively. 
        \item a response vector $\mathbb{Y}_{\text{train}} \coloneq (X_t^{(i)})_{(t,i)\in\mathcal{T}\times \mathcal{I}_1}$ of length $n_{\mathcal{T}}\cdot |\mathcal{I}_1|$.
    \end{itemize} 
    
    \item Regress $\mathbb{Y}_{\text{train}}$ onto $\mathbb{X}_{\text{train}}$ using any regression method. For this we considered OLS and \textit{gradient boosting} \citep{friedman2001greedy,hastie2009elements}, which are denoted by a subscript \texttt{lin} and \texttt{gb}, respectively. These were implemented using classes \texttt{LinearRegression} and \texttt{GradientBoostingRegressor} from the Scikit-learn package \citep{scikit-learn}.

    \item Use the fitted methods to predict response values for $\mathbb{X}_{\text{test}}$ to produce estimates of the predictable projection over the grid $\mathcal{T}$ for each test observation.

    \item Interpolate to general time points $t\in[0,1]$ using the estimated value at time $t_{\leq} = \max\{s\in \mathcal{T} \given s\leq t \}$.
\end{enumerate}
It is also possible to subset matrices in step $(2)$ to pairs $(i,t)$ such that $(T^{(i)}>t)$. This corresponds to directly targeting the quantity $\pi_t(Z) = \ex[X\given Y_t = 1, Z]$, rather than targeting $\ex[X\given Y_t,Z_t]$ as done in the steps $(3)$ and $(4)$. Both methods were tried initially, with no noteworthy differences in the analysis result observed. 
For the interpolation in step $(5)$, it is important that the method preserves predictability, which would not be ensured by, e.g., \textit{linear interpolation}.

For conditional hazard estimation, we considered two off-the-shelf implementations: 
\texttt{GradientBoostingSurvivalAnalysis} from the scikit-survival module \citep{sksurv} and the \texttt{AalenAdditiveFitter} from the \textit{lifelines} package \citep{Davidson-Pilon2019}. The two methods are again indicated by a subscript \texttt{gb} and \texttt{lin}, respectively.

The estimate $\widehat{\rho}$ was computed with the clipping value $c_{\widehat{\rho}}=0.005$. However, the estimate was set to `\texttt{nan}' rather than $c_{\widehat{\rho}}$ whenever $\tilde{\rho}(t)<c_{\widehat{\rho}}$. This is similar to how the \texttt{AalenAdditiveFitter} returns \texttt{nan} for times $t$ such that $\sum_{i} Y_t^{(i)} < 3d$.

The \textsc{x-acm} was implemented with 4-fold cross-fitting according to Algorithm~\ref{alg:acmx}. In addition, a version of the ACM estimator without any sample splitting was implemented, and this estimator is referred to as the \textsc{n-acm}. 
Thus, 4 different estimators of the ACM were implemented: 
$\textsc{n-acm}_{\texttt{lin}}$,
$\textsc{x-acm}_{\texttt{lin}}$,
$\textsc{n-acm}_{\texttt{gb}}$, and
$\textsc{x-acm}_{\texttt{gb}}$.

\subsection{Results}
\begin{figure}
    \centering
    \includegraphics[width=\linewidth]{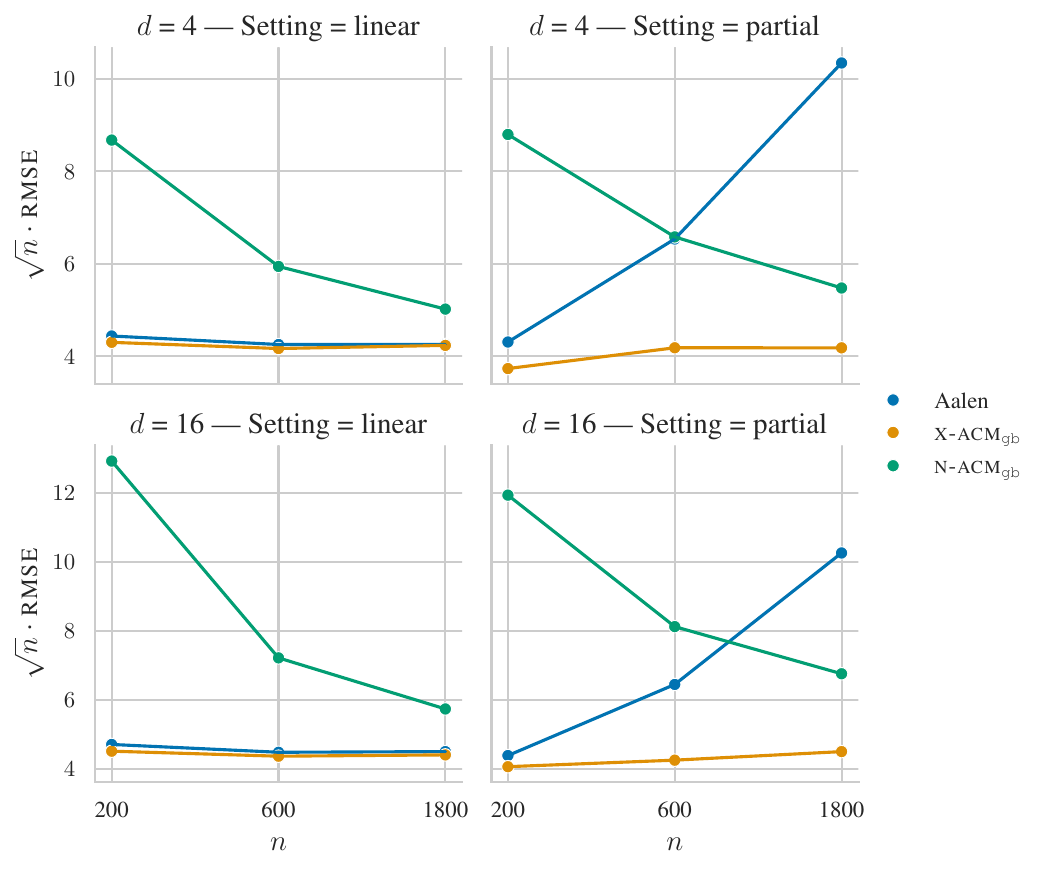}
    \caption{Scaled \textsc{RMSE} for various estimators with respect to the cumulative direct effect.}
    \label{fig:mse_all_settings}
\end{figure}
Figure~\ref{fig:mse_all_settings} shows the scaled root mean square error \textsc{RMSE} for various estimators with respect to the cumulative direct effect at time $t=67/127 \approx 0.622$. This timepoint was the largest, for which all estimators yielded well-defined estimates for all the simulated datasets (the Aalen estimator yielded \texttt{nan} at time $t=68/127$ for a dataset in the additive setting with $n=200$ and $d=16$).
We observed initially that the ACM estimators based on linear regression methods, namely $\textsc{n-acm}_{\texttt{lin}}$ and $\textsc{x-acm}_{\texttt{lin}}$, behaved similarly to the Aalen estimate $\widehat{A}_t$. In view of this fact, and since our focus is on estimation of the ACM using flexible learning methods, the results for linear methods are omitted for more clear visualization.

 In the left column of Figure~\ref{fig:mse_all_settings}, corresponding to the setting $\mathcal{S}_{\text{lin}}$, we observe that the Aalen estimator and $\textsc{x-acm}_{\texttt{gb}}$ have similar performance and appear to be $\sqrt{n}$-consistent. 
It is expected that both these estimators are $\sqrt{n}$-consistent, but it is perhaps surprising that the $\textsc{x-acm}_{\texttt{gb}}$ is as efficient as the Aalen estimator in the linear setting. The small discrepancy might be due to numerical approximations when fitting the estimators to the same common grid, or difference in bias-variance tradeoff because of (implicit) penalization. 

The estimator $\textsc{n-acm}_{\texttt{gb}}$ without sample splitting performs significantly worse, although the discrepancy seems to diminish for larger sample sizes. 
In theory, sample-splitting is used to control the empirical process terms\footnote{
In our analysis, these are the terms denoted by $R^{(1)},R^{(2)},R^{(4)}, r_1, \tilde{R}^{(2)}$, and $\tilde{R}^{(4)}$ found in the supplementary Section~\ref{sec:acmproofs}.}, but these terms can also be handled by Donsker conditions or algorithmic stability, see for example the discussions in \cite{chernozhukov2018,chen2022debiased}. This indicates, that such Donsker conditions are violated in our setup, and that the gradient boosting methods are not sufficiently stable. 
In practice, overfitting can lead to biases that would otherwise not appear with sample-splitting \citep{chernozhukov2018}. 
Since the gradient boosted methods are fitted with the same hyper parameters, and these were tuned towards the larger sample sizes, it is plausible that they overfit in the smaller sample sizes, resulting in a systematic bias.   

We observe that each of the ACM estimators perform similarly across the linear setting $\mathcal{S}_{\text{lin}}$ and in the partial setting $\mathcal{S}_{\text{par}}$. This is different for the Aalen estimator $\widehat{A}_t$, which fails severely in estimating the effect $\Theta_t$. 
This is to be expected, as the Aalen additive model is misspecified, and as a consequence, $\widehat{A}_t$ cannot reasonable account for the nonlinear effect $\phi(Z_1)$. 



In summary, we conclude that $\textsc{x-acm}_{\texttt{gb}}$ performs well across all settings, indicating that it provides an efficient and robust estimator of the cumulative direct effect.

\section{Discussion} \label{sec:acmdiscuss}
The ACM was proposed as a model-free estimand to quantify the conditional relationship between a time-to-event and an exposure, controlling for given covariates.
While it is a specific instance of the more general LCM introduced by \citet{christgau2023nonparametric}, we have highlighted several unique properties and results that apply to the ACM -- and not necessarily to the LCM -- which were not considered in \cite{christgau2023nonparametric}.

The ACM may be understood as a weighted hazard difference estimand, and we showed in Proposition~\ref{prop:LCMisL2projection} that it coincides with the cumulative exposure coefficient in the partially additive hazards model \eqref{eq:additivemodel} that best approximates the true hazard in~$L^2$. As a result, the ACM measures the cumulative direct effect within partially additive hazards model, and in particular within the Aalen additive model.

The asymptotic analysis for our proposed estimator, the \textsc{x-acm}, was to some extent based on results from \cite{christgau2023nonparametric}. However, a novel examination of the term $B$ in \eqref{eq:acmdecomposition} was necessary to account for the estimation error in the function $\rho$ appearing in the denominator. Notably, the remainder term $D^{(2)}$ from \cite{christgau2023nonparametric}, analogous to $R^{(4)}$ in \eqref{eq:Adecomposition}, does not vanish for the ACM when $\rho$ is estimated. Moreover, the decomposition in \eqref{eq:decomposition} differs from typical decompositions of estimators related to the \emph{partially linear model}, considered in, e.g., 
\citep{robinson1988root,lundborg2023perturbation,hines2023optimally}, 
because these decompositions are not feasible when integrating over time.

\textbf{Estimating $\rho$ using the training data}. 
Suppose that the function $\rho$ is estimated using the training data, which corresponds to replacing $\mathcal{I}_2$ with $\mathcal{I}_1$ in \eqref{eq:empiricalrho}. 
Although the residual estimates would then fit within the framework of \cite{christgau2023nonparametric}, the analysis does not simplify as the term $D^{(2)}$ in \cite{christgau2023nonparametric} still remains non-vanishing. Nevertheless, the analysis in Section~\ref{sec:acmgeneral} can be adapted to derive an oracle decomposition similar to \eqref{eq:oraclesum}, but with the integrals $\int_0^t E_s^{(i)} G_s^{(i)} \mathrm{d}\gamma_s$ summed over $i \in \mathcal{I}_1$. This makes it unfeasible to aggregate the cross-fitted estimates, resulting in a smaller effective sample when using this approach. Additionally, controlling the estimation error of $\rho$ in this case requires Donsker class conditions or algorithmic stability, cf. the term $r_1$ in the proof of Proposition~\ref{prop:rhorate}. Our simulations in Section~\ref{sec:acmsimulations} suggest that such conditions may not hold in practice, at least when gradient boosting is used.

\textbf{The cross-fitting scheme}. 
The \textsc{x-acm}, described in Section~\ref{sec:acmcross-fitting}, employs cross-fitting similarly to the \textsc{dml}1 estimator described in \cite{chernozhukov2018}. While it is conceivable that a cross-fitting scheme corresponding to \textsc{dml}2 --~where numerators and denominators are aggregated separately across sample splits~-- could be implemented, the integration over time complicates this approach. If possible, however, such an approach might lead to increased numerical stability in small samples, cf. Remark~3.1 in \cite{chernozhukov2018}.

\textbf{Relaxations of assumptions}. 
We already discussed, in Remark~\ref{rem:empiricalrates}, the possibility of relaxing the rate requirements to hold for the empirical errors instead of the expected errors. 
The assumptions on boundedness, Assumptions~\ref{asm:boundedness} $(i)$ and $(ii)$, 
can be found in similar works, cf. Theorem~2 in \cite{vansteelandt2022assumption} or Assumption~1 in \cite{hou2023treatment}. It is, however, likely that they can be replaced by suitable tail bounds or bounds on conditional variances, but at the cost of more technical proofs. 
Finally, Assumption~\ref{asm:Pirate} states that $\Pi$ can be estimated at an order $o(n^{-1/4})$ in terms of the $4$-norm. While the order $o(n^{-1/4})$ is to be expected, see \citep{chernozhukov2018}, it is conceivable that the condition can be relaxed to hold for the $2$-norm rather than the slightly more restrictive $4$-norm.

\textbf{Alternative estimands.}
A limitation of the ACM is the inverse weighting by the function $\rho$, potentially making it difficult to estimate accurately. An interesting direction for future work is to consider alternative ways of weighting the hazard difference. 
This may lead to estimands that are easier to estimate, but still interpretable. 
Such a compromise was considered by \citet{vansteelandt2022assumption}, who propose a hazard ratio estimand that deliberately avoids inverse weighting by the conditional density of the exposure. However, their estimand still requires inverse weighting by the cumulative hazard, which imposes a limitation similar to the inverse weighting with $\rho$. 

The related works \citet{dukes2019doubly,hou2023treatment}, cf. Table~\ref{tab:overview}, consider estimation of $\Theta_t$ in \eqref{eq:additivemodel}, with time-independent exposure and covariates, based on (efficient) orthogonal score methodology. It would be interesting to investigate how the \textsc{x-acm} performs relatively to their methodology, both for a well-specified and misspecified model. Furthermore, supposing that their methodologies can be shown to target general model-free estimands, it would be interesting to compare these estimands with the ACM.

\section*{Acknowledgments}
We are grateful to Anton Rask Lundborg for helpful discussions. AMC and NRH were supported by a research grant (NNF20OC0062897) from Novo Nordisk Fonden.
    
\clearpage
\section*{Supplement to `Assumption-lean Aalen regression'}

The supplementary material consists of Section~\ref{sec:acmproofs}, which contains proofs of the main results and
related auxiliary lemmas, and Section~\ref{app:multhaz}, which includes a discussion of the ACM in the context of multiplicative hazards models.

\supplementarysection{Auxillary results and proofs}\label{sec:acmproofs}
\subsection{Additional notation for proofs}\label{sec:acmproofnotation}
While the main manuscript is formulated using histories and the conditional hazards $h_t$ and $\bh_t$, it will be convenient to work with filtrations and intensities in the proofs, as also discussed in Remark \ref{rem:filt}. Recall that 
 we define the filtrations $\cF_t \coloneq \sigma(Z_s, N_s; s \leq t)$ and $\cG_t \coloneq \sigma(X_s,Z_s,N_s; s \leq t)$.
 The stochastic processes $\lambda=(\lambda_t)$ and $\blambda = (\blambda_t)$, defined by
\begin{align*}
    \lambda_t \coloneq Y_t h_t(\oZ_t) 
    \qquad \text{and} \qquad
    \blambda_t \coloneq Y_t \bh_t(\oX_t,\oZ_t),
\end{align*}
are then, under independent right-censoring as described in Section~\ref{sec:acmcensoring}, 
the $\cF_t$- and $\cG_t$-intensity, respectively, of $N_t$.
We remark that, by definition, this means that $M_t = N_t - \int_0^t \lambda_s \mathrm{d}s$ is an $\cF_t$-martingale and that
\begin{align*}
    \bM_t \coloneq N_t - \int_0^t \blambda_s \mathrm{d}s
\end{align*}
is a $\cG_t$-martingale. 
By the innovation theorem, it also holds that $\ex[\blambda_t \given \cF_{t-}]= \lambda_t$.
This notation is consistent with \cite{christgau2023nonparametric}.

For the proofs in the asymptotic analysis we will additionally use the following notation:
\begin{align*}
    \widehat{\lambda}_t^{(i)} \coloneq Y_t^{(i)}\widehat{h}_t(\oZ^{(i)}) \\
    \widehat{\Pi}_t^{(i)} \coloneq \widehat{\pi}_t(\oZ^{(i)})
\end{align*}
for $i\in [n]$ and $t\in [0,1]$. 
Finally, we let $\|\cdot\|_p$ denote the norm on $L^p([0,1]\times \Omega, \mathrm{d}t\otimes \mathbb{P})$, so that, for example,
\begin{align*}
    b_n = \int_0^1 \ex\big[(\widehat{\lambda}_t- \lambda_t)^2\big] \mathrm{d}t 
    = \|\widehat{\lambda} - \lambda\|_2^2.
\end{align*}


\subsection{Proof of Proposition \ref{prop:LCMisL2projection}}
\label{sec:acmproofofLCMisL2}
We first consider the minimization objective for a fixed timepoint $t\in[0,1]$.
\begin{lem}\label{lem:fixedtimeL2projection}
    For fixed $t\in[0,1]$, it holds that 
    \begin{align}\label{eq:projectionsolutions}
    \begin{cases}
        \vartheta_t^\star = \cov(G_t,\blambda_t-\lambda_t) \\
        g_t^\star = h_t -\cov(G_t, \blambda_t-\lambda_t) \pi_t
    \end{cases}
    \end{align}
    are solutions to
    \begin{align*}
        \minimize_{\vartheta_t\in \real, g_t \in L^0(P_{\oZ_t})}& \quad 
            \ex\left[Y_t\left(\bh_t(\oX_t,\oZ_t)
            - (\vartheta_t X_t + g_t(\oZ_t))\right)^2\right] 
    \end{align*}
    where $P_{\oZ_t}$ denotes the distribution of $\oZ_t$.
    Moreover, the solution $\vartheta_t^\star$ is unique.
\end{lem}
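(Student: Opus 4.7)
The plan is to solve this constrained quadratic minimization by profiling: first optimize over the function $g_t$ for each fixed scalar $\vartheta_t$, and then optimize the resulting scalar function of $\vartheta_t$ alone.

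For fixed $\vartheta_t \in \real$, the inner problem is a weighted $L^2$-projection of $\bh_t - \vartheta_t X_t$ onto $\sigma(\oZ_t)$-measurable functions, with weighting given by the indicator $Y_t$. Using the variational characterization $\ex[Y_t(\bh_t - \vartheta_t X_t - g_t(\oZ_t))\phi(\oZ_t)] = 0$ for every bounded measurable $\phi$, the minimizer is identified as
\[
g_t^{(\vartheta_t)}(\oZ_t) = \ex[\bh_t - \vartheta_t X_t \,|\, T \geq t, \oZ_t] = h_t(\oZ_t) - \vartheta_t\pi_t(\oZ_t),
\]
wherever $\mathbb{P}(T \geq t \,|\, \oZ_t) > 0$; values elsewhere are immaterial since $Y_t = 0$ there almost surely. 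The last equality uses the very definitions of $h_t$ and $\pi_t$ as the conditional means of $\bh_t$ and $X_t$ given $T \geq t$ and $\oZ_t$.

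Substituting $g_t^{(\vartheta_t)}$ back into the objective, and using $Y_t^2 = Y_t$, reduces the problem to the scalar quadratic
\[
F(\vartheta_t) = \ex\!\left[Y_t\big((\bh_t - h_t) - \vartheta_t(X_t - \pi_t)\big)^2\right] = A - 2\vartheta_t B + \vartheta_t^2 \rho(t),
\]
where $B = \ex[Y_t(X_t - \pi_t)(\bh_t - h_t)]$. Since $\rho(t) > 0$ by the standing positivity hypothesis, $F$ is strictly convex and has the unique minimizer $\vartheta_t^\star = B/\rho(t)$.

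The final step is to identify this ratio with $\cov(G_t, \blambda_t - \lambda_t)$. A direct computation shows $\ex[G_t] = 0$: taking iterated expectation over $\oZ_t$ gives $\ex[Y_t(X_t - \pi_t) \,|\, \oZ_t] = \ex[Y_t \,|\, \oZ_t]\pi_t - \ex[Y_t \,|\, \oZ_t]\pi_t = 0$. Hence
\[
\cov(G_t, \blambda_t - \lambda_t) = \ex[G_t(\blambda_t - \lambda_t)] = \rho(t)^{-1}\ex[Y_t(X_t - \pi_t)(\bh_t - h_t)] = B/\rho(t),
\]
which gives $\vartheta_t^\star = \cov(G_t, \blambda_t - \lambda_t)$, and consequently $g_t^\star = h_t - \vartheta_t^\star \pi_t$. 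The principal subtlety lies in the inner minimization: the conditional-mean formula for $g_t^{(\vartheta_t)}$ is only identified on the event $\{\mathbb{P}(T \geq t \,|\, \oZ_t) > 0\}$, and one must verify that the weighting by $Y_t$ renders the objective insensitive to the choice elsewhere. This is a bookkeeping matter rather than a substantive obstacle, and uniqueness of $\vartheta_t^\star$ follows immediately from strict convexity once $\rho(t) > 0$ is in force.
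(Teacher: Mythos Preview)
Your proof is correct and follows essentially the same route as the paper: both profile out $g_t$ for fixed $\vartheta_t$ (the paper does this via the reparametrization $g_t = \ell_t + h_t - \vartheta_t \pi_t$ and shows $\ell_t = 0$ is optimal, while you invoke the weighted conditional-mean characterization directly), then minimize the resulting scalar quadratic in $\vartheta_t$ using $\rho(t) > 0$.
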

\begin{proof}
    Recall that on the event $(Y_t=1)=(T\geq t)$, the predictable projection $\Pi_t$ takes the value $\pi_t(\oZ_t) = \ex[X_t\given T\geq t, \oZ_t]$. 
    Since $\pi_t, h_t \in L^0(P_{\oZ_t})$, the minimization problem is equivalent to minimizing 
    \begin{align*}
        u(\vartheta_t,\ell_t)
        &\coloneq
        \ex\Big[Y_t\big(\bh_t(\oX_t,\oZ_t) - h_t(\oZ_t) 
            - \vartheta_t (X_t-\pi_t(\oZ_t))-\ell_t(\oZ_t)\big)^2\Big] \\
        &=
        \ex\Big[\big(\blambda_t - \lambda_t 
            - \vartheta_t Y_t(X_t-\Pi_t)-Y_t\ell_t(\oZ_t)\big)^2\Big],
    \end{align*}
    which corresponds to the substitution 
    $g_t =\ell_t + h_t - \vartheta_t \pi_t$. Note that
    $$
        \ex[\blambda_t - \lambda_t \given \cF_{t-}] = 0 
        =\ex[X_t-\Pi_t\given \cF_{t-}].
    $$
    Now, expanding the square in $u(\vartheta_t,\ell_t)$ and using that $\ex[\ell_t(\oZ_t)^2]\geq 0$, we obtain that
    \begin{align*}
        u(\vartheta_t,\ell_t)
            &\geq \ex[(\blambda_t-\lambda_t)^2] 
                + \vartheta_t^2\ex[Y_t(X_t-\Pi_t)^2] 
            - 2\vartheta_t \ex[(\blambda_t- \lambda_t)(X_t-\Pi_t)].
    \end{align*}
    Equality is attained when $\ell_t =0$, which is equivalent to 
    $g_t = h_t  - \vartheta_t \pi_t \in L^0(P_{\oZ_t})$. The right-hand side is a quadratic in $\vartheta_t$ and its unique minimizer is given by
    \begin{align*}
        \frac{\ex[(\blambda_t-\lambda_t)(X_t-\Pi_t)]}{\ex[Y_t(X_t-\Pi_t)^2]}
        =
        \cov(G_t, \blambda_t-\lambda_t).
    \end{align*}
    This shows that $(\vartheta_t^\star,g_t^\star)$ in \eqref{eq:projectionsolutions} are indeed minimizers and that $\vartheta^\star$ is unique.
\end{proof}

Returning to the proof of Proposition~\ref{prop:LCMisL2projection}, we let $(\vartheta_t^{\star}, g_t^{\star})$ be the pointwise minimizers from \eqref{eq:projectionsolutions} for each $t\in [0,1]$.
For any $(\vartheta,g)$ as in Proposition~\ref{prop:LCMisL2projection}, it follows directly from Tonelli's theorem and Lemma~\ref{lem:fixedtimeL2projection} that
\begin{align}\label{eq:minimizationprofiling}
     \textnormal{minimization objective}
     &= \int_0^1 \ex\left[Y_t\big(\bh_t(\oX_t,\oZ_t) 
                - \vartheta_t X_t - g_t(\oZ_t)\big)^2
                \right]\mathrm{d}t \nonumber\\
    &\geq 
     \int_0^1 \ex\left[ Y_t\big(\bh_t(\oX_t,\oZ_t) 
                - \vartheta_t^\star X_t - g_t^\star(\oZ_t)\big)^2
        \right] \mathrm{d}t.
\end{align}
We conclude that $(\vartheta^\star,g^\star)$ is indeed a minimizer to \eqref{eq:projectionobjective}, since $\vartheta_t$ is measurable and $g_t^\star \in L^0(P_{\oZ_t})$ for each $t\in[0,1]$. 
Note that equality occurs in \eqref{eq:minimizationprofiling} 
if and only if
\begin{align*}
    \ex\left[Y_t\big(\bh_t(\oX_t,\oZ_t) 
                - \vartheta_t X_t - g_t(\oZ_t)\big)^2\right]
        = \ex\left[ Y_t\big(\bh_t(\oX_t,\oZ_t) 
                - \vartheta_t^\star X_t - g_t^\star(\oZ_t)\big)^2
                \right]
\end{align*}
for almost all $t\in [0,1]$. By the uniqueness of $\vartheta^\star$ in Lemma~\ref{lem:fixedtimeL2projection}, this is equivalent to $\vartheta_t = \vartheta_t^\star$ for almost all $t\in [0,1]$. The last part of the proposition now follows since $\gamma_t = \int_0^t \vartheta_s^\star \mathrm{d}s$, according to \eqref{eq:LCMisLCM}.
\hfill $\Box$

\subsection{Proof of Proposition~\ref{prop:AisequivtoU}}
We consider a decomposition similar to that of \cite{christgau2023nonparametric}, but with their non-vanishing processes combined. To this end, define the $\rho$-oracle residual estimates
$$
    \widetilde{G}_t \coloneq \frac{\widehat{E}_t}{\rho(t)} 
    = \frac{Y_t(X_t-\widehat{\Pi}_t)}{\rho(t)},
    \qquad t\in [0,1].
$$
Then note that
\begin{align}\label{eq:Adecomposition}
    \sqrt{n_2}\cdot \tgamma = U + R^{(1)} + R^{(2)} + R^{(3)} + R^{(4)},
\end{align}
where the processes $U$, $R^{(1)}$, $R^{(2)}$, $R^{(3)}$, and $R^{(4)}$ are given by
    \begin{align*}
    U_t & = \frac{1}{\sqrt{n_2}} \sum_{i \in \mathcal{I}_2} 
    \int_0^t G_s^{(i)} \mathrm{d} M_{s}^{(i)}, 
    \\
    R^{(1)}_{t} & = \frac{1}{\sqrt{n_2}} \sum_{i \in \mathcal{I}_2}  \int_0^t
    G_s^{(i)}
    \big( \lambda_{s}^{(i)} - \widehat{\lambda}_{s}^{(i)}\big) 
    \mathrm{d}s,
    \\
    R^{(2)}_{t} & = \frac{1}{\sqrt{n_2}} \sum_{i \in \mathcal{I}_2}  \int_0^t 
    \big( \widetilde{G}_{s}^{(i)} - G_s^{(i)} \big) \mathrm{d} \bM_{s}^{(i)},
    \\
    R^{(3)}_{t} & = \frac{1}{\sqrt{n_2}} \sum_{i \in \mathcal{I}_2}  \int_0^t 
    \big( \widetilde{G}_{s}^{(i)} - G_s^{(i)} \big) 
    \big( \lambda_{s}^{(i)} - \widehat{\lambda}_{s}^{(i)} \big) 
    \mathrm{d}s,
    \\
    R_t^{(4)} & = \frac{1}{\sqrt{n_2}} \sum_{i \in \mathcal{I}_2}  \int_0^t 
    \big(\widetilde{G}_{s}^{(i)} - G_s^{(i)}\big)
    \big(\blambda_{s}^{(i)} - \lambda_{s}^{(i)}\big) 
    \mathrm{d}s.
    \end{align*}
In view of this decomposition, it suffices to show that 
$$
    \sup_{t\in [0,1]}|R_t^{(\ell)}|\xrightarrow{P} 0
$$ 
as $n\to\infty$ for $\ell = 1,\ldots, 4$.
    To this end, note that Assumption~\ref{asm:boundedness} ensures that Assumption~4.1 in \cite{christgau2023nonparametric} holds with: 
    \begin{align*}
        \max\{|G_t|,|\widetilde{G}_t|\} \leq c_{\rho}^{-1} C_E, 
        \qquad \text{and} \qquad
        \max\{\blambda_t,\widehat{\lambda}_t\} \leq C_{\bh}.
    \end{align*}
    Assumption~\ref{asm:rateconditions} can be translated directly into Assumption~4.2 in \cite{christgau2023nonparametric}.
    We therefore conclude that $R^{(1)}$, $R^{(2)}$, and $R^{(3)}$ converge (uniformly) to zero in probability by Proposition 4.4 in \cite{christgau2023nonparametric}. 
    For the last remainder process we have that
    \begin{align*}
        \sup_{t\in [0,1]}|R_t^{(4)}| \leq 
        c_{\rho}^{-1}\sup_{t\in [0,1]}\Big\lvert\frac{1}{\sqrt{n_2}} \sum_{i \in \mathcal{I}_2}  \int_0^t 
        \big(\widehat{E}_{s}^{(i)} - E_s^{(i)}\big)
        \big(\blambda_{s}^{(i)} - \lambda_{s}^{(i)}\big) 
        \mathrm{d}s\Big\rvert \xrightarrow{P} 0,
    \end{align*}
    where the convergence in probability is established in Lemma~A.10 in \cite{christgau2023nonparametric}.
\hfill $\Box$

\subsection{Proof of Proposition~\ref{prop:rhorate}}
    For $n$ sufficiently large it holds that $c_{\widehat{\rho}}<c_{\rho}$, in which case Assumption~\ref{asm:boundedness}$(iii)$ implies that $|\widehat{\rho}(t)-\rho(t)|\leq |\tilde{\rho}(t)-\rho(t)|$ for all $t\in[0,1]$.
    Hence it suffices to establish the bound for the empirical estimator $\tilde{\rho}$ from \eqref{eq:empiricalrho}. Note also that 
    \[
        \tilde{\rho}(t) =  \frac{1}{n_2}\sum_{i\in \mathcal{I}_2} 
         \big(\widehat{E}_t^{(i)}\big)^2,
    \]
    and consider the following decomposition
    \begin{align*}
        \sqrt{n_2} \cdot (\tilde{\rho}(t)-\rho(t))
        = \sqrt{n_2} \cdot (\overline{\rho}-\rho) + 2r_1(t) + r_2(t)
    \end{align*}
    where 
    \begin{align*}
        \overline{\rho}(t) &\coloneq \frac{1}{n_2} \sum_{i\in \mathcal{I}_2} (E_t^{(i)})^2, \\
        r_1(t) &\coloneq
            \frac{1}{\sqrt{n_2}} \sum_{i\in \mathcal{I}_2} 
            E_t^{(i)}(\widehat{E}_t^{(i)} - E_t^{(i)}), \\
        r_2(t) &\coloneq
            \frac{1}{\sqrt{n_2}} \sum_{i\in \mathcal{I}_2} (\widehat{E}_t^{(i)}-E_t^{(i)})^2.
    \end{align*}
    We first note that the terms in $r_1$ are conditionally i.i.d. given $\mathcal{D}_1\coloneq \sigma(\bW^{(i)}\colon i\in\mathcal{I}_1)$ with conditional mean zero. 
    It follows that
    \begin{align*}
        \ex\big[r_1(t)^2\big] 
        &= \ex\big[\,\var(r_1(t)\given \mathcal{D}_1)\big] \\
        &= \ex\Big[\frac{1}{n_2} \sum_{i\in \mathcal{I}_2}\var\big(
        E_t^{(i)}(\widehat{E}_t^{(i)} - E_t^{(i)})\given \mathcal{D}_1\big)\Big]\\
        &= \ex\big[(E_t^{(i)})^2(\widehat{E}_t^{(i)} - E_t^{(i)})^2\big] 
        \leq C_E^2\ex[(\Pi_t-\widehat{\Pi}_t)^2],
    \end{align*}
    where we have used that $\widehat{E}_t^{(i)} - E_t^{(i)} = Y_t(X_t^{(i)} - \widehat{\Pi}_t^{(i)} - X_t^{(i)} + \Pi_t^{(i)}) =  
    Y_t(\Pi_t^{(i)} - \widehat{\Pi}_t^{(i)})$. 
    By integrating over time we obtain 
    \begin{equation}\label{eq:rhorater1}
         \| r_1 \|_2\leq C_E \|\widehat{\Pi} - \Pi\|_2 \to 0.
    \end{equation}
    For the other remainder, Cauchy-Schwarz yields
    \begin{align*}
        \ex[r_2(t)^2] 
            = \frac{1}{n_2} \sum_{i,j\in\mathcal{I}_2} 
                \ex[(\widehat{E}_t^{(i)}-E_t^{(i)})^2
                (\widehat{E}_t^{(j)}-E_t^{(j)})^2] 
                \leq n_2 \ex\big[(\widehat{E}_t-E_t)^4\big].
    \end{align*}
    By Assumption~\ref{asm:Pirate} we conclude that 
    \begin{equation}\label{eq:rhorater2}
        \|r_2\|_2^2 \leq n_2 \|\widehat{E}-E\|_4^4
        = n_2\|\widehat{\Pi}-\Pi\|_4^4\longrightarrow 0,
        \qquad n\to \infty.
    \end{equation}
    For the oracle sum, $\overline{\rho}$, we note that $\ex[\overline{\rho}(t)] = \rho(t)$, and that $\overline{\rho}(t)$ is a sum of i.i.d. random variables with the same distribution as $E_t^2 \in [0, C_E^2]$. 
    Invoking Popoviciu's inequality on variances \citep{popoviciu1935equations}, we obtain
    \begin{align}\label{eq:rhorateoraclerho}
        \int_0^1\ex\big[(\overline{\rho}(t)-\rho(t))^2\big]\mathrm{d}t
        = \int_0^1 \var\big(\overline{\rho}(t)\big)\mathrm{d}t
        = \frac{1}{n_2}\int_0^1 \var\big(E_t^2\big)\mathrm{d}t 
        \leq \frac{C_E^4}{4n_2}.
    \end{align}
    Combining \eqref{eq:rhorater1}, \eqref{eq:rhorater2}, and \eqref{eq:rhorateoraclerho} via Minkowski's inequality we obtain \eqref{eq:rho2normbound}. \par
\hfill \qedsymbol{}

\subsection{Proof of Theorem~\ref{thm:BequivtoV}}
We first note that 
$$
    \frac{1}{\widehat{\rho}(s)}
    -\frac{1}{\rho(s)}
    =
    \frac{\rho(s)-\widehat{\rho}(s)}{\rho(s)\widehat{\rho}(s)}
    =
        \underbrace{\frac{\rho(s)-\widehat{\rho}(s)}{\rho(s)^2}
        }_{\eqcolon \zeta_s}
        +
        \underbrace{
        \frac{(\rho(s)-\widehat{\rho}(s))^2}{\widehat{\rho}(s)\rho(s)^2}
        }_{\eqcolon \xi_s}.
$$
Now consider a decomposition of $B$ similar to that of $\sqrt{n_2}\cdot\tgamma$,
\begin{align*}
    B = \tilde{B}_t + 
    \sum_{\ell=1}^5 \tilde{R}^{(\ell)}
\end{align*}
where
    \begin{align*}
    \tilde{B}_t & = \frac{1}{\sqrt{n_2}} \sum_{i \in \mathcal{I}_2} 
    \int_0^t\zeta_s E_s^{(i)}(\blambda_s^{(i)}-\lambda_s^{(i)}) \mathrm{d}s, 
    \\
    \tilde R_t^{(1)} & = \frac{1}{\sqrt{n_2}} \sum_{i \in \mathcal{I}_2}  \int_0^t 
    \zeta_s(\widehat{E}_{s}^{(i)} - E_s^{(i)})
    (\blambda_{s}^{(i)} - \lambda_{s}^{(i)}) 
    \mathrm{d}s 
    \\
    \tilde R^{(2)}_{t} & = \frac{1}{\sqrt{n_2}} \sum_{i \in \mathcal{I}_2}  \int_0^t 
    \zeta_s\widehat{E}_{s}^{(i)} \mathrm{d} \bM_{s}^{(i)},
    \\
    \tilde R^{(3)}_{t} & = \frac{1}{\sqrt{n_2}} \sum_{i \in \mathcal{I}_2}  \int_0^t
    \zeta_s \widehat{E}_s^{(i)}
    \left( \lambda_{s}^{(i)} - \widehat{\lambda}_{s}^{(i)}\right) 
    \mathrm{d}s,
    \\
    \tilde R_t^{(4)} & = \frac{1}{\sqrt{n_2}} \sum_{i \in \mathcal{I}_2}  \int_0^t 
    \xi_s \widehat{E}_s^{(i)} \mathrm{d}\bM_s^{(i)}  
    \\
    \tilde R_t^{(5)} & = \frac{1}{\sqrt{n_2}} \sum_{i \in \mathcal{I}_2}  \int_0^t 
    \xi_s \widehat{E}_s^{(i)}(\blambda_s^{(i)}-\widehat{\lambda}_s^{(i)}) \mathrm{d}s
    \end{align*}
The decomposition may seem peculiar at first glance: some of the remainder processes could be decomposed further into terms that would be more manageable. However, crude bounds on these error terms will suffice since $\zeta$ and $\xi$ also decay at an order of~$n_2^{-1/2}$. 
Theorem~\ref{thm:BequivtoV} follows from combining Proposition~\ref{prop:remainderconvergenceB} and Proposition~\ref{prop:tildeBequivtoV}, stated below. \hfill $\Box$

\begin{prop}\label{prop:remainderconvergenceB}
    Under Assumptions \ref{asm:boundedness}, \ref{asm:rateconditions}, and \ref{asm:Pirate}, it holds that 
    \begin{equation}\label{eq:Rtildeconvergence}
        \sup_{t\in [0,1]}|\tilde{R}_t^{(\ell)}|\xrightarrow{P} 0
    \end{equation}
    for $\ell \in \{1,2,3,4,5\}$.
\end{prop}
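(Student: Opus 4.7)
The plan is to show $\ex\big[\sup_{t\in[0,1]}|\tilde{R}^{(\ell)}_t|\big] \to 0$ for each $\ell \in \{1,2,3,4,5\}$, which implies \eqref{eq:Rtildeconvergence} by Markov's inequality. First I would record the preliminary estimates: since $\rho(s) \geq c_\rho$ and $\widehat{\rho}(s) \geq c_{\widehat{\rho}}$, one has $|\zeta_s| \leq c_\rho^{-2}|\rho(s) - \widehat{\rho}(s)|$ and $|\xi_s| \leq (c_{\widehat{\rho}} c_\rho^2)^{-1}(\rho(s)-\widehat{\rho}(s))^2$, so Proposition~\ref{prop:rhorate} yields $\ex\int_0^1 \zeta_s^2\,\mathrm{d}s = O(1/n_2)$ and $\ex\int_0^1 |\xi_s|\,\mathrm{d}s = O(1/(n_2 c_{\widehat{\rho}}))$. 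I will also use the uniform bounds $|\widehat{E}|, |E| \leq C_E$ and $\blambda, \widehat{\lambda} \leq C_\bh$ from Assumption~\ref{asm:boundedness}.

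The three pathwise remainders $\tilde{R}^{(1)}, \tilde{R}^{(3)}, \tilde{R}^{(5)}$ are the easy cases, as $\sup_t|\tilde{R}^{(\ell)}_t|$ is bounded by the $L^1$-norm in time of the absolute integrand. For $\tilde{R}^{(1)}$, using $|\blambda - \lambda| \leq 2C_\bh$ and Cauchy-Schwarz in time and in the sum over units,
\begin{align*}
    \sup_t|\tilde{R}^{(1)}_t|
    \leq 2C_\bh \sqrt{n_2}\,\Big(\int_0^1\zeta_s^2\,\mathrm{d}s\Big)^{1/2}
    \Big(\frac{1}{n_2}\sum_{i\in\mathcal{I}_2}\int_0^1(\widehat{E}^{(i)}_s - E^{(i)}_s)^2\,\mathrm{d}s\Big)^{1/2}.
\end{align*}
Taking expectations and applying Cauchy-Schwarz once more gives $O(\sqrt{a_n}) \to 0$ by Assumption~\ref{asm:rateconditions}. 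A symmetric calculation gives $O(\sqrt{b_n})$ for $\tilde{R}^{(3)}$. For $\tilde{R}^{(5)}$, the bound $|\widehat{E}(\blambda - \widehat{\lambda})| \leq 2C_E C_\bh$ gives $\sup_t|\tilde{R}^{(5)}_t| \leq 2C_E C_\bh \sqrt{n_2}\int_0^1|\xi_s|\,\mathrm{d}s$, whose expectation is $O(1/(\sqrt{n_2}c_{\widehat{\rho}})) \to 0$ by the assumption $c_{\widehat{\rho}}^{-1} = o(\sqrt{n_2})$.

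The martingale remainders $\tilde{R}^{(2)}, \tilde{R}^{(4)}$ are the main obstacle, since Doob's inequality does not apply directly: $\zeta$ and $\xi$ fail to be $\cG^{(i)}$-predictable because each depends on the entire test sample through $\widehat{\rho}$. The plan is a leave-one-out decomposition. Define $\widehat{\rho}^{(-i)}$ as in \eqref{eq:clippedrho} but using only the observations indexed by $\mathcal{I}_2 \setminus \{i\}$, and set $\zeta^{(-i)}_s = (\rho(s) - \widehat{\rho}^{(-i)}(s))/\rho(s)^2$. The uniform bound $|\widehat{E}| \leq C_E$ gives $|\zeta_s - \zeta^{(-i)}_s| \leq K/n_2$ for a constant $K$ depending on $C_E$ and $c_\rho$. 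Split $\tilde{R}^{(2)} = A + B$ with $B_t \coloneq n_2^{-1/2}\sum_i \int_0^t (\zeta_s - \zeta^{(-i)}_s)\widehat{E}^{(i)}_s\,\mathrm{d}\bM^{(i)}_s$; the crude bound on $|\zeta - \zeta^{(-i)}|$ combined with $\ex[N^{(i)}_1 + \int_0^1 \blambda^{(i)}_s\,\mathrm{d}s] = O(1)$ yields $\ex[\sup_t|B_t|] = O(1/\sqrt{n_2})$. For the main piece $A_t = n_2^{-1/2}\sum_i A^{(i)}_t$ with $A^{(i)}_t = \int_0^t \zeta^{(-i)}_s \widehat{E}^{(i)}_s\,\mathrm{d}\bM^{(i)}_s$, the key observation is that, conditionally on $\mathcal{M}^{(-i)} \coloneq \mathcal{D}_1 \vee \sigma(\bW^{(j)}: j \in \mathcal{I}_2 \setminus\{i\})$, the process $\zeta^{(-i)}$ is deterministic while $\widehat{E}^{(i)}$ is $\cG^{(i)}$-predictable, so each $A^{(i)}$ is a conditional mean-zero martingale with predictable quadratic variation controlled by $C_E^2 C_\bh \int_0^1(\zeta^{(-i)}_s)^2\,\mathrm{d}s$. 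Doob's inequality then gives $\ex[\sup_t(A^{(i)}_t)^2] = O(1/n_2)$. The cross-covariances $\ex[A^{(i)}_t A^{(j)}_t]$ for $i \neq j$ are handled via a second leave-one-out using $\zeta^{(-i,-j)}$, which is independent of both $\bW^{(i)}$ and $\bW^{(j)}$; the corresponding modified integrals are conditionally independent and mean zero, while the correction errors are $O(1/n_2)$ in sup-norm, yielding $|\ex[A^{(i)}_t A^{(j)}_t]| = O(1/n_2^{3/2})$ and hence $\ex[A_t^2] = O(n_2^{-1/2})$ pointwise in $t$.

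The hardest step will be upgrading this pointwise $L^2$-control to uniform convergence over $t$; I plan to follow the chaining and sub-Gaussian route of Lemma~A.10 in \cite{christgau2023nonparametric}, verifying stochastic equicontinuity of $A$ via the conditional martingale structure and an exponential tail bound on its increments. The analysis of $\tilde{R}^{(4)}$ is entirely analogous but uses the sharper bound $\ex\int_0^1|\xi_s|\,\mathrm{d}s = O(1/(n_2 c_{\widehat{\rho}}))$ together with the scaling $c_{\widehat{\rho}}^{-1} = o(\sqrt{n_2})$.
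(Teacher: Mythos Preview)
Your treatment of the Lebesgue-integral remainders $\tilde{R}^{(1)}, \tilde{R}^{(3)}, \tilde{R}^{(5)}$ is essentially the same as the paper's and is fine.

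For $\tilde{R}^{(2)}$ and $\tilde{R}^{(4)}$, however, you have missed the key observation and are working much harder than necessary. You say that Doob's inequality does not apply directly because $\zeta$ and $\xi$ fail to be $\cG^{(i)}$-predictable, and you then launch into a leave-one-out decomposition followed by a chaining argument that you leave unfinished. The paper's route is far simpler: enlarge the filtration to
\[
\mathsf{G}_t \coloneq \mathcal{D}_1 \vee \sigma\big(\cG_t^{(i)} : i \in \mathcal{I}_2\big),
\]
the joint filtration generated by the training data together with \emph{all} test-sample histories up to time $t$. With respect to $\mathsf{G}_t$: (i) each $\bM^{(i)}$ remains a martingale, because the observations are independent and adding independent information preserves the martingale property; (ii) every summand $Y_t^{(i)}(X_t^{(i)} - \widehat{\pi}_t(\oZ_t^{(i)}))^2$ of $\tilde{\rho}(t)$ is $\mathsf{G}_t$-predictable, hence so are $\widehat{\rho}, \zeta$, and $\xi$. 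Consequently $\tilde{R}^{(2)}$ and $\tilde{R}^{(4)}$ are genuine $\mathsf{G}_t$-martingales, and Doob's $L^2$-inequality gives uniform control in $t$ immediately:
\[
\mathbb{P}\Big(\sup_{t}|\tilde{R}^{(2)}_t| > \varepsilon\Big) \leq \varepsilon^{-2}\,\ex\big[\langle \tilde{R}^{(2)}\rangle_1\big]
\leq \varepsilon^{-2} C_\bh C_E^2 c_\rho^{-4}\,\|\rho - \widehat{\rho}\|_2^2 \to 0,
\]
and similarly for $\tilde{R}^{(4)}$ using $|\xi_t| \leq c_{\widehat{\rho}}^{-1} c_\rho^{-2} C_E^4$ and $c_{\widehat{\rho}}^{-2} = o(n_2)$.

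Your leave-one-out strategy could in principle be pushed through, but the step you flag as hardest --- upgrading the pointwise $L^2$ bound on $A_t$ to uniform control via chaining --- is genuinely nontrivial here: the sum $A$ is not a martingale in any single filtration, and the sub-Gaussian increment bounds you would need do not follow from the conditional martingale structure of each $A^{(i)}$ separately. The filtration-enlargement trick sidesteps all of this in one line.
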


\begin{proof}
    Let $\langle\cdot\,, \cdot \rangle_{2}$ denote the inner product on $L^2([0,1]\times \Omega, \mathrm{d}t\otimes \mathbb{P})$.    
    The triangle inequality and Cauchy-Schwarz can be used to bound the expectation of the first remainder term as follows:
    \begin{align}\label{eq:proofofBremainder1}
        \ex\Big[\sup_{t\in [0,1]}|\tilde{R}_t^{(1)}|\Big] 
        & \leq \frac{1}{\sqrt{n_2}} \sum_{i \in \mathcal{I}_2}  
        \big\langle \, |\zeta|, \, |\widehat{E}^{(i)}-E^{(i)}|\cdot |\blambda^{(i)} - \lambda^{(i)}|
        \big\rangle_2 \nonumber \\
        &\leq \sqrt{n_2} \cdot \|\zeta\|_2\cdot  
        \big\|(\widehat{E}-E)\cdot (\blambda 
        -\lambda) \big\|_2 \nonumber \\
        &\leq c_{\rho}^{-2} C_{\bh}\sqrt{n_2} \cdot \big\|\widehat{\rho}-\rho\big\|_2
            \cdot \big\|\widehat{E}-E\big\|_2,
    \end{align}
    where we have used Assumption~\ref{asm:boundedness} for the last inequality. 
    By a similar argument we obtain
    \begin{align}\label{eq:proofofBremainder3}
        \ex\Big[\sup_{t\in [0,1]}|\tilde{R}_t^{(3)}|\Big] 
        \leq c_{\rho}^{-2} C_E\sqrt{n_2} \cdot \big\|\widehat{\rho}-\rho\big\|_2
            \cdot \big\|\widehat{\lambda} -\lambda \big\|_2.
    \end{align}
    Now, combining Assumption~\ref{asm:rateconditions} with Proposition~\ref{prop:rhorate}, we see that the right-hand sides of \eqref{eq:proofofBremainder1} and \eqref{eq:proofofBremainder3} vanish as $n\to\infty$. Hence we conclude that the desired convergence, \eqref{eq:Rtildeconvergence}, holds for $\ell=1$ and $\ell=3$.
    
    For the second and fourth remainder, let $\mathcal{D}_1 \coloneq \sigma(\bW_i \colon i \in \mathcal{I}_1)$ denote the information contained in the training data. Let further $\mathsf{G}_t = \mathcal{D}_1 \vee \sigma(\cG_t^{(i)} \colon i \in \mathcal{I}_2)$ denote the filtration containing the histories of the processes $(N^{(i)},X^{(i)},Z^{(i)})_{i\in \mathcal{I}_2}$
    together with the information $\mathcal{D}_1$.
    Then note that: 
    \begin{itemize}
        \item since the observations are independent, the compensated $\cG_t^{(i)}$-martingale $\bM_t^{(i)}$ is in fact also a $\mathsf{G}_t$-martingale
        for each $i\in \mathcal{I}_2$. 

        \item each term in the empirical estimator $\tilde{\rho}$, defined in \eqref{eq:empiricalrho}, is $\mathsf{G}_t$-predictable, and hence $\widehat{\rho}(t) = \max\{c_{\widehat{\rho}},\tilde{\rho}(t)\}$ is $\mathsf{G}_t$-predictable.
        As a consequence, both $\zeta_t$ and $\xi_t$ are also $\mathsf{G}_t$-predictable. 

        \item Assumption~\ref{asm:boundedness} implies that 
        $|\zeta_t|\leq c_{\rho}^{-2}C_E^2$ and that $|\xi_t|\leq c_{\widehat{\rho}}^{-1}c_{\rho}^{-2} C_E^4$. 
        Hence both $\int_0^1 \zeta_t^2 \blambda_t \mathrm{d}t$ and $\int_0^1 \xi_t^2 \blambda_t \mathrm{d}t$ have finite expectations. 
    \end{itemize}
    We conclude that the terms in $\tilde{R}^{(2)}$ are mean-zero $\mathsf{G}_t$-martingales, cf. Lemma~A.1 in \cite{christgau2023nonparametric}, and hence $\tilde{R}^{(2)}$ is also a $\mathsf{G}_t$-martingale. 
    By the same argument, $\tilde{R}^{(4)}$ is a $\mathsf{G}_t$-martingale.
    Therefore we can apply Doob's submartingale inequality, which yields that
    \begin{align}\label{eq:Doobsmg}
        \mathbb{P}\Big(\sup_{t\in [0,1]}|\tilde{R}_t^{(\ell)}|>\varepsilon\Big)
        \leq \frac{1}{\varepsilon^2}\ex\big[(\tilde{R}_1^{(\ell)})^2\,\big]
    \end{align}
    for $\ell \in \{2,4\}$.
    Now using the calculus of the quadratic characteristic for compensated counting processes 
    (see, for example, Proposition II.4.1 in \citep{AndersenBorganGillKeiding:1993}) we obtain
    \begin{align*}
        \ex\Big[(\tilde{R}_1^{(2)})^2\Big]
        =
        \ex\Big[\langle\tilde{R}_1^{(2)}\rangle\Big]
        &= \ex\Big[\int_0^1 \zeta_t^2 \widehat{E}_t^2\blambda_t \mathrm{d}t\Big] \\
        &= C_{\bh} C_E^2
            \int_0^1 \frac{(\rho(t)-\widehat{\rho}(t))^2}{\rho(t)^4} \mathrm{d}t 
        \leq c_{\rho}^{-4}C_{\bh} C_E^2\|\rho-\widehat{\rho}\|_2^2 \to 0.
    \end{align*}
    Similarly, observe that
    \begin{align*}
        \ex\big[(\tilde{R}_1^{(4)})^2\big]
        =
        \ex\Big[\langle\tilde{R}_1^{(4)}\rangle\Big]
        &= \ex\Big[\int_0^1 \xi_t^2 \widehat{E}_t^2\blambda_t \mathrm{d}t\Big] \\
        &\leq C_{\bh} C_E^2 \int_0^1 \frac{(\rho(t)-\widehat{\rho}(t))^4}{\widehat{\rho}(t)^2\rho(t)^4} \mathrm{d}t 
        \leq c_{\widehat{\rho}}^{-2} c_{\rho}^{-4} C_{\bh}C_E^6  \|\rho - \widehat{\rho}\|_2^2 \to 0,
    \end{align*}
    where the convergence follows from Proposition~\ref{prop:rhorate} and the fact that $c_{\widehat{\rho}}^{-2} = o(n)$.
    Combined with \eqref{eq:Doobsmg}, we conclude that \eqref{eq:Rtildeconvergence} holds for $\ell=2$  and $\ell=4$.
    
    For the final term, a direct bound yields
    \begin{align*}
        \ex\Big[\sup_{t\in [0,1]}|\tilde{R}_t^{(5)}|\Big] 
        \leq C_EC_{\bh}\sqrt{n_2} \|\xi\|_1
        \leq c_{\rho}^{-2}c_{\widehat{\rho}}^{-1}C_EC_{\bh}\sqrt{n_2} \|\rho-\widehat{\rho}\|_2^2 \to 0.
    \end{align*}
    Thus we have established \eqref{eq:Rtildeconvergence} for $\ell\in\{1,2,3,4,5\}$.
\end{proof}

\begin{prop}\label{prop:tildeBequivtoV}
    Under Assumptions \ref{asm:boundedness} and \ref{asm:Pirate}, it holds that 
    $$
        \sup_{t\in[0,1]}|\tilde{B}_t - V_t| \xrightarrow{P} 0,
    $$ 
    where $V=(V_t)$ is the process given by
    \begin{align*}
    V_t = \frac{1}{\sqrt{n_2}}\sum_{i\in\mathcal{I}_2}\int_0^t \frac{(\rho(s)-(E_s^{(i)})^2)}{\rho(s)^2} \ex[E_s(\blambda_s-\lambda_s)]\mathrm{d}s,
    \qquad t\in[0,1].
    \end{align*}
\end{prop}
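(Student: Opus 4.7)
The plan is to reduce $\tilde{B}_t$ to a V-statistic in the true residuals and extract $V_t$ as its leading term. Introduce the oracle empirical moment $\widehat{\rho}^{\mathrm{or}}(s) \coloneq \frac{1}{n_2}\sum_{j\in\mathcal{I}_2}(E_s^{(j)})^2$ together with $\zeta_s^{\mathrm{or}} \coloneq (\rho(s)-\widehat{\rho}^{\mathrm{or}}(s))/\rho(s)^2$, and let $\tilde{B}_t^{\mathrm{or}}$ denote the process obtained from $\tilde{B}_t$ by substituting $\zeta_s^{\mathrm{or}}$ for $\zeta_s$. I will split $\tilde{B}_t - V_t = (\tilde{B}_t - \tilde{B}_t^{\mathrm{or}}) + (\tilde{B}_t^{\mathrm{or}} - V_t)$ and control the two pieces separately.

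For the first piece, the Cauchy-Schwarz argument already used to bound $\tilde{R}^{(1)}$ in Proposition~\ref{prop:remainderconvergenceB} gives $\ex[\sup_t|\tilde{B}_t - \tilde{B}_t^{\mathrm{or}}|] \leq C\sqrt{n_2}\,\|\widehat{\rho}-\widehat{\rho}^{\mathrm{or}}\|_2$, so it suffices to show $\|\widehat{\rho}-\widehat{\rho}^{\mathrm{or}}\|_2 = o(n_2^{-1/2})$. Modulo the negligible clipping correction (which contributes $o(n_2^{-1})$ since $c_{\widehat{\rho}} \to 0$ and $\mathbb{P}(\tilde{\rho}\leq c_{\widehat{\rho}}) = O(n_2^{-1})$), $\widehat{\rho}^{\mathrm{or}}(s)-\widehat{\rho}(s) = \frac{1}{n_2}\sum_j((E_s^{(j)})^2 - (\widehat{E}_s^{(j)})^2)$ is an average of conditionally iid (given $\mathcal{D}_1$) terms. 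Expanding $(X-\widehat{\Pi})^2$ around $\Pi$ and using $\cF_{s-}$-predictability of $Y_s(\Pi_s-\widehat{\Pi}_s)$ to kill the cross term yields $\ex[(E_s^{(j)})^2 - (\widehat{E}_s^{(j)})^2 \,|\, \mathcal{D}_1] = -\ex[Y_s(\Pi_s-\widehat{\Pi}_s)^2 \,|\,\widehat{\Pi}]$; the standard conditional-variance-plus-squared-mean decomposition then gives $\|\widehat{\rho}-\widehat{\rho}^{\mathrm{or}}\|_2^2 \leq \|\widehat{\Pi}-\Pi\|_4^4 + (C/n_2)\|\widehat{\Pi}-\Pi\|_2^2$, which is $o(n_2^{-1})$ by Assumption~\ref{asm:Pirate}.

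For the second piece, expand $\tilde{B}_t^{\mathrm{or}}$ as the V-statistic
\begin{equation*}
    \tilde{B}_t^{\mathrm{or}} = \frac{1}{n_2^{3/2}}\sum_{i,j\in\mathcal{I}_2}\int_0^t \alpha_s^{(j)}\beta_s^{(i)}\,\mathrm{d}s,\quad \alpha_s^{(j)}=\frac{\rho(s)-(E_s^{(j)})^2}{\rho(s)^2},\ \beta_s^{(i)}=E_s^{(i)}(\blambda_s^{(i)}-\lambda_s^{(i)}),
\end{equation*}
in which $\alpha$ and $\beta$ are bounded by Assumption~\ref{asm:boundedness}. The diagonal contribution ($i=j$) is $n_2$ uniformly bounded terms scaled by $n_2^{-3/2}$, hence $O(n_2^{-1/2})$ in supremum norm. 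For the off-diagonal, write $\beta_s^{(i)} = \mu(s) + (\beta_s^{(i)} - \mu(s))$ with $\mu(s) = \ex[E_s(\blambda_s-\lambda_s)]$. The $\mu(s)$-part produces exactly $\frac{n_2-1}{n_2}V_t$, and the resulting gap $\frac{1}{n_2}V_t$ vanishes uniformly once the tightness $\sup_t|V_t|=O_P(1)$ is established via a functional CLT for the $\sqrt{n_2}$-scaled empirical average of bounded, centered, Lipschitz paths $\int_0^\cdot \alpha_s^{(j)}\mu(s)\mathrm{d}s$. The centered part $S_t$ is a degenerate mean-zero sum whose second moment, using independence of distinct indices together with $\ex[\alpha_s^{(j)}]=0$ and $\ex[\beta_s^{(i)}-\mu(s)]=0$, reduces to the diagonal of the four-index square and yields $\ex[S_t^2] = O(n_2^{-1})$.

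The main obstacle will be lifting this pointwise $L^2$-bound to $\sup_t|S_t|\xrightarrow{P}0$, since $S_t$ is Lipschitz in $t$ only with a random constant of order $\sqrt{n_2}$, so a naive grid argument does not suffice. I plan to establish the analogous increment bound $\ex[(S_t-S_{t'})^2]\leq C(t-t')^2/n_2$ by the same diagonal-only computation, interpolate with the almost sure Lipschitz bound to obtain $\ex[(S_t-S_{t'})^4]\leq C(t-t')^4$, and then invoke a Kolmogorov-Chentsov-type argument (or equivalently a grid discretization of size $K\sim n_2^{3/4}$) to conclude the uniform convergence.
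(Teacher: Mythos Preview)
Your proposal is correct, but the paper takes a shorter route that avoids the degenerate U-statistic entirely. You perform the two natural replacements (swap $\zeta$ for its oracle $\zeta^{\mathrm{or}}$, and swap the random $\beta^{(i)}$ for its mean $\mu$) in the opposite order from the paper. The paper first defines the intermediate process $\widehat{V}_t = \sqrt{n_2}\int_0^t \zeta_s\,\mu(s)\,\mathrm{d}s$ and bounds
\[
\sup_t|\tilde{B}_t-\widehat{V}_t|\;\le\; \Big\|\tfrac{1}{n_2}\textstyle\sum_{i\in\mathcal{I}_2}\beta^{(i)}-\mu\Big\|_{L^2[0,1]}\cdot\sqrt{n_2}\,\|\zeta\|_{L^2[0,1]}
\]
via Cauchy--Schwarz in $L^2([0,1])$. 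The first factor tends to zero by the law of large numbers in the separable Banach space $L^2([0,1])$, and the second is $O_P(1)$ by Proposition~\ref{prop:rhorate}; no uniformity-in-$t$ argument beyond this product bound is needed. The remaining step $\widehat{V}_t\to V_t$ then involves only the deterministic integrand $\mu$, so it reduces directly to $\sqrt{n_2}\|\zeta-\zeta^{\mathrm{or}}\|_1\to 0$, which is exactly your first piece. By contrast, your ordering leaves $\zeta^{\mathrm{or}}$ coupled to the $\beta^{(i)}$ through shared indices, forcing the degenerate V-statistic analysis and the Kolmogorov--Chentsov (or grid) tightness step for $S_t$. Your machinery is sound---the $\ex[S_t^2]=O(n_2^{-1})$ computation and the interpolation $\ex[(S_t-S_{t'})^4]\le C(t-t')^4$ both check out---but the paper's factorisation sidesteps it completely.
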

\begin{proof}
The results follows from combining Lemmas~\ref{lem:BtoVhat} and \ref{lem:VhattoV}.
\end{proof}

\begin{lem}\label{lem:BtoVhat}
    Under Assumptions \ref{asm:boundedness} and \ref{asm:Pirate}, it holds that $\sup_{t\in[0,1]}|\tilde{B}_t - \widehat{V}_t| \xrightarrow{P}0$, where $\widehat{V}=(\widehat{V}_t)$ is the process given by
    \begin{align*}
    \widehat{V}_t &= \sqrt{n_2}\int_0^t \zeta_s \ex[E_s(\blambda_s-\lambda_s)]\mathrm{d}s, \qquad t\in[0,1].
    \end{align*}
\end{lem}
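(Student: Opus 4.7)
The plan is to exploit the fact that the difference $\tilde{B}_t - \widehat{V}_t$ can be rewritten as $\sqrt{n_2}$ times the time-integral of a product of two factors, each of which has $L^2$-norm of order $n_2^{-1/2}$: the weight $\zeta$, controlled by the rate result for $\widehat{\rho}$, and the empirical-process deviation
\[
    \Phi_s \coloneq \frac{1}{n_2}\sum_{i\in\mathcal{I}_2} E_s^{(i)}(\blambda_s^{(i)}-\lambda_s^{(i)}) - \ex\big[E_s(\blambda_s-\lambda_s)\big],
\]
controlled by the variance of i.i.d.\ summands. Since $\sqrt{n_2}\cdot n_2^{-1/2}\cdot n_2^{-1/2} \to 0$, this suggests $L^1$-convergence of $\sup_{t\in[0,1]}|\tilde{B}_t-\widehat{V}_t|$, from which convergence in probability follows by Markov's inequality.

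Concretely, I would first rewrite
\[
    \tilde{B}_t - \widehat{V}_t = \sqrt{n_2}\int_0^t \zeta_s \Phi_s\,\mathrm{d}s,
\]
and then apply the crude uniform-in-$t$ bound
\[
    \sup_{t\in[0,1]}\big|\tilde{B}_t - \widehat{V}_t\big| \leq \sqrt{n_2}\int_0^1 |\zeta_s|\,|\Phi_s|\,\mathrm{d}s \leq \sqrt{n_2}\,\|\zeta\|_{L^2([0,1])}\,\|\Phi\|_{L^2([0,1])}
\]
by Cauchy--Schwarz on $L^2([0,1])$. Taking expectations and applying Cauchy--Schwarz again (this time on $\Omega$, since $\zeta$ and $\Phi$ are correlated via the $\mathcal{I}_2$ data) gives
\[
    \ex\Big[\sup_{t\in[0,1]}|\tilde{B}_t - \widehat{V}_t|\Big]
    \leq \sqrt{n_2}\,\sqrt{\ex\big[\|\zeta\|_2^2\big]}\,\sqrt{\ex\big[\|\Phi\|_2^2\big]}.
\]

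The two expected norms are then bounded separately. For $\zeta$, Assumption~\ref{asm:boundedness}~$(iii)$ yields $|\zeta_s|\leq c_\rho^{-2}|\rho(s)-\widehat{\rho}(s)|$, so $\ex[\|\zeta\|_2^2]\leq c_\rho^{-4}\ex[\|\widehat{\rho}-\rho\|_2^2] = O(n_2^{-1})$ by Proposition~\ref{prop:rhorate}. For $\Phi$, independence across $i\in\mathcal{I}_2$ and Assumption~\ref{asm:boundedness} give $\ex[\Phi_s^2] = n_2^{-1}\var(E_s(\blambda_s-\lambda_s)) \leq n_2^{-1}C_E^2 C_{\bh}^2$, hence $\ex[\|\Phi\|_2^2] = O(n_2^{-1})$ by Tonelli. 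Combining these bounds yields $\ex[\sup_t|\tilde{B}_t - \widehat{V}_t|] = O(n_2^{-1/2}) \to 0$, and Markov's inequality finishes the argument.

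There is no real obstacle here: the only mild subtlety is that $\zeta$ and $\Phi$ are not independent, since $\widehat{\rho}$ is built from the same $\mathcal{I}_2$ observations that enter $\Phi$. This prevents factoring the expectation $\ex[\|\zeta\|_2\,\|\Phi\|_2]$ as a product, but the second Cauchy--Schwarz absorbs this issue cleanly without requiring any independence or Donsker-class argument. Assumption~\ref{asm:Pirate} enters only indirectly through its role in Proposition~\ref{prop:rhorate}, which is already established.
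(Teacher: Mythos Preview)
Your proof is correct and follows essentially the same route as the paper: both start from the identity $\tilde{B}_t - \widehat{V}_t = \sqrt{n_2}\int_0^t \zeta_s\Phi_s\,\mathrm{d}s$, bound the supremum by the $L^1([0,1])$-integral of the absolute integrand, and apply Cauchy--Schwarz on $[0,1]$ to split into $\|\Phi\|_{L^2([0,1])}$ and $\sqrt{n_2}\|\zeta\|_{L^2([0,1])}$.

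The difference is only in the concluding step. The paper argues that $\|\Phi\|_{L^2([0,1])}\to 0$ almost surely by invoking the law of large numbers in the Banach space $L^2([0,1])$ (citing Ledoux--Talagrand), and that $\sqrt{n_2}\|\zeta\|_{L^2([0,1])}$ is bounded in probability via Proposition~\ref{prop:rhorate}; the conclusion then follows from $o_P(1)\cdot O_P(1)=o_P(1)$. Your second Cauchy--Schwarz on $\Omega$, combined with the direct variance bound $\ex[\Phi_s^2]=n_2^{-1}\var(E_s(\blambda_s-\lambda_s))\leq n_2^{-1}C_E^2C_{\bh}^2$, is more elementary --- it avoids the Banach-space LLN entirely --- and yields the slightly stronger conclusion of $L^1$-convergence of the supremum. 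The paper's phrasing has the mild advantage of isolating which factor vanishes and which is merely bounded, but since Proposition~\ref{prop:rhorate} already delivers the $O(n_2^{-1})$ rate you use, both arguments go through without obstruction.
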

\begin{proof}
Letting $\|\cdot\|_{[0,1]}$ denote the norm on $L^2([0,1],\mathrm{d}t)$, an application of Cauchy-Schwarz yields that
\begin{align}\label{eq:Vequivproof}
    \sup_{t}|\tilde{B}_t - \widehat{V}_t |
    &=\sqrt{n_2} \int_0^1 |\zeta_s|
     \Big\lvert \frac{1}{n_2} \sum_{i \in \mathcal{I}_2}
        E_s^{(i)}(\blambda_s^{(i)}-\lambda_s^{(i)})- \ex[E_s(\blambda_s-\lambda_s)]
     \Big\rvert\mathrm{d}s \nonumber \\
     &=
     \Big\| \frac{1}{n_2} \sum_{i \in \mathcal{I}_2}
        E^{(i)}(\blambda^{(i)}-\lambda^{(i)})- \ex[E(\blambda-\lambda)]
     \Big\|_{[0,1]}
     \cdot 
     \sqrt{n_2} \|\zeta\|_{[0,1]}
\end{align}
Since
$
    \ex\|E(\blambda - \lambda)\|_{[0,1]} \leq C_E C_{\bh},
$
the law of large numbers -- in the space $L^2([0,1])$ -- 
gives that the first factor of \eqref{eq:Vequivproof} converges to zero almost surely, and in particular also in probability. See for example \citet[Cor. 7.10]{ledoux1991probability} for a general formulation of the law of large numbers on separable Banach spaces. 

For the second factor, we note that 
$$
    \sqrt{n_2}\|\zeta\|_{[0,1]} 
    \leq c_{\rho}^{-2}\sqrt{n_2}\|\widehat{\rho} - \rho\|_{[0,1]}.
$$
The right-hand side is bounded in expectation by \eqref{eq:rho2normbound}, and in particular it is bounded in probability.
Combined we conclude that the right-hand side of \eqref{eq:Vequivproof} converges to zero in probability.
\end{proof}

\begin{lem}\label{lem:VhattoV}
    Under Assumptions \ref{asm:boundedness} and \ref{asm:Pirate}, it holds that 
    $$
        \ex\Big[\sup_{t\in[0,1]}|\widehat{V}_t - V_t|\Big] \xrightarrow{P}0
    $$
\end{lem}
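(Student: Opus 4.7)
\textbf{Proof proposal for Lemma~\ref{lem:VhattoV}.} My starting point will be to rewrite both processes around a common empirical quantity. Setting $\bar{\rho}(s) \coloneq \tfrac{1}{n_2}\sum_{i\in\mathcal{I}_2}(E_s^{(i)})^2$, the integrand defining $V_t$ regroups as $\sqrt{n_2}(\rho(s) - \bar{\rho}(s))/\rho(s)^2$, so that
\[
    \widehat V_t - V_t = \sqrt{n_2}\int_0^t \frac{\bar{\rho}(s) - \widehat{\rho}(s)}{\rho(s)^2}\,\ex[E_s(\blambda_s - \lambda_s)]\,\mathrm{d}s.
\]
Invoking Assumption~\ref{asm:boundedness} to bound $|\ex[E_s(\blambda_s - \lambda_s)]| \leq C_E C_{\bh}$ and $\rho(s) \geq c_{\rho}$, it will suffice to show $\sqrt{n_2}\cdot\ex\!\int_0^1 |\bar{\rho}(s) - \widehat{\rho}(s)|\,\mathrm{d}s \to 0$; expectation convergence then yields the stated convergence in probability.

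Next, I will split $|\bar{\rho} - \widehat{\rho}| \leq |\bar{\rho} - \tilde{\rho}| + |\tilde{\rho} - \widehat{\rho}|$. Since $\widehat{\rho}(s) = \max\{c_{\widehat{\rho}}, \tilde{\rho}(s)\}$ and $\tilde{\rho}(s) \geq 0$, the clipping term satisfies $|\tilde{\rho}(s) - \widehat{\rho}(s)| \leq c_{\widehat{\rho}}$, and the standing assumption $c_{\widehat{\rho}}^{-1} = o(\sqrt{n_2})$ makes $\sqrt{n_2}\,c_{\widehat{\rho}} \to 0$.

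The core of the proof will be bounding $\sqrt{n_2}\cdot\ex\!\int_0^1 |\bar{\rho}(s) - \tilde{\rho}(s)|\,\mathrm{d}s$. Writing $\Delta_s^{(i)} = \widehat{\Pi}_s^{(i)} - \Pi_s^{(i)}$ and using $(E_s^{(i)})^2 - (\widehat{E}_s^{(i)})^2 = (E_s^{(i)} - \widehat{E}_s^{(i)})(E_s^{(i)} + \widehat{E}_s^{(i)})$, I will expand
\[
    \bar{\rho}(s) - \tilde{\rho}(s) = \frac{2}{n_2}\sum_{i\in\mathcal{I}_2} E_s^{(i)}\Delta_s^{(i)} - \frac{1}{n_2}\sum_{i\in\mathcal{I}_2} Y_s(\Delta_s^{(i)})^2.
\]
The linear part has conditional mean zero given the training data $\mathcal{D}_1$, because $\ex[E_s \mid \mathcal{F}_{s-}] = 0$ by definition of the predictable projection and $\Delta_s$ is $\sigma(\mathcal{D}_1,\mathcal{F}_{s-})$-measurable. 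A conditional-variance computation with $|E_s| \leq C_E$ followed by Jensen's inequality will give
\[
    \ex\int_0^1 \Bigl\lvert \tfrac{2}{\sqrt{n_2}}\!\sum_{i} E_s^{(i)}\Delta_s^{(i)}\Bigr\rvert\,\mathrm{d}s \leq 2C_E\int_0^1 \sqrt{\ex[Y_s\Delta_s^2]}\,\mathrm{d}s \leq 2C_E\sqrt{a_n} \to 0.
\]

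The hard part will be the quadratic term, whose scaled expectation is exactly $\sqrt{n_2}\,a_n$. The rate conditions in Assumption~\ref{asm:rateconditions} alone give only $a_n \to 0$, which is insufficient. Here Assumption~\ref{asm:Pirate} is indispensable: by Cauchy--Schwarz in the measure $\mathrm{d}s\otimes\mathbb{P}$ and using $Y_s^2 = Y_s$, I obtain $a_n \leq \sqrt{a_n'}$, whence $\sqrt{n_2}\,a_n \leq \sqrt{n_2\,a_n'} \to 0$. Assembling the three pieces finishes the proof. This last step also clarifies why the $4$-norm rate in Assumption~\ref{asm:Pirate}, rather than just a $2$-norm rate, enters the theory for the ACM.
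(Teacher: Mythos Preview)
Your decomposition and the handling of the linear and quadratic pieces of $\bar\rho-\tilde\rho$ are correct and match the paper's use of the $r_1$, $r_2$ terms from the proof of Proposition~\ref{prop:rhorate}, including the crucial step $a_n\le\sqrt{a_n'}$ that makes Assumption~\ref{asm:Pirate} bite.

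The genuine gap is the clipping term. You write that ``$c_{\widehat\rho}^{-1}=o(\sqrt{n_2})$ makes $\sqrt{n_2}\,c_{\widehat\rho}\to 0$'', but this is the reverse of what that condition says: $c_{\widehat\rho}^{-1}=o(\sqrt{n_2})$ means $1/(c_{\widehat\rho}\sqrt{n_2})\to 0$, i.e.\ $\sqrt{n_2}\,c_{\widehat\rho}\to\infty$. This is exactly the paper's standing assumption that $c_{\widehat\rho}$ tends to zero \emph{slower} than $n_2^{-1/2}$. Hence the pointwise bound $|\tilde\rho(s)-\widehat\rho(s)|\le c_{\widehat\rho}$ alone gives $\sqrt{n_2}\!\int_0^1\ex|\tilde\rho-\widehat\rho|\,\mathrm ds\le\sqrt{n_2}\,c_{\widehat\rho}\to\infty$, which is useless.

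What is missing is that the clipping event is rare. For $n$ large enough that $c_{\widehat\rho}<c_\rho\le\rho(s)$, one has $\{\tilde\rho(s)<c_{\widehat\rho}\}\subset\{|\tilde\rho(s)-\rho(s)|>c_\rho-c_{\widehat\rho}\}$, and Markov together with Proposition~\ref{prop:rhorate} gives $\int_0^1\mathbb P(\tilde\rho(s)<c_{\widehat\rho})\,\mathrm ds\le(c_\rho-c_{\widehat\rho})^{-2}\|\tilde\rho-\rho\|_2^2=O(n_2^{-1})$. Combining this with $|\tilde\rho-\widehat\rho|\le c_{\widehat\rho}\one(\tilde\rho<c_{\widehat\rho})$ yields $\sqrt{n_2}\!\int_0^1\ex|\tilde\rho-\widehat\rho|\,\mathrm ds=O(c_{\widehat\rho}\,n_2^{-1/2})\to0$. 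This is precisely how the paper handles the passage from $\widehat\rho$ to $\tilde\rho$; with this fix your proof goes through.
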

\begin{proof}
Let $\tilde{V} = (\tilde{V}_t)$ be the process given by
\begin{align*}
    \tilde{V}_t \coloneq \sqrt{n_2}\int_0^t \varsigma_s \ex[E_s(\blambda_s-\lambda_s)]\mathrm{d}s,
    \qquad \text{where} \quad
    \varsigma_s \coloneq \frac{\rho(s)-\tilde{\rho}(s)}{\rho(s)^2}.
\end{align*}
In other words, $\varsigma_s$ corresponds to $\zeta_s$, but where the clipped estimate, $\widehat{\rho}$, is replaced by the empirical estimate $\tilde{\rho}$ from \eqref{eq:empiricalrho}.
As such, we have equality of the events $(\varsigma_s \neq \zeta_s) = (\tilde{\rho}(s)<c_{\widehat{\rho}})$.
Let $n$ be large enough such that $c_{\widehat{\rho}}<c_{\rho}<\rho(s)$ for all $s\in[0,1]$,
and then we observe that
\begin{align*}
    \sup_{t\in[0,1]}|\widehat{V}_t-\tilde{V}_t|
    &\leq C_EC_{\bh}\sqrt{n_2}
        \int_0^1|\zeta_s-\varsigma_s| \mathrm{d}s\\
    &\leq c_{\widehat{\rho}}c_{\rho}^{-2}C_EC_{\bh}\sqrt{n_2}
        \int_0^1 \one\left(\tilde{\rho}(s)<c_{\widehat{\rho}}\right) \mathrm{d}s \\
    &\leq c_{\widehat{\rho}}c_{\rho}^{-2}C_EC_{\bh}\sqrt{n_2}
        \int_0^1 \one\left((\tilde{\rho}(s)-\rho(s))^2<(\rho(s)-c_{\widehat{\rho}})^2\right) \mathrm{d}s.
\end{align*}
Applying Tonelli's theorem and using Markov's inequality to the above we obtain
\begin{align*}
    \ex\Big[\sup_{t\in[0,1]}|\widehat{V}_t-\tilde{V}_t|\Big]
    &\lesssim
    \sqrt{n_2} \int_0^1 \mathbb{P}\left((\tilde{\rho}(s)-\rho(s))^2<(\rho(s)-c_{\widehat{\rho}})^2\right) \mathrm{d}s \\
    &\leq 
    \sqrt{n_2} \int_0^1 \frac{\ex\big[(\tilde{\rho}(s)-\rho(s))^2\big]
    }{(\rho(s)-c_{\widehat{\rho}})^2} \mathrm{d}s \\
    &\leq (c_{\rho}-c_{\widehat{\rho}})^{-2}\sqrt{n_2}\|\tilde{\rho}-\rho\|_2^2 \to 0,
\end{align*}
where `$\lesssim$' is an inequality up to the constant $c_{\widehat{\rho}}c_{\rho}^{-2}C_EC_{\bh}$, and where the last convergence follows from Proposition~\ref{prop:rhorate}.

It now suffices to show that $\ex\Big[\sup_{t\in[0,1]}|\tilde{V}_t - V_t| \Big]\xrightarrow{P}0$. Using the decomposition $\tilde{\rho} = \overline{\rho} + r_1 + r_2$ from the proof of Proposition~\ref{prop:rhorate}, we first see that
\begin{align*}
    V_t = \sqrt{n_2}\int_0^t \frac{\rho(s)-\overline{\rho}(s)}{\rho(s)^2} \ex[E_s(\blambda_s-\lambda_s)]\mathrm{d}s,
\end{align*}
so that
\begin{align*}
    \sup_{t\in[0,1]} |\tilde{V}_t - V_t|
    \leq \frac{C_EC_{\bh}}{c_{\rho}^2}\sqrt{n_2}
        \int_0^1 |r_1(s)| + |r_2(s)| \mathrm{d}s.
\end{align*}
Using the intermediate results \eqref{eq:rhorater1} and \eqref{eq:rhorater2} from the proof of Proposition \ref{prop:rhorate}, it follows that
\begin{align*}
    \ex\Big[\sup_{t\in[0,1]} |\tilde{V}_t - V_t|\Big]
    = \frac{C_EC_{\bh}}{c_{\rho}^2}\sqrt{n_2}(\|r_1\|_1+\|r_2\|_1)\to 0.
\end{align*}
\end{proof}

\subsection{Proof of Theorem~\ref{thm:ACMasymptotics}} \label{sec:acmproofACMasymptotics}
We first establish asymptotic Gaussianity of the oracle process.
\begin{thm}\label{thm:oracleconvergence}
    Let $S=(S_t)_{t\in[0,1]}$ be the process given by
    $S_t = U_t + V_t - \sqrt{n_2}\gamma_t$,
    and let $\Gamma=(\Gamma_t)_{t\in[0,1]}$ be the process from Theorem~\ref{thm:ACMasymptotics}.

    Under Assumption~\ref{asm:boundedness}, it holds that
    $
        S \xrightarrow{d} \Gamma
    $
    with respect to the uniform topology as $n\to \infty$.
\end{thm}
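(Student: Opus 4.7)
The plan is to recognize $S$ as a normalized sum of i.i.d. mean-zero random elements in the Skorokhod space $D[0,1]$ and invoke a functional central limit theorem. By the identity in \eqref{eq:oraclesum},
\[
S_t = \frac{1}{\sqrt{n_2}} \sum_{i\in\mathcal{I}_2} \xi_t^{(i)}, \qquad \xi_t^{(i)} \coloneq \int_0^t G_s^{(i)}\mathrm{d}M_s^{(i)} - \int_0^t E_s^{(i)} G_s^{(i)}\mathrm{d}\gamma_s,
\]
where the $\xi^{(i)}$ are i.i.d. copies of $\xi$. Using $\ex[\int_0^t G_s\mathrm{d}M_s] = \gamma_t$ (by Definition~\ref{dfn:ACM}) together with $\ex[E_s G_s] = \ex[E_s^2]/\rho(s) = 1$, one checks directly that $\ex[\xi_t^{(i)}] = 0$; the prescribed covariance function of $\Gamma$ is then precisely $(s,t)\mapsto \cov(\xi_s,\xi_t)$.

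First I would prove convergence of the finite-dimensional marginals. Under Assumption~\ref{asm:boundedness}, the sample paths of $\xi$ are bounded, so $\ex[\xi_t^2]<\infty$ uniformly in $t$, and the multivariate Lindeberg--L\'evy CLT gives $(S_{t_1},\ldots,S_{t_k})\xrightarrow{d}\mathcal{N}(0,\Sigma)$ for every $0\leq t_1<\cdots<t_k\leq 1$ with $\Sigma_{jk}=\cov(\xi_{t_j},\xi_{t_k})$. This matches the finite-dimensional marginals of $\Gamma$.

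The main obstacle is establishing tightness of $(S^{(n)})$ in $D[0,1]$. I would decompose $\xi^{(i)} = M^{(i)} + D^{(i)}$ using $\mathrm{d}M = \mathrm{d}\bM + (\blambda-\lambda)\mathrm{d}s$, setting
\[
M^{(i)}_t \coloneq \int_0^t G_s^{(i)} \mathrm{d}\bM_s^{(i)}, \qquad D^{(i)}_t \coloneq \int_0^t G_s^{(i)}(\blambda_s^{(i)}-\lambda_s^{(i)})\mathrm{d}s - \int_0^t E_s^{(i)} G_s^{(i)}\mathrm{d}\gamma_s.
\]
Each $M^{(i)}$ is a mean-zero square-integrable $\cG_t$-martingale, and each $D^{(i)}$ is mean-zero, continuous, and uniformly Lipschitz with a deterministic constant $L$ depending only on the bounds in Assumption~\ref{asm:boundedness}. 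Tightness of $\frac{1}{\sqrt{n_2}}\sum_i M^{(i)}$ follows from Rebolledo's functional martingale CLT, mirroring the proof of Proposition~\ref{prop:UasymptoticGaussian} in \cite{christgau2023nonparametric}, while tightness of $\frac{1}{\sqrt{n_2}}\sum_i D^{(i)}$ in $C[0,1]$ follows from the increment bound
\[
\ex\Big[\Big(\tfrac{1}{\sqrt{n_2}}\sum_{i} (D^{(i)}_t - D^{(i)}_s)\Big)^2\Big] = \var(D^{(i)}_t - D^{(i)}_s) \leq L^2(t-s)^2,
\]
which meets Kolmogorov's tightness criterion. Joint tightness combined with the FDD convergence yields $S^{(n)}\xrightarrow{d}\Gamma$ in $D[0,1]$ with the Skorokhod topology; continuity of $\Gamma$ then upgrades the convergence to the uniform topology by \citet[Thm.~23.9]{kallenberg2021foundations}.
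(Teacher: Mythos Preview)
Your approach is correct and takes a genuinely different route from the paper's. The paper applies a single Skorokhod-space CLT from \citet[Example~1.14.24]{vanDerVaartWellner2023} directly to the i.i.d.\ summands $\xi^{(i)}$, which requires verifying mixed second- and fourth-moment conditions on the increments. The key technical step there is bounding $\ex[I_{st}^2 I_{tu}^2]$ for $s<t<u$, and this relies on the observation that if the event time $T$ falls in $(s,t]$, then $M$ is constant on $(t,u]$, so the stochastic-integral part of $I_{tu}$ vanishes, leaving only a Lebesgue integral of bounded terms. Your decomposition $\xi = M + D$ into a $\cG_t$-martingale plus a uniformly Lipschitz drift sidesteps that computation entirely by invoking Rebolledo for the martingale part and Kolmogorov's criterion for the drift part; both are standard tools and the argument is more modular. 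One point you should make explicit is that tightness of $S_n = A_n + B_n$ in $D[0,1]$ does not follow automatically from tightness of each piece, since addition is not continuous in the Skorokhod topology; you need the elementary inequality $w'(A_n + B_n,\delta) \le w'(A_n,\delta) + w(B_n,\delta)$, valid because $B_n \in C[0,1]$, where $w'$ is the Skorokhod modulus and $w$ the ordinary modulus. With that filled in, both routes arrive at the same limit.
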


\begin{proof}
    For any $0\leq x<y \leq 1$, define the integrals
    \begin{align*}
        I_{xy} \coloneq \int_x^y G_z \mathrm{d}M_z,
        \qquad \text{and} \qquad
        J_{xy} \coloneq \int_x^y E_zG_z\mathrm{d}\gamma_z.
    \end{align*}
    The process $S_t$ is a sum of i.i.d. processes with same distribution as $I_{0t}-J_{0t}$, and note that this process has mean zero: it holds that $\ex[I_{0t}]=\gamma_t$ by the definition of the LCM, and that $\ex[J_{0t}]=\gamma_t$ since $\ex[E_sG_s] = 1$.  
    Thus we can apply the central limit theorem in Skorokhod space derived in 
    \citet[Example 1.14.24]{vanDerVaartWellner2023}.
    In view of this CLT, it suffices to show that
    there exist continuous, strictly increasing functions $F_1,F_2 \colon [0,1] \to [0,\infty)$ and constants $\epsilon_1,\epsilon_2>0$, such that for every $0\leq s<t<u\leq 1$,
    \begin{align}
        \ex[(I_{st}+J_{st})^2] &\leq |F_1(t)-F_1(s)|^{1/2 + \epsilon_1}, 
        \label{eq:chainingconditionC}\\
        \ex[(I_{st}+J_{st})^2(I_{tu}+J_{tu})^2] &\leq |F_2(u)-F_2(s)|^{1 + \epsilon_2}.
        \label{eq:chainingconditionD}
    \end{align}
    To this end, let $0\leq s<t<u\leq 1$ be given.
    In the following use $x\lesssim y$ to denote that the inequality $x\leq K\cdot y$ holds for a constant $K>0$ independent of $(s,t,u)$. In fact, in each case the constant will be a polynomial over the constants $C_{\bh}, C_E$, and $c_{\rho}^{-1}$ from Assumption~\ref{asm:boundedness}.

     We first note that 
    \begin{equation}\label{eq:Jbound}
        J_{st} \lesssim t-s,
        \quad \text{and} \quad
        J_{tu} \lesssim u-t,
    \end{equation}
    under Assumption \ref{asm:boundedness}. Using the quadratic characteristic for compensated counting processes, we also have
    \begin{align*}
        \ex[I_{st}^2] &\lesssim 
            \ex\Big[\Big(\int_s^t G_r\mathrm{d}\bM_r\Big)^2\Big]
            +\ex\Big[\Big(\int_s^t 
                G_r(\blambda_r-\lambda_r)\mathrm{d}r\Big)^2\Big] \\
            &= \ex\Big[\int_s^t G_r^2\blambda \mathrm{d}r\Big]
            +\ex\Big[\Big(\int_s^t 
                G_r(\blambda_r-\lambda_r)\mathrm{d}r\Big)^2\Big] \\
            & \lesssim (t-s) + (t-s)^2 \leq 2(t-s)
    \end{align*}
    Thus we conclude that
    \begin{align*}
        \ex[(I_{st}+J_{st})^2] 
        \lesssim \ex[I_{st}^2]+\ex[J_{st}^2] 
        \lesssim t-s.
    \end{align*}
    This shows that the condition \eqref{eq:chainingconditionC} holds with $\epsilon_1=\frac{1}{2}$ and $F_1(r)=C_1 r$ for a suitably large constant $C_1>0$.

    For the second condition we have that 
    \begin{align*}
        \ex[(I_{st}+J_{st})^2(I_{tu}+J_{tu})^2]
        &\leq 4(
            \ex[I_{st}^2I_{tu}^2]
            +\ex[I_{st}^2J_{tu}^2]
            +\ex[J_{st}^2I_{tu}^2]
            +\ex[J_{st}^2J_{tu}^2]
            ).
    \end{align*}
    By same argument as $\ex[I_{st}^2]\lesssim t-s$, we also have that $\ex[I_{tu}^2]\lesssim t-u$.
    Combining this with the inequality \eqref{eq:Jbound} we obtain
    \begin{align*}
        \ex[I_{st}^2J_{tu}^2] + \ex[J_{st}^2I_{tu}^2] + \ex[J_{st}^2J_{tu}^2]
        \lesssim (t-s)(u-t)
    \end{align*}
    For the term $\ex[I_{st}^2I_{tu}^2]$, we note that if $T\in (s,t]$ then $M$ is constant on the interval $(t,u]$, in which case $I_{tu}=0$.
    Therefore
    \begin{align*}
        \ex[I_{st}^2I_{tu}^2]
        = \ex\left[
        \left(\int_s^t G_r \lambda_r\mathrm{d}r\right)^2 
        I_{tu}^2
        \right]
        \lesssim  (t-s)^2 
        \ex\left[
        I_{tu}^2
        \right]
        \lesssim  (t-s)^2(u-t)
    \end{align*}
    Combined we conclude that
    \begin{equation*}
        \ex[(I_{st}+J_{st})^2(I_{tu}+J_{tu})^2]
        \lesssim (t-s)(u-t) \leq (u-s)^2.
    \end{equation*}
    This shows that the second condition \eqref{eq:chainingconditionD} is satisfied with $\epsilon=1$ and $F_2(r)=C_2r$ for a suitably large constant $C_2>0$.
\end{proof}

Returning to the proof of Theorem~\ref{thm:ACMasymptotics}, the limit distribution of the ACM estimator is now a simple consequence of the continuous mapping theorem.\footnote{Applied to the map $\mathbb{D}^3 \ni (x,y,z) \mapsto x+y+z \in \mathbb{D}$, where $\mathbb{D}$ is the space of \lc{} functions on $[0,1]$ endowed with the uniform norm.} 
Using the limits establised in Proposition~\ref{prop:AisequivtoU}, Theorem~\ref{thm:BequivtoV}, and Theorem~\ref{thm:oracleconvergence}, we conclude that
\begin{equation}\label{eq:ACMctsmapping}
    \sqrt{n_2}(\;\! \widehat{\gamma} - \gamma)
    = \underbrace{S}_{\xrightarrow{d}\Gamma} 
        + \underbrace{(\sqrt{n_2}\cdot \tgamma - U)}_{\xrightarrow{p}0} 
        + \underbrace{(B - V)}_{\xrightarrow{p}0}
    \: \xrightarrow{\: d\:} \: \Gamma
\end{equation}
in the uniform topology as $n\to \infty$.

Turning to the cross-fitted ACM estimator, let $\widehat{\gamma}^{k}$ denote ACM estimator fitted with $\mathcal{I}_1 = [n]\setminus J_k$ and $\mathcal{I}_2=J_k$ for each $k \in [K]$. Let also 
$$
    S_t^k = \frac{1}{\sqrt{|J_k|}}\sum_{i\in J_k} s_t^{(i)},
    \qquad \text{where}\quad 
    s_t^{(i)}\coloneq
    \int_0^t G_s^{(i)}\mathrm{d}M_s^{(i)} + \int_0^t E_s^{(i)}G_s^{(i)}\mathrm{d}\gamma_s,
$$
denote the corresponding oracle processes.

In the following computation, we use $x\sim y$ as a shorthand for $\sup_{t\in[0,1]}|x_t-y_t|\xrightarrow{p}0$.
Using that folds are assumed to have sizes $|J_k| \in \{\floor{n/K},\ceil{n/K}\}$, 
Proposition~\ref{prop:AisequivtoU} and Theorem~\ref{thm:BequivtoV} applied to each fold yields that
\begin{align*}
    \sqrt{n}(\widecheck{\gamma}-\gamma)
    \sim 
    \frac{\sqrt{n}}{K} \sum_{k=1}^K \frac{S^k}{\sqrt{|J_k|}}
    = 
    \frac{\sqrt{n}}{K} \sum_{k=1}^K \frac{1}{|J_k|} \sum_{i\in J_k} s^{(i)} 
    \sim \frac{1}{\sqrt{n}}  \sum_{i\in [n]} s^{(i)} 
\end{align*}
where last equivalence follows from Lemma~\ref{lem:cf}.
By the continuous mapping theorem, applied as in \eqref{eq:ACMctsmapping}, it suffices to establish that $\frac{1}{\sqrt{n}}  \sum_{i\in [n]} s^{(i)} \xrightarrow{d} \Gamma$.
However, this is simply the CLT established in Theorem~\ref{thm:oracleconvergence} with $\mathcal{I}_2 = [n]$. \hfill $\Box$

The following lemma is a Banach space generalization of one of the standard steps for proving validity of cross-fitting, cf. Lemma S1 in \cite{lundborg2023perturbation}.
\begin{lem}\label{lem:cf}
    Let $W$ be a random element in a Banach space $(\mathbb{D},\|\cdot\|)$, satisfying that $\ex[\|W\|]<\infty$, and let $(W_i)_{i\in \mathbb{N}}$ be i.i.d. copies of $W$. 
    For a fixed $K\in \mathbb{N}$, let $[n]=J_1\cup \cdots \cup J_K$ be a disjoint partition of $[n]$ with 
    $|J_k| \in \{\floor{n/K}, \ceil{n/K}\}$ for each $k\in [K]$.
    Finally, let $S = \frac{1}{n}\sum_{i\in [n]} W_i$ and $S_k = \frac{1}{|J_k|} \sum_{i\in J_k} W_i$ for each $k\in [K]$. 
    
    Then, for any $r<1$,
    \begin{align*}
        n^r \cdot \Big(\frac{1}{K} \sum_{k=1}^K S_k - S\Big) \xrightarrow{L^1} 0
    \end{align*}
    as $n\to \infty$.
\end{lem}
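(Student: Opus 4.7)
The plan is to express the difference $\frac{1}{K}\sum_{k=1}^K S_k - S$ as a linear combination of the $W_i$'s with small deterministic coefficients, then bound in norm via the triangle inequality and integrate. The key observation is that the two averaging weights, $\frac{1}{K|J_k|}$ and $\frac{1}{n}$, are very close when the folds are almost balanced.

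Concretely, I would first write
\begin{align*}
    \frac{1}{K}\sum_{k=1}^K S_k - S
    = \sum_{k=1}^K \Big(\frac{1}{K|J_k|} - \frac{1}{n}\Big)\sum_{i\in J_k} W_i
    = \sum_{k=1}^K \frac{n - K|J_k|}{K|J_k|\,n}\sum_{i\in J_k} W_i.
\end{align*}
Since $|J_k|\in\{\floor{n/K},\ceil{n/K}\}$, we have $|n - K|J_k||\leq K-1$ and $|J_k|\geq \floor{n/K}\geq n/(2K)$ for $n$ large. Thus the absolute value of each coefficient is at most $C/n^2$ for a constant $C = C(K)$ independent of $n$ and $k$. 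By the triangle inequality in $\mathbb{D}$ and this uniform bound,
\begin{align*}
    \Big\|\frac{1}{K}\sum_{k=1}^K S_k - S\Big\|
    \leq \sum_{k=1}^K \frac{|n - K|J_k||}{K|J_k|\,n}\sum_{i\in J_k}\|W_i\|
    \leq \frac{C}{n^2}\sum_{i\in[n]}\|W_i\|.
\end{align*}

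Taking expectation and using that the $\|W_i\|$ are i.i.d. copies of $\|W\|$ with $\ex[\|W\|]<\infty$ gives
\begin{align*}
    \ex\Big\|\frac{1}{K}\sum_{k=1}^K S_k - S\Big\|
    \leq \frac{C\,\ex[\|W\|]}{n}.
\end{align*}
Multiplying by $n^r$ yields a bound of order $n^{r-1}\to 0$ whenever $r<1$, which establishes the claimed $L^1$-convergence.

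There is no real obstacle here: the argument is purely deterministic modulo one application of the triangle inequality and one use of integrability of $\|W\|$. The only subtlety is to note that the Banach-space structure lets us move from the norm of a sum to a sum of norms, and that the near-balanced partitioning of $[n]$ makes the discrepancy coefficients decay like $n^{-2}$ rather than $n^{-1}$, giving the extra factor needed for any $r<1$.
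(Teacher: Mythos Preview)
Your argument is correct and follows essentially the same route as the paper: both proofs write the difference as $\sum_{k}\sum_{i\in J_k}\big(\tfrac{1}{K|J_k|}-\tfrac{1}{n}\big)W_i$, bound the scalar coefficients by $O(1/n^2)$ using the near-balanced fold sizes, apply the triangle inequality, and take expectations to get an $O(1/n)$ bound. The only cosmetic difference is that the paper bounds $\big|\tfrac{1}{K|J_k|}-\tfrac{1}{n}\big|$ directly via $\tfrac{1}{n+K}<\tfrac{1}{K|J_k|}<\tfrac{1}{n-K}$, whereas you bound numerator and denominator of $\tfrac{n-K|J_k|}{K|J_k|n}$ separately.
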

\begin{proof}
Note that
$\frac{1}{n+K} < \frac{1}{K|J_k|}  < \frac{1}{n-K}$
and hence 
$|\frac{1}{K|J_k|} -\frac{1}{n} | \leq \frac{1}{n-K} - \frac{1}{n} = \frac{K}{(n-K)n}$.
Since
\begin{align*}
    \frac{1}{K} \sum_{k=1}^K S_k 
        - S
    = \sum_{k=1}^K \sum_{i\in J_k}
        \left(\frac{1}{K|J_k|} -\frac{1}{n}\right) W_i,
\end{align*}
we conclude that 
\begin{align*}
    n^r \cdot \ex\Big\|\frac{1}{K} \sum_{k=1}^K S_k - S\Big\|
        \leq \frac{Kn^r}{n-K} \ex\|W\| \longrightarrow 0
\end{align*}
as $n\to \infty$.
\end{proof}

\supplementarysection{Multiplicative hazards models} \label{app:multhaz}

The predictable projection, $\pi_t(\overline{Z}_t) = \ex[X_t \given T\geq t, \overline{Z}_t]$, is generally 
difficult to compute -- even with baseline covariates $Z$ and a binary baseline exposure $X$, 
where $\pi_t(Z)=\mathbb{P}(X=1\given T \geq t, Z)$. However, for a multiplicative hazards 
model, and under a strengthened censoring assumption, we give a fairly explicit 
representation of $\pi_t(Z)$ in Proposition~\ref{prop:binarycomputations} below. 
Example \ref{ex:LCMinCox} elaborates further on the ACM in the Cox proportional hazards model. 

\begin{prop}[Predictable projection for binary exposure]\label{prop:binarycomputations}
    Suppose that $X\in \{0,1\}$ is a binary baseline exposure, that $Z$ is a baseline covariate and that the 
    model is a nonparametric time-varying proportional hazards model: 
    \begin{align*}
        \bh_t(X,Z) = (X \theta_t + 1)\phi_t(Z).
    \end{align*}
    If $C \ind (T^*, X) \given Z$, then the predictable projection is given by
    \begin{align*}
        \pi_t(Z) = \frac{\pi_0(Z)}{\pi_0(Z) + (1-\pi_0(Z))e^{\mathsf{I}_t(Z)}},
    \end{align*}
    where $\pi_0(Z) = \mathbb{P}(X=1\given Z)$ is the (baseline) propensity score and 
    \begin{align*}
        \mathsf{I}_t(Z)\coloneq \int_0^t \theta_s \phi_s(Z)\mathrm{d}s.
    \end{align*}
    The predictable projection can be expressed as the logistic model:
    $$
        \mathrm{logit}(\pi_t(Z)) =  \mathrm{logit}(\pi_0(Z)) -  \mathsf{I}_t(Z).
    $$
\end{prop}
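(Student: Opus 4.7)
The plan is to apply Bayes' rule to $\pi_t(Z) = \mathbb{P}(X=1 \mid T \geq t, Z)$, after first reducing $\mathbb{P}(T \geq t \mid X, Z)$ to the uncensored conditional survival function using the strengthened censoring assumption. The calculation is essentially a baseline-to-time-$t$ update for a binary variable under proportional hazards, and none of the steps should pose a real obstacle.

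First I would use $C \ind (T^*, X) \mid Z$ to factor
\[
\mathbb{P}(T \geq t \mid X, Z) = \mathbb{P}(T^* \geq t \mid X, Z)\,\mathbb{P}(C \geq t \mid Z),
\]
noting in particular that $\mathbb{P}(C \geq t \mid X, Z) = \mathbb{P}(C \geq t \mid Z)$ does not depend on $X$. This is the critical consequence of the strengthened censoring condition, and it is what makes the censoring factors cancel in Bayes' formula.

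Next I would compute the uncensored conditional survival via the cumulative hazard of the specified multiplicative model. Writing $\Phi_t(Z) = \int_0^t \phi_s(Z)\,\mathrm{d}s$, we get
\[
\mathbb{P}(T^* \geq t \mid X, Z) = \exp\!\left(-\int_0^t (X\theta_s + 1)\phi_s(Z)\,\mathrm{d}s\right) = e^{-X\,\mathsf{I}_t(Z) - \Phi_t(Z)},
\]
so $\mathbb{P}(T \geq t \mid X=1, Z) / \mathbb{P}(T \geq t \mid X=0, Z) = e^{-\mathsf{I}_t(Z)}$ after the censoring factor and the $e^{-\Phi_t(Z)}$ factor both cancel.

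Finally I would apply Bayes' rule:
\[
\pi_t(Z) = \frac{\mathbb{P}(T \geq t \mid X=1, Z)\,\pi_0(Z)}{\mathbb{P}(T \geq t \mid X=1, Z)\,\pi_0(Z) + \mathbb{P}(T \geq t \mid X=0, Z)\,(1-\pi_0(Z))},
\]
cancel the common $e^{-\Phi_t(Z)}\mathbb{P}(C \geq t \mid Z)$ factors, and multiply top and bottom by $e^{\mathsf{I}_t(Z)}$ to arrive at the stated formula $\pi_t(Z) = \pi_0(Z)/\bigl(\pi_0(Z) + (1-\pi_0(Z))e^{\mathsf{I}_t(Z)}\bigr)$. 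The logit form then follows immediately: the odds are $\pi_t(Z)/(1-\pi_t(Z)) = \bigl(\pi_0(Z)/(1-\pi_0(Z))\bigr)\,e^{-\mathsf{I}_t(Z)}$, so taking logarithms yields $\mathrm{logit}(\pi_t(Z)) = \mathrm{logit}(\pi_0(Z)) - \mathsf{I}_t(Z)$. If anywhere requires a little care, it is ensuring that the version of $\pi_t$ obtained this way coincides with the \lc{} predictable projection on $\{Y_t=1\}$, but since $X$ is baseline and $Z$ is time-constant, left-continuity in $t$ is inherited from continuity of $\mathsf{I}_t(Z)$ in $t$.
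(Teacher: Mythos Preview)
Your proposal is correct and follows essentially the same approach as the paper: both reduce $\pi_t(Z)$ to a ratio involving the uncensored conditional survival functions (using $C \ind (T^*,X)\mid Z$ to cancel the censoring factor), compute those survival functions from the cumulative hazard of the multiplicative model, and simplify. The paper writes the starting point as $\ex[XY_t\mid Z]/\mathbb{P}(T\geq t\mid Z)$ rather than explicitly invoking Bayes' rule, but for binary $X$ this is the same calculation.
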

\begin{proof}
    Recall that $Y_t = \one(T \geq t) = \one(C \geq t)\one(T^* \geq t)$. Using the 
    definition of $\pi_t(Z)$ and the conditional independence assumption regarding censoring, 
    we find that 
    \begin{align*}
        \pi_t(Z) 
            & = \ex[X\given T\geq t, Z] 
            = \frac{\ex[XY_t\given Z]}{\mathbb{P}(T\geq t \given Z)} \\
            & = \frac{\ex[X\one(T^* \geq t) \given Z]\mathbb{P}(C \geq t \mid Z)}{\mathbb{P}(T^* \geq t \given Z)\mathbb{P}(C \geq t \given Z)} \\
            & = \frac{\ex[X\one(T^* \geq t) \given Z]}{\mathbb{P}(T^* \geq t \given Z)}.
    \end{align*}
    Using the correspondence between survival time and cumulative hazard we obtain that
    \begin{align*}
    \ex[X\one(T^* \geq t)\given Z]
        &= \ex[X\cdot \mathbb{P}(T^* \geq t \given X,Z)\given Z] \\
        &= \ex\Big[X\exp\Big(-X\int_0^t \theta_s\phi_s(Z)\mathrm{d}s 
            - \int_0^t\phi_s(Z)\mathrm{d}s\Big)\given Z\Big] \\
        &= \mathbb{P}(X=1\given Z)\exp\Big(-\int_0^t \theta_s\phi_s(Z)\mathrm{d}s 
            -\int_0^t \phi_s(Z)\mathrm{d}s\Big) \\
        &= \pi_0(Z) e^{-\mathsf{I}_t(Z) - \mathsf{II}_t(Z)},
    \end{align*}
    where $\mathsf{I}_t(z)\coloneq \int_0^t \theta_s\phi_s(z)\mathrm{d}s$ 
    and $\mathsf{II}_t(z) \coloneq \int_0^t \phi_s(z)\mathrm{d}s$.
    Similarly the denominator is given by
    \begin{align*}
        \mathbb{P}(T^* \geq t \given Z) 
            & = \ex\left[ \exp\left(- X \int_0^t \theta_s \phi_s(Z)\mathrm{d}s
                -\int_0^t b_s\phi_s(Z)\mathrm{d}s \right)  \given Z \right] \\
            & = ((1 - \pi_0(Z)) + e^{-\mathsf{I}_t(Z)} \pi_0(Z)) e^{-\mathsf{II}_t(Z)}.
    \end{align*}

    From this we obtain that
    \begin{align*}
        \pi_t(Z) & = \frac{\pi_0(Z) e^{-\mathsf{I}_t(Z)}}{(1 - \pi_0(Z)) + e^{-\mathsf{I}_t(Z)} \pi_0(Z)} \\
         & = \frac{\pi_0(Z)}{\pi_0(Z) + (1-\pi_0(Z))e^{\mathsf{I}_t(Z)}} \\ 
         & = \frac{1}{1 + e^{-\mathrm{logit}(\pi_0(Z)) } e^{\mathsf{I}_t(Z)}} 
         = \mathrm{expit}(\mathrm{logit}(\pi_0(Z))  - \mathsf{I}_t(Z)).
    \end{align*}
\end{proof}

\begin{example}[ACM in a Cox proportional hazards model]\label{ex:LCMinCox}
    With baseline covariates $Z\in \real^d$, in addition to a binary exposure $X\in\{0,1\}$, the Cox 
    proportional hazards model asserts that:
    \begin{align}\label{eq:Coxmodel}
        \bh_t(X,Z)
            = h^0(t)\exp(\beta_X X + \beta_Z^\top Z) = (X (e^{\beta_X} - 1) + 1) h^0(t) \exp(\beta_Z^\top Z).
    \end{align}
    Letting $\varphi(Z) = (e^{\beta_X}-1)e^{\beta_Z^\top Z}$ we see that this model is a special case of Example \ref{ex:simple} with 
    $\theta_t(Z) =  h^0(t) \varphi(Z)$. It follows by \eqref{eq:ACM-baseline} 
    that the ACM can be expressed as 
    \begin{align*}
        \gamma_t & = \int_0^t \ex\left[h^0(s) \varphi(Z) w_s^{\textsc{ac}}(Z) \right] \mathrm{d} s \\
        & = \ex\Big[ \varphi(Z) \underbrace{\int_0^t h^0(s)  w_s^{\textsc{ac}}(Z) \mathrm{d} s}_{\eqcolon W_t^{\textsc{ac}}(Z) } \Big] 
        = \ex\left[ \varphi(Z)  W_t^{\textsc{ac}}(Z) \right] 
        = \ex\left[ e^{\beta_Z^\top Z}  W_t^{\textsc{ac}}(Z) \right] (e^{\beta_X}-1). 
    \end{align*}
    Thus $\gamma_t = c(t) (e^{\beta_X}-1) $ is proportional to $e^{\beta_X}-1$ with a $t$-dependent proportionality constant  $c(t) = \ex\left[ e^{\beta_Z^\top Z}  W_t^{\textsc{ac}}(Z) \right] \geq 0$.

    Using the innovation theorem we see that 
    \begin{align*}
        h_t(Z)
            &= \big(\pi_t(Z) (e^{\beta_X}-1) + 1\big)
                h^0(t)\exp(\beta_Z^\top Z),
    \end{align*}
    and we could also arrive at the representation of the ACM directly from the identity \eqref{eq:LCMisLCM2}.
    Indeed, $\bh_s(X,Z) - h_s(Z) = (e^{\beta_X}-1)(X - \pi_s(Z))h^0(s)\exp(\beta_Z^\top Z)$, and 
    \eqref{eq:LCMisLCM2} gives 
    \begin{align*}
        \gamma_t & = (e^{\beta_X}-1) \int_0^t h^0(s) \ex\left[ e^{\beta_Z^\top Z} \frac{Y_s(X - \pi_s(Z))^2}{\rho(s)} \right] \mathrm{d} s
    \end{align*}
    The proportionality constant must, of course, be $c(t)$, but this can also be verified directly using 
    the same argument as in Example \eqref{ex:simple_cont}. We may note that
     $h_t(\cdot)$ is not generally of the form of a Cox proportional hazards model, which illustrates
     the fact that the Cox model is not closed under marginalization. 

    To compute the proportionality constant we need the weights 
    \[   w_t^{\textsc{ac}}(Z) 
    = \frac{Y_t \pi_t(Z) (1 - \pi_t(Z)) }{\ex[Y_t \pi_t(Z) (1 - \pi_t(Z))]},
    \] 
    and to this end we need to compute the predictable projection $\pi_t(Z)$.    
    Under the censoring assumption $C \ind (T^*, X) \given Z$, it follows 
    from Proposition~\ref{prop:binarycomputations} that   
    \begin{align}\label{eq:coxPI}
        \mathrm{logit}(\pi_t(Z)) = \mathrm{logit}(\pi_0(Z)) - \varphi(Z) H^0(t) ,
    \end{align}
    where  $H^0(t) = \int_0^t h^0(s) \mathrm{d}s$.
    This shows that $\pi_t(Z)$ evolves in time with a logistic growth rate precomposed with $H^0$. 
    Formula \eqref{eq:coxPI} makes it straightforward to compute $\pi_t(z) (1 - \pi_t(z))$
    for any given $t$ and $z$ (provided that we can compute $\pi_0(z)$), but it does not 
    appear to yield a simple analytic expression of neither the weights $w_t^{\textsc{ac}}(Z)$ 
    nor the proportionality constant $c(t)$. It is, however, useful for Monte Carlo computation 
    of the ACM.

\end{example}

\fi


\bibliographystyle{abbrvnat}
\bibliography{references}

\newpage

\end{document}